\providecommand{\algorithmname}{Algorithm}
\numberwithin{equation}{section}
\numberwithin{figure}{section}
\theoremstyle{plain}
\newtheorem{thm}{\protect\theoremname}
\theoremstyle{remark}
\newtheorem{rem}[thm]{\protect\remarkname}
\theoremstyle{definition}
\newtheorem{defn}[thm]{\protect\definitionname}
\theoremstyle{plain}
\newtheorem{lem}[thm]{\protect\lemmaname}
\theoremstyle{plain}
\newtheorem{fact}[thm]{\protect\factname}
\theoremstyle{remark}
\newtheorem*{rem*}{\protect\remarkname}
\theoremstyle{plain}
\newtheorem{cor}[thm]{\protect\corollaryname}
\newcommand{\LineComment}[1]{\tcc{#1}}
\newcommand{\State}[0]{} % NO OP so I dont have to delecte all \State
\newcommand{\Comment}[1]{\tcp*[h]{#1}} % redefine the old comment command
\newcommand{\algstore}[1]{\newcounter{algline#1}\setcounter{algline#1}{\value{AlgoLine}}}
\newcommand{\algrestore}[1]{\setcounter{AlgoLine}{\value{algline#1}}}
\providecommand{\corollaryname}{Corollary}
\providecommand{\definitionname}{Definition}
\providecommand{\factname}{Fact}
\providecommand{\lemmaname}{Lemma}
\providecommand{\remarkname}{Remark}
\providecommand{\theoremname}{Theorem}
\begin{document}
\pagenumbering{gobble}
\title{Solving Tall Dense Linear Programs\\
in Nearly Linear Time}
\author{Jan van den Brand\thanks{\texttt{janvdb@kth.se}. KTH Royal Institute of Technology.}\\
\and Yin Tat Lee\thanks{\texttt{yintat@uw.edu}. University of Washington and MSR Redmond.}\\
\and Aaron Sidford\thanks{\texttt{sidford@stanford.edu}. Stanford University.}\\
\and Zhao Song\thanks{\texttt{zhaos@ias.edu}. Princeton University and Institute for Advanced Study.}\\
}
\date{}

\maketitle
In this paper we provide an $\widetilde{O}(nd+d^{3})$ time randomized
algorithm for solving linear programs with $d$ variables and $n$
constraints with high probability. To obtain this result we provide
a robust, primal-dual $\widetilde{O}(\sqrt{d})$-iteration interior
point method inspired by the methods of Lee and Sidford (2014, 2019)
and show how to efficiently implement this method using new data-structures
based on heavy-hitters, the Johnson--Lindenstrauss lemma, and inverse
maintenance. Interestingly, we obtain this running time without using
fast matrix multiplication and consequently, barring a major advance
in linear system solving, our running time is near optimal for solving
dense linear programs among algorithms that don't use fast matrix
multiplication.

\thispagestyle{empty}

\textcolor{purple}{}
\global\long\def\defeq{\stackrel{\mathrm{{\scriptscriptstyle def}}}{=}}%
\textcolor{purple}{}
\global\long\def\norm#1{\|#1\|}%
\textcolor{purple}{}
\global\long\def\normFull#1{\left\Vert #1\right\Vert }%
\textcolor{purple}{}
\global\long\def\R{\mathbb{R}}%
\textcolor{purple}{}
\global\long\def\Rn{\mathbb{R}^{n}}%
\textcolor{purple}{}
\global\long\def\tr{\mathrm{Tr}}%
\textcolor{purple}{}
\global\long\def\diag{\mathrm{diag}}%
\textcolor{purple}{}
\global\long\def\mdiag{\mathbf{Diag}}%
\textcolor{purple}{}
\global\long\def\ma{\mathbf{A}}%
\textcolor{purple}{}
\global\long\def\mb{\mathbf{B}}%
\textcolor{purple}{}
\global\long\def\mn{\mathbf{N}}%
\textcolor{purple}{}
\global\long\def\mx{\mathbf{X}}%
\textcolor{purple}{}
\global\long\def\my{\mathbf{Y}}%
\textcolor{purple}{}
\global\long\def\ms{\mathbf{S}}%
\textcolor{purple}{}
\global\long\def\mi{\mathbf{I}}%
\textcolor{purple}{}
\global\long\def\mv{\mathbf{V}}%
\textcolor{purple}{}
\global\long\def\mw{\mathbf{W}}%
\textcolor{purple}{}
\global\long\def\mproj{\mathbf{P}}%
\textcolor{purple}{}
\global\long\def\mq{\mathbf{Q}}%
\textcolor{purple}{}
\global\long\def\mz{\mathbf{Z}}%
\textcolor{purple}{}
\global\long\def\md{\mathbf{D}}%
\textcolor{purple}{}
\global\long\def\mh{\mathbf{H}}%
\textcolor{purple}{}
\global\long\def\mf{\mathbf{F}}%
\textcolor{purple}{}
\global\long\def\mr{\mathbf{R}}%
\textcolor{purple}{}
\global\long\def\mm{\mathbf{M}}%
\textcolor{purple}{}
\global\long\def\mDelta{\Delta}%
\textcolor{purple}{}
\global\long\def\mLambda{\boldsymbol{\Lambda}}%
\textcolor{purple}{}
\global\long\def\mSigma{\boldsymbol{\Sigma}}%
\textcolor{purple}{}
\global\long\def\oSigma{\overline{\mSigma}}%
\textcolor{purple}{}
\global\long\def\me{\mathbf{E}}%
\textcolor{purple}{}
\global\long\def\mj{\mathbf{J}}%
\textcolor{purple}{}
\global\long\def\new{\mathrm{(new)}}%
\textcolor{purple}{}
\global\long\def\mPhi{\mathbf{\Phi}}%
\textcolor{purple}{}
\global\long\def\tilde#1{\mathrm{\widetilde{#1}}}%
\textcolor{purple}{}
\global\long\def\vones{1}%
\textcolor{purple}{}
\global\long\def\vzero{0}%
\textcolor{purple}{}
\global\long\def\mzero{\boldsymbol{0}}%
\textcolor{purple}{}
\global\long\def\rank{\mathrm{rank}}%
\textcolor{purple}{}
\global\long\def\Rnd{\mathbb{R}^{n\times d}}%
\textcolor{purple}{}
\global\long\def\Rnn{\mathbb{R}^{n\times n}}%
\textcolor{purple}{}
\global\long\def\omx{\overline{\mx}}%
\textcolor{purple}{}
\global\long\def\oms{\overline{\ms}}%
\textcolor{purple}{}
\global\long\def\os{\overline{s}}%
\textcolor{purple}{}
\global\long\def\tw{\widetilde{w}}%
\textcolor{purple}{}
\global\long\def\ox{\overline{x}}%
\textcolor{purple}{}
\global\long\def\ow{\overline{w}}%
\textcolor{purple}{}
\global\long\def\ov{\overline{v}}%
\textcolor{purple}{}
\global\long\def\og{\overline{g}}%
\textcolor{purple}{}
\global\long\def\otau{\overline{\tau}}%
\textcolor{purple}{}
\global\long\def\omw{\overline{\mw}}%
\textcolor{purple}{}
\global\long\def\omw{\overline{\mw}}%
\textcolor{purple}{}
\global\long\def\tmh{\mathbf{\widetilde{H}}}%
\textcolor{purple}{}
\global\long\def\tmw{\mathbf{\widetilde{W}}}%
\textcolor{purple}{}
\global\long\def\median{\mathrm{median}}%
\textcolor{purple}{}
\global\long\def\tsig{\widetilde{\sigma}}%
\textcolor{purple}{}
\global\long\def\tSig{\tilde{\mSigma}}%
\textcolor{purple}{}
\global\long\def\osigma{\overline{\sigma}}%
\textcolor{purple}{}
\global\long\def\grad{\nabla}%
\textcolor{purple}{}
\global\long\def\sign{\mathrm{sign}}%
\textcolor{purple}{}
\global\long\def\otilde{\mathrm{\widetilde{O}}}%
\textcolor{purple}{}
\global\long\def\mT{\mathbf{T}}%
\textcolor{purple}{}
\global\long\def\capprox{\mathbf{T}}%
\textcolor{purple}{}
\global\long\def\cnorm{{\color{purple}{\normalcolor C_{\mathrm{norm}}}}}%
\textcolor{purple}{}
\global\long\def\mixedNorm#1#2{\norm{#1}_{#2+\infty}}%
\textcolor{purple}{}
\global\long\def\mixedNormFull#1#2{\mathrm{\normFull{#1}_{#2+\infty}}}%
\textcolor{purple}{}
\global\long\def\argmax{\mathop{\mathrm{argmax}}}%
\textcolor{purple}{}
\global\long\def\argmin{\mathop{\mathrm{argmin}}}%
\textcolor{purple}{}
\global\long\def\poly{\mathrm{poly}}%
\textcolor{purple}{}
\global\long\def\nnz{\mathrm{nnz}}%
\textcolor{purple}{}
\global\long\def\mc{\mathbf{C}}%
\textcolor{purple}{}
\global\long\def\alg{\mathrm{(alg)}}%
\textcolor{purple}{}
\global\long\def\ttau{\widetilde{\tau}}%
\textcolor{purple}{}
\global\long\def\E{\mathbb{E}}%
\textcolor{purple}{}
\global\long\def\P{\mathbb{P}}%
\textcolor{purple}{}
\global\long\def\t{^{(t)}}%
\textcolor{purple}{}
\global\long\def\k{^{(k)}}%
\textcolor{purple}{}
\global\long\def\old{^{(\mathrm{old})}}%
\textcolor{purple}{}
\global\long\def\safe{_{(\mathrm{safe})}}%
\textcolor{purple}{}
\global\long\def\N{\mathbb{N}}%
\textcolor{purple}{}
\global\long\def\otilde{\tilde O}%
\textcolor{purple}{}
\global\long\def\mg{\mathbf{G}}%
\textcolor{purple}{}
\global\long\def\gap{\mathop{\mathrm{gap}}}%
\global\long\def\taustandard{\tau_{\mathrm{std}}}%
\global\long\def\tauweight{\tau_{\mathrm{weight}}}%
\global\long\def\tauls{\tau_{\mathrm{\mathrm{ls}}}}%
\global\long\def\taureg{\tau_{\mathrm{\mathrm{reg}}}}%
\textcolor{purple}{}
\global\long\def\sketch{\textsc{Sketch}}%
\textcolor{purple}{}
\global\long\def\hitter{\textsc{Decode}}%
\textcolor{purple}{}
\global\long\def\JL{\textsc{JL}}%
\textcolor{purple}{}
\global\long\def\indicVec#1{1_{#1}}%
\textcolor{purple}{}
\global\long\def\power{6}%
\textcolor{purple}{}
\global\long\def\polylog{\mathrm{polylog}\,}%

\pagebreak\pagenumbering{arabic}
\setcounter{page}{2}

\tableofcontents{}

\listof{algorithm}{List of Algorithms}

\pagebreak
\section{Introduction}

Given $\ma\in\R^{n\times d}$, $b\in\R^{d}$, and $c\in\R^{n}$ solving
a linear program $(P)$ and its dual $(D)$:
\begin{equation}
(P)=\min_{x\in\R_{\geq0}^{n}:\ma^{\top}x=b}c^{\top}x\text{ and }(D)=\max_{y\in\R^{d}:\ma y\geq c}b^{\top}y\,.\label{eq:primal_dual}
\end{equation}
is one of the most fundamental and well-studied problems in computer
science and optimization. Developing faster algorithms for (\ref{eq:primal_dual})
has been the subject of decades of extensive research and the pursuit
of faster linear programming methods has lead to numerous algorithmic
advances and the advent of fundamental optimization techniques, e.g.
simplex methods \cite{dantzig1951maximization}, ellipsoid methods
\cite{khachiyan1980polynomial}, and interior-point methods (IPMs)
\cite{karmarkar1984new}.

The current fastest algorithms for solving (\ref{eq:primal_dual})
are the IPMs of Lee and Sidford \cite{lee2015efficient} and Cohen,
Lee, and Song \cite{cohen2019solving}. These results build on a long
line of work on IPMs \cite{karmarkar1984new,renegar1988polynomial,vaidya90parallel,vaidya89convexSet,vaidya1993technique,Nesterov1994,anstreicher96},
fast matrix multiplication \cite{s69,cw82,s86,DCoppersmithW90,s10,williams2012matrixmult,gall2014powers,GallU18},
and linear system solvers \cite{clarkson2013low,meng2013low,nelson2012osnap,li2012iterative,cohen2015uniform}.
The first, \cite{lee2015efficient} combined an $\otilde(\sqrt{d})$
iteration IPM from \cite{leeS14,lsJournal19} with new techniques
for \emph{inverse maintenance,} i.e. maintaining an approximate inverse
of a slowly changing matrix, to solve (\ref{eq:primal_dual}) in time
$\otilde(\nnz(\ma)\sqrt{d}+d^{2.5})$ with high probability. For sufficiently
large values of $\nnz(\ma)$ or $n$, this is the fastest known running
time for solving linear programming up to polylogarithmic factors.

The second, \cite{cohen2019solving}, developed a stable and robust
version of the IPM of \cite{renegar1988polynomial} (using techniques
from \cite{leeS14,lsJournal19}) and combined it with novel randomization,
data structure, and rectangular matrix multiplication \cite{GallU18}
techniques to solve (\ref{eq:primal_dual}) in time $\otilde(\max\{n,d\}^{\omega})$
with high probability where $\omega<2.373$ is the current best known
matrix multiplication constant \cite{williams2012matrixmult,gall2014powers}.
When $n=\tilde{\Theta}(d)$, this running time matches that of the
best known linear system solvers for solving $n\times n$ linear systems
and therefore is the best possible barring a major linear system solving
advance.

Though, these results constitute substantial advances in algorithmic
techniques for linear programming, the running times of \cite{lee2015efficient}
and \cite{cohen2019solving} are incomparable and neither yield a
nearly linear running time when $n$ grows polynomially with $d$,
i.e. when $d=n^{\delta}$ for any $\delta>0$ neither \cite{lee2015efficient}
or \cite{cohen2019solving} yields a nearly linear running time. Consequently,
it has remained a fundamental open problem to determine whether or
not it is possible to solve high-dimensional linear programs to high-precision
in nearly linear time for any polynomial ratio of $n,$$d$, and $\nnz(\ma)$.
Though achieving such a nearly linear runtime is known in the simpler
setting of linear regression \cite{clarkson2013low,meng2013low,nelson2012osnap,li2012iterative,cohen2015uniform,Cohen16},
achieving analogous results for solving such tall linear programs
has been elusive. 

In this paper we provide the first such nearly linear time algorithm
for linear programming. We provide an algorithm which solves (\ref{eq:primal_dual})
with high probability in time $\otilde(nd+d^{3})$. Whenever $\ma$
is dense, i.e. $\nnz(\ma)=\Omega(nd)$, and sufficiently tall, i.e.
$n=\Omega(d^{2})$, this constitutes a nearly linear running time.
In contrast to previous state-of-the-art IPMs for linear programming
\cite{lee2015efficient,cohen2019solving} our algorithm doesn't need
to use fast matrix multiplication and thereby matches the best known
running time for regression on dense matrices which does not use fast
matrix multiplication.

To achieve this result, we introduce several techniques that we believe
are independent interest. First, we consider the IPM of \cite{lsJournal19}
and develop an efficiently implementable robust primal-dual variant
of it in the style of \cite{cohen2019solving,LeeSZ19,br2019deterministicArxiv}
which only requires solving $\otilde(\sqrt{d})$ linear systems. Prior
to \cite{leeS14,lsJournal19}, obtaining such an efficient $\otilde(\sqrt{d})$-iteration
IPM was a major open problem in the theory of IPMs. We believe our
primal dual method and its analysis is simpler than that of \cite{lsJournal19};
by developing a primal-dual method we eliminate the need for explicit
$\ell_{p}$-Lewis weight computation for $p\neq2$ as in \cite{lsJournal19}
and instead work with the simpler $\ell_{2}$-Lewis weights or leverage
scores. Further, we show that this primal dual method is highly stable
and can be implemented efficiently given only multiplicative estimates
of the variables, Hessians, and leverage scores.

With this IPM in hand, the problem of achieving our desired running
time reduces to implementing this IPM efficiently. This problem is
that of maintaining multiplicative approximations to vectors (the
current iterates), leverage scores (a measure of importance of the
rows under local rescaling), and the inverse of matrix (the system
one needs to solve to take a step of the IPM) under small perturbations.
While variants of vector maintenance have been considered recently
\cite{cohen2019solving,LeeSZ19,br2019deterministicArxiv} and inverse
maintenance is well-studied historically \cite{karmarkar1984new,nesterov1989self,vaidya89convexSet,nesterov1991acceleration,leeS14,lee2015efficient,cohen2019solving,LeeSZ19,lsJournal19,br2019deterministicArxiv},
none of these methods can be immediately applied in our setting where
we cannot afford to pay too much in terms of $n$ each iteration.

Our second contribution is to show that these data-structure problems
can be solved efficiently. A key technique we use to overcome these
issues is heavy-hitters sketching. We show that it is possible to
apply a heavy hitters sketch (in particular \cite{knpw11,p13}) to
the iterates of the method such that we can efficiently find changes
in the coordinates. This involves carefully sketching groups of updates
and dynamically modifying the induced data-structure. These sketches
only work against non-adaptive adversaries, and therefore care is
need to ensure that the sketch is used only to save time and not affect
the progression of the overall IPM in a way that break this non-adaptive
assumption. To achieve this we use the sketches to propose short-lists
of possible changes which we then filter to ensure that the output
of the datastructure is deterministic (up to a low probability failure
event). Coupling this technique with known Johnson-Lindestrauss sketches
yields our leverage score maintenance data structure and adapting
and simplifying previous inverse maintenance techniques yields our
inverse maintenance data structure. We believe this technique of sketching
the central path is powerful and may find further applications.

\subsection{Our Results \label{sub:our_results}}

The main result of this paper is the following theorem for solving
(\ref{eq:primal_dual}). This algorithm's running time is nearly linear
whenever the LP is sufficiently dense and tall, i.e. $\nnz(\ma)=\tilde{\Omega}(nd)$
and $n=\tilde{\Omega}(d^{2})$. This is the first polynomial time
algorithm with a nearly linear running time for high-dimensional instances
(i.e. when $d$ can grow polynomially with $n$). This algorithm does
not use fast matrix multiplication (FMM) and consequently its running
time matches the best known running time for checking whether there
even exists $x$ such that $\ma^{\top}x=b$ for dense $\ma$ without
using FMM.

\begin{restatable}[Main Result]{thm}{mainresult}\label{thm:mainresult}
There is an algorithm (Algorithm~\ref{alg:master}) which given any
linear program of the form (\ref{eq:primal_dual}) for non-degenerate\footnote{We assume throughout that $\ma$ is \emph{non-degenerate} meaning
it has full-column rank and no-zero rows. This assumption can be avoided
by preprocessing $\ma$ to check for zero rows and adding a tiny amount
of noise to the matrix to make it full-column rank. There are other
natural ways to remove this assumption, see e.g. \cite{lsJournal19}.} $\ma\in\R^{n\times d}$ and $\delta\in[0,1]$ computes a point $x\in\R_{\geq0}^{n}$
such that
\begin{align*}
c^{\top}x & \leq\min_{\ma^{\top}x=b,x\geq0}c^{\top}x+\delta\cdot\|c\|_{2}\cdot R\text{ and }\\
\|\ma^{\top}x-b\|_{2} & \leq\delta\cdot\Big(\|\ma\|_{F}\cdot R+\|b\|_{2}\Big)
\end{align*}
where $R$ is the diameter of the polytope in $\ell_{2}$ norm, i.e.
$\|x\|_{2}\leq R$ for all $x\in\R_{\geq0}^{n}$ with $\ma^{\top}x=b$.
Further, the expected running time of the method is $O((nd+d^{3})\log^{O(1)}n\log\frac{n}{\delta})$.

\end{restatable}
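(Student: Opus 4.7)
The plan is to reduce Theorem~\ref{thm:mainresult} to two components advertised in Section~\ref{sub:our_results}: first, a robust primal-dual interior point method that converges in $\otilde(\sqrt{d})$ iterations while tolerating only multiplicative approximations of the iterates $(x,s)$, the Hessian weights, and the leverage scores; and second, a collection of data structures that implement each iteration in small amortized cost. The target $\otilde(nd+d^3)$ distributes across the $\otilde(\sqrt{d})$ iterations as a budget of roughly $\otilde(n\sqrt{d}+d^{2.5})$ per step, so the proof reduces to verifying that the per-step implementation fits this budget.

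First I would invoke the robust IPM: initialize a feasible primal-dual-slack triple by a standard big-$M$/slack augmentation whose initial central-path parameter is controlled by $\|\ma\|_F$, $\|c\|_2$, $\|b\|_2$, and $R$, then run $\otilde(\sqrt{d}\log(n/\delta))$ steps which shrink the path parameter by $(1-c/\sqrt{d})$ per iteration while maintaining an approximate centrality invariant. A standard rounding step at the end converts the near-central solution to an $x\in\R_{\geq 0}^{n}$ satisfying $c^{\top}x\leq\mathrm{OPT}+\delta\|c\|_2 R$ and $\|\ma^{\top}x-b\|_2\leq\delta(\|\ma\|_F R+\|b\|_2)$ with the stated parameters; here robustness is crucial because the data structures only return multiplicative approximations.

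Second, each iteration requires (a) accurate estimates of $x$ and $s$, (b) multiplicative estimates of the leverage scores of $\ms^{-1}\ma$ with respect to the current slack $s$, and (c) the action of $(\ma^{\top}\md\ma)^{-1}\ma^{\top}v$ for the current diagonal weighting $\md$. For (a) I would feed the additive IPM updates through a heavy-hitters sketch in the spirit of \cite{knpw11,p13} that proposes a short list of coordinates whose stored approximation is stale; an exact deterministic check then filters this list so that the state passed to the next iteration is a function of the input only, decoupling the sketch's internal randomness from future queries. For (b) I would compose the same heavy-hitters machinery with a Johnson--Lindenstrauss sketch of the rows of $\md^{1/2}\ma$, tracking which rows' leverage scores have drifted enough to require recomputation. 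For (c) I would maintain an explicit inverse (or suitable factorization) of $\ma^{\top}\md\ma$ by a rank-$k$ Sherman--Morrison update whenever $k$ diagonal entries of $\md$ change, with periodic full rebuilds in time $\otilde(d^3)$; the stability bounds of the robust IPM guarantee $\sum_t\sum_i|\Delta w_i^{(t)}|=\otilde(d)$ along the path, which is the inequality that closes every amortization.

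The main obstacle will be reconciling the \emph{non-adaptive} guarantee of the heavy-hitters sketch with the fact that the IPM's trajectory depends on previous queries to the data structure. The resolution is to treat the sketch strictly as a proposal mechanism and route every output through a deterministic verification, so that up to a single low-probability failure event the transcript presented to the IPM is a deterministic function of the input; conditional on this event, the IPM analysis proceeds as if each call returned the exact multiplicative approximation its specification promises. Once this hybrid is in place, bounding the total work decomposes cleanly into $\otilde(nd)$ for initial sketch setup and reading $\ma$, $\otilde(\sqrt{d})\cdot\otilde(n+d^2)$ for per-iteration query work, and $\otilde(d^3)$ for the amortized inverse rebuilds; summing yields the $O((nd+d^3)\log^{O(1)}n\log(n/\delta))$ bound claimed in the theorem.
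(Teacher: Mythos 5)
Your outline captures the right architecture---a robust $\otilde(\sqrt{d})$-iteration primal-dual IPM plus sketching-based data structures, with deterministic verification to sidestep adaptivity---but it omits two technical components that are essential to closing the proof and to obtaining the $d^3$ term in the final complexity.

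First, you never address the fact that when the step is computed from an \emph{approximate} Hessian $\mh\approx\ma^{\top}\oms^{-1}\omx\ma$, the primal iterate $x$ does \emph{not} remain feasible: the step $\delta_x$ fails to satisfy $\ma^{\top}\delta_x=0$ unless the projection is exact. The paper devotes Section~\ref{sec:maintaining_infeasibility} (Algorithm~\ref{alg:maintain_INfeasibility}, Theorem~\ref{thm:feasibilityComplexity}) to tracking the potential $\Phi_b=\frac{1}{\mu}\|\ma^{\top}x-b\|_{(\ma^{\top}\mx\ms^{-1}\ma)^{-1}}^{2}$, proving that it grows slowly on average because the solver is unbiased, and periodically correcting the iterate. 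This is where the amortized $\otilde(nd^{0.5}+d^{2.5})$ per-iteration cost, hence the $\otilde(nd+d^3)$ total, actually comes from. Without a supermartingale argument on $\Phi_b$ (or an equivalent control), your proposed proof produces infeasible iterates whose error in $\ma^{\top}x-b$ is not bounded, and the rounding at the end cannot be certified.

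Second, the initialization and amortization invariants you cite are not quite the ones that make the argument work. The paper's IPM follows the central path $\mx s=\mu\cdot\tau_{\mathrm{reg}}(x,s)$, so the initial point must satisfy an $\ell_p$-Lewis-weight condition, constructed via the Cohen--Peng fixed-point iteration (Theorem~\ref{thm:Lewis_weight_compute}) and a two-phase centering that first runs with a temporary cost $c^{(\mathrm{tmp})}=\ma y+s$ before switching to $c$; a generic big-$M$ slack augmentation alone does not deliver this. And the inequality that closes the vector-maintenance amortization is not $\sum_t\sum_i|\Delta w_i^{(t)}|=\otilde(d)$; it is $\|\mx^{-1}\delta_x\|_{\tau+\infty}\le\epsilon/2$ per step, which together with $\tau\ge d/n$ gives $\|\mx^{-1}\delta_x\|_2^2=\otilde(n/d)$ per step and hence $\otilde(n)$ meaningful coordinate changes over the $\otilde(\sqrt{d})$ iterations, so the heavy-hitter budget is $\otilde(nd)$ rather than what your $\otilde(d)$ movement bound would predict. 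Finally, you should also mention the auxiliary gradient maintenance for $\nabla\Phi(\ov)^{\flat}$ and $\ma^{\top}\mx\nabla\Phi(\ov)^{\flat}$ (Theorem~\ref{thm:gradient_maintenance}), which is needed so that the IPM step can be assembled within the budget without recomputing this $n$-dimensional product from scratch.
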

\begin{rem}
See \cite{renegar1988polynomial,leeS14} on the discussion on converting
such an approximate solution to an exact solution. For integral $\ma,b,c$,
it suffices to pick $\delta=2^{-O(L)}$ to get an exact solution where
$L=\log(1+d_{\max}+\|c\|_{\infty}+\|b\|_{\infty})$ is the bit complexity
and $d_{\max}$ is the largest absolute value of the determinant of
a square sub-matrix of $\ma$. For many combinatorial problems $L=O(\log(n+\|b\|_{\infty}+\|c\|_{\infty}))$.
\end{rem}

Beyond this result (Section~\ref{sec:lp_algorithms}) we believe
our robust primal-dual $\otilde(\sqrt{d})$-iteration IPM (Section~\ref{sec:primal_dual_path})
and our data structures for maintaining multiplicative approximations
to vectors (Section~\ref{sec:vec_maintenance}), leverage scores
(Section~\ref{sec:leverage_maintenance}), and the inverse of matrices
(Section~\ref{sec:inverse_maintenance}) are of independent interest.

\subsection{Previous Work}

Linear programming has been the subject of extensive research for
decades and it is impossible to completely cover to this impressive
line of work in this short introduction. Here we cover results particularly
relevant to our approach. For more detailed coverage of prior-work
see, e.g. \cite{Nesterov1994,ye2011interior}.\textbf{\vspace{8pt}}\\
\textbf{IPMs}: The first proof of a polynomial time IPM was due to
Karmarkar in \cite{karmarkar1984new}. After multiple running time
improvements \cite{karmarkar1984new,nesterov1989self,vaidya89convexSet,nesterov1991acceleration}
the current fastest IPMs are the aforementioned results of \cite{lsJournal19}
and \cite{cohen2019solving}. Beyond these results, excitingly the
work of \cite{cohen2019solving} was recently extended to obtain comparable
running time improvements for solving arbitrary empirical risk minimization
problems (ERM) \cite{LeeSZ19} and was recently simplified and de-randomized
by van den Brand \cite{br2019deterministicArxiv}. These works consider
variants of the vector maintenance problem and our work is inspired
in part by them. Our IPM leverages the barrier from \cite{lsJournal19}
in a new way, that enables the application of robustness techniques
from \cite{lsJournal19,cohen2019solving,br2019deterministicArxiv}
and new techniques for handling approximately feasible points. \textbf{\vspace{8pt}}\\
\textbf{Heavy Hitters and Sketching}: Sketching is a well-studied
problem with a broad range of applications. Johnson-Lindenstrauss
sketches were used extensively in previous IPMs \cite{lsJournal19},
but only for the restricted application of computing leverage scores.
Further sketching techniques have been used for the purpose of dimension
reduction and sampling in other optimization contexts, e.g. solving
linear systems \cite{li2012iterative,clarkson2013low,nelson2012osnap,meng2013low,cohen2015uniform,Cohen16}
and certain forms of $\ell_{p}$-regression \cite{CohenP15}. Our
methods make use of $\ell_{2}$-heavy hitters sketches \cite{cm04,nnw14,ccf02,knpw11,p13,ch09,lnnt16},
in particular the $\ell_{2}$-sketches of \cite{knpw11,p13}, to decrease
iteration costs. We are unaware of these sketches being used previously
to obtain provable improvements for solving offline (as opposed to
online, streaming, or dynamic optimization problems) variants of linear
programming previously. We believe this is due in part to the difficulty
of using these methods with non-oblivious adversaries and consequently
we hope the techniques we use to overcome these issues may be of further
use. Also, note that the definition of $\ell_{2}$-heavy hitters problem
(Lemma~\ref{lem:ell_2_heavy_hitter}) that we use is equivalent to
$\ell_{\infty}/\ell_{2}$-sparse recovery problem. Though the $\ell_{2}/\ell_{2}$-sparse
recovery \cite{d06,crt06,glps10,price13,h16,ns19,song19} is a more
standard task in compressed sensing, we are unaware of how to efficiently
use its guarantees for our applications.\textbf{\vspace{8pt}}\\
\textbf{Leverage Scores and Lewis Weights}: Leverage scores \cite{spielmanS08sparsRes,mahoney11survey,li2012iterative,cohen2015uniform}
(and more broadly Lewis weights \cite{Lewis1978,bourgain1989approximation,CohenP15,lsJournal19})
are fundamental notions of importance of rows of a matrix with numerous
applications. In this paper we introduce a natural online problem
for maintaining multiplicative approximations to leverage scores and
we show how to solve this problem efficiently. Though we are unaware
of this problem being studied previously, we know that in the special
case of leverage scores induced by graph problems, which are known
as effective resistances, there are dynamic algorithms for maintaining
them, e.g \cite{DurfeeGGP19}. However, these algorithms seem tailored
to graph structure and it is unclear how to apply them in our setting.\textbf{
}Further, there are streaming algorithms for variants of this problem
\cite{AhnGM13,KapralovLMMS17,knst19arxiv}, however their running
time is too large for our purposes.\textbf{ \vspace{8pt}}\\
\textbf{Inverse Maintenance}: This problem has been studied extensively
and are a key component in obtaining efficient linear programming
running times with previous IPMs \cite{karmarkar1984new,nesterov1989self,vaidya89convexSet,nesterov1991acceleration,leeS14,lee2015efficient,lsJournal19,cohen2019solving,LeeSZ19,br2019deterministicArxiv,sidford15,lee16,song19}
and other optimization methods \cite{lee2015convex,adil2019iterative}.
Outside the area of optimization, this problem is also known as Dynamic
Matrix Inverse \cite{Sankowski04,BrandNS19}. Our method for solving
inverse maintenance is closely related to these results with modifications
needed to fully take advantage of our leverage score maintenance,
obtain $\poly(d)$ (as opposed to $\poly(n)$ runtimes), ensure that
randomness used to make the algorithm succeed doesn't affect the input
to the data structure, and ultimately produce solvers that are correct
in expectation.

\section{Overview of Approach}

\label{sec:approach}

We prove Theorem~\ref{thm:mainresult} (Section~\ref{sec:lp_algorithms})
in two steps. First we provide a new robust, primal-dual, $\tilde{O}(\sqrt{d})$
iteration IPM inspired by the LS-barriers of \cite{leeS14,lsJournal19}
and central path robustness techniques of \cite{leeS14,cohen2019solving,LeeSZ19,lsJournal19,br2019deterministicArxiv}
(See Section~\ref{sec:primal_dual_path}). Then, we show how to implement
this IPM efficiently using new data-structures based on heavy hitters,
the Johnson-Lindenstrauss lemma, and inverse maintenance (Section~\ref{sec:vec_maintenance},
Section~\ref{sec:leverage_maintenance}, and Section~\ref{sec:inverse_maintenance}).
Here we provide an overview of tools, techniques, and approach to
executing these two steps.

\subsection{A Robust Primal-Dual $\protect\otilde(\sqrt{d})$ IPM}

\label{sec:overview_IPM}

Here we provide an overview of our approach to deriving our robust,
primal-dual, $\tilde{O}(\sqrt{d})$ iteration IPM. We assume throughout
this section (and the bulk of the paper) that $\ma$ is \emph{non-degenerate},
meaning it is full-column rank and no non-zero rows. By standard techniques,
in Section~\ref{sub:main_algorithm} we efficiently reduce solving
(\ref{eq:primal_dual}) in general to solving an instance of where
this assumption holds.

We first give a quick introduction to primal-dual path IPMs, review
the central path from \cite{lsJournal19} and from it derive the central
path that our method is based on, and explain our new IPM.

\paragraph{Primal-Dual Path Following IPMs:}

Our algorithm for solving the linear programs given by (\ref{eq:primal_dual})
is rooted in classic primal-dual path-following IPMs. Primal-dual
IPMs, maintain a \emph{primal feasible point}, $x\in\R_{\geq0}^{n}$
with $\ma^{\top}x=b$, a \emph{dual feasible point} $y\in\R^{d}$
with $s=\ma y-c\geq0$, and attempt to decrease the \emph{duality
gap}
\[
\gap(x,y)\defeq c^{\top}x-b^{\top}y=(\ma y+s)^{\top}x-(\ma^{\top}x)^{\top}y=s^{\top}x\,.
\]
Note that $\gap(x,y)$ upper bounds the error of $x$ and $y$, i.e.
$\gap(x,y)\leq\epsilon$ implies that $x$ and $y$ are each optimal
up to an additive $\epsilon$ in objective function, and therefore
to solve a linear program it suffices to decrease the duality gap
for primal and dual feasible points.

Primal-dual path-following IPMs carefully trade-off decreasing the
duality gap (which corresponds to objective function progress) with
staying away from the inequality constraints, i.e. $x\geq0$ and $s\geq0$
(in order to ensure it is easier to make progress). Formally, they
consider a (\emph{weighted}) \emph{central path} defined as the unique
set of $(x_{\mu},y_{\mu},s_{\mu})\in\R_{\geq0}^{n}\times\R^{d}\times\R_{\geq0}^{n}$
for $\mu>0$ that satisfy
\begin{eqnarray}
\mx_{\mu}\ms_{\mu}\vones & = & \mu\cdot\tauweight(x_{\mu},s_{\mu}),\label{eq:regls_path}\\
\ma^{\top}x_{\mu} & = & b,\nonumber \\
\ma y_{\mu}+s_{\mu} & = & c,\nonumber 
\end{eqnarray}
where $\mx_{\mu}\defeq\mdiag(x_{\mu})$, $\ms_{\mu}\defeq\mdiag(s_{\mu})$,
and $\tauweight:\R_{\geq0}^{n}\times\R_{\geq0}^{n}\rightarrow\R$
is a \emph{weight function}. These methods maintain primal and dual
feasible points and take Newton steps on the above non-linear inequalities
to maintain feasible points that are more central, i.e. have (\ref{eq:regls_path})
closer to holding, for decreasing $\mu$. Since many properties of
the central path can be defined in terms of just $x_{\mu}$ and $s_{\mu}$
we often describe methods using only these quantities and we adopt
the following notation.
\begin{defn}[Feasible Point]
\label{def:feasible_point} We say that $(x,s)\in\R_{\geq0}^{n}\times\R_{\geq0}^{n}$
is a \emph{feasible point }if there exists $y\in\R^{d}$ with $\ma^{\top}x=b$
and $\ma y+s=c$.
\end{defn}

Now, perhaps the most widely-used and simple weight function is $\tauweight(x,s)\leftarrow\taustandard(x,s)\defeq\vones$,
i.e. the all-ones vector. Central path points for this weight function
are the solution to the following
\begin{align*}
x_{\mu} & =\argmin_{x\in\R_{\geq0}^{n}:\ma^{\top}x=b}c^{\top}x-\mu\sum_{i\in[n]}\log(x_{i})\text{ and }\\
(y_{\mu},s_{\mu}) & =\argmax_{(y,s)\in\R^{d}\times\R_{\geq0}^{n}:\ma^{\top}y+s=c}b^{\top}y-\mu\sum_{i\in[n]}\log(s_{i})\,,
\end{align*}
i.e. optimization problems trading off the objective function with
logarithmic barriers on the inequality constraints. There are numerous
methods for following this central path \cite{renegar1988polynomial,mehrota92,gonzaga1992path,Nesterov1994,YTM94}.
By starting with nearly-central feasible points for large $\mu$ and
iteratively finding nearly-central feasible points for small $\mu$,
they can compute $\epsilon$-approximate solutions to (\ref{eq:primal_dual})
in $\otilde(\sqrt{n})$-iterations. For these methods, each iteration
(or Newton step) consists of nearly-linear time computation and solving
one linear system in the matrix $\ma^{\top}\mx\ms^{-1}\ma\in\R^{d\times d}$
for $\mx=\mdiag(x)\in\R^{n\times n}$ and $\ms=\mdiag(s)\in\R^{n\times n}$.

The first such $\otilde(\sqrt{n})$-iteration IPM with this iteration
cost was established by Renegar in 1988 \cite{renegar1988polynomial}
and no faster-converging method with the same iteration cost was developed
until the work of Lee and Sidford in 2014 \cite{lsJournal19,leeS14}.
It had been known since seminal work of Nesterov and Nemirovski in
1994 \cite{Nesterov1994} that there is a weight function that yields
a $\otilde(\sqrt{d})$-iteration primal-dual path-following IPM, however
obtaining any $\otilde(\sqrt{d})$-iteration IPM that can be implemented
with iterations nearly as efficient as those of \cite{renegar1988polynomial}
was a long-standing open problem.

\paragraph{The Lewis Weight Barrier and Beyond}

The work of \cite{lsJournal19,leeS14} addressed this open problem
in IPM theory and provided an efficient $\otilde(\sqrt{d})$-iteration
IPM by providing new weight functions and new tools for following
the central path they induce. In particular, \cite{lsJournal19} introduced
the \emph{Lewis weight barrier} which induces the central path in
which for some $p>0$
\[
\tauweight(x,s)\leftarrow\tauls(x,s)\defeq\sigma^{(p)}(\ms^{-1}\ma)
\]
where for any $\mb\in\R^{n\times d}$, $\sigma^{(p)}(\mb)$ are the
\emph{$\ell_{p}$-Lewis weights of the rows of $\mb$ }\cite{Lewis1978},
a fundamental and natural measure of importance of rows with respect
to the $\ell_{p}$-norm \cite{bourgain1989approximation,CohenP15,lsJournal19}.
In the case of non-degenerate $\mb$, they are defined recursively
as the vector $w\in\R_{>0}^{n}$ which satisfies 
\[
w=\diag(\mw^{(1/2)-(1/p)}\mb(\mb^{\top}\mw^{1-(2/p)}\mb)^{-1}\mb^{\top}\mw^{(1/2)-(1/p)})
\]
where $\mw=\mdiag(w)$. In the special case when $p=2$, 
\[
w=\sigma(\mb)\defeq\diag(\mb(\mb^{\top}\mb)^{\dagger}\mb^{\top})
\]
is known as the \emph{leverage scores }of the rows of $\ma$ and is
a fundamental object for dimension reduction and solving linear systems
\cite{spielmanS08sparsRes,mahoney11survey,li2012iterative,nw14,w14,cohen2015uniform,swz19}.

The work of \cite{lsJournal19} formally showed that there is an $\otilde(\sqrt{d})$-iteration
primal-dual IPM which uses $\tauweight(x,s)\leftarrow\tauls(x,s)$
when $p=\Omega(\log n)$. This choice of $p$ is motivated by drew
geometric connections between Lewis weights and ellipsoidal approximations
of polytopes \cite{lsJournal19}. Further, \cite{lsJournal19} showed
how to modify this IPM to have iterations of comparable cost to the
methods which use $\taustandard(x,s)$. This was achieved by leveraging
and modifying efficient algorithms for approximately computing Lewis
weights and leverage scores and developing techniques for dealing
with the noise such approximate computation induces.

To obtain the results of this paper we further simplify \cite{lsJournal19}
and provide more robust methods for following related central paths.
Our first observation is that the central path induced by $\tauls$
can be re-written more concisely in terms of only leverage scores.
Note that (\ref{eq:regls_path}) and the definition of Lewis weights
imply that there is $w_{\mu}$ with
\begin{align*}
\mx_{\mu}\ms_{\mu}\vones & =\mu\cdot w_{\mu}\text{ and }w_{\mu}=\sigma(\mw_{\mu}^{(1/2)-(1/p)}\ms_{\mu}^{-1}\ma)
\end{align*}
for $\mw_{\mu}\defeq\mdiag(w_{\mu})$. Substituting the first equation
into the second, gives more compactly that for $\alpha=1/p$
\[
\mx_{\mu}\ms_{\mu}\vones=\mu\cdot\mdiag(\sigma(\ms_{\mu}^{-1/2-\alpha}\mx_{\mu}^{1/2-\alpha}\ma))\,.
\]

Consequently, rather than defining centrality in terms of Lewis weights,
we would get the same central path as the one induced by $\tauls(x,s)$
by letting $\tauweight(x,s)\leftarrow\sigma(\ms_{\mu}^{1/2-\alpha}\mx_{\mu}^{1/2-\alpha}\ma)$,
i.e. defining it in terms of leverage scores. In other words, the
optimality of the central path conditions forces $\mx_{\mu}\ms_{\mu}\vones$
to be a type of Lewis weight if we carefully define centrality in
terms of leverage scores. Though $\tauweight(x,s)\leftarrow\sigma(\ms^{1/2-\alpha}\mx^{1/2-\alpha}\ma)$
induces the same central path as $\tauls(x,s)$, these weight functions
can be different outside the central path and thereby lead to slightly
different algorithms if only approximate centrality is preserved.
Further, with the current state-of-the-art theory, leverage scores
are simpler to compute and approximate, as Lewis weight computation
is often reduced to leverage score computation \cite{CohenP15,lsJournal19}.

Formally, in this paper we consider the following regularized-variant
of this centrality measure:
\begin{defn}[Weight Function]
\label{weight_function} Throughout this paper we let $\alpha\defeq1/(4\log(4n/d))$
and for all $x,s\in\R_{>0}^{n}$ let $\taureg(x,s)\defeq\sigma(\ms^{-1/2-\alpha}\mx^{1/2-\alpha}\ma)+\frac{d}{n}\vones$
where $\mx=\mdiag(x)$ and $\ms=\mdiag(s)$.
\end{defn}

This centrality measure is the same as the Lewis weight barrier except
that we add a multiple of the all-ones vector, $\frac{d}{n}\vones$,
to simplify our analysis. Further, this allows us to pick $\alpha\defeq1/(4\log(4n/d))$
as opposed to $\alpha=1/\Omega(\log n)$ due to the extra stability
it provides. Since we use this weight function throughout the paper
we overload notation and let $\tau(x,s)\defeq\taureg(x,s)$ and $\tau(\mb)\defeq\sigma(\mb)+\frac{d}{n}\vones$.

\paragraph{Our Robust Primal-Dual Method:}

We obtain our results by proving that there is an efficient primal-dual
path-following IPM based on $\taureg(x,s)$. We believe that our analysis
is slightly simpler than \cite{lsJournal19} due to its specification
in terms of leverage scores, rather than the more general Lewis weights,
but remark that the core ingredients of its analysis are similar.
Formally, we provide Newton-method type steps that allow us to control
centrality with respect to this measure and increase $\mu$ fast enough
that this yields an $\otilde(\sqrt{d})$-iteration method.

Beyond providing a simplified $\otilde(\sqrt{d})$-iteration IPM,
we leverage this analysis to provide a robust method. Critical to
the development of recent IPMs is that it is possible to design efficient
primal-dual $\otilde(\sqrt{n})$-iteration IPMs that take steps using
only crude, multiplicative approximations to $x$ and $s$ \cite{cohen2019solving,LeeSZ19,br2019deterministicArxiv}.
These papers consider the standard central path but measure centrality
using potential functions introduced in \cite{leeS14,lsJournal19}.
This robustness allows these papers to efficiently implement steps
by only needing to change smaller amounts of coordinates.

Similarly, we show how to apply these approximate centrality measurement
techniques to the central path induced by $\taureg(x,s)$. We show
that it suffices to maintain multiplicative approximations to the
current iterate ($x,s\in\R_{\ge0}^{n}$), the regularized leverage
scores ($\sigma(\ms^{-1/2-\alpha}\mx^{1/2-\alpha}\ma)+\frac{d}{n}\vones$),
and the inverse of the local Hessian ($(\ma^{\top}\ms^{-1}\mx\ma)^{-1}$)
to maintain approximate centrality with respect to $\taureg(x,s)$.
Interestingly, to do this we slightly modify the type of steps we
take in our IPM. Rather than taking standard Newton steps, we slightly
change the steps sizes on $x$ and $s$ to account for the effect
of $\taureg(x,s)$ (see Definition~\ref{def:newton_step}). Further,
approximation of the Hessian causes the $x$ iterates to be infeasible,
but in Section~\ref{sec:maintaining_infeasibility} we discuss how
to modify the steps to control this infeasibility and still prove
the desired theorem.

The main guarantees of this new IPM are given by Theorem~\ref{thm:path_following}
(Section~\ref{sub:ipm_restated}) and proven in Section~\ref{sec:primal_dual_path}
and Section~\ref{sec:maintaining_infeasibility}. This theorem formalizes
the above discussion and quantifies how much the iterates, leverage
scores, and Hessian can change in each iteration. These bounds are
key to obtaining an efficient method that can maintain multiplicative
approximations to these quantities.

\subsection{Heavy Hitters, Congestion Detection, and Sketching the Central Path}

It has long been known that the changes to the central path are essentially
sparse or low-rank on average. In Renegar's IPM \cite{renegar1988polynomial},
the multiplicative change in $x$ and $s$ per iteration are bounded
in $\ell_{2}$. Consequently, the matrices for which linear systems
are solved in Renegar's method do not change too quickly. In fact,
this phenomenon holds for IPMs more broadly, and from the earliest
work on polynomial time IPMs \cite{karmarkar1984new}, to the most
recent fastest methods \cite{leeS14,lee2015efficient,LeeSZ19,lsJournal19,br2019deterministicArxiv},
and varied work in between \cite{nesterov1989self,vaidya1989speeding,nesterov1991acceleration}
this bounded change in IPM iterates has been leveraged to obtain faster
linear programming algorithms.

Our IPM also enjoys a variety of stability properties on its iterates.
We show that the multiplicative change in the iterates are bounded
in a norm induced by $\taureg(x,s)$ and therefore are also bounded
in $\ell_{2}$. This is known to imply that changes to $\ma^{\top}\ms^{-1}\mx\ma\in\R^{d\times d}$
can be bounded over the course of the algorithm and we further show
that this implies that the changes in the weight function, $\taureg(x,s)$,
can also be bounded. These facts, combined with the robustness properties
of our IPM imply that to obtain an efficient linear programming algorithm
it suffices to maintain multiplicative approximations to the following
three quantities (1) the vectors $x,s\in\R_{\geq0}^{n}$, (2) the
regularized leverage scores $\taureg(x,s)$, and (3) the Hessian inverse
$(\ma^{\top}\ms^{-1}\mx\ma)^{-1}\in\R^{d\times d}$ under bounds on
how quickly these quantities change.

We treat each of these problems as a self contained data-structure
problem, the first we call the \emph{vector maintenance problem},
the second we call the \emph{leverage score maintenance problem},
and the third has been previously studied (albeit different variants)
and is called the \emph{inverse maintenance problem}. For each problem
we build efficient solutions by combining techniques form the sketching
literature (e.g. heavy hitters sketches and Johnson-Lindenstrauss
sketches) and careful potential functions and tools for dealing with
sparse and low rank approximations. (Further work is also needed to
maintain the gradient of a potential used to measure proximity to
the central path (Section~\ref{sec:grad_maintenance}) and this is
discussed in the appendix.)

In the remainder of this overview, we briefly survey how we solve
each of these problems. A common issue that needs to addressed in
solving each problem is that of hiding randomness and dealing with
adversarial input. Each data-structure uses sampling and sketching
techniques to improve running times. While these techniques are powerful
and succeed with high probability, they only work against an \emph{oblivious
adversary}, i.e. one which provides input that does not depend on
the randomness of the data structure. However, the output of our data-structures
are used to take steps along the central path and provide the next
input, so care needs to be taken to argue that the output of the data
structure, as used by the method, doesn't somehow leak information
about the randomness of the sketches and samples into the next input.
A key contribution of our work is showing how to overcome this issue
in each case.

\paragraph{Vector Maintenance: }

In the vector maintenance problem, we receive two online sequences
of vectors $h^{(1)},h^{(2)},...\in\R^{d}$ and $g^{(1)},g^{(2)},...\in\R^{n}$
and must maintain the sum $y^{(t)}:\defeq\sum_{i\in[t]}\mg^{(i)}\ma h^{(i)}$
for a fixed matrix $\ma\in\R^{n\times d}$. The naive way of solving
this would just compute $\mg^{(t)}\ma h^{(t)}$ in iteration $t$
and add it to the previous result. Unfortunately, this takes $O(\nnz(A))$
time per iteration, which is too slow for our purposes. Luckily we
do not have to maintain this sum exactly. Motivated by the robustness
of our IPM, it is enough to maintain a multiplicative approximation
$\tilde{y}^{(t)}\approx_{\epsilon}y^{(t)}$ of the sum. An exact definition
of this problem, together with our upper bounds, can be found in Section~\ref{sub:vector_restated}
and the formal proof of these results can be found in Section~\ref{sec:vec_maintenance}.

We now outline how vector maintenance problem can be solved. For simplicity
assume we already have an accurate approximation $\tilde{y}^{(t-1)}\approx_{\epsilon/2}y^{(t-1)}$.
Then we only have to change the entries $i$ of $\tilde{y}^{(t-1)}$
where $y_{i}^{(t)}\not\approx_{\epsilon/2}y_{i}^{(t-1)}$, because
for all other $i$ we have $\tilde{y}_{i}^{(t-1)}\approx_{\epsilon}y_{i}^{(t)}$.
This means we must detect the large entries of $y^{(t)}-y^{(t-1)}=\mg^{(t)}\ma h^{(t)}$,
which can be done via heavy hitter techniques. From past research
on heavy hitters, we know one can construct a small, sparse, random
matrix $\Phi\in\R^{k\times n}$ with $k\ll n$, such that known the
much smaller vector $x\in\R^{k}$ with $x=\Phi y$ allows for a quick
reconstruction of the large entries of $y$. Thus for our task, we
maintain the product $\Phi\mg^{(t)}\ma$, which can be done quickly
if $g^{(t)}$ does not change in too many entries compared to $g^{(t-1)}$.
Then we can reconstruct the large entries of $\mg^{(t)}\ma h^{(t)}$
by computing $(\Phi\mg^{(t)}\ma)h^{(t)}$. Note that this product
can be computed much faster than $\mg^{(t)}\ma h^{(t)}$, because
$\Phi\mg^{(t)}\ma$ is a $k\times d$ matrix and $k\ll n$.

One issue we must overcome, is that the output of our data-structure
must not leak any information about $\Phi$. This matrix is randomly
constructed and the large entries of $y$ can only be reconstructed
from $x=\Phi y$, if the vector $y$ is independent from the randomness
in $\Phi$. Thus if the output of the data-structure depends on $\Phi$
and the next future input $h^{(t)}$ depends on the previous output,
then the required independence can no longer be guaranteed. We overcome
this problem by computing any entry $y_{i}^{(t)}$ exactly, whenever
the heavy hitter techniques detect a large change in said entry. As
we now know the value of said entry exactly, we can compare it to
the previous result and verify, if the entry did indeed change by
some large amount. This allows us to define the output of our data-structure
in a deterministic way, e.g. maintain $\tilde{y}_{i}^{(t)}\approx_{\epsilon}y_{i}^{(t)}$
by setting $\tilde{y}_{i}^{(t)}=y_{i}^{(t)}$ whenever $\tilde{y}_{i}^{(t-1)}\not\approx_{\epsilon}y_{i}^{(t)}$.
(The exact deterministic definition we use is slightly more complex,
but this is the high-level idea.)

So far we only explained how we can detect large changes in $y^{(t)}$
that occur within a single iteration. However, it could be that some
entry changes slowly over several iterations. It is easy to extend
our heavy hitter technique to also detect these slower changes. The
idea is that we not just detect changes within one iteration (i.e.
$y_{i}^{(t)}\not\approx_{\epsilon/2}y_{i}^{(t-1)}$) but also changes
within any power of two (i.e. $y_{i}^{(t)}\not\approx_{\epsilon/2}y_{i}^{(t-2^{i})}$for
$i=1,...,\log t$). To make sure that this does not become too slow,
we only check every $2^{i}$ iterations if there was a large change
within the past $2^{i}$ iterations. One can prove that this is enough
to also detect slowly changing entries.

\paragraph{Leverage Score Maintenance:}

In the leverage score maintenance problem, we must maintain an approximation
of the regularized leverage scores of a matrix of the form $\mg^{1/2}\ma\in\R^{n\times d}$,
for a slowly changing diagonal matrix $\mg\in\R^{n\times n}$. That
means we are interested in a data-structure that maintains a vector
$\tilde{\tau}\in\R^{n}$ with $\tilde{\tau}_{i}\approx_{\epsilon}\tau_{i}:=(\mg^{1/2}\ma(\ma^{\top}\mg\ma)^{-1}\ma^{\top}\mg^{1/2})_{i,i}+d/n$.
The high-level idea for how to solve this task is identical to the
previously outlined vector-maintenance: we try to detect for which
$i$ the value $\tau_{i}$ changed significantly, and then update
$\tilde{\tau}_{i}$ so that the vector stays a valid approximation.
We show that detecting these indices $i$ can be reduced to the previous
vector maintenance. Here we simply outline this reduction, a formal
description of our result can be found in Section~\ref{sub:ls_restated}
and its proof and analysis is in Section~\ref{sec:leverage_maintenance}.

Consider the case where we change from $\mg$ to some $\mg'$. We
want to detect indices $i$ where the $i$th leverage score changed
significantly. Define $\mm\defeq(\ma^{\top}\mg\ma)^{-1}\ma^{\top}\mg^{1/2}-(\ma^{\top}\mg'\ma)^{-1}\ma^{\top}\mg'^{1/2}\in\R^{d\times n}$
and note that 
\begin{align*}
(\mg^{1/2}\ma(\ma^{\top}\mg\ma)^{-1}\ma^{\top}\mg^{1/2})_{i,i} & =\|e_{i}^{\top}\mg^{1/2}\ma(\ma^{\top}\mg\ma)^{-1}\ma^{\top}\mg^{1/2}\|_{2}^{2},
\end{align*}
so a large change in the $i$th leverage score, when changing $\mg\in\R^{n\times n}$
to $\mg'\in\R^{n\times n}$, results in a large norm of $e_{i}^{\top}\mg^{1/2}\ma\mm$,
if $\mg_{i,i}$ and $\mg'_{i,i}$ are roughly the same. (As $\mg$
is slowly changing there are not too many $i$ where $\mg_{i,i}$
and $\mg'_{i,i}$ significantly. So we can just compute the $i$th
leverage score exactly for these entries to check if the $i$th score
changed significantly.) By multiplying this term with a Johnson-Lindenstrauss
matrix $\mj$ the task of detecting a large leverage score change,
becomes the task of detecting rows of $\mg^{1/2}\ma\mm\mj$ for which
the row-norm changed significantly. Given that $\mj$ only needs some
$O(\log n)$ columns to yield a good approximation of these norms,
we know that any large change in the $i$th leverage score, must result
in some index $j$ where $|(\mg^{1/2}\ma\mm\mj)_{i,j}|$ must be large.
Detecting these large entries can be done in the same way as in the
vector-maintenance problem by considering each column of $\mm\mj$
as a vector.

To make sure that no information about the random matrix $\mj$ is
leaked, we use the same technique previously outlined in the vector-maintenance
paragraph. That is, after detecting a set $I\subset[n]$ of indices
$i$ for which the leverage score might have changed significantly,
we compute the $i$th leverage score to verify the large change and
set $\tilde{\tau}_{i}$ to be this computed leverage score, if the
change was large enough. Unlike the vector case however, the $i$th
leverage score is not computed in a deterministic way (as this would
be prohibitively expensive). Instead we use another random Johanson-Lindenstrauss
matrix $\mj'$, so the output $\tilde{\tau}$ is actually defined
w.r.t the input and this new matrix $\mj'$. By using a fresh independent
Johnson-Lindenstrauss $\mj'$ to verify changes to leverage scores,
this data structure works against an adaptive adversary.

\paragraph{Inverse Maintenance:}

In the inverse maintenance problem, we maintain a spectral sparsifier
of $\ma^{\top}\mw\ma\in\R^{d\times d}$ and its inverse, for a slowly
changing diagonal matrix $\mw\in\R^{n\times n}$. Using the leverage
score data-structure, we can assume an approximation to the leverage
scores of $\ma^{\top}\mw\ma$ is given. Hence, we can sample $\otilde(d)$
many rows of $\ma$ to form a spectral sparsifier. This allows us
to get a spectral sparsifier and its inverse in $\otilde(d^{\omega})$.
To speed up the runtime, we follow the idea in \cite{lee2015efficient},
which resamples the row only if the leverage score changed too much.
This makes sure the sampled matrix is slowly changing in $\ell_{0}$
sense and hence we can try to apply the lazy low-rank update idea
in \cite{cohen2019solving} to update the inverse in $\otilde(d^{2})$
time. Unfortunately, this algorithm works only against oblivious adversary
and the sampled matrix is changing too fast in $\ell_{2}$ sense,
which is required for the leverage score maintenance.

Fortunately, we note that we do not need a sparsifier that satisfies
both conditions at all time. Therefore, our final data structure has
two ways to output the inverse of the sparsifier. The sparsifier that
works only against oblivious adversary is used in implementing the
Newton steps and the sparsifier that is slowly changing is used to
compute the sketch $\mm\mj$ used in the leverage score maintenance
problem mentioned above.

For the Newton steps, we only need to make sure the sparsifier does
not leak the randomness since the input $\mw$ depends on the Newton
step. Since we only need to solve linear systems of the form $\ma^{\top}\mw\ma x=b\in\R^{d}$,
we can handle this problem by adding an appropriate noise to the output
$x$. This makes sure the randomness we use in this data structure
does not leak when the output $x$ is used. This idea is also used
\cite{lee2015efficient}, but extra care is needed to remove the $\nnz(\ma)$
per step cost in their algorithm.

For computing the sketch $\mm\mj$, we do not need to worry about
leakage of randomness, but only need to make sure it is slowly changing
in $\ell_{2}$ sense. Instead of using \cite{cohen2019solving} as
a black-box, we show how to combing the idea of resampling in \cite{lee2015efficient}
and the low-rank update in \cite{cohen2019solving}. This gives us
an alternate smoother scheme that satisfies the requirement for slowly
changing.

\section{Preliminaries\label{sec:prelim}}

Here we discuss varied notation we use throughout the paper. We adopt
similar notation to \cite{leeS14} and some of the explanations here
are copied directly form this work.\textbf{\vspace{10pt}}\\
\textbf{Matrices}: We call a matrix $\ma$\emph{ non-degenerate} if
it has full column-rank and no zero rows.\textbf{ }We call symmetric
matrix $\mb\in\Rnn$ positive semidefinite (PSD) if $x^{\top}\mb x\geq0$
for all $x\in\Rn$ and positive definite (PD) if $x^{\top}\mb x>0$
for all $x\in\Rn$.\textbf{\vspace{10pt}}\\
\textbf{Matrix Operations:} For symmetric matrices $\ma,\mb\in\Rnn$
we write $\ma\preceq\mb$ to indicate that $x^{\top}\ma x\leq x^{\top}\mb x$
for all $x\in\Rn$ and define $\prec$, $\preceq$, and $\succeq$
analogously. For $\ma,\mb\in\R^{n\times m}$, we let $\ma\circ\mb$
denote the Schur product, i.e. $[\ma\circ\mb]_{i,j}\defeq\ma_{i,j}\cdot\mb_{i,j}$
for all $i\in[n]$ and $j\in[m]$. We use $\nnz(\ma)$ to denote the
number of nonzero entries in $\ma$.\textbf{\vspace{10pt}}\\
\textbf{Diagonals:} For $\ma\in\R^{n\times n}$ we define $\diag(\ma)\in\R^{n}$
with $\diag(\ma)_{i}=\ma_{ii}$ for all $i\in[n]$ and for $x\in\R^{n}$
we define $\mdiag(x)\in\R^{n\times n}$ as the diagonal matrix with
$\diag(\mdiag(x)(x))=x$. We often use upper case to denote a vectors
associated diagonal matrix and in particular let $\mx\defeq\mdiag(x)$,
$\ms\defeq\mdiag(s)$, $\mw\defeq\mdiag(w)$, $\mT\defeq\mdiag(\tau)$,
$\omx\defeq\mdiag(\ox)$, $\oms\defeq\mdiag(\overline{s})$, $\omw\defeq\mdiag(\ow)$,
$\mx_{t}\defeq\mdiag(x_{t})$, $\ms_{t}=\mdiag(s_{t})$, $\mw_{t}=\mdiag(w_{t})$,
and $\mT_{t}\defeq\mdiag(\tau_{t})$.\textbf{\vspace{10pt}}\\
\textbf{Fundamental Matrices}: For any non-degenerate matrix $\ma\in\R^{n\times d}$
we let $\mproj(\ma)\defeq\ma(\ma^{\top}\ma)^{-1}\ma^{\top}$ denote
the orthogonal projection matrix onto $\ma$'s image. Further, we
let $\sigma(\ma)\defeq\diag(\mproj(\ma))$ denote $\ma$'s \emph{leverage
scores} and we let \emph{$\tau(\ma)\defeq\sigma(\ma)+\frac{d}{n}\vones$}
denote its regularized leverage scores. Further, we define $\mSigma(\ma)\defeq\mdiag(\sigma(\ma))$,
$\mT(\ma)\defeq\mdiag(\tau(\ma))$, $\mproj^{(2)}(\ma)\defeq\mproj(\ma)\circ\mproj(\ma)$
(where $\circ$ denotes entrywise product), and $\mLambda(\ma)\defeq\mSigma(\ma)-\mproj^{(2)}(\ma)$.\textbf{\vspace{10pt}}\\
\textbf{Approximations}: We use $x\approx_{\epsilon}y$ to denote
that $\exp(-\epsilon)y\leq x\leq\exp(\epsilon)y$ and $\ma\approx_{\epsilon}\mb$
to denote that $\exp(-\epsilon)\mb\preceq\ma\preceq\exp(\epsilon)\mb$.\textbf{\vspace{10pt}}\\
\textbf{Norms}: For PD $\ma\in\Rnn$ we let $\|\cdot\|_{\ma}$ denote
the norm where $\norm x_{\ma}^{2}\defeq x^{\top}\ma x$ for all $x\in\Rn$.
For positive $w\in\R_{>0}^{n}$ we let $\|\cdot\|_{w}$ denote the
norm where $\norm x_{w}^{2}\defeq\sum_{i\in[n]}w_{i}x_{i}^{2}$ for
all $x\in\Rn$. For any norm $\|\cdot\|$ and matrix $\mm$, its induced
operator norm of $\mm$ is defined by $\norm{\mm}=\sup_{\|x\|=1}\norm{\mm x}$.\textbf{\vspace{10pt}}\\
\textbf{Time and Probability:} We use $\otilde(\cdot)$ to hide factors
polylogarithmic in $n$ and $d$. We say an algorithm has a property
``with high probability (w.h.p.) in $n$'' if it holds with probability
at least $1-1/O(\poly(n))$ for any polynomial by choice of the constants
in the runtime of the algorithm.\textbf{\vspace{10pt}}\\
\textbf{Misc:} We let $[z]\defeq\{1,2,..,z\}$. We let $\vones_{n},\vzero_{n}\in\R^{n}$
denote the all-one and all-zero vectors, $\mzero_{n},\mi_{n}\in\R^{n\times n}$
denote the all zero and identity matrices, and drop subscripts when
the dimensions are clear. We let $\indicVec i$ denote the indicator
vector for coordinate $i$, i.e. the $i$-th basis vector.

\section{Linear Programming Algorithm\label{sec:lp_algorithms}}

Here we prove the main result of this paper that there is a $\tilde{O}(nd+d^{3})$
time algorithm for solving linear programs. This theorem is restated
below for convenience:

\mainresult*

The proof for Theorem~\ref{thm:mainresult} uses four intermediate
results, which we formally state in the next Sections~\ref{sub:ipm_restated},
to \ref{sub:inverse_restated}. Each of these intermediate results
is self-contained and analyzed in its own section, Sections~\ref{sec:primal_dual_path}
to \ref{sec:inverse_maintenance} correspondingly. In this section
we show how these results can be combined to obtain Theorem~\ref{thm:mainresult}.
The first result is a new, improved IPM as outlined in Section~\ref{sec:overview_IPM}.
The exact statement is given in Section~\ref{sub:ipm_restated} and
its analysis can be found in Section~\ref{sec:primal_dual_path}
and Section~\ref{sec:maintaining_infeasibility}. Here we give a
rough summary to motivate the other three results used by our linear
programming algorithm.

Our IPM is robust in the sense, that it makes progress, even if we
maintain the primal dual solution pair $(x,s)$ only approximately.
Additionally, the linear system that is solved in each iteration,
allows for spectral approximations. More accurately, it is enough
to maintain a spectral approximation of an inverse of a matrix of
the form $\ma^{\top}\mw\ma$ for some diagonal matrix $\mw$. For
this robust IPM we also require to compute approximate leverage scores,
which allows the IPM to converge in just $\tilde{O}(\sqrt{d})$ iterations.
These properties of our IPM motivate three new data-structures:

(i) In Section~\ref{sub:vector_restated}, we present a data-structure
that can maintain an approximation of the primal dual solution pair
$(x,s)$ efficiently. More formally, this data-structure maintains
an approximation of the sum $\sum_{k\in[t]}\mw^{(k)}\ma h^{(k)}$
for diagonal matrices $\mw^{(k)}\in\R^{n\times n}$ and vectors $h^{(k)}\in\R^{d}$.
The data-structure is proven and analyzed in Section \ref{sec:vec_maintenance}.

(ii) In Section~\ref{sub:ls_restated} we present a data-structure
that can main approximate leverage scores, required by the IPM. Its
correctness is proven in Section~\ref{sec:leverage_maintenance}.

(iii) The last requirement of the IPM is for us to maintain a spectral
approximation of $(\ma^{\top}\mw\ma)^{-1}$ for some diagonal matrix
$\mw$. We present a data-structure that can maintain this inverse
approximately in $\otilde(d^{\omega-\frac{1}{2}}+d^{2})$ amortized
time per step, when the matrix $\mw$ changes slowly w.r.t $\ell_{2}$-norm
and if we have estimates of the leverage scores of $\mw^{1/2}\ma$.
The exact result is stated in Section~\ref{sub:inverse_restated}
and proven in Section \ref{sec:inverse_maintenance}.

With this we have all tools available for proving the main result
Theorem~\ref{thm:mainresult}. In Section~\ref{sub:main_algorithm}
we show how to combine all these tools to obtain the fast linear programming
algorithm.

\subsection{Interior Point Method\label{sub:ipm_restated}}

In Section~\ref{sec:primal_dual_path} we derive and analyze the
core subroutine of our primal-dual robust $\otilde(\sqrt{d})$-iteration
IPM (Algorithm~\ref{alg:pathfollowing}). This subroutine takes an
approximate central point for parameter $\mu^{\textrm{(init)}}$ and
in $\otilde(\sqrt{d}\log(\mu^{\textrm{(target)}}/\mu^{\textrm{(init)}}))$
iterations outputs an approximate central path point for any given
parameter $\mu^{\textrm{(target)}}$. Here we simply state the routine
(Algorithm~\ref{alg:pathfollowing}) and the main theorem regarding
its performance (Theorem~\ref{thm:path_following}). We defer the
full motivation of the method and its analysis (i.e. the proof of
Theorem~\ref{thm:path_following}) to Section~\ref{sec:primal_dual_path}.
In the remainder of this section we argue how with the appropriate
data-structures, this theorem implies our main result.

\begin{algorithm2e}[h]

\caption{Path Following (Theorem~\ref{thm:path_following}) \label{alg:pathfollowing}}

\SetKwProg{Proc}{procedure}{}{}

\Proc{$\textsc{Centering}$$(x^{\textrm{\ensuremath{\mathrm{(init)}}}}\in\R_{>0}^{n},s^{\textrm{\ensuremath{\mathrm{(init)}}}}\in\R_{>0}^{n},\mu^{\textrm{\ensuremath{\mathrm{(init)}}}}>0,\mu^{\textrm{\ensuremath{\mathrm{(target)}}}}>0,\epsilon>0)$}{

\State $\alpha\leftarrow1/(4\log(4n/d))$, $\lambda\leftarrow\frac{2}{\epsilon}\log(\frac{2^{16}n\sqrt{d}}{\alpha^{2}})$,
$\gamma\leftarrow\min(\frac{\epsilon}{4},\frac{\alpha}{50\lambda})$

\State $\mu\leftarrow\mu^{\textrm{(init)}}$, $x\leftarrow x^{\textrm{(init)}}$,
$s\leftarrow s^{\textrm{(init)}}$

\While{$\textsc{\ensuremath{\mathsf{true}}}$}{

\State Pick any $\ox,\os\in\Rn_{>0}$ such that $\ox\approx_{\epsilon}x$,
$\os\approx_{\epsilon}s$\label{line:pf_1}

\State Find $\ov$ such that $\|\ov-v\|_{\infty}\leq\gamma$ where
$v=\mu\cdot\tau(x,s)/w$\label{line:pf_2}

\State Let $\Phi(v)\defeq\sum_{i=1}^{n}\exp(\lambda(v_{i}-1))+\exp(-\lambda(v_{i}-1))$
for all $v\in\R^{n}$

\lIf{$\mu=\mu^{\textrm{\ensuremath{\mathrm{(target)}}}}$ and $\Phi(\ov)\leq\frac{2^{16}n\sqrt{d}}{\alpha^{2}}$}{\State \textbf{break}}

\State $h=\gamma\nabla\Phi(\overline{v})^{\flat(\otau)}$ for $\otau\approx_{\epsilon}\tau(x,s)$
\tcp*{ See Lemma~\ref{lem:smoothing:helper} for definition of $(\cdot){}^{\flat}$}

\State Pick any $\mh\in\R^{d\times d}$ with $\mh\approx_{(c\epsilon)/(d^{1/4}\log^{3}n)}\ma^{\top}\oms^{-1}\omx\ma$
and $\E[\mh]=\ma^{\top}\oms^{-1}\omx\ma$ for small constant $c>0$.\label{line:approx_H}

\State Let $\mq=\overline{\ms}^{-1/2}\overline{\mx}^{1/2}\ma\mh^{-1}\ma^{\top}\omx^{1/2}\oms^{-1/2}$,
$\omw=\omx\oms$

\State \textbf{$x\leftarrow x+(1+2\alpha)\omx\omw^{-1/2}(\mi-\mq)\omw^{1/2}h$}\label{line:pf_3}\label{line:x_formula}

\State $s\leftarrow s+(1-2\alpha)\oms\omw^{-1/2}\mq\omw^{1/2}h$\label{line:pf_4}\label{line:s_formula}

\State Pick any $\ox^{\textrm{(new)}},\os^{\textrm{(new)}},\overline{\tau}^{\textrm{(new)}}\in\Rn_{>0}$
with $\ox^{\textrm{(new)}}\approx_{\epsilon}x$, $\os^{\textrm{(new)}}\approx_{\epsilon}s$,
$\overline{\tau}^{\textrm{(new)}}\approx_{1}\tau(x,s)$

\State $\delta_{\lambda}\leftarrow\textsc{MaintainFeasibility}(x,s,\overline{\tau}^{\textrm{(new)}})$
\tcp*{Algorithm \ref{alg:maintain_INfeasibility}}\label{line:pf_maintainFeasibility}

\State $x\leftarrow x+\omx^{\text{(new)}}\left(\oms^{\text{(new)}}\right){}^{-1}\ma\delta_{\lambda}$\label{line:pf_move_x_improve_feasible}

\lIf{$\mu>\mu^{\textrm{\ensuremath{\mathrm{(target)}}}}$}{\State $\mu\leftarrow\max\{\mu^{\textrm{(target)}},(1-\frac{\gamma\alpha}{2^{15}\sqrt{d}})\mu\}$}

\lElseIf{$\mu<\mu^{\textrm{\ensuremath{\mathrm{(target)}}}}$}{\State $\mu\leftarrow\min\{\mu^{\textrm{(target)}},(1+\frac{\gamma\alpha}{2^{15}\sqrt{d}})\mu\}$}

}

\State \Return $(x,s)$

}

\end{algorithm2e}
\begin{thm}
\label{thm:path_following} There exists a constant $\zeta>0$ such
that, given $x^{\textrm{\ensuremath{\mathrm{(init)}}}},s^{\textrm{\ensuremath{\mathrm{(init)}}}}\in\Rn_{>0}$,
$\mu^{\textrm{\ensuremath{\mathrm{(init)}}}}>0$, $\mu^{\textrm{\ensuremath{\mathrm{(target)}}}}>0$,
and $\epsilon\in(0,\alpha/16000)$ with $x^{\textrm{\ensuremath{\mathrm{(init)}}}}s^{\textrm{\ensuremath{\mathrm{(init)}}}}\approx_{2\epsilon}\mu^{\textrm{\ensuremath{\mathrm{(init)}}}}\cdot\tau(x^{\textrm{\ensuremath{\mathrm{(init)}}}},s^{\textrm{\ensuremath{\mathrm{(init)}}}})$
and
\begin{align*}
\frac{1}{\mu^{\mathrm{(init)}}}\|\ma & x^{\textrm{\ensuremath{\mathrm{(init)}}}}-b\|_{(\ma^{\top}\mx^{\mathrm{(init)}}\ms^{\mathrm{(init)}-1}\ma)^{-1}}^{2}\le\frac{\zeta\epsilon^{2}}{\log^{\power}n}
\end{align*}
Algorithm \ref{alg:pathfollowing} outputs $(x^{\mathrm{(final)}},s^{\mathrm{(final)}})$
such that
\begin{align*}
x^{\mathrm{(final)}}s^{\mathrm{(final)}}\approx_{\epsilon}\mu^{\mathrm{(target)}}\cdot\tau(x^{\mathrm{(final)}},s^{\mathrm{(final)}}) & \qquad\text{and}\\
\frac{1}{\mu^{\mathrm{(target)}}}\|\ma x^{\textrm{\ensuremath{\mathrm{(final)}}}}-b\|_{(\ma^{\top}\mx^{\mathrm{(final)}}\ms^{\mathrm{(final)}-1}\ma)^{-1}}^{2}\le\frac{\zeta\epsilon^{2}}{\sqrt{d}\log^{\power}n} & \qquad\\
\text{in }O\left(\sqrt{d}\log(n)\cdot\left(\frac{1}{\epsilon\alpha}\cdot\log\left(\frac{\mu^{\textrm{\ensuremath{\mathrm{(target)}}}}}{\mu^{\textrm{\ensuremath{\mathrm{(init)}}}}}\right)+\frac{1}{\alpha^{3}}\right)\right) & \text{ iterations.}
\end{align*}
 Furthermore, throughout Algorithm~\ref{alg:pathfollowing}, we have
\begin{itemize}
\item $xs\approx_{4\epsilon}\mu\cdot\tau(x,s)$ for some $\mu$ where $(x,s)$
is immediate points in the algorithms
\item $\|\mx^{-1}\delta_{x}\|_{\tau+\infty}\leq\frac{\epsilon}{2}$, $\|\ms^{-1}\delta_{s}\|_{\tau+\infty}\leq\frac{\epsilon}{2}$,
and $\|\diag(\tau)^{-1}\delta_{\tau}\|_{\tau+\infty}\leq2\epsilon$
where $\delta_{x}$, $\delta_{s}$ and $\delta_{\tau}$ is the change
of $x$, $s$ and $\tau$ in one iteration and $\norm x_{\tau+\infty}\defeq\norm x_{\infty}+\cnorm\norm x_{\tau}$
for $\cnorm\defeq\frac{10}{\alpha}$ (See Section~\ref{sec:weight_changes}).
\end{itemize}
\end{thm}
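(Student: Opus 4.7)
My plan is to analyze Algorithm~\ref{alg:pathfollowing} through a potential-function argument on $\Phi(\bar v)=\sum_i \cosh(\lambda(\bar v_i-1))$ where $v_i = \mu\tau_i(x,s)/(x_is_i)$ measures centrality in the sense of Definition~\ref{weight_function}. The high-level structure mirrors the $\otilde(\sqrt{n})$-iteration robust IPMs of \cite{cohen2019solving,br2019deterministicArxiv}, but one must redo the invariant analysis for the regularized Lewis-weight centrality $xs \approx \mu\cdot\tau(x,s)$ and cope with three sources of inexactness simultaneously: the multiplicative approximations $\bar x,\bar s,\bar\tau$, the stochastic Hessian estimate $\mh$ of line~\ref{line:approx_H}, and the infeasibility in $\ma^\top x = b$ that results from using $\mh$ in place of $\ma^\top\oms^{-1}\omx\ma$ in the projection $\mq$.

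\textbf{Step 1: Structure of one Newton step.} First I would read off the induced changes $\delta_x,\delta_s$ from lines~\ref{line:x_formula}--\ref{line:s_formula} and interpret them as a weighted projection split of the direction $h$: writing $u=\omw^{1/2}h$, one has $\omw^{-1/2}\delta_x/\omx_{ii} = (1+2\alpha)(\mi-\mq)u$ and $\omw^{-1/2}\delta_s/\oms_{ii} = (1-2\alpha)\mq u$. The asymmetric $(1\pm 2\alpha)$ scalars are exactly what is needed to cancel the leading-order change in $\tau(x,s)$ along the step (this is where the parameter $\alpha=1/(4\log(4n/d))$ enters); I expect Section~\ref{sec:weight_changes} provides a Jacobian-style identity for $\nabla\tau$ in terms of $\mLambda(\ma)$. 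Using that identity plus $\bar x\approx_\epsilon x$, $\bar s\approx_\epsilon s$, $\bar\tau\approx_\epsilon\tau$, I would show that the change in $v=\mu\tau(x,s)/(xs)$ along one iteration is, up to first order, $-h/\bar\tau$ (in a suitable norm) with second-order error controlled by $\|\mx^{-1}\delta_x\|_{\tau+\infty}$ and $\|\ms^{-1}\delta_s\|_{\tau+\infty}$.

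\textbf{Step 2: Per-step potential decrease.} With the "$h=\gamma\nabla\Phi(\bar v)^{\flat(\bar\tau)}$" choice via the smoothing operator of Lemma~\ref{lem:smoothing:helper}, standard $\ell_{\tau+\infty}/\ell_{\tau+1}$ duality should give a first-order decrease $\langle\nabla\Phi(\bar v),\delta v\rangle \le -\gamma\|\nabla\Phi(\bar v)\|_{\tau+1}$. Second-order terms from the $\cosh$ potential contribute at most $O(\lambda\gamma^2)\Phi$, and the $\mu$ update contributes $O(\lambda\gamma\alpha/\sqrt d)\Phi$. Taking $\gamma\le \alpha/(50\lambda)$ as in the algorithm balances these so that whenever $\Phi(\bar v)\ge 2^{16}n\sqrt d/\alpha^2$ the potential strictly decreases by a constant factor of $\gamma\sqrt d/\lambda$ per iteration. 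The stochastic $\mh$ enters only through a perturbation of $\mq$; using $\mh\approx_{c\epsilon/(d^{1/4}\log^3 n)}\ma^\top\oms^{-1}\omx\ma$, this perturbation contributes $o(1)$ to $\|\delta v\|_{\tau+\infty}$ per step, and $\E[\mh]=\ma^\top\oms^{-1}\omx\ma$ lets one apply a matrix-martingale/Freedman argument over $\otilde(\sqrt d)$ steps so the accumulated error does not break centrality w.h.p.

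\textbf{Step 3: Step-size bounds and iteration count.} From $\|h\|_{\bar\tau+\infty}\le\gamma$ built into the flat-norm construction, combined with $\|\mi-\mq\|$ and $\|\mq\|$ being $O(1)$ operators in the $\|\cdot\|_{\bar w}$-geometry, one gets the advertised $\|\mx^{-1}\delta_x\|_{\tau+\infty},\|\ms^{-1}\delta_s\|_{\tau+\infty}\le\epsilon/2$, and then the weight-change calculus of Section~\ref{sec:weight_changes} upgrades this to $\|\mT^{-1}\delta_\tau\|_{\tau+\infty}\le 2\epsilon$. Summing the potential drop, $\Phi$ reaches the threshold $2^{16}n\sqrt d/\alpha^2$ after $O(\lambda/(\gamma\sqrt d))\cdot\log(\Phi^{(\text{init})})$ steps, and the $\mu$ phase takes $O(\sqrt d/(\gamma\alpha))\log(\mu^{(\text{target})}/\mu^{(\text{init})})$ steps; plugging $\gamma=\min(\epsilon/4,\alpha/(50\lambda))$ and $\lambda=\tilde O(1/\epsilon)$ yields the claimed count. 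At termination, $\Phi(\bar v)\le 2^{16}n\sqrt d/\alpha^2$ forces $\|\bar v-1\|_\infty\le O(\log(n\sqrt d)/\lambda)\le\epsilon/2$, which upgrades to $xs\approx_\epsilon \mu^{(\text{target})}\tau(x,s)$ after undoing the $\bar x,\bar s,\bar\tau$ approximations.

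\textbf{Step 4: Controlling infeasibility.} The main new obstacle beyond \cite{lsJournal19,cohen2019solving} is that $\mh\ne\ma^\top\oms^{-1}\omx\ma$ makes $\delta_x$ leave the affine subspace $\ma^\top x=b$, so $r^{(t)}:=\ma^\top x^{(t)}-b$ drifts. Here I would invoke the \textsc{MaintainFeasibility} subroutine of Algorithm~\ref{alg:maintain_INfeasibility} (line~\ref{line:pf_maintainFeasibility}), which computes $\delta_\lambda\in\R^d$ such that the correction $x\leftarrow x+\omx^{\text{(new)}}(\oms^{\text{(new)}})^{-1}\ma\delta_\lambda$ drives $r$ toward zero while contributing a bounded perturbation to centrality. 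The analysis, deferred to Section~\ref{sec:maintaining_infeasibility}, should show that the residual norm $\|r\|_{(\ma^\top\mx\ms^{-1}\ma)^{-1}}^2/\mu$ obeys a contraction-plus-noise recursion, inductively preserving the hypothesis $\le\zeta\epsilon^2/\log^{\power} n$ and tightening to $\le\zeta\epsilon^2/(\sqrt d\log^{\power} n)$ at termination; I expect the argument to balance the $\mh$-induced residual growth of order $\epsilon/(d^{1/4}\log^3 n)$ against the correction step. I anticipate this feasibility-restoration analysis, together with verifying that the stochastic $\mh$ error does not corrupt the potential via a Freedman bound, to be the main technical obstacle, whereas the pure centrality analysis is a relatively direct adaptation of the Lewis-weight IPM machinery of \cite{lsJournal19} restated in the simpler $\ell_2$-leverage-score language.
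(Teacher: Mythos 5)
Your plan tracks the paper's high-level structure: a $\cosh$ centrality potential $\Phi$, a step analysis in the mixed norm $\|\cdot\|_{\tau+\infty}$, the asymmetric $(1\pm2\alpha)$ scaling to control the change in $\tau(x,s)$, and a separate inductive contraction argument for the feasibility residual $\|\ma^\top x-b\|^2_{(\ma^\top\mx\ms^{-1}\ma)^{-1}}/\mu$. You also correctly single out feasibility maintenance (Section~\ref{sec:maintaining_infeasibility}) as the main new technical burden beyond \cite{lsJournal19,cohen2019solving}, with the centrality analysis being a ``relatively direct adaptation'' of the Lewis-weight machinery --- that matches how the paper structures the proof as Theorem~\ref{thm:path_following_simplified} plus an appendix.

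The one substantive misattribution is in your Step~2, where you invoke $\E[\mh]=\ma^\top\oms^{-1}\omx\ma$ together with a matrix-martingale / Freedman bound to control accumulated error in the \emph{centrality} potential. The paper's centrality analysis (Theorem~\ref{thm:newton_step} through Theorem~\ref{thm:path_following_simplified}) is entirely deterministic given only the multiplicative guarantee $\mh\approx_\epsilon\ma^\top\oms^{-1}\omx\ma$: the Hessian-approximation error is folded into a matrix $\me_t$ with $\|\me_t\|_{\tau_t+\infty}\leq1000\epsilon$ (Lemma~\ref{deriv_approx}), and the worst-case contraction $\|\mj_t-\me_t-\mi\|_{\tau_t+\infty}\leq1-\alpha$ (Lemma~\ref{lem:jacobian_approx_quality}) is what lets the flat-gradient step decrease $\Phi$ per iteration --- no aggregation over time and hence no concentration inequality is needed there. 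The unbiasedness of $\mh$ matters \emph{only} for the feasibility residual: Lemma~\ref{lem:fes_change_x} uses $\E[\mh^{-1}]=(\ma^\top\omx\oms^{-1}\ma)^{-1}$ to get $\E[\ma^\top\delta_x]=0$, which makes $\Phi_b$ a supermartingale; Lemma~\ref{lem:Phi_b_move_slowly} then applies Ville's maximal inequality (not Freedman) to keep $\Phi_b$ small over a window of $\sqrt d/\log^\power n$ iterations with probability at least $1/2$, after which an expensive correction (Lemma~\ref{lem:decrease_Phib}) resets it, with a retry-on-failure step absorbing the remaining probability at a $\log n$ overhead. If you instead push Freedman on $\Phi$ you are working harder than necessary; conversely, if you drop the supermartingale structure on $\Phi_b$ and try to pay the $\epsilon_H$-error additively, it accumulates over $\otilde(\sqrt d)$ iterations and the inductive hypothesis cannot be preserved with the stated $\epsilon_H=\tilde O(\epsilon d^{-1/4})$.

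A second subtlety your plan glosses over is the back-coupling between the feasibility correction $e_x$ of Line~\ref{line:pf_move_x_improve_feasible} and the centrality potential: that correction perturbs $x$ and could in principle destroy the $\Phi$-decrease. The paper closes this loop explicitly --- Lemma~\ref{lem:newton_step_potential} requires $\|\mx^{-1}e_x\|_{\tau+\infty}\leq\gamma\alpha/2^{20}$ for the potential decrement to survive, and Lemma~\ref{lem:move_x_total} verifies under the inductive hypothesis on $\Phi_b$ that the correction indeed obeys this bound without triggering an amplifying feedback loop. Without this pair of lemmas you have two plausible inductions (one on $\Phi$, one on $\Phi_b$) but no argument that they are simultaneously maintainable.
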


\begin{rem}
We will take $\epsilon=1/\poly\log(n)$.
\end{rem}

Note that the complexity of Theorem \ref{thm:path_following} depends
on the cost of implementing Lines \ref{line:pf_1}, \ref{line:pf_2},
\ref{line:pf_3}, and \ref{line:pf_4} of Algorithm \ref{alg:pathfollowing},
as well as the cost of Algorithm \ref{alg:maintain_INfeasibility}.
Here Lines \ref{line:pf_1}, \ref{line:pf_3} and \ref{line:pf_4}
ask us to maintain an approximation of the primal dual solution pair
$(x,s)$. A data-structure for this task is presented in Section \ref{sub:vector_restated}.
Additionally, to compute Lines \ref{line:pf_3} and \ref{line:pf_4},
we must have access to an approximate inverse of $\ma^{\top}\oms^{-1}\omx\ma$
(see Line \ref{line:approx_H}). The task of maintaining this inverse
will be performed by the data-structure presented in Section \ref{sub:inverse_restated}.
At last, consider Line~\ref{line:pf_2}. To implement this line,
we must have an approximation of the leverage scores $\tau(x,s)$.
In Section~\ref{sub:ls_restated}, we present a data-structure that
can efficiently maintain such an approximation.

To help us analyze the cost of Algorithm~\ref{alg:maintain_INfeasibility},
we prove the following in Section~\ref{sec:maintaining_infeasibility}. 

\begin{restatable}[Maintain Feasibility]{thm}{maintainFeasibility}\label{thm:feasibilityComplexity}

The additional amortized cost of calling $\textsc{MaintainFeasibility}$
in Line~\ref{line:pf_maintainFeasibility} of Algorithm~\ref{alg:pathfollowing}
is $\otilde(nd^{0.5}+d^{2.5}/\epsilon^{2})$ per call, plus the cost
of querying $\otilde(n/\sqrt{d}+d^{1.5}/\epsilon^{2})$ entries of
$x$ and $s$ (assuming $x$, $s$ are given implicitly, e.g. via
some data structure).

\end{restatable}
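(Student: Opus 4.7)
The plan is to bound the cost of $\textsc{MaintainFeasibility}$ by splitting it into three parts: (i) computing the $d$-dimensional correction vector $\delta_\lambda$, (ii) detecting which coordinates of $x$ need explicit updates so that $x$ still admits an implicit representation of the promised accuracy, and (iii) actually reading the values of those coordinates of $x$ and $s$ from the implicit representation maintained by the other data structures. The overall target matches the per-iteration budget of Theorem~\ref{thm:mainresult}: across the $\otilde(\sqrt{d})$ iterations of Algorithm~\ref{alg:pathfollowing} the claimed amortized per-call cost integrates to $\otilde(nd+d^{3}/\epsilon^{2})$, the dominant contribution to the main result.

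For part (i), the vector $\delta_\lambda$ is determined by the current infeasibility $\ma^\top x - b$ and the matrix $\ma^{\top}\omw\ma$ (or its approximate inverse) already maintained by the inverse-maintenance data structure of Section~\ref{sub:inverse_restated}. A constant number of matrix-vector products against this maintained inverse suffices, so the amortized cost is inherited from the inverse-maintenance bound, contributing $\otilde(d^{2.5}/\epsilon^{2})$ per call. For part (ii), instead of applying $\omx^{\mathrm{(new)}}\left(\oms^{\mathrm{(new)}}\right)^{-1}\ma\delta_\lambda$ to every coordinate (cost $\Theta(nd)$ per call, too expensive), I would piggy-back on the heavy-hitters-based vector-maintenance structure of Section~\ref{sub:vector_restated}: treat $x$ as an online sum to which $\mg^{(t)}\ma h^{(t)}$ is being added each step (with $\mg^{(t)} = \omx^{\mathrm{(new)}}(\oms^{\mathrm{(new)}})^{-1}$ and $h^{(t)} = \delta_\lambda$), and maintain a sketch $\Phi\mg^{(t)}\ma$ of dimension $\otilde(\sqrt{d})\times d$. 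Rebuilding the $\sqrt{d}\times n$ sketch against the slowly-changing diagonal $\mg^{(t)}$ costs $\otilde(nd^{1/2})$ amortized per call by the analysis of Section~\ref{sec:vec_maintenance}, accounting for the $\otilde(nd^{1/2})$ term in the claim.

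For part (iii), the number of coordinates that the heavy-hitters structure flags and for which we must read $x_i,s_i$ from the implicit representation is bounded by a potential argument. Theorem~\ref{thm:path_following} guarantees $\|\mx^{-1}\delta_x\|_{\tau+\infty}\le \epsilon/2$, and the definition $\|y\|_{\tau+\infty}=\|y\|_\infty+\cnorm\|y\|_\tau$ together with the regularization $\tau(x,s)=\sigma(\cdot)+\tfrac{d}{n}\vones$ ensures every coordinate carries weight at least $d/n$. Hence an entry flagged for update (contributing $\Omega(\epsilon^{2})$ to $\|\mx^{-1}\delta_x\|_\tau^{2}$) accounts for $\Omega((d/n)\epsilon^{2})$ of the per-step budget, so the number of flagged entries per iteration is $\otilde(n/d)$ on average; amortized over the $\otilde(\sqrt{d})$ iterations this yields $\otilde(n/\sqrt{d})$ coordinate queries per call, plus an additive $\otilde(d^{1.5}/\epsilon^{2})$ term from periodic resampling of the sketch in (ii) (which forces a full rescan of the nonzero coordinates of the maintained state).

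The main obstacle will be the amortization in (iii): one must argue that the cumulative $\ell_{\tau+\infty}$ movement across the whole run of Algorithm~\ref{alg:pathfollowing} is bounded, then translate this total movement bound into a bound on the total number of flagged coordinates using the $d/n$ lower bound on $\tau$. A secondary subtlety is that the heavy-hitters sketch in (ii) only works against an oblivious adversary, and since $\delta_\lambda$ is itself computed from the Hessian-inverse approximation, we must ensure via the ``propose-then-verify'' template from Section~\ref{sec:approach} that the flagged set and the subsequent deterministic update do not leak the sketch's randomness into future inputs; the extra $d^{1.5}/\epsilon^{2}$ query term precisely covers the cost of this verification step by reading the relevant $d$-dimensional coordinates whenever a verification is triggered.
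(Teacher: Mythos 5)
Your proposal gets the target arithmetic roughly right but for the wrong reasons; the cost decomposition you invent is not how Algorithm~\ref{alg:maintain_Feasibility} works, and as a result several of the terms are mis-attributed.

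The actual structure is two-phased. On \emph{every} call, $\textsc{MaintainFeasibility}$ runs $\textsc{MaintainInfeasibility}$ (Algorithm~\ref{alg:maintain_INfeasibility}), whose job is only to keep the infeasibility potential $\Phi_{b}$ from blowing up in expectation as the reference pair $(x',s')$ is rolled forward; it does not detect changed coordinates of $x$ at all. Its cost is dominated by (a) constructing the unbiased estimate $\delta_{b}$ of $b-\ma^{\top}x$ by leverage-score sampling (Lemma~\ref{lem:random_delta_b}), and (b) multiplying by the sampled matrices $\mh_{1},\dots,\mh_{4}$ (Lemma~\ref{lem:generate_H2_H4}). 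With the parameter choices $\epsilon_{b}=\Theta(d^{-1/2}\epsilon^{2}/\log^{2}n)$ and $\epsilon_{H}=\Theta(d^{-1/4}\epsilon/\log^{3}n)$, both contribute $\otilde(d^{2}/\epsilon_{b})=\otilde(d^{2}/\epsilon_{H}^{2})=\otilde(d^{2.5}/\epsilon^{2})$ per call, and the sampling requires reading only $\otilde(d/\epsilon_{b})=\otilde(d^{1.5}/\epsilon^{2})$ entries of $x$ and $s$ from the implicit representation — that is the source of the $d^{1.5}/\epsilon^{2}$ entry-query term, not ``periodic resampling of the sketch.'' Separately, once every $\sqrt{d}/\log^{\power}n$ iterations, the if-branch in Algorithm~\ref{alg:maintain_Feasibility} performs the corrective step of Lemma~\ref{lem:decrease_Phib}, which explicitly computes $\ma^{\top}x-b$ and solves one system. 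That full matrix-vector product costs $\otilde(nd)$ and forces a read of all $n$ entries of $x$; amortized over $\tilde\Omega(\sqrt{d})$ iterations this yields the $\otilde(n\sqrt{d})$ time term and the $\otilde(n/\sqrt{d})$ entry-query term. So the $n\sqrt{d}$ term has nothing to do with rebuilding a heavy-hitters sketch, and there is no per-call heavy-hitters detection step inside $\textsc{MaintainFeasibility}$.

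The heavy-hitters machinery you invoke lives entirely in the \emph{separate} vector-maintenance data structure (Section~\ref{sec:vec_maintenance}); $\textsc{MaintainFeasibility}$ merely returns the low-dimensional vector $\delta_{\lambda}\in\R^{d}$, which the main algorithm then feeds into $D^{(x)}_{\textsc{MatVec}}.\textsc{Query}$. Charging the cost of propagating $\delta_{\lambda}$ through the implicit representation to this theorem would be double counting. Likewise, the ``propose-then-verify'' concern about sketch randomness leaking is handled inside $\textsc{MaintainInfeasibility}$ by a different mechanism (unbiased solvers and independent sampling, via Lemmas~\ref{lem:preconditioner}, \ref{lem:random_delta_b}, and~\ref{lem:generate_H2_H4}), not by reading $d$-dimensional coordinates. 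Finally, your potential argument in part~(iii), while a reasonable-sounding technique, is not what bounds the entry-query count here: the $n/\sqrt{d}$ comes from the periodic full read, and the $d^{1.5}/\epsilon^{2}$ comes from the sample sizes dictated by the accuracy parameters $\epsilon_{b},\epsilon_{H}$, both of which are determined in Lemma~\ref{lem:Phi_b_move_slowly} so that $\Phi_{b}$ stays small over the $\sqrt{d}/\log^{\power}n$ iterations between correction steps.
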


\subsection{Vector Data Structure\label{sub:vector_restated}}

Consider an online sequence of $n\times n$ diagonal matrices $\mg^{(1)},\mg^{(2)},...,\mg^{(T)}\in\R^{n\times n}$
and vectors $h^{(1)},...,h^{(T)}\in\R^{d}$, $\delta^{(1)},...,\delta^{(T)}\in\R^{n}$
and define $y^{(t+1)}:=\sum_{k=1}^{t}\mg^{(k)}\ma h^{(k)}+\delta^{(k)}$.
In this subsection we describe a data-structure that can efficiently
maintain an approximation $\bar{y}^{(t)}\approx_{\varepsilon}y^{(t)}$,
when the relative changes $\|(\my^{(k)})^{-1}\mg^{(k)}\ma h^{(k)}\|_{2}$
and $\|(\my^{(k)})^{-1}\delta^{(k)}\|_{2}$ are small. This is motivated
by the following requirement of our IPM: we must maintain a multiplicative
approximation of a sequence of vectors $x^{(t)}$, $s^{(t)}\in\R^{n}$
(see Line \ref{line:pf_1} of Algorithm \ref{alg:pathfollowing}),
where $x^{(k+1)}=x^{(k)}+\delta_{x}^{(k)}$, $s^{(k+1)}=s^{(k)}+\delta_{s}^{(k)}$
and the terms $\delta_{x}^{(k)}$ and $\delta_{s}^{(k)}$are roughly
of the form (see Lines \ref{line:pf_3} and \ref{line:pf_4} of Algorithm
\ref{alg:pathfollowing}):
\begin{align*}
\delta_{x}^{(k)} & =(1+2\alpha)\omx^{(k)}\left(\omw^{(k)}\right)^{-1/2}(\mi-\mq^{(k)})\left(\omw^{(k)}\right)^{1/2}v^{(k)},\\
\delta_{s}^{(k)} & =(1-2\alpha)\oms^{(k)}\left(\omw^{(k)}\right)^{-1/2}\mq^{(k)}\left(\omw^{(k)}\right)^{1/2}v^{(k)}.
\end{align*}
To maintain an approximation of $x^{(t)}$, we can then use the data-structure
for maintaining an approximation of $y^{(t)}$ by choosing
\begin{align*}
\mg^{(k)} & =(1+2\alpha)\omx^{(k)}\left(\omw^{(k)}\right)^{-1/2},\\
h^{(k)} & =-\mq^{(k)}\left(\omw^{(k)}\right)^{1/2}v^{(k)},\\
\delta^{(k)} & =\left(\omw^{(k)}\right)^{1/2}v^{(k)}.
\end{align*}

Likewise, we can maintain an approximation of $s^{(t)}$ by a slightly
different choice of parameters. The exact result, which we prove in
Section~\ref{sec:vec_maintenance}, is the following Theorem~\ref{thm:product_sum_datastructure}:

\begin{restatable}[Vector Maintenance]{thm}{productsumdatastructure}\label{thm:product_sum_datastructure}There
exists a Monte-Carlo data-structure (Algorithm \ref{alg:product_sum_datastructure}),
that works against an adaptive adversary, with the following procedures:
\begin{itemize}
\item \textsc{Initialize($\ma,g,x^{(0)},\epsilon$):} Given matrix $\ma\in\R^{n\times d}$,
scaling $g\in\R^{n}$, initial vector $x^{(0)}$, and target accuracy
$\epsilon\in(0,1/10)$, the data-structure preprocesses in $O(\nnz(\ma)\log^{5}n)$
time.
\item \textsc{Scale($i,u$):} Given $i\in[n]$ and $u\in\R$ sets $g_{i}=u$
in $O(d\log^{5}n)$ amortized time.
\item \textsc{Query($h^{(t)},\delta\t$):} Let $g\t\in\R^{n}$ be the scale
vector $g\in\R^{n}$ during $t$-th call to \textsc{Query} and let
$h\t\in\R^{d},\delta\t\in\R^{n}$ be the vectors given during that
query. Define
\[
x\t=x^{(0)}+\sum_{k\in[t]}\mg\k\ma h\k+\sum_{k\in[t]}\delta\k.
\]
Then, w.h.p. in $n$ the data-structure outputs a vector $y\in\R^{n}$
such that $y\approx_{\epsilon}x\t$. Furthermore, the total cost over
$T$ steps is
\[
O\left(T\left(n\log n+\sum_{k\in[T]}\left(\left\Vert (\mx\k)^{-1}\mg\k\ma h\k\right\Vert _{2}^{2}+\|(\mx\k)^{-1}\delta\k\|_{2}^{2}\right)\cdot\varepsilon^{-2}\cdot d\log^{6}n\right)\right).
\]
\item \textsc{ComputeExact($i$):} Output $x_{i}^{(t)}\in\R^{n}$ exactly
in amortized time $O(d\log n)$.
\end{itemize}
\end{restatable}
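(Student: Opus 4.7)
}
The plan is to build the data structure around an $\ell_{2}$-heavy-hitter sketch $\mPhi \in \R^{k \times n}$ with $k = \polylog(n)$ (as in Lemma~\ref{lem:ell_2_heavy_hitter}), together with a lightweight machinery for evaluating $x_{i}^{(t)}$ exactly on demand. The data structure will maintain $(i)$ the sketched matrix $\mr \defeq \mPhi \mg \ma \in \R^{k \times d}$, $(ii)$ enough bookkeeping to realize \textsc{ComputeExact}$(i)$ in $O(d \log n)$ amortized time, and $(iii)$ an explicit approximation $\bar{y} \in \R^{n}$ that is guaranteed to satisfy $\bar{y} \approx_{\epsilon} x^{(t)}$ after each \textsc{Query}. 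A call to \textsc{Scale}$(i,u)$ rescales the $i$-th row's contribution to $\mr$ by an outer product update $\mPhi_{*,i} \cdot (u - g_{i}) \cdot \ma_{i,*}$, costing $O(d \polylog n)$ because $\mPhi$ has $\polylog n$ nonzeros per column. On \textsc{Query}$(h^{(t)}, \delta^{(t)})$, the increment to $x^{(t)}$ is $\mg^{(t)} \ma h^{(t)} + \delta^{(t)}$, whose sketch $\mr h^{(t)} + \mPhi \delta^{(t)} \in \R^{k}$ is computable in $O(kd)$ time; the heavy-hitter decoder then returns a short list of candidate indices $I \subseteq [n]$ containing (w.h.p.) every coordinate where the change is large in the $\ell_{2}$ sense.

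To catch coordinates whose individual updates are small but accumulate over many iterations, I would run the scheme at $O(\log n)$ dyadic scales in parallel: for each $j = 0, 1, \ldots, \lceil \log T \rceil$, maintain an independent sketch $\mPhi^{(j)}$ that is zeroed out and restarted at the beginning of every window of length $2^{j}$, and invoke the decoder for scale $j$ only once per $2^{j}$ queries. This ensures that any coordinate that drifts by more than $\epsilon$ relative to $\bar{y}_{i}$ over any window must be flagged at some scale, while the aggregate decode work is only an $O(\log n)$ overhead. For each candidate $i \in I$ returned by any scale, I call \textsc{ComputeExact}$(i)$, which uses stored running aggregates of $h^{(k)}, \delta^{(k)}$ and the change-log of $g_{i}$ to evaluate $x_{i}^{(t)}$ from scratch in $O(d \log n)$ amortized time (by periodically checkpointing $\bar{y}_{i}$ and invoking the batched product $\ma_{i,*}^{\top} \sum_{k} h^{(k)}$ against the segments on which $g_{i}$ is constant). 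The output $\bar{y}_{i}$ is then overwritten with the exact value whenever the exact value disagrees with the previous $\bar{y}_{i}$ by more than $\epsilon$; otherwise $\bar{y}_{i}$ is left unchanged.

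This overwrite rule is what supports the adaptive adversary claim: since $\bar{y}$ is updated using only the exact values returned by \textsc{ComputeExact}, $\bar{y}^{(t)}$ is a deterministic function of the past inputs, and the randomness in $\mPhi^{(j)}$ is used only as a filter on candidate indices; any coordinate $i$ missed because of a ``bad'' sketch instance simply fails the correctness guarantee only with the high-probability heavy-hitter failure bound, which we absorb by a union bound over the $\poly(n)$ calls. The total work per query splits into the unavoidable $O(n \log n)$ for bookkeeping plus $O(|I| \cdot d \polylog n)$ for the verifications; a standard potential argument identifies $|I|$ at each scale with the number of coordinates of $\bar{y}$ that require an $\epsilon$-sized update, which by the Johnson--Lindenstrauss-style embedding property of $\mPhi$ and by the $\epsilon$-tolerance in $\bar{y}$ is bounded by $O(\epsilon^{-2} \| (\mx^{(k)})^{-1}(\mg^{(k)} \ma h^{(k)} + \delta^{(k)}) \|_{2}^{2})$ charged to the window. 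The main obstacle I expect is this final amortization: one must carefully define the potential so that a coordinate ``surviving'' at a small scale without being overwritten still only contributes its fair share at larger scales, and so that candidates that turn out to be false positives (flagged by the sketch but not actually $\epsilon$-shifted) can be paid for by the true heavy coordinates in the same window; this essentially requires that the decoder only returns $O(\epsilon^{-2} \|\cdot\|_{2}^{2})$ candidates per invocation, which the $\ell_{\infty}/\ell_{2}$ guarantee of \cite{knpw11,p13} delivers once one scales $\mPhi$ and the threshold appropriately.
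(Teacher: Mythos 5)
Your high-level architecture (an $\ell_2$-heavy-hitter sketch over $\mg\ma$, dyadic windows re-decoded every $2^j$ queries to catch slow drift, and exact re-computation of flagged coordinates so that $\bar y$ is a deterministic function of the input stream and hence safe against an adaptive adversary) matches the paper's approach, and the \textsc{ComputeExact} bookkeeping via a change-log of $g_i$ plus prefix sums of the $h^{(k)}$ is also how the paper implements it. But there is a concrete gap in how you turn the heavy-hitter's \emph{additive} $\ell_2$ guarantee into the claimed \emph{multiplicative} $\epsilon$-accuracy with the claimed cost.

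The heavy-hitter primitive reports coordinates $i$ with $\bigl|\bigl(\sum_k \mg^{(k)}\ma h^{(k)}+\delta^{(k)}\bigr)_i\bigr|\ge\epsilon'$, and the decode cost scales with $\|\cdot\|_2^2/\epsilon'^2$. What you actually need to flag are coordinates where the change is large \emph{relative} to $x_i^{(t)}$, and you need the cost bound to scale with $\|(\mx^{(k)})^{-1}\mg^{(k)}\ma h^{(k)}\|_2^2$, not $\|\mg^{(k)}\ma h^{(k)}\|_2^2$. You gesture at ``scaling $\Phi$ and the threshold appropriately,'' but a single global threshold cannot do this when the $x_i$ vary by polynomial factors. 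The paper fixes this by feeding the sketch the \emph{row-rescaled} matrix $(\mz^{(\ell)})^{-1}\mg\ma$, where $z^{(\ell)}\in\R^n$ is a snapshot of the current approximation $\bar y$ that is held fixed for the duration of a scale-$\ell$ window, so that a uniform absolute threshold becomes a per-coordinate relative threshold. Maintaining this is not free: within a window some coordinate $i$ may drift so far that $z_i^{(\ell)}$ is no longer within a constant of $x_i^{(k)}$, in which case the sketch's normalization for that row is wrong and its cost accounting breaks. The paper tracks exactly these coordinates in a set $F^{(\ell)}$ and excludes them from the sketch, computing their values exactly (which also handles rows whose $g_i$ changed mid-window, another case where $\Sigma_k (\mg^{(k)}\ma h^{(k)})_i$ is not simply $(\mg\ma\Sigma_k h^{(k)})_i$). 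Your proposal has no analogue of $z^{(\ell)}$ or $F^{(\ell)}$, and without them neither the multiplicative accuracy nor the $\|(\mx^{(k)})^{-1}\cdot\|_2^2$-weighted running time actually follows; the ``potential argument'' you defer to is precisely where this machinery must be invoked.
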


\subsection{Leverage Score Maintenance\label{sub:ls_restated}}

The IPM of Theorem \ref{thm:path_following}, requires approximate
leverage scores of some matrix of the form $\mg\ma$, where $\mg$
is a diagonal matrix (see Line~\ref{line:pf_2} of Algorithm~\ref{alg:pathfollowing},
where $\mg=(\omx/\oms)^{1/2}$) . Here the matrix $\mg$ changes slowly
from one iteration of the IPM to the next one, which allows us to
create a data-structure that can maintain the scores more efficiently
than recomputing them from scratch every time $\mg$ changes. In Section~\ref{sec:leverage_maintenance}
we prove the following result for maintaining leverage scores:

\begin{restatable}[Leverage Score Maintenance]{thm}{leveragescoredatastructure}\label{thm:leverage_score_datastructure}
There exists a Monte-Carlo data-structure (Algorithm \ref{alg:leverage_score_datastructure_1}),
that works against an adaptive adversary, with the following procedures:
\begin{itemize}
\item \textsc{Initialize($\ma,g,\epsilon$):} Given matrix $\ma\in\R^{n\times d}$,
scaling $g\in\R^{n}$ and target accuracy $\epsilon>0$, the data-structure
preprocesses in $O(nd\epsilon^{-2}\log^{4}n)$ time.
\item \textsc{Scale($i,u$):} Given $i\in[n]$ and $u\in\R$ sets $g_{i}=u$
in $O(d\epsilon^{-2}\log^{5}n)$ time.
\item \textsc{Query($\Psi\t,\Psi\t\safe$):} Let $g\t$ be the vector $g$
during $t$-th call to $\textsc{Query}$, assume $g^{(t)}\approx_{1/16}g^{(t-1)}$
and define $\mh^{(t)}=\ma^{\top}(\mg\t)^{2}\ma$. Given random input-matrices
$\Psi\t\in\R^{d\times d}$ and $\Psi\t\safe\in\R^{d\times d}$ such
that 
\[
\Psi\t\approx_{\epsilon/(24\log n)}(\mh\t)^{-1},\Psi\t\safe\approx_{\epsilon/(24\log n)}(\mh\t)^{-1}.
\]
and any randomness used to generate $\Psi\t\safe$ is independent
of the randomness used to generate $\Psi\t$, w.h.p. in $n$ the data-structure
outputs a vector $\ttau\in\Rn$ independent of $\Psi^{(1)},...,\Psi\t$
such that $\ttau_{i}\approx_{\epsilon}\tau_{i}(\mg\t\ma)$ for all
$i\in[n]$. Furthermore, the total cost over $T$ steps is 
\[
O\left(\left(\sum_{t\in[T]}\|\mg^{(t)}\ma\Psi^{(t)}\ma^{\top}\mg^{(t)}-\mg^{(t-1)}\ma\Psi^{(t-1)}\ma^{\top}\mg^{(t-1)}\|_{F}\right)^{2}\cdot\text{\ensuremath{\epsilon^{-4}}}n\log^{7}n+T\left(T_{\Psi}+\epsilon^{-2}d^{2}\log^{3}n\right)\right)
\]
where $T_{\Psi}$ is the time required to multiply a vector with $\Psi^{(t)}$
(i.e. in case it is given implicitly via a data structure).
\end{itemize}
\end{restatable}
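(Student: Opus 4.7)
The plan is to reduce the task to a collection of instances of the vector-maintenance data structure from Theorem~\ref{thm:product_sum_datastructure} combined with Johnson--Lindenstrauss (JL) sketching, while decoupling the randomness used for detection from the randomness used for output. Computing (once per query) a factorization $\Psi^{(t)}=\mb^{(t)}(\mb^{(t)})^{\top}$, for instance via Cholesky, we have $\sigma_i(\mg^{(t)}\ma)=\|e_i^{\top}\mg^{(t)}\ma\mb^{(t)}\|_2^2$ up to a $(1\pm\epsilon/\polylog n)$ factor (using the assumed spectral approximation $\Psi^{(t)}\approx(\ma^{\top}(\mg^{(t)})^{2}\ma)^{-1}$). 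Fixing a random JL matrix $\mj\in\R^{d\times r}$ with $r=\Theta(\epsilon^{-2}\log n)$ once at initialization, the JL guarantee gives $\sigma_i(\mg^{(t)}\ma)\approx_{\epsilon/2}\|e_i^{\top}\mg^{(t)}\ma\mb^{(t)}\mj\|_2^2$ simultaneously for all $i\in[n]$ with high probability in $n$. Maintaining $\ttau$ thus reduces to maintaining multiplicative approximations to the squared row-norms of the $n\times r$ matrix $\mg^{(t)}\ma\mb^{(t)}\mj$.

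For each of the $r=\otilde(\epsilon^{-2})$ columns $j$, instantiate a copy of the vector-maintenance structure of Theorem~\ref{thm:product_sum_datastructure} to approximately maintain $y_j^{(t)}:=\mg^{(t)}\ma\,(\mb^{(t)}\mj e_j)\in\R^n$. Calls to $\textsc{Scale}$ are forwarded to every copy; updates to $\Psi^{(t)}$ between queries are encoded through the $h^{(k)},\delta^{(k)}$ arguments so that each copy's cumulative sum always equals $\mg^{(t)}\ma(\mb^{(t)}\mj e_j)$. After each query, the plan is to compare the multiplicatively maintained approximations to the values used when each $\ttau_i$ was last refreshed and form a candidate set $I\subseteq[n]$ of coordinates where some sketched row has moved substantially; indices where $g_i$ itself has drifted appreciably are added to $I$ as well. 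By the JL guarantee any coordinate whose leverage score has moved by more than $\epsilon$ must show up in $I$.

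To produce the output $\ttau^{(t)}$, set $\ttau_i^{(t)}=\ttau_i^{(t-1)}$ for $i\notin I$ and, for $i\in I$, recompute $\ttau_i^{(t)}:=\|e_i^{\top}\mg^{(t)}\ma\,\mb\safe^{(t)}\mj\safe\|_2^2+d/n$, where $\mj\safe$ is a fresh JL matrix drawn independently at initialization and $\mb\safe^{(t)}$ is a factorization of $\Psi\safe^{(t)}$. Since $\mj\safe$ and all $\Psi\safe^{(k)}$ are independent of the $\Psi^{(k)}$'s and of the internal randomness of the vector-maintenance copies, induction on $t$ gives that $\ttau^{(t)}$ depends only on the inputs $g^{(1)},\dots,g^{(t)}$ and on the safe randomness, yielding the required independence against an adaptive adversary. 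Correctness then follows from a JL union bound over the at most $n$ indices ever output.

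For runtime, the per-query cost splits into: (i) $O(T_\Psi)$ plus polylog factors for factoring and applying $\Psi^{(t)},\Psi\safe^{(t)}$ to $r$ vectors, (ii) $\otilde(d^2\epsilon^{-2})$ for dense $d\times r$ bookkeeping, and (iii) the vector-maintenance cost over the $r$ copies. By Theorem~\ref{thm:product_sum_datastructure}, the last is, per column, $\otilde(n)$ plus an $\otilde(d\epsilon^{-2})$ charge per unit of cumulative relative change $\|(\my_j^{(k)})^{-1}\Delta y_j^{(k)}\|_2^2$; summing over columns and using JL's preservation of Frobenius norms, the total charge is bounded in terms of $\sum_k\|\mg^{(k)}\ma\Psi^{(k)}\ma^{\top}\mg^{(k)}-\mg^{(k-1)}\ma\Psi^{(k-1)}\ma^{\top}\mg^{(k-1)}\|_F^2$, and a final Cauchy--Schwarz over $T$ steps converts this into the squared-sum form stated in the theorem. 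The main obstacle will be the independence argument: the detection step is driven by $\Psi^{(t)}$, so one must verify carefully that no information about $\Psi^{(t)}$ leaks into $\ttau^{(t)}$, forcing the safe sketch and the detection sketch to be drawn and maintained with strictly disjoint randomness throughout; without this the adaptive-adversary guarantee inherited from Theorem~\ref{thm:product_sum_datastructure} would break down.
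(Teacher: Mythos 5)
Your high-level outline (sketch the leverage scores via a JL factorization of $\Psi^{(t)}$, detect changes with heavy-hitter ideas, and answer with independent ``safe'' randomness) is in the right spirit and matches the paper's skeleton, but there are two concrete gaps.

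First, reducing to Theorem~\ref{thm:product_sum_datastructure} per column does not quite work. The entries of $\mg^{(t)}\ma\mb^{(t)}\mj e_{j}$ are signed, can cross zero, and can change by more than a constant multiplicative factor from one query to the next; Theorem~\ref{thm:product_sum_datastructure} is a \emph{multiplicative} maintenance guarantee whose proof (and running-time bound, which is weighted by $\|(\mx^{(k)})^{-1}\cdot\|_{2}^{2}$) requires the maintained vector to be slowly varying multiplicatively, an assumption stated explicitly in its analysis (cf.~\eqref{eq:x_assumption}). For a row $i$ with small $\tau_{i}\approx d/n$ the sketched entries are typically of magnitude $\Theta(\sqrt{d/(nb)})$, and the relative-change quantities in your cost bound can blow up. The paper instead queries the \emph{additive} heavy-hitter structure of Corollary~\ref{cor:large_entry_datastructure} with the absolute threshold $\Theta\!\bigl(\frac{\epsilon}{\log n}\sqrt{d/(nb)}\bigr)$, which is exactly the scale at which the regularized leverage score $\sigma_{i}+d/n$ moves by a relative $\epsilon/\log n$; the dyadic-timescale trick is then implemented directly inside $\textsc{FindIndices}$ rather than inherited from the vector-maintenance structure.

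Second, your independence argument is incomplete. You set $\ttau_{i}^{(t)}=\ttau_{i}^{(t-1)}$ for $i\notin I$ and recompute only for $i\in I$, where $I$ is produced by the detection step and hence depends on $\Psi^{(t)}$ and the detection JL matrix $\mj$. The choice ``refresh or don't refresh coordinate $i$'' therefore leaks unsafe randomness into $\ttau^{(t)}$, which is exactly what must be avoided. The paper resolves this by a \emph{verification filter}: after the heavy-hitter detection proposes candidates, each candidate is re-checked by $\textsc{EstimateScore}$ using only $\Psi\safe$ and a fresh JL matrix, and an index enters $J$ only if this verification certifies a significant move; Lemma~\ref{lem:ls:good_query_result} then shows the resulting $J$ (and hence $\ttau^{(t)}$) would be the same w.h.p.\ even if the detection proposed $J_i=[n]$, i.e.\ the output is measurable with respect to the safe randomness alone. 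Relatedly, you draw a \emph{single} $\mj\safe$ at initialization; since $g^{(t)}$ may be chosen after seeing $\ttau^{(t-1)}$, later queries would see inputs correlated with $\mj\safe$, breaking the JL guarantee against an adaptive adversary. The paper draws a fresh $\widetilde{\mr}$ inside each call to $\textsc{EstimateScore}$, so the sketch used to certify a query is always independent of the current input.
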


\subsection{Inverse Maintenance\label{sub:inverse_restated}}

For the IPM we must approximately maintain the inverse $(\ma^{\top}\mw\ma)^{-1}$
where $\ma\in\R^{n\times d}$ undergoes changes to the diagonal matrix
$\mw$ (see Line \ref{line:approx_H} of Algorithm \ref{alg:pathfollowing}
where $\mw=\oms^{-1}\omx$). By using estimates of the leverage scores
of $\mw^{1/2}\ma$ (as maintained by Theorem \ref{thm:leverage_score_datastructure},
Section \ref{sub:ls_restated}), we are able to maintain the inverse
in amortized $\otilde(d^{\omega-\frac{1}{2}}+d^{2})$ time per step,
even for $n\gg d$. The exact result is stated as Theorem \ref{thm:inverse_main}
and proven in Section \ref{sec:inverse_maintenance}.

\begin{restatable}[Inverse Maintenance]{thm}{inversemain}\label{thm:inverse_main}
Given a full rank matrix $\ma\in\R^{n\times d}$ with $n\geq d$ and
error tolerance $\epsilon\in(0,1/10)$, there is a data structure
that approximately solves a sequence of linear systems $\ma^{\top}\tmw\ma y=b\in\R^{d}$
for positive diagonal matrices $\tmw\in\R^{n\times n}$ through the
following operation:
\begin{itemize}
\item \textsc{Initialize($\ma,\tw,\ttau,\epsilon$):} Given matrix $\ma\in\R^{n\times d}$,
scaling $\tw\in\R_{>0}^{n}$, shifted leverage score estimates $\ttau\in\R_{>0}^{n}$,
and accuracy $\epsilon\in(0,1/10)$, the data-structure preprocesses
in $O(d^{\omega})$ time.
\item $\textsc{Update}(\tw,\ttau)$: Output a matrix $\Psi\in\R^{d\times d}$
and vector $\tw^{\alg}$ where $\Psi^{-1}$ is close to $\ma^{\top}\tmw\ma\in\R^{d\times d}$
and $\tw^{\alg}$ is close to $\tw$.
\item $\textsc{Solve}(b,\ow,\delta)$: Input $\ow\approx_{1}\tw$ and $\delta>0$,
output $y=\Psi b\in\R^{d}$ for some random matrix $\Psi^{-1}\in\R^{d\times d}$
that is close to $\ma^{\top}\omw\ma\in\R^{d\times d}$.
\end{itemize}
Let $\tau(w)\defeq\tau(\mw\ma)$. Suppose that all estimate shifted
leverage scores $\ttau_{i}\in(1\pm\frac{1}{16\left\lceil \log d\right\rceil })\tau(w)_{i}$
for $i\in[n]$\footnote{Recall that $\tau(w)_{i}=(\sqrt{\mw}\ma(\ma^{\top}\mw\ma)^{-1}\ma^{\top}\sqrt{\mw})_{i,i}+\frac{d}{n},\forall i\in[n]$}
and that there is a sequence $w^{(0)},w^{(1)},\cdots,w^{(K)}\in\Rn$
such that the $w^{(k)}$ satisfy
\begin{equation}
\frac{1}{\epsilon^{2}}\|(\mw^{(k)})^{-1}(w^{(k+1)}-w^{(k)})\|_{\tau(w^{(k)})}^{2}+\|(\mathbf{T}(w^{(k)}))^{-1}(\tau(w^{(k+1)})-\tau(w^{(k)}))\|_{\tau(w^{(k)})}^{2}\leq O(1)\label{eq:input_ass}
\end{equation}
for $k=0,1,\cdots,K-1$ with $K=n^{O(1)}$. Further assume that the
update sequence $\tw^{(0)},\tw^{(1)},\cdots,\tw^{(K)}\in\Rn$ satisfies
$\tw^{(k)}\approx_{\epsilon/(16\log d)}w^{(k)}$ for all $k$ and
the $\tw^{(k)}$ are independent to the output of $\textsc{Update}$
and $\textsc{Solve}$. (The input $b^{(k)}$ can depend on the previous
output of the data structure.) Then, we have the following:
\begin{itemize}
\item The amortized time per call of $\textsc{Update}$ is $O(n+\epsilon^{-2}\cdot(d^{\omega-\frac{1}{2}}+d^{2})\cdot\log^{3/2}(n))$.
\item The time per call of $\textsc{Solve}$ is $O(n+\ensuremath{\delta^{-2}\cdot}d^{2}\cdot\log^{2}(n/\delta))$.
\item $\textsc{Update}$ outputs some $\Psi$, $\tw^{\alg}$ where $\Psi^{-1}=\ma^{\top}\widetilde{\mw}^{\alg}\ma\approx_{\epsilon}\ma^{\top}\widetilde{\mw}\ma$
with probability $1-1/\poly(n)$ and $\tw^{\alg}\approx_{\epsilon}\tw$.
\item $\textsc{Solve}$ outputs some $y=\Psi b$ where $\Psi^{-1}\approx_{\delta}\ma^{\top}\omw\ma$
with probability $1-1/\poly(n)$ and $\E[\Psi b]=(\ma^{\top}\omw\ma)^{-1}b$.
\end{itemize}
\end{restatable}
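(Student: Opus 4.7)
The plan is to build a spectral sparsifier $\mb \approx \ma^\top\tmw\ma$ of dimension $d$ by leverage score sampling and then amortize its maintenance using the slow-change hypothesis (\ref{eq:input_ass}). Concretely, given the estimates $\ttau$, sample $m=\widetilde\Theta(d/\epsilon^2)$ rows of $\sqrt{\tmw}\ma$ with probabilities $p_i \propto \ttau_i$, weighted by $1/(m p_i)$, to obtain an unbiased spectral sparsifier $\mb$. Standard matrix Chernoff, together with the assumption $\ttau_i \approx \tau(w)_i$, guarantees $\mb \approx_{\epsilon/4}\ma^\top\tmw\ma$ w.h.p. Inverting this once in \textsc{Initialize} costs $O(d^\omega)$; the entire data structure just stores the current $\mb^{-1}$ and the list of sampled rows.

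For \textsc{Update} I would follow the resampling scheme of Lee–Sidford: a row is re-drawn only when its pair $(\tw_i,\ttau_i)$ has moved by more than some multiplicative threshold, and these rows are organized into $\log d$ lazy buckets indexed by the magnitude of change. Using (\ref{eq:input_ass}) with the potential $\sum_i \tau_i(w)\cdot((\Delta\tw/\tw)^2+(\Delta\tau/\tau)^2)$ bounds the expected number of resamples at threshold $2^{-j}$ across $2^j$ steps by $\widetilde O(4^j d/\epsilon^2)$. After a batch of $r$ replaced rows, I update $\mb^{-1}$ via the Woodbury identity at cost $O(d^2 r + r^\omega)$, or fully rebuild at $O(d^\omega)$ if $r\ge d$; balancing the levels, as in Cohen–Lee–Song, gives amortized $\widetilde O(\epsilon^{-2}(d^{\omega-1/2}+d^2))$ per call. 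The additional $O(n)$ is needed to scan the (at most $n$) coordinates of $\tw$ and $\ttau$ that have been reported as changed. Since the resampling decisions and the output $\Psi=\mb^{-1}$ are built on the stored random bits but the input $\tw^{(k)}$ is assumed independent of that randomness, the spectral guarantee $\Psi^{-1}\approx_\epsilon \ma^\top\widetilde\mw\ma$ holds.

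For \textsc{Solve} the obstacle is that $b$ may depend on previous outputs of the data structure, so a fixed random sparsifier would leak its randomness through $y=\Psi b$. I would handle this by drawing, on each query, a fresh independent spectral sparsifier $\widehat\mb\approx_\delta \ma^\top\omw\ma$: sample $\widetilde O(d/\delta^2)$ fresh rows according to the leverage score estimates (slightly refreshed by the input $\ow\approx_1\tw$), apply a low-rank Woodbury correction to the stored $\Psi$ at cost $\widetilde O(d^2/\delta^2)$, and multiply the result by $b$. The fresh, unbiased choice of sampled rows makes $\widehat\Psi$ satisfy $\E[\widehat\Psi]=(\ma^\top\omw\ma)^{-1}$ in the sense required, and because the random bits are independent per call and are never reused by \textsc{Update}, subsequent inputs remain independent of the internal state.

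The hardest step will be the amortized complexity bookkeeping: I need to simultaneously verify that (i) the bucketing scheme satisfies the invariant $\mb\approx_\epsilon \ma^\top\tmw\ma$ after every \textsc{Update}, and (ii) the total number of resamples summed over buckets over $K$ steps matches the stated per-call bound. Both follow from (\ref{eq:input_ass}), but only after carefully separating the contributions of changes in $\tw$ (absorbed by the first term) and changes in $\ttau$ (absorbed by the second), and showing that a resample at level $j$ is charged exactly once. The independence-from-$\Psi$ condition required by Theorem~\ref{thm:leverage_score_datastructure} is maintained because $\textsc{Update}$ never reuses randomness from \textsc{Solve}, so the sketch it feeds into the leverage-score data structure is deterministic given the inputs, while the estimated leverage scores feeding into \textsc{Solve} are, by assumption, independent of the sparsifier's sampling randomness.
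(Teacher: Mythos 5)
Your treatment of \textsc{Initialize} and the bucketed resampling scheme for \textsc{Update} is essentially the paper's approach (Algorithm~\ref{alg:inverse_maintence}): sort the multiplicative changes in $\tw$ and $\ttau$, pick a threshold level $k$, resample those rows, apply a Woodbury update, and charge the amortized cost using the slow-change hypothesis (\ref{eq:input_ass}) and the observation that a rank-$2^{\ell}$ update happens at most once every $\Omega(2^{\ell/2}/\log d)$ steps. That part is fine.

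The gap is in \textsc{Solve}, and it is serious. You claim that drawing a fresh unbiased sparsifier $\widehat{\mb}$ with $\E[\widehat{\mb}]=\ma^{\top}\omw\ma$ and returning $\widehat{\mb}^{-1}b$ gives $\E[\widehat{\Psi}]=(\ma^{\top}\omw\ma)^{-1}$. This is false: matrix inversion is strictly operator-convex, so Jensen gives $\E[\widehat{\mb}^{-1}]\succeq(\E[\widehat{\mb}])^{-1}$ with equality only in the deterministic case. An unbiased estimator of the matrix does \emph{not} give an unbiased estimator of its inverse, and the bias is exactly what Theorem~\ref{thm:inverse_main} forbids. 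The paper's fix is the geometric-truncation construction of Lemma~\ref{lem:preconditioner}: run a preconditioned Richardson iteration $y^{(i+1)}=2\bigl(y^{(i)}-\tfrac12 y^{(0)}+\mn^{(i+1)}(b-\mm^{(i+1)}y^{(i)})\bigr)$ with freshly sampled $\mm^{(i)},\mn^{(i)}$ in each step, and stop at a geometrically distributed time $X$. The $2^{k}$ rescaling built into the recursion cancels the $\P[X\geq k]=2^{-k}$ factor so that $\E[y^{(X)}]=\mm^{-1}b$ exactly. Without some debiasing device of this sort, the ``$\E[\Psi b]=(\ma^{\top}\omw\ma)^{-1}b$'' bullet simply does not hold.

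A second, related problem is the claimed running time of your \textsc{Solve}. A fresh sparsifier with accuracy $\delta$ has $\widetilde{O}(d/\delta^{2})$ sampled rows, which for $\delta\ll1$ exceeds $d$, so the change from the stored sparsifier is not low rank and Woodbury does not help; a direct rebuild would cost $\Omega(d^{\omega})$, not the stated $O(n+\delta^{-2}d^{2}\log^{2}(n/\delta))$. The paper instead reuses the maintained $\Psi$ (accuracy $O(\epsilon)$, which need only be a constant-factor approximation) purely as a \emph{preconditioner} and runs $O(\log(n/\delta))$ Richardson steps to reach accuracy $\delta$ (\textsc{SecureSolve}), each step costing $O(d^{2}\delta^{-2})$ to multiply by the fresh sampled $\ma^{\top}\mathbf{U}\ma$. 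That iterative-refinement step is what makes the target accuracy $\delta$ compatible with the target running time, and it is missing from your plan. (The Gaussian noise added in \textsc{SecureSolve} is there to hide $v$ in the adaptive-adversary extension of Lemma~\ref{lem:totalPmovement}, and Lemma~\ref{lem:perturb_system} is used to reinterpret that additive noise as a spectral perturbation of the solver — you can omit those for the base theorem, but you cannot omit the iterative refinement or the geometric truncation.)
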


In general, Theorem \ref{thm:inverse_main} does not work against
adaptive adversaries, i.e. the input $w$ and $\tilde{\tau}$ to the
$\textsc{Update}$ procedure is not allowed to depend on previous
outputs. In Section \ref{sec:inverse_maintenance} we show, that this
algorithm can be improved such that the input $w$ and $\tilde{\tau}$
is allowed to depend on the output of $\textsc{Solve}$. However,
the input is still not allowed to depend on the output of $\textsc{Update}$.

\begin{restatable}{lem}{totalPmovement}\label{lem:totalPmovement}
Theorem~\ref{thm:inverse_main} holds even if the input $w$ and
$\ttau$ of the algorithm depends on the output of $\textsc{Solve}$.
Furthermore, we have
\[
\E\left[\sum_{k\in[K-1]}\Big\|\sqrt{\tmw^{\alg(k+1)}}\ma\Psi^{(k+1)}\ma^{\top}\sqrt{\tmw^{\alg(k+1)}}-\sqrt{\tmw^{\alg(k)}}\ma\Psi^{(k)}\ma^{\top}\sqrt{\tmw^{\alg(k)}}\Big\|_{F}\right]\leq16K\log^{5/2}n
\]
where $\Psi^{(k)}\in\R^{d\times d}$, $\tw^{\alg(k)}$ is the output
of the $k$-th step of $\textsc{Update}(\tw^{(k)},\ttau^{(k)})$.

\end{restatable}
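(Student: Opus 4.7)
The plan is to strengthen Theorem~\ref{thm:inverse_main} in two ways: first, to allow the input $(w^{(k)},\ttau^{(k)})$ to depend on prior outputs of $\textsc{Solve}$ (though still not on outputs of $\textsc{Update}$), and second, to obtain the stated expected Frobenius bound on the total movement of the sketched projection $\sqrt{\tmw^{\alg(k)}}\ma\Psi^{(k)}\ma^{\top}\sqrt{\tmw^{\alg(k)}}$.

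For the adaptive-dependence part, I would exploit the unbiasedness guarantee $\E[\Psi b]=(\ma^{\top}\omw\ma)^{-1}b$ of $\textsc{Solve}$, which is achieved by the data structure by adding an independent mean-zero noise (drawn afresh per call) calibrated so that the noise dominates the sparsification bias. Let $\mathcal{F}_k$ be the filtration generated by all outputs of $\textsc{Update}$ and $\textsc{Solve}$ up to iteration $k$, together with all the independent per-call noise draws. Conditioned on the noise draws, the $\textsc{Solve}$ output is a deterministic function of the (oblivious) inputs plus noise, and the sparsifier randomness used by $\textsc{Update}$ is drawn independently. Thus, by coupling to an execution where the same randomness is used but inputs are generated from the conditional distribution, the sparsifier's internal randomness remains independent of the input sequence, which is exactly the condition needed for the oblivious guarantees of Theorem~\ref{thm:inverse_main} to hold. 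This is the same style of argument as the leakage-avoidance in the vector and leverage score data structures.

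For the Frobenius movement bound, note that $\Psi^{(k)-1}\approx_{\epsilon}\ma^{\top}\tmw^{\alg(k)}\ma$, so up to multiplicative error $O(\epsilon)$, the quantity $\sqrt{\tmw^{\alg(k)}}\ma\Psi^{(k)}\ma^{\top}\sqrt{\tmw^{\alg(k)}}$ is close to the orthogonal projection $\mproj_k\defeq\mproj(\sqrt{\tmw^{\alg(k)}}\ma)$. For two such projections induced by slowly-changing diagonal rescalings, a standard calculation (differentiating along the rescaling path) yields
\[
\|\mproj_{k+1}-\mproj_k\|_F\lesssim\bigl\|(\mT(w^{(k)}))^{-1}(\tau(w^{(k+1)})-\tau(w^{(k)}))\bigr\|_{\tau(w^{(k)})}+\bigl\|(\mw^{(k)})^{-1}(w^{(k+1)}-w^{(k)})\bigr\|_{\tau(w^{(k)})},
\]
plus a term reflecting the additional jumps caused by the data-structure's resampling. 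The first two terms sum to $O(K)$ in squared form by assumption~(\ref{eq:input_ass}), so by Cauchy--Schwarz the non-squared sum is $O(K)$ (with polylog factors absorbed). For the resampling contribution, I would use that $\textsc{Update}$ re-samples a row only when its leverage-score estimate has drifted by a constant factor; since the $\ttau^{(k)}$ themselves satisfy $\sum_k\|\mT^{-1}\Delta\tau\|_{\tau}^{2}=O(K)$, a charging argument bounds the expected number of resamplings over $K$ steps by $\tilde O(Kd)$, and each resampling changes $\mproj_k$ in Frobenius norm by $O(\log n/\sqrt{d})$ by standard spectral-sparsifier matrix-concentration estimates, yielding total expected movement $\tilde O(K)$ with the stated $\log^{5/2}n$ exponent after absorbing the polylog factors arising from leverage-score approximation accuracy and union bounds.

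The main obstacle, I expect, will be aligning the analysis of the lazy low-rank (Woodbury) updates used in $\textsc{Update}$ with the resampling schedule so that the aggregated Frobenius movement telescopes cleanly in expectation rather than merely in squared expectation. In particular, any step that triggers a batched low-rank refresh must be shown to move $\mproj_k$ by an amount proportional to the $\tau$-weighted change accumulated since the previous refresh, so that the total expected movement remains linear in $K$; this is where the choice of potential function in the proof of Theorem~\ref{thm:inverse_main} must be leveraged carefully, together with Jensen's inequality to convert $\ell_2$ concentration of per-step jumps into an $\ell_1$ bound on their sum.
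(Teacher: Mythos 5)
Your high-level plan for the Frobenius bound (decompose the movement into the part due to the weight change and the part due to the sparsifier resampling, use assumption~(\ref{eq:input_ass}) for the first part, and use the resampling frequency from the potential argument of Theorem~\ref{thm:inverse_main} plus Jensen for the second) is in the right spirit and matches the structure of the paper's calculation. However, there are two substantive gaps.

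For the adaptive-dependence claim, the mechanism you describe does not match the paper's and does not obviously close. You say that ``conditioned on the noise draws, the $\textsc{Solve}$ output is a deterministic function of the (oblivious) inputs plus noise'' — but conditioning on the noise makes the output a deterministic function of the maintained sparsifier $v$, which is exactly the thing we cannot let the adversary see. The paper's actual argument is a total-variation indistinguishability argument: $\textsc{SecureSolve}$ produces output $y\sim N(\Psi_{\textsc{Solve}}b,\alpha^{2}\Psi_{\textsc{Solve}}\ma^{\top}\mathbf{U}\ma\Psi_{\textsc{Solve}})$, while the $v$-independent $\textsc{IdealSolve}$ produces $y^{(\mathrm{ideal})}\sim N((\ma^{\top}\mathbf{U}\ma)^{-1}b,\alpha^{2}(\ma^{\top}\mathbf{U}\ma)^{-1})$; since $\Psi_{\textsc{Solve}}^{-1}\approx_{\poly(\epsilon/(nK))}\ma^{\top}\mathbf{U}\ma$, these two Gaussians are $\poly(\epsilon/(nK))$-close in total variation. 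The Gaussian noise is calibrated to be \emph{larger} than the preconditioner error, so the two distributions are statistically indistinguishable. Any adversary distinguishing the two would contradict this TV bound; since the data structure is provably correct when driven by $\textsc{IdealSolve}$ (which uses no information about $v$), correctness with $\textsc{SecureSolve}$ follows. This indistinguishability-by-Gaussian-noise step is the crucial ingredient you skipped.

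For the Frobenius bound, the proof does not bound the movement of the ideal orthogonal projection $\mproj_k$; it bounds the movement of the actual sketched quantity $\sqrt{\tmw^{\alg(k)}}\ma\Psi^{(k)}\ma^{\top}\sqrt{\tmw^{\alg(k)}}$ directly, where $\Psi^{(k)}=(\ma^{\top}\mv^{(k)}\ma)^{-1}$ depends on the random sparsifier $v^{(k)}$. The term coming from the change in $\Psi$ is bounded in \emph{expectation over the resampling randomness}: one writes $\Psi^{(k+1)}-\Psi^{(k)}=\int_{0}^{1}\mh_{s}^{-1}\ma^{\top}(\mv^{(k+1)}-\mv^{(k)})\ma\mh_{s}^{-1}\,ds$ and uses independence of the per-coordinate Bernoulli samples and the formula $\sum_{j}(\mproj^{(k)})_{i,j}^{2}=\sigma_{i}^{(k)}$ to show $\E[E_k^{(2)}]\leq32\sum_{i\in\mathcal{I}_k}\sigma_i^{(k)}\leq64\cdot2^{\ell_k}$, where $\mathcal{I}_k$ is the set of coordinates resampled at step $k$ and $2^{\ell_k}$ is the rank of that update. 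Your stated bound ``each resampling changes $\mproj_k$ in Frobenius norm by $O(\log n/\sqrt{d})$'' is neither the paper's computation nor correct in general: the expected squared Frobenius contribution of resampling coordinate $i$ scales with $\sigma_i^{(k)}$, not with $1/d$. The final bound is then obtained by summing $\sqrt{E_k^{(1)}}+\sqrt{E_k^{(2)}}$ over $k$ and invoking, as you anticipated, the fact (from the potential argument in Theorem~\ref{thm:inverse_main}) that a rank-$2^{\ell}$ update can occur at most once every $\Omega(2^{\ell/2}/\log n)$ steps.
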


\subsection{Linear Programming Algorithm\label{sub:main_algorithm}}

Here we show how to combine the tools from Section \ref{sub:ipm_restated}
to \ref{sub:inverse_restated} to obtain a linear program solver that
runs in $\tilde{O}(nd+d^{3})$ time. First, we give a brief summary
of our linear programming algorithm, Algorithm \ref{alg:master}.
The algorithm consists of two phases. In the first phase we construct
a good initial feasible solution, and in the second we move along
the central path towards the optimal solution. 

The construction of the initial point works as follows: via a simple
transformation (stated below as Theorem~\ref{thm:infeasible_reduction}),
we obtain a feasible solution pair $(x,s)$ where both $x$ and $s$
are close to the all $1$ vector and hence good enough as a point
close to the central path of the standard log barrier function. However,
we need to find a point such that $xs\approx_{\epsilon}\mu\cdot(\sigma(\ms^{-1/2-\alpha}\mx^{1/2-\alpha}\ma)+\frac{d}{n}\vones)$.
By picking $x=1$ and $\mu=1$, the initial slack $s$ needs to satisfy
$s\approx_{\epsilon}\sigma(\ms^{-\frac{1}{2}-\alpha}\ma)+\frac{d}{n}\vones$,
which (up to the additive $\frac{d}{n}\vones$) is exactly the condition
for $\ell_{p}$ Lewis weight with $p=\frac{1}{1+\alpha}$. Cohen and
Peng showed that such a vector $s$ can be found efficiently as long
as $p\in(0,4)$ \cite{CohenP15}. We note that such $s$ might not
satisfy $\ma y+s=c$. That is why we define $c^{(\text{tmp})}:=\ma y+s$
for $y=0$, so that $(x,s)$ is a feasible solution pair for the cost
vector $c^{(\text{tmp})}$. In the first phase of Algorithm \ref{alg:master},
we move the point $(x,s)$ along the central path of the temporary
cost vector $c^{(\text{tmp})}$ and bring the points to a location
where we can switch the cost $c^{(\text{tmp})}$ to $c$ without violating
the centrality conditions. This is how we obtain our feasible starting
point for the cost vector $c$. In the subsequent second phase, we
move along the path for the cost $c$ until it is close to the optimal
solution.

Moving along these two paths of the first and second phase is performed
via the IPM of Algorithm~\ref{alg:pathfollowing} (Section~\ref{sub:ipm_restated},
Theorem~\ref{thm:path_following}). Note that Algorithm~\ref{alg:pathfollowing}
does not specify in Line~\ref{line:pf_1} how to obtain the approximate
solution pair $(\ox,\os)$, so we must implement this step on our
own. Likewise, we must specify how to efficiently compute the steps
in Lines~\ref{line:pf_3} and~\ref{line:pf_4}. These implementations
can be found in the second part of Algorithm~\ref{alg:master}. The
high-level idea is to use the data-structures presented in Section
\ref{sub:vector_restated} to \ref{sub:inverse_restated}.

To illustrate, consider Line~\ref{line:pf_4} of Algorithm \ref{alg:pathfollowing},
which computes 
\begin{align*}
s & \leftarrow s+(1-2\alpha)\oms\omw^{-1/2}\mq\omw^{1/2}h
\end{align*}
for $\omw=\omx\oms$ and $\mq=\overline{\ms}^{-1/2}\overline{\mx}^{1/2}\ma\mh^{-1}\ma^{\top}\omx^{1/2}\oms^{-1/2}$
for $\mh\approx_{\epsilon}\ma^{\top}\oms^{-1}\omx\ma$. We split this
task into three parts: (i) compute $r:=\ma^{\top}\omx^{1/2}\oms^{-1/2}\omw^{1/2}h$,
(ii) compute $v:=\mh^{-1}r$, and (iii) compute $(1-2\alpha)\oms\omw^{-1/2}\overline{\ms}^{-1/2}\overline{\mx}^{1/2}\ma v=(1-2\alpha)\ma v$.
Part (i), the vector $r$, can be maintained efficiently, because
we maintain the approximate solutions $\ox$, $\os$ (thus also $\ow$)
and vector $h$, (by Algorithm~\ref{alg:gradient_maintenance}, Theorem~\ref{thm:gradient_maintenance},
of Section~\ref{sec:grad_maintenance}) in such a way, that per iteration
only few entries change on average. Part (ii) is solved by the inverse
maintenance data-structure of Section~\ref{sub:inverse_restated}
(Theorem~\ref{thm:inverse_main}). The last part (iii) is solved
implicitly by the data-structure of Section \ref{sub:vector_restated}
(Theorem \ref{thm:product_sum_datastructure}), which is also used
to obtain the approximate solutions $\ox,\os$ in Line \ref{line:pf_1}
of Algorithm \ref{alg:pathfollowing}. We additionally run the data-structure
of Section \ref{sub:ls_restated} (Theorem \ref{thm:leverage_score_datastructure})
in parallel, to maintain an approximation of the leverage scores,
which allows us to find the approximation $\ov$ required in Line
\ref{line:pf_2}. These modifications to Algorithm~\ref{alg:pathfollowing}
are given in the second part of Algorithm~\ref{alg:master}.

\begin{algorithm2e}[!t]

\caption{LP Algorithm (based on Algorithm \ref{alg:pathfollowing})}\label{alg:master}

\SetKwProg{Globals}{global variables}{}{}

\SetKwProg{Proc}{procedure}{}{}

\Globals{}{

\State $\ma\in\R^{n\times d}$, $\mu>0$

\State $D_{\textsc{Inverse}}$\tcp*{Inverse Maintenance data structure
(Theorem~\ref{thm:inverse_main})}

\State $D_{\textsc{MatVec}}^{(x)},D_{\textsc{MatVec}}^{(s)}$\tcp*{Matrix
Vector Maintenance data structure (Theorem~\ref{thm:product_sum_datastructure})}

\State $D_{\textsc{Leverage}}$\tcp*{Leverage Score Maintenance
data structure (Theorem~\ref{thm:leverage_score_datastructure})}

\State $D_{\text{\textsc{Gradient}}}$ \tcp*{Gradient Maintenance
data structure (Theorem~\ref{thm:gradient_maintenance})}

\State $\overline{\tau}\in\Rn$\tcp*{$\overline{\tau}_{\gamma/8}\approx(\sigma(\ms^{-1/2-\alpha}\mx^{1/2-\alpha}\ma)+\frac{d}{n}\vones)$}

\vspace{0.2 in}

\LineComment{The following parameters are the same as in Theorem
\ref{thm:path_following}}

\State $\alpha\defeq1/(4\log(4n/d))$\tcp*{Maintain $\mx s\approx_{\epsilon}\mu\cdot(\sigma(\ms^{-1/2-\alpha}\mx^{1/2-\alpha}\ma)+\frac{d}{n}\vones)$}

\State $\epsilon\defeq\frac{\alpha}{16000}$\tcp*{Distance to the
central path}

\State $\lambda\defeq\frac{2}{\epsilon}\log(\frac{2^{16}n\sqrt{d}}{\alpha^{2}}),\gamma\defeq\min(\epsilon/4,\frac{\alpha}{50\lambda})$\tcp*{Potential
parameter $\lambda$ and step size $\gamma$}

}

\vspace{0.4 in}

\Proc{$\textsc{Solve}(\ma\in\R^{n\times d},b\in\Rn,c\in\R^{d},\delta>0)$}{

\LineComment{\textbf{Initialize data structures and initial primal
dual points $(x,s)$}}

\State Modify the LP and obtain an initial $x$, $y$ and $s$ by
Lemma~\ref{thm:infeasible_reduction} to accuracy $\delta/8n^{2}$

\vspace{0.2 in}

\LineComment{ For notational simplicity, we use $\ma,b,c,n,d$ for
the modified LP induced by Lemma~\ref{thm:infeasible_reduction}
in the remainder of the code.}

\State Use Theorem \ref{thm:Lewis_weight_compute} to compute $s$
such that $s\approx_{\epsilon}\sigma(\ms^{-1/2-\alpha}\ma)+\frac{d}{n}\vones$\label{line:find_lewis}

\State $\tau\leftarrow s,\mu\leftarrow1$\tcp*{Since $x=1$ (Lemma
\ref{thm:infeasible_reduction}), $\mx s\approx_{\epsilon}\mu\tau$}

\State $D_{\textsc{Inverse}}.\textsc{Initialize}(\ma,\ms^{-1-2\alpha}x^{1-2\alpha},\tau,\gamma/512\log n)$\label{line:d_inverse_init}

\State $D_{\textsc{Leverage}}.\textsc{Initialize}(\ma,\ms^{-1-2\alpha}x^{1-2\alpha},\gamma/8)$\label{line:sigma_init}

\State $D_{\textsc{MatVec}}^{(x)}.\textsc{Initialize}(\ma,\ms^{-1}x,x,\gamma/8)$,
$D_{\textsc{MatVec}}^{(s)}.\textsc{Initialize}(\ma,1,s,\gamma/8)$

\State $D_{\textsc{Gradient}}.\textsc{Initialize}(\ma,\mu\tau/(xs),\tau,x,\gamma)$
\tcp*{Theorem \ref{thm:gradient_maintenance}}

\vspace{0.2in}

\LineComment{\textbf{Find the ``center'' of the linear program
with $(\ma,b,\ma y+s)$}}

\State $c^{\text{(tmp)}}\leftarrow s$\tcp*{This is same as $c^{\text{(tmp)}}=\ma y+s$
for $y=0$}

\State $x,s,\overline{\tau},\mu\leftarrow\textsc{Centering}(x,s,\tau,\mu,\Theta(n^{2}\sqrt{d}/(\gamma\alpha^{2}))$\tcp*{This
keeps $\ma y+s$ unchanged.}\label{line:first_centering}

\vspace{0.2in}

\LineComment{\textbf{Switch central path from }$c^{\text{(tmp)}}$
to $c$}

\State $s^{\new}\leftarrow s+c-c^{\text{(tmp)}}$ \label{line:switch_c}\tcp*{This
is the same as $s\leftarrow c-(\ma y+s)$ for current $y$}

\State $D_{\textsc{MatVec}}^{(s)}.\textsc{Initialize}(\ma,1,s^{\new},\gamma/8)$

\vspace{0.2in}

\LineComment{\textbf{Solve the linear program with $(\ma,b,c)$}}

\State $x,s,\overline{\tau},\mu\leftarrow\textsc{Centering}(x,s,\overline{\tau},\mu,\delta^{2}/(8^{3}n^{4}d))$\label{line:second_centering}

\vspace{0.2in}

\LineComment{\textbf{Output the solution}}

\State Decrease $\frac{1}{\mu}\|\ma^{\top}x-b\|_{(\ma^{\top}\mx\ms^{-1}\ma)^{-1}}^{2}$
to small enough $O(\delta/n^{2})$ via Corollary~\ref{cor:tiny_phi_b}
\label{line:make_near_feasible}

\State Return an approximate solution of the original linear program
according to Lemma \ref{thm:infeasible_reduction}

}

\algstore{masteralg}

\end{algorithm2e}

\begin{algorithm2e}[!t]

\caption{LP Algorithm (based on Algorithm \ref{alg:pathfollowing}),
Continuation of Algorithm \ref{alg:master}.}\label{alg:master_2}

\algrestore{masteralg}

\SetKwProg{Proc}{procedure}{}{}

\Proc{$\text{\textsc{Centering}}$$(x^{\textrm{\ensuremath{\mathrm{(init)}}}}\in\R_{>0}^{n},s^{\textrm{\ensuremath{\mathrm{(init)}}}}\in\R_{>0}^{n},\tau^{\textrm{\ensuremath{\mathrm{(init)}}}}>0,\mu^{\textrm{\ensuremath{\mathrm{(init)}}}}>0,\mu^{\textrm{\ensuremath{\mathrm{(target)}}}}>0)$}{

\State $\mu\leftarrow\mu^{\textrm{(init)}}$, $\bar{\tau}\leftarrow\tau^{\textrm{(init)}}$,
$\tau^{\textrm{(tmp)}}\leftarrow\tau^{\textrm{(init)}}$, $\bar{x}\leftarrow x^{\textrm{(init)}}$,
$x^{\textrm{(tmp)}}\leftarrow x^{\textrm{(init)}}$, $\bar{s}\leftarrow s^{\textrm{(init)}}$,
$s^{\textrm{(tmp)}}\leftarrow s^{\textrm{(init)}}$

\While{$\textsc{True}$}{

\State $\ox_{i}=x_{i}^{\text{(tmp)}}$ for all $i$ such that $\ox_{i}\not\approx_{\gamma/8}x_{i}^{\text{(tmp)}}$\label{line:xbar_update}
\tcp*{Ensure $\ox\approx_{\gamma/4}x$}

\State $\os_{i}=s_{i}^{\text{(tmp)}}$ for all $i$ such that $\os_{i}\not\approx_{\gamma/8}s_{i}^{\text{(tmp)}}$\label{line:sbar_update}
\tcp*{Ensure $\os\approx_{\gamma/4}s$}

\State $\overline{\tau}_{i}=\tau_{i}^{\text{(tmp)}}$ for all $i$
such that $\overline{\tau}_{i}\not\approx_{\gamma/8}\tau_{i}^{\text{(tmp)}}$\label{line:tau_update}

\State $\ow\leftarrow\omx\os$, $\ov\leftarrow\mu\cdot\omw^{-1}\overline{\tau}$
\tcp*{Ensure $\|\ov-v\|_{\infty}\leq\gamma$ where $v=\mu\cdot\tau(x,s)/w$}

\vspace{0.2 in}

\LineComment{Check termination conditions}

\State Let $\Phi(v)\defeq\exp(\lambda(v-1))+\exp(-\lambda(v-1))$
for all $v\in\R^{n}$

\lIf{$\mu=\mu^{\textrm{\ensuremath{\mathrm{(target)}}}}$ and $\Phi(\ov)\leq\frac{2^{16}n\sqrt{d}}{\alpha^{2}}$}{\State \textbf{break}}

\vspace{0.2 in}

\LineComment{Update $h=\gamma\nabla\Phi(\overline{v})^{\flat}$,
$r=\ma^{\top}\omx h$ (See Lemma~\ref{lem:smoothing:helper} for
$(\cdot){}^{\flat}$ definition)}

\State $D_{\text{\textsc{Gradient}}}.\textsc{Update}(i,\ov_{i},\overline{\tau}_{i},\ox_{i})$
for $i$ where $\ov_{i}$, $\overline{\tau}_{i}$ or $\ox_{i}$ changed

\State $h,r\leftarrow D_{\textsc{Gradient}}.\textsc{Query()}$

\vspace{0.2 in}

\LineComment{Pick any $\mh\in\R^{d\times d}$ with $\mh\approx_{(c\epsilon)/(d^{1/4}\log^{3}n)}\ma^{\top}\oms^{-1}\omx\ma$
}

\LineComment{Let $\mq=\overline{\ms}^{-1/2}\overline{\mx}^{1/2}\ma\mh^{-1}\ma^{\top}\omx^{1/2}\oms^{-1/2}$,
$\omw=\omx\oms$}

\LineComment{Update the inverse, $\Psi^{(\alpha)},\Psi\safe^{(\alpha)}\approx_{\epsilon}(\ma^{\top}\oms^{-1-2\alpha}\omx^{1-2\alpha}\ma)^{-1}$
and $\Psi\safe=\mh^{-1}\approx_{c\epsilon/(d^{1/4}\log^{3}(n))}(\ma^{\top}\oms^{-1}\omx\ma)^{-1}$}

\State $\Psi^{(\alpha)},g\leftarrow D_{\textsc{Inverse}}.\textsc{Update}(\oms^{-1-2\alpha}\ox^{1-2\alpha},\overline{\tau})$\Comment{Update
the approximate inverse}

\State $D_{\textsc{Leverage}}.\textsc{Update}(\sqrt{g})$ \tcp*{Only
scale coordinates where $g$ changed.}

\tcp{$\Psi\safe^{(\alpha)},\Psi\safe$ are implicit representations
of calls to $D_{\textsc{Inverse}}.\textsc{Solve}$}

\State $\Psi\safe^{(\alpha)}(b)\defeq D_{\textsc{Inverse}}.\textsc{Solve}(b,\oms^{-1-2\alpha}\ox^{1-2\alpha},\gamma/512\log n)$\label{line:inverse_alpha_safe}

\State $\Psi\safe(b)\defeq D_{\textsc{Inverse}}.\textsc{Solve}(b,\oms^{-1}\ox,(c\epsilon)/(d^{1/4}\log^{3}n))$\label{line:inverse_safe}

\vspace{0.2 in}

\LineComment{\textbf{$x\leftarrow x+(1+2\alpha)\omx\omw^{-1/2}(\mi-\mq)\omw^{1/2}h$},
$s\leftarrow s+(1-2\alpha)\oms\omw^{-1/2}\mq\omw^{1/2}h$}

\State $D_{\textsc{MatVec}}^{(x)}.\textsc{Scale}(\ox/\os)$ \tcp*{Only
scale coordinates where $\ox$ or $\os$ changed.}

\State $x^{\text{(tmp)}}\leftarrow D_{\textsc{MatVec}}^{(x)}.\textsc{Query}((1+2\alpha)\Psi\safe r,(1+2\alpha)\omx h)$\label{line:xtmp_formula}

\State $s^{\text{(tmp)}}\leftarrow D_{\textsc{MatVec}}^{(s)}.\textsc{Query}((1-2\alpha)\Psi\safe r,0_{n})$\label{line:stmp_formula}

\State $\tau^{\text{(tmp)}}\leftarrow D_{\textsc{Leverage}}.\textsc{Query}(\Psi^{(\alpha)},\Psi\safe^{(\alpha)})$\label{line:sigma_tmp}

\State $\delta_{\lambda}\leftarrow\textsc{MaintainFeasibility}(D_{\textsc{MatVec}}^{(x)},D_{\textsc{MatVec}}^{(s)},\mu)$
\tcp*{Algorithm \ref{alg:maintain_INfeasibility}}

\State $x^{\text{(tmp)}}\leftarrow D_{\textsc{MatVec}}^{(x)}.\textsc{Query}(\delta_{\lambda},0)$

\lIf{$\mu>\mu^{\mathrm{(target)}}$}{\State $\mu\leftarrow\max(\mu^{\textrm{(target)}},(1-\frac{\gamma\alpha}{2^{15}\sqrt{d}})\mu)$}

\lElseIf{ $\mu<\mu^{\textrm{\ensuremath{\mathrm{(target)}}}}$ }{\State $\mu\leftarrow\min(\mu^{\textrm{(target)}},(1+\frac{\gamma\alpha}{2^{15}\sqrt{d}})\mu)$}

}

\State $x\leftarrow D_{\textsc{MatVec}}^{(x)}.\textsc{ComputeExact}(1,...,n)$,
$s\leftarrow D_{\textsc{MatVec}}^{(s)}.\textsc{ComputeExact}(1,...,n)$

\State \Return $(x,s,\tau^{\text{(tmp)}},\mu)$

}

\end{algorithm2e}

The following theorem shows how to reduce solving any bounded linear
program to solving a linear program with a non-degenerate constrain
matrix and an explicit initial primal and dual interior point. This
theorem is proven in Appendix~\ref{sec:initialPoint}.

\begin{restatable}[Initial Point]{thm}{initialPoint}\label{thm:infeasible_reduction}
Consider linear program $\min_{\ma^{\top}x=b,x\geq0}c^{\top}x$ with
$n$ variables and $d$ constraints. Assume that \\
 1. Diameter of the polytope : For any $x\geq0$ with $\ma^{\top}x=b$,
we have that $\|x\|_{2}\leq R$.\\
 2. Lipschitz constant of the linear program : $\|c\|_{2}\leq L$.\\
 3. The constraint matrix $\ma$ is non-degenerate.

For any $\delta\in(0,1]$, the modified linear program $\min_{\overline{\ma}^{\top}\overline{x}=\overline{b},\overline{x}\geq0}\overline{c}^{\top}\overline{x}$
with
\begin{align*}
\overline{\ma} & =\left[\begin{array}{cc}
\ma & 1_{n}\|\ma\|_{F}\\
0 & 1\|\ma\|_{F}\\
\frac{1}{R}b^{\top}-1_{n}^{\top}\ma & 0
\end{array}\right]\in\R^{(n+2)\times(d+1)},\\
\overline{b} & =\begin{bmatrix}\frac{1}{R}b\\
(n+1)\|\ma\|_{F}
\end{bmatrix}\in\R^{d+1}\text{~and,~}\overline{c}=\begin{bmatrix}\frac{\delta}{L}\cdot c\\
0\\
1
\end{bmatrix}\in\R^{n+2}
\end{align*}
satisfies the following:\\
 1. $\overline{x}=\begin{bmatrix}1_{n}\\
1\\
1
\end{bmatrix}$, $\overline{y}=\begin{bmatrix}0_{d}\\
-1
\end{bmatrix}$ and $\overline{s}=\begin{bmatrix}1_{n}+\frac{\delta}{L}\cdot c\\
1\\
1
\end{bmatrix}$ are feasible primal dual vectors.\\
2. Let $(\overline{x},\overline{y},\overline{s})$ be primal dual
vectors of the modified LP and $\Phi_{b}\defeq\frac{1}{\mu}\cdot\|\overline{\ma}^{\top}\ox-\overline{b}\|_{(\overline{\ma}^{\top}\omx\oms^{-1}\overline{\ma})^{-1}}^{2}$,
then
\begin{align*}
\|\ox\|_{\infty} & \le(1+O(\Phi_{b}))\cdot O(n).
\end{align*}
 3. Let $(\overline{x},\overline{y},\overline{s})$ be primal dual
vectors of the modified LP with $\ox\cdot\os\approx_{0.5}\mu\cdot\tau(\ox,\os)$
for $\mu<\delta^{2}/(8d)$ and small enough $\Phi_{b}:=\frac{1}{\mu}\cdot\|\overline{\ma}^{\top}\ox-\overline{b}\|_{(\overline{\ma}^{\top}\omx\oms^{-1}\overline{\ma})^{-1}}^{2}=O(1)$
(i.e. $\ox$ does not have to be feasible). The vector $\widehat{x}\defeq R\cdot\overline{x}_{1:n}$
where $\overline{x}_{1:n}$ is the first $n$ coordinates of $\ox$
is an approximate solution to the original linear program in the following
sense 
\begin{align*}
c^{\top}\widehat{x}\leq & ~\min_{\ma^{\top}x=b,x\geq0}c^{\top}x+O(nLR)\cdot(\sqrt{\Phi_{b}}+\delta),\\
\|\ma^{\top}\widehat{x}-b\|_{2}\leq & ~O(n^{2})\cdot(\|\ma\|_{F}R+\|b\|_{2})\cdot(\sqrt{\Phi_{b}}+\delta)\Big),\\
\widehat{x} & \geq0.
\end{align*}

\end{restatable}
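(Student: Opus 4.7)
My plan is to prove the three claims in sequence. The first is a direct substitution; the second is a structural computation together with bookkeeping of the approximate feasibility quantified by $\Phi_b$; the third combines weak duality of the modified LP with the centrality condition $\bar x\bar s\approx_{0.5}\mu\tau(\bar x,\bar s)$.

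\textbf{Claim 1 (initial feasibility).} I would substitute the proposed $\bar x,\bar y,\bar s$ into the definitions and verify $\bar{\ma}^{\top}\bar x=\bar b$, $\bar{\ma}\bar y+\bar s=\bar c$, and $\bar x,\bar s\ge 0$ using the block structure of $\bar{\ma}$. For $\bar x=(1_n,1,1)$ the first $d$ entries of $\bar{\ma}^{\top}\bar x$ telescope as $\ma^{\top}1_n+(b/R-\ma^{\top}1_n)\cdot 1=b/R$, and the last entry evaluates to $\|\ma\|_F(n+1)$, matching $\bar b$. The dual check is analogous by computing $\bar{\ma}\bar y$ row by row.

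\textbf{Claim 2 (infinity-norm bound).} The $(d+1)$-th constraint in $\bar{\ma}^{\top}\bar x=\bar b$ reads $\|\ma\|_F\bigl(\sum_{i=1}^n\bar x_i+\bar x_{n+1}\bigr)=(n+1)\|\ma\|_F$, so exact feasibility and $\bar x\ge 0$ immediately yield $\|\bar x_{1:n+1}\|_\infty\le n+1$; the last coordinate $\bar x_{n+2}$ is handled separately via $\bar x_{n+2}\bar s_{n+2}\approx_{0.5}\mu\tau_{n+2}$. For the approximately feasible case, I would use that $\Phi_b=\mu^{-1}\|\bar{\ma}^{\top}\bar x-\bar b\|^{2}_{(\bar{\ma}^{\top}\bar{\mx}\bar{\ms}^{-1}\bar{\ma})^{-1}}$ upper bounds the deviation of $(\bar{\ma}^{\top}\bar x)_{d+1}$ from $(n+1)\|\ma\|_F$ after testing against the $(d+1)$-th basis vector in the induced norm, producing $\|\bar x\|_1\le(n+1)(1+O(\sqrt{\Phi_b}))$ and thus the stated $\|\bar x\|_\infty\le(1+O(\Phi_b))\cdot O(n)$.

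\textbf{Claim 3 (approximate solution).} I would introduce the comparison point $\bar x^{\star}\defeq(x^{\star}/R,\;n+1-\|x^{\star}\|_1/R,\;0)$ where $x^{\star}$ is an optimum of the original LP. Nonnegativity of $\bar x^{\star}_{n+1}$ follows from $\|x^{\star}\|_1\le\sqrt n\|x^{\star}\|_2\le\sqrt n\,R\le(n+1)R$, and direct substitution shows $\bar x^{\star}$ is feasible for the modified LP with $\bar c^{\top}\bar x^{\star}=(\delta/(LR))\,c^{\top}x^{\star}$. Since $(\bar y,\bar s)$ is dual feasible, weak duality gives $\bar b^{\top}\bar y\le\bar c^{\top}\bar x^{\star}$, and the primal-dual identity yields
\[
\bar c^{\top}\bar x \;\le\; \bar c^{\top}\bar x^{\star}+\bar s^{\top}\bar x+\bigl|(\bar b-\bar{\ma}^{\top}\bar x)^{\top}\bar y\bigr|.
\]
Centrality gives $\bar s^{\top}\bar x\le e^{0.5}\mu\sum_i\tau_i=O(\mu d)$, and the Cauchy--Schwarz-type bound $|(\bar b-\bar{\ma}^{\top}\bar x)^{\top}\bar y|\le\|\bar b-\bar{\ma}^{\top}\bar x\|_{(\bar{\ma}^{\top}\bar{\mx}\bar{\ms}^{-1}\bar{\ma})^{-1}}\cdot\|\bar y\|_{\bar{\ma}^{\top}\bar{\mx}\bar{\ms}^{-1}\bar{\ma}}$ together with $\mu<\delta^{2}/(8d)$ converts this, after multiplying by $LR/\delta$, into the objective bound $c^{\top}\hat x\le\min c^{\top}x+O(nLR)(\sqrt{\Phi_b}+\delta)$. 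For the feasibility bound, the first $d$ rows of $\bar{\ma}^{\top}\bar x=\bar b+(\bar b-\bar{\ma}^{\top}\bar x)$ give $\ma^{\top}\bar x_{1:n}=b/R-(b/R-\ma^{\top}1_n)\bar x_{n+2}+\text{err}$, so multiplying by $R$ produces
\[
\|\ma^{\top}\hat x-b\|_2 \;\le\; (\|b\|_2+R\|\ma\|_F\sqrt n)\,\bar x_{n+2}+R\|(\bar b-\bar{\ma}^{\top}\bar x)_{1:d}\|_2,
\]
and the first term is controlled by $\bar x_{n+2}\le O(\mu)$, the second by $O(\sqrt{\mu\Phi_b})$ times a spectral factor.

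\textbf{Main obstacle.} The bookkeeping in Claim~3 is the real work: in particular, establishing that $\bar x_{n+2}=O(\mu)$ requires either controlling $\bar s_{n+2}$ along the whole path (using that $\bar c_{n+2}=1$ and the last row of $\bar{\ma}$ couples only to $\bar y_{1:d}$ with a bounded inner product) or, cleaner, exploiting that $\bar c_{n+2}=1$ makes $\bar x_{n+2}$ a standalone summand of $\bar c^{\top}\bar x$, so the duality-gap bound above forces $\bar x_{n+2}\le O(\mu d)$ directly without tracking the dual iterate. Once this is in place, consistently carrying the $\|\ma\|_F$, $R$, $L$, $\delta$ factors through the two error inequalities yields the stated bounds.
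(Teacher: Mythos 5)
Your approach is genuinely different from the paper's. The paper first projects the (possibly infeasible) point $\ox$ onto the affine subspace $\overline{\ma}^{\top}\ox=\overline{b}$ to obtain a feasible $\ox'$, bounds the multiplicative change $\|\omx^{-1}(\ox'-\ox)\|_{\tau+\infty}=O(\sqrt{\Phi_b})$ via Lemma~\ref{lem:move_x_correction}, then applies the already-established feasible-point reduction (Lemma~\ref{lem:original_reduction}) to $\ox'$ and transfers the conclusion back to $\ox$. You instead try to reason about $\ox$ directly, controlling infeasibility by testing the $\Phi_b$-norm against specific vectors. That shortcut has concrete problems.

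For Claim~2, testing $\overline{\ma}^{\top}\ox-\overline{b}$ against the basis vector $e_{d+1}$ in the $(\overline{\ma}^{\top}\omx\oms^{-1}\overline{\ma})^{-1}$-norm only controls $(\overline{\ma}^{\top}\ox-\overline{b})_{d+1}$ up to the factor $\|e_{d+1}\|_{\overline{\ma}^{\top}\omx\oms^{-1}\overline{\ma}}=\|\ma\|_{F}\bigl(\sum_{i\le n+1}\ox_i/\os_i\bigr)^{1/2}$. Under centrality $\ox_i\os_i\approx\mu\tau_i$ and $\tau_i\ge d/n$, so $\ox_i/\os_i\approx\ox_i^{2}/(\mu\tau_i)$ can be as large as $\Theta(n^3/(\mu d))$ per coordinate; the sum is $\Theta(n^4/(\mu d))$ (or $\Theta(n^3/(\mu d))$ with more care). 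Plugging back, the resulting bound on $\|\ox\|_1-(n+1)$ carries an extra factor of $\Theta(n/\sqrt{d})$ or worse, which is not $O(n\sqrt{\Phi_b})$ when $n\gg d$ — precisely the regime the paper targets. You also leave the last coordinate $\ox_{n+2}$ to ``centrality'' without lower-bounding $\os_{n+2}$; the paper's argument for $\theta=\ox'_{n+2}$ uses the \emph{exact} $d$ linear constraints (which fix $\theta$ as a quotient) together with a preprocessing step that enforces $\|\ma^{\top}\vones_n-b/R\|_\infty\ge\tfrac12(\|\ma\|_F+\|b\|_2/R)$, and that machinery has no analogue in your sketch.

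The same norm blow-up hits Claim~3. Your bound $|(\overline{b}-\overline{\ma}^{\top}\ox)^{\top}\bar y|\le\sqrt{\mu\Phi_b}\cdot\|\bar y\|_{\overline{\ma}^{\top}\omx\oms^{-1}\overline{\ma}}$ requires controlling $\|\bar y\|_{\overline{\ma}^{\top}\omx\oms^{-1}\overline{\ma}}^2=\sum_i(\bar c_i-\os_i)^2\ox_i/\os_i$, which again contains the $\ox_i/\os_i\approx\ox_i^2/(\mu\tau_i)$ terms of size $\Theta(n^3/(\mu d))$; after the $LR/\delta$ rescaling the error becomes $\Theta(n^2 LR\sqrt{\Phi_b}/(\delta\sqrt{d}))$, which exceeds the stated $O(nLR\sqrt{\Phi_b})$ by roughly $n/(\delta\sqrt{d})$. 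The paper sidesteps all of this by first contracting the infeasibility: the projection lemma bounds the movement in the mixed $\tau{+}\infty$ norm, where both the $\ell_\infty$ and $\ell_\tau$ components are $O(\sqrt{\Phi_b})$ with no $n/d$ blow-up, and then the entire comparison to the original LP is delegated to Lemma~\ref{lem:original_reduction} applied to the feasible $\ox'$. To make your direct-duality route work you would need to rederive the analogue of that projection lemma and argue through the same mixed norm; absent that, the test-vector bounds you invoke are too weak by polynomial factors in $n/d$.
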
 As outlined before, the initial points given in
Theorem \ref{thm:infeasible_reduction} do not satisfy

\[
xs\approx_{\epsilon}\mu\cdot(\sigma(\ms^{-1/2-\alpha}\mx^{1/2-\alpha}\ma)+\frac{d}{n}\vones).
\]
To satisfy this condition we pick $x=\vones$ and $\mu=\vones$, and
pick the initial slack vector $s$ a to satisfy $s\approx_{\epsilon}\sigma(\ms^{-\frac{1}{2}-\alpha}\ma)+\frac{d}{n}\vones$,
which is exactly the condition for $\ell_{p}$ Lewis weight with $p=\frac{1}{1+\alpha}$.
The following theorem shows that such a vector $s$ can be found efficiently
as long as $p\in(0,4)$. This vector $s$ might not be a valid slack
vector, so as outlined before, the algorithm runs in two phases: first
using a cost vector $c^{(tmp)}$ for which the initial $s$ is feasible,
and then switching to the correct $c$.
\begin{thm}[\cite{CohenP15}]
\label{thm:Lewis_weight_compute} Given $p\in(0,4)$, $\eta>0$ and
non-degenerate $\ma\in\R^{n\times d}$ w.h.p. in $n$, we can compute
$w\in\Rn_{>0}$ with $w\approx_{\epsilon}\sigma(\mw^{\frac{1}{2}-\frac{1}{p}}\ma)+\eta\vones$
in $\otilde((\nnz(\ma)+d^{\omega})\poly(1/\epsilon))$ time.
\end{thm}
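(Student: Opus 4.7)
The plan is to follow the Cohen--Peng iteration for $\ell_p$-Lewis weights and then argue that the added regularization $\eta \vones$ does not break any step. Specifically, define the map $T:\R_{>0}^n \to \R_{>0}^n$ by $T(w)_i := \sigma_i(\mw^{1/2-1/p}\ma) + \eta$, so that the target vector is exactly a fixed point of $T$ up to multiplicative error. The iteration is $w^{(k+1)} \leftarrow \widetilde{T}(w^{(k)})$, where $\widetilde{T}$ denotes a high-probability approximation of $T$ that one can implement efficiently; I would start with, say, $w^{(0)} = \frac{d}{n} \vones$.

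For the convergence analysis, I would invoke the Cohen--Peng contraction: viewing $T$ as a map on $\log w$, the factor $w_i^{1-2/p}$ contributes $|1 - 2/p|$ in the $\ell_\infty$ sense while the leverage score term is monotone and contractive. For $p \in (0,4)$ this combination is a contraction in a suitable potential (and the extra $\eta$ term only helps, since it shifts $T(w)$ away from zero and hence bounds the log-derivative of the second term). Therefore $O(\log(1/\epsilon))$ iterations suffice to reach multiplicative accuracy $\epsilon$; the regularization is preserved at every step since we add $\eta\vones$ after each leverage score evaluation.

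For the per-iteration cost I would implement $\widetilde{T}$ using Johnson--Lindenstrauss leverage-score sketching: writing $\sigma(\mb) = \diag(\mb (\mb^\top \mb)^{-1} \mb^\top)$ and letting $\mj \in \R^{n \times k}$ with $k = O(\epsilon^{-2}\log n)$ be a JL matrix, one has $\sigma_i(\mb) \approx_\epsilon \|e_i^\top \mb (\mb^\top\mb)^{-1} \mb^\top \mj\|_2^2$ with high probability. Plugging in $\mb = \mw^{(k),\, 1/2-1/p}\ma$, the inner quantity reduces to $k$ linear system solves in $\ma^\top \mw^{(k),\, 1-2/p}\ma \in \R^{d\times d}$; pre-factorizing that matrix once takes $O(d^\omega)$ time and each solve/sketch costs $\otilde(\nnz(\ma) + d^2)$, for a total per-iteration cost of $\otilde((\nnz(\ma) + d^\omega)\poly(1/\epsilon))$. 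Multiplying by the $\otilde(\log(1/\epsilon))$ iteration count yields the claimed runtime.

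The main obstacle I expect is handling the accumulation of multiplicative JL error through the Cohen--Peng contraction: a naive chaining loses powers of $(1-|1-2/p|)^{-1}$ in the accuracy, which becomes delicate as $p \to 4$. I would address this in the standard way by tightening the per-iteration JL accuracy to $\epsilon \cdot (1 - |1 - 2/p|)$ and using that $\widetilde{T}$ is a contraction on a shifted ball around the true fixed point, so errors do not compound. The regularizer $+\eta\vones$ only adds a trivial $O(n)$ additive cost per iteration and plays no role in the convergence argument beyond keeping iterates bounded away from $0$, which simplifies rather than complicates the analysis.
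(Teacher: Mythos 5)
The crucial error is the choice of iteration map. You iterate $T(w) := \sigma(\mw^{1/2-1/p}\ma) + \eta\vones$ and claim it contracts in $\log$-space with constant governed by $|1-2/p|$, but it does not. Write $\sigma_i(\mw^{1/2-1/p}\ma) = w_i^{1-2/p}\,a_i^\top(\ma^\top\mw^{1-2/p}\ma)^{-1}a_i$. A multiplicative perturbation $v\approx_\alpha w$ moves the leading factor $w_i^{1-2/p}$ by $e^{\pm|1-2/p|\alpha}$ \emph{and independently} moves the quadratic form in $(\ma^\top\mw^{1-2/p}\ma)^{-1}$ by another $e^{\pm|1-2/p|\alpha}$ (since $\ma^\top\mv^{1-2/p}\ma \approx_{|1-2/p|\alpha}\ma^\top\mw^{1-2/p}\ma$ spectrally). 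These two effects align in the worst case (raise $w_i$ and lower all $w_j$ for $j\ne i$), so the best guarantee is $T(v)\approx_{2|1-2/p|\alpha}T(w)$, which is a contraction only when $2|1-2/p|<1$, i.e.\ $p\in(4/3,4)$. The paper invokes this theorem with $p = 1/(1+\alpha) < 1$, squarely in the failing range, where your iteration is actually an expansion. And ``$\eta$ only helps'' does not rescue this: the statement must hold for every $\eta>0$, and as $\eta\to 0$ the additive constant gives no damping. (Also, the quadratic-form term is monotone in $w$ but its $\log$-Lipschitz constant is exactly $1$, not less, so it contributes no slack.)

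The fix is the Cohen--Peng change of variables, which is also what the paper's proof uses: iterate
\[
T(w)_i := \Bigl(a_i^\top\bigl(\ma^\top\mw^{1-2/p}\ma\bigr)^{-1}a_i + \eta\,w_i^{2/p-1}\Bigr)^{p/2},
\]
which is entrywise equal to $\bigl(\mw^{2/p-1}(\sigma(\mw^{1/2-1/p}\ma) + \eta\vones)\bigr)^{p/2}$ and has the same fixed point as your map. Now if $v\approx_\alpha w$, both the quadratic form and $\eta\,w_i^{2/p-1}$ move by at most $e^{\pm|1-2/p|\alpha}$, so their sum does too, and the outer $p/2$ power yields $T(v)\approx_{|p/2-1|\alpha}T(w)$ with $|p/2-1|<1$ uniformly on $(0,4)$. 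Note the regularizer must enter as $\eta\,w_i^{2/p-1}$ inside the power, not as a plain $+\eta$ outside; that is precisely what lets it ride along under the contraction. The rest of your proposal --- JL-based approximate leverage scores, $O(\log(\eta^{-1}/\epsilon))$ iterations, $\otilde(\nnz(\ma)+d^\omega)$ per iteration --- carries over unchanged to this map, and starting at $w^{(0)}=\eta\vones$ rather than $\frac{d}{n}\vones$ gives the clean initialization bound $T(w^{(0)})\approx_{p\eta^{-1}/2}w^{(0)}$ and keeps every iterate bounded below by $\eta\vones$.
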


\begin{proof}
Our proof is similar to \cite{CohenP15}, which proved a variant when
$\eta=\vzero$ Consider the map $T(w)\defeq(\mw^{\frac{2}{p}-1}(\sigma(\mw^{\frac{1}{2}-\frac{1}{p}}\ma)+\eta\vones))^{p/2}$.
Further, fix any positive vectors $v,w\in\Rn$ such that $v\approx_{\alpha}w$.
We have that $\ma^{\top}\mv^{1-\frac{1}{p}}\ma\approx_{|1-\frac{2}{p}|\alpha}\ma^{\top}\mw^{1-\frac{2}{p}}\ma$
and hence
\begin{equation}
a_{i}^{\top}(\ma^{\top}\mv^{1-\frac{2}{p}}\ma)^{-1}a_{i}\approx_{|1-\frac{2}{p}|\alpha}a_{i}^{\top}(\ma^{\top}\mw^{1-\frac{2}{p}}\ma)^{-1}a_{i}.\label{eq:lewis_apx_1}
\end{equation}
Note that
\begin{align*}
T(v)_{i}^{2/p} & =\frac{v_{i}^{1-\frac{2}{p}}a_{i}^{\top}(\ma^{\top}\mv^{1-\frac{2}{p}}\ma)^{-1}a_{i}+\eta}{v_{i}^{1-\frac{2}{p}}}=a_{i}^{\top}(\ma^{\top}\mv^{1-\frac{2}{p}}\ma)^{-1}a_{i}+\eta v_{i}^{\frac{2}{p}-1}\,.
\end{align*}
Using (\ref{eq:lewis_apx_1}), we have
\begin{align*}
T(v)_{i}^{2/p} & \leq e^{|1-\frac{2}{p}|\alpha}a_{i}^{\top}(\ma^{\top}\mw^{1-\frac{2}{p}}\ma)^{-1}a_{i}+\eta v_{i}^{\frac{2}{p}-1}\\
 & \leq e^{|1-\frac{2}{p}|\alpha}(a_{i}^{\top}(\ma^{\top}\mw^{1-\frac{2}{p}}\ma)^{-1}a_{i}+\eta w_{i}^{\frac{2}{p}-1})=e^{|1-\frac{2}{p}|\alpha}T(w)_{i}^{2/p}\,.
\end{align*}
Similarly, we have $T(v)_{i}^{2/p}\geq e^{-|1-\frac{2}{p}|\alpha}T(w)_{i}^{2/p}$.
Taking $p/2$ power of both sides, we have
\[
e^{-|\frac{p}{2}-1|\alpha}T(w)_{i}\leq T(v)_{i}\leq e^{|\frac{p}{2}-1|\alpha}T(w)_{i}.
\]
Hence, $T(v)\approx_{|p/2-1|\alpha}T(w)$.

Consequently, let $w_{0}=\eta\vones$ and consider the algorithm $w_{k+1}=T(w_{k})$.
Since $\eta>0$ we have that 
\[
w_{0}=w_{0}\eta^{-p/2}\eta^{p/2}\leq T(w_{0})\leq w_{0}\eta^{-p/2}\left(1+\eta\right)^{p/2}\leq w_{0}\exp(p\eta^{-1}/2)\,.
\]
Consequently, $T(w_{0})\approx_{p\eta^{-1}/2}w_{0}$ and after $k$
steps we have that $T(w_{k})\approx_{\exp(|p/2-1|^{k}p\eta^{-1}/2)}w_{k}$.
Since for $p\in(0,4)$, we have that $|p/2-1|<1$ we see that after
$O(\log(\eta^{-1}/\epsilon))$ steps we have $w_{k}\approx_{\epsilon}T(w_{k})$.
Further, since $w_{k}\geq\eta\vones$ implies that $T(w_{k})\geq\eta\vones$
we have that $w_{k}\in\R_{>0}^{n}$ as desired. To implement the steps,
one can check, by the same proof, that it suffices to get a $\approx_{O(\epsilon)}$
multiplicative approximation to $T(w)$ in each step, which can be
done in $\otilde((\nnz(\ma)+d^{\omega})/\epsilon^{2})$ per step by
standard leverage score estimation techniques.
\end{proof}
Now, we first prove the correctness of the Algorithm \ref{alg:master}.
\begin{lem}
\label{lem:correctness}Algorithm~\ref{alg:master} outputs $x$
such that w.h.p. in $n$ 
\begin{align*}
c^{\top}x\leq & ~\min_{\ma^{\top}x=b,x\geq0}c^{\top}x+LR\cdot\delta,\\
\|\ma^{\top}x-b\|_{2}\leq & ~\delta\cdot\Big(\|\ma\|_{F}\cdot R+\|b\|_{2}\Big),\\
x & \geq0.
\end{align*}
\end{lem}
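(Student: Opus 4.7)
The plan is to trace through the two phases of Algorithm~\ref{alg:master}, first verifying that the starting configuration satisfies the hypotheses of Theorem~\ref{thm:path_following}, then using that theorem to bound the final iterate, and finally invoking part 3 of Theorem~\ref{thm:infeasible_reduction} to translate the approximate solution of the modified LP back into a solution of the original LP.

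First, I would apply Theorem~\ref{thm:infeasible_reduction} to the input with reduction accuracy $\delta/(8n^{2})$. This produces a modified LP with non-degenerate constraint matrix together with an explicit feasible pair $(\ox,\oy,\os) = (\vones, (0,-1), \vones+(\delta/L)c)$; in particular $\ox = \vones$. However this pair is not yet approximately central with respect to $\tau$. To fix this, I would use Theorem~\ref{thm:Lewis_weight_compute} with $p=1/(1+\alpha)\in(0,4)$ and $\eta=d/n$ to find $s\approx_{\epsilon}\sigma(\ms^{-1/2-\alpha}\ma)+\tfrac{d}{n}\vones$, set $\mu=1$, $x=\vones$, and treat $c^{(\text{tmp})}\defeq s$ as a temporary cost so that $(x,y=0,s)$ is feasible. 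By construction $xs = s \approx_{\epsilon}\mu\cdot\tau(x,s)$ and the initial point is exactly feasible, so $\Phi_{b}=0$; this satisfies the hypotheses of Theorem~\ref{thm:path_following}.

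Next I would analyze the two calls to $\textsc{Centering}$. The first call (Line~\ref{line:first_centering}) raises $\mu$ to $\mu_{\text{big}} \defeq \Theta(n^{2}\sqrt{d}/(\gamma\alpha^{2}))$ along the central path of $c^{(\text{tmp})}$ so that $xs \approx_{\epsilon}\mu_{\text{big}}\tau(x,s)$. At Line~\ref{line:switch_c} the cost is switched to $c$ by replacing $s$ with $s^{\new}=s+c-c^{(\text{tmp})}$; since $\|c\|_{2}\le L$ is bounded by $O(1)$ after the scaling $(\delta/L)c$ in Theorem~\ref{thm:infeasible_reduction}, and $s$ has magnitude $\Theta(\mu_{\text{big}})$ entrywise (up to the $\tau$ factor of order $d/n$ to $1$), the choice of $\mu_{\text{big}}$ is large enough that $s^{\new}\approx_{2\epsilon}s$ and the centrality/near-feasibility conditions required to re-enter Theorem~\ref{thm:path_following} still hold. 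The second call (Line~\ref{line:second_centering}) then decreases $\mu$ to $\mu_{\text{end}}\defeq \delta^{2}/(8^{3}n^{4}d)$; Theorem~\ref{thm:path_following} guarantees $xs\approx_{\epsilon}\mu_{\text{end}}\tau(x,s)$ and $\Phi_{b}\le \zeta\epsilon^{2}/(\sqrt{d}\log^{\power}n)$ at termination, after which Line~\ref{line:make_near_feasible} invokes Corollary~\ref{cor:tiny_phi_b} to further reduce $\Phi_{b}$ to $O(\delta/n^{2})$.

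To conclude, I would bound the duality gap of the returned point for the modified LP. From $\ox\cdot\os \approx_{\epsilon}\mu_{\text{end}}\tau(\ox,\os)$ and $\sum_{i}\tau_{i}(\ox,\os)=O(d)$ we get $\ox^{\top}\os=O(\mu_{\text{end}} d)=O(\delta^{2}/n^{4})$, so the pair $(\ox,\os)$ is optimal for the modified LP up to additive $O(\delta^{2}/n^{4})$. Combining the optimality error with the infeasibility bound $\Phi_{b}=O(\delta^{2}/n^{4})$, part 3 of Theorem~\ref{thm:infeasible_reduction} applied with its own parameter $\delta'=\delta/(8n^{2})$ and $\sqrt{\Phi_{b}}+\delta'=O(\delta/n^{2})$ gives $\widehat{x}=R\cdot\ox_{1:n}$ satisfying $c^{\top}\widehat{x}\le \mathrm{OPT}+O(nLR)\cdot(\delta/n^{2})\le LR\delta$ and $\|\ma^{\top}\widehat{x}-b\|_{2}\le O(n^{2})(\|\ma\|_{F}R+\|b\|_{2})\cdot(\delta/n^{2}) \le \delta(\|\ma\|_{F}R+\|b\|_{2})$, with $\widehat{x}\ge0$ automatic from $\ox\ge 0$. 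The main obstacle will be the cost-switch step at Line~\ref{line:switch_c}: one must quantitatively verify that the perturbation $c-c^{(\text{tmp})}$ is small enough compared to the current $s$ (which requires tracking how the per-iteration factor $(1+\gamma\alpha/(2^{15}\sqrt{d}))$ accumulates relative to the bound $L$ on $c$) so that the centrality slack and the infeasibility tolerance demanded by Theorem~\ref{thm:path_following} are both preserved when re-entering $\textsc{Centering}$ the second time.
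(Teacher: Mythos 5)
Your high-level decomposition matches the paper's: reduce to the modified LP via Theorem~\ref{thm:infeasible_reduction}, construct the initial central point via the Lewis weight computation, center up with $c^{(\text{tmp})}$, switch costs, center down, clean up $\Phi_b$, and then apply part~3 of Theorem~\ref{thm:infeasible_reduction}. Your handling of the cost-switch step and the final parameter choice $\Phi_b=O(\delta^2/n^4)$ (so that $\sqrt{\Phi_b}+\delta'=O(\delta/n^2)$) is sound and actually a bit more explicit than the paper's phrasing.

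However, there is a genuine gap: you apply Theorem~\ref{thm:path_following} directly to the calls to $\textsc{Centering}$ in Algorithm~\ref{alg:master}, but Theorem~\ref{thm:path_following} is stated for the \emph{abstract} Algorithm~\ref{alg:pathfollowing}, whose lines merely say ``pick any $\ox\approx_\epsilon x$,'' ``pick any $\mh\approx\ma^\top\oms^{-1}\omx\ma$,'' etc. The concrete $\textsc{Centering}$ in Algorithm~\ref{alg:master_2} instantiates these choices through the vector, leverage-score, and inverse maintenance data structures. Before invoking Theorem~\ref{thm:path_following} you must establish that this implementation actually satisfies the required invariants: that $x^{(\text{tmp})}\approx_{\gamma/8}x$ and hence $\ox\approx_{\gamma/4}x$ (and likewise for $s$ and $\overline{\tau}$), that $\Psi\safe\approx_\epsilon(\ma^\top\oms^{-1}\omx\ma)^{-1}$, and that the data structures can be chained (e.g.\ $D_{\textsc{Leverage}}$ requires $g^{(t)}\approx_{1/16}g^{(t-1)}$ and spectral approximations of the inverse; $D_{\textsc{Inverse}}$ outputs must be accurate for the leverage-score query to succeed). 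This verification is the first and largest block of the paper's proof, and without it you have no license to conclude that the $\textsc{Centering}$ procedure in Algorithm~\ref{alg:master} ``behaves like'' Algorithm~\ref{alg:pathfollowing}. The remainder of your argument, given that invariant, is essentially the paper's.
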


\begin{proof}
We define the primal dual point $(x,s)$ via the formula of lines
\ref{line:x_formula} and \ref{line:s_formula}. We start by showing
that our implementation of Algorithm~\ref{alg:pathfollowing} in
Algorithm~\ref{alg:master} satisfies all required conditions, i.e.
that we can apply Theorem~\ref{thm:path_following}. Throughout this
proof, states hold only w.h.p. and therefore the restatement of this
is often omitted for brevity.

\paragraph{Invariant: $\protect\ox\approx_{\gamma/4}x$, $\protect\os\approx_{\gamma/4}s$,
$\overline{\tau}\approx_{\gamma/4}\sigma(\protect\oms^{-1/2-\alpha}\protect\omx^{1/2-\alpha}\protect\ma)+\frac{d}{n}\protect\vones$
and $\Psi\protect\safe\approx_{\epsilon}(\protect\ma^{\top}\protect\oms^{-1}\protect\omx\protect\ma)^{-1}$:}

We first show that throughout $\textsc{Centering}$, we have the invariant
above, assuming the input parameter $\tau_{i}^{\text{(init)}}$ satisfied
$\tau_{i}^{\text{(init)}}\approx_{\gamma/4}\sigma(\ms^{-1/2-\alpha}\mx^{1/2-\alpha}\ma)+\frac{d}{n}\vones$.
Theorem~\ref{thm:product_sum_datastructure} shows that $x^{\text{(tmp)}}\approx_{\gamma/8}x$
and $s^{\text{(tmp)}}\approx_{\gamma/8}s$ (Line~\ref{line:xtmp_formula}
and~\ref{line:stmp_formula}). Then, by the update rule (Line~\ref{line:xbar_update}
and \ref{line:sbar_update}), we have that $x^{\text{(tmp)}}\approx_{\gamma/8}\ox$
and $s^{\text{(tmp)}}\approx_{\gamma/8}\os$. Hence, we have the desired
approximation for $\ox$ and $\os$.

Next, Theorem~\ref{thm:inverse_main} shows that
\begin{itemize}
\item $\Psi^{(\alpha)}\approx_{\gamma/(512\log n)}(\ma^{\top}\ms^{-1-2\alpha}\mx^{1-2\alpha}\ma)^{-1}$
(Line \ref{line:d_inverse_init}) and
\item $\Psi\safe^{(\alpha)}\approx_{\gamma/(512\log n)}(\ma^{\top}\ms^{-1-2\alpha}\mx^{1-2\alpha}\ma)^{-1}$
(Line \ref{line:inverse_alpha_safe}). 
\end{itemize}
Hence, we can apply Theorem~\ref{thm:leverage_score_datastructure}
and get 
\[
\tau^{\text{(tmp)}}\approx_{\gamma/8}\sigma(\oms^{-1/2-\alpha}\omx^{1/2-\alpha}\ma)+\frac{d}{n}\vones
\]
(Line~\ref{line:sigma_init} and Line~\ref{line:sigma_tmp}). Again,
by update rule (Line~\ref{line:tau_update}), we have the desired
approximation for $\overline{\tau}$.

Finally, $\Psi\safe\approx_{\epsilon}(\ma^{\top}\oms^{-1}\omx\ma)^{-1}$
follows from Theorem~\ref{thm:inverse_main} (Line \ref{line:inverse_safe}).
These invariants also imply $\|\overline{v}-v\|_{\infty}\leq\gamma$
and that $D_{\text{Gradient}}$ (Theorem~\ref{thm:gradient_maintenance})
maintains $\nabla\Phi(\ov')^{\flat(\otau)}$ for some $\|\overline{v}'-v\|_{\infty}\leq\gamma$.
Thus, in summary, $\textsc{Centering}$ of Algorithm \ref{alg:master}
behaves like $\textsc{Centering}$ of Theorem~\ref{thm:path_following}.

\paragraph{Invariant: $xs\approx\mu\cdot\tau(x,s)$:}

Initially, $x=\vones$ due to the reduction (Lemma \ref{thm:infeasible_reduction}),
$\mu=1$ and $s\approx_{\epsilon}\sigma(\ms^{-1/2-\alpha}\ma)+\frac{d}{n}\vones$
(Line \ref{line:find_lewis}). Hence, $xs\approx_{\epsilon}\mu\cdot(\sigma(\ms^{-1/2-\alpha}\mx^{1/2-\alpha}\ma)+\frac{d}{n}\vones)$
initially.

We change $x,s,\mu$ in Line~\ref{line:first_centering} by calling
$\textsc{Centering}$. By Theorem~\ref{thm:path_following} we then
have $xs\approx_{\epsilon}\mu\tau$ after this call to $\textsc{Centering}$.
Next, consider the step where we switch the cost vector on Line~\ref{line:switch_c}.
Let $s^{\new}$ be the vector $s$ after Line~\ref{line:switch_c}
and $s$ is before Line~\ref{line:switch_c}. Since $\ma y+s=c^{\text{(tmp)}}$
we have that $\ma y+s^{\new}=c^{\text{(tmp)}}-s+s^{\new}=c$, i.e.
$s$ is a valid slack vector for cost $c$. Further, we have that,
by design
\[
\frac{s^{\new}-s}{s}=\frac{c-c^{\text{(tmp)}}}{s}.
\]
First, we bound the denominator. We have that $sx\approx_{1/2}\mu\tau$,
so for all $i\in[n]$ we have 
\[
s_{i}\geq\frac{\mu}{2x_{i}}\frac{d}{n}\geq\frac{\mu}{\Omega(n^{2})}
\]
where we used $\|x\|_{\infty}\leq O(n)$ by Lemma~\ref{thm:infeasible_reduction}
as we ensure $x$ is sufficiently close to feasible for the modified
linear program by Theorem~\ref{thm:path_following}. For the numerator,
we note that $\|c\|_{\infty}\leq1$ for the modified linear program
and $\|c^{\text{(tmp)}}\|_{\infty}=\|s\|_{\infty}\leq3$ for the $s$
computed by Theorem~\ref{thm:Lewis_weight_compute} in Line~\ref{line:find_lewis}.again
by the definition of $s$ (Line~\ref{line:find_lewis}) and the modified
$\ma$ and $y$. Hence, we have that 
\[
\Big\|\frac{s^{\new}-s}{s}\Big\|_{\infty}\leq\frac{16n^{2}}{\mu}\leq\frac{\gamma\alpha^{2}}{c\cdot\sqrt{d}}.
\]
for any constant $c$ by choosing the constant in the $O(\cdot)$
in Line~\ref{line:first_centering} appropriately. Thus $xs\approx_{2\epsilon}\mu\cdot\tau(x,s)$
and $\overline{\tau}\approx_{\gamma/4}\sigma(\ms^{-1/2-\alpha}\mx^{1/2-\alpha}\ma)+\frac{d}{n}\vones$,
so when we call $\textsc{Centering}$ in Line \ref{line:second_centering},
we again obtain $xs\approx_{\epsilon}\mu\cdot\tau(x,s)$.

\paragraph{Conclusion:}

Before the algorithm ends, we have $xs\approx_{1/4}\mu\tau$. In Line~\ref{line:make_near_feasible},
we reduce $\Phi_{b}:=\|\ma^{\top}x-b\|_{(\ma^{\top}\mx\ms^{-1}\ma)^{-1}}^{2}$
to some small enough $\Phi_{b}=O(\delta/n^{2})$. By Corollary \ref{cor:tiny_phi_b}
this does not move the vector $x$ too much, i.e. we still have $x\cdot s\approx_{1/2}\mu\cdot\tau(x,s)$.
Hence, by the choice of the new $\mu$ in Line~\ref{line:second_centering},
Lemma~\ref{thm:infeasible_reduction} shows that we can output a
point $\widehat{x}$ with the desired properties. 
\end{proof}
Finally, we analyze the cost of the Algorithm~\ref{alg:master}.
\begin{lem}
Algorithm \ref{alg:master} takes $O((nd+d^{3})\log^{O(1)}n\log(n/\delta))$
time with high probability in $n$.
\end{lem}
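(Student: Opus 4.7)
The plan is to decompose the runtime into four contributions: (i) the one-time preprocessing that sets up the reduction of Theorem~\ref{thm:infeasible_reduction}, computes the initial Lewis-weight slack via Theorem~\ref{thm:Lewis_weight_compute}, and initializes the four data structures; (ii) the total iteration count of the two $\textsc{Centering}$ calls given by Theorem~\ref{thm:path_following}; (iii) the amortized cost per iteration of each data structure update (inverse, vector, leverage score, gradient, and the feasibility subroutine); and (iv) the final output rounding. I will bound each contribution and then sum.

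First I will compute the total number of IPM iterations. The two calls to \textsc{Centering} use ratios $\mu^{\text{target}}/\mu^{\text{init}} = \Theta(n^2\sqrt{d}/(\gamma\alpha^2))$ and $\Theta(\mu^{\text{init}} \cdot 8^3 n^4 d/\delta^2)$, both of which are $\mathrm{poly}(n/\delta)$. Since $\alpha^{-1} = O(\log n)$ and $\epsilon^{-1} = O(\log n)$, Theorem~\ref{thm:path_following} gives a total iteration count of $T = \widetilde{O}(\sqrt{d}\,\log(n/\delta))$. Next, for the preprocessing in Algorithm~\ref{alg:master}: the reduction of Theorem~\ref{thm:infeasible_reduction} is linear in $\nnz(\ma) = O(nd)$; Theorem~\ref{thm:Lewis_weight_compute} runs in $\widetilde{O}(nd + d^\omega)$; and the \textsc{Initialize} calls for $D_{\textsc{Inverse}}$, $D_{\textsc{MatVec}}^{(x)}$, $D_{\textsc{MatVec}}^{(s)}$, $D_{\textsc{Leverage}}$, $D_{\textsc{Gradient}}$ each cost at most $\widetilde{O}(nd + d^\omega)$ by their respective theorems. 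Since $d^\omega \le d^3$, the initialization cost is $\widetilde{O}(nd + d^3)$.

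For the amortized per-iteration cost I invoke the stability bounds from the last bullets of Theorem~\ref{thm:path_following}, which guarantee $\|\mx^{-1}\delta_x\|_{\tau+\infty},\|\ms^{-1}\delta_s\|_{\tau+\infty},\|\mT^{-1}\delta_\tau\|_{\tau+\infty} \le O(\epsilon)$ per step; since $\|\cdot\|_{\tau+\infty}$ controls $\|\cdot\|_2$ up to an $O(1)$ factor (because $\sum_i \tau_i = O(d)$), the telescoped sums $\sum_k \|(\mx^{(k)})^{-1}\delta_x^{(k)}\|_2^2$ and the corresponding quantities for $s$ and for $\mg^{(k)}\ma\Psi^{(k)}\ma^\top\mg^{(k)}$ (the latter by Lemma~\ref{lem:totalPmovement}) are each $\widetilde{O}(T)$. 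Plugging into Theorem~\ref{thm:product_sum_datastructure} gives a total cost of $\widetilde{O}(T \cdot n + T \cdot d) = \widetilde{O}(n\sqrt{d} + d^{1.5})\cdot \log(n/\delta)$ for each of $D_{\textsc{MatVec}}^{(x)}$ and $D_{\textsc{MatVec}}^{(s)}$; Theorem~\ref{thm:leverage_score_datastructure} gives $\widetilde{O}((\sum_t \|\cdot\|_F)^2 \cdot n + T(T_\Psi + d^2))$, which becomes $\widetilde{O}(T^2 n + T\cdot d^2) = \widetilde{O}(nd + d^{2.5})\cdot \log^2(n/\delta)$ since $T_\Psi = \widetilde{O}(n + d^2)$ through a \textsc{Solve} call; and Theorem~\ref{thm:inverse_main} contributes $T$ calls of $\widetilde{O}(n + d^{\omega-1/2} + d^2) = \widetilde{O}(n + d^{2.5})$ amortized per \textsc{Update}, for $\widetilde{O}(n\sqrt{d} + d^3)\cdot \log(n/\delta)$ total, plus $O(T)$ \textsc{Solve} calls of cost $\widetilde{O}(n + d^2)$ each.

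Finally, the \textsc{MaintainFeasibility} subroutine contributes $\widetilde{O}(nd^{0.5} + d^{2.5})$ amortized per iteration by Theorem~\ref{thm:feasibilityComplexity}, summing to $\widetilde{O}(nd + d^3)\log(n/\delta)$; the $D_{\textsc{Gradient}}$ data structure's amortized cost (Theorem~\ref{thm:gradient_maintenance}) is of the same order by the same $\ell_2$-change accounting; and the final near-feasibility restoration in Line~\ref{line:make_near_feasible} via Corollary~\ref{cor:tiny_phi_b} adds at most $\widetilde{O}(nd + d^3)$. Summing over all contributions yields the claimed $O((nd + d^3)\log^{O(1)}n\log(n/\delta))$ bound. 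The main obstacle I expect is the careful bookkeeping to confirm that the adaptive-vs-oblivious adversary conditions of each data structure are respected in our usage (in particular that $\Psi\safe^{(\alpha)}$ and $\Psi\safe$ supplied to \textsc{Leverage}.\textsc{Query} are generated with independent randomness as required by Theorem~\ref{thm:leverage_score_datastructure}); this should be handled by noting that the two calls to $D_{\textsc{Inverse}}.\textsc{Solve}$ on Lines~\ref{line:inverse_alpha_safe}--\ref{line:inverse_safe} use fresh randomness through the $\E[\Psi b] = (\ma^\top\omw\ma)^{-1}b$ guarantee, and by Lemma~\ref{lem:totalPmovement}'s extension that permits the input to depend on \textsc{Solve} outputs.
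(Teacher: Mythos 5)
Your decomposition — bound the iteration count, then sum the amortized data-structure costs — matches the paper's proof, and most of the bookkeeping is right, but there is a genuine quantitative error in how you pass from the $\tau$-norm stability bounds to $\ell_2$-norm change sums. You claim that $\|\cdot\|_{\tau+\infty}$ controls $\|\cdot\|_2$ up to an $O(1)$ factor ``because $\sum_i\tau_i=O(d)$,'' but this is false: since $\tau_i\geq d/n$, we only get $\|x\|_2^2\leq\frac{n}{d}\|x\|_\tau^2$, i.e.\ the factor is $\sqrt{n/d}$, not $O(1)$. Consequently $\|\mx^{-1}\delta_x\|_2^2$ per iteration is $\widetilde{O}(n/d)$ rather than $\widetilde{O}(1)$, and the telescoped sum $\sum_k\|(\mx^{(k)})^{-1}\delta_x^{(k)}\|_2^2$ is $\widetilde{O}(Tn/d)$, not $\widetilde{O}(T)$. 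Plugging this into Theorem~\ref{thm:product_sum_datastructure} (whose cost is $T\bigl(n+\sum_k\|\cdot\|_2^2\cdot d\cdot\epsilon^{-2}\log^6 n\bigr)$) gives $\widetilde{O}(T^2 n)=\widetilde{O}(nd\log^2(1/\delta))$ for each $D_{\textsc{MatVec}}$, not your $\widetilde{O}(n\sqrt{d}+d^{1.5})$; note also that even under your (incorrect) sum bound, the formula would give $Tn+T^2d$, not $Tn+Td$. The same $n/d$ factor is what the paper uses to bound the gradient-maintenance cost.

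Two smaller omissions: first, you write the $D_{\textsc{Inverse}}.\textsc{Solve}$ calls as $\widetilde{O}(n+d^2)$ each, but the calls on Line~\ref{line:inverse_safe} use accuracy $\Theta(\epsilon/(d^{1/4}\log^3 n))$ and Theorem~\ref{thm:inverse_main} charges $\widetilde{O}(n+\delta^{-2}d^2)=\widetilde{O}(n+d^{2.5})$ per call, so the $T$ calls contribute $\widetilde{O}(n\sqrt{d}+d^3)$ — this is in fact where the $d^3$ term in the final bound originates, not the $\textsc{Update}$ amortization (which gives only $d^\omega$). Second, your intermediate bounds carry $\log^2(n/\delta)$ while the statement has only $\log(n/\delta)$; the paper handles this by reinitializing all the data structures every $\sqrt{d}$ iterations, which turns the $K^2$ in the amortized costs into $K\sqrt{d}$. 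Your overall conclusion $\widetilde{O}(nd+d^3)$ still holds because other terms reach $\widetilde{O}(nd)$ anyway, but the argument as written does not correctly account for where that $nd$ comes from, and it would break if one tried to sharpen any of the individual terms.
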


\begin{proof}
For simplicity, we use $\otilde(\cdot)$ to suppress all terms that
are $\log^{O(1)}n$. We first note that all parameters $\alpha,\epsilon,\lambda,\gamma$
are either $\log^{O(1)}n$ or $1/\log^{O(1)}n$. The cost of Algorithm~\ref{alg:master}
is dominated by the repeated calls to $\textsc{Centering}$.

\paragraph{Number of iterations:}

By Theorem~\ref{thm:path_following} the calls to $\textsc{Centering}$
in Lines \ref{line:first_centering} and \ref{line:second_centering}
perform $\otilde(\sqrt{d}\log(1/\delta))$ many iterations in total.

\paragraph{Cost of $D_{\textsc{Inverse}}$:}

Note that $\|\tau^{-1}\delta_{\tau}\|_{\tau+\infty}=2\epsilon$, $\|\ms^{-1}\delta_{s}\|_{\tau+\infty}\leq\epsilon/2$,
and $\|\mx^{-1}\delta_{x}\|_{\tau+\infty}\leq\epsilon/2$ by Theorem
\ref{thm:path_following}. Hence, the weight of the vector $w=\ms^{-1-2\alpha}x^{1-2\alpha}$
satisfies $\|\mw^{-1}\delta_{w}\|_{\tau}=O(\epsilon)$. Note however,
that we need $D_{\textsc{Inverse}}.\textsc{Update}$ to compute the
inverse with accuracy $O(\gamma/\log n)$ and hence Theorem \ref{thm:inverse_main}
requires the relative movement of $w$ to be less than $\gamma/\log n$.
This can be fixed by splitting the step into $\otilde(1)$ pieces.
Now, Theorem \ref{thm:inverse_main} shows that the total cost is
$\otilde(d^{\omega}+\sqrt{d}\log(1/\delta)(n+d^{\omega-\frac{1}{2}}+d^{2}))=\otilde((\sqrt{d}n+d^{\omega}+d^{2.5})\log(1/\delta))$
where the term $d^{\omega}$ is the initial cost and $d^{\omega-\frac{1}{2}}+d^{2}$
is the amortized cost per step. Note that the solver runs with accuracy
$\otilde(d^{-1/4})$ so the total time for all calls to $D_{\textsc{Inverse}}.\textsc{Solve}$
will be $\otilde((\sqrt{d}n+d^{3})\log(1/\delta))$.

\paragraph{Cost of $D_{\textsc{Leverage}}$:}

By Lemma \ref{lem:totalPmovement}, the total movement of the projection
matrix is
\[
\sum_{k\in[K-1]}\Big\|\sqrt{\mw^{(k+1)}}\ma\Psi^{(k+1)}\ma^{\top}\sqrt{\mw^{(k+1)}}-\sqrt{\mw^{(k)}}\ma\Psi^{(k)}\ma^{\top}\sqrt{\mw^{(k)}}\Big\|_{F}=\otilde(K)
\]
 where $K$ is the number of steps and $w=\ms^{-1-2\alpha}x^{1-2\alpha}$.
Then, Theorem~\ref{thm:leverage_score_datastructure} shows that
the total cost is $O(\frac{n}{d}K^{2}\cdot d)=O(nK^{2})=O(nd\log^{2}(1/\delta))$.

\paragraph{Cost of $D_{\textsc{MatVec}}^{(x)}$ and $D_{\textsc{MatVec}}^{(s)}$:}

By Theorem~\ref{thm:product_sum_datastructure}, the total cost for
$D_{\textsc{MatVec}}^{(x)}$ is bounded by $\otilde(K^{2}\max\|\mx^{-1}\delta_{x}\|_{2}^{2}\cdot d+Kn)$
where $\max\|\mx^{-1}\delta_{x}\|_{2}^{2}$ is the maximum movement
in one step in $\ell_{2}$-norm. Note that $\tau\geq\frac{d}{n}$
and hence $\|\mx^{-1}\delta_{x}\|_{2}^{2}\leq\frac{n}{d}\|\mx^{-1}\delta_{x}\|_{\tau}^{2}=\otilde(\frac{n}{d})$.
Hence, we have that the cost is $\otilde(d\log^{2}(1/\delta)\cdot\frac{n}{d}\cdot d)=\otilde(nd)$.
The bound for $D_{\textsc{MatVec}}^{(s)}$ is the same.

\paragraph{Cost of maintaining $\protect\ma^{\top}\overline{\protect\mx}h$
($D_{\text{Gradient}}$):}

The cost of maintaining $\ma^{\top}\overline{\mx}h$ is exactly equals
to $d$ times the number of coordinates changes in $\ox,\os,\overline{\tau}$.
Note that we change the exact $x,s,\tau$ by at most around $(1\pm\gamma/8)$
multiplicative factor in every step. So, the total number of entry
changes performed to $\bar{x},\bar{s},\bar{\tau}$ is bounded by $\otilde\left(K^{2}\left(\max\|\mx^{-1}\delta_{x}\|_{2}^{2}+\max\|\ms^{-1}\delta_{s}\|_{2}^{2}+\max\|\tau^{-1}\delta_{\tau}\|_{2}^{2}\right)\right)$,
where $\max$ refers to the maximum movement in any step in $\ell_{2}$-norm.
As we showed before, all the movement terms are bounded by $\tilde{O}(\frac{n}{d})$
(see the paragraphs regarding $D_{\textsc{MatVec}}^{(x)}$, $D_{\textsc{MatVec}}^{(s)}$,
and $D_{\textsc{Inverse}}$). So in total there are $\tilde{O}(n\log^{2}(1/\delta))$
many entry changes we perform to $\ox,\os,\overline{\tau}$. Hence,
the total cost of maintenance is again $\tilde{O}(nd\log^{2}(1/\delta))$.

\paragraph{Cost of implementing $\textsc{MaintainFeasibility}$ (Algorithm~\ref{alg:maintain_INfeasibility}):}

By Theorem \ref{thm:feasibilityComplexity} we pay $\otilde(nd^{0.5}+d^{2.5})$
amortized time per call to $\textsc{MaintainFeasibility}$. Additionally,
we must compute $\otilde(n/\sqrt{d}+d^{1.5})$ entries of $x$ in
each iteration (i.e. call $D_{\textsc{MatVec}}^{(x)}.\textsc{ComputeExact(\ensuremath{i})}$),
which costs $\otilde(d)$ per call, so we have total cost of $\otilde(nd+d^{3})$
after $\otilde(\sqrt{d})$ iterations.

\paragraph{Removing the extra $\log(1/\delta)$ term:}

We note that all the extra $\log(1/\delta$) terms are due to running
the data structures for $\sqrt{d}\log(1/\delta)$ steps. However,
we can we reinitialize the data structures every $\sqrt{d}$ iterations.
This decreases the $K$ dependence from $K^{2}$ to $K\sqrt{d}$.

\paragraph{Independence and Adaptive Adversaries:}

Randomized data-structures often can not handle inputs that depend
on outputs of the previous iteration. For example Theorem~\ref{thm:inverse_main}
is such a case, where the input to $\textsc{Update}$ is not allowed
to depend on the output of any previous calls to $\textsc{Update}$.
In our Algorithm \ref{alg:master} the input to the data-structures
inherently depends on their previous output, so here we want to verify
that this does not cause any issues.

The data-structures $D_{\textsc{MatVec}}^{(x)}$ and $D_{\textsc{MatVec}}^{(s)}$
are given by Theorem \ref{thm:product_sum_datastructure}, which explicitly
states that the data-structure works against adaptive adversaries.
This means, that the input is allowed to depend on the output of previous
iterations. Likewise, $D_{\textsc{Leverage}}$ works against adaptive
adversaries by Theorem~\ref{thm:leverage_score_datastructure} (provided
the input $\Psi\safe^{(\alpha)}$ are chosen by randomness independent
of the randomness chosen for $\Psi^{(\alpha)}$ which is the case
by Lemma ~\ref{lem:totalPmovement}). The only issue is with $D_{\textsc{Inverse}}$
(given by Theorem \ref{thm:inverse_main}), where the input $w,\tilde{\tau}$
to $\textsc{Update}$ is not allowed to depend on the output of previous
calls to $\textsc{Update}$ (however, $w\text{ and }\tilde{\tau}$
are allowed to depend on the output of $\textsc{Solve}$ by Lemma
\ref{lem:totalPmovement}). Also note, that the inputs $\bar{w},b,\delta$
to $\textsc{Solve}$ are allowed to depend on any previous output
of $\textsc{Update}$ and $\textsc{Solve}$, as Theorem \ref{thm:inverse_main}
only has issues with the inputs $w\text{ and }\tilde{\tau}$ to $\textsc{Update}$.
So to show that our algorithm works, we are only left with verifying
that the input $w,\tilde{\tau}$ to $\textsc{Update}$ does not depend
on previous results of $\textsc{Update}$. Let us prove this by induction.

The result of $\textsc{Update}$ is $\Psi^{(\alpha)}\leftarrow D_{\textsc{Inverse}}.\textsc{Update}(\oms^{-1-2\alpha}\ox^{1-2\alpha},\overline{\tau})$,
and when executing this line for the very first time, it can obviously
not depend on a previous output yet. The matrix $\Psi^{(\alpha)}$
is only used as input to $D_{\textsc{Leverage}}.\textsc{Query}(\Psi^{(\alpha)},\Psi\safe^{(\alpha)})$,
but by Theorem \ref{thm:leverage_score_datastructure} the output
of this procedure does not depend on $\Psi^{(\alpha)}$. Hence $\Psi^{(\alpha)}$
does not affect anything else, which also means the input $\oms^{-1-2\alpha}\ox^{1-2\alpha}$
and $\overline{\tau}$ to the next call of $D_{\textsc{Inverse}}.\textsc{Update}$
do not depend on previous $\Psi^{(\alpha)}$.
\end{proof}

\section{Robust Primal Dual LS-Path Following}

\label{sec:primal_dual_path}

Here we provide our stable primal-dual $\otilde(\sqrt{d})$-iteration
primal-dual IPM for solving (\ref{eq:primal_dual}). In particular
we provide the main subroutines for making progress along the central
path as discussed in Section \ref{sec:overview_IPM} and prove Theorem~\ref{alg:pathfollowing}
given in Section~\ref{sub:ipm_restated}.

As discussed we assume throughout that $\ma$ is \emph{non-degenerate},
meaning it has full-column rank and no zero rows and consider the
primal dual central path (\ref{eq:regls_path}) induced by the weight
function $\taureg(x,s)\defeq\sigma(\ms^{-1/2-\alpha}\mx^{1/2-\alpha}\ma)+\frac{d}{n}\vones$
where $\mx=\mdiag(x)$ and $\ms=\mdiag(s)$ (See Definition~\ref{weight_function}).

Rather than precisely following the central path, we follow the approach
of \cite{cohen2019solving,LeeSZ19,br2019deterministicArxiv} and instead
follow it approximately. The primary goal of this section is to design
an efficient primal dual IPM that only maintains $x$, $s$, $\ma^{\top}\ms^{-1/2}\mx^{1/2}\ma$,
and $\taureg(x,s)$ multiplicatively and yet converges in $\otilde(\sqrt{d})$
iterations.

Our algorithms maintain a near feasible point $(x,s)\in\R_{\geq0}^{n}\times\R_{\geq0}^{n}$
(see Definition~\ref{def:feasible_point}) and attempt to improve
the quality of this point with respect to the central path. Formally,
we consider the vector $w=\mx s$, and then attempt to make $w\approx\mu\cdot\taureg(x,s)$.
We measure this quality of approximation or \emph{centrality} as follows.
\begin{defn}[$(\mu,\epsilon)$-centered point]
 We say that point $(x,s)\in\R_{\geq0}^{n}\times\R_{\geq0}^{n}$
is \emph{$(\mu,\epsilon)$-centered} if $\mu>0$ and $w\defeq\mx s$
for $\mw=\mdiag(w)$ satisfies $w\approx_{\epsilon}\mu\cdot\taureg(x,s)$.
\end{defn}

Similar to previous papers \cite{cohen2019solving,LeeSZ19,br2019deterministicArxiv}
our algorithms measure progress or the quality of this approximation
by 
\[
\Phi(x,s,\mu)=\sum_{i\in[n]}\phi\left(\mu\cdot\tau_{i}\cdot w_{i}^{-1}\right)\text{ with }\phi(v)\defeq\exp(\lambda(v-1))+\exp(-\lambda(v-1))
\]
for a setting of $\lambda=2\epsilon^{-1}\log(2^{16}n\sqrt{d}\alpha^{-2})$
we derive later. In contrast to these recent papers, here our potential
function does not depend solely on $w$ as $x$ and $s$ are used
to determine the value of $\tau$. For notational convenience, we
often let $v\in\R^{n}$ denote the vector with $v_{i}=\mu\cdot\tau_{i}\cdot w_{i}^{-1}$
for all $i\in[n]$ and overload notation letting $\Phi(v)\defeq\sum_{i\in[n]}\phi(v_{i})$.

To update our points and improve the potential we consider attempting
to move $w$ in some direction $h$ suggested by $\grad\Phi(v)$.
To do so we solve for $\delta_{s},\delta_{x},\delta_{y}$ satisfying
the following 
\[
\overline{\mx}\delta_{s}+\overline{\ms}\delta_{x}=\omw h\text{, }\ma^{\top}\delta_{x}=0\text{ , and }\ma\delta_{y}+\delta_{s}=0
\]
where $\omw=\overline{\mx}\overline{\ms}$. Solving this system of
equations, we have
\[
\delta_{x}=\omx\omw^{-1/2}(\mi-\mproj(\overline{\ms}^{-1/2}\overline{\mx}^{1/2}\ma))\omw^{1/2}h\text{ and }\delta_{s}=\oms\omw^{-1/2}\mproj(\overline{\ms}^{-1/2}\overline{\mx}^{1/2}\ma)\omw^{1/2}h.
\]
Often, our algorithms we will not have access to $\ma^{\top}\overline{\ms}^{-1}\overline{\mx}\ma$
exactly and instead we will only have a spectral approximation $\mh$.
In these cases we will instead consider $\mq\defeq\overline{\ms}^{-1/2}\overline{\mx}^{1/2}\ma\mh^{-1}\ma^{\top}\omx^{1/2}\oms^{-1/2}$
as a replacement for $\mproj$. Further, to account for the change
due to $\sigma$, we slightly change our steps size amounts. Formally,
we define our steps as follows:
\begin{defn}[Steps]
\label{def:newton_step} We call $(\delta_{x},\delta_{s})$ an \emph{$\epsilon$-Newton
direction from $(x,s)\in\R_{\geq0}^{n}\times\R_{\geq0}^{n}$} if $\ox\approx_{\epsilon}x$,
$\os\approx_{\epsilon}s$, and $\mh\approx_{\epsilon}\ma^{\top}\oms^{-1}\omx\ma$
with $\epsilon\leq1/80$, and $\delta_{x},\delta_{s}\in\R^{n}$ are
given as follows
\[
\delta_{x}=(1+2\alpha)\omx\omw^{-1/2}(\mi-\mq)\omw^{1/2}h\text{ and }\delta_{s}=(1-2\alpha)\oms\omw^{-1/2}\mq\omw^{1/2}h
\]
when
\[
\mq=\overline{\ms}^{-1/2}\overline{\mx}^{1/2}\ma\mh^{-1}\ma^{\top}\omx^{1/2}\oms^{-1/2}\,.
\]
\end{defn}

Note that, since we use $\mh$ and $\mq$ (rather than a true orthogonal
projection) it is not necessarily the case that $\ma^{\top}\delta_{x}=0$.
Consequently, we will need to take further steps to control this error
and this is analyzed in Section~\ref{sec:maintaining_infeasibility}.

Much of the remaining analysis is bounding the effect of such a step
and leveraging this analysis to tune $\phi$ and $h$. In Section~\ref{sec:weight_changes}
we bound the change in the point when we take a Newton step, in Section~\ref{sec:newton_progress}
we bound the effect of a Newton step on centrality for arbitrary $\Phi$,
and in Section~\ref{sec:central_path_follow} we analyze the particular
structure of $\Phi$ and use this to prove Theorem~\ref{alg:pathfollowing}.

\subsection{Stability of Weight Changes}

\label{sec:weight_changes}

Here we bound the change in $x$ and $s$ when we take an Newton step
from a centered point. For a point $(x,s)\in\R_{\geq0}^{n}\times\R_{\geq0}^{n}$
we will wish to analyze the movement both with respect to $\norm{\cdot}_{\infty}$
and $\norm{\cdot}_{\tau}$ for $\tau=\taureg(x,s)$. To simplify this
analysis we define the following mixed norm $\mixedNorm{\cdot}{\tau}$
by
\[
\norm x_{\tau+\infty}=\norm x_{\infty}+\cnorm\norm x_{\tau}\text{ for }\cnorm\defeq\frac{10}{\alpha}\,.
\]
Note that by the definition of $\alpha$ we have $\cnorm\geq10$.
We leverage this norm extensively in our analysis and make frequent
use of the following simple facts about such norms.
\begin{fact}[Norm Facts]
\label{lem:norm_facts} For any $d\in\R^{n}$, we have that $\|\mdiag(d)\|_{\tau+\infty}=\|d\|_{\infty}$.
For any $\mm\in\R^{n\times n}$, we have $\norm{\mm}_{\tau}=\norm{\mT^{1/2}\mm\mT^{-1/2}}_{2}$
for $\mT\defeq\mdiag(\tau)$.
\end{fact}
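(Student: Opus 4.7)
Both statements are essentially tautologies once one unpacks the definition of an induced operator norm, so the plan is to verify them by direct computation. The first claim requires separately establishing the two matching inequalities; the second follows from a single change-of-variables argument.

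For the identity $\|\mdiag(d)\|_{\tau+\infty} = \|d\|_\infty$, I would first handle the upper bound by using that $\mdiag(d)$ acts coordinate-wise: for any $x \in \R^n$ we have $|(\mdiag(d) x)_i| = |d_i||x_i| \le \|d\|_\infty |x_i|$ for each $i$. Taking the max over $i$ yields $\|\mdiag(d) x\|_\infty \le \|d\|_\infty \|x\|_\infty$, and squaring and summing against the weights $\tau_i$ yields $\|\mdiag(d) x\|_\tau \le \|d\|_\infty \|x\|_\tau$. Combining these two bounds through the definition $\|z\|_{\tau+\infty} \defeq \|z\|_\infty + \cnorm \|z\|_\tau$ gives $\|\mdiag(d) x\|_{\tau+\infty} \le \|d\|_\infty \|x\|_{\tau+\infty}$, so by taking the supremum over $\|x\|_{\tau+\infty}=1$ the induced norm is at most $\|d\|_\infty$. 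For the matching lower bound I would test on a standard basis vector $e_{i^\star}$ with $i^\star \in \argmax_i |d_i|$; since $\mdiag(d) e_{i^\star} = d_{i^\star} e_{i^\star}$, both terms in $\|\cdot\|_{\tau+\infty}$ scale by exactly $|d_{i^\star}|=\|d\|_\infty$, saturating the bound.

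For the identity $\|\mm\|_\tau = \|\mT^{1/2} \mm \mT^{-1/2}\|_2$, I would introduce the change of variables $y = \mT^{1/2} x$, which is a linear bijection of $\R^n$ (since $\tau_i > 0$) that sends the unit $\|\cdot\|_\tau$-ball to the Euclidean unit ball: indeed $\|x\|_\tau^2 = x^\top \mT x = \|y\|_2^2$. Under the substitution $x = \mT^{-1/2}y$ one has $\|\mm x\|_\tau^2 = x^\top \mm^\top \mT \mm x = \|\mT^{1/2}\mm\mT^{-1/2} y\|_2^2$, so the supremum defining $\|\mm\|_\tau$ over $\|x\|_\tau=1$ coincides with the supremum defining the spectral norm of $\mT^{1/2}\mm\mT^{-1/2}$ over $\|y\|_2=1$.

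I do not anticipate any serious obstacle here; the result is a standard piece of bookkeeping, and the only minor care needed is keeping track of the two-term structure of the mixed norm $\|\cdot\|_{\tau+\infty}$ when combining the $\ell_\infty$ and $\|\cdot\|_\tau$ bounds in the first part.
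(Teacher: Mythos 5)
Your argument is correct and complete, and it is the natural one. The paper states this as a bare ``Fact'' with no accompanying proof, so there is nothing to compare against; both parts are routine verifications of the kind you give. Your handling of the mixed norm in the first part (bounding the $\ell_\infty$ and $\|\cdot\|_\tau$ pieces separately, then recombining with the $\cnorm$ weight, and saturating with a basis vector) and the change of variables $y=\mT^{1/2}x$ in the second part are exactly what is needed.
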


Next, we relate leverage scores of the projection matrix $\mproj(\ms^{-1/2}\mx^{1/2}\ma)$
considered in taking a Newton step to the leverage scores used to
measure centrality, i.e. $\sigma(\ms^{-1/2-\alpha}\mx^{1/2-\alpha}\ma)$.
\begin{lem}
\label{lem:sigma_approx}For $(\mu,\epsilon)$-centered point $(x,s)$
with $\epsilon\in[0,1/80)$ we have
\[
\frac{1}{2}\sigma(\ms^{-1/2-\alpha}\mx^{1/2-\alpha}\ma)\leq\sigma(\ms^{-1/2}\mx^{1/2}\ma)\leq2\sigma(\ms^{-1/2-\alpha}\mx^{1/2-\alpha}\ma).
\]
\end{lem}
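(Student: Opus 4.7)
The plan is to write $\ms^{-1/2}\mx^{1/2}\ma$ as a diagonal rescaling of $\ms^{-1/2-\alpha}\mx^{1/2-\alpha}\ma$, use the centering hypothesis to replace the scaling vector by $\mu\cdot\tau(x,s)$ up to a small multiplicative error, and finally exploit that the specific choice of $\alpha = 1/(4\log(4n/d))$ forces $\tau_i^{2\alpha}$ to vary over a multiplicative range of at most $\exp(1/2)$, so that the rescaling matrix is close to a scalar multiple of the identity.

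Concretely, let $\mb_1 \defeq \ms^{-1/2-\alpha}\mx^{1/2-\alpha}\ma$ and $\mb_2 \defeq \ms^{-1/2}\mx^{1/2}\ma$. Since $w = xs$, direct computation gives $\mb_2 = \mw^{\alpha}\mb_1$, so
\[
\sigma(\mb_2)_i \;=\; w_i^{2\alpha}\cdot e_i^{\top}\mb_1\bigl(\mb_1^{\top}\mw^{2\alpha}\mb_1\bigr)^{-1}\mb_1^{\top}e_i.
\]
The $(\mu,\epsilon)$-centering condition $w \approx_{\epsilon}\mu\cdot\tau$ yields $\mw^{2\alpha}\approx_{2\alpha\epsilon}\mu^{2\alpha}\mT^{2\alpha}$ where $\mT\defeq\mdiag(\tau(x,s))$, and the scalar factors $w_i^{2\alpha}$ and $\mu^{-2\alpha}$ combine (modulo a further $\exp(2\alpha\epsilon)$ factor) into $\tau_i^{2\alpha}$. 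Thus
\[
\sigma(\mb_2)_i \;\approx_{4\alpha\epsilon}\; \tau_i^{2\alpha}\cdot e_i^{\top}\mb_1\bigl(\mb_1^{\top}\mT^{2\alpha}\mb_1\bigr)^{-1}\mb_1^{\top}e_i.
\]

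Next, since leverage scores lie in $[0,1]$ and $n \geq d$, we have $\tau_i\in[d/n,\,1+d/n]\subseteq[d/n,\,2]$, so the multiplicative range of $\tau_i^{2\alpha}$ is at most $(2n/d)^{2\alpha}=\exp(2\alpha\log(2n/d))\leq\exp(2\alpha\log(4n/d))=\exp(1/2)$ by the definition of $\alpha$. Taking $c$ to be the geometric mean of the extremes gives $\mT^{2\alpha}\approx_{1/4} c\,\mi$ and hence $\bigl(\mb_1^{\top}\mT^{2\alpha}\mb_1\bigr)^{-1}\approx_{1/4} c^{-1}(\mb_1^{\top}\mb_1)^{-1}$. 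Substituting,
\[
\sigma(\mb_2)_i \;\approx_{4\alpha\epsilon+1/4}\; \frac{\tau_i^{2\alpha}}{c}\cdot\sigma(\mb_1)_i,
\]
and $\tau_i^{2\alpha}/c\in[\exp(-1/4),\exp(1/4)]$, so the overall multiplicative error between $\sigma(\mb_2)_i$ and $\sigma(\mb_1)_i$ is at most $\exp(1/2+4\alpha\epsilon)$. For $\epsilon<1/80$ and $\alpha\leq 1$ this is bounded by $\exp(0.55)<2$, which yields the claimed two-sided factor-$2$ bound.

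The only real subtlety is bookkeeping the three sources of multiplicative slack, namely (i) the $\exp(2\alpha\epsilon)$ factor from replacing $w$ by $\mu\tau$ twice (once pointwise and once inside the Gram matrix), (ii) the $\exp(1/4)$ factor from approximating $\mT^{2\alpha}$ by $c\mi$ spectrally, and (iii) the $\exp(1/4)$ factor from $\tau_i^{2\alpha}/c$, and verifying that their product stays below $\exp(\log 2)$. No step requires an inequality beyond the spectral-approximation monotonicity of quadratic forms and the elementary bound on the range of $\tau^{2\alpha}$; the main conceptual content is that $\alpha$ was tuned precisely so that the raw range of $\tau_i^{2\alpha}$ across all $i$ is bounded by $e^{1/2}$, which is what makes the factor of $2$ (rather than some larger constant depending on $n/d$) achievable.
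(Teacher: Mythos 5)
Your proof is correct and follows essentially the same route as the paper: write $\ms^{-1/2}\mx^{1/2}\ma$ as a diagonal rescaling $\mw^{\alpha}$ of $\ms^{-1/2-\alpha}\mx^{1/2-\alpha}\ma$, use centering to reduce to the range of $\tau^{2\alpha}$, and exploit that $\alpha=1/(4\log(4n/d))$ caps that range at $e^{1/2}$. The paper applies the rescaling-stability fact for leverage scores directly to $\mu^{\alpha}\mw^{-\alpha}$, whereas you rederive it by splitting the $w_i^{2\alpha}$ prefactor from the Gram matrix and centering at the geometric mean; both give the identical bound $e^{1/2+4\alpha\epsilon}<2$.
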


\begin{proof}
By assumption $w=\mx s$ satisfies $w\approx_{\epsilon}\mu[\sigma(\ms^{-1/2-\alpha}\mx^{1/2-\alpha}\ma)+\frac{d}{n}]$.
Since leverage scores lie between $0$ and $1$, this implies $e^{-\epsilon}\frac{d}{n}\leq\mu^{-1}w\leq e^{\epsilon}2$.
Consequently $e^{-\alpha\epsilon}2^{-\alpha}\mi\preceq\mu^{\alpha}\mw^{-\alpha}\preceq e^{\alpha\epsilon}(\frac{d}{n})^{-\alpha}\mi$
and since $\alpha=1/(4\log(4n/d))$ we have that entrywise
\begin{align*}
\sigma(\ms^{-1/2}\mx^{1/2}\mw^{-\alpha}) & \geq\sigma(\ms^{-1/2}\mx^{1/2})\cdot\frac{e^{-2\alpha\epsilon}2^{-2\alpha}}{e^{2\alpha\epsilon}(\frac{d}{n})^{-2\alpha}}=\sigma(\ms^{-1/2}\mx^{1/2})\cdot e^{-4\alpha\epsilon}(2n/d)^{-2\alpha}\\
 & \geq\frac{\sigma(\ms^{-1/2}\mx^{1/2})}{e^{1/2+4\alpha\epsilon}}
\end{align*}
and analogous calculation shows that 
\begin{align*}
\sigma(\ms^{-1/2}\mx^{1/2}\mw^{-\alpha}) & \leq\sigma(\ms^{-1/2}\mx^{1/2})\cdot e^{4\alpha\epsilon}(2n/d)^{2\alpha}\leq\sigma(\ms^{-1/2}\mx^{1/2})e^{1/2+4\alpha\epsilon}\,.
\end{align*}
The result then follows from the assumption $\epsilon\leq1/80$.
\end{proof}
Leveraging Lemma~\ref{lem:sigma_approx} we obtain the following
bounds on $\mq$ and $\omw^{-1/2}\mq\omw^{1/2}h$ for Newton directions.
\begin{lem}
\label{lem:proj_bounds} Let $(\delta_{x},\delta_{s})$ denote an
$\epsilon$-Newton direction from $(\mu,\epsilon$)-centered $(x,s)$
with $\epsilon\in[0,1/80)$. Then $\mq\preceq e^{3\epsilon}\mproj(\ms^{-1/2}\mx^{1/2}\ma)$.
Further, for all $h\in\R^{n}$,
\[
\norm{\omw^{-1/2}\mq\omw^{1/2}h}_{\tau}\leq e^{4\epsilon}\norm h_{\tau}\text{ and }\norm{\omw^{-1/2}(\mi-\mq)\omw^{1/2}h}_{\tau}\leq e^{3\epsilon}\norm h_{\tau}\,.
\]
Further, $\norm{\omw^{-1/2}\mq\omw^{1/2}h}_{\infty}\leq2\norm h_{\tau}$
and therefore $\|\omw^{-1/2}\mq\omw^{1/2}\|_{\tau+\infty}\leq e^{4\epsilon}+2/\cnorm$.
\end{lem}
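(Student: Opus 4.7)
The plan is to reduce everything to statements about $\mq$ and the orthogonal projection $\mproj(\widetilde{\mb})$ where $\widetilde{\mb}\defeq\oms^{-1/2}\omx^{1/2}\ma$, then leverage the centering hypothesis $xs\approx_\epsilon\mu\tau$ together with $\ox\approx_\epsilon x$ and $\os\approx_\epsilon s$ to turn the $\tau$-norm bounds into spectral-norm bounds via a single similarity transform. Since $\mh\approx_\epsilon\widetilde{\mb}^\top\widetilde{\mb}$, I immediately get $\mq = \widetilde{\mb}\mh^{-1}\widetilde{\mb}^\top \preceq e^\epsilon\mproj(\widetilde{\mb})$, and in particular $\|\mq\|_2\leq e^\epsilon$. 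For the first inequality, I write $\widetilde{\mb} = \md_0\cdot\ms^{-1/2}\mx^{1/2}\ma$ with diagonal $\md_0\defeq\oms^{-1/2}\omx^{1/2}\ms^{1/2}\mx^{-1/2}\approx_\epsilon\mi$, combine $\mh\approx_\epsilon\ma^\top\oms^{-1}\omx\ma$ with $\ma^\top\oms^{-1}\omx\ma\approx_{2\epsilon}\ma^\top\ms^{-1}\mx\ma$ to conclude $\mh^{-1}\preceq e^{3\epsilon}(\ma^\top\ms^{-1}\mx\ma)^{-1}$, and then obtain $\mq\preceq e^{3\epsilon}\mproj(\ms^{-1/2}\mx^{1/2}\ma)$ after absorbing $\md_0$ into the projection parameterization.

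For the two $\tau$-norm bounds, I set $\mn\defeq\mT^{1/2}\omw^{-1/2}$. Under centering $\omw = \omx\oms\approx_{3\epsilon}\mu\tau$, so $\|\mn\|_2\|\mn^{-1}\|_2 \leq e^{3\epsilon}$. By Fact~\ref{lem:norm_facts},
\[
\|\omw^{-1/2}\mq\omw^{1/2}\|_\tau = \|\mn\mq\mn^{-1}\|_2 \leq \|\mn\|_2\|\mq\|_2\|\mn^{-1}\|_2 \leq e^{3\epsilon}\cdot e^\epsilon = e^{4\epsilon},
\]
proving the second bound. For the third, $\mi-\mq$ is symmetric with spectrum in $[1-e^\epsilon, 1]$ (using $0\preceq\mq\preceq e^\epsilon\mi$), so $\|\mi-\mq\|_2\leq 1$; hence $\|\omw^{-1/2}(\mi-\mq)\omw^{1/2}\|_\tau = \|\mn(\mi-\mq)\mn^{-1}\|_2 \leq e^{3\epsilon}$. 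The one-factor saving versus item two comes from the fact that the dominant $\mi$ eigenvalues of $\mi-\mq$ survive the similarity transform without incurring an additional $e^\epsilon$.

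For the infinity-norm bound, Cauchy--Schwarz in the PSD form induced by $\mq$ gives $|e_i^\top\omw^{-1/2}\mq\omw^{1/2}h|^2\leq(\omw_{ii}^{-1}\mq_{ii})\cdot(h^\top\omw^{1/2}\mq\omw^{1/2}h)$. The first factor is at most $e^\epsilon\omw_{ii}^{-1}\sigma_i(\widetilde{\mb})$; Lemma~\ref{lem:sigma_approx} together with the diagonal rescaling $\md_0$ above gives $\sigma_i(\widetilde{\mb}) = O(\tau_i)$, and combined with $\omw_{ii}^{-1}\leq e^{3\epsilon}\mu^{-1}\tau_i^{-1}$ this yields $O(\mu^{-1})$. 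The second factor is at most $e^{4\epsilon}\mu\norm{h}_\tau^2$ by the item-two calculation. For $\epsilon\in[0,1/80)$ the absolute constants combine to give $\leq 2\norm{h}_\tau$. The mixed-norm statement then follows immediately from the second and fourth bounds and the definition of $\norm{\cdot}_{\tau+\infty}$:
\[
\norm{\mm h}_{\tau+\infty} = \norm{\mm h}_\infty + \cnorm\norm{\mm h}_\tau \leq (2 + \cnorm e^{4\epsilon})\norm{h}_\tau \leq (e^{4\epsilon} + 2/\cnorm)\norm{h}_{\tau+\infty}.
\]
The main subtlety is the first claim's Loewner comparison: the projections $\mproj(\widetilde{\mb})$ and $\mproj(\ms^{-1/2}\mx^{1/2}\ma)$ have different column spaces, so this must be handled via the Gram-matrix approximation and by absorbing the diagonal rescaling $\md_0$ between the two parameterizations rather than by any direct spectral comparison of the two projections themselves.
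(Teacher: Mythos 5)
Your proof follows essentially the same route as the paper's: derive $\mq\preceq e^{\epsilon}\mproj(\oms^{-1/2}\omx^{1/2}\ma)\preceq e^{\epsilon}\mi$ (hence $\|\mq\|_{2}\leq e^{\epsilon}$) from $\mh\approx_{\epsilon}\ma^{\top}\oms^{-1}\omx\ma$; convert the two $\tau$-norm bounds to $\ell_2$ spectral bounds via the similarity $\mT^{1/2}\omw^{-1/2}$, whose condition number is at most $e^{3\epsilon}$ since $\omw\approx_{3\epsilon}\mu\tau$; use $\|\mi-\mq\|_{2}\leq 1$ to save one $e^{\epsilon}$; and bound the $\ell_{\infty}$ norm by Cauchy--Schwarz in the quadratic form of $\mq$ together with a diagonal leverage-score comparison and Lemma~\ref{lem:sigma_approx}. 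Reading the $\tau$-norm bounds off Fact~\ref{lem:norm_facts} as operator-norm estimates rather than vector estimates, as the paper does, is a cosmetic difference.

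One caveat, which you flag in your closing paragraph but do not actually resolve, and which affects the paper's own write-up as well: the first claim $\mq\preceq e^{3\epsilon}\mproj(\ms^{-1/2}\mx^{1/2}\ma)$ cannot hold as a genuine Loewner inequality unless $\omx\oms^{-1}$ is a scalar multiple of $\mx\ms^{-1}$, since $\mq$ and $\mproj(\ms^{-1/2}\mx^{1/2}\ma)$ then have different $d$-dimensional ranges, and one can produce $v$ with $\mproj(\ms^{-1/2}\mx^{1/2}\ma)v=0$ but $v^{\top}\mq v>0$. Setting $\md_{0}\defeq\oms^{-1/2}\omx^{1/2}\ms^{1/2}\mx^{-1/2}\approx_{\epsilon}\mi$ and using $\mh\approx_{3\epsilon}\ma^{\top}\ms^{-1}\mx\ma$ yields only $\mq\preceq e^{3\epsilon}\md_{0}\,\mproj(\ms^{-1/2}\mx^{1/2}\ma)\,\md_{0}$, and this right-hand side is not Loewner-dominated by any scalar multiple of $\mproj(\ms^{-1/2}\mx^{1/2}\ma)$, so ``absorbing $\md_{0}$ into the projection parameterization'' does not close the gap. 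What is used downstream, both here and in the paper, is only the entrywise diagonal inequality $\mq_{ii}\leq e^{O(\epsilon)}\sigma(\ms^{-1/2}\mx^{1/2}\ma)_{i}$, which is correct because $\big(\md_{0}\mproj(\ms^{-1/2}\mx^{1/2}\ma)\md_{0}\big)_{ii}=(\md_{0})_{ii}^{2}\,\sigma(\ms^{-1/2}\mx^{1/2}\ma)_{i}$; your $\ell_{\infty}$ calculation correctly invokes only this diagonal version.
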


\begin{proof}
Since $\mh\approx_{\epsilon}\ma^{\top}\oms^{-1}\omx\ma$, we have
$\mh^{-1}\preceq e^{\epsilon}(\ma^{\top}\oms^{-1}\omx\ma)^{-1}$ and
\[
\mq\preceq e^{\epsilon}\overline{\ms}^{-1/2}\overline{\mx}^{1/2}\ma(\ma^{\top}\oms^{-1}\omx\ma)^{-1}\ma^{\top}\omx^{1/2}\oms^{-1/2}=e^{\epsilon}\mproj(\oms^{-1/2}\omx^{1/2}\ma)\preceq e^{\epsilon}\mi\,.
\]
Since $\mq$ is PSD this implies $\norm{\mq}_{2}\leq e^{\epsilon}$
and since $\ow\approx_{2\epsilon}w$ and $w\approx_{\epsilon}\mu\tau$
this implies 
\[
\norm{\omw^{-1/2}\mq\omw^{1/2}h}_{\tau}\leq e^{1.5\epsilon}\mu^{-1/2}\norm{\mq\omw^{1/2}h}_{2}\leq e^{2.5\epsilon}\mu^{-1/2}\norm{\omw^{1/2}h}_{2}\leq e^{4\epsilon}\norm h_{\tau}\,.
\]
Similarly, we have $0\preceq\mq\preceq e^{\epsilon}\mi$ and hence
$\|\mi-\mq\|_{2}\leq1$ (using that $\epsilon\leq1/80$). Therefore,
we have
\[
\norm{\omw^{-1/2}(\mi-\mq)\omw^{1/2}h}_{\tau}\leq e^{1.5\epsilon}\mu^{-1/2}\norm{(\mi-\mq)\omw^{1/2}h}_{2}\leq e^{1.5\epsilon}\mu^{-1/2}\norm{\omw^{1/2}h}_{2}\leq e^{3\epsilon}\norm h_{\tau}\,.
\]

Further, by Cauchy Schwarz and $\mq=\mq^{1/2}\mq^{1/2}$ we have
\begin{align*}
\norm{\omw^{-1/2}\mq\omw^{1/2}h}_{\infty}^{2} & =\max_{i\in[n]}\left[e_{i}^{\top}\omw^{-1/2}\mq\omw^{1/2}h\right]^{2}\leq\max_{i\in[n]}\left[\omw^{-1/2}\mq\omw^{-1/2}\right]_{i,i}\cdot\left[h^{\top}\omw^{1/2}\mq\omw^{1/2}h\right]\,
\end{align*}
Since $\mq\preceq e^{\epsilon}\mi$ we have 
\[
h^{\top}\omw^{1/2}\mq\omw^{1/2}h\leq e^{\epsilon}h^{\top}\omw h\preceq\mu e^{4\epsilon}\norm h_{\tau}^{2}\,.
\]
Further, by $\mq\preceq e^{\epsilon}\mproj(\oms^{-1/2}\omx^{1/2})\preceq e^{3\epsilon}\mproj(\ms^{-1/2}\mx^{1/2})$
and Lemma~\ref{lem:sigma_approx} we have that 
\begin{align*}
\max_{i\in[n]}\left[\omw^{-1/2}\mq\omw^{-1/2}\right]_{i,i} & \leq\max_{i\in[n]}\frac{e^{3\epsilon}\sigma(\ms^{-1/2}\mx^{1/2})_{i}}{\ow_{i}}\\
 & \leq\frac{e^{6\epsilon}}{\mu}\max_{i\in[n]}\frac{\sigma(\ms^{-1/2}\mx^{1/2})_{i}}{\tau_{i}}\\
 & \leq\frac{2e^{6\epsilon}}{\mu}\,.
\end{align*}
Combining these and using $\epsilon\leq1/80$ yields the desired bounds.
\end{proof}
Leveraging Lemma~\ref{lem:sigma_approx} we obtain the following
bounds on the multiplicative stability of Newton directions.
\begin{lem}
\label{lem:step_stability} Let $(\delta_{x},\delta_{s})$ denote
an $\epsilon$-Newton direction from $(\mu,\epsilon$)-centered $(x,s)$
and let $\tau\defeq\taureg(x,s)$. Then the following hold
\[
\begin{array}{rclcrcl}
\|\ms^{-1}\delta_{s}\|_{\tau} & \leq & (1-2\alpha)e^{5\epsilon}\|h\|_{\tau} & \enspace & \|\ms^{-1}\delta_{s}\|_{\infty} & \leq & (1-2\alpha)3\|h\|_{\tau}\,,\\
\|\mx^{-1}\delta_{x}\|_{\tau} & \leq & (1+2\alpha)e^{4\epsilon}\|h\|_{\tau}\,,\text{ and} & \enspace & \|\mx^{-1}\delta_{x}\|_{\infty} & \leq & \left(1+2\alpha\right)\left[e^{\epsilon}\norm h_{\infty}+3\|h\|_{\tau}\right]\,.
\end{array}
\]
\end{lem}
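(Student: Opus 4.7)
The plan is to invoke Definition~\ref{def:newton_step} to expand $\delta_x$ and $\delta_s$, pull out the diagonal multiplicative approximation factors $\ms^{-1}\oms$ and $\mx^{-1}\omx$ (which are close to the identity since $(x,s)$ is $(\mu,\epsilon)$-centered and $\ox\approx_\epsilon x$, $\os\approx_\epsilon s$), and then apply the operator bounds already established in Lemma~\ref{lem:proj_bounds}. So the bulk of the work has been absorbed into Lemma~\ref{lem:proj_bounds}; this lemma is mostly bookkeeping.

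For the bounds on $\delta_s$, I would write
\[
\ms^{-1}\delta_s = (1-2\alpha)\,\ms^{-1}\oms\,\omw^{-1/2}\mq\omw^{1/2}h.
\]
Since $\os\approx_\epsilon s$ the diagonal matrix $\ms^{-1}\oms$ has entries in $[e^{-\epsilon},e^\epsilon]$, so left-multiplication by it expands any norm of the form $\|\cdot\|_\tau$ or $\|\cdot\|_\infty$ by at most a factor $e^\epsilon$. Combined with the bounds $\|\omw^{-1/2}\mq\omw^{1/2}h\|_\tau\le e^{4\epsilon}\|h\|_\tau$ and $\|\omw^{-1/2}\mq\omw^{1/2}h\|_\infty\le 2\|h\|_\tau$ from Lemma~\ref{lem:proj_bounds}, this gives $\|\ms^{-1}\delta_s\|_\tau\le(1-2\alpha)e^{5\epsilon}\|h\|_\tau$ and $\|\ms^{-1}\delta_s\|_\infty\le(1-2\alpha)\cdot 2e^{\epsilon}\|h\|_\tau\le(1-2\alpha)\cdot 3\|h\|_\tau$, using $\epsilon\le 1/80$ to absorb $2e^\epsilon$ into the constant $3$.

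The bound on $\|\mx^{-1}\delta_x\|_\tau$ is exactly analogous: write
\[
\mx^{-1}\delta_x = (1+2\alpha)\,\mx^{-1}\omx\,\omw^{-1/2}(\mi-\mq)\omw^{1/2}h,
\]
note $\mx^{-1}\omx$ expands $\|\cdot\|_\tau$ by at most $e^\epsilon$, and apply $\|\omw^{-1/2}(\mi-\mq)\omw^{1/2}h\|_\tau\le e^{3\epsilon}\|h\|_\tau$ from Lemma~\ref{lem:proj_bounds} to conclude $\|\mx^{-1}\delta_x\|_\tau\le(1+2\alpha)e^{4\epsilon}\|h\|_\tau$. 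For the $\infty$-norm bound I would split
\[
\omw^{-1/2}(\mi-\mq)\omw^{1/2}h = h - \omw^{-1/2}\mq\omw^{1/2}h,
\]
so that by the triangle inequality and the $\infty$-norm estimate from Lemma~\ref{lem:proj_bounds} we get $\|\omw^{-1/2}(\mi-\mq)\omw^{1/2}h\|_\infty\le\|h\|_\infty+2\|h\|_\tau$. Applying the $e^\epsilon$ factor from $\mx^{-1}\omx$ and again using $\epsilon\le 1/80$ to replace $2e^\epsilon$ by $3$ in front of the $\|h\|_\tau$ term yields $\|\mx^{-1}\delta_x\|_\infty\le(1+2\alpha)[e^\epsilon\|h\|_\infty+3\|h\|_\tau]$.

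There is no real obstacle: the key structural work (relating $\mproj(\ms^{-1/2}\mx^{1/2}\ma)$ to $\sigma(\ms^{-1/2-\alpha}\mx^{1/2-\alpha}\ma)$, and deriving operator bounds on the projection-like matrix $\mq$) is exactly what Lemma~\ref{lem:proj_bounds} packaged; the only thing to be careful about is the splitting trick for the $\infty$-norm of $\delta_x$, since Lemma~\ref{lem:proj_bounds} does not directly give an $\|\cdot\|_\infty$ bound on $\omw^{-1/2}(\mi-\mq)\omw^{1/2}$, and one genuinely needs the triangle inequality to recover an $\|h\|_\infty$ contribution.
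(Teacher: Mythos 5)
Your proof is correct and follows essentially the same route as the paper's: expand $\delta_x,\delta_s$ via Definition~\ref{def:newton_step}, peel off the diagonal factors $\ms^{-1}\oms$ and $\mx^{-1}\omx$ (each costing $e^\epsilon$), invoke Lemma~\ref{lem:proj_bounds} for the operator bounds on $\omw^{-1/2}\mq\omw^{1/2}$, and use the triangle inequality to recover the $\|h\|_\infty$ term in the $\delta_x$ bound. The constant arithmetic ($2e^\epsilon\le 3$, exponent bookkeeping to get $e^{5\epsilon}$ and $e^{4\epsilon}$) is also consistent with the paper.
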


\begin{proof}
Note that $\ms^{-1}\delta_{s}=(1-2\alpha)\ms^{-1}\oms\omw^{-1/2}\mq\omw^{1/2}h$.
Since $s\approx_{\epsilon}\os$ and $\ow\approx_{2\epsilon}w$ the
bounds on $\|\ms^{-1}\delta_{s}\|_{\tau}$ and $\norm{\ms^{-1}\delta_{s}}_{\infty}$
follow immediately from Lemma~\ref{lem:norm_facts} and Lemma~\ref{lem:proj_bounds}.

Similarly, $\mx^{-1}\delta_{x}=(1+2\alpha)\mx^{-1}\omx\omw^{-1/2}(\mi-\mq)\omw^{1/2}h$.
Since $\ox\approx_{\epsilon}x$ and $\ow\approx_{2\epsilon}w$, we
have $\norm{\mx^{-1}\omx h}_{\tau}\leq e^{\epsilon}\norm h_{\tau}$
and $\norm{\mx^{-1}\omx h}_{\infty}\leq e^{\epsilon}\norm h_{\infty}$.
The bounds on $\|\mx^{-1}\delta_{x}\|_{\tau}$ and $\|\mx^{-1}\delta_{x}\|_{\infty}$
follow by triangle inequality and the same derivation as for $\|\ms^{-1}\delta_{s}\|_{\tau}$
and $\norm{\ms^{-1}\delta_{s}}_{\infty}$.
\end{proof}

\subsection{Newton Step Progress}

\label{sec:newton_progress}

Here we analyze the effect of a Newton step from a $(\mu,\epsilon)$
centered point on the potential $\Phi(x,s,\mu)$. We show that up
to some additive error on $h$ in the appropriate mixed norm the potential
decreases by $\grad\Phi(v)^{\top}h$. The main result of this section
is the following.
\begin{thm}
\label{thm:newton_step} Let $(\delta_{x},\delta_{s})$ denote an
$\epsilon$-Newton direction from $(\mu,\epsilon$)-centered $(x,s)$
and suppose that $\gamma\defeq\mixedNorm h{\tau}\leq\epsilon\leq\frac{\alpha}{4000}$
for $\tau\defeq\taureg(x,s)$. There are $v',e\in\R^{n}$ satisfying
\[
\|v'-\mu\mw^{-1}\tau\|_{\infty}\leq10\gamma\text{ and }\mixedNorm e{\tau}\leq\left(1-\frac{\alpha}{2}\right)\gamma
\]
such that
\[
\Phi(x+\delta_{x},s+\delta_{s},\mu)=\Phi(x,s,\mu)-\grad\Phi(v')^{\top}(h+e)
\]
where $[\grad\Phi(v')]_{i}$ is the derivative of $\phi$ at $(v')_{i}$
for all $i\in[n]$.
\end{thm}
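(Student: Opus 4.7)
The plan is to apply the one-dimensional mean value theorem entrywise to $\phi$ and read off both $v'$ and $e$ from the resulting identity. Set $v^{\mathrm{new}}_i \defeq \mu\,\tau(x+\delta_x,s+\delta_s)_i / [(x+\delta_x)(s+\delta_s)]_i$; then the MVT gives $\phi(v^{\mathrm{new}}_i) - \phi(v_i) = \phi'(v'_i)(v^{\mathrm{new}}_i - v_i)$ for some $v'_i$ on the segment between $v_i$ and $v^{\mathrm{new}}_i$, and summing yields the desired identity with $e \defeq -(v^{\mathrm{new}} - v) - h$. The bound $\|v' - v\|_\infty \le 10\gamma$ then reduces to $\|v^{\mathrm{new}} - v\|_\infty \le 10\gamma$, which follows from the $\ell_\infty$ step-size bounds in Lemma~\ref{lem:step_stability} (controlling the relative change in $w=xs$) together with smoothness of the regularized leverage scores $\tau$ viewed as a function of $(\log x,\log s)$.

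The heart of the proof is bounding $\mixedNorm{e}{\tau}$ by $(1-\alpha/2)\gamma$. I would Taylor expand $v^{\mathrm{new}} - v$ to first order as $v\,(\mT^{-1}\delta_\tau - \mw^{-1}\delta_w)$, with $\delta_w \approx x\delta_s + s\delta_x$ and, using the leverage-score Jacobian $\partial \sigma_i/\partial p_j = [\mSigma - \mproj^{(2)}]_{ij} = \mLambda_{ij}$ applied to $\mb = \ms^{-1/2-\alpha}\mx^{1/2-\alpha}\ma$ at log-weights $p = (1-2\alpha)\log x - (1+2\alpha)\log s$, $\delta_\tau \approx \mLambda\bigl[(1-2\alpha)\mx^{-1}\delta_x - (1+2\alpha)\ms^{-1}\delta_s\bigr]$. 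Substituting the Newton-step formulas from Definition~\ref{def:newton_step} and using $v \approx \vones$ within $e^{\pm\epsilon}$ by centrality yields, to first order,
\[
e \;\approx\; \bigl[\,2\alpha\mi - (1-4\alpha^2)\,\mT^{-1}\mLambda\,\bigr]\,\omw^{-1/2}(\mi - 2\mq)\,\omw^{1/2}\,h,
\]
where the cross-factor $(1+2\alpha)(1-2\alpha) = 1 - 4\alpha^2$ and the $(1+2\alpha)\mi - 4\alpha\mq$ structure of $\mw^{-1}\delta_w$ are precisely what the $(1\pm 2\alpha)$ step-size tuning produces. The target bound then reduces to an operator-norm estimate in the mixed norm on the above matrix product.

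The remaining error terms split into (i) second-order Taylor remainders from $\delta_x\delta_s$ in $\delta_w$, from the Hessian of $\sigma$, and from the expansion of $1/(w+\delta_w)$, and (ii) approximation slack: $\ox\approx_\epsilon x$, $\os\approx_\epsilon s$, $\mh\approx_\epsilon\ma^\top\oms^{-1}\omx\ma$, and the replacement of $\mproj(\ms^{-1/2-\alpha}\mx^{1/2-\alpha}\ma)$ by $\mq \approx \mproj(\oms^{-1/2}\omx^{1/2}\ma)$ permitted by Lemma~\ref{lem:sigma_approx}. Since Lemma~\ref{lem:step_stability} gives $\mixedNorm{\mx^{-1}\delta_x}{\tau},\mixedNorm{\ms^{-1}\delta_s}{\tau} = O(\gamma)$ and Lemma~\ref{lem:proj_bounds} bounds $\|\omw^{-1/2}\mq\omw^{1/2}\|_{\tau+\infty}$, each quadratic term is $O(\gamma^2)$ and each first-order approximation error is $O(\epsilon\gamma)$ in the mixed norm; under the hypothesis $\gamma \le \epsilon \le \alpha/4000$ both contributions sit well below $\alpha\gamma/1000$ and can be absorbed into the slack between the nominal contraction and the claimed $(1-\alpha/2)\gamma$.

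The main obstacle is establishing the contraction factor $(1-\alpha/2)$ itself, i.e.\ the mixed-norm operator bound on $[2\alpha\mi - (1-4\alpha^2)\mT^{-1}\mLambda]\,\omw^{-1/2}(\mi - 2\mq)\omw^{1/2}$. The $\|\cdot\|_\tau$ component follows from the Loewner-order estimate $0 \preceq \mLambda \preceq \mSigma \preceq \mT$ together with Lemma~\ref{lem:proj_bounds}, which bounds $\mq$ spectrally in the $\omw$-geometry. The $\|\cdot\|_\infty$ component requires using that the row sums of $\mproj^{(2)}$ equal $\sigma$ (so $\mT^{-1}\mproj^{(2)}\vones \le \vones$ entrywise) and that the regularization $\tau \ge \frac{d}{n}\vones$ prevents $\mT^{-1}\mLambda$ from blowing up entrywise. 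Combining the two in the mixed norm with the $\cnorm = 10/\alpha$ weighting yields the $(1-\alpha/2)$ factor directly from the $2\alpha$ vs $(1-4\alpha^2)$ weighting of the two terms; this is the LS-barrier progress calculation of \cite{lsJournal19} adapted to the approximate-Hessian setting with $\mq$ in place of the exact projection.
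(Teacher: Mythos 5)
Your proposal is correct and matches the paper's proof in all its essential ingredients: the same chain rule for the regularized leverage scores via $\mLambda$, the same first-order operator $\bigl[2\alpha\mi - (1-4\alpha^2)\mT^{-1}\mLambda\bigr]\omw^{-1/2}(\mi-2\mq)\omw^{1/2}$ arising from the $(1\pm2\alpha)$ step-size tuning, the same operator-norm estimate from $0\preceq\mLambda\preceq\mSigma\preceq\mT$ plus Lemma~\ref{lem:proj_bounds}, and the same absorption of approximation slack into the $(1-\alpha/2)$ gap. The one variation is where you apply the mean value theorem: the paper applies it to the scalar function $t\mapsto\Phi(x+t\delta_x,\,s+t\delta_s,\,\mu)$ on $[0,1]$ (Lemma~\ref{lem:deriv_exact}), so the resulting $v'=v_{t_*}$ lies on the path, $e=(\mj_{t_*}-\mi)h$, and the Jacobian $\mj_{t_*}$ is an exact chain-rule expression with no Taylor remainder; you apply MVT entrywise to $\phi$ between $v_i$ and $v^{\mathrm{new}}_i$, which makes the $\ell_\infty$ control of $v'$ immediate from that of $v^{\mathrm{new}}-v$ but introduces the genuine second-order remainders in $v^{\mathrm{new}}-v$ that you then (correctly) argue are $O(\gamma^2)$ and absorbable under $\gamma\le\epsilon\le\alpha/4000$. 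This is a bookkeeping trade-off, not a different route: the paper's integral form $v^{\mathrm{new}}-v=-\int_0^1\mj_t h\,dt$ silently avoids those remainders by keeping $\mj_t$ exact along the path.
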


We prove this theorem in multiple steps. First, in Lemma~\ref{lem:deriv_exact}
we directly compute the change in the potential function by chain
rule and mean value theorem. Then, in Lemma~\ref{lem:dir_lap_bound}
and Lemma~\ref{lem:stability_helper} we bound the terms in this
change of potential formula and use this to provide a formula for
the approximate the change in the potential in Lemma~\ref{deriv_approx}.
Finally, in Lemma~\ref{deriv_approx} we bound this approximate change
and use this to prove Theorem~\ref{thm:newton_step}.
\begin{lem}
\label{lem:deriv_exact} In the setting of Theorem~\ref{thm:newton_step}
for all $t\in[0,1]$ let $x_{t}=x+t\delta_{x}$ and $s_{t}=s+t\delta_{s}$.
Then for some $t_{*}$, we have that 
\[
\Phi(x+\delta_{x},s+\delta_{s},\mu)-\Phi(x_{0},s_{0},\mu)=-\grad\Phi(v_{t_{*}})^{\top}\mj_{t_{*}}h
\]
where for all $t\in[0,1]$ we let $w_{t}\defeq\mx_{t}s_{t}$,$v_{t}\defeq\mu\mw_{t}^{-1}\tau_{t}$,
$\tau_{t}\defeq\taureg(x_{t},s_{t})$, $\mT_{t}\defeq\mdiag(\tau_{t})$,
$\mLambda_{t}\defeq\mLambda(\mx_{t}^{1/2-\alpha}\ms_{t}^{-1/2-\alpha}\ma)$
and $\mj_{t}\defeq(1-2\alpha)\mj_{t}^{s}+(1+2\alpha)\mj_{t}^{x}$
with
\begin{eqnarray*}
\mj_{t}^{s} & \defeq & \left[\mu\mw_{t}^{-1}\mT_{t}\right]\left[\mi+(1+2\alpha)\mT_{t}^{-1}\mLambda_{t}\right]\ms_{t}^{-1}\oms\omw^{-1/2}\mq\omw^{1/2}\text{ , and }\\
\mj_{t}^{x} & \defeq & \left[\mu\mw_{t}^{-1}\mT_{t}\right]\left[\mi-(1-2\alpha)\mT_{t}^{-1}\mLambda_{t}\right]\mx_{t}^{-1}\omx\omw^{-1/2}(\mi-\mq)\omw^{1/2}\,.
\end{eqnarray*}
\end{lem}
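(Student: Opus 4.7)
The plan is to reduce the statement to a direct computation of $\tfrac{d}{dt}\Phi(x_t,s_t,\mu)$ and then invoke the mean value theorem, so the existence of $t_*\in[0,1]$ is automatic: if we prove $\tfrac{d}{dt}\Phi(x_t,s_t,\mu)=-\nabla\Phi(v_t)^{\top}\mj_t h$ for all $t\in[0,1]$, then applying MVT to the scalar function $t\mapsto\Phi(x_t,s_t,\mu)$ on $[0,1]$ yields exactly the claimed identity. Thus the entire content of the lemma is a derivative calculation, and the work is to identify the Jacobian of $t\mapsto v_t=\mu\mw_t^{-1}\tau_t$ with the matrix $\mj_t$ defined in the statement.

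First I would apply the chain rule to write $\tfrac{d}{dt}\Phi(x_t,s_t,\mu)=\nabla\Phi(v_t)^{\top}\tfrac{d v_t}{dt}$, using that $\Phi$ is a separable function of $v$. Next I would expand $\tfrac{d v_t}{dt}$ by the product rule on $v_t=\mu\mw_t^{-1}\tau_t$. Since $w_t=x_t\circ s_t$, one gets $\mw_t^{-1}\tfrac{dw_t}{dt}=\mx_t^{-1}\delta_x+\ms_t^{-1}\delta_s$, which after factoring out $\mu\mw_t^{-1}\mT_t$ produces the base term $-[\mu\mw_t^{-1}\mT_t](\mx_t^{-1}\delta_x+\ms_t^{-1}\delta_s)$.

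The heart of the calculation is the derivative of the leverage scores $\tau_t=\sigma(\mx_t^{1/2-\alpha}\ms_t^{-1/2-\alpha}\ma)+\tfrac{d}{n}\vones$. I would use the standard identity that for any non-degenerate diagonal rescaling $\md_t$ with logarithmic derivative $u_i\defeq\tfrac{d}{dt}\log d_{t,i}$, one has $\tfrac{d}{dt}\sigma(\md_t\ma)=2\mLambda(\md_t\ma)\,u$, which follows by differentiating $\sigma_i=d_{t,i}^{2}a_i^{\top}(\ma^{\top}\md_t^{2}\ma)^{-1}a_i$ and recognizing $\mproj^{(2)}$ in the cross term. Specializing $d_{t,i}=x_{t,i}^{1/2-\alpha}s_{t,i}^{-1/2-\alpha}$ gives $u=(1/2-\alpha)\mx_t^{-1}\delta_x-(1/2+\alpha)\ms_t^{-1}\delta_s$, and therefore
\[
\tfrac{d\tau_t}{dt}=(1-2\alpha)\mLambda_t\mx_t^{-1}\delta_x-(1+2\alpha)\mLambda_t\ms_t^{-1}\delta_s.
\]
Plugging this in and grouping terms by whether they act on $\delta_x$ or on $\delta_s$ yields
\[
\tfrac{dv_t}{dt}=-[\mu\mw_t^{-1}\mT_t][\mi-(1-2\alpha)\mT_t^{-1}\mLambda_t]\mx_t^{-1}\delta_x-[\mu\mw_t^{-1}\mT_t][\mi+(1+2\alpha)\mT_t^{-1}\mLambda_t]\ms_t^{-1}\delta_s.
\]

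Finally, I would substitute $\delta_x=(1+2\alpha)\omx\omw^{-1/2}(\mi-\mq)\omw^{1/2}h$ and $\delta_s=(1-2\alpha)\oms\omw^{-1/2}\mq\omw^{1/2}h$ from Definition~\ref{def:newton_step}; this turns the two bracketed expressions above into exactly $(1+2\alpha)\mj_t^{x}h$ and $(1-2\alpha)\mj_t^{s}h$ respectively, giving $\tfrac{dv_t}{dt}=-\mj_t h$ and hence the lemma after MVT. The main obstacle is purely bookkeeping: carrying the sign patterns from the $\pm(1/2\pm\alpha)$ exponents through $\mLambda_t$ without swapping the $(1\pm2\alpha)$ factors between the $s$- and $x$-branches, so that after substituting $\delta_x,\delta_s$ the outer $(1+2\alpha)$ and $(1-2\alpha)$ factors land on $\mj_t^{x}$ and $\mj_t^{s}$ exactly as written; nothing here is deep, but the algebra has to be performed deliberately.
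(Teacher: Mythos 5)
Your proposal is correct and follows the paper's proof essentially verbatim: apply the mean value theorem to the scalar function $t\mapsto\Phi(v_t)$, compute $\tfrac{dv_t}{dt}$ via the product rule on $\mu\mw_t^{-1}\tau_t$, use the derivative of leverage scores under diagonal rescaling (Lemma~\ref{lem:deriv:proj}) to get $\tfrac{d\tau_t}{dt}=(1-2\alpha)\mLambda_t\mx_t^{-1}\delta_x-(1+2\alpha)\mLambda_t\ms_t^{-1}\delta_s$, and then substitute the explicit formulas for $\delta_x,\delta_s$ so the $(1\pm2\alpha)$ prefactors land on $\mj_t^{x}$ and $\mj_t^{s}$. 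The bookkeeping you flag as the main obstacle (keeping the $(1\pm2\alpha)$ factors on the right branches) is exactly where the paper's computation spends its effort, and your sign checks work out.
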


\begin{proof}
By the mean value theorem, there is $t\in[0,1]$ such that
\begin{align*}
\Phi(v_{1})-\Phi(v_{0}) & =\sum_{i\in[n]}\left[\grad\Phi(\mu\cdot w_{t}^{-1}\cdot\tau_{t})\right]_{i}\cdot\mu\cdot\left(\frac{1}{[w_{t}]_{i}}\frac{d[\tau_{t}]_{i}}{dt}-\frac{[\tau_{t}]_{i}}{[w_{t}]_{i}^{2}}\frac{d[w_{t}]_{i}}{dt}\right)\\
 & =\mu\cdot\grad\Phi(v_{t})^{\top}\mw_{t}^{-1}\left(\frac{d\tau_{t}}{dt}-\mT_{t}\mw_{t}^{-1}\frac{dw_{t}}{dt}\right)\,.
\end{align*}
Now, 
\[
\mw_{t}^{-1}\frac{dw_{t}}{dt}=\mw_{t}^{-1}\left[\mx_{t}\delta_{s}+\ms_{t}\delta_{x}\right]=\ms_{t}^{-1}\delta_{s}+\mx_{t}^{-1}\delta_{x}
\]
 and Lemma~\ref{lem:deriv:proj} combined with chain rule implies
that
\begin{align*}
\frac{d\tau_{t}}{dt} & =2\mLambda_{t}\left[\ms_{t}^{-1/2-\alpha}\mx_{t}^{1/2-\alpha}\right]^{-1}\cdot\left[\frac{d}{dt}\ms_{t}^{-1/2-\alpha}x_{t}^{1/2-\alpha}\right]=\mLambda_{t}\left((1-2\alpha)\mx_{t}^{-1}\delta_{x}-(1+2\alpha)\ms_{t}^{-1}\delta_{s}\right)\,.
\end{align*}
Consequently,
\begin{align*}
\mT_{t}\mw_{t}^{-1}\frac{dw_{t}}{dt}-\frac{d\tau_{t}}{dt} & =\left[\mT_{t}+(1+2\alpha)\mLambda_{t}\right]\ms_{t}^{-1}\delta_{s}+\left[\mT_{t}-(1-2\alpha)\mLambda_{t}\right]\mx_{t}^{-1}\delta_{x}\,.
\end{align*}
Combining yields that 
\[
\Phi(v_{1})-\Phi(v_{0})=-\grad\Phi(v_{t})^{\top}\left[\mu\mw_{t}^{-1}\mT_{t}\right]\left(\left[\mi+(1+2\alpha)\mT_{t}^{-1}\mLambda_{t}\right]\ms_{t}^{-1}\delta_{s}+\left[\mi-(1-2\alpha)\mT_{t}^{-1}\mLambda_{t}\right]\mx_{t}^{-1}\delta_{x}\right)\,.
\]
The result then follows by the definition of $\delta_{s}$ and $\delta_{x}$.
\end{proof}
\begin{lem}
\label{lem:dir_lap_bound} In the setting of Lemma~\ref{lem:deriv_exact}
for all $y\in\R^{n}$, $t\in[0,1]$, and $\beta\in[0,1/2]$ we have
\[
\norm{(\mT_{t}^{-1}\mLambda_{t}-\beta\mi)y}_{\infty}\leq\left(1-\beta\right)\norm y_{\infty}+\min\{\norm y_{\infty},\norm y_{\tau_{t}}\}\text{ and }\norm{(\mT_{t}^{-1}\mLambda_{t}-\beta\mi)y}_{\tau_{t}}\leq\left(1-\beta\right)\norm y_{\tau_{t}}\,.
\]
Consequently, $\mixedNorm{\mT_{t}^{-1}\mLambda_{t}}{\tau_{t}}\leq1-\beta+(1+\cnorm)^{-1}$.
\end{lem}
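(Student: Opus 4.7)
The three claims split naturally: a $\tau_t$-norm bound, an $\infty$-norm bound with a ``$\min$'' additive term, and a mixed-norm corollary. Throughout, drop the subscript $t$ and let $\mT = \mdiag(\tau)$, $\mSigma = \mdiag(\sigma)$, $\mLambda = \mSigma - \mproj^{(2)}$. I will use three structural facts about leverage scores and the second-order projection matrix: (i) $\mLambda$ is symmetric PSD (it is the Laplacian-type quantity $\mSigma - \mproj^{(2)}$, and Schur product preserves positivity); (ii) $\mLambda \preceq \mSigma \preceq \mT$, the last inequality because $\mT = \mSigma + (d/n)\mi$; and (iii) $\sum_{j} \mproj_{ij}^2 = \mproj_{ii} = \sigma_i$, since $\mproj$ is an orthogonal projection.

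For the $\tau$-norm bound, I would invoke the norm identity $\|\mm\|_{\tau} = \|\mT^{1/2} \mm \mT^{-1/2}\|_2$ from Fact~\ref{lem:norm_facts}, so it suffices to bound the operator norm of the symmetric matrix $\mT^{-1/2}\mLambda \mT^{-1/2} - \beta \mi$. From (i) and (ii), $0 \preceq \mT^{-1/2}\mLambda \mT^{-1/2} \preceq \mi$, so its spectrum lies in $[0,1]$ and after shifting by $-\beta \mi$ it lies in $[-\beta,1-\beta]$. For $\beta \in [0,1/2]$ the maximum absolute eigenvalue is $1-\beta$, giving the claim.

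For the $\infty$-norm bound, I would write $\mT^{-1}\mLambda - \beta \mi = (\mT^{-1}\mSigma - \beta \mi) - \mT^{-1}\mproj^{(2)}$ and handle each piece. The first piece is diagonal with entries $\sigma_i/(\sigma_i + d/n) - \beta \in [-\beta,1-\beta]$, yielding $\|(\mT^{-1}\mSigma - \beta\mi)y\|_\infty \leq (1-\beta)\|y\|_\infty$. For the second piece, I would bound $|[\mT^{-1}\mproj^{(2)} y]_i|$ two ways: using triangle inequality $|\sum_j \mproj_{ij}^2 y_j| \le \sigma_i \|y\|_\infty$ (via (iii)), giving a $\|y\|_\infty$ bound after dividing by $\tau_i \geq \sigma_i$; and using Cauchy--Schwarz against the PSD matrix $\mproj^{(2)}$, namely $|e_i^\top \mproj^{(2)} y|^2 \le (e_i^\top \mproj^{(2)} e_i)(y^\top \mproj^{(2)} y) \le \sigma_i \cdot y^\top \mT y$, which bounds the off-diagonal contribution in terms of $\|y\|_\tau$. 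Combining gives the $(1-\beta)\|y\|_\infty + \min\{\|y\|_\infty, \|y\|_\tau\}$ form.

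For the mixed-norm consequence, I would add the two bounds after multiplying the $\tau$-bound by $\cnorm$:
\[
\|(\mT^{-1}\mLambda - \beta\mi)y\|_{\tau+\infty} \leq (1-\beta)\|y\|_{\tau+\infty} + \min\{\|y\|_\infty,\|y\|_\tau\}.
\]
The final step is the elementary inequality $\min\{\|y\|_\infty,\|y\|_\tau\} \leq \|y\|_{\tau+\infty}/(1+\cnorm)$, proved by a case split on which of $\|y\|_\infty,\|y\|_\tau$ is smaller and then using the definition $\|y\|_{\tau+\infty} = \|y\|_\infty + \cnorm \|y\|_\tau$. This delivers $\|\mT^{-1}\mLambda - \beta\mi\|_{\tau+\infty} \le 1 - \beta + (1+\cnorm)^{-1}$.

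The main obstacle is squeezing the right constant out of the off-diagonal piece in the $\infty$-norm bound; the $\tau$-norm part and the mixed-norm assembly are essentially mechanical. Specifically, the $\|y\|_\tau$ side of the $\min$ is delicate because $\sqrt{\sigma_i}/\tau_i$ is not bounded by $1$ in general, so I would exploit PSD Cauchy--Schwarz (and possibly the complementary bound $\mproj^{(2)} \preceq \mSigma$) carefully rather than a naive entry-by-entry argument, and then verify that the resulting constant fits cleanly into the ``$\min$'' expression when combined with the diagonal part's contribution.
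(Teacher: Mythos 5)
Your overall decomposition matches the paper's: the same split into a $\tau_t$-norm bound via the PSD ordering $\mzero\preceq\mT_t^{-1/2}\mLambda_t\mT_t^{-1/2}\preceq\mi$, an $\infty$-norm bound via $\mT_t^{-1}\mLambda_t-\beta\mi=(\mT_t^{-1}\mSigma_t-\beta\mi)-\mT_t^{-1}\mproj_t^{(2)}$, and a mixed-norm assembly that is algebraically equivalent to the paper's weighted combination (your case split gives $\min\{\norm y_\infty,\norm y_{\tau_t}\}\le\norm y_{\tau_t+\infty}/(1+\cnorm)$, the paper takes $c=1/(1+\cnorm)$). The $\tau_t$-norm bound, the $\norm y_\infty$ side of the $\infty$-norm bound, and the final assembly are all correct.

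However, your treatment of the $\norm y_{\tau_t}$ side contains a genuine gap, and you yourself flag it. Your PSD Cauchy--Schwarz reads $|e_i^\top\mproj^{(2)}y|^2\le(e_i^\top\mproj^{(2)}e_i)(y^\top\mproj^{(2)}y)\le\sigma_i\cdot y^\top\mT y$, which after dividing by $\tau_i$ gives only $\sqrt{\sigma_i}/\tau_i\cdot\norm y_\tau$, and as you note $\sqrt{\sigma_i}/\tau_i$ is unbounded (e.g. $\sigma_i=d/n$ gives $\sqrt{\sigma_i}/\tau_i\approx\sqrt{n/d}/2$). The repair is precisely the two refinements you gestured at but did not carry out: use the exact value $e_i^\top\mproj^{(2)}e_i=\mproj_{ii}^2=\sigma_i^2$ (rather than the lossy upper bound $\sigma_i$), and compare $y^\top\mproj^{(2)}y$ against $\mSigma$ rather than $\mT$, so that
\[
|e_i^\top\mproj^{(2)}y|^2\le\sigma_i^2\cdot y^\top\mSigma y
\quad\Rightarrow\quad
\big|[\mT^{-1}\mproj^{(2)}y]_i\big|\le\frac{\sigma_i}{\tau_i}\norm y_{\mSigma}\le\norm y_{\tau}\,.
\]
The paper avoids rederiving this by invoking item (4) of Lemma~\ref{lem:tool:projection_matrices}, $\norm{\mSigma^{-1}\mproj^{(2)}x}_\infty\le\norm x_{\mSigma}$, whose underlying proof is a coordinate-wise Cauchy--Schwarz combined with $\mproj_{ij}^2\le\sigma_i\sigma_j$: $|\sum_j\mproj_{ij}^2y_j|\le(\sum_j\mproj_{ij}^2)^{1/2}(\sum_j\mproj_{ij}^2y_j^2)^{1/2}\le\sqrt{\sigma_i}\cdot\sqrt{\sigma_i}\norm y_{\mSigma}=\sigma_i\norm y_{\mSigma}$. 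Either form of the argument closes your gap; without one of them the $\|y\|_\tau$ half of the $\min$ is unproven.
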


\begin{proof}
Let $\sigma_{t}\defeq\sigma(\mx_{t}^{1/2-\alpha}\ms_{t}^{-1/2-\alpha}\ma)$,
$\mSigma_{t}\defeq\mdiag(\sigma_{t})$, and $\mproj_{t}^{(2)}\defeq\mproj^{(2)}(\mx_{t}^{1/2-\alpha}\ms_{t}^{-1/2-\alpha}\ma)$.
Since $\mLambda_{t}=\mSigma_{t}-\mproj_{t}^{(2)}$, Lemma~\ref{lem:tool:projection_matrices}
and $\sigma_{t}\leq\tau_{t}$ implies that 
\begin{align}
\norm{(\mT_{t}^{-1}\mLambda_{t}-\beta\mi)y}_{\infty} & \leq\norm{\left(\mT_{t}^{-1}\mSigma_{t}-\beta\mi\right)y}_{\infty}+\norm{\mT_{t}^{-1}\mproj_{t}^{(2)}y}_{\infty}\nonumber \\
 & \leq\max\left\{ 1-\beta,\beta\right\} \norm y_{\infty}+\min\{\norm y_{\infty},\norm y_{\tau_{t}}\}.\label{eq:lap_bound_1}
\end{align}
On the other hand, we know that $\mzero\preceq\mLambda_{t}\preceq\mSigma_{t}\preceq\mT_{t}$
by Lemma~\ref{lem:tool:projection_matrices} and that $\sigma_{t}\leq\tau_{t}$.
Consequently, $\mzero\preceq\mT_{t}^{-1/2}\mLambda_{t}\mT_{t}^{-1/2}\preceq\mi$
and Lemma~\ref{lem:norm_facts} yields
\begin{equation}
\norm{\mT_{t}^{-1}\mLambda_{t}-\beta\mi}_{\tau_{t}}\leq\norm{\mT_{t}^{-1/2}\mLambda_{t}\mT_{t}^{-1/2}-\beta\mi}_{2}\leq\max\left\{ 1-\beta,\beta\right\} \,.\label{eq:lap_bound_2}
\end{equation}
Since for all $c\in[0,1]$ we have $\min\{\norm y_{\infty},\norm y_{\tau_{t}}\}\leq c\norm y_{\infty}+(1-c)\norm y_{\tau_{t}}$
and $\max\{1-\beta,\beta\}\leq1-\beta$ as $\beta\leq1/2$ combining
(\ref{eq:lap_bound_1}) and (\ref{eq:lap_bound_2}) yields that for
all $c\in[0,1]$ it is the case that
\[
\mixedNorm{\left(\mT_{t}^{-1}\mLambda_{t}-\beta\mi\right)y}{\tau_{t}}\leq(1-\beta+c)\norm y_{\infty}+(1-c+\cnorm(1-\beta))\cdot\norm y_{\tau_{t}}\,.
\]
Since for $c=1/(1+\cnorm)$ we have $\cnorm c=1-c$ we have $\cnorm\left(1-\beta+c\right)=1-c+\cnorm(1-\beta)$
and the result follows.
\end{proof}
\begin{lem}
\label{lem:stability_helper} In the setting of Lemma~\ref{lem:deriv_exact}
we have that $x_{t}\approx_{\frac{3}{2}\epsilon}x_{0}$ and $s_{t}\approx_{\frac{3}{2}\epsilon}s_{0}$.
Consequently, $\tau_{t}\approx_{3\epsilon}\tau_{0}$ and for all $y\in\R^{n}$
we have
\[
\norm{\omw^{-1/2}\mq\omw^{1/2}h}_{\tau_{t}}\leq e^{10\epsilon}\norm h_{\tau_{t}}\text{ and }\norm{\omw^{-1/2}\mq\omw^{1/2}h}_{\infty}\leq2e^{3\epsilon}\norm h_{\tau_{t}}\,.
\]
For all $\beta\in[0,1/2]$ we have
\[
\mixedNorm{\omw^{-1/2}\mq\omw^{1/2}-\beta\mi}{\tau_{t}}\leq\frac{2e^{3\epsilon}}{\cnorm}+e^{6\epsilon}(e^{16\epsilon}-\beta)\le2\,.
\]
\end{lem}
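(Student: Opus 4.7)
My plan is to establish the five claims in the stated order, leaning on Lemma~\ref{lem:step_stability} (Newton-step stability), Lemma~\ref{lem:proj_bounds} (projection bounds) and the centrality hypothesis.

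First, for $x_t \approx_{3\epsilon/2} x_0$ and $s_t \approx_{3\epsilon/2} s_0$: since $x_t = x_0 + t\delta_x$ with $t \in [0,1]$, it suffices to show $\|\mx_0^{-1}\delta_x\|_\infty \leq \min(e^{3\epsilon/2}-1, 1-e^{-3\epsilon/2})$, and similarly for $s$. Lemma~\ref{lem:step_stability} gives $\|\mx_0^{-1}\delta_x\|_\infty \leq (1+2\alpha)[e^\epsilon \|h\|_\infty + 3\|h\|_\tau]$. Decomposing $\mixedNormFull{h}{\tau} = \gamma$ yields $\|h\|_\infty \leq \gamma$ and $\|h\|_\tau \leq \gamma/\cnorm = \alpha\gamma/10$, and combining with $\gamma \leq \epsilon \leq \alpha/4000$ makes the required inequality a short arithmetic check. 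The $s_t$ bound is analogous and easier, since the stability lemma gives $\|\ms_0^{-1}\delta_s\|_\infty \leq 3(1-2\alpha)\|h\|_\tau$.

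Next, for $\tau_t \approx_{3\epsilon}\tau_0$: combining the two preceding bounds gives $\mx_t^{1-2\alpha}\ms_t^{-1-2\alpha} \approx_{3\epsilon}\mx_0^{1-2\alpha}\ms_0^{-1-2\alpha}$, by observing $(1-2\alpha)\cdot(3\epsilon/2) + (1+2\alpha)\cdot(3\epsilon/2) = 3\epsilon$. Writing $\sigma(\mE^{1/2}\ma)_i = \mE_{ii}\cdot[\ma(\ma^\top\mE\ma)^{-1}\ma^\top]_{ii}$ and using that both factors are multiplicatively stable under perturbation of $\mE$ (the first trivially, the second since $\mE\approx_\delta \mE'$ yields $(\ma^\top\mE\ma)^{-1}\approx_\delta (\ma^\top\mE'\ma)^{-1}$ which transfers to PSD matrix diagonal entries) gives $\sigma_t \approx \sigma_0$. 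The additive shift $\tfrac{d}{n}\vones$ preserves the multiplicative approximation since both $\sigma_i$ and $d/n$ lie in $[0,1]$.

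For the $\tau_t$-norm and $\ell_\infty$ bounds on $\omw^{-1/2}\mq\omw^{1/2}h$: Lemma~\ref{lem:proj_bounds} provides the corresponding bounds with $\tau_0$ in place of $\tau_t$, namely $e^{4\epsilon}\|h\|_{\tau_0}$ and $2\|h\|_{\tau_0}$ respectively. The relation $\tau_t\approx_{3\epsilon}\tau_0$ implies $\|u\|_{\tau_t}\approx_{3\epsilon/2}\|u\|_{\tau_0}$ for any vector $u$, so two such conversions (one on the input, one on the output) upgrade the $\tau_0$-norm bound to $\|\omw^{-1/2}\mq\omw^{1/2}h\|_{\tau_t} \leq e^{10\epsilon}\|h\|_{\tau_t}$, and a single input-side conversion upgrades the $\ell_\infty$ bound to $2e^{3\epsilon}\|h\|_{\tau_t}$.

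The final mixed-norm bound is the main obstacle: a naive triangle-inequality bound $\|\omw^{-1/2}\mq\omw^{1/2} - \beta\mi\|_{\tau_t} \leq e^{10\epsilon}+\beta$ is too weak when $\beta$ is not tiny, since we need a bound that \emph{decreases} in $\beta$. The key observation is that $\omw^{-1/2}\mq\omw^{1/2}$ is similar to $\mq$ via a diagonal conjugator $\md$ (explicitly, $\omw^{-1/2}\mq\omw^{1/2} = \md\mq\md^{-1}$ with $\md$ arising from the factorization of $\mq$), and since $\md$ and $\beta\mi$ commute, $\omw^{-1/2}\mq\omw^{1/2} - \beta\mi = \md(\mq-\beta\mi)\md^{-1}$. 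Because $\mq$ is symmetric PSD with $\mq \preceq e^\epsilon\mi$, the PSD calculus gives $\|\mq-\beta\mi\|_2 \leq \max(\beta, e^\epsilon-\beta) = e^\epsilon-\beta$ whenever $\beta \leq 1/2$. Since the conjugator $\md$ has condition number $e^{O(\epsilon)}$ in the $\tau_t$-norm (using $\omw \approx_{3\epsilon}\mu\mT_0$ from $(\mu,\epsilon)$-centrality and $\mT_t\approx\mT_0$), this upgrades to $\|\omw^{-1/2}\mq\omw^{1/2}-\beta\mi\|_{\tau_t} \leq e^{O(\epsilon)}(e^\epsilon-\beta) \leq e^{6\epsilon}(e^{16\epsilon}-\beta)$ with room to spare. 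Combining with the $\ell_\infty$ piece through $\|\mm y\|_{\tau_t+\infty} \leq \|\mm y\|_\infty + \cnorm\|\mm y\|_{\tau_t}$ and $\cnorm\|y\|_{\tau_t}\leq \|y\|_{\tau_t+\infty}$ yields the additive $2e^{3\epsilon}/\cnorm$ contribution. The final $\leq 2$ inequality follows from $\epsilon\leq 1/80$, $\cnorm=10/\alpha\geq 10$, and $\beta \in [0,1/2]$ by a crude numerical check.
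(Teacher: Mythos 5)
Your proof follows essentially the same route as the paper's: bound $\|\mx^{-1}\delta_x\|_\infty$ and $\|\ms^{-1}\delta_s\|_\infty$ via Lemma~\ref{lem:step_stability} and the decomposition of $\gamma=\mixedNorm{h}{\tau}$, deduce the stability of $\tau_t$, transfer the Lemma~\ref{lem:proj_bounds} bounds, and for the operator bound use $0\preceq\mq\preceq e^{O(\epsilon)}\mi$ to get $\norm{\mq-\beta\mi}_{2}\leq\max\{\beta,e^{O(\epsilon)}-\beta\}$ and then push this back into the $\tau_t$-norm using $\ow\approx_{O(\epsilon)}\mu\tau_t$. One wording caution: $\omw^{-1/2}$ does not itself have $e^{O(\epsilon)}$ condition number (its entries can vary wildly), and $\norm{\omw^{-1/2}(\mq-\beta\mi)\omw^{1/2}}_{\tau_t}$ is not controlled by $\norm{\mq-\beta\mi}_{\tau_t}$ times a small factor; the correct chain --- which your parenthetical already gestures at --- is $\norm{\omw^{-1/2}(\mq-\beta\mi)\omw^{1/2}}_{\tau_t}\leq e^{6\epsilon}\norm{\omw^{-1/2}(\mq-\beta\mi)\omw^{1/2}}_{\ow}=e^{6\epsilon}\norm{\mq-\beta\mi}_{2}$ by Fact~\ref{lem:norm_facts} together with $\ow\approx_{6\epsilon}\mu\tau_t$, exactly as the paper does it.
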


\begin{proof}
By Lemma~\ref{lem:step_stability}, $\cnorm\geq3$, $\alpha\leq1/5$,
and $\|h\|_{\tau+\infty}\leq\epsilon$ we have
\[
\|\mx^{-1}\delta_{x}\|_{\infty}\leq\left(1+2\alpha\right)\left[e^{\epsilon}\norm h_{\infty}+3\|h\|_{\tau}\right]\leq\frac{7}{5}\epsilon\,.
\]
Consequently, Lemma~\ref{lem:mult_approx} and $\epsilon\in(0,1/100)$
implies that $x_{t}\approx_{\frac{3}{2}\epsilon}x_{0}$ (where we
used that $\frac{7}{5}\epsilon+(\frac{7}{5}\epsilon)^{2}\leq\frac{3}{2}\epsilon$).
Further the same reasoning implies that $s_{t}\approx_{\frac{3}{2}\epsilon}s_{0}$.
Consequently, $\tau_{t}\approx_{2\cdot(3/2)\epsilon}\tau_{0}$. As
$(x,s)$ is $(\mu,\epsilon)$-centered we have, Lemma~\ref{lem:proj_bounds},
$\cnorm\geq10$, and $\epsilon\in(0,1/100)$ imply that for all $h\in\R^{n}$
\begin{equation}
\norm{\omw^{-1/2}\mq\omw^{1/2}h}_{\tau_{t}}\leq e^{10\epsilon}\norm h_{\tau_{t}}\text{ and }\norm{\omw^{-1/2}\mq\omw^{1/2}h}_{\infty}\leq2e^{3\epsilon}\norm h_{\tau_{t}}\,.\label{eq:apx_proj_bound_1}
\end{equation}

Next, since by assumptions $w_{0}\approx_{\epsilon}\mu\cdot\tau_{0}$
and $\ow\approx_{2\epsilon}w_{0}$ we have that $\ow\approx_{6\epsilon}\mu\cdot\tau_{t}$
and therefore the above implies
\begin{align}
\norm{\omw^{-1/2}\mq\omw^{1/2}-\beta\mi}_{\tau_{t}} & \leq e^{6\epsilon}\norm{\omw^{-1/2}\mq\omw^{1/2}-\beta\mi}_{\ow}=e^{6\epsilon}\norm{\mq-\beta\mi}_{2}\nonumber \\
 & \leq e^{6\epsilon}\max\left\{ \beta,\norm{\mq}_{2}-\beta\right\} \leq e^{6\epsilon}\max\left\{ \beta,e^{16\epsilon}-\beta\right\} \label{eq:apx_proj_bound_2}
\end{align}
Combining (\ref{eq:apx_proj_bound_1}) and (\ref{eq:apx_proj_bound_2})
and using that $\beta\in[0,1/2]$ yields that for all $h\in\R^{n}$
\begin{align*}
\mixedNorm{(\omw^{-1/2}\mq\omw^{1/2}-\beta\mi)h}{\tau_{t}} & \leq\beta\norm h_{\infty}+\left(2e^{3\epsilon}+\cnorm e^{6\epsilon}(e^{16\epsilon}-\beta)\right)\norm h_{\tau_{t}}\\
 & \leq\max\left\{ \beta,\frac{2e^{3\epsilon}}{\cnorm}+e^{6\epsilon}(e^{16\epsilon}-\beta)\right\} \mixedNorm h{\tau_{t}}
\end{align*}
and the result follows from the fact that $\beta\in[0,1/2]$ and the
bounds on $\epsilon$ and $\cnorm$.
\end{proof}
\begin{lem}
\label{deriv_approx} In the setting of Lemma~\ref{lem:deriv_exact}
there is a matrix $\me_{t}$ with $\|\me_{t}\|_{\tau_{t}+\infty}\leq1000\epsilon$
such that
\begin{align*}
\mj_{t} & =\mi+\left((1-4\alpha^{2})\mT_{t}^{-1}\mLambda_{t}-2\alpha\mi\right)(2\omw^{-1/2}\mq\omw^{1/2}-\mi)+\me_{t}\,.
\end{align*}
\end{lem}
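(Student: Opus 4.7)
The idea is that the claimed identity is what one obtains from $\mj_{t}$ after replacing each of the three diagonal matrices $\mu\mw_{t}^{-1}\mT_{t}$, $\ms_{t}^{-1}\oms$, and $\mx_{t}^{-1}\omx$ with the identity, and that the approximations are all $\approx_{O(\epsilon)}$, so the induced error $\me_{t}$ has small $\|\cdot\|_{\tau_{t}+\infty}$-norm. So my plan is to first carry out the algebraic identity exactly (with the identity-substitutions in place), and then bound the error separately.

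\textbf{Step 1: The main algebraic identity.} Abbreviate $\mq'\defeq\omw^{-1/2}\mq\omw^{1/2}$. The claim is that
\[
(1-2\alpha)\bigl[\mi+(1+2\alpha)\mT_{t}^{-1}\mLambda_{t}\bigr]\mq'+(1+2\alpha)\bigl[\mi-(1-2\alpha)\mT_{t}^{-1}\mLambda_{t}\bigr](\mi-\mq')=\mi+\bigl((1-4\alpha^{2})\mT_{t}^{-1}\mLambda_{t}-2\alpha\mi\bigr)(2\mq'-\mi).
\]
This is a straightforward expansion: the $\mi$ terms give $(1-2\alpha)\mq'+(1+2\alpha)(\mi-\mq')=\mi-4\alpha\mq'+2\alpha\mi=\mi-2\alpha(2\mq'-\mi)$; the $\mT_{t}^{-1}\mLambda_{t}$ terms give $(1-4\alpha^{2})\mT_{t}^{-1}\mLambda_{t}[\mq'-(\mi-\mq')]=(1-4\alpha^{2})\mT_{t}^{-1}\mLambda_{t}(2\mq'-\mi)$. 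Summing and factoring yields the right-hand side.

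\textbf{Step 2: Bound the perturbations introduced by the three diagonal matrices.} Write $\md_{1}\defeq\mu\mw_{t}^{-1}\mT_{t}$, $\md_{2}\defeq\ms_{t}^{-1}\oms$, $\md_{3}\defeq\mx_{t}^{-1}\omx$. Since $(x,s)$ is $(\mu,\epsilon)$-centered, $w_{0}\approx_{\epsilon}\mu\tau_{0}$; by Lemma~\ref{lem:stability_helper} we have $w_{t}\approx_{3\epsilon}w_{0}$ and $\tau_{t}\approx_{3\epsilon}\tau_{0}$, so the diagonal entries of $\md_{1}$ are within $\exp(\pm 7\epsilon)$ of $1$. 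Likewise $\os\approx_{\epsilon}s_{0}\approx_{3\epsilon/2}s_{t}$ and $\ox\approx_{\epsilon}x_{0}\approx_{3\epsilon/2}x_{t}$, so $\md_{2}$ and $\md_{3}$ have diagonals within $\exp(\pm 5\epsilon/2)$ of $1$. Since they are diagonal, Fact~\ref{lem:norm_facts} gives $\|\md_{i}-\mi\|_{\tau_{t}+\infty}=\|\mdiag(\md_{i})-\mathbf{1}\|_{\infty}\leq O(\epsilon)$ for each $i$, and also $\|\md_{i}\|_{\tau_{t}+\infty}\leq 1+O(\epsilon)$.

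\textbf{Step 3: Bound the operator-norm factors.} By Lemma~\ref{lem:dir_lap_bound} with $\beta=0$ we have $\|\mT_{t}^{-1}\mLambda_{t}\|_{\tau_{t}+\infty}\leq 1+(1+\cnorm)^{-1}\leq 2$, so both $\mi\pm(1\pm 2\alpha)\mT_{t}^{-1}\mLambda_{t}$ have $\tau_{t}+\infty$ operator norm bounded by an absolute constant. By Lemma~\ref{lem:stability_helper} (applied with $\beta=0$), $\|\mq'\|_{\tau_{t}+\infty}\leq 2$ and hence $\|\mi-\mq'\|_{\tau_{t}+\infty}\leq 3$. The $\tau_{t}+\infty$ norm is submultiplicative (as it is an operator norm), so products of these factors are bounded by an absolute constant.

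\textbf{Step 4: Assemble the error.} Substitute $\md_{i}=\mi+\mf_{i}$ with $\|\mf_{i}\|_{\tau_{t}+\infty}\leq O(\epsilon)$ into
\begin{align*}
\mj_{t}^{s} &=\md_{1}\bigl[\mi+(1+2\alpha)\mT_{t}^{-1}\mLambda_{t}\bigr]\md_{2}\mq',\\
\mj_{t}^{x} &=\md_{1}\bigl[\mi-(1-2\alpha)\mT_{t}^{-1}\mLambda_{t}\bigr]\md_{3}(\mi-\mq').
\end{align*}
Expanding and grouping the terms without any $\mf_{i}$ gives exactly the left-hand side of Step~1, which equals the target expression. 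Every remaining term has at least one factor of some $\mf_{i}$ of $\tau_{t}+\infty$-norm $O(\epsilon)$, multiplied by at most a constant number of factors each of $\tau_{t}+\infty$-norm bounded by an absolute constant (by Step~3 and $\|\md_{i}\|_{\tau_{t}+\infty}\leq 1+O(\epsilon)$). By submultiplicativity, each such term has $\tau_{t}+\infty$-norm $O(\epsilon)$, and since there are only $O(1)$ such terms, the total error $\me_{t}$ satisfies $\|\me_{t}\|_{\tau_{t}+\infty}\leq 1000\epsilon$ as required.

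\textbf{Anticipated obstacle.} The main thing to be careful about is that $\|\cdot\|_{\tau_{t}+\infty}$ is defined through the mixed norm $\|\cdot\|_{\tau_{t}+\infty}$ on vectors, so submultiplicativity on matrix products and the bound $\|\md_{i}\|_{\tau_{t}+\infty}=\|\mathrm{diag}(\md_{i})\|_{\infty}$ (Fact~\ref{lem:norm_facts}) must both be used. Once these two facts are in hand, the proof is essentially bookkeeping; the constant $1000$ is loose and easy to achieve given $\epsilon\leq\alpha/4000$.
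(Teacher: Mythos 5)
Your proof is correct and follows essentially the same strategy as the paper's: replace the three diagonal factors $\mu\mw_{t}^{-1}\mT_{t}$, $\ms_{t}^{-1}\oms$, $\mx_{t}^{-1}\omx$ by the identity (bounding their $\tau_{t}+\infty$-distance from $\mi$ via Fact~\ref{lem:norm_facts}), bound the remaining factors by Lemma~\ref{lem:dir_lap_bound} and Lemma~\ref{lem:stability_helper}, and then verify the algebraic identity that results. You spell out the algebraic cancellation explicitly where the paper compresses it into one display, but the underlying decomposition and the lemmas invoked are identical.
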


\begin{proof}
By Lemma~\ref{lem:stability_helper}, $x_{t}\approx_{\frac{3}{2}\epsilon}x_{0}$,
$s_{t}\approx_{\frac{3}{2}\epsilon}s_{0}$, and $\tau_{t}\approx_{3\epsilon}\tau_{0}$.
By definition of an $\epsilon$-Newton direction this further implies
that $x_{t}\approx_{\frac{5}{2}\epsilon}\ox$ and $s_{t}\approx_{\frac{5}{2}\epsilon}\os$.
Further, this implies that $w_{t}\approx_{3\epsilon}w_{0}$ and since
$w_{0}\approx_{\epsilon}\mu\cdot\tau_{0}$ by definition of $(\mu,\epsilon)$-centered
this implies that $w_{t}\approx_{7\epsilon}\mu\cdot\tau_{t}$. Consequently,
Fact~\ref{lem:norm_facts}, Lemma~\ref{lem:mult_approx}, and $\epsilon<1/80$
implies
\[
\mixedNorm{\mu\mw_{t}^{-1}\mT_{t}-\mi}{\tau_{t}}\leq8\epsilon\text{, }\mixedNorm{\ms_{t}^{-1}\oms-\mi}{\tau_{t}}\leq3\epsilon,\text{ and }\mixedNorm{\mx_{t}^{-1}\omx-\mi}{\tau_{t}}\leq3\epsilon\,.
\]
Further, $\mixedNorm{\omw^{-1/2}\mq\omw^{1/2}}{\tau_{t}}\leq2$ by
Lemma~\ref{lem:stability_helper} with $\beta=0$ and $\mixedNorm{\mT_{t}^{-1}\mLambda_{t}}{\tau_{t}}\leq2$
by Lemma~\ref{lem:dir_lap_bound} with $\beta=0$. Combining these
facts and applying Lemma~\ref{lem:norm_facts} we have that there
are $\me_{t}^{s}$ and $\me_{t}^{x}$ with $\norm{\me_{t}^{s}}_{\tau_{t}+\infty}\leq500\epsilon$
and $\norm{\me_{t}^{x}}_{\tau_{t}+\infty}\leq500\epsilon$ with
\begin{eqnarray*}
\mj_{t}^{s} & \defeq & \left(\mi+(1+2\alpha)\mT_{t}^{-1}\mLambda_{t}\right)\omw^{-1/2}\mq\omw^{1/2}+\me_{t}^{s}\\
\mj_{t}^{x} & \defeq & \left(\mi-(1-2\alpha)\mT_{t}^{-1}\mLambda_{t}\right)(\mi-\omw^{-1/2}\mq\omw^{1/2})+\me_{t}^{x}
\end{eqnarray*}
Letting $\me_{t}\defeq(1-2\alpha)\me_{t}^{s}+(1+2\alpha)\me_{t}^{x}$
and leveraging the definitions of $\mj_{t}$ yields that 
\begin{align*}
\mj_{t} & =(1-2\alpha)\mj_{t}^{s}+(1+2\alpha)\mj_{t}^{x}+\me_{t}\\
 & =(1+2\alpha)\mi-4\alpha\omw^{-1/2}\mq\omw^{1/2}-(1-4\alpha^{2})\mT_{t}^{-1}\mLambda_{t}(\mi-2\omw^{-1/2}\mq\omw^{1/2})+\me_{t}\,.
\end{align*}
\end{proof}

\begin{lem}
\label{lem:jacobian_approx_quality} In the setting of Lemma~\ref{deriv_approx}
we have $\mixedNorm{\mj_{t}-\me_{t}-\mi}{\tau_{t}}\leq1-\alpha$.
\end{lem}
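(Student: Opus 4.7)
The plan is to start from the factored form given by Lemma~\ref{deriv_approx}, namely
\[
\mj_t - \me_t - \mi = \Bigl((1-4\alpha^2)\mT_t^{-1}\mLambda_t - 2\alpha\mi\Bigr)\Bigl(2\omw^{-1/2}\mq\omw^{1/2} - \mi\Bigr),
\]
and bound each factor in the mixed operator norm $\|\cdot\|_{\tau_t+\infty}$. Since any operator norm is submultiplicative, it will suffice to bound the two factors separately and multiply. This reduction is immediate, and the actual work is in choosing the $\beta$ parameters for Lemma~\ref{lem:dir_lap_bound} and Lemma~\ref{lem:stability_helper} so that the product comes out at most $1-\alpha$.

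For the first factor, I would pull out $(1-4\alpha^2)$ and write
\[
(1-4\alpha^2)\mT_t^{-1}\mLambda_t - 2\alpha\mi = (1-4\alpha^2)\Bigl(\mT_t^{-1}\mLambda_t - \tfrac{2\alpha}{1-4\alpha^2}\mi\Bigr).
\]
Taking $\beta = 2\alpha/(1-4\alpha^2) \in [0,1/2]$ in Lemma~\ref{lem:dir_lap_bound} gives a bound of $(1-4\alpha^2)\bigl(1-\beta+\tfrac{1}{1+\cnorm}\bigr) = 1 - 2\alpha - 4\alpha^2 + \tfrac{1-4\alpha^2}{1+\cnorm}$. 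For the second factor, I would write $2\omw^{-1/2}\mq\omw^{1/2} - \mi = 2\bigl(\omw^{-1/2}\mq\omw^{1/2} - \tfrac{1}{2}\mi\bigr)$ and apply Lemma~\ref{lem:stability_helper} with $\beta=1/2$, yielding a bound of $\tfrac{4e^{3\epsilon}}{\cnorm}+2e^{6\epsilon}(e^{16\epsilon}-\tfrac{1}{2})$.

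To finish, I would multiply these two bounds and use $\cnorm = 10/\alpha$, $\epsilon \leq \alpha/4000$, and that $\alpha$ is small (recall $\alpha = 1/(4\log(4n/d))$, so $\alpha \leq 1/5$, say). Heuristically, the first factor is $\approx 1 - 2\alpha + \alpha/10$ and the second is $\approx 1 + 2 \cdot 2/\cnorm = 1 + 2\alpha/5$, giving a product $\approx (1-1.9\alpha)(1+0.4\alpha) \leq 1 - 1.5\alpha + O(\alpha^2) \leq 1-\alpha$. The $\epsilon$-dependent corrections in the second factor are of order $\epsilon$, hence bounded by an $O(\alpha/4000)$ perturbation, which is easily absorbed.

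The main obstacle, as usual in this kind of argument, is the bookkeeping: verifying that the lower-order terms $\tfrac{1-4\alpha^2}{1+\cnorm}$ and $\tfrac{4e^{3\epsilon}}{\cnorm}$ plus the slack $e^{16\epsilon}-1$ in the second factor truly combine to leave a margin of $\alpha$ rather than, say, $1.5\alpha$. This is why the constant $\cnorm = 10/\alpha$ was chosen with a large coefficient and why $\epsilon$ is required to be much smaller than $\alpha$; any looser choice would collapse the margin. Beyond this, the argument is essentially a one-line combination of the two stability lemmas already proved.
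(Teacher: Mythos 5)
Your proof is correct and follows essentially the same route as the paper: the same factorization from Lemma~\ref{deriv_approx}, the same parameter choices $\beta=2\alpha/(1-4\alpha^2)$ and $\beta=1/2$ for the two factors, and the same final bookkeeping using $\cnorm=10/\alpha$ and $\epsilon\leq\alpha/4000$. A minor point in your favor: you correctly invoke the mixed-norm estimate $\mixedNorm{\mT_{t}^{-1}\mLambda_{t}-\beta\mi}{\tau_{t}}\leq1-\beta+(1+\cnorm)^{-1}$ from the ``Consequently'' clause of Lemma~\ref{lem:dir_lap_bound}, whereas the paper's displayed inequality quotes only the $\tau_t$-norm bound $\norm{\mT_{t}^{-1}\mLambda_{t}-\beta\mi}_{\tau_{t}}\leq1-\beta$, which by itself does not justify sub-multiplicativity in $\mixedNorm{\cdot}{\tau_t}$ (the paper's final arithmetic still has enough slack to absorb the missing $(1+\cnorm)^{-1}=\Theta(\alpha)$ term, as your own accounting confirms).
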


\begin{proof}
Lemma~\ref{deriv_approx} implies that for $\beta=2\alpha(1-4\alpha^{2})^{-1}$
\[
\mj_{t}-\me_{t}-\mi=(1-4\alpha^{2})\left(\mT_{t}^{-1}\mLambda_{t}-\beta\mi\right)(2\omw^{-1/2}\mq\omw^{1/2}-\mi)\,.
\]
Further, Lemma~\ref{lem:dir_lap_bound} and $\alpha\in(0,1/5)$ implies
that $\beta\in[0,1/2]$ and therefore
\[
\norm{\mT_{t}^{-1}\mLambda_{t}-\beta\mi}_{\tau_{t}}\leq1-\beta\leq1-2\alpha\,.
\]
Further, Lemma~\ref{lem:stability_helper} implies that 
\begin{align*}
\mixedNorm{2\omw^{-1/2}\mq\omw^{1/2}-\mi}{\tau_{t}} & \leq2\mixedNorm{\omw^{-1/2}\mq\omw^{1/2}-\frac{1}{2}}{\tau_{t}}\\
 & \leq2\left(\frac{2e^{3\epsilon}}{\cnorm}+e^{6\epsilon}\left(e^{16\epsilon}-\frac{1}{2}\right)\right)
\end{align*}
Combining, and using that $\epsilon\leq\frac{\alpha}{4000}$, $\alpha\in[0,1/5]$,
and $\cnorm=10/\alpha$ yields that 
\begin{align*}
\mixedNorm{\mj_{t}-\me_{t}-\mi}{\tau_{t}} & \leq(1-4\alpha^{2})\cdot\left(1-2\alpha\right)\cdot2\left(\frac{2e^{3\epsilon}}{\cnorm}+e^{6\epsilon}\left(e^{16\epsilon}-\frac{1}{2}\right)\right)\\
 & \leq(1-2\alpha)\cdot\left((1/2)\alpha+(1+0.002\alpha)(2+.02\alpha-1\right)\leq1-\alpha\,.
\end{align*}
\end{proof}
We now have everything we need to prove the main theorem.
\begin{proof}[Proof of Theorem~\ref{thm:newton_step}]
 By Lemma~\ref{lem:deriv_exact} we know that 
\[
\Phi(x+\delta_{x},s+\delta_{s},\mu)-\Phi(x_{0},s_{0},\mu)=-\grad\Phi(v_{t_{*}})^{\top}\mj_{t_{*}}h
\]
where by Lemma~\ref{deriv_approx} and Lemma~\ref{deriv_approx}
we know that for some matrix $\me_{t_{*}}$ with $\|\me_{t_{*}}\|_{\tau_{t_{*}}+\infty}\leq1000\epsilon$
it holds that $\mixedNorm{\mj_{t_{*}}-\me_{t_{*}}-\mi}{\tau_{t_{*}}}\leq1-\alpha$.
Hence, we have $\mj_{t_{*}}h=h+e$ where 
\begin{align}
\|e\|_{\tau_{t_{*}}+\infty} & \leq\|(\mj_{t_{*}}-\me_{t_{*}}-\mi)h\|_{\tau_{*}+\infty}+\|\me_{t_{*}}h\|_{\tau_{*}+\infty}\leq(1-\alpha+1000\epsilon)\|h\|_{\tau_{*}+\infty}.\label{eq:e_bound}
\end{align}
Hence, we have
\[
\Phi(x+\delta_{x},s+\delta_{s},\mu)-\Phi(x_{0},s_{0},\mu)=-\grad\Phi(v_{t_{*}})^{\top}(h+e).
\]

Lemma~\ref{lem:step_stability} implies that $x_{t}\approx_{\frac{3}{2}\epsilon}x_{0}$,
$s_{t}\approx_{\frac{3}{2}\epsilon}s_{0}$, and $\tau_{t}\approx_{3\epsilon}\tau_{0}$.
Consequently, $w_{t}\approx_{3\epsilon}w$ and $v_{t_{*}}\approx_{6\epsilon}\mu\mw^{-1}\tau$.
The definition of $(\mu,\epsilon)$-centered implies that $\norm{\mu\mw^{-1}\tau-\vones}_{\infty}\leq\epsilon$
and therefore $\|v_{t_{*}}-\mu\mw^{-1}\tau\|_{\infty}\leq10\epsilon$.
The bound on $\|e\|_{\tau_{t_{*}}+\infty}$ follows from (\ref{eq:e_bound})
and $\|h\|_{\tau_{*}+\infty}\leq(1+5\epsilon)\|h\|_{\tau+\infty}$.
\end{proof}

\subsection{Following the Central Path}

\label{sec:central_path_follow}

Here we use the particular structure of $\Phi$ to analyze the effect
of Newton steps and changing $\mu$ on centrality. First we state
the following lemma about the potential function from \cite{lsJournal19}.
Then we use it to analyze the effect of centering for one step, Lemma~\ref{lem:newton_step_potential},
and we conclude the section by proving a simplified variant of Theorem~\ref{thm:path_following}.
\begin{lem}[Specialized Restatement of Lemma~61 of \cite{lsJournal19}]
\label{lem:smoothing:helper} For all $v\in\Rn$, we have
\begin{equation}
e^{\lambda\|v-\vones\|_{\infty}}\leq\Phi(v)\leq2n\cdot e^{\lambda\|v-\vones\|_{\infty}}\enspace\text{ and }\enspace\lambda\Phi(v)-2\lambda n\leq\|\nabla\Phi(v)\|_{1}\label{eq:smoothing:pot_prop_1}
\end{equation}
Furthermore, let $v^{\flat(\tau)}\defeq\argmax_{\mixedNorm w{\tau}\leq1}\left\langle v,w\right\rangle $
and $\|v\|_{*}=\max_{\|w\|_{\tau+\infty}\leq1}\left\langle v,w\right\rangle $.
Then, for all $v,w\in\Rn$ with $\|v-w\|_{\infty}\leq\delta\leq\frac{1}{5\lambda}$,
we have
\begin{equation}
e^{-\lambda\delta}\norm{\grad\Phi(w)}_{*}-\lambda n\leq\left\langle \grad\Phi(v),\grad\Phi(w)^{\flat}\right\rangle \leq e^{\lambda\delta}\norm{\grad\Phi(w)}_{*}+\lambda e^{\lambda\delta}n\label{eq:smoothing:pot_prop_2}
\end{equation}
and consequently,
\begin{equation}
e^{-\lambda\delta}\norm{\grad\Phi(w)}_{*}-\lambda n\leq\norm{\grad\Phi(v)}_{*}\leq e^{\lambda\delta}\norm{\grad\Phi(w)}_{*}+\lambda e^{\lambda\delta}n.\label{eq:smoothing:pot_prop_3}
\end{equation}
\end{lem}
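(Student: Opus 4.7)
The plan is to derive each of the three claims directly from the explicit formulas $\phi(v)=e^{\lambda(v-1)}+e^{-\lambda(v-1)}$ and $\phi'(v)=\lambda(e^{\lambda(v-1)}-e^{-\lambda(v-1)})$, combined with the observation that the mixed-norm unit ball is contained in the $\ell_\infty$ unit ball, so any $z^\star$ with $\|z^\star\|_{\tau+\infty}\leq 1$ satisfies $\sum_i |z^\star_i| \leq n \|z^\star\|_\infty \leq n$.

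For \eqref{eq:smoothing:pot_prop_1}, I would bound $\phi(v_i)\leq 2e^{\lambda|v_i-1|}\leq 2e^{\lambda\|v-\vones\|_\infty}$ term by term and sum to obtain the upper bound, then select an index $i^\star$ attaining $\|v-\vones\|_\infty$ and use $\Phi(v)\geq \phi(v_{i^\star})\geq e^{\lambda\|v-\vones\|_\infty}$. For the gradient bound, a direct calculation gives $\phi(v_i)-\lambda^{-1}|\phi'(v_i)|=2\min(e^{\lambda(v_i-1)},e^{-\lambda(v_i-1)})\leq 2$, and summing produces $\|\nabla\Phi(v)\|_1\geq \lambda\Phi(v)-2\lambda n$.

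The main work is \eqref{eq:smoothing:pot_prop_2}. Writing $A_i:=\lambda e^{\lambda(w_i-1)}$, $B_i:=\lambda e^{-\lambda(w_i-1)}$, and $e_i:=v_i-w_i$, I have $\phi'(w_i)=A_i-B_i$ and $\phi'(v_i)=e^{\lambda e_i}A_i-e^{-\lambda e_i}B_i$; crucially $A_iB_i=\lambda^2$, so $\min(A_i,B_i)\leq\lambda$. In the case $A_i\geq B_i$ (i.e.\ $\phi'(w_i)\geq 0$), splitting
\[
\phi'(v_i)=e^{\lambda e_i}\phi'(w_i)+(e^{\lambda e_i}-e^{-\lambda e_i})B_i
\]
and applying $|e_i|\leq\delta$, $B_i\leq\lambda$ gives the two-sided bound
\[
e^{-\lambda\delta}\phi'(w_i)-(e^{\lambda\delta}-e^{-\lambda\delta})\lambda\;\leq\;\phi'(v_i)\;\leq\; e^{\lambda\delta}\phi'(w_i)+(e^{\lambda\delta}-e^{-\lambda\delta})\lambda,
\]
and the $B_i\geq A_i$ case is analogous with signs flipped. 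In either case
\[
\phi'(v_i)\,\mathrm{sign}(\phi'(w_i))\;\geq\; e^{-\lambda\delta}|\phi'(w_i)|-\lambda(e^{\lambda\delta}-e^{-\lambda\delta}).
\]
Since $z^\star:=\nabla\Phi(w)^\flat$ satisfies $\mathrm{sign}(z^\star_i)=\mathrm{sign}(\phi'(w_i))$ (invariance of $\|\cdot\|_{\tau+\infty}$ under coordinatewise sign flips) and $\sum_i|z^\star_i|\leq n$, summing against $z^\star$ and using $e^{\lambda\delta}-e^{-\lambda\delta}\leq 1$ (from $\lambda\delta\leq 1/5$) yields the lower bound in \eqref{eq:smoothing:pot_prop_2}. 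The upper bound follows symmetrically. Finally \eqref{eq:smoothing:pot_prop_3} is immediate: the lower bound comes from $\|\nabla\Phi(v)\|_*\geq\langle\nabla\Phi(v),\nabla\Phi(w)^\flat\rangle$ together with \eqref{eq:smoothing:pot_prop_2}, and the upper bound from taking the coordinate-wise bound on $|\phi'(v_i)|$, maximizing over arbitrary $z$ with $\|z\|_{\tau+\infty}\leq 1$, and again using $\sum_i|z_i|\leq n$.

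The main obstacle is purely bookkeeping of signs in the per-coordinate inequality, plus confirming that the additive correction collapses to exactly $\lambda n$ rather than a larger multiple; this is where the hypothesis $\delta\leq 1/(5\lambda)$ is essential, as it bounds $e^{\lambda\delta}-e^{-\lambda\delta}$ below one and lets a single factor of $\lambda n$ absorb the additive error across all coordinates. No machinery beyond monotonicity of $\exp$ and the identity $A_iB_i=\lambda^2$ is required.
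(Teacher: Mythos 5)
Your proof is correct and self-contained. The paper itself gives no proof of this lemma---it cites it as a specialized restatement of Lemma~61 of the Lee--Sidford Lewis-weight paper---so there is no in-paper argument to compare against, but your derivation stands on its own. Two points worth highlighting, since they are the places where a hasty version of this argument would fail. First, the identity $A_iB_i = \lambda^2$ is indeed the load-bearing observation: it forces $\min(A_i,B_i)\leq\lambda$, which is what lets the additive perturbation $(e^{\lambda e_i}-e^{-\lambda e_i})\min(A_i,B_i)$ collapse to $O(\lambda)$ per coordinate instead of growing with $\phi(w_i)$. Second, the sign-alignment claim $\mathrm{sign}(z^\star_i)=\mathrm{sign}(\phi'(w_i))$ is correct because $\mixedNorm{\cdot}{\tau}$ depends only on $|z_i|$, so flipping any disagreeing coordinate is norm-preserving and strictly improves the inner product; the degenerate case $\phi'(w_i)=0$ is harmless since the term vanishes. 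For the upper bound of \eqref{eq:smoothing:pot_prop_3} you also implicitly need the two-sided per-coordinate control to conclude $|\phi'(v_i)|\leq e^{\lambda\delta}|\phi'(w_i)|+(e^{\lambda\delta}-e^{-\lambda\delta})\lambda$ even when $\phi'(v_i)$ flips sign relative to $\phi'(w_i)$---your stated bounds do give this, since when the sign flips the magnitude of $\phi'(v_i)$ is itself at most $(e^{\lambda\delta}-e^{-\lambda\delta})\lambda$. The final numerical check $e^{\lambda\delta}-e^{-\lambda\delta}=2\sinh(\lambda\delta)\leq 2\sinh(1/5)<1$ is what collapses the lower-bound error to exactly $\lambda n$, as you observe. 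Everything closes.
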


\begin{rem*}
When $\tau$ is clear from context we also write $v^{\flat}$ for
$v^{\flat(\tau)}$.
\end{rem*}
Here we provide a general lemma bounding how much $\Phi$ can increase
for a step of bounded size in the mixed norm.
\begin{lem}[Potential Increase]
\label{lem:pot_increase}  Let $v_{0},v_{1}\in\R^{n}$ such that
$\mixedNorm{v_{1}-v_{0}}{\tau}\leq\delta\leq\frac{1}{5\lambda}$ for
some $\tau\in\R_{>0}^{n}$. Then for $\norm{\cdot}_{*}$, the dual
norm to $\mixedNorm{\cdot}{\tau}$ we have $\Phi(v_{1})\leq\Phi(v_{0})+\delta e^{\delta\lambda}(\norm{\grad\Phi(v_{0})}_{*}+\lambda n)$.
\end{lem}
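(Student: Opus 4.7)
The plan is to bound $\Phi(v_1) - \Phi(v_0)$ via the fundamental theorem of calculus along the straight line segment between $v_0$ and $v_1$, and then control the gradient along this segment using the stability estimate already proved in Lemma~\ref{lem:smoothing:helper}. Concretely, let $v_t \defeq v_0 + t(v_1 - v_0)$ for $t \in [0,1]$, so that
\[
\Phi(v_1) - \Phi(v_0) \;=\; \int_0^1 \langle \grad\Phi(v_t),\, v_1 - v_0\rangle\, dt.
\]

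Next I would apply H\"older's inequality in the mixed norm $\mixedNorm{\cdot}{\tau}$ and its dual $\|\cdot\|_*$: for each $t \in [0,1]$,
\[
\langle \grad\Phi(v_t), v_1 - v_0\rangle \;\le\; \|\grad\Phi(v_t)\|_* \cdot \mixedNorm{v_1 - v_0}{\tau} \;\le\; \delta\,\|\grad\Phi(v_t)\|_*.
\]
To relate $\|\grad\Phi(v_t)\|_*$ back to $\|\grad\Phi(v_0)\|_*$, observe that $\mixedNorm{v_t - v_0}{\tau} = t\,\mixedNorm{v_1-v_0}{\tau} \le \delta$, and since by definition $\mixedNorm{\cdot}{\tau} = \|\cdot\|_\infty + \cnorm\|\cdot\|_\tau$ dominates $\|\cdot\|_\infty$, we also have $\|v_t - v_0\|_\infty \le \delta \le \frac{1}{5\lambda}$. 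Thus the hypothesis of inequality~(\ref{eq:smoothing:pot_prop_3}) in Lemma~\ref{lem:smoothing:helper} is satisfied, yielding
\[
\|\grad\Phi(v_t)\|_* \;\le\; e^{\lambda\delta}\|\grad\Phi(v_0)\|_* + \lambda e^{\lambda\delta} n.
\]

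Plugging this bound into the integral gives
\[
\Phi(v_1) - \Phi(v_0) \;\le\; \int_0^1 \delta\bigl(e^{\lambda\delta}\|\grad\Phi(v_0)\|_* + \lambda e^{\lambda\delta} n\bigr)\, dt \;=\; \delta e^{\lambda\delta}\bigl(\|\grad\Phi(v_0)\|_* + \lambda n\bigr),
\]
which is exactly the claimed bound. There is no real obstacle here: the only subtle point is checking that $\mixedNorm{\cdot}{\tau}$ dominates $\|\cdot\|_\infty$ so that the $\delta \le 1/(5\lambda)$ hypothesis of Lemma~\ref{lem:smoothing:helper} transfers, and that the dual norm pairing used in H\"older's inequality matches the one defined in the remark preceding the lemma. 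Everything else is a one-line integration after invoking the already-proved gradient stability property.
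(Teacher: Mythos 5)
Your proposal is correct and follows essentially the same route as the paper: parameterize the line segment, apply the fundamental theorem of calculus, use the dual-norm pairing (what the paper calls Cauchy--Schwarz for arbitrary norms), and invoke inequality~(\ref{eq:smoothing:pot_prop_3}) of Lemma~\ref{lem:smoothing:helper} with $\|v_t-v_0\|_\infty \le \mixedNorm{v_t-v_0}{\tau} \le \delta \le \frac{1}{5\lambda}$ to compare $\|\grad\Phi(v_t)\|_*$ to $\|\grad\Phi(v_0)\|_*$ before integrating. Your note that the mixed norm dominates $\|\cdot\|_\infty$ is the same implicit step the paper uses when it asserts $t\delta \le 1/(5\lambda)$ satisfies the hypothesis of the helper lemma.
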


\begin{proof}
For all $t\in[0,1]$ let $v_{t}\defeq v_{0}+t(v_{1}-v_{0})$. Note,
that this implies that $\mixedNorm{v_{t}-v_{0}}{\tau}\leq t\delta\leq\delta$
for all $t\in[0,1]$ and therefore, integrating and applying Cauchy
Schwarz for arbitrary norms (with $\norm{\cdot}_{*}$ as the dual
norm of $\mixedNorm{\cdot}{\tau}$) implies that 
\begin{align*}
\Phi(v_{1})-\Phi(v_{0}) & =\int_{0}^{1}\grad\Phi(v_{t})^{\top}(v_{1}-v_{0})dt\leq\int_{0}^{1}\norm{\grad\Phi(v_{t})}_{*}\norm{v_{1}-v_{0}}dt\\
 & \leq\int_{0}^{1}e^{\lambda\delta}\left(\norm{\grad\Phi(v_{0})}_{*}+\lambda n\right)\delta dt
\end{align*}
where in the second line we applied (\ref{eq:smoothing:pot_prop_3})
of Lemma~\ref{lem:smoothing:helper} as $t\delta\leq1/(5\lambda)$.
\end{proof}
\begin{lem}[Potential Parameter Stability]
\label{lem:pot_param_stab} Let $(x_{0},s)$ be a $(\mu,\epsilon)$-centered
point and for all $t\in[0,1]$ let $x_{t}\defeq x_{0}+t\delta_{x}$
where $\mixedNorm{\mx_{0}^{-1}\delta_{x}}{\tau_{0}}\leq\epsilon\leq1/100$.
If for all $t\in[0,1]$ we let $w_{t}\defeq\mx_{t}s$, $\tau_{t}\defeq\taureg(x_{t},s)$,
and $v_{t}\defeq\mu\mw_{t}^{-1}\tau_{t}$ then $\mixedNorm{v_{1}-v_{0}}{\tau_{0}}\leq16\mixedNorm{\mx_{0}^{-1}\delta_{x}}{\tau_{0}}$.
\end{lem}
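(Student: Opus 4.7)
I would write $v_1 - v_0 = \int_0^1 \frac{d v_t}{dt}\,dt$ and bound the derivative pointwise in the mixed norm $\mixedNorm{\cdot}{\tau_0}$. Since $s$ is fixed and $v_t = \mu\, \mx_t^{-1}\ms^{-1}\tau_t$, the chain rule gives
\begin{equation*}
\frac{d v_t}{dt} \;=\; -\mu\, \mdiag(\mx_t^{-1}\delta_x)\,\mx_t^{-1}\ms^{-1}\tau_t \;+\; \mu\, \mx_t^{-1}\ms^{-1}\,\frac{d\tau_t}{dt}.
\end{equation*}
Specializing the computation in Lemma~\ref{lem:deriv_exact} to $\delta_s = 0$ (which uses Lemma~\ref{lem:deriv:proj}) yields $\tfrac{d\tau_t}{dt} = (1-2\alpha)\,\mLambda_t\, \mx_t^{-1}\delta_x$. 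Writing $\mV_t \defeq \mdiag(v_t)$, this simplifies to
\begin{equation*}
\frac{d v_t}{dt} \;=\; -\mV_t\, \mx_t^{-1}\delta_x \;+\; (1-2\alpha)\,\mV_t\,\mT_t^{-1}\mLambda_t\, \mx_t^{-1}\delta_x.
\end{equation*}

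Next I would bound each summand in $\mixedNorm{\cdot}{\tau_t}$. Using Fact~\ref{lem:norm_facts}, the first term contributes at most $\|v_t\|_\infty \cdot \mixedNorm{\mx_t^{-1}\delta_x}{\tau_t}$. For the second, apply Lemma~\ref{lem:dir_lap_bound} with $\beta = 0$ to get $\mixedNorm{\mT_t^{-1}\mLambda_t}{\tau_t} \leq 1 + (1+\cnorm)^{-1} \leq 2$, so the second term contributes at most $(1-2\alpha)\cdot\|v_t\|_\infty\cdot 2 \cdot \mixedNorm{\mx_t^{-1}\delta_x}{\tau_t}$. Both factors of $\|v_t\|_\infty$ are easy to control: since $(x_0,s)$ is $(\mu,\epsilon)$-centered, $\|v_0-\vones\|_\infty \leq \epsilon$, and since $\|\mx_0^{-1}\delta_x\|_\infty \leq \mixedNorm{\mx_0^{-1}\delta_x}{\tau_0} \leq \epsilon \leq 1/100$, Lemma~\ref{lem:mult_approx} yields $x_t\approx_{3\epsilon/2}x_0$, which together with the stability of the regularized leverage scores (as used in Lemma~\ref{lem:stability_helper}) gives $\tau_t\approx_{3\epsilon}\tau_0$ and hence $w_t\approx_{3\epsilon}w_0$, so $\|v_t\|_\infty \leq e^{O(\epsilon)}$.

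Finally, I would convert $\mixedNorm{\cdot}{\tau_t}$ to $\mixedNorm{\cdot}{\tau_0}$ and $\mx_t^{-1}$ to $\mx_0^{-1}$ using the same $\tau_t\approx_{3\epsilon}\tau_0$ and $x_t\approx_{3\epsilon/2}x_0$ multiplicative approximations; each conversion costs a factor $e^{O(\epsilon)}$. Integrating over $t\in[0,1]$ then yields
\begin{equation*}
\mixedNorm{v_1 - v_0}{\tau_0} \;\leq\; e^{O(\epsilon)}\bigl(1 + 2(1-2\alpha)\bigr)\,\mixedNorm{\mx_0^{-1}\delta_x}{\tau_0} \;\leq\; 16\,\mixedNorm{\mx_0^{-1}\delta_x}{\tau_0},
\end{equation*}
where the final inequality uses $\epsilon \leq 1/100$ to absorb all $e^{O(\epsilon)}$ slack into the constant.

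The only mildly delicate step is verifying that switching the base point of both the mixed norm (from $\tau_t$ to $\tau_0$) and the rescaling matrix (from $\mx_t$ to $\mx_0$) only costs a constant factor under the $\epsilon \leq 1/100$ assumption; the constant $16$ on the right-hand side is generous enough that these conversions, together with the basic Jacobian-norm bound of roughly $1 + 2(1-2\alpha)$, comfortably fit, so no sharp accounting is needed.
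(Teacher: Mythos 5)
Your proof is correct and follows essentially the same route as the paper's: the same differential formula for $dv_t/dt$ derived from Lemma~\ref{lem:deriv_exact} with $\delta_s=0$, the same invocation of Lemma~\ref{lem:dir_lap_bound}, and the same multiplicative-stability conversions from $(\tau_t, x_t)$ back to $(\tau_0, x_0)$. The only cosmetic difference is that you split the Jacobian into the two summands $-\mV_t\mx_t^{-1}\delta_x$ and $(1-2\alpha)\mV_t\mT_t^{-1}\mLambda_t\mx_t^{-1}\delta_x$ and bound each one, whereas the paper keeps the product form $\mu\mw_t^{-1}\mT_t\big((1-2\alpha)\mT_t^{-1}\mLambda_t - \mi\big)\mx_t^{-1}\mx_0\cdot\mx_0^{-1}\delta_x$ and bounds each operator factor by $2$ to arrive at $16$; your accounting actually gives a tighter constant, so the stated bound is comfortably met.
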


\begin{proof}
Following the calculations in Lemma~\ref{lem:deriv_exact} we have
that for $\mT_{t}\defeq\mdiag(\tau_{t})$
\[
v_{1}-v_{0}=\int_{0}^{1}\mu\mw_{t}^{-1}\left(\frac{d}{dt}\tau_{t}-\mT_{t}\mw_{t}^{-1}\frac{d}{dt}w_{t}\right)dt
\]
and for $\mLambda_{t}\defeq\mLambda(\mx_{t}^{1/2-\alpha}\ms_{t}^{-1/2-\alpha}\ma)$
\[
\mw_{t}^{-1}\frac{d}{dt}=\mx_{t}^{-1}\delta_{x}\text{ and }\frac{d}{dt}\tau_{t}=(1-2\alpha)\mLambda_{t}\mx_{t}^{-1}\delta_{x}\,.
\]
Consequently,
\[
v_{1}-v_{0}=-\int_{0}^{1}\mu\mw_{t}^{-1}\mT_{t}\left(\left(1-2\alpha\right)\mT_{t}^{-1}\mLambda_{t}-\mi\right)\mx_{t}^{-1}\delta_{x}dt\,.
\]
Further, since $\norm{\mx^{-1}\delta_{x}}_{\infty}\leq\epsilon\leq1/(100)$
we have that 
\begin{align*}
\mixedNorm{v_{1}-v_{0}}{\tau_{0}} & \leq\int_{0}^{1}\mixedNorm{\mu\mw_{t}^{-1}\mT_{t}\left(\left(1-2\alpha\right)\mT_{t}^{-1}\mLambda_{t}-\mi\right)\mx_{t}^{-1}\delta_{x}}{\tau_{0}}dt\\
 & \leq2\int_{0}^{1}\mixedNorm{\mu\mw_{t}^{-1}\mT_{t}}{\tau_{t}}\mixedNorm{\left(1-2\alpha\right)\mT_{t}^{-1}\mLambda_{t}-\mi}{\tau_{t}}\mixedNorm{\mx_{t}^{-1}\mx_{0}}{\tau_{t}}\mixedNorm{\mx_{0}^{-1}\delta_{x}}{\tau_{0}}\,.
\end{align*}
Now, the proof of Lemma~\ref{lem:dir_lap_bound} shows that $\mixedNorm{\left(1-2\alpha\right)\mT_{t}^{-1}\mLambda_{t}-\mi}{\tau_{t}}\leq2$
and that $\norm{\mx^{-1}\delta_{x}}_{\infty}\leq\epsilon\leq1/100$
and $(x_{0},s)$ is $(\mu,\epsilon)$-centered shows that $\mixedNorm{\mu\mw_{t}^{-1}\mT_{t}}{\tau_{t}}\le2$
and $\mixedNorm{\mx_{t}^{-1}\mx_{0}}{\tau_{t}}\leq2$. Combining yields
the result.
\end{proof}
\begin{lem}
\label{lem:newton_step_potential} Let $(\delta_{x},\delta_{s})$
denote an $\epsilon$-Newton direction from $(\mu,\epsilon$)-centered
$(x,s)$ with $h=\gamma\nabla\Phi(\overline{v})^{\flat(\otau)}$ for
$\otau\approx_{\epsilon}\tau$ and $\|\overline{v}-v\|_{\infty}\leq\gamma$
for $v\defeq\mu\mw^{-1}\taureg(x,s)$ where $w\defeq\mx s$, $\gamma\leq\min\{\epsilon,\frac{\alpha}{50\lambda}\}$,
$\epsilon\leq\frac{\alpha}{4000}$, and $\lambda\geq\log n$. Then
\[
\Phi(x+\delta_{x}+e_{x},s+\delta_{s},\mu+\delta_{\mu})\leq\Big(1-\frac{\gamma\alpha^{2}\lambda}{640\sqrt{d}}\Big)\cdot\Phi(x,s,\mu)+4\gamma\lambda n
\]
for any $\delta_{\mu}\in\R$ with $\left|\delta_{\mu}\right|\leq\frac{\gamma\alpha}{2^{15}\sqrt{d}}\mu$
and $e_{x}\in\R^{n}$ with $\mixedNorm{\mx^{-1}e_{x}}{\tau}\leq\frac{\gamma\alpha}{2^{20}}$.
\end{lem}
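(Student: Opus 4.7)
The plan is to chain together the exact Newton-step identity from Theorem~\ref{thm:newton_step} with the smoothed potential estimates of Lemma~\ref{lem:smoothing:helper}, and then account separately for the two extra perturbations (the $\mu$-shift and the $e_x$-shift) via Lemma~\ref{lem:pot_increase} and Lemma~\ref{lem:pot_param_stab}. First I would apply Theorem~\ref{thm:newton_step} to the chosen Newton step to obtain some $v'$ and $e$ with $\|v'-v\|_\infty\le 10\gamma$ and $\|e\|_{\tau+\infty}\le(1-\alpha/2)\gamma$ satisfying
\[
\Phi(x+\delta_x,s+\delta_s,\mu)=\Phi(x,s,\mu)-\grad\Phi(v')^\top h-\grad\Phi(v')^\top e.
\]
Since $\|\bar v-v\|_\infty\le\gamma$ we have $\|v'-\bar v\|_\infty\le 11\gamma\le 11\alpha/(50\lambda)\le 1/(5\lambda)$, so (\ref{eq:smoothing:pot_prop_2}) of Lemma~\ref{lem:smoothing:helper} (applied with the $\flat(\bar\tau)$ dual pairing and using $h=\gamma\grad\Phi(\bar v)^{\flat(\bar\tau)}$) yields
\[
\grad\Phi(v')^\top h\ge\gamma e^{-11\lambda\gamma}\|\grad\Phi(\bar v)\|_{*,\bar\tau}-\gamma\lambda n.
\]
For the error term I would use duality in the mixed norm $\mixedNorm{\cdot}{\tau}$ together with the near-equivalence $\bar\tau\approx_{\epsilon}\tau$, then invoke (\ref{eq:smoothing:pot_prop_3}) to get
\[
|\grad\Phi(v')^\top e|\le(1-\alpha/2)\gamma\|\grad\Phi(v')\|_{*,\tau}\le(1-\alpha/2)e^{12\lambda\gamma}\gamma\bigl(\|\grad\Phi(\bar v)\|_{*,\bar\tau}+\lambda n\bigr).
\]

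Combining these two bounds, the coefficient of $\gamma\|\grad\Phi(\bar v)\|_{*,\bar\tau}$ becomes $-e^{-11\lambda\gamma}+(1-\alpha/2)e^{12\lambda\gamma}$; with $\lambda\gamma\le\alpha/50$ and $\alpha\le1/5$, elementary Taylor estimates show this is bounded above by $-\alpha/4$, while the additive error is $O(\gamma\lambda n)$. To turn $\|\grad\Phi(\bar v)\|_{*,\bar\tau}$ into a multiple of $\Phi$, I would use the fact that for any $y\in\R^n$, testing the definition of $\|\cdot\|_{*,\bar\tau}$ against $x=\sign(y)/\mixedNorm{\sign(y)}{\bar\tau}$ and using $\sum\bar\tau_i\le 3d$ yields $\|y\|_{*,\bar\tau}\ge\|y\|_1/(1+\cnorm\sqrt{3d})=\Omega(\alpha/\sqrt d)\|y\|_1$. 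Plugging in the $\|\grad\Phi(\bar v)\|_1\ge\lambda\Phi(\bar v)-2\lambda n$ bound from (\ref{eq:smoothing:pot_prop_1}) and using that $\Phi(\bar v)\approx_{O(\lambda\gamma)}\Phi(v)=\Phi(x,s,\mu)$, this gives the desired first-order decrease $-\Omega(\gamma\alpha^2\lambda/\sqrt d)\Phi(x,s,\mu)$ plus an additive $O(\gamma\lambda n)$ term.

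It remains to absorb the two perturbations. Changing $\mu$ to $\mu+\delta_\mu$ scales $v$ multiplicatively by $1+\delta_\mu/\mu$; since $(x,s)$ is $(\mu,\epsilon)$-centered, $\|v\|_\infty\le e^{\epsilon}$, so the induced $\mixedNorm{\cdot}{\tau}$-change in $v$ is at most $(1+\cnorm\sqrt{2d})\cdot 2|\delta_\mu|/\mu\le O(\gamma\alpha/\sqrt d\cdot\sqrt d/\alpha)=O(\gamma)$, well within the $1/(5\lambda)$ radius needed for Lemma~\ref{lem:pot_increase}. That lemma then contributes an additional $O(\gamma\alpha/\sqrt d)\cdot(\|\grad\Phi(v)\|_{*,\tau}+\lambda n)$ to the potential, which after another application of (\ref{eq:smoothing:pot_prop_3}) is a tiny fraction of the decrease computed above. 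For the $e_x$ perturbation, I would invoke Lemma~\ref{lem:pot_param_stab} with $\delta_x\leftarrow e_x$ to bound $\mixedNorm{v_{\text{new}}-v}{\tau}\le 16\mixedNorm{\mx^{-1}e_x}{\tau}\le\gamma\alpha/2^{16}$, then apply Lemma~\ref{lem:pot_increase} once more; again the contribution is dominated by the main decrease. Summing the three pieces and collecting the constants yields the claimed inequality.

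The main obstacle I anticipate is bookkeeping: every step produces factors of $e^{O(\lambda\gamma)}$, $e^{O(\epsilon)}$, and small additive $\lambda n$ terms, and the $\bar\tau$ vs.\ $\tau$ swap must be tracked through the dual norm. Matching the explicit constant $1/640$ in the claimed bound $\gamma\alpha^2\lambda/(640\sqrt d)$ requires making the $-e^{-11\lambda\gamma}+(1-\alpha/2)e^{12\lambda\gamma}$ computation tight enough, and ensuring the dual-norm-to-$\ell_1$ conversion only loses a factor $O(\sqrt d/\alpha)$ rather than something worse; the hypotheses $\gamma\le\alpha/(50\lambda)$ and $\epsilon\le\alpha/4000$ are calibrated precisely to make these constants go through.
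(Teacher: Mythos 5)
Your proposal follows the same structure as the paper's proof: apply Theorem~\ref{thm:newton_step} to express the potential change as $-\grad\Phi(v')^\top(h+e)$, use (\ref{eq:smoothing:pot_prop_2}) and (\ref{eq:smoothing:pot_prop_3}) from Lemma~\ref{lem:smoothing:helper} to lower-bound $\grad\Phi(v')^\top h$ and upper-bound $|\grad\Phi(v')^\top e|$ in terms of dual norms of the gradient, convert the dual norm to an $\ell_1$-norm by testing against a scaled sign vector and invoke $\norm{\grad\Phi}_1\geq\lambda\Phi-2\lambda n$ from (\ref{eq:smoothing:pot_prop_1}), and absorb the $\delta_\mu$- and $e_x$-perturbations through Lemma~\ref{lem:pot_param_stab} and Lemma~\ref{lem:pot_increase}. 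The sequence of lemmas, the $4\sqrt{d}$ bound on $\mixedNorm{v}{\tau}$ for the $\mu$-shift, and the final assembly into a contraction with additive $O(\gamma\lambda n)$ are all the same as the paper's.

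The one place where your bookkeeping diverges is in how the $e$-term and the $h$-term are brought to a common dual norm. You convert $\norm{\grad\Phi(v')}_{*,\tau}$ in the error bound to $\norm{\grad\Phi(\bar v)}_{*,\otau}$, picking up the factor $e^{12\lambda\gamma}$ on the coefficient $(1-\alpha/2)$. The paper instead applies H\"older directly against $\norm{\grad\Phi(v')}_*$ (the dual of $\mixedNorm{\cdot}{\tau}$) for the $e$-term, so only the $h$-term acquires an exponential factor. This asymmetry matters at the boundary of the hypothesis: with $\lambda\gamma\leq\alpha/50$, your claimed estimate $-e^{-11\lambda\gamma}+(1-\alpha/2)e^{12\lambda\gamma}\leq-\alpha/4$ does not hold; a first-order expansion gives $\approx 23\lambda\gamma-\alpha/2\leq0.46\alpha-0.5\alpha=-0.04\alpha$, so you only get roughly $-\alpha/25$ rather than $-\alpha/4$. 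That still yields a genuine $-\Omega(\alpha)\cdot\gamma\norm{\grad\Phi}_*$ decrease and would prove the lemma with a larger constant in place of $640$, but not the exact constant as stated. To recover the paper's constant, do not move the exponential factor onto the error term: bound $|\grad\Phi(v')^\top e|\leq(1-\alpha/2)\gamma\norm{\grad\Phi(v')}_*$ by H\"older, leave the $e^{-O(\lambda\gamma)}$ factor only on the $h$-term, and combine as $\bigl(-e^{-11\lambda\gamma-\epsilon}+1-\alpha/2\bigr)\leq-\alpha/4$. You correctly flag that this constant-chasing is the delicate part; the specific fix is just which term eats the exponential factor.
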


\begin{proof}
Let $\Delta=\Phi(x+\delta_{x},s+\delta_{s},\mu)-\Phi(x,s,\mu)$. By
Theorem~\ref{thm:newton_step}, we have that for $e,\tau,v'\in\R^{n}$
given by this theorem it is the case that 
\begin{align*}
\Delta & =-\grad\Phi(v')^{\top}(h+e)\\
 & \leq-\gamma\left\langle \nabla\Phi(\overline{v})^{\flat(\otau)},\grad\Phi(v')\right\rangle +\|\grad\Phi(v')\|_{*}\|e\|_{\tau+\infty}\\
 & \leq-\gamma e^{-\gamma}\left\langle \nabla\Phi(\overline{v})^{\flat(\tau)},\grad\Phi(v')\right\rangle +\|\grad\Phi(v')\|_{*}\|e\|_{\tau+\infty}\\
 & \leq-\gamma e^{-11\lambda\gamma-\epsilon}\|\grad\Phi(v')\|_{*}+\gamma\lambda n+\left(1-\frac{\alpha}{2}\right)\gamma\|\grad\Phi(v')\|_{*}
\end{align*}
where we used (\ref{eq:smoothing:pot_prop_2}), $\|\overline{v}-v'\|_{\infty}\leq\|\overline{v}-v\|_{\infty}+\|v-v'\|_{\infty}\leq11\gamma$
and Theorem \ref{thm:newton_step}. Since $\gamma\leq\frac{\alpha}{50\lambda}$
and $\epsilon\le\frac{\alpha}{4000}$, (\ref{eq:smoothing:pot_prop_3})
implies that
\begin{align}
\Delta & \le-\gamma\frac{\alpha}{4}\|\grad\Phi(v')\|_{*}+\gamma\lambda n\leq-\gamma\frac{\alpha}{8}\|\grad\Phi(v)\|_{*}+2\gamma\lambda n\,.\label{eq:step_1}
\end{align}

Now, let $x_{1}\defeq x+\delta_{x}$ and $x_{2}\defeq x+\delta_{x}+e_{x}$.
Further, define $w_{1}\defeq\mx_{1}(s+\delta_{s})$, $w_{2}=\mx_{2}(s+\delta_{s})$,
$\tau_{1}=\taureg(x_{1},s+\delta_{s})$, $\tau_{2}=\taureg(x_{2},s+\delta_{s})$,
$v_{1}=\mu\mw_{1}^{-1}\tau_{1}$, and $v_{2}=\mu\mw_{2}^{-1}\tau_{2}$.
Note, that $\mixedNorm{\mx_{1}^{-1}e_{x}}{\tau_{1}}\leq2\mixedNorm{\mx^{-1}e_{x}}{\taureg(x,s)}\leq1/100$
and therefore $\mixedNorm{v_{2}-v_{1}}{\tau_{1}}\leq16\mixedNorm{\mx_{1}^{-1}e_{x}}{\tau_{1}}$
by Lemma~\ref{lem:pot_param_stab}. This implies that, $v_{3}\defeq(\mu+\delta_{\mu})\mw_{2}^{-1}\tau_{2}$
satisfies
\begin{align*}
\mixedNorm{v_{3}-v_{1}}{\tau_{1}} & \leq\mixedNorm{v_{3}-v_{2}}{\tau_{1}}+\mixedNorm{v_{2}-v_{1}}{\tau_{1}}\\
 & \leq\frac{\delta_{\mu}}{\mu}\mixedNorm{v_{2}}{\tau_{1}}+16\mixedNorm{\mx^{-1}e_{x}}{\tau_{1}}\,.
\end{align*}
Further $\norm{\tau_{1}}_{1}=2d$ and $\norm{v_{2}}_{\infty}\leq2$
imply that
\begin{align*}
\mixedNorm{v_{2}}{\tau_{1}} & \leq\norm{v_{2}}_{\infty}+\sqrt{\sum_{i\in[n]}[v_{2}]_{i}^{2}\cdot[\tau_{1}]_{i}}\\
 & \leq\left(1+\sqrt{\norm{\tau_{1}}_{1}}\right)\norm{v_{2}}_{\infty}\leq4\sqrt{d}\,.
\end{align*}
Using the bounds on $\delta_{\mu}$ and $\mixedNorm{\mx^{-1}e_{x}}{\tau_{1}}\leq\gamma\alpha2^{-15}$
this implies that $\mixedNorm{v_{3}-v_{1}}{\tau_{1}}\leq\gamma\alpha2^{-10}$.
Consequently, applying Lemma~\ref{lem:pot_increase} yields
\begin{align*}
\Phi(x+\delta_{x}+e_{x},s+\delta_{s},\mu+\delta_{\mu})-\Phi(x+\delta_{x},s+\delta_{s},\mu) & \leq\gamma\alpha2^{-10}e^{\lambda\gamma\alpha2^{-10}}(\norm{\grad\Phi(v_{1})}_{*}+\lambda n)\\
 & \leq\gamma\alpha2^{-10}e^{\lambda\gamma\alpha2^{-10}}\left(e^{10\gamma\lambda}\left(\norm{\grad\Phi(v)}_{*}+\lambda n\right)+\lambda n\right)\\
 & \leq\gamma\alpha2^{-8}\norm{\grad\Phi(v)}_{*}+\gamma\alpha2^{-6}n\,.
\end{align*}
where in the second line we applied Lemma~\ref{lem:smoothing:helper}
and $\norm{v-v_{1}}_{\infty}\leq10\lambda$. Combining with (\ref{eq:step_1})
yields
\[
\Phi(x+\delta_{x}+e_{x},s+\delta_{s},\mu+\delta_{\mu})\geq\Phi(x,s,\mu)-\frac{\gamma\alpha}{16}\norm{\grad\Phi(v)}_{*}+3\gamma\lambda n\,.
\]

Finally, for any vector $u$, we have $\|u\|_{*}\geq\left\langle u,\frac{\alpha\text{\ensuremath{\cdot}sign}(u)}{40\sqrt{d}}\right\rangle =\frac{\alpha}{40\sqrt{d}}\|u\|_{1}$
as
\begin{align*}
\left\Vert \frac{\alpha\text{\ensuremath{\cdot}sign}(u)}{40\sqrt{d}}\right\Vert _{\tau+\infty} & \leq\frac{1}{4}+\cnorm\frac{\alpha}{40\sqrt{d}}\left(\sum_{i\in[n]}\tau_{i}\right)^{1/2}\\
 & =\frac{1}{4}+\cnorm\frac{\alpha\sqrt{2d}}{40\sqrt{d}}\leq1.
\end{align*}
Hence, by Lemma~\ref{lem:smoothing:helper} we have
\begin{align*}
\norm{\grad\Phi(v)}_{*} & \geq\frac{\alpha}{40\sqrt{d}}\norm{\grad\Phi(v)}_{1}\\
 & \geq\frac{\alpha\lambda}{40\sqrt{d}}\left(\Phi(v)-2n\right)
\end{align*}
and combining yields the desired result.
\end{proof}
We now prove Theorem~\ref{thm:path_following_simplified}, which
is a simplified variant of Theorem~\ref{thm:path_following}, where
we do not claim that the result $x^{\mathrm{(final)}}$ is nearly
feasible. For Theorem~\ref{thm:path_following_simplified}, we analyze
Algorithm~\ref{alg:pathfollowing} under the assumption that it does
not execute Line~\ref{line:pf_maintainFeasibility} (i.e. the function
$\textsc{MaintainFeasibility}$ is not called). We show in Appendix~\ref{sec:maintaining_infeasibility}
that this result can be extended to return a nearly feasible solution,
i.e. how to extend Theorem~\ref{thm:path_following_simplified} to
Theorem~\ref{thm:path_following} by calling Line~\ref{line:pf_maintainFeasibility}
of Algorithm~\ref{alg:pathfollowing}.
\begin{thm}
\label{thm:path_following_simplified} Assume that the movement $e_{x}$
of $x$ induced by Line~\ref{line:pf_move_x_improve_feasible} of
Algorithm \ref{alg:pathfollowing} satisfies $\mixedNorm{\mx^{-1}e_{x}}{\tau}\leq\frac{\gamma\alpha}{2^{20}}$.
Given $x^{\textrm{\ensuremath{\mathrm{(init)}}}},s^{\textrm{\ensuremath{\mathrm{(init)}}}}\in\Rn_{>0}$,
$\mu^{\textrm{\ensuremath{\mathrm{(init)}}}}>0$, $\mu^{\textrm{\ensuremath{\mathrm{(target)}}}}>0$,
and $\epsilon\in(0,\alpha/16000)$ with $x^{\textrm{\ensuremath{\mathrm{(init)}}}}s^{\textrm{\ensuremath{\mathrm{(init)}}}}\approx_{2\epsilon}\mu^{\textrm{\ensuremath{\mathrm{(init)}}}}\cdot\tau(x^{\textrm{\ensuremath{\mathrm{(init)}}}},s^{\textrm{\ensuremath{\mathrm{(init)}}}})$
Algorithm~\ref{alg:pathfollowing} outputs $(x^{\mathrm{(final)}},s^{\mathrm{(final)}})$
such that 
\[
x^{\mathrm{(final)}}s^{\mathrm{(final)}}\approx_{\epsilon}\mu^{\mathrm{(target)}}\cdot\tau(x^{\mathrm{(final)}},s^{\mathrm{(final)}})
\]
in
\[
O\left(\sqrt{d}\log(n)\cdot\left(\log\left(\frac{\mu^{\textrm{\ensuremath{\mathrm{(target)}}}}}{\mu^{\textrm{\ensuremath{\mathrm{(init)}}}}}\right)/(\epsilon\alpha)+\frac{1}{\alpha^{3}}\right)\right)
\]
iterations.

Furthermore, during the algorithm \ref{alg:pathfollowing}, we have
\begin{itemize}
\item $xs\approx_{4\epsilon}\mu\cdot\tau(x,s)$ for some $\mu$ where $(x,s)$
is immediate points in the algorithms
\item $\|\mx^{-1}\delta_{x}\|_{\tau+\infty}\leq\frac{\epsilon}{2}$, $\|\ms^{-1}\delta_{s}\|_{\tau+\infty}\leq\frac{\epsilon}{2}$,
and $\|\diag(\tau)^{-1}\delta_{\tau}\|_{\tau+\infty}\leq2\epsilon$
where $\delta_{x}$, $\delta_{s}$ and $\delta_{\tau}$ is the change
of $x$, $s$ and $\tau$ in one iteration and $\norm a_{\tau+\infty}\defeq\norm x_{\infty}+\cnorm\norm x_{\tau}$
for $\cnorm\defeq\frac{10}{\alpha}$ (See Section~\ref{sec:weight_changes}).
\end{itemize}
\end{thm}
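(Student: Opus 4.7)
The plan is to track the potential $\Phi(v_t) = \sum_i \phi(v_{t,i})$ with $v_t = \mu_t \cdot \tau(x_t,s_t)/(x_t s_t)$ across the iterations of Algorithm~\ref{alg:pathfollowing} and show that it stays bounded by the termination threshold $2^{16} n \sqrt d/\alpha^2$ throughout. Once that invariant is established, the termination condition together with Lemma~\ref{lem:smoothing:helper} immediately gives the desired centrality: the threshold implies $\|\bar v - \vones\|_\infty \leq (1/\lambda)\log \Phi(\bar v) \leq \epsilon/2$, and combining with $\|\bar v - v\|_\infty \leq \gamma \leq \epsilon/4$ from Line~\ref{line:pf_2} yields $xs \approx_\epsilon \mu \tau(x,s)$.

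The core of the argument is a one-step application of Lemma~\ref{lem:newton_step_potential} at each iteration. I would verify its hypotheses inductively: the algorithm takes an $\epsilon$-Newton direction (in the sense of Definition~\ref{def:newton_step}) from a point which is, by the induction hypothesis, $(\mu_t, 4\epsilon)$-centered; the step direction is $h = \gamma \nabla \Phi(\bar v)^{\flat(\bar\tau)}$ with $\|\bar v - v\|_\infty \leq \gamma$ and $\bar\tau \approx_\epsilon \tau$; the feasibility correction $e_x$ satisfies the required bound $\|\mx^{-1} e_x\|_{\tau+\infty} \leq \gamma \alpha/2^{20}$ by hypothesis of Theorem~\ref{thm:path_following_simplified}; and the $\mu$ update has $|\delta_\mu|/\mu \leq \gamma \alpha /(2^{15}\sqrt d)$. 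Since $\epsilon \leq \alpha/16000$ we have the enlarged centrality parameter $4\epsilon \leq \alpha/4000$, so the $\epsilon$ in Lemma~\ref{lem:newton_step_potential} may be replaced by $4\epsilon$ and the hypothesis $\gamma \leq \min(\epsilon/4,\alpha/(50\lambda))$ carries over. The lemma then yields
\[
\Phi(v_{t+1}) \leq \Big(1 - \tfrac{\gamma \alpha^2 \lambda}{640\sqrt d}\Big)\Phi(v_t) + 4\gamma\lambda n.
\]
This recursion has fixed point $\Phi^\star = 2560 n \sqrt d/\alpha^2$, so $\Phi(v_t) \leq \max\{\Phi(v_0), 2\Phi^\star\}$ for every $t$. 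The initial hypothesis $x^{\textrm{(init)}}s^{\textrm{(init)}} \approx_{2\epsilon} \mu^{\textrm{(init)}}\tau(x^{\textrm{(init)}},s^{\textrm{(init)}})$ bounds $\Phi(v_0) \leq 2n e^{2\lambda\epsilon}$; substituting $\lambda = (2/\epsilon)\log(2^{16}n\sqrt d/\alpha^2)$ shows this is also $\leq 2^{16}n\sqrt d/\alpha^2$. Feeding the potential bound back through Lemma~\ref{lem:smoothing:helper} gives $\|v_t - \vones\|_\infty \leq 4\epsilon$, closing the induction on $(\mu_t, 4\epsilon)$-centeredness.

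For the iteration count, $\mu$ changes multiplicatively by $1 \pm \gamma\alpha/(2^{15}\sqrt d)$ per step, so reaching $\mu^{\mathrm{(target)}}$ takes $O(\sqrt d \log(\mu^{\mathrm{(target)}}/\mu^{\mathrm{(init)}})/(\gamma \alpha))$ iterations; substituting $\gamma = \alpha/(50\lambda) = \Theta(\alpha \epsilon/\log n)$ gives the $\sqrt d \log n \log(\mu^{\mathrm{(target)}}/\mu^{\mathrm{(init)}})/(\epsilon \alpha)$ contribution. Additionally, driving the initial potential below $\Phi^\star$ requires $O(\log(\Phi(v_0)/\Phi^\star)/C)$ steps where $C = \gamma \alpha^2 \lambda/(640\sqrt d)$; this yields the $\sqrt d/\alpha^3$ burn-in term and explains why we must keep iterating at $\mu = \mu^{\mathrm{(target)}}$ until the $\Phi$-threshold in the while loop's break condition is first met. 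The step-size invariants $\|\mx^{-1}\delta_x\|_{\tau+\infty}, \|\ms^{-1}\delta_s\|_{\tau+\infty} \leq \epsilon/2$ follow directly from Lemma~\ref{lem:step_stability} applied with $\|h\|_{\tau+\infty} \leq \gamma$, and the bound $\|\mT^{-1}\delta_\tau\|_{\tau+\infty} \leq 2\epsilon$ follows by expressing $\delta_\tau$ via the derivative computation of Lemma~\ref{lem:deriv_exact} and controlling $\mT_t^{-1}\mLambda_t$ through Lemma~\ref{lem:dir_lap_bound}.

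The main obstacle I expect is cleanly executing the inductive maintenance of $(\mu_t, 4\epsilon)$-centeredness, since Lemma~\ref{lem:newton_step_potential} is stated for a $(\mu,\epsilon)$-centered point but the algorithm only guarantees centrality approximately (and the centrality slack compounds with the $\bar x, \bar s, \bar\tau, \mh$ approximations, the $\mu$ multiplicative update, and the infeasibility correction $e_x$ of Line~\ref{line:pf_move_x_improve_feasible}). Verifying that all the $\epsilon$-, $2\epsilon$-, and $4\epsilon$-slacks propagate consistently—so that each iteration's point remains inside the regime where the lemma applies with its stated constants—is the delicate bookkeeping step of the proof; the tight coupling $\epsilon \leq \alpha/16000$ in the hypothesis appears to be chosen precisely to close this induction.
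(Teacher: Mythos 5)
Your proposal follows essentially the same route as the paper: establish by induction that $\Phi(v_t)$ stays bounded (using Lemma~\ref{lem:newton_step_potential} at each step), deduce $(\mu,4\epsilon)$-centeredness via Lemma~\ref{lem:smoothing:helper}, and count iterations by separating the $\mu$-progress rate from the burn-in needed to drive $\Phi$ below threshold. One arithmetic slip: you claim $\Phi(v_0)\leq 2n e^{2\lambda\epsilon}\leq 2^{16}n\sqrt d/\alpha^2$, but with $\lambda=\frac{2}{\epsilon}\log(2^{16}n\sqrt d/\alpha^2)$ one has $e^{2\lambda\epsilon}=(2^{16}n\sqrt d/\alpha^2)^4$, so the initial potential is far above the termination threshold — this is precisely why the $\sqrt d/\alpha^3$ burn-in term you correctly derive is needed, and fortunately your actual bound $\Phi(v_t)\leq\max\{\Phi(v_0),2\Phi^\star\}$ still yields $\|v_t-\vones\|_\infty\leq 3.5\epsilon$ in both branches. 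A genuine (small) difference from the paper: for $\|\mT^{-1}\delta_\tau\|_{\tau+\infty}\leq 2\epsilon$ you re-derive the bound from Lemmas~\ref{lem:deriv_exact} and~\ref{lem:dir_lap_bound}, which is self-contained and valid, whereas the paper cites Lemma~14 of \cite{lee2015efficient}; your route is cleaner for a reader of this paper.
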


\begin{proof}
Initially, we have $\Phi(v)\leq2n\cdot e^{3\epsilon\lambda}$ because
of $x^{\textrm{(init)}}s^{\textrm{(init)}}\approx_{2\epsilon}\mu^{\textrm{(init)}}\cdot\tau(x^{\textrm{(init)}},s^{\textrm{(init)}})$
and (\ref{eq:smoothing:pot_prop_1}) of Lemma~\ref{lem:smoothing:helper}. 

We proceed to show that in each iteration $\Phi(v)\leq\min\{2n\cdot e^{3\epsilon\lambda},2^{16}n\sqrt{d}\alpha^{-2}\}$.
Note that, by Lemma~\ref{lem:smoothing:helper} in any iteration
this holds, this implies that $\norm{v-1}_{\infty}\leq\frac{1}{\lambda}\log\left(\Phi(v)\right)$.
Since $\lambda=\frac{2}{\epsilon}\log(2^{16}n\sqrt{d}/\alpha^{2})\geq\frac{2}{\epsilon}\log(2n)$
this implies that $\norm{v-1}_{\infty}\leq3.5\epsilon$ which by the
choice of $\epsilon$ implies that $x,s$ is $(\mu,4\epsilon)$ centered.
Further, Lemma~\ref{lem:newton_step_potential} and the design of
the algorithm then imply that if $\Phi(v)\geq2^{16}n\sqrt{d}\alpha^{-2}$
in this iteration, then 
\begin{equation}
\Phi(x+\delta_{x}+e_{x},s+\delta_{s},\mu+\delta_{\mu})\leq\Big(1-\frac{\gamma\alpha^{2}\lambda}{1280\sqrt{d}}\Big)\cdot\Phi.\label{eq:phi_decrement}
\end{equation}
Consequently, by induction the claim holds for all iterations and
$xs\approx_{4\epsilon}\mu\cdot\taureg(x,s)$ for all iterations as
desired.

Now recall that $\alpha=1/(4\log(4n/d))\geq1/(2^{8}n)$ and consequently
as $d\leq n$ we have
\begin{align*}
\lambda & =\frac{2}{\epsilon}\log(2^{16}n\sqrt{d}/\alpha^{2})\\
 & \leq\frac{2}{\epsilon}\log\left(2^{25}n^{4}\right)\\
 & \leq\frac{8}{\epsilon}\log\left(7n\right)
\end{align*}
and
\begin{align*}
\gamma & =\min\left\{ \frac{\epsilon}{4},\frac{\alpha}{50\lambda}\right\} \\
 & =\frac{\alpha}{50\lambda}\\
 & \geq\frac{\alpha\epsilon}{400\log(7n)}\,.
\end{align*}
Consequently, the reasoning in the preceding paragraph implies that
in each step $\mu$ is moved closer to $\mu^{\textrm{(target)}}$
by a $1-\frac{\gamma\alpha}{2^{15}\sqrt{d}}=1-\Omega(\frac{\epsilon\alpha}{\log(n)\sqrt{d}})$
multiplicative factor. Hence, it takes $O(\frac{\sqrt{d}\log(n)}{\epsilon\alpha}\log(\frac{\mu^{\textrm{(target)}}}{\mu^{\textrm{(init)}}}))$
iterations to arrive $\mu^{\textrm{(target)}}$. Further, (\ref{eq:phi_decrement})
shows that it takes $O(\sqrt{d}\log(n)/\alpha^{3})$ iterations to
decrease $\Phi$ to $2^{16}n\sqrt{d}/\alpha^{2}$. Hence, in total,
it takes

\[
O\left(\sqrt{d}\log(n)\cdot\left(\frac{1}{\epsilon\alpha}\cdot\log\left(\frac{\mu^{\textrm{(target)}}}{\mu^{\textrm{(init)}}}\right)+\frac{1}{\alpha^{3}}\right)\right)
\]
iterations for the algorithm to terminate. Further, when the algorithm
terminates, we have that $x^{(\textrm{final})}s^{(\textrm{final})}\approx_{\epsilon}\mu^{\textrm{(target)}}\cdot\tau(x^{(\textrm{final})},s^{(\textrm{final})})$
because $\Phi\leq2^{16}n\sqrt{d}\alpha^{-2}$, $\lambda=2\epsilon^{-1}\log(2^{16}n\sqrt{d}\alpha^{-2})$
and (\ref{eq:smoothing:pot_prop_1}).

Finally, the bounds on the movement of $x$, $s$, follow from Lemma~\ref{lem:step_stability},
the condition $\mixedNorm{\mx^{-1}e_{x}}{\tau}\leq\frac{\gamma\alpha}{2^{20}}$,
and the choice of $\gamma$. The movement of $\tau$ follows from
Lemma 14 in \cite{lee2015efficient}.
\end{proof}

\inputencoding{latin9}\global\long\def\dsquery{\textsc{Query}}%
\global\long\def\dsinit{\textsc{Initialize}}%
\global\long\def\dsmark{\textsc{MarkExact}}%

\section{Vector Maintenance\label{sec:vec_maintenance}}

In this section we prove Theorem~\ref{thm:product_sum_datastructure}
(presented in Section~\ref{sub:vector_restated} and restated below)
by providing and analyzing our vector maintenance data-structure.
Recall that vector maintenance asks to maintain an approximation of
the vector $y^{(t+1)}:=\sum_{k\in[t]}\mg^{(k)}\ma h^{(k)}+\delta^{(k)}$
for an online sequence of diagonal matrices $\mg^{(1)},\mg^{(2)},...,\mg^{(T)}\in\R^{n\times n}$
and vectors $h^{(1)},...,h^{(T)}\in\R^{d}$ and $\delta^{(1)},...,\delta^{(T)}\in\R^{n}$.
This data-structure is used in Section~\ref{sec:lp_algorithms} to
maintain approximately feasible points.

\productsumdatastructure*

We split the proof of Theorem~\ref{thm:product_sum_datastructure}
into three parts. First, in Section~\ref{sub:large_entry_datastructure},
we show how to use sparse-recovery algorithms to build a data-structure
for quickly computing the large entries (i.e. $\ell_{2}$-heavy hitters)
of the matrix-vector product $\ma h$. Second, in Section~\ref{sub:accumulated_matrix_product},
we show how to use the data-structure of Section~\ref{sub:large_entry_datastructure}
to obtain a new data-structure that can quickly approximate the partial
sum $\sum_{k=t_{0}}^{t_{1}}\mg^{(k)}\ma h^{(k)}$, in time roughly
proportional to $d\cdot\|\sum_{k=t_{0}}^{t_{1}}\mg^{(k)}\ma h^{(k)}\|_{2}^{2}$.
The naive way of computing an approximation of $y^{(t)}$, by using
this data-structure for $t_{0}=1$ and $t_{1}=t$, would be prohibitively
slow and therefore, in the last Section~\ref{sub:product_sum_datastructure},
we obtain the main result Theorem~\ref{thm:product_sum_datastructure}
by instead splitting the sum $\sum_{k=1}^{t}\mg^{(k)}\ma h^{(k)}$
into roughly $O(\log n)$ partial sums, each of which is approximated
efficiently by the data-structure of Section~\ref{sub:accumulated_matrix_product}.

\subsection{$\ell_{2}$-heavy Hitter\label{sub:large_entry_datastructure}}

Here we show how to preprocess any matrix $\ma$, such that we can
build a data structure which supports quickly computing the large
entries of the product $\mg\ma h$ for changing diagonal $\mg$.
\begin{lem}[Approximate matrix-vector product]
\label{lem:large_entry_datastructure} There exists a Monte-Carlo
data-structure (Algorithm~\ref{alg:large_entry_datastructure}),
that works against an adaptive adversary, with the following procedures:
\begin{itemize}
\item \textsc{$\dsinit(\ma,g)$:} Given a matrix $\ma\in\R^{n\times d}$
and a scaling $g\in\R^{n}$, the data-structure preprocesses in $O(\nnz(\ma)\log^{4}n)$
time.
\item \textsc{$\dsquery(h,\epsilon)$:} Given a vector $h\in\R^{d}$, w.h.p.
in $n$, the data-structure outputs the vector $v\in\R^{n}$ such
that
\[
v_{i}=\begin{cases}
(\mg\ma h)_{i}, & \text{if }|(\mg\ma h)_{i}|\ge\varepsilon\\
0, & \text{otherwise}
\end{cases}
\]
for $\mg=\mdiag(g)$ in time $O(\|\mg\ma h\|^{2}\cdot\varepsilon^{-2}\cdot d\log^{3}n)$.
\item \textsc{Scale($i,u$):} Sets $g_{i}\leftarrow u$ in $O(d\log^{4}n)$
time.
\end{itemize}
\end{lem}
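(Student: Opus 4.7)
The plan is to maintain a compressed sketch of $\mg\ma$ via an $\ell_2$-heavy hitter matrix $\Phi\in\R^{k\times n}$ of the form constructed in \cite{knpw11,p13}, with row count $k=\polylog(n)$ and column sparsity $s=O(\log n)$, whose decoder, given $\Phi y$ and any threshold $\epsilon>0$, returns in time $\polylog(n)\cdot(1+\|y\|_2^2/\epsilon^2)$ a set $S\subseteq[n]$ of size $O(1+\|y\|_2^2/\epsilon^2)$ containing every index $i$ with $|y_i|\ge\epsilon$ w.h.p.\ in $n$.

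During \textsc{Initialize}$(\ma,g)$ I would store $g$ and precompute $\mb\defeq\Phi\mg\ma\in\R^{k\times d}$. Using the rank-one expansion $\Phi\mg\ma=\sum_{i\in[n]} g_i\,\Phi_{:,i}\,(e_i^{\top}\ma)$ and the fact that each column $\Phi_{:,i}$ has at most $s$ nonzeros, $\mb$ can be formed in $O(s\cdot\nnz(\ma))=O(\nnz(\ma)\log n)$ time, easily within the stated $O(\nnz(\ma)\log^{4}n)$ budget. For \textsc{Scale}$(i,u)$ I would perform the rank-one correction $\mb\leftarrow \mb+(u-g_i)\,\Phi_{:,i}\,(e_i^{\top}\ma)$ and then set $g_i\leftarrow u$; since $\Phi_{:,i}$ is $O(\log n)$-sparse, this costs $O(d\log n)$ time. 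For \textsc{Query}$(h,\epsilon)$ I would compute $z\defeq \mb h=\Phi(\mg\ma h)$ in $O(kd)$ time, invoke the decoder on $(z,\epsilon)$ to obtain the candidate set $S$, compute each exact value $(\mg\ma h)_i=g_i\langle e_i^{\top}\ma,\,h\rangle$ in $O(d)$ time for $i\in S$, and define $v_i\defeq (\mg\ma h)_i$ when $|(\mg\ma h)_i|\ge\epsilon$ and $v_i\defeq 0$ otherwise (also $v_i=0$ for $i\notin S$). The total query time is $O(kd+|S|\cdot d)=O(d\,\polylog(n)\cdot(1+\|\mg\ma h\|_2^{2}/\epsilon^{2}))$, which matches the claimed $O(\|\mg\ma h\|^{2}\epsilon^{-2}d\log^{3}n)$ after absorbing $k=\polylog(n)$.

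The main point requiring care is adaptive-adversary robustness: I must argue that inputs chosen adaptively from past outputs do not compromise $\Phi$. Conditioned on the high-probability success event of the decoder, the nonzero support of $v$ equals $\{i:|(\mg\ma h)_i|\ge\epsilon\}$ exactly, with the reported values being the exact inner products $g_i\langle e_i^{\top}\ma,h\rangle$. Thus the output of \textsc{Query} is a deterministic function of $(\ma,g,h,\epsilon)$ alone, revealing no information about $\Phi$, so subsequent inputs---however chosen---remain independent of $\Phi$ and the heavy hitter guarantee continues to hold. The remaining details (tuning $k$ so the $\log$-factor bookkeeping matches $\log^{3}n$ and $\log^{4}n$, and arranging the decoder to accept a query-time absolute threshold, which \cite{knpw11,p13} support via a norm-estimation preprocessing on $\Phi y$) are routine; beyond this, no substantive obstacle remains.
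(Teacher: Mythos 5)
Your adaptive‑adversary argument (verify each candidate exactly so that, conditioned on the decoder's high‑probability success event, the output is a deterministic function of $(\ma,g,h,\epsilon)$ alone) is correct and matches the paper. The gap is in the sketching machinery: you cannot get by with a \emph{single} sketch of $k=\polylog(n)$ rows queried at an arbitrary threshold. The guarantee you attribute to \cite{knpw11,p13} — that from $\Phi y$ with $k=\polylog(n)$ rows one can recover all $i$ with $|y_i|\ge\epsilon$ in time and output size $\polylog(n)\cdot(1+\|y\|_2^2/\epsilon^2)$ — is not something a fixed‑size sketch can deliver. The number of $\epsilon$‑heavy coordinates can be as large as $\|y\|_2^2/\epsilon^2$, and when this exceeds the row count $k$ there is simply not enough information in $\Phi y\in\R^k$ to identify them all; the sketch of Lemma~\ref{lem:ell_2_heavy_hitter} accordingly has $m=O(\epsilon^{-2}\log^2 n)$ rows, i.e.\ its size must scale with the inverse square of the (relative) threshold at which you want to decode. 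Concretely, your plan fails whenever $\|\mg\ma h\|_2^2/\epsilon^2 = \omega(\polylog n)$, which is exactly the regime the stated time bound $O(\|\mg\ma h\|^2\epsilon^{-2}d\log^3 n)$ is designed to handle.

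The paper circumvents this by maintaining a geometric family of sketches $\mPhi_j=\sketch(2^{-j},n)$ for $j=1,\dots,\log_2(n)/2$ (row counts from $O(\log^2 n)$ up to $O(n\log^2 n)$), together with a JL sketch $\mj=\mr\mg\ma$ to estimate $\|\mg\ma h\|_2$ at query time. A query estimates the norm, selects the coarsest $j$ with $2^{-j}\lesssim\epsilon/\|\mg\ma h\|_2$, decodes $\mPhi_j\mg\ma h$, and verifies candidates exactly; if the required $j$ exceeds $\log_2(n)/2$ (i.e.\ $\|\mg\ma h\|_2/\epsilon\gtrsim\sqrt n$), it falls back to computing $\mg\ma h$ directly in $O(nd)$ time, which is within budget. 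Maintaining $O(\log n)$ sketches of column sparsity $O(\log^2 n)$ is what produces the $O(d\log^3 n)$ per‑coordinate cost of \textsc{Scale} and the $O(\nnz(\ma)\log^3 n)$ preprocessing, with one more $\log n$ factor coming from running $O(\log n)$ independent copies to amplify the constant success probability of each decode to w.h.p.\ in $n$. To repair your proof, replace the single sketch with this multi‑scale family plus the JL norm estimator and the direct‑computation fallback; your verification and determinism argument can then be kept as is.
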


\renewcommand{\Comment}[1]{\vspace{0.05 in}\tcp{#1}}

\begin{algorithm2e}[!t]

\caption{Approximate Matrix-vector Product (Lemma~\ref{lem:large_entry_datastructure})
\label{alg:large_entry_datastructure}}

\SetKwProg{Members}{members}{}{}

\SetKwProg{Proc}{procedure}{}{}

\Members{}{

$\ma\in\R^{n\times d}$, $g\in\R^{n}$ \tcp*{Input matrix $\ma$
and row scaling $g$}

$\mPhi_{j}=\sketch(2^{-j},n)\in\R^{m_{j}\times n}$ for all $j\in[\log_{2}(n)/2]$
\tcp*{Sketching matrices }

$\mm_{j}\in\R^{m_{j}\times d}$ for all $j\in[\log_{2}(n)/2]$ \tcp*{Sketched
input matrix $\mm_{j}=\mPhi_{j}\mg\ma$}

$\mr=\JL(1/100,n)\in\R^{k\times n}$ with $k=O(\log n)$\tcp*{$(1\pm\frac{1}{100})$-approximate
JL-matrix }

$\mj\in\R^{k\times d}$ \tcp*{Sketched input matrix $\mj=\mr\mg\ma$}

}

\vspace{0.1 in}

\Proc{\textsc{Initialize}$(\ma\in\R^{n\times d},g\in\R^{n})$}{

$\mPhi_{j}\leftarrow\sketch(2^{-j},n)$ for all $j\in[\log_{2}(n)/2]$
\tcp*{Lemma \ref{lem:ell_2_heavy_hitter}}

$\mr\leftarrow\JL(1/100,n)$ \tcp*{Lemma \ref{lem:JL_lemma}}

$\mm_{j}\leftarrow\mPhi_{j}\mg\ma$ for $j\in[\log_{2}(n)/2]$

$\mj\leftarrow\mr\mg\ma$, $\ma\leftarrow\ma$, $g\leftarrow g$

}

\vspace{0.1 in}

\Proc{\textsc{Scale}$(i\in[n],u\in\R)$}{

$\mm_{j}\leftarrow\mm_{j}+(u-g_{i})\mPhi_{j}\indicVec i\indicVec i^{\top}\ma$
for $j\in[\log_{2}(n)/2]$

$\mj\leftarrow\mj+(u-g_{i})\mr e_{i}e_{i}^{\top}\ma$

$g_{i}\leftarrow u$

}

\vspace{0.1 in}

\Proc{\textsc{Query}$(h\in\R^{d},\epsilon\in(0,1))$}{$r\leftarrow\|\mj h\|_{2}$,
$j\leftarrow1+\left\lceil \log_{2}(r/\epsilon)\right\rceil $ \tcp*{Estimate
$\|\mg\ma h\|_{2}$ and which $\mm_{j}$ to consider}

\Comment{Compute $\ma h$ directly if we have the budget to do so}

\lIf{$j\ge\log_{2}(n)/2$} {\Return $\mg\ma h$ }

\Comment{Otherwise compute $\ma h$ approximately using $\mm_{j}$}

Compute the list $L\subset[n]$ of possible heavy hitters of $\mg\ma h$
from $\mm_{j}h$ via Lemma \ref{lem:ell_2_heavy_hitter}

$v\leftarrow\mzero_{n}$

\For{$i\in L$}{

\lIf{$(\mg\ma h)_{i}\ge\varepsilon$}{$v_{i}\leftarrow(\mg\ma h)_{i}$}

}

\Return $v$

}

\end{algorithm2e}

The proof of Lemma~\ref{lem:large_entry_datastructure} is based
on $\ell_{2}$-heavy hitter sketches. While there exist many results
for sketching and computing heavy hitters with different trade-offs
\cite{cm04,nnw14,ccf02,knpw11,p13,ch09,lnnt16}, we only require the
following Lemma~\ref{lem:ell_2_heavy_hitter} adapted from \cite{knpw11,p13}.
\begin{lem}[$\ell_{2}$-heavy hitter, \cite{knpw11,p13}]
\label{lem:ell_2_heavy_hitter} There exists a function $\sketch(\epsilon,n)$
that given $\epsilon>0$ explicitly returns a matrix $\mPhi\in\R^{m\times n}$
with \textup{$m=O(\varepsilon^{-2}\log^{2}n)$} and column sparsity
$c=O(\log^{2}n)$ in $O(nc+m\poly\log n)$ time, and uses $O(nc)$
spaces to store the matrix $\Phi$. There further exists a function
$\hitter(y)$ that given a vector $y=\mPhi x$ in time $O(\varepsilon^{-2}\log^{2}n)$
reports a list $L\subset[n]$ of size $O(\epsilon^{-2})$ that with
probability at least $9/10$ includes all $i$ with 
\[
|x_{i}|\geq\epsilon\min_{\|y\|_{0}\leq\epsilon^{-2}}\|y-x\|_{2}
\]
\end{lem}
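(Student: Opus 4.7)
The plan is to construct $\mPhi$ as a hierarchical Count-Sketch instance and then verify each claimed parameter, following the approach of \cite{knpw11,p13}. The guiding principle is to combine two ingredients: (a) hash-with-sign sketches that give unbiased low-variance estimators for individual coordinate magnitudes, and (b) a dyadic identification tree that lets us localize heavy coordinates in $O(\log n)$ levels rather than brute-forcing over all $n$ indices.

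First I would specify $\mPhi$ as a vertical stacking of $\lceil \log_2 n \rceil$ blocks, where the $j$-th block corresponds to a dyadic partition of $[n]$ into $2^j$ equally-sized sub-intervals. Inside each level, I instantiate $R = \Theta(\log n)$ independent CountSketches with $k = \Theta(\epsilon^{-2})$ buckets and random $\pm 1$ signs, with the hash and sign families drawn from an $O(\log n)$-wise independent family so that each is stored in $O(\log^2 n)$ bits and evaluated in $O(\log n)$ time. The total number of rows is $m = \Theta(\epsilon^{-2}\log^2 n)$, and each coordinate $i \in [n]$ is placed into exactly one bucket per (level, repetition) pair, giving column sparsity $c = \Theta(\log^2 n)$. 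The explicit $O(nc + m\,\polylog n)$ construction time and $O(nc)$ storage then follow by writing down the sparse representation column by column.

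Second I would describe and analyze $\hitter$. Given $y = \mPhi x$, I walk down the dyadic tree from the root; at each node I use the $R$ CountSketch estimates within that node's sub-interval to estimate its squared $\ell_2$-mass, and I mark the node as \emph{heavy} iff this estimate exceeds a threshold of order $\epsilon^2 \cdot \min_{\|y'\|_0 \le \epsilon^{-2}} \|y' - x\|_2^2$. I then descend only into marked children. Standard tail-bound analysis for CountSketch (tail noise bounded in $\ell_2$ by the residual after removing the top $O(\epsilon^{-2})$ coordinates, plus a Chernoff over $R$ repetitions) shows that with probability at least $9/10$ every truly $\epsilon$-heavy coordinate stays inside a marked node at every level, while the number of marked nodes at any level is $O(\epsilon^{-2})$. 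After $O(\log n)$ levels each surviving leaf localizes a single coordinate, so the returned set has size $O(\epsilon^{-2})$ and each level costs $O(\epsilon^{-2}\log n)$ time to process, for a total decoding time of $O(\epsilon^{-2}\log^2 n)$.

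The main technical obstacle is the recursive tail-noise bound: one must argue that, conditional on correct marking at higher levels, the CountSketch estimator error at a lower level is controlled by the $\ell_2$-mass of the coordinates that were \emph{not} routed into the surviving buckets, so that spurious heavy nodes cannot inflate the list size beyond $O(\epsilon^{-2})$. This is the core content of the hierarchical analysis in \cite{knpw11,p13} and requires taking a union bound over all $O(\epsilon^{-2}\log n)$ per-bucket decisions (hence the choice $R = \Theta(\log n)$ so that each decision is correct with probability $1 - 1/\poly(n)$). A secondary and easier issue is matching the claimed $9/10$ overall success probability; since the decoder only has to locate the heavy coordinates and its output is a short list, a small constant blow-up in $R$ suffices to drive the total failure probability below $1/10$ after the union bound.
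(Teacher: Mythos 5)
The paper does not actually prove this lemma; it is stated as a citation to \cite{knpw11,p13} and used as a black box, so there is no in-paper proof to compare against. Your sketch is a reasonable reconstruction of the standard hierarchical argument that underlies those references: a dyadic identification tree with $O(\log n)$ levels, $\Theta(\log n)$ independent CountSketch repetitions of width $\Theta(\epsilon^{-2})$ per level, and a top-down decoder that keeps only $O(\epsilon^{-2})$ marked nodes per level. The parameter accounting ($m$, the column sparsity, the decode time, the list size) all line up with the statement. Two points worth tightening if this were to stand as a real proof: (i) you estimate the ``squared $\ell_2$-mass'' of each internal node directly from the CountSketch counters, but the counters only give you aggregated sign-sums over sub-intervals; turning those into reliable $\ell_2$-mass estimates requires either a separate AMS-type norm estimator per level or a careful moment argument about $\big(\sum_{i\in I}\sigma_i x_i\big)^2$ as an estimator of $\|x_I\|_2^2$, and this is precisely where the recursive tail-noise control you flag as the ``main technical obstacle'' lives; and (ii) your union bound is over $O(\epsilon^{-2}\log n)$ per-node decisions with per-decision failure $1/\mathrm{poly}(n)$, which requires $\epsilon^{-2}\le \mathrm{poly}(n)$ — worth noting explicitly (WLOG $\epsilon^{-2}\le n$, since otherwise one may simply return $[n]$), so that $R=\Theta(\log n)$ genuinely suffices for overall failure $\le 1/10$. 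With those caveats made explicit, the approach matches the standard literature and would yield the stated guarantees.
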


Johnson-Lindenstrauss lemma is a famous result about low-distortion
embeddings of points from high-dimensional into low-dimension Eculidean
space. It was named after William B. Johnson and Joram Lindenstrauss
\cite{jl84}. The most classical matrix that gives the property is
random Gaussian matrix, it is known that several other matrices also
suffice to show the guarantees, e.g. FastJL\cite{ac06}, SparseJL\cite{kn14},
Count-Sketch matrix \cite{ccf02,tz12}, subsampled randomized Hadamard/Fourier
transform \cite{ldfu13}.
\begin{lem}[Johnson--Lindenstrauss (JL) \cite{jl84}]
\label{lem:JL_lemma} There exists a function $\JL(\epsilon,n)$
that given $\epsilon>0$ returns a matrix $\mj\in\R^{m\times n}$
with $m=O(\epsilon^{-2}\log n)$ in $O(mn)$ time. For any $v\in\R^{n}$
this matrix $\mj$ satisfies with high probability in $n$ that $\|\mj v\|_{2}\approx_{\epsilon}\|v\|_{2}$.
\end{lem}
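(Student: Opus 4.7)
\medskip

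\noindent\textbf{Proof proposal for Lemma~\ref{lem:JL_lemma}.}
The plan is to instantiate $\JL(\epsilon,n)$ by the classical dense Gaussian construction: let $m = \lceil C\epsilon^{-2}\log n\rceil$ for a sufficiently large absolute constant $C$ and return the matrix $\mj\in\R^{m\times n}$ whose entries $\mj_{ij}$ are i.i.d.\ $\mathcal{N}(0,1/m)$. Sampling $mn$ independent standard Gaussians and scaling them costs $O(mn)$ time under the standard real-RAM assumption, giving the claimed construction time. (One could alternatively use the $\pm 1/\sqrt{m}$ Achlioptas variant or a SparseJL/FastJL construction; the Gaussian choice is simplest for the analysis and all we need is an unbiased $\ell_2$ isometry with sub-Gaussian concentration.)

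For the correctness bound, I would fix an arbitrary $v\in\R^n$ and, without loss of generality, assume $\|v\|_2 = 1$ (the distribution of $\|\mj v\|_2/\|v\|_2$ does not depend on $v$ since $\mj v$ is a vector of i.i.d.\ $\mathcal{N}(0,\|v\|_2^2/m)$ entries by linearity and rotational invariance of the Gaussian). Writing $\mj v = \frac{1}{\sqrt{m}}(Z_1,\dots,Z_m)^\top$ with $Z_i$ i.i.d.\ $\mathcal{N}(0,1)$, we have
\[
\|\mj v\|_2^2 \;=\; \frac{1}{m}\sum_{i=1}^m Z_i^2,
\]
whose expectation is exactly $1$. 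The plan is then to apply the standard chi-squared tail inequality (Laurent--Massart, or a direct Chernoff bound on $\sum Z_i^2$): for any $t\in(0,1)$,
\[
\Pr\!\left[\,\bigl|\|\mj v\|_2^2 - 1\bigr| \ge t\,\right] \;\le\; 2\exp\!\bigl(-c\, m\, t^2\bigr)
\]
for an absolute constant $c>0$. Taking $t = \epsilon/2$ and $m \ge C\epsilon^{-2}\log n$ with $C$ large enough makes the failure probability at most $n^{-K}$ for any desired constant $K$, which is the ``high probability in $n$'' guarantee required by the lemma.

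To convert the additive-on-squared-norm bound to the multiplicative-on-norm bound $\|\mj v\|_2 \approx_\epsilon \|v\|_2$, I would just take square roots: on the good event,
\[
\|\mj v\|_2 \in \bigl[\sqrt{1-\epsilon/2},\,\sqrt{1+\epsilon/2}\bigr]\cdot\|v\|_2 \subseteq [e^{-\epsilon},e^{\epsilon}]\cdot\|v\|_2
\]
provided $\epsilon$ is at most some absolute constant (for larger $\epsilon$ the statement is trivial after adjusting $C$), since $\sqrt{1+x} \le 1 + x/2 \le e^{x/2}$ and $\sqrt{1-x} \ge e^{-x}$ for small $x$. Thus $\|\mj v\|_2 \approx_\epsilon \|v\|_2$ as required.

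There is no real obstacle here; the only minor care needed is (i) matching the paper's $\approx_\epsilon$ convention, which is handled cleanly by absorbing constants into $C$, and (ii) making ``high probability in $n$'' quantitative, which is achieved by scaling the constant $C$ to match whatever polynomial in $n$ the caller demands. The lemma is classical and the proof above is essentially the one from \cite{jl84} (via the Gaussian variant standard in the sketching literature); for our purposes any of the alternative constructions cited in the paper would work equally well and would additionally save a factor in the construction time, but the $O(mn)$ Gaussian bound already meets the stated guarantee.
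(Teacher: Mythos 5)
Your proposal is correct, but note that the paper does not actually prove this lemma---it is cited as a known result from~\cite{jl84} (and the surrounding remarks point to several alternative constructions: FastJL, SparseJL, CountSketch, subsampled Hadamard). Your Gaussian construction with the chi-squared tail bound is the standard proof of the JL lemma and establishes exactly what is claimed, including the $O(mn)$ construction time and the conversion from the additive bound on $\|\mj v\|_2^2$ to the multiplicative $\approx_\epsilon$ guarantee on $\|\mj v\|_2$; there is nothing to compare against since the paper offers no argument of its own here.
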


It is worth to mention that JL lemma is tight up to some constant
factor, i.e., there exists a set of points of size $n$ that needs
dimension $\Omega(\epsilon^{-2}\log n)$ in order to preserve the
distances between all pairs of points \cite{ln17}.

Our proof of Lemma~\ref{lem:large_entry_datastructure} is through
the analysis of Algorithm \ref{alg:large_entry_datastructure} which
works as follows. The data-structure creates $\log_{2}n$ sketching
matrices $\Phi_{1},...,\Phi_{\log_{2}n}$ according to Lemma~\ref{lem:ell_2_heavy_hitter},
where each $\Phi_{j}$ is constructed for $\varepsilon=2^{-j}.$The
data-structure then maintains the matrices $\mm_{j}=\Phi_{j}\mg\ma$,
whenever $\mg$ is changed by calling $\textsc{Scale}$. Whenever
$\textsc{Query}(h,\varepsilon)$ is called, the data-structure estimates
$\|\mg\ma h\|_{2}$ via a Johnson-Lindenstrass matrix, and then pick
the smallest $j$, for which $\mm_{j}h=\Phi_{j}\mg\ma h$ is guaranteed
to find all $i$ with $|(\mg\ma h)_{i}|>\varepsilon$, i.e. where
$\Phi_{j}$ was initialized for $\varepsilon'=2^{-j}\approx\|\mg\ma h\|_{2}$.
To make sure that the algorithm works against an adaptive adversary,
we compute these entries $(\mg\ma h)_{i}$ exactly and output only
those entries whose absolute value is at least $\varepsilon$ (as
we might also detect a few $i$ for which the entry is smaller).

We now give a formal proof, that this data-structure is correct.
\begin{proof}[Proof of Lemma \ref{lem:large_entry_datastructure}]
 \textbf{Failure probability: }Here we prove Algorithm~\ref{alg:large_entry_datastructure}
succeeds with constant probability. Note that Lemma~\ref{lem:large_entry_datastructure}
states that the result holds w.h.p. in $n$, i.e. probability $1-n^{-c}$
for any constant $c>0$. To achieve this better probability bound,
one can run $O(c\log n)=O(\log n)$ copies of this algorithm in parallel.
This also means that for each query we obtain many vectors $v_{1},...,v_{\ell}$
for $\ell=O(\log n)$. We combine them into a single vector by taking
the union, i.e. $v_{i}=(\mg\ma h)_{i}$ if there is some $j$ with
$(v_{j})_{i}\neq0$. This increases the time complexities for preprocessing,
query and scaling increase by an $O(\log n)$ factor. Consequently,
in what follows we show that the runtimes of the algorithm are an
$O(\log n)$ factor smaller than the runtime stated in Lemma\ref{lem:large_entry_datastructure}.

\paragraph{Initialize:}

We use Lemma~\ref{lem:ell_2_heavy_hitter} to create $\mPhi_{j}\in\R^{m_{j}\times n}$
for each $j\in[\log_{2}(n)/2]$ in $\otilde(\sqrt{n})$ time by Lemma~\ref{lem:ell_2_heavy_hitter}
as the smallest epsilon is $2^{-\log_{2}n/2}=\Omega(1/\sqrt{n})$.
These matrices are then multiplied with $\mg\ma$ to obtain $\mPhi_{j}\mg\ma$.
Computing each of these products can be implemented in $O(\nnz(\ma)\cdot\log^{2}n)$
time as each column of $\mPhi_{j}$ has only $O(\log^{2}n)$ non-zero
elements. Since $\nnz(\ma)\ge n$, as otherwise there would be some
empty rows in $\ma$ that could be removed, the initialization takes
$O(\nnz(\ma)\log^{3}n)$ time.

\paragraph{Queries:}

For query vector $h\in\R^{d}$ we want to find $v\in\R^{n}$ with
$v_{i}=(\mg\ma h)_{i}$ for all $i$ where $|(\mg\ma h)_{i}|\ge\varepsilon$
and $v_{i}=0$ otherwise. By Lemma~\ref{lem:JL_lemma}, $r=\|\mj h\|_{2}$
is a 2-approximation of $\|\mg\ma h\|_{2}$.

Now, If $r/\epsilon\geq\sqrt{n}$ then $\|\mg\ma h\|_{2}/\epsilon\geq\sqrt{n}/2$
and $O(nd)$ time is within the $O(\|\mg\ma h\|^{2}\cdot\varepsilon^{-2}\cdot d\log^{3}n)$
time-budget for the query. Hence, the algorithm simply outputs the
vector by direct calculations, which takes $O(nd)$ time.

Otherwise, the algorithm uses Lemma~\ref{lem:ell_2_heavy_hitter}
to find a list of indices $L$ which contains all $i$ such that $|(\mg\ma h)_{i}|\geq2^{-j}\|\mg\ma h\|_{2}$
with probability at least $9/10$. Since $j=1+\left\lceil \log_{2}(r/\epsilon)\right\rceil $
and $r\leq2\|\mg\ma h\|_{2}$, the heavy hitters algorithm outputs
all $i$ such that $|(\mg\ma h)_{i}|\geq\epsilon$. Hence, the vector
$v$ that we output is correct.

To bound the runtime, note that we find the list $L$ by first computing
$y:=\mPhi_{j}\mg\ma h=\mm_{j}h$ in $O(m_{j}d)=O((4^{j}\log^{2}n)d)$
time. Then, we use Lemma \ref{lem:ell_2_heavy_hitter} to decode the
list in time $O(4^{j}\log^{2}n)$ time. Hence, in total, it takes
$O(\|\mg\ma h\|_{2}^{2}\varepsilon^{-2}\cdot d\log^{2}n)$.

\paragraph{Scaling:}

When we set $g_{i}=u$, i.e by adding $u-g_{i}$ to entry $g_{i}$,
then the product $\mPhi_{j}\mg\ma$ changes by adding the outer product
$((u-g_{i})\cdot\mPhi_{j}\indicVec i)(\indicVec i^{\top}\mg\ma)$.
This outer product can be computed in $O(d\log^{2}n)$ time, because
each column of $\mPhi_{j}$ has only $O(\log^{2}n)$ non-zero entries.
Since there are $O(\log n)$ many $j$, the total time is $O(d\log^{3}n)$.
\end{proof}

\subsection{Maintaining Accumulated Sum\label{sub:accumulated_matrix_product}}

\begin{algorithm2e}[]

\caption{Maintain Accumulated Matrix-vector Product (Lemma \ref{lem:accumulated_matrix_product})}\label{alg:accumulated_matrix_product}

\SetKwProg{Members}{members}{}{}

\SetKwProg{Proc}{procedure}{}{}

\Members{}{

\State  $\ma\in\R^{n\times d}$ \tcp*{Input matrix}

\State  $t\in\N$ \tcp*{Number of updates so far}

\State  $(h\k)_{k\ge1},$ where $h\k\in\R^{d}$ \tcp*{Input vectors
since last query}

\State  $D$ \tcp*{Data-structure of Lemma~\ref{lem:large_entry_datastructure},
Algorithm \ref{alg:large_entry_datastructure}}

\State  $F\subset[n]$ \tcp*{Rows rescaled between the past $t$
queries}

\State  $g,g\old\in\R^{n}$ \tcp*{Current and old row scaling.}

\State  $(\delta\k)_{k\ge1}$where $\delta\k\in\R^{n}$ \tcp*{Relative
row scalings }

\State  $(\bar{h}\k)_{k\ge1},$ where $\bar{h}\k\in\R^{d}$ \tcp*{$\bar{h}{}^{(k)}=\sum_{j=k}^{t}h^{(j)}$
for $h^{(j)}$ and $t$ at last query}

\State  $\bar{g}\old\in\R^{n}$, $(\bar{\delta}\k)_{k\ge1}$ where
$\bar{\delta}\k\in\R^{n}$ \tcp*{Snapshots at last call to $\dsquery$}

}

\vspace{0.1 in}

\Proc{$\dsinit(\ma\in\R^{n\times d},g\in\R^{n})$}{

\State $\ma\leftarrow\ma,g\leftarrow g,g\old\leftarrow g,\bar{g}\old\leftarrow g$

\State $\delta^{(1)}\leftarrow0_{n}$, $\bar{\delta}^{(1)}\leftarrow0_{n}$,
$F\leftarrow\emptyset$, $t\leftarrow0$, $\bar{h}^{(1)}\leftarrow0_{n}$

\State $D.\dsinit(\ma,g)$\tcp*{Algorithm \ref{alg:large_entry_datastructure}}

}

\vspace{0.1 in}

\Proc{\textsc{Update}$(h\in\R^{n})$}{

\State $t\leftarrow t+1$, $h^{(t)}\leftarrow h$, $\delta^{(t+1)}\leftarrow\vzero_{n}$,

}

\vspace{0.1 in}

\Proc{\textsc{Scale}$(i\in\N,u\in\R$)}{

\State $\delta_{i}^{(t+1)}\leftarrow u-g_{i}$, $g_{i}\leftarrow u$,
$F\leftarrow F\cup\{i\}$ \label{line:alg1:update_delta}\tcp*{Save
row rescaling for next \textsc{Query}}\label{line:alg1:add_to_S}

}

\vspace{0.1 in}

\Proc{\textsc{MarkExact}$(i\in\N)$}{

\State $F\leftarrow F\cup\{i\}$, \tcp*{Mark row $i$ to be computed
exactly in the next $\textsc{Query}$}

}

\vspace{0.1 in}

\Proc{$\textsc{ComputeExact}$$(i\in[n])$}{

\State $x\leftarrow\bar{g}_{i}\old e_{i}^{\top}\ma\bar{h}^{(1)}$\tcp*{$g_{i}\old e_{i}^{\top}\ma\sum_{j=1}^{t}h^{(j)}$}
\label{line:alg3:fixed_sum}

\For{$k$ with $\bar{\delta}\k_{i}\neq0$}{ \label{line:alg3:dynamic_sum}

\State $x\leftarrow x+\bar{\delta}\k_{i}e_{i}^{\top}\ma\bar{h}^{(k)}$\tcp*{Adds
$\sum_{k\in[t]}(\delta\k)_{i}~\indicVec i^{\top}\ma\sum_{j=k}^{t}h^{(j)}$
to $x$}

}

\State \Return $x$ \tcp*{This is cached to improve efficiency in
proof of Lemma~\ref{lem:accumulated_matrix_product}}

}

\vspace{0.1 in}

\Proc{$\dsquery(\epsilon\in(0,1))$}{

\State $\bar{g}\old\leftarrow g\old$,$\bar{\delta}\k\leftarrow\delta\k$
for all $k\in[t]$ \label{line:alg1:copy_delta} \tcp*{Save for later
$\textsc{ComputeExact}$ calls}

\State Compute $\bar{h}{}^{(k)}\leftarrow\sum_{j=k}^{t}h^{(j)}$
for all $k\in[t]$. \label{line:alg1:way1start} \label{line:alg1:precompute}

\State Set $v\in\R^{n}$ with $v_{i}\leftarrow\textsc{ComputeExact}(i)$\label{line:alg1:way1end}
for all $i\in F$ and $v_{i}\leftarrow0$ for all $i\notin F$

\State $D.\textsc{Scale}(i,0)$ for all $i\in F$. \label{line:alg1:scale_to_0}
\label{line:alg1:way2start}

\State $v\leftarrow v+D.\textsc{Query}(\sum_{k=1}^{t}h\k,\epsilon)$
\label{line:alg1:query} \tcp*{Algorithm \ref{alg:large_entry_datastructure}}

\State $D.\textsc{Scale}(i,g_{i})$ for all $i\in F$.\label{line:alg1:way2end}
\tcp*{Algorithm \ref{alg:large_entry_datastructure}}

\State $g\old\leftarrow g$, $t\leftarrow0$, and $\delta\k\leftarrow0_{n}$
for all $k$

\State $F\leftarrow\emptyset$ \label{line:alg1:reset_S}

\State \Return $v$

}

\end{algorithm2e}

Here we extend Lemma~\ref{lem:large_entry_datastructure} to approximately
maintain the sum of products $\sum_{k\in[t]}\mg\k\ma h\k$. Note that
when the $g^{(k)}$ were all the same, i.e equal to some vector $g$,
then it suffices to maintain $\mg\ma\sum_{k\in[t]}h\k$, i.e. a single
matrix-vector product, and this can be approximated in $\ell_{\infty}$
norm using Lemma~\ref{lem:large_entry_datastructure}. With this
in mind, we split the task of approximating $\sum_{k\in[t]}\mg\k\ma h\k$
into two parts, corresponding to the rows of $G$ which have changed
and those which have not. Our algorithm for for this problem, Algorithm~\ref{alg:accumulated_matrix_product},
stores $\mf$, a diagonal 0-1 matrix where $\mf_{ii}=1$ indicates
that $g\k_{i}\neq g_{i}^{(k')}$ for some $k,k'$. The algorithm then
then approximates $(\mi-\mf)\sum_{k=1}^{t}\mg\k\ma h\k=(\mi-\mf)\mg\t\ma\sum_{k=1}^{t}h\k$
and $\mf\sum_{k=1}^{t}\mg\k\ma h\k$ in two different ways. From Line
\ref{line:alg1:way1start} to \ref{line:alg1:way1end}, the second
term is maintained directly. This is efficient because there are not
too many rows that are rescaled. From Line \ref{line:alg1:way2start}
to \ref{line:alg1:way2end}, the algorithm then approximates the first
term using Lemma \ref{lem:large_entry_datastructure}.
\begin{lem}[Maintain accumulated matrix-vector product]
\label{lem:accumulated_matrix_product}There exists a Monte-Carlo
data-structure (Algorithm \ref{alg:accumulated_matrix_product}),
that works against an adaptive adversary, with the following procedures:
\begin{itemize}
\item \textsc{Initialize($\ma,g$):} Given a matrix $\ma\in\R^{n\times d}$
and a scaling $g\in\R^{n}$, the data-structure preprocesses in $O(\nnz(A)\log^{4}n)$
amortized time.
\item \textsc{Update($h$):} The data-structure processes the vector $h\in\R^{d}$
in $O(d)$ amortized time.
\item \textsc{Scale($i,u$):} Sets $g_{i}=u$ in $O(d\log^{4}n)$ amortized
time.
\item \textsc{MarkExact($i$):} Marks the $i$-th row to be calculated exactly
in the next call to\textup{ $\textsc{Query}$ in }$O(d\log^{4}n)$
amortized time.
\item \textsc{Query($\epsilon$):} Let $t$ be the number of updates before
the last query, $h\k\in\R^{d}$ and $g\k\in\R^{n}$ are the corresponding
vectors and scalings during update number $k$. (That means $h^{(1)},...,h^{(t)}$,
$g^{(1)},...,g^{(t)}$ are the $t$ vectors and scalings since the
last call to $\textsc{Query}$.) W.h.p. in $n$ the data-structure
outputs the vector $v$ where $v_{i}=(\sum_{k\in[t]}\mg\k\ma h\k)_{i}$
if $|\sum_{k\in[t]}[\mg\k\ma h\k]_{i}|\geq\epsilon$, $i$ is set
to be exact or $g_{i}$ was rescaled between this and the previous
call to\textup{ $\textsc{Query}$; }otherwise $v_{i}=0$. Let $\mf$
be the 0-1-matrix with $\mf_{i,i}=1$ when $i$ is marked to be exact
or $g_{i}$ was rescaled between this and the previous call to\textup{
$\textsc{Query}$}, then the call to \textup{$\textsc{Query}$} can
be implemented in amortized time 
\[
O\left(\left\Vert (\mi-\mf)\sum_{k\in[t]}\mg\k\ma h\k\right\Vert _{2}^{2}\cdot\varepsilon^{-2}\cdot d\log^{3}n\right).
\]
\item $\textsc{ComputeExact}(i)$: Let $w=\sum_{k\in[t]}\mg\k\ma h\k$ be
the vector approximated by the last call to $\textsc{Query}$.Then
$\textsc{ComputeExact}(i)$ returns $w_{i}$ in $O(d)$ amortized
time.
\end{itemize}
\end{lem}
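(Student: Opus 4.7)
The plan is to prove correctness and running time by mapping the algorithm's two computational modes to the decomposition $w = \mf w + (\mi-\mf)w$ of the target $w \defeq \sum_{k\in[t]}\mg\k\ma h\k$, where $\mf$ is the diagonal $0/1$ indicator of $F$. First I would verify correctness of $\textsc{ComputeExact}(i)$: because $\textsc{Scale}(i,u)$ records $\delta^{(t+1)}_i = u - g_i$ before updating $g_i \leftarrow u$, telescoping gives $g\k_i = g\old_i + \sum_{j\le k}\delta^{(j)}_i$, where $g\old$ is the scaling recorded at the previous query. Substituting into $w_i = \sum_k g\k_i\, e_i^\top\ma h\k$ and swapping the order of summation recovers exactly the expression assembled in Lines~\ref{line:alg3:fixed_sum}--\ref{line:alg3:dynamic_sum}, so the $\mf w$ part of the output is exact. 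For the $(\mi-\mf)w$ part, the key point is that for $i \notin F$ the scaling $g\k_i$ is constant in $k$ and equals the current $g_i$, giving $[(\mi-\mf)w]_i = g_i\, e_i^\top\ma \sum_k h\k$. After the toggles $D.\textsc{Scale}(i,0)$ for $i\in F$, the sub-data-structure $D$ internally represents $(\mi-\mf)\mg\ma$, so by Lemma~\ref{lem:large_entry_datastructure} the call $D.\textsc{Query}(\sum_k h\k,\varepsilon)$ returns w.h.p.\ the vector whose $i$-th coordinate is $[(\mi-\mf)w]_i$ when this value has magnitude at least $\varepsilon$ and is $0$ otherwise; combining the two parts and restoring the toggles yields the claimed $v$.

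For the running time, $\textsc{Initialize}$ inherits $O(\nnz(\ma)\log^4 n)$ from $D.\textsc{Initialize}$, and $\textsc{Update}$ costs $O(d)$ to record the vector. To obtain the claimed $O(d\log^4 n)$ amortized bound for $\textsc{Scale}$ and $\textsc{MarkExact}$, I would bank on each call credits that will pay for (i) the pair of $D.\textsc{Scale}$ toggles the index triggers during the next $\textsc{Query}$ and (ii) its share of the $\textsc{ComputeExact}$ work. Inside $\textsc{Query}$, Line~\ref{line:alg1:precompute} builds all suffix sums $\bar h^{(k)}$ in $O(td)$ total time via a single backward sweep, covered by the $O(d)$ credit from each of the $t$ preceding $\textsc{Update}$s. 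The remaining work splits into (a) the heavy-hitter cost $O(\|(\mi-\mf)w\|_2^2 \cdot \varepsilon^{-2}\cdot d\log^3 n)$ by Lemma~\ref{lem:large_entry_datastructure}, (b) the toggles accounted for in (i), and (c) the internal $\textsc{ComputeExact}$ calls, whose total cost is $O(d\cdot(|F|+N))$ where $N$ is the number of nonzero entries summed over all $\bar\delta^{(j)}$ -- exactly the number of $\textsc{Scale}$ calls since the last query, so amortization assigns $O(d)$ to each Scale or MarkExact. Caching the values produced inside $\textsc{Query}$ makes external $\textsc{ComputeExact}(i)$ cost $O(1)$ for $i\in F$, and for $i\notin F$ at query time the expression collapses to $g\old_i\, e_i^\top\ma \bar h^{(1)}$ at direct cost $O(d)$.

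The main obstacle, as in Lemma~\ref{lem:large_entry_datastructure}, is arguing that the data structure tolerates an adaptive adversary even though the heavy-hitter sketches inside $D$ only succeed against oblivious inputs. The plan is to show that the output $v$ of $\textsc{Query}$ is a deterministic function of $(\ma, g, \{g\k\}, \{h\k\})$, up to the high-probability correctness event of $D$. The $\mf$-part of $v$ is produced exactly by $\textsc{ComputeExact}$, and the $(\mi-\mf)$-part passes through the deterministic post-filter $|(\mg\ma h)_i| \ge \varepsilon$ that $D.\textsc{Query}$ applies after decoding its heavy-hitter list (the list-then-verify pattern established by Lemma~\ref{lem:large_entry_datastructure}). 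Hence the indices and values actually returned depend only on public quantities, not on the internal sketching matrices of $D$; it follows that the adversary's future inputs remain independent of $D$'s randomness, and the high-probability guarantees of Lemma~\ref{lem:large_entry_datastructure} carry through inductively across rounds.
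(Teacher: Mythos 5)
Your proposal is correct and follows essentially the same route as the paper's proof: decompose the target into the $F$-coordinates (handled exactly via the telescoping identity $g^{(k)}_i = g^{(\mathrm{old})}_i + \sum_{j\le k}\delta^{(j)}_i$ and the suffix sums $\bar h^{(k)}$) and the non-$F$-coordinates (where $g_i$ is constant so the sum collapses to a single matrix-vector product, handled by Lemma~\ref{lem:large_entry_datastructure} after zeroing out $F$), with the same credit-banking amortization charging $D.\textsc{Scale}$ toggles and $\textsc{ComputeExact}$ work to $\textsc{Scale}$/$\textsc{MarkExact}$ and the suffix-sum computation to $\textsc{Update}$. The only cosmetic difference is that you make the adaptive-adversary argument (output is a deterministic function of the input, since both the exact $F$-part and the verify-then-output filter inside $D.\textsc{Query}$ are deterministic) explicit, whereas the paper folds it into the statement that Lemma~\ref{lem:large_entry_datastructure} already works adaptively.
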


\begin{proof}
\textbf{Complexity}: The time complexity for the \textsc{Initialize}
procedure follows from Lemma~\ref{lem:large_entry_datastructure}.
The time complexity for \textsc{Update} is clearly $O(d)$. We note
that the amortized time for \textsc{Scale} is $\otilde(d)$ instead
of $O(1)$ because every call of \textsc{Scale} will introduce two
calls to $D.\textsc{Scale}$ in the calls to \textsc{Query}. The same
holds for \textsc{MarkExact}.

The cost of \textsc{ComputeExact$(i)$ }is $O(d+Sd)$, where $S$
is the number of times $\textsc{Scale}(i,\cdot)$ was called for that
specific $i$, between the past two calls to $\textsc{Query}$. This
is because $\bar{\delta}^{(k)}$ is set to $\delta^{(k)}$ in the
last call to $\textsc{Query}$ (Line \ref{line:alg1:copy_delta}),
and the number of $k$ for which $\delta_{i}^{(k)}\neq0$ is the number
of times $\textsc{Scale}$$(i,\cdot)$ was called, because of Line~\ref{line:alg1:update_delta}.
Hence the number of iterations of the loop in Line~\ref{line:alg3:dynamic_sum}
is bounded by $S$. Without loss of generality, we can assume that
\textsc{ComputeExact$(i)$} is called only once between two calls
to $\textsc{Query}$ as we could modify the data-structure to save
the result and return this saved result during the second call to
\textsc{ComputeExact$(i)$}. Thus we can charge the $O(Sd)$ cost
of $\textsc{ComputeExact}$ to $\textsc{Scale}$ instead. In summary
$\textsc{Scale}$ has amortized cost $\tilde{O}(d)$ and $\textsc{ComputeExact}$
has amortized cost $O(d)$.

Finally, the cost of \textsc{Query} mainly consists of the following:
(1) computation of sums in Line \ref{line:alg1:precompute}, (2) the
calculation on indices in $F$ (i.e. calling $\textsc{ComputeExact}$),
(3) the call of $D.\textsc{Query},$ and finally (4) the calls to
$D.\textsc{Scale}$.

The cost of (1) is bounded by $O(td)$, which we charge as amortized
cost to \textsc{Update}. The cost of (2) is bounded by $O(|F|d)$,
which we charge to the amortized cost of \textsc{Scale} and \textsc{MarkExact}.
The cost of (3) is exactly the runtime we claimed due to Lemma \ref{lem:large_entry_datastructure}
since we zero out some of the rows using scale (Line~\ref{line:alg1:scale_to_0})
which causes the $(\mi-\mf)$ term in the complexity.. The cost of
(4) is bounded by $O(|F|)$ times the cost of $D.\textsc{Scale}$.
We charge this cost to the \textsc{Scale} and $\textsc{MarkExact}$
procedure of the current data structure.

\subparagraph{Correctness:}

We now prove that the output of \textsc{Query} is correct. Let $v$
be the output of the algorithm. The algorithm \textsc{Query} consists
of two parts. From Line~\ref{line:alg1:way1start} to \ref{line:alg1:way1end},
we calculate $v$ on indices in $F$. From Line~\ref{line:alg1:way2start}
to \ref{line:alg1:way2end}, we calculate $v$ on indices in $F^{c}$.

For any $i\notin F$, we note that $g_{i}$ is never changed. Hence,
we have 
\[
\sum_{k\in[t]}(\mg\k\ma h\k)_{i}=\left(\mg\k\ma\sum_{k\in[t]}h\k\right)_{i}
\]
and Lemma \ref{lem:large_entry_datastructure} yields that we correctly
calculate $v_{i}$ for $i\notin F$. Now, we prove the correctness
of $v_{i}$ for $i\in F$ (i.e. the correctness of $\textsc{ComputeExact}$).
For notational convenience, let $\mg^{(0)}\defeq\mg\old$, and note
that 
\[
\left(\sum_{k\in[t]}\mg\k\ma h\k\right)_{i}=\underbrace{\left(\sum_{k\in[t]}(\mg\k-\mg^{(0)})\ma h\k\right)_{i}}_{\text{Computed by Line \ref{line:alg3:dynamic_sum}}}+\underbrace{\left(\mg^{(0)}\ma\sum_{k\in[t]}h\k\right)_{i}}_{\text{Computed by Line \ref{line:alg3:fixed_sum}}}.
\]
It is easy to see that Line~\ref{line:alg3:fixed_sum} computes the
last term, as $\bar{g}^{\old}=g^{\old}=g^{(1)}$. The more difficult
part is proving that the first term is computed by Line~\ref{line:alg3:dynamic_sum}.
However, note that by design of $\bar{\delta}_{k},\delta_{k}$we have
\[
[g^{(k)}]_{i}-[g^{(0)}]_{i}=\sum_{j\in[k]}[g^{(j)}]_{i}-[g^{(j-1)}]_{i}=\sum_{j\in[k]}\bar{\delta}_{j}
\]
and therefore we can re-write the $i$-th entry of $\sum_{k\in[t]}\left(\mg\k-\mg^{(0)}\right)\ma h\k$
as 
\begin{align*}
\indicVec i^{\top}\sum_{k\in[t]}\left(\mg\k-\mg^{(0)}\right)\ma h\k & =\sum_{k\in[t]}\sum_{j\in[k]}\bar{\delta}_{j}\indicVec i^{\top}\ma h^{(k)}=\sum_{j\in[t]}\bar{\delta}_{j}\indicVec i^{\top}\ma\left[\sum_{k=j}^{t}h^{(k)}\right]
\end{align*}
By the definition of $\bar{h}^{(k)}$ this is precisely what is computed
in Line~\ref{line:alg3:dynamic_sum}.

In summary, $\textsc{ComputeExact}(i)$ computes $(\sum_{k=1}^{t}\mg\k\ma h\k)_{i}$
and the vector $v$ computed by \textsc{Query} is indeed $\sum_{k=1}^{t}\mg\k\ma h\k$.
As the variables $\bar{g}\old,\bar{\delta}\k,\bar{h}\k$ are only
changed during $\textsc{Query}$, the function $\textsc{ComputeExact}(i)$
computes the $i$-th entry of the vector approximated by the last
call to $\textsc{Query}$.
\end{proof}

\subsection{Vector Maintenance Algorithm and Analysis\label{sub:product_sum_datastructure}}

Using the tools from the past two sections we are now ready to provide
our vector maintenance data structure, Algorithm~\ref{alg:product_sum_datastructure},
and prove Theorem~\ref{thm:product_sum_datastructure}. For simplicity,
we assume that the sequence $x\t\defeq x_{\textsc{init}}+\sum_{k\in[t]}(\mg\k\ma h\k+\delta^{(k)})$
satisfies 
\begin{equation}
\frac{8}{9}x_{i}^{(t-1)}\leq x\t_{i}\leq\frac{9}{8}x_{i}^{(t-1)}\text{ for all }i\label{eq:x_assumption}
\end{equation}
We note that this assumption is satisfied for our interior point method.
If this assumption is violated, we can detect it easily by Lemma~\ref{lem:large_entry_datastructure}
and propagate this change through the data-structure with small amortized
cost. However, this would make the data structure have more special
cases and make the code more difficult to read.

The idea of Algorithm~\ref{alg:product_sum_datastructure} is as
follows. We split the sum $\sum_{k\in[t]}\mg^{(k)}\ma h^{(k)}$ into
$\log t$ partial sums of increasing length. Each partial sum is maintained
via the algorithms of Lemma \ref{lem:accumulated_matrix_product},
where copy number $\ell$ is queried after every $2^{\ell}$ iterations.
The data-structure also stores vectors $v^{(1)},...,v^{(\lceil\log n\rceil)}$,
where $v^{(\ell)}$ is the result of these data-structures from the
last time they were queried (see Line \ref{line:assign_vi}), i.e.
each $v^{(\ell)}$ is an approximation of some partial sum of $\sum_{k\in[t]}\mg^{(k)}\ma h^{(k)}$.
Note that for Theorem~\ref{thm:product_sum_datastructure} we want
to have a multiplicative approximation for the entries, whereas Lemma
\ref{lem:accumulated_matrix_product} only yields an additive error.
So the partial sums that are maintained in Algorithm~\ref{alg:product_sum_datastructure}
by using the data-structure of Lemma \ref{lem:accumulated_matrix_product}
are actually of the form $\sum_{k=t_{0}}^{t_{1}}\mz\mg^{(k)}\ma h^{(k)}$
for some diagonal matrix $\mz$. By scaling the entries of $\mz$
turn the absolute error into a relative one. We prove that Algorithm~\ref{alg:product_sum_datastructure}
can be applied carefully to prove Theorem~\ref{thm:product_sum_datastructure}.

\begin{algorithm2e}[]

\caption{Maintain Multiplicative Approximation (Theorem~\ref{thm:product_sum_datastructure})}

\label{alg:product_sum_datastructure}

\SetKwProg{Members}{members}{}{}

\SetKwProg{Proc}{procedure}{}{}

\Members{}{

\State  $\ma\in\R^{n\times d}$, $x_{\textsc{init}}\in\Rn$, $\epsilon>0$
\tcp*{Initial input matrix, vector, and accuracy}

\State  $t\in\N,g\in\Rn$ \tcp*{Number of queries so far and current
scaling}

\State $(D^{(\ell)})_{\ell\in\{0,...,\lceil\log n\rceil\}}$ \tcp*{Data-structures
of Lemma \ref{lem:accumulated_matrix_product} (Algorithm \ref{alg:accumulated_matrix_product})}

\State $(z^{(\ell)})_{\ell\geq0\}}$ where $z^{(\ell)}\in\R^{n}$\tcp*{Scalings
used in $D^{(\ell)}$}

\State $(F^{(\ell)})_{\{0,...,\lceil\log n\rceil\}}$ where $F_{\ell}\subset[n]$
\tcp*{Changed coordinates for $D^{(\ell)}$}

\State $(v^{(\ell)})_{\{0,...,\lceil\log n\rceil\}}$ where $v_{\ell}\in\R^{n}$
\tcp*{Past results of $D^{(\ell)}$ queries}

\State $(\delta^{(t)})_{t\ge1}$ where $\delta^{(t)}\in\R^{n}$\tcp*{Additive
terms in queries}

\State $y\in\R^{n}$ \tcp*{Past result of this data-structure}

}

\vspace{0.1 in}

\Proc{\textsc{Initialize}$(\ma\in\R^{n\times d},g\in\R^{n},x_{\textsc{init}}\in\R^{n},\epsilon>0)$}{

\State  $\ma\leftarrow\ma$, $x_{\textsc{init}}\leftarrow x_{\textsc{init}}$,
$\epsilon\leftarrow\epsilon$,, $t\leftarrow1$, $g\leftarrow g$,
$y\leftarrow x_{\textsc{init}}$

\State $z^{(\ell)}\leftarrow x_{\textsc{init}}$, $v^{(\ell)}\leftarrow0_{n}$,
$F^{(\ell)}\leftarrow\emptyset$ for all $\ell\in\{0,...,\lceil\log n\rceil\}$

\State $D^{(\ell)}.\textsc{Initialize}(\ma,(\mz^{(\ell)})^{-1}g)$
for all $\ell\in\{0,...,\lceil\log n\rceil\}$ \tcp*{Algorithm \ref{alg:accumulated_matrix_product}}

}

\vspace{0.1 in}

\Proc{\textsc{Scale}$(i\in\N,u\in\R)$}{

\State $g_{i}\leftarrow u$

\State $D^{(\ell)}.\textsc{Scale}(i,u/z_{i}^{(\ell)})$ for all $\ell=0,...,\lceil\log n\rceil$
\tcp*{Algorithm \ref{alg:accumulated_matrix_product}}

}

\vspace{0.1 in}

\Proc{\textsc{Query}$(h^{(t)}\in\R^{d},\delta^{(t)}\in\R^{n})$}{

\For{$\ell=0,...,\lceil\log n\rceil$}{ \label{line:alg4:query_loop}

\State $D^{(\ell)}.\textsc{Update}(h^{(t)})$ \tcp*{Algorithm~\ref{alg:accumulated_matrix_product}}

\Comment{Process coordinates with a new multiplicative change}

\For{$i\in[n]$ with $z_{i}^{(\ell)}\notin[\frac{8}{9}y_{i},\frac{9}{8}y_{i}]$
and $i\notin F^{(\ell)}$\label{line:F_checkloop}}{

\State $v_{i}^{(\ell)}\leftarrow D^{(\ell)}.\textsc{ComputeExact}(i)$
\label{line:set_exact_retroactive} \tcp*{Set $v_{i}^{(\ell)}$ exactly
(Algorithm~\ref{alg:accumulated_matrix_product})}

\State $D^{(\ell)}.\textsc{MarkExact}(i)$ \tcp*{Compute $v_{i}^{(\ell)}$
exactly at next query.}\label{line:set_exact_1}

\State$F^{(\ell)}\leftarrow F^{(\ell)}\cup\{i\}$ \tcp*{Record that
coordinate $i$ changed.}

}

\Comment{Update sum stored in $D_{\ell}$ every $2^{\ell}$ iterations}

\If{$t=0\mod2^{\ell}$}{

\State $v^{(\ell)}\leftarrow\mz^{(\ell)}D^{(\ell)}.\textsc{Query}(\epsilon/4\lceil\log n\rceil)+\sum_{k=t-2^{\ell}+1}^{t}\delta^{(k)}$
\label{line:assign_vi} \tcp*{Algorithm~\ref{alg:accumulated_matrix_product}}

\State $z_{i}^{(\ell)}\leftarrow y_{i}$ for all $i\in F^{(\ell)}$
\label{line:set_exact_2}

\State $D^{(\ell)}.\textsc{Scale}(i,g_{i}/z_{i}^{(\ell)})$ for all
$i\in F^{(\ell)}$ \label{line:set_scale_exact} \tcp*{Algorithm~\ref{alg:accumulated_matrix_product}}

\State $F^{(\ell)}\leftarrow\emptyset$\tcp*{Reset marked coordinates}

}

}

\State $y\leftarrow x_{\textsc{init}}+\sum_{\ell=0}^{\lceil\log n\rceil}a_{\ell}v^{(\ell)}$
where $a_{\ell}$ is the $\ell$-th bit of $t$.\tcp*{$t=\sum_{\ell=0}^{\lceil\log n\rceil}a_{\ell}2^{\ell}$
}\label{line:compute_x}

\State $t\leftarrow t+1$

\State \Return $y$

}

\vspace{0.1 in}

\Proc{\tcp*[f]{Return exact $i$th entry.}}{$\textsc{ComputeExact}$$(i)$}{

\State $y\leftarrow(x_{\textsc{init}})_{i}$

\For{$\ell=0,...,\lceil\log n\rceil$}{\lIf(\tcp*[f]{Algorithm~\ref{alg:accumulated_matrix_product}}){$\ell$-th
bit of $t$ is $1$}{$y\leftarrow y+D^{(\ell)}.\textsc{ComputeExact}(i)$}}

\State \Return $y$

}

\end{algorithm2e}
\begin{proof}[Proof of Theorem~\ref{thm:product_sum_datastructure}]
 We show that Algorithm~\ref{alg:product_sum_datastructure} can
be used to obtain a data-structure with the desired running time.
Formally, we run Algorithm~\ref{alg:product_sum_datastructure} and
after every $n$ calls to query compute $y\in\R^{n}$ with $y_{i}=\textsc{ComputeExact}(i)$
for all $i\in[n]$, i.e. the exact state of the vector and then re-initialize
the datastructure with \textsc{Initialize}$(\ma,g,y,\epsilon)$ for
the current value of $g$. Note that, by Lemma~\ref{lem:accumulated_matrix_product},
the time to initialize the data structure is $O(\nnz(\ma)\log^{5}n)$
and the additional cost this contributes to query is $O(\lceil T/n\rceil\cdot\nnz(\ma)\log^{5}n)=O(Td\log^{5}n)$.
Consequently, it suffices to prove the theorem when at most $n$ queries
occur.

To analyze this data structure, recall that 
\[
x\t\defeq x_{\textsc{init}}+\sum_{k\in[t]}(\mg\k\ma h\k+\delta^{(k)})
\]
and let $y\t$ be the output of the algorithm at iteration $t$. Our
goal is to prove that the $x\t\approx_{\epsilon}y\t$. We prove this
by induction.We assume $x^{(s)}\approx_{\epsilon}y^{(s)}$ for all
$s<t$ and proceed to prove this is also true for $s=t$.

\paragraph{The vectors $v^{(0)},...,v^{(\lceil\log n\rceil)}$:}

Assume we currently perform query number $t$ and the data-structure
is currently performing the loop of Line \ref{line:alg4:query_loop}.
Further assume that $\ell$ is such that $t\mod2^{\ell}=0$, so the
data-structure will assign a new vector to $v^{(\ell)}$. Let $g\k$
be the current state of vector $g$ during query number $k\le t$.
By Lemma~\ref{lem:accumulated_matrix_product}, the vector 
\[
v^{(\ell)}\defeq\mz^{(\ell)}D^{(\ell)}.\textsc{Query}(\epsilon/8\lceil\log n\rceil)+\sum_{k=t-2^{\ell}+1}^{t}\delta^{(k)}
\]
satisfies
\[
\left|v_{i}^{(\ell)}-\sum_{k=t-2^{\ell}+1}^{t}[\mg\k\ma h\k+\delta^{(k)}]_{i}\right|\leq\frac{\epsilon}{2\lceil\log n\rceil}\cdot z_{i}^{(\ell)}
\]
where $z^{(\ell)}$ is the extra scaling data structure $D^{(\ell)}$
uses.

By Line~\ref{line:set_exact_1} we see that if $z_{i}^{\ell}\notin[\frac{8}{9}y_{i},\frac{9}{8}y_{i}]$
then $D^{(\ell)}.\textsc{MarkExact}(i)$ had been called after the
last query to $D^{(\ell)}$. Consequently, if $v_{i}^{(\ell)}\neq\sum_{k=t-2^{\ell}+1}^{t}[\mg\k\ma h\k+\delta^{(k)}]_{i}$
(i.e. the error is non-zero) then $\frac{8}{9}y_{i}^{(t-1)}\leq z^{(\ell)}\leq\frac{9}{8}y_{i}^{(t-1)}.$
By \eqref{eq:x_assumption} and the induction hypothesis $x^{(t-1)}\approx_{\epsilon}y^{(t-1)}$,
we have that $z^{(\ell)}\leq e^{\epsilon}(\frac{9}{8})^{2}x^{(t)}\leq\frac{7}{5}x^{(t)}$.
Hence, for all $i\in[n]$ we have 
\[
\left|v^{(\ell)}-\sum_{k=t-2^{\ell}+1}^{t}(\mg\k\ma h\k+\delta^{(k)})\right|_{i}\leq\frac{7}{10}\frac{\epsilon}{\lceil\log n\rceil}\cdot x_{i}^{(t)}\,.
\]

Now consider the case that $t$ mod $2^{\ell}\neq0$ and define $t'=2^{\ell}\cdot\lfloor t/2^{\ell}\rfloor$.
If the error of $v_{i}^{(\ell)}$ is not zero, then it is still bounded
by $\frac{\epsilon}{2\lceil\log n\rceil}\cdot z_{i}^{(\ell)}$, because
$z_{i}^{(\ell)}$ was not changed in Line \ref{line:set_exact_2}.
We can now repeat the previous argument: by Line \ref{line:set_exact_retroactive},
we see that $D^{(\ell)}.\textsc{ComputeExact}(i)$ is called whenever
$z_{i}^{\ell}\notin[\frac{8}{9}y_{i},\frac{9}{8}y_{i}]$, which sets
the error of $v_{i}^{(\ell)}$ to zero. Hence, if the error is non-zero,
we know that $\frac{8}{9}y_{i}^{(t-1)}\leq z^{(\ell)}\leq\frac{9}{8}y_{i}^{(t-1)}.$
With the same argument as before we thus have
\[
\left|v^{(\ell)}-\sum_{k=t'-2^{\ell}+1}^{t'}(\mg\k\ma h\k+\delta^{(k)})\right|_{i}\leq\frac{7}{10}\frac{\epsilon}{\lceil\log n\rceil}\cdot x_{i}^{(t)}
\]
for all $i$. (Note that the sum on the left is using $t'$ for its
indices while the error bound on the right uses $x^{(t)}$.)

\paragraph{Correctness of $x$:}

We compute the vector exactly every $n$ iterations and reset $t$
to 0. Hence, $t\le n$ and hence $t=\sum_{i=0}^{\lceil\log n\rceil}a_{\ell}2^{\ell}$
with each $a_{\ell}\in\{0,1\}$, i.e. the binary representation of
$t$. Then 
\[
\sum_{k=1}^{t}(\mg\k\ma h\k+\delta^{(k)})=\sum_{i=0}^{\lceil\log n\rceil}a_{\ell}\sum_{k=2^{\ell}\lfloor t/2^{\ell}\rfloor-2^{\ell}+1}^{2^{\ell}\lfloor t/2^{\ell}\rfloor}(\mg\k\ma h\k+\delta^{(k)}).
\]
Note that $\sum_{k=2^{\ell}\lfloor t/2^{\ell}\rfloor-2^{\ell}+1}^{2^{\ell}\lfloor t/2^{\ell}\rfloor}(\mg\k\ma h\k+\delta^{(k)})$
is exactly the vector approximated by $v^{(\ell)}$, because this
sum only changes whenever $t$ is a multiple of $2^{\ell}$, in which
case 
\[
\sum_{k=2^{\ell}\lfloor t/2^{\ell}\rfloor-2^{\ell}+1}^{2^{\ell}\lfloor t/2^{\ell}\rfloor}(\mg\k\ma h\k+\delta^{(k)})=\sum_{k=t-2^{i}+1}^{t}(\mg\k\ma h\k+\delta^{(k)})\approx v_{i}.
\]
Thus for $y^{(t)}:=x_{\textsc{init}}+\sum_{\ell=0}^{\lceil\log n\rceil}a_{\ell}v^{(\ell)}$,
as computed in Line \ref{line:compute_x}, $x^{(t)}\approx_{\epsilon}y^{(t)}$
due to the following 
\begin{align*}
\left|y^{(t)}-x^{(t)}\right|_{i} & =\left|\sum_{\ell=0}^{\lceil\log n\rceil}a_{\ell}v^{(\ell)}-\sum_{k=1}^{t}(\mg\k\ma h\k+\delta^{(k)})\right|_{i}\\
 & \leq\left|\sum_{\ell=0}^{\lceil\log n\rceil}a_{\ell}\left(v^{(\ell)}-\sum_{k=2^{\ell}\lfloor t/2^{\ell}\rfloor-2^{\ell}+1}^{2^{\ell}\lfloor t/2^{\ell}\rfloor}(\mg\k\ma h\k+\delta^{(k)})\right)\right|_{i}\\
 & \leq\sum_{\ell=0}^{\lceil\log n\rceil}\left|v^{(\ell)}-\sum_{k=2^{\ell}\lfloor t/2^{\ell}\rfloor-2^{\ell}+1}^{2^{\ell}\lfloor t/2^{\ell}\rfloor}(\mg\k\ma h\k+\delta^{(k)})\right|_{i}\\
 & \leq\sum_{\ell=0}^{\lceil\log n\rceil}\frac{7\epsilon}{10\lceil\log n\rceil}\cdot x_{i}^{(t)}\leq\frac{7}{10}\epsilon\cdot x_{i}^{(t)}.
\end{align*}

\paragraph{ComputeExact:}

Note that $D^{(\ell)}.\textsc{ComputeExact}(i)=\sum_{k=2^{\ell}\lfloor t/2^{\ell}\rfloor-2^{\ell}+1}^{2^{\ell}\lfloor t/2^{\ell}\rfloor}[\mg\k\ma h\k+\delta^{(k)}]_{i}$
and therefore, $\textsc{ComputeExact}$ computes the $i$th entry
of $\sum_{k\in[t]}(\mg\k\ma h\k+\delta^{(k)})$ exactly.

\paragraph{Complexity:}

The runtime of initialization and $\textsc{ComputeExact}$ are that
of Lemma~\ref{lem:accumulated_matrix_product}, but increased by
an $O(\log n)$ factor, as we run that many instances of these data-structures.

To bound the runtime of $\dsquery$, fix some $\ell\geq0$. We first
analyze the complexity of the loop in Line~\ref{line:alg4:query_loop}
for that specific $\ell$. In every iteration we spend $O(n)$ time
to determine which coordinates need to be added to $F^{(\ell)}$ (Line~\ref{line:F_checkloop}).
A coordinate $i\in[n]$ is added to $F^{(\ell)}$ at most once every
$2^{\ell}$ calls to $\dsquery$ and this happens only when $y_{i}$
has changed by some $9/8$ multiplicatively from $z_{i}^{(\ell)}$.
Further, if we let $k\geq T-2^{\ell}$ denote the last iteration on
which $k=0\mod2^{\ell}$ then as $\frac{5}{7}x^{(t-2^{\ell})}\le z_{i}^{(\ell)}\le\frac{7}{5}x^{(t-2^{\ell})}$
and $y_{i}\approx x_{i}^{(t)}$, we can bound the number of indices
that were in $F^{(\ell)}$ during the past $t-k$ iterations by the
number of indices $i$ for which $x_{i}$ changed by at least constant
factor during the past $t-k$ iterations. The number of such indices
is $O((t-k)\sum_{k=t-2^{\ell}+1}^{t}\left\Vert (\mx^{(k)})^{-1}\mg\k\ma h\k\right\Vert ^{2}+\left\Vert (\mx^{(k)})^{-1}\delta\k\right\Vert ^{2})$.
For every index that was in $F^{(\ell)}$ over the past $t-k$ iterations,
the algorithms calls $D_{\ell}.\textsc{Scale}$, $D_{\ell}\textsc{.MarkExact}$
and $D_{\ell}.\textsc{ComputeExact}$ at most once. Consequently,
applying Lemma~\ref{lem:accumulated_matrix_product} and summing
over every time $t=0\mod2^{\ell}$ held, we have that ignoring the
cost of $D^{(\ell)}.\dsquery()$ the runtime of the the loop for a
single $\ell$ is

\begin{equation}
O\left(Tn+Td\log^{4}n\cdot\sum_{k\in[t]}\left(\left\Vert (\mx^{(k)})^{-1}\mg\k\ma h\k\right\Vert ^{2}+\left\Vert (\mx^{(k)})^{-1}\delta\k\right\Vert ^{2}\right)\right)\,.\label{eq:query_free_while_loop_bound}
\end{equation}

Next, we bound the runtime of the calls to $D^{(\ell)}.\dsquery()$.
Consider such a call when $t=0\mod2^{\ell}$. By Lemma~\ref{lem:accumulated_matrix_product}
and the fact that $D_{\ell}.\dsmark(i)$ is called for all $i\in F^{(\ell)}$
the runtime of this call is bounded by

\[
O(1)\cdot\left\Vert (\mi-\mf^{(\ell)})\sum_{k=t-2^{\ell}+1}^{t}(\mz^{(\ell)})^{-1}\mg\k\ma h\k\right\Vert _{2}^{2}\cdot\varepsilon^{-2}\cdot d\log^{5}n\,,
\]
where the extra $\log^{2}n$ term (compared to Lemma \ref{lem:accumulated_matrix_product})
is due to the $\log n$ factor in $\epsilon/\log n$. Now for $i\notin F^{(\ell)}$
and $k\in[t-2^{l}+1,t]$ we have $z^{(\ell)}\approx_{O(1)}y^{(k-1)}$
by construction and $y^{(k-1)}\approx{}_{O(1)}x^{(k)}$ by the preceding
correctness proof. Consequently, 
\begin{align*}
\left\Vert (\mi-\mf)\sum_{k=t-2^{\ell}+1}^{t}(\mz^{(\ell)})^{-1}\mg\k\ma h\k\right\Vert _{2}^{2} & \leq2^{\ell}\sum_{k=t-2^{\ell}+1}^{t}\left\Vert (\mi-\mf)(\mz^{(\ell)})^{-1}\mg\k\ma h\k\right\Vert _{2}^{2}\\
 & \leq2^{\ell}\sum_{k=t-2^{\ell}+1}^{t}\left\Vert (\mi-\mf)(\mx^{(k)})^{-1}\mg\k\ma h\k\right\Vert _{2}^{2}\\
 & \leq2^{\ell}\sum_{k=t-2^{\ell}+1}^{t}\left\Vert (\mx^{(k)})^{-1}\mg\k\ma h\k\right\Vert _{2}^{2}\,.
\end{align*}
 Since, $D^{(\ell)}.\textsc{Query()}$ was called only $O(T/2^{\ell})$
times this yields that the total runtime contribution of $D^{(\ell)}.\textsc{Query()}$
for a single $\ell$ is

\[
O\left(T\sum_{k\in[t]}\left\Vert (\mx^{(k)})^{-1}\mg\k\ma h\k\right\Vert ^{2}\cdot\varepsilon^{-2}d\log^{5}n\right)\,\cdot
\]
Summing, over all $O(\log n)$ different values of $\ell$ and combining
with \eqref{eq:query_free_while_loop_bound} then yields the desired
running time for query.

\paragraph{Adaptive Adversaries:}

The algorithm works against adaptive adversaries, because the data
structures of Lemma \ref{lem:accumulated_matrix_product} work against
adaptive adversaries.
\end{proof}

\section{Leverage Score Maintenance}

\inputencoding{latin9}\label{sec:leverage_maintenance}

The IPM presented in this work requires approximate leverage scores
of a matrix of the form $\mg\ma$, where $\mg$ is a non-negative
diagonal matrix that changes slowly from one iteration to the next.
Here we provide a data-structure which exploits this stability to
maintain the leverage scores of this matrix more efficiently than
recomputing them from scratch every time $\mg$ changes. Formally,
we provide Algorithm~\ref{alg:leverage_score_datastructure_1} and
show it proves the following theorem on maintaining leverage scores:

\leveragescoredatastructure*

\begin{algorithm2e}[]

\caption{Leverage Score Maintenance (Theorem~\ref{thm:leverage_score_datastructure})}

\label{alg:leverage_score_datastructure_1}

\SetKwProg{Members}{members}{}{}

\SetKwProg{Proc}{procedure}{}{}

\Members{}{

\State $D$ \tcp*{Data-structure of Corollary~\ref{cor:large_entry_datastructure},
Lemma~\ref{lem:large_entry_datastructure}, Algorithm~\ref{alg:large_entry_datastructure}
}

\State $(g\t)_{t\ge0}\in\R^{n}$ \tcp*{Rows scaling of $\ma$ at
time $t$}

\State $(\ttau)_{t\ge0}\in\R^{n}$ \tcp*{Approximate shifted leverage
scores at time $t$}

\State $(\Psi\t)_{t\ge0},(\Psi\t\safe)_{t\ge0}\in\R^{d\times d}$
\tcp*{Approximate inverses at time $t$}

\State $\mr=\JL(1/100,d)^{\top},(\mb\t)_{t\ge0}\in\R^{d\times b}$\tcp*{JL-matrix
$\mr$,$\mb\t=\ma^{\top}\mg\t\mr$}

}

\vspace{0.1 in}

\Proc{\textsc{Initialize}$(\ma\in\R^{n\times d},g\in\R^{n},\varepsilon\in(0,1))$}{

\State $\mr\leftarrow\JL(1/100,d)^{\top}$\tcp*{$\|v^{\top}\mr\|_{2}\in(1\pm\frac{1}{100})\|v\|_{2}$
w.h.p. in $n$}

\State $D.\textsc{Initialize}(\ma,g)$ \tcp*{Algorithm~\ref{alg:large_entry_datastructure}
}

\State $t\leftarrow0$, $\varepsilon\leftarrow\varepsilon$, $g^{(t)}\leftarrow g$,
$\mb\t\leftarrow\ma^{\top}\mg\t\mr$

}

\vspace{0.1 in}

\Proc{\textsc{Scale}$(j\in[n],u\in\R^{n})$}{

\State $\mb\t\leftarrow\mb\t+(\ma^{\top}e_{j})(u-g\t_{j})(e_{j}^{\top}\mr)$
\label{line:ls:update_Phi}\tcp*{Maintain $\mb\t=\ma^{\top}\mg\t\mr$}

\State $D.\textsc{Scale}(j,u)$, $g\t_{j}\leftarrow u$ \label{line:ls:scale_Di}\tcp*{Make
$D$ represent $\mg\t\ma$. Algorithm~\ref{alg:large_entry_datastructure}
}

}

\vspace{0.1 in}

\Proc{\textsc{EstimateScore}$(J\subset[n],t\in\N,\delta\in(0,1))$\tcp*[f]{Internal
procedure}}{

\State $\widetilde{\mr}\leftarrow\JL(\delta/8,d)^{\top}$

\lIf{first step}{$\ \mw\leftarrow\Psi\t\safe\ma^{\top}\mg\t\widetilde{\mr}$}

\Else(\tcp*[f]{$\ma^{\top}\widetilde{\mg}^{2}\ma\approx_{\delta/4}\ma^{\top}(\mg\t)^{2}\ma$
w.h.p. in $n$}){

\State $\ \mw\leftarrow\Psi\t\safe\ma^{\top}\widetilde{\mg}\widetilde{\mr}$
where $\widetilde{g}_{i}=\begin{cases}
\frac{g\t_{i}}{\sqrt{p_{i}}} & \text{ with probability }p_{i}=\Theta(\frac{\log n}{\delta^{2}}\ttau_{i}\t)\\
0 & \text{otherwises}
\end{cases}.$\label{line:ls:sampling}

}

\State \Return $v\in\R^{J}$ with $v_{j}=\|e_{j}^{\top}\mg\t\ma\mw\|_{2}^{2}+\frac{d}{n}$
for $j\in J$.

}

\vspace{0.1 in}

\Proc{\textsc{FindIndices}$()$ \tcp*[f]{Internal Procedure}}{

\State $J\leftarrow\emptyset$\tcp*{list $J$ depends only on $\Psi\safe$}

\For(\tcp*[f]{Compute candidate update indices.}){$i\in\{0,...,\lfloor\log n\rfloor\}$
s.t. $t=0\mod2^{i}$}{

\State $F_{i}\leftarrow\{j:\ g_{j}^{(t)}\neq g_{j}^{(t')}\text{ for some }t-2^{i}\leq t'<t\}$.

\State $\mw\leftarrow D.\textsc{Query}(\Psi\t\mb^{(t)}-\Psi^{(t-2^{i})}\mb^{(t-2^{i})},F_{i},\frac{\epsilon}{48\log n}\sqrt{\frac{d}{nb}})$\label{line:ls:query}
\tcp*{Algorithm~\ref{alg:large_entry_datastructure} }

\State $J_{i}\leftarrow F_{i}\cup\{j:\mw_{j,\ell}\neq0\text{ for some }\ell\}$\label{line:ls:Ji}

\State $v\leftarrow\textsc{EstimateScore}(J_{i},t,\epsilon/(12\log n))$

\State $v\old\leftarrow\textsc{EstimateScore}(J_{i},t-2^{i},\epsilon/(12\log n))$\label{line:ls:est_2_score}

\State $J\leftarrow J\cup\{j\text{ such that }v_{j}\not\approx_{\frac{\epsilon}{3\log n}}v_{j}\old\}$\label{line:ls:est_2_score2}\label{line:ls:Ji2}

}

\State \Return $J$

}

\vspace{0.1 in}

\Proc{\textsc{Query}$(\Psi\in\R^{d\times d},\Psi\safe\in\R^{d\times d})$}{

\State $\Psi\t\leftarrow\Psi$, $\Psi\t\safe\leftarrow\Psi\safe$

\If{$t=0\mod2^{\lfloor\log n\rfloor}$ or first step}{

\State $\ttau^{(0)}\leftarrow\textsc{EstimateScore}([n],t,\epsilon)$,
$t\leftarrow1$

\State \Return$\ttau^{(0)}$

}

\State $J\leftarrow\textsc{FindIndices}()$ and $v\leftarrow\textsc{EstimateScore}(J,t,\epsilon)$

\State  For all $j\in[n]$ set $\ttau\t_{j}\leftarrow v_{j}$ if
$j\in J$ and $\ttau\t_{j}\leftarrow\ttau_{j}^{(t-1)}$ otherwise.

\State $\mb^{(t+1)}\leftarrow\mb\t$, $g^{(t+1)}\leftarrow g\t$,
$\ttau^{(t+1)}\leftarrow\ttau\t$, $t\leftarrow t+1$

\State \Return $\ttau^{(t-1)}$

}

\end{algorithm2e}

The high-level idea for obtaining this data-structure is that when
$\mg$ changes slowly, the leverage scores of $\mg\ma$ change slowly
as well. Thus not too many leverage scores must be recomputed per
iteration, when we are interested in some $(1+\varepsilon)$-approximation,
as the previously computed values are still a valid approximation.
The task of maintaining a valid approximation can thus be split into
two parts: (i) for some small set $J$, quickly compute the $j$-th
leverage score for $j\in J$, and (ii) detect which leverage scores
should be recomputed, i.e. which leverage scores have changed enough
so their previously computed value is no longer a valid approximation.

This section is structured according to the two tasks, (i) and (ii).
In Section~\ref{sub:compute_few_ls} we show how to solve (i), i.e.
quickly compute a small number of leverage scores. In Section~\ref{sub:detect_changed_ls}
we then show how to solve (ii) by showing how to detect which leverage
scores have changed significantly and must be recomputed. The last
Section~\ref{sub:ls_maintenance_proof} then combines these two results
to prove Theorem~\ref{thm:leverage_score_datastructure}.

Before we proceed to prove (i) and (ii), we first give a more accurate
outline of how the algorithm of Theorem~\ref{thm:leverage_score_datastructure}
works. An important observation is, that by standard dimension reduction
tricks, a large part of Theorem~\ref{thm:leverage_score_datastructure}
can be obtained by reduction to the vector data-structures of Section~\ref{sec:vec_maintenance}.
Note that the $i$-th leverage score of $\mg^{(t)}\ma$ is
\[
\sigma(\mg^{(t)}\ma)_{i}=\indicVec i^{\top}\mg\t\ma(\ma^{\top}(\mg\t)^{2}\ma)^{-1}\ma^{\top}\mg\t\indicVec i=\|e_{i}^{\top}\mg\t\ma(\ma^{\top}(\mg\t)^{2}\ma)^{-1}\ma^{\top}\mg\t\|_{2}^{2}
\]
and the term on the right-hand side can be estimated using a JL-matrix
$\mr\in\R^{n\times b}$. This allows us to greatly reduce the complexity
of estimating $\sigma(\mg^{(t)}\ma)$ as $\mg\t\ma(\ma^{\top}(\mg\t)^{2}\ma)^{-1}\ma^{\top}\mg\t\mr$
is essentially just $O(b)$ matrix-vector products, where $\mg\t\ma$
is the matrix and the columns of $(\ma^{\top}(\mg\t)^{2}\ma)^{-1}\ma^{\top}\mg\t\mr$
are the vectors. With this in mind, we use the following corollary
to prove Theorem~\ref{thm:leverage_score_datastructure}.
\begin{cor}
\label{cor:large_entry_datastructure} There exists a Monte-Carlo
data-structure, that holds against an adaptive adversary, with the
following procedures:
\begin{itemize}
\item \textsc{Initialize($\ma,g$):} Given a matrix $\ma\in\R^{n\times d}$
and a scaling $g\in\R^{n}$, the data-structure preprocesses in $O(\nnz(\ma)\log^{4}n)$
time.
\item \textsc{Query($\mh,F,\epsilon$):} Given a matrix $\mh\in\R^{d\times b}$
and a set $F\subset[n]$, w.h.p. in $n$ the data-structure outputs
the matrix 
\[
\mv_{i,j}=\begin{cases}
((\mi-\mf)\mg\ma\mh)_{i,j}, & \text{if }|(\mg\ma\mh)_{i,j}|\ge\varepsilon\\
0, & \text{otherwise}
\end{cases}
\]
 where $\mf\in\{0,1\}^{n\times n}$ is a $0$-$1$ diagonal matrix
with $\mf_{ii}=1$ if and only if $i\in F$ in time 
\[
O(\|(\mi-\mf)\mg\ma\mh\|_{F}^{2}\cdot\varepsilon^{-2}\cdot d\cdot\log^{3}n+|F|\cdot d\cdot\log^{4}n+db).
\]
\item \textsc{Scale($i,u$):} Sets $g_{i}\leftarrow u$ in $O(d\log^{4}n)$
time.
\end{itemize}
\end{cor}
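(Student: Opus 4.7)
The plan is to obtain Corollary~\ref{cor:large_entry_datastructure} by a column-wise reduction to Lemma~\ref{lem:large_entry_datastructure}, handling the $(\mi - \mf)$ factor by temporarily zeroing out the rows indexed by $F$ via the existing \textsc{Scale} primitive. Concretely, I would keep a single internal copy $D$ of the data-structure from Lemma~\ref{lem:large_entry_datastructure} built on $\ma$ with scaling $g$. The procedures \textsc{Initialize} and \textsc{Scale} simply forward to $D$, inheriting the $O(\nnz(\ma)\log^4 n)$ and $O(d\log^4 n)$ bounds respectively, and the adaptive-adversary property is inherited as well.

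For \textsc{Query}$(\mh, F, \epsilon)$, I would proceed in three phases. First, save the current values $\{g_i\}_{i \in F}$ and call $D.\textsc{Scale}(i,0)$ for every $i \in F$; this makes the effective scaling matrix equal to $(\mi - \mf)\mg$ and costs $O(|F|\cdot d\log^4 n)$. Second, for each column $h_j$ of $\mh$ (with $j \in [b]$), call $D.\textsc{Query}(h_j,\epsilon)$ to obtain the vector whose $i$-th entry equals $((\mi - \mf)\mg\ma h_j)_i$ if $|((\mi - \mf)\mg\ma h_j)_i| \geq \epsilon$ and $0$ otherwise; this is precisely column $j$ of the desired output $\mv$. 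Third, restore the scaling by calling $D.\textsc{Scale}(i, g_i^{\text{saved}})$ for every $i \in F$, again at cost $O(|F|\cdot d\log^4 n)$.

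The runtime of the middle phase is, by Lemma~\ref{lem:large_entry_datastructure}, at most $\sum_{j \in [b]} O\bigl(\|(\mi - \mf)\mg\ma h_j\|_2^2\cdot \epsilon^{-2}\cdot d\log^3 n\bigr) = O\bigl(\|(\mi - \mf)\mg\ma\mh\|_F^2\cdot \epsilon^{-2}\cdot d\log^3 n\bigr)$, using $\sum_j \|(\mi - \mf)\mg\ma h_j\|_2^2 = \|(\mi - \mf)\mg\ma\mh\|_F^2$. Adding the cost of reading $\mh$ and writing per-column output headers gives the additive $O(db)$ term, and the two scaling passes contribute the $O(|F|\cdot d\log^4 n)$ term; summing yields the bound claimed in the corollary.

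The only minor subtlety is the adaptive-adversary guarantee: the output after the three phases is a deterministic function of the input and of $D$'s internal sketches, and because Lemma~\ref{lem:large_entry_datastructure}'s verification step returns only entries that have been checked exactly, the sketches $\mPhi_j$ are never leaked to the adversary. Consequently the composite routine remains safe against adaptive adversaries. There is no serious obstacle here; the main thing to verify cleanly is that scaling to $0$ and back preserves the state of $D$ (it does, since \textsc{Scale} in Algorithm~\ref{alg:large_entry_datastructure} only updates the sketched matrices $\mm_j$ and $\mj$ by the appropriate rank-one corrections, which are invertible), so the order in which queries across different calls are interleaved does not affect correctness.
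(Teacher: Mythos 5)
Your proposal matches the paper's proof exactly: instantiate a single copy $D$ of Lemma~\ref{lem:large_entry_datastructure}, temporarily call $D.\textsc{Scale}(i,0)$ for $i\in F$, query each of the $b$ columns of $\mh$, then restore the scalings, with the runtime following by summing the per-column bounds via $\sum_{j}\|(\mi-\mf)\mg\ma h_j\|_2^2 = \|(\mi-\mf)\mg\ma\mh\|_F^2$. The extra remarks you add (that scaling to zero and back is invertible, and that the adaptive-adversary guarantee is inherited because $D$'s query verifies entries exactly before outputting) are correct and consistent with the paper's terser argument.
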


\begin{proof}
The algorithm is directly obtained from Lemma~\ref{lem:large_entry_datastructure}.
Let $\mh,F,\epsilon$ be the input to \textsc{Query} and let $D$
be an instance of Lemma~\ref{lem:large_entry_datastructure}. We
perform $D.\textsc{Scale}(i,0)$ for all $i\in F$, so the data-structure
$D$ represents $(\mi-\mf)\mg\ma$. We then obtain $\mv$ by setting
the $i$-th column of $\mv$ to the result of $D.\textsc{Query}(\mh e_{i},\epsilon)$
for $i=1,...,b$. Afterward, we call $D.\textsc{Scale}$ again, to
revert back to the previous scaling. The correctness and the time
complexity of this new data-structure follow from Lemma~\ref{lem:large_entry_datastructure}.
\end{proof}
Now that we have Corollary~$\ref{cor:large_entry_datastructure}$,
we can outline how we obtain Theorem~\ref{thm:leverage_score_datastructure}
from it. Given a matrix $\Psi\approx(\ma^{\top}(\mg\t)^{2}\ma)^{-1}$,
it is easy to compute individual leverage scores (see Section~\ref{sub:compute_few_ls},
Lemma $\ref{lem:EstScore}$). However, such computation requires nearly
linear work, which is quite slow for our purposes, and therefore we
do not want to recompute every leverage score every time $\mg\t$
changes a bit. Instead, we use Corollary~$\ref{cor:large_entry_datastructure}$
to detect which scores have changed a lot. This is done in Section~\ref{sub:detect_changed_ls},
Lemma~$\ref{lem:ls:good_query_result}$ and Lemma~s$\ref{lem:ls:find_index}$.
The idea is that, if $\|\indicVec i^{\top}\mg\t\ma(\ma^{\top}(\mg\t)^{2}\ma)^{-1}\ma^{\top}\mg\t\|_{2}$
and $\|\indicVec i^{\top}\mg\k\ma(\ma^{\top}(\mg\k)^{2}\ma)^{-1}\ma^{\top}\mg\k\|_{2}$
differ substantially, for some $k<t$, then 
\[
\|\indicVec i^{\top}\left(\mg\t\ma(\ma^{\top}(\mg\t)^{2}\ma)^{-1}\ma^{\top}\mg\t-\mg\k\ma(\ma^{\top}(\mg\k)^{2}\ma)^{-1}\ma^{\top}\mg\k\right)\mr\|_{2}
\]
 should be quite large as well for some JL-matrix $\mr$, i.e. something
we can detect via Corollary~$\ref{cor:large_entry_datastructure}$.

\subsection{Computing Few Leverage Scores \label{sub:compute_few_ls}}

We start our proof of Theorem \ref{thm:leverage_score_datastructure}
by showing how to quickly compute a small set of leverage scores.
This corresponds to the function $\textsc{EstimateScore}$ in Algorithm
\ref{alg:leverage_score_datastructure_1}.
\begin{lem}[Guarantee of $\textsc{EstimateScore}$]
\label{lem:EstScore} Assume $\ttau\t\approx_{1/2}\sigma(\mg\t\ma)+\frac{d}{n}$
and $\Psi\t\safe\approx_{\delta/4}(\ma^{\top}(\mg\t)^{2}\ma)^{-1}$
for some $\delta>$0. Then, $\textsc{EstimateScore}(J,t,\delta)$
returns $v\in\R^{n}$ such that
\[
v_{i}\approx_{\delta}\sigma(\mg\t\ma)_{i}+\frac{d}{n}
\]
w.h.p. in $n$. Furthermore, the cost for the first iteration is $O((\nnz(A)+T_{\Psi\safe})\cdot(\log n)/\delta^{2})$
and the cost of other iterations is $O((T_{\Psi\safe}+d^{2}+d|J|)(\log n)/\delta^{2})$.
\end{lem}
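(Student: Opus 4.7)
The plan is to reduce the desired multiplicative approximation $v_j \approx_\delta \sigma(\mg^{(t)}\ma)_j + d/n$ to the multiplicative approximation $\|e_j^\top \mg^{(t)}\ma\mw\|_2^2 \approx_\delta \sigma(\mg^{(t)}\ma)_j$ (since the relation $\approx_\delta$ is preserved under adding the common nonnegative constant $d/n$ to both sides), and then to bound the latter by composing three sources of error: a Johnson--Lindenstrauss step, a ``$\Psi_{\mathrm{safe}}$ sandwich'' step, and, in the non-first-iteration branch, a leverage-score sampling step.

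\textbf{First iteration.} With $\mw = \Psi^{(t)}_{\mathrm{safe}} \ma^\top \mg^{(t)} \widetilde{\mr}$ we have
\[
\|e_j^\top \mg^{(t)} \ma \mw\|_2^2 = e_j^\top \mg^{(t)}\ma\, \Psi^{(t)}_{\mathrm{safe}}\bigl(\ma^\top \mg^{(t)} \widetilde{\mr}\widetilde{\mr}^\top \mg^{(t)}\ma\bigr)\Psi^{(t)}_{\mathrm{safe}} \ma^\top \mg^{(t)} e_j.
\]
Lemma~\ref{lem:JL_lemma} applied to $\widetilde{\mr}$, together with a union bound over $j\in[n]$, replaces the middle factor by $\ma^\top(\mg^{(t)})^2\ma$ at multiplicative cost $\approx_{\delta/8}$. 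The hypothesis $\Psi^{(t)}_{\mathrm{safe}} \approx_{\delta/4} (\ma^\top(\mg^{(t)})^2\ma)^{-1}$ then yields $\Psi^{(t)}_{\mathrm{safe}} (\ma^\top(\mg^{(t)})^2\ma) \Psi^{(t)}_{\mathrm{safe}} \approx_{\delta/2} (\ma^\top(\mg^{(t)})^2\ma)^{-1}$ by applying the $\delta/4$ bound to each outer factor of the sandwich. Composing gives an $\approx_{5\delta/8}$ approximation of $\sigma(\mg^{(t)}\ma)_j$, which is safely within $\delta$.

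\textbf{Subsequent iterations.} Now $\mw = \Psi^{(t)}_{\mathrm{safe}} \ma^\top \widetilde{\mg}\widetilde{\mr}$ for the random diagonal $\widetilde{\mg}$ sampled in Line~\ref{line:ls:sampling}, which satisfies $\E[\widetilde{g}_i^2]=(g^{(t)}_i)^2$. The JL step and the $\Psi_{\mathrm{safe}}$ sandwich go through verbatim; the remaining task is to show $\ma^\top \widetilde{\mg}^2 \ma \approx_{\delta/4} \ma^\top(\mg^{(t)})^2 \ma$ w.h.p.\ in $n$. This is exactly the standard leverage-score sampling guarantee: since $\widetilde{\tau}^{(t)}_i \approx_{1/2} \sigma(\mg^{(t)}\ma)_i + d/n$, the sampling probabilities $p_i = \Theta(\log(n)/\delta^2 \cdot \widetilde{\tau}^{(t)}_i)$ dominate (up to constants) the true leverage scores of $\mg^{(t)}\ma$, so a matrix Chernoff bound applied to the independent rank-one terms $(g^{(t)}_i)^2 p_i^{-1} \cdot \ma^\top e_i e_i^\top \ma$ delivers the desired spectral approximation. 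The three errors add to at most $\delta/8 + \delta/4 + \delta/2 < \delta$. Moreover $\sum_i p_i = O(d\log(n)/\delta^2)$ since $\|\widetilde{\tau}^{(t)}\|_1 = O(d)$, so $\widetilde{\mg}$ has $O(d\log(n)/\delta^2)$ nonzeros w.h.p.\ by a scalar Chernoff bound.

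\textbf{Complexity and main obstacle.} In the first iteration, forming $\ma^\top \mg^{(t)}\widetilde{\mr}$ costs $O(\nnz(\ma)\log(n)/\delta^2)$, applying $\Psi^{(t)}_{\mathrm{safe}}$ to each of its $O(\log(n)/\delta^2)$ columns costs $O(T_{\Psi_{\mathrm{safe}}}\log(n)/\delta^2)$, and reading off rows $j\in J=[n]$ of $\mg^{(t)}\ma\mw$ costs $O(\nnz(\ma)\log(n)/\delta^2)$. In subsequent iterations, only the $O(d\log(n)/\delta^2)$ rows of $\ma$ where $\widetilde{g}_i\neq 0$ contribute to $\ma^\top\widetilde{\mg}\widetilde{\mr}$, giving $O(d^2\log(n)/\delta^2)$; the $\Psi_{\mathrm{safe}}$ application costs $O(T_{\Psi_{\mathrm{safe}}}\log(n)/\delta^2)$; and each $j\in J$ requires an $O(d\log(n)/\delta^2)$ inner product, for a total of $O(d|J|\log(n)/\delta^2)$. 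The only non-elementary ingredient is the matrix Chernoff step for the leverage-score sampling, which I would cite from standard references rather than reprove; everything else is bookkeeping with JL and the $\approx_\epsilon$ calculus.
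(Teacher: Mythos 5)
Your proof is correct and follows essentially the same decomposition as the paper: the $\Psi\safe$ sandwich (accounting for $\delta/2$), the JL step applied to the fixed row vector $e_j^{\top}\mg\t\ma\Psi\t\safe\ma^{\top}\mg\t$ (or $\ldots\ma^{\top}\widetilde{\mg}$ in later iterations; the sandwiching by the same vector on both sides is what makes your ``replace the middle factor'' phrasing legitimate), and in the non-first branch the leverage-score sampling guarantee $\ma^{\top}\widetilde{\mg}^{2}\ma\approx_{\delta/4}\ma^{\top}(\mg\t)^{2}\ma$, with the same complexity accounting. The only slip is that JL at accuracy $\delta/8$ gives $\approx_{\delta/4}$ on the \emph{squared} norm you actually compare rather than $\delta/8$, so the error totals are $3\delta/4$ and $\delta$ rather than $5\delta/8$ and $7\delta/8$ --- still within the claimed $\approx_{\delta}$.
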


\begin{proof}
\textbf{Correctness}: For the first step, we have 
\begin{align*}
\|e_{j}^{\top}\mg\t\ma\Psi\t\safe\ma^{\top}\mg\t\|_{2}^{2} & =e_{j}^{\top}\mg\t\ma\Psi\t\safe\ma^{\top}(\mg\t)^{2}\ma\Psi\t\safe\ma^{\top}\mg\t e_{j}\\
 & \approx_{\delta/4}e_{j}^{\top}\mg\t\ma\Psi\t\safe\ma^{\top}\mg\t e_{j}\\
 & \approx_{\delta/4}e_{j}^{\top}\mg\t\ma(\ma^{\top}(\mg\t)^{2}\ma)^{-1}\ma^{\top}\mg\t e_{j}=\sigma(\mg\t\ma)_{j}.
\end{align*}
where both approximations follows from the assumption $\Psi\t\safe\approx_{\delta/4}(\ma^{\top}(\mg\t)^{2}\ma)^{-1}$.
Now, by JL, Lemma \ref{lem:JL_lemma}, and the choice of $\widetilde{\mr}$,
we have 
\begin{align*}
\|e_{j}^{\top}\mg\t\ma\Psi\t\safe\ma^{\top}\mg\t\widetilde{\mr}\|_{2}^{2} & \approx_{\delta/4}\|e_{j}^{\top}\mg\t\ma\Psi\t\safe\ma^{\top}\mg\t\|_{2}^{2}\approx_{\delta/2}\sigma(\mg\t\ma)_{j}\,.
\end{align*}
Hence, we have the result.

For the other case, at Line~\ref{line:ls:sampling}, we sample row
$i$ with probability at least $\Theta(1)\frac{\log n}{\delta^{2}}\ttau_{i}\t\geq\Theta(1)\frac{\log n}{\delta^{2}}\sigma(\mg\t\ma)_{i}$
(using the assumption on $\ttau_{i}\t$). Hence, the leverage score
sampling guarantee shows that $\ma^{\top}\widetilde{\mg}^{2}\ma\approx_{\delta/4}\ma^{\top}(\mg\t)^{2}\ma$
by choosing large enough constant in the $\Theta(1)$ factor in the
probability. Hence, by the same argument as above, we have the result.

\paragraph{Complexity}

For the first step, it takes $O(nd(\log n)/\delta^{2})$ time to compute
$\ma^{\top}\mg\t\widetilde{\mr}$ and takes $O(d^{2}(\log n)/\delta^{2})$
time to compute $\Psi\t\safe\ma^{\top}\mg\t\widetilde{\mr}$. After
computing $\Psi\t\safe\ma^{\top}\mg\t\widetilde{\mr}$, it takes $|J|\log n/\delta^{2}$
times to compute the norm for each $j$. Since both $d$ and $|J|$
is smaller than $n$, the cost is dominated by $O(nd(\log n)/\delta^{2})$.

For the other case, we note that $\sum_{i}\ttau\t_{i}=O(d)$ and hence
$\|\widetilde{g}\|_{0}=O(d\log n/\delta^{2})$. Therefore, it takes
$O(d^{2}\log n/\delta^{2})$ time to compute $\ma^{\top}\widetilde{\mg}\widetilde{\mr}$.
Similar as above, the total cost is $O((T_{\Psi\safe}+d^{2}+d|J|)\log n/\delta^{2})$.
\end{proof}

\subsection{Detecting Leverage Scores Change\label{sub:detect_changed_ls}}

Now that we know how to quickly compute a small set of leverage scores,
we must figure out which scores to compute. We only want to recompute
the scores, that have changed substantially within some time-span.
More accurately, we want to show that $\textsc{FindIndices}$ (Algorithm~\ref{alg:leverage_score_datastructure_1})
does indeed find all indices that changed a lot. The proof is split
into two parts: We first show that we can detect large changes that
happen within a sequence of updates of length $2^{i}$ in Lemma~\ref{lem:ls:good_query_result}.
We then extend the result to all large leverage score changes in Lemma~\ref{lem:ls:find_index}.
\begin{lem}
\label{lem:ls:good_query_result} Consider an execution of Line~\ref{line:ls:Ji}
(Algorithm~\ref{alg:leverage_score_datastructure_1}) at iteration
$t$. Suppose $\Psi\t\approx_{\epsilon/12\log n}(\ma^{\top}(\mg\t)^{2}\ma)^{-1}$
and $\Psi^{(t-2^{i})}\approx_{\epsilon/36\log n}(\ma^{\top}(\mg^{(t-2^{i})})^{2}\ma)^{-1}$.
Then $j\in J$ w.h.p. in $n$ if 
\begin{equation}
\sigma(\mg^{(t-2^{i})}\ma)_{j}+\frac{d}{n}\not\approx_{\epsilon/2\log n}\sigma(\mg^{(t)}\ma)_{j}+\frac{d}{n}\,.\label{eq:sigma_j_apx}
\end{equation}
Further, the set $J$ remains the same w.h.p. in $n$ if we replace
Line~\ref{line:ls:Ji} by $J_{i}\leftarrow[n]$.
\end{lem}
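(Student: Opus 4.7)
My plan is to reformulate each leverage score as a squared row-norm of a projection matrix and then exploit the fact that $\mg^{(t)}$ only differs from $\mg^{(t-2^{i})}$ on rows in $F_{i}$. For each time $s$ define $\Pi^{(s)}\defeq\mg^{(s)}\ma(\ma^{\top}(\mg^{(s)})^{2}\ma)^{-1}\ma^{\top}\mg^{(s)}$, so that $\sigma(\mg^{(s)}\ma)_{j}=\|\Pi^{(s)}e_{j}\|_{2}^{2}$. Since $F_{i}$ contains every index whose scaling changed in the past $2^{i}$ iterations, for $j\notin F_{i}$ we have $e_{j}^{\top}\mg^{(t)}=e_{j}^{\top}\mg^{(t-2^{i})}$. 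Combined with the hypotheses on $\Psi^{(t)},\Psi^{(t-2^{i})}$ and the JL guarantee on $\mr$, the vector $\mg^{(t)}\ma(\Psi^{(t)}\mb^{(t)}-\Psi^{(t-2^{i})}\mb^{(t-2^{i})})$ fed to $D.\textsc{Query}$ agrees on each row $j\notin F_{i}$, up to tiny multiplicative error, with the sketch $e_{j}^{\top}(\Pi^{(t)}-\Pi^{(t-2^{i})})\mr$, whose $\ell_{2}$-norm is within a factor of $1.01$ of $\|(\Pi^{(t)}-\Pi^{(t-2^{i})})e_{j}\|_{2}$ w.h.p.\ in $n$.

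\textbf{Detection direction.} Let $p\defeq\|\Pi^{(t)}e_{j}\|_{2}$ and $q\defeq\|\Pi^{(t-2^{i})}e_{j}\|_{2}$; recall $p,q\in[0,1]$ and $|p-q|\ge|p^{2}-q^{2}|/(p+q)$. A short case analysis splitting on whether each of $p^{2},q^{2}$ exceeds $d/n$ shows that the hypothesis $\sigma(\mg^{(t)}\ma)_{j}+d/n\not\approx_{\epsilon/(2\log n)}\sigma(\mg^{(t-2^{i})}\ma)_{j}+d/n$ forces $\|(\Pi^{(t)}-\Pi^{(t-2^{i})})e_{j}\|_{2}\ge c_{1}(\epsilon/\log n)\sqrt{d/n}$ for an absolute constant $c_{1}$. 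Since $\mr$ has only $b=O(\log n)$ columns, by pigeonhole at least one entry of the sketched row has magnitude at least $1/\sqrt{b}$ times this row-norm, which exceeds the heavy-hitter threshold $\frac{\epsilon}{48\log n}\sqrt{d/(nb)}$; Corollary \ref{cor:large_entry_datastructure} then places $j\in J_{i}$ w.h.p.\ in $n$. Finally, Lemma \ref{lem:EstScore} gives $v_{j}\approx_{\epsilon/(12\log n)}\sigma(\mg^{(t)}\ma)_{j}+d/n$ and analogously for $v_{j}^{(\mathrm{old})}$, and a triangle inequality with the hypothesis yields $v_{j}\not\approx_{\epsilon/(3\log n)}v_{j}^{(\mathrm{old})}$, so $j\in J$.

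\textbf{Optimality of $J_{i}$.} For the second claim it suffices to prove that $j\notin J_{i}$ (in the unmodified algorithm) implies $v_{j}\approx_{\epsilon/(3\log n)}v_{j}^{(\mathrm{old})}$, so the comparison on line \ref{line:ls:Ji2} would exclude $j$ from $J$ even if we checked it. The contrapositive of the previous paragraph, together with the pigeonhole upper bound (all sketch entries lie below threshold), yields $\|(\Pi^{(t)}-\Pi^{(t-2^{i})})e_{j}\|_{2}\le c_{2}(\epsilon/\log n)\sqrt{d/n}$. A symmetric case analysis based on $|p^{2}-q^{2}|\le(p+q)|p-q|$ then produces $\sigma(\mg^{(t)}\ma)_{j}+d/n\approx_{\epsilon/(6\log n)}\sigma(\mg^{(t-2^{i})}\ma)_{j}+d/n$; the borderline regime (one of $p,q$ above $\sqrt{d/n}$, the other below) uses the direct bound on $|p-q|$ to control $|p^{2}-q^{2}|\le 2\sqrt{d/n}\,|p-q|$. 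Combining this with Lemma \ref{lem:EstScore} via another triangle inequality delivers the required $v_{j}\approx_{\epsilon/(3\log n)}v_{j}^{(\mathrm{old})}$.

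\textbf{Main obstacle.} The most delicate step is the two-sided quantitative conversion between row-norms of $\Pi^{(t)}-\Pi^{(t-2^{i})}$ and multiplicative changes in the regularized score $\sigma(\cdot)+d/n$: the $d/n$ shift dominates the score when the raw leverage score is small but is dominated when it is large, so the heavy-hitter threshold $\frac{\epsilon}{48\log n}\sqrt{d/(nb)}$ must separate the ``must-catch'' and ``safe-to-miss'' regimes in both cases. This is exactly why the constants $\epsilon/(12\log n)$, $\epsilon/(36\log n)$, $\epsilon/(48\log n)$, $\epsilon/(2\log n)$, $\epsilon/(3\log n)$, and $\epsilon/(6\log n)$ appear in the algorithm and statement in their particular ratios. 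Everything else is standard: JL on the fixed $\mr$ (chosen at \textsc{Initialize}, independent of the adaptively chosen $\Psi^{(\cdot)}$), the heavy-hitter correctness of Corollary \ref{cor:large_entry_datastructure}, and Lemma \ref{lem:EstScore} for the score estimates.
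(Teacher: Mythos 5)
Your high-level outline matches the paper's strategy (detect via a heavy-hitter sketch of the difference of sketched projection matrices, then filter with $\textsc{EstimateScore}$), but one intermediate step has a genuine gap. You claim that the row $e_j^{\top}\mg^{(t)}\ma(\Psi^{(t)}\mb^{(t)}-\Psi^{(t-2^{i})}\mb^{(t-2^{i})})$ actually fed to $D.\textsc{Query}$ ``agrees, up to tiny multiplicative error,'' entrywise with the corresponding row $e_j^{\top}(\Pi^{(t)}-\Pi^{(t-2^{i})})\mr$ of the ideal-projection sketch. That does not follow from $\Psi^{(s)}\approx_{\delta}(\ma^{\top}(\mg^{(s)})^{2}\ma)^{-1}$: a spectral approximation of the inverse only controls quadratic forms $u^{\top}\Psi u$, hence it makes the \emph{diagonal} of $\tilde{\Pi}^{(s)}\defeq\mg^{(s)}\ma\Psi^{(s)}\ma^{\top}\mg^{(s)}$ close to that of $\Pi^{(s)}$, but gives no multiplicative control on individual off-diagonal entries (an off-diagonal entry of $\Pi^{(s)}$ can be exactly zero while the corresponding entry of $\tilde{\Pi}^{(s)}$ is not). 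The additive bound one does get, $\|e_j^{\top}(\tilde{\Pi}^{(s)}-\Pi^{(s)})\|_2\lesssim\delta\sqrt{\sigma(\mg^{(s)}\ma)_j}$, is comparable in size to the signal you are trying to detect when $\sigma_j$ is large, so the detour through $\Pi$ forces a fragile constant chase that the current write-up does not carry out.

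The paper sidesteps this entirely by never introducing the exact projections $\Pi^{(s)}$. Instead it stays with the computable objects $\tilde{\Pi}^{(s)}$: if $j\notin J_i$ then every sketched entry of the row is small, so $\|e_j^{\top}(\tilde{\Pi}^{(t)}-\tilde{\Pi}^{(t-2^{i})})\mr\|_2$ is small; an augmented (i.e.\ $d/n$-shifted) reverse triangle inequality on $\|e_j^{\top}\tilde{\Pi}^{(s)}\|_2$ then shows the shifted row norms at times $t$ and $t-2^{i}$ are close; finally the spectral approximation is invoked once, through the quadratic-form identity $\|e_j^{\top}\tilde{\Pi}^{(s)}\|_2^{2}=e_j^{\top}\mg^{(s)}\ma\Psi^{(s)}\ma^{\top}(\mg^{(s)})^2\ma\Psi^{(s)}\ma^{\top}\mg^{(s)}e_j\approx\sigma(\mg^{(s)}\ma)_j$, to convert the conclusion into one about $\sigma+d/n$. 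To repair your argument you should likewise replace the ideal $\Pi$ by $\tilde{\Pi}$ everywhere, run your $p,q$ case analysis on $\tilde{p}\defeq\|e_j^{\top}\tilde{\Pi}^{(t)}\|_2$ and $\tilde{q}\defeq\|e_j^{\top}\tilde{\Pi}^{(t-2^{i})}\|_2$ (which you can relate to $\sigma_j$ using the quadratic-form identity), and use the simple $\ell_2$ reverse triangle inequality $\|e_j^{\top}(\tilde{\Pi}^{(t)}-\tilde{\Pi}^{(t-2^{i})})\|_2\ge|\tilde{p}-\tilde{q}|$ together with a $\sqrt{d/n}$-augmented version; the rest of your argument (JL, pigeonhole, $\textsc{EstimateScore}$, and the contrapositive for the $J_i\leftarrow[n]$ replacement) then goes through as you describe.
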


\begin{proof}
First, we prove that for any $j\in[n]$ with
\begin{equation}
\sigma(\mg^{(t)}\ma)_{j}+\frac{d}{n}\not\approx_{\frac{\epsilon}{6\log n}}\sigma(\mg^{(t-2^{i})}\ma)_{j}+\frac{d}{n},\label{eq:sigma_j_apx2}
\end{equation}
we have $j\in J_{i}$ (see Line \ref{line:ls:Ji}). To do this, consider
any $j\notin J_{i}$, i.e. $j$ is not in $F_{i}$ and not picked
by the function $D.\textsc{Query}$. By Line~\ref{line:ls:Ji} and
Corollary~\ref{cor:large_entry_datastructure}, $j$ not picked by
the function $D.\textsc{Query}$ implies that 
\[
\left|((\mi-\mf_{i})\mg\t\ma(\Psi\t\mb^{(t)}-\Psi^{(t-2^{i})}\mb^{(t-2^{i})}))_{j,\ell}\right|\leq\epsilon_{\Psi}\defeq\frac{\epsilon}{48\log n}\sqrt{\frac{d}{nb}}
\]
for all $\ell$. Since $j$ is not in $F_{i}$, it means that $g_{j}$
does not change in the last $2^{i}$ iterations. Hence, we have
\[
((\mi-\mf_{i})\mg\t\ma(\Psi\t\mb^{(t)}-\Psi^{(t-2^{i})}\mb^{(t-2^{i})}))_{j,\ell}=((\mg\t\ma\Psi\t\ma\mg\t-\mg^{(t-2^{i})}\ma\Psi^{(t-2^{i})}\ma\mg^{(t-2^{i})})\mr)_{j,\ell}
\]
Using this and that there are only $b$ many columns in $\mr$ (namely,
there are only $b$ many $\ell$), we have that
\[
\|\indicVec j^{\top}(\mg\t\ma\Psi\t\ma\mg\t-\mg^{(t-2^{i})}\ma\Psi^{(t-2^{i})}\ma\mg^{(t-2^{i})})\mr\|_{2}^{2}\leq\epsilon_{\Psi}^{2}b.
\]
Hence, we have that w.h.p. in $n$
\begin{align*}
 & \left(\|e_{j}^{\top}\mg\t\ma\Psi\t\ma\mg\t\|_{2}^{2}+\frac{d}{n}\right)^{1/2}-\left(\|e_{j}^{\top}\mg^{(t-2^{i})}\ma\Psi^{(t-2^{i})}\ma\mg^{(t-2^{i})})\|_{2}^{2}+\frac{d}{n}\right)^{1/2}\\
\leq & \|e_{j}^{\top}\mg\t\ma\Psi\t\ma\mg\t-e_{j}^{\top}\mg^{(t-2^{i})}\ma\Psi^{(t-2^{i})}\ma\mg^{(t-2^{i})}\|_{2}\\
\leq & 2\|e_{j}^{\top}(\mg\t\ma\Psi\t\ma\mg\t-\mg^{(t-2^{i})}\ma\Psi^{(t-2^{i})}\ma\mg^{(t-2^{i})})\mr\|_{2}\leq2\epsilon_{\Psi}\sqrt{b}
\end{align*}
where the second inequality follows from JL (Lemma~\ref{lem:JL_lemma})and
the choice of $\mr$. By the definition of $\approx$ and that $\epsilon_{\Psi}$
is small enough, we have
\[
\left(\|e_{j}^{\top}\mg\t\ma\Psi\t\ma\mg\t\|_{2}^{2}+\frac{d}{n}\right)^{1/2}\approx_{4\epsilon_{\Psi}\sqrt{b}/\sqrt{\frac{d}{n}}}\left(\|e_{j}^{\top}\mg^{(t-2^{i})}\ma\Psi^{(t-2^{i})}\ma\mg^{(t-2^{i})})\|_{2}^{2}+\frac{d}{n}\right)^{1/2}.
\]
Squaring both sides, using the fact that 
\begin{align*}
\|e_{j}^{\top}\mg\t\ma\Psi\t\ma^{\top}\mg\t\|_{2}^{2} & =e_{j}^{\top}\mg\t\ma\Psi\t\ma^{\top}(\mg\t)^{2}\ma\Psi\t\ma^{\top}\mg\t e_{j}\\
 & \approx_{\epsilon/(36\log n)}e_{j}^{\top}\mg\t\ma\Psi\t\ma^{\top}\mg\t e_{j}\approx_{\epsilon/(36\log n)}\sigma(\mg^{(t)}\ma)_{j}
\end{align*}
and similarly the fact that $\|e_{j}^{\top}\mg^{(t-2^{i})}\ma\Psi^{(t-2^{i})}\ma\mg^{(t-2^{i})})\|_{2}^{2}\approx_{\epsilon/(18\log n)}\sigma(\mg^{(t-2^{i})}\ma)_{j}$,
we have
\[
\sigma(\mg^{(t)}\ma)_{j}+\frac{d}{n}\approx_{8\epsilon_{\Psi}\sqrt{\frac{nb}{d}}+\frac{2\epsilon}{18\log n}}\sigma(\mg^{(t-2^{i})}\ma)_{j}+\frac{d}{n}.
\]
Using $\epsilon_{\Psi}=\frac{\epsilon}{144\log n}\sqrt{\frac{d}{nb}}$,
we showed that $j\notin J_{i}$ implies \eqref{eq:sigma_j_apx2} is
false. This proves the claim that for any $j$ such that \eqref{eq:sigma_j_apx2}
holds, we have that $j\in J_{i}$.

Now, we use the claim to prove the statement of this lemma. Fix any
$j$ such that $\sigma(\mg\t\ma)_{j}+\frac{d}{n}\not\approx_{\epsilon/(2\log n)}\sigma(\mg^{(t-2^{i})}\ma)_{j}+\frac{d}{n}$.
By the claim, we have that $j\in J_{i}$, namely, $j$ will be detected
by $D.\textsc{Query}$. Now, Lemma~\ref{lem:EstScore} shows that
$v_{j}\not\approx_{\epsilon/(3\log n)}v_{j}\old$ and hence $j$ will
be put into $J$ by Line \ref{line:ls:Ji2}. This proves that $j\in J$.

Finally, we prove that the output remains the same without using $D.\textsc{Query}$.
Note that $j\in J$ implies $v_{j}\not\approx_{\epsilon/(3\log n)}v_{j}\old$
by Line~\ref{line:ls:Ji2}. Again, by Lemma~\ref{lem:EstScore},
we see that this implies $\sigma(\mg\t\ma)_{j}+\frac{d}{n}\not\approx_{\epsilon/(6\log n)}\sigma(\mg^{(t-2^{i})}\ma)_{j}+\frac{d}{n}$
w.h.p. in $n$. Hence, the claim in the beginning shows that $j\in J_{i}$.
Therefore, regardless of whether we use the current definition of
$J_{i}$ or the alternative definition that $J_{i}=[n]$, the output
$J$ remains the same w.h.p. in $n$.
\end{proof}
The previous lemma only showed how to detect leverage scores that
changed within some time-span of length $2^{i}$, we now show that
this is enough to detect all changes, no matter if they happen slowly
(over some long time-span) or radically (within one iteration), or
something in-between.
\begin{lem}[Guarantee of $\textsc{FindIndices}$]
\label{lem:ls:find_index} Under the assumption of Lemma~\ref{lem:EstScore}
and Lemma~\ref{lem:ls:good_query_result} let $J\t$ be the output
of $\textsc{FindIndices}$ (Algorithm~\ref{alg:leverage_score_datastructure_1})
at $t$-th iteration and 
\[
J^{*(t)}=\left\{ j:\ttau\t_{j}\not\approx_{\epsilon}\sigma(\mg\t\ma)_{j}+\frac{d}{n}\right\} .
\]
Then, we have that $J^{*(t)}\subset J\t$ w.h.p. in $n$.
\end{lem}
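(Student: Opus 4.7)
The plan is to show the contrapositive: if $j \notin J^{(t)}$, then the stored value $\tilde{\tau}^{(t)}_j$ remains a good approximation of $\sigma(\mg^{(t)}\ma)_j + d/n$. The key idea is a dyadic decomposition argument combined with Lemma~\ref{lem:ls:good_query_result}: although $\textsc{FindIndices}$ only checks dyadic windows ending at the current time $t$, the union of all such checks performed at every iteration between the last update of $\tilde{\tau}_j$ and $t$ collectively covers the entire history of changes.

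First, I would identify $t'$ as the most recent iteration $\le t$ at which $\tilde{\tau}_j$ was reassigned. Such $t'$ must exist because at every iteration where $t = 0 \bmod 2^{\lfloor \log n\rfloor}$ (or the initial step) the full vector is recomputed via $\textsc{EstimateScore}([n], t, \epsilon)$, which by Lemma~\ref{lem:EstScore} yields $\tilde{\tau}^{(t')}_j \approx_\epsilon \sigma(\mg^{(t')}\ma)_j + d/n$ w.h.p. in $n$. Since $\tilde{\tau}_j$ is not touched between $t'$ and $t$, we have $\tilde{\tau}^{(t)}_j = \tilde{\tau}^{(t')}_j$.

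Next, I would invoke the standard fact that any interval $[t', t]$ with $t - t' \le 2^{\lfloor\log n\rfloor}$ decomposes into at most $O(\log n)$ canonical dyadic intervals of the form $[s - 2^i, s]$ with $s \equiv 0 \bmod 2^i$. For each such interval in the decomposition, $\textsc{FindIndices}$ was called at the time $s$ and executed the loop body for this specific $i$. Since $j \notin J^{(s)}$ for every $s \in (t', t]$ by hypothesis, the contrapositive of Lemma~\ref{lem:ls:good_query_result} (whose hypotheses on $\Psi^{(s)}, \Psi^{(s-2^i)}$ are inherited from the top-level assumption, using that $\Psi\safe$ is independent of $\Psi$ per Theorem~\ref{thm:leverage_score_datastructure}) yields
\[
\sigma(\mg^{(s-2^i)}\ma)_j + \tfrac{d}{n} \approx_{\epsilon/(2\log n)} \sigma(\mg^{(s)}\ma)_j + \tfrac{d}{n}.
\]
Telescoping across the $O(\log n)$ pieces of the decomposition then gives $\sigma(\mg^{(t')}\ma)_j + d/n \approx_{\epsilon/2} \sigma(\mg^{(t)}\ma)_j + d/n$ (after absorbing the hidden constant in $O(\log n)$ into the choice of the $1/(2\log n)$ threshold in Lemma~\ref{lem:ls:good_query_result}, which is ultimately what the constant $12$ in $\textsc{EstimateScore}$'s accuracy inside $\textsc{FindIndices}$ is calibrated for). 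Combined with the $\approx_\epsilon$ guarantee at $t'$, we conclude $\tilde{\tau}^{(t)}_j \approx_\epsilon \sigma(\mg^{(t)}\ma)_j + d/n$, contradicting $j \in J^{*(t)}$.

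The main obstacle is bookkeeping the constants and making sure the dyadic decomposition aligns with the checks actually performed. In particular, one must verify that the times at which intermediate dyadic-endpoint checks occur are genuinely iterations in $(t', t]$ (they are, since the windows $[s-2^i,s]$ making up the canonical decomposition all have $s \le t$ and $s > t'$). A secondary subtlety is handling the boundary: if $t-t'$ exceeds $2^{\lfloor\log n\rfloor}$ then a full reinitialization occurred in between, which forces $t'$ to be strictly after that reset, keeping $t - t' \le 2^{\lfloor\log n\rfloor}$ so the $O(\log n)$ bound on the decomposition size holds. Once these are nailed down, the proof is a direct combination of Lemma~\ref{lem:EstScore}, Lemma~\ref{lem:ls:good_query_result}, and the dyadic telescoping.
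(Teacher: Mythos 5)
Your proposal is correct and follows essentially the same route as the paper's own proof: take the contrapositive, identify the most recent iteration $\overline{t}$ (your $t'$) at which $\tilde\tau_j$ was reassigned, write the gap $[\overline{t},t]$ as a concatenation of at most $O(\log n)$ dyadic windows of the form $[s-2^i,s]$ with $s\equiv 0 \bmod 2^i$, apply Lemma~\ref{lem:ls:good_query_result} to each such window (valid since $j\notin J^{(s)}$ for all intermediate $s$), and telescope the per-window $\approx_{\epsilon/(2\log n)}$ bound back to $\approx_{\epsilon/2}$, then combine with the accuracy of the $\textsc{EstimateScore}$ call at $\overline{t}$. The paper gives an explicit two-phase bit-flipping construction of the dyadic chain $t_0\geq t_1\geq\cdots\geq t_u=\overline{t}$ of length $\le 2\log n$, whereas you invoke the canonical dyadic decomposition as a black box and then verify that each window it produces is indeed checked by $\textsc{FindIndices}$ at the right iteration; these are the same decomposition, just described differently, and your observation that the reset every $2^{\lfloor\log n\rfloor}$ iterations caps the chain length is exactly the point the paper makes. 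One small caveat: your final combination uses an $\approx_\epsilon$ guarantee at $t'$ together with $\approx_{\epsilon/2}$ from the telescope, which would give $\approx_{3\epsilon/2}$ rather than $\approx_\epsilon$; the paper similarly has some constant-factor slop in this step (claiming $\approx_{\epsilon/2}$ at $\overline{t}$ while $\textsc{EstimateScore}$ is called with accuracy $\epsilon$), so this is a shared bookkeeping issue to be absorbed into the various $1/(12\log n)$- and $1/(24\log n)$-type slack rather than a gap in your argument specifically.
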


\begin{proof}
To show that $J^{*(t)}\subset J\t$, we consider any $j\notin J\t$.
Let $t$ be the current iteration and $\overline{t}$ be the last
time $\ttau_{j}^{(\overline{t})}$ is updated. Since the algorithm
detects changes of shifted leverage scores at $2^{\ell}$ iterations,
we consider a sequence 
\[
t_{0}=t\ge t_{1}\ge\cdots\ge t_{u}=\overline{t}
\]
such that $t_{z}\mod2^{\ell_{z}}=0$ and $t_{z+1}=t_{z}-2^{\ell_{z}}$
for all $z$ (namely, the algorithm is trying to detect the change
between $t_{z}$ and $t_{z+1}$ in Line~\ref{line:ls:query}). We
can obtain such a sequence of length at most $2\log t$ as follows:
Let $t'=\lfloor t/2^{i}\rfloor\cdot2^{i}$, for the largest $i$ such
that $1\le2^{i}\le t-k$. Then $k\le t'\le t$. We now create a sequence
where we first flip the 1s to 0s in the binary representation of $t$,
from least significant to greatest significant bit. This is a sequence
$t_{0}=t\ge t_{1}\ge\cdots\ge t_{v}=t'$ of length $v\le\log t$.
Next, we continue the sequence $t'=t_{v}\ge t_{v+1}\ge\cdots\ge t_{u}=\bar{t}$
as follows: Consider the sequence in reverse, then we always add $2^{\ell_{z-1}}$
to $t_{z}$ to obtain $t_{z-1}$, where $\ell_{z-1}$ is the position
of the least significant 1-bit in $t_{z}$. By adding this $2^{\ell_{z-1}}$,
we flip the 1-bit of $t_{z}$ to a 0, but might flip a few more significant
1-bits via the carrier. This creates a sequence that will reach $t'$
after at most $\log t'$ steps. Since we recompute all shifted leverage
score every $n$ (or less) iterations, we have $t\leq n$ and hence
$u\leq2\log n$.

Next, we note that $\ttau_{j}^{(t_{z})}$ is not updated implies $j\notin J^{(t_{z})}$.
By Lemma~\ref{lem:ls:good_query_result}, we see that $j\notin J^{(t_{z})}$,
$t_{z}\mod2^{\ell_{z}}=0$ and $t_{z+1}=t_{z}-2^{\ell_{z}}$ implies
\begin{equation}
\sigma(\mg^{(t_{z})}\ma)_{j}+\frac{d}{n}\approx_{\epsilon/2\log n}\sigma(\mg^{(t_{z+1})}\ma)_{j}+\frac{d}{n}\label{eq:tau_chain_claim}
\end{equation}
 for $z=0,1,\cdots,u-1$. Since there are only $2\log n$ steps, \eqref{eq:tau_chain_claim}
shows that
\begin{equation}
\sigma(\mg^{(\overline{t})}\ma)_{j}+\frac{d}{n}\approx_{\epsilon/2}\sigma(\mg^{(t)}\ma)_{j}+\frac{d}{n}.\label{eq:tau_chain_1}
\end{equation}
Finally, we note that 
\begin{equation}
\ttau\t_{j}=\ttau{}_{j}^{(\overline{t})}\label{eq:tau_chain_2}
\end{equation}
because $\ttau$ has not been updated and that Lemma~\ref{lem:EstScore}
shows 
\begin{equation}
\ttau{}_{j}^{(\overline{t})}\approx_{\epsilon/2}\sigma(\mg^{(\overline{t})}\ma)_{j}+\frac{d}{n}\label{eq:tau_chain_3}
\end{equation}
for the last update. Combining \eqref{eq:tau_chain_1}, \eqref{eq:tau_chain_2}
and \eqref{eq:tau_chain_3}, we see that $\ttau\t_{j}\approx_{\epsilon}\sigma(\mg^{(t)}\ma)_{j}+\frac{d}{n}$
and hence $j\notin J^{*(t)}$. This shows that $j\notin J\t$ implies
$j\notin J^{*(t)}$. Hence, $J^{*(t)}\subset J\t$.
\end{proof}

\subsection{Leverage Score Maintenance\label{sub:ls_maintenance_proof}}

We can now combine the previous results Lemma~$\text{\ref{lem:EstScore}, Lemma�\ref{lem:ls:good_query_result}}$
and Lemma~$\text{\ref{lem:ls:find_index}}$ to show that Algorithm~\ref{alg:leverage_score_datastructure_1}
yields a proof of Theorem~
\begin{lem}[Correctness of Theorem \ref{thm:leverage_score_datastructure}]
\label{lem:ls:approx_sigma}Assume $\epsilon\in[0,1/4]$, $\Psi\t\approx_{\epsilon/(12\log n)}(\ma^{\top}(\mg\t)^{2}\ma)^{-1}$
and $\Psi\t\safe\approx_{\epsilon/(24\log n)}(\ma^{\top}(\mg\t)^{2}\ma)^{-1}$
for all $t$. After each call to \textsc{Query}, the returned vector
$\ttau{}_{j}^{(t)}$ satisfies
\[
\ttau{}^{(t)}\approx_{\varepsilon}\sigma(\mg\t\ma)+\frac{d}{n}.
\]
\end{lem}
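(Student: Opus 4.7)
I would prove this by induction on $t$, maintaining the invariant that after the $t$-th call to \textsc{Query} the returned $\ttau^{(t)}$ satisfies $\ttau^{(t)}\approx_\epsilon \sigma(\mg^{(t)}\ma)+\tfrac{d}{n}$. Since $\epsilon \leq 1/4$, this invariant is strictly stronger than the hypothesis $\ttau^{(t)}\approx_{1/2}\sigma(\mg^{(t)}\ma)+\tfrac{d}{n}$ required by Lemma~\ref{lem:EstScore}, so the invariant will automatically feed back into the precondition of \textsc{EstimateScore} at the next iteration.

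For the base case (and for every iteration with $t\equiv 0\pmod{2^{\lfloor\log n\rfloor}}$), the algorithm computes $\ttau^{(0)}\leftarrow \textsc{EstimateScore}([n],t,\epsilon)$. The first branch of \textsc{EstimateScore} (the one used on the first step) does not depend on $\ttau$ at all, so its preconditions are vacuous there; for later full recomputations the inductive invariant supplies the needed $\approx_{1/2}$ bound. In both cases, combining the hypothesis $\Psi^{(t)}\safe \approx_{\epsilon/(24\log n)}(\ma^\top(\mg^{(t)})^2\ma)^{-1}$ with Lemma~\ref{lem:EstScore} (applied with $\delta=\epsilon$, noting $\epsilon/4 \geq \epsilon/(24\log n)$) yields $\ttau^{(0)}\approx_\epsilon \sigma(\mg^{(t)}\ma)+\tfrac{d}{n}$ w.h.p.\ in $n$, which is the invariant.

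For the inductive step (the generic iteration), let $J^{(t)}$ be the set produced by \textsc{FindIndices}. The hypotheses $\Psi^{(t)}\approx_{\epsilon/(12\log n)}(\ma^\top(\mg^{(t)})^2\ma)^{-1}$ and $\Psi^{(t-2^i)}\approx_{\epsilon/(12\log n)}(\ma^\top(\mg^{(t-2^i)})^2\ma)^{-1} \preceq \approx_{\epsilon/(36\log n)}$ (the latter is actually the required precision) verify the preconditions of Lemma~\ref{lem:ls:good_query_result}, and the inductive invariant at step $t-1$ yields the $\approx_{1/2}$ hypothesis needed by \textsc{EstimateScore} when it is invoked inside \textsc{FindIndices}. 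Thus Lemma~\ref{lem:ls:find_index} gives w.h.p.\
\[
J^{*(t)} \defeq \{j : \ttau^{(t-1)}_j \not\approx_\epsilon \sigma(\mg^{(t)}\ma)_j + \tfrac{d}{n}\} \subseteq J^{(t)}.
\]
For $j\in J^{(t)}$, the algorithm sets $\ttau^{(t)}_j = v_j$ where $v=\textsc{EstimateScore}(J^{(t)},t,\epsilon)$, and Lemma~\ref{lem:EstScore} (again using the invariant to validate the $\approx_{1/2}$ precondition and the hypothesis on $\Psi^{(t)}\safe$) gives $v_j \approx_\epsilon \sigma(\mg^{(t)}\ma)_j+\tfrac{d}{n}$. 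For $j\notin J^{(t)}$, we have $j\notin J^{*(t)}$ by the containment above, so $\ttau^{(t)}_j = \ttau^{(t-1)}_j \approx_\epsilon \sigma(\mg^{(t)}\ma)_j+\tfrac{d}{n}$ directly. Either way the invariant is restored.

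The main obstacle, and the only real bookkeeping issue, is lining up the three different accuracy parameters that appear in the statement, in \textsc{EstimateScore}, and in \textsc{FindIndices}: we must check that $\epsilon/(24\log n)$ suffices wherever Lemma~\ref{lem:EstScore} asks for $\delta/4$-accurate $\Psi\safe$ (it does, since that lemma is invoked with $\delta\in\{\epsilon,\epsilon/(12\log n)\}$), and that $\epsilon/(12\log n)$ suffices for the $\Psi^{(t)}$-accuracy required by Lemma~\ref{lem:ls:good_query_result}. A union bound over the $\poly(n)$ high-probability events incurred across the at most $n$ iterations before a full rebuild completes the proof.
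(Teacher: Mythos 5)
Your overall architecture (induction; base case via a full recomputation; inductive step via $J^{*(t)}\subseteq J^{(t)}$ from Lemma~\ref{lem:ls:find_index} and then a case split on $j\in J^{(t)}$ vs.\ $j\notin J^{(t)}$) is the same as the paper's. However, there is a genuine gap in how you re-establish the $\approx_{1/2}$ precondition of Lemma~\ref{lem:EstScore} at the start of iteration $t$.

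You claim that the invariant $\ttau^{(t)}\approx_{\epsilon}\sigma(\mg^{(t)}\ma)+\tfrac{d}{n}$ is ``strictly stronger'' than the precondition $\ttau^{(t)}\approx_{1/2}\sigma(\mg^{(t)}\ma)+\tfrac{d}{n}$ and therefore ``automatically'' feeds into the next iteration. But the object that $\textsc{EstimateScore}$ and $\textsc{FindIndices}$ actually consult at the start of iteration $t$ is the \emph{stale} copy of $\ttau^{(t-1)}$, while the precondition they need compares it to $\sigma(\mg^{(t)}\ma)+\tfrac{d}{n}$, i.e.\ the leverage scores of the \emph{new} scaling $g^{(t)}$. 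The inductive hypothesis only tells you $\ttau^{(t-1)}\approx_{\epsilon}\sigma(\mg^{(t-1)}\ma)+\tfrac{d}{n}$, which is about a different matrix, and your argument never bridges $\sigma(\mg^{(t-1)}\ma)$ to $\sigma(\mg^{(t)}\ma)$. Without a bound on how much $g$ changes between queries, this step simply fails: the leverage scores could jump by an unbounded multiplicative factor. The paper closes this gap by invoking the hypothesis $g^{(t)}\approx_{1/16}g^{(t-1)}$ from the theorem statement, which yields $\sigma(\mg^{(t)}\ma)\approx_{1/4}\sigma(\mg^{(t-1)}\ma)$, and then chains this with the inductive $\approx_{\epsilon}$ (using $\epsilon\leq 1/4$) to obtain $\ttau^{(t-1)}\approx_{1/2}\sigma(\mg^{(t)}\ma)+\tfrac{d}{n}$, which is exactly the precondition Lemma~\ref{lem:EstScore} (and hence Lemmas~\ref{lem:ls:good_query_result} and~\ref{lem:ls:find_index}) needs. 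You should insert this stability argument between your statement of the invariant and your invocation of the supporting lemmas. Once that is done, the rest of your proof matches the paper's.
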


\begin{proof}
We prove this statement by induction. At the start of the algorithm
the vector $\ttau{}^{(0)}$ is a good enough approximation of the
shifted leverage scores because of Lemma \ref{lem:EstScore}. We are
left with proving that with each call to \textsc{Query}, $\ttau{}^{(t)}$
is updated appropriately.

By the induction, we have that $\ttau{}^{(t-1)}\approx_{\varepsilon}\sigma(\mg^{(t-1)}\ma)+\frac{d}{n}$.
Furthermore, we know that $g\t\approx_{1/16}g^{(t-1)}$. Hence, we
have $\sigma(\mg^{(t)}\ma)\approx_{1/4}\sigma(\mg^{(t-1)}\ma)$. Using
$\epsilon\in[0,1/4]$, we have that $\ttau{}^{(t-1)}\approx_{\varepsilon}\sigma(\mg^{(t)}\ma)+\frac{d}{n}$.
Hence, the $\ttau{}^{(t-1)}$ is a good enough approximation and satisfies
the assumption of Lemma~\ref{lem:EstScore} and hence the assumption
of Lemma~\ref{lem:ls:find_index}. Lemma \ref{lem:ls:find_index}
shows that $\textsc{FindIndices}()$ finds all indices that is $\epsilon$
far away from the shifted leverage score. Lemma~\ref{lem:EstScore}
then shows that those $\ttau$ will be correctly updated. Hence, all
$\ttau$ is at most $\epsilon$ far away from the target. This finishes
the induction.
\end{proof}
\begin{proof}[Proof of Theorem~\ref{thm:leverage_score_datastructure}]

The correctness for the approximation ratio of Theorem \ref{thm:leverage_score_datastructure}
was already proven in Lemma~\ref{lem:ls:approx_sigma}. Here we bound
the time complexity and prove that the data-structure works against
an adaptive adversary.

\paragraph{Adaptive Adversary:}

Lemma~\ref{lem:ls:good_query_result} shows that the output $\ttau{}^{(t)}$
of the data-structure is independent to $\textsc{FindIndices}$ and
$\textsc{D.Query}$ w.h.p. in $n$. Hence, the output $\ttau{}^{(t)}$
of the data-structure depends only on $\textsc{EstimateScore}$. Note
that $\textsc{EstimateScore}$ is calculated using $\Psi\safe$ and
some independently sampled $\widetilde{\mr}$. So, it cannot leak
information about the random sketch $\mr$ nor the input $\Psi$.
Hence, the data structure works against an adaptive adversary.

To bound the complexity, we first bound how many leverage scores the
data structure estimate, namely the size of $J$.

\paragraph{Size of all $J$:}

An index $j$ may be added to $J$ in Line \ref{line:ls:Ji2}. These
$j$ must satisfy $\mw_{j,\ell}\neq0$ or $g\t_{j}\neq g_{j}^{(t-2^{i})}$,
where $\mw$ is the result of Line \ref{line:ls:query} ($D.\textsc{Query}$).
Let $\mw^{(i,t)}$ be the matrix $\mw$ during the $t$th call to
\textsc{FindIndices} on scale $i$, then the sum of the size of all
$J$ outputs by \textsc{FindIndices} over $T$ iterations is bounded
by
\begin{equation}
\sum_{i\in\log T}\sum_{k\in T/2^{i}}\|\mw^{(i,k2^{i})}\|_{0}+\sum_{i\in\log T}\sum_{k\in T/2^{i}}\|g^{(k2^{i})}-g^{(k2^{i}-2^{i})}\|_{0}\label{eq:size_J}
\end{equation}

For the first term in \eqref{eq:size_J}, Corollary~\ref{cor:large_entry_datastructure}
and the definition of $\mb$ shows that $\mw_{j,\ell}\neq0$ if 
\[
|(\mg^{(k2^{i})}\ma\Psi^{(k2^{i})}\ma^{\top}\mg^{(k2^{i})}\mr-\mg^{(k2^{i}-2^{i})}\ma\Psi^{(k2^{i}-2^{i})}\ma^{\top}\mg^{(k2^{i}-2^{i})}\mr)_{j,\ell}|=\Omega\Big(\frac{\epsilon}{\log n}\sqrt{\frac{d}{nb}}\Big).
\]
Hence, we have that 
\begin{align*}
 & \sum_{k=1}^{T/2^{i}}\|\mw^{(i,k2^{i})}\|_{0}\\
 & \leq O\Big(\frac{nb\log^{2}n}{d\epsilon^{2}}\Big)\sum_{k=1}^{T/2^{i}}\|\mg^{(k2^{i})}\ma\Psi^{(k2^{i})}\ma^{\top}\mg^{(k2^{i})}\mr-\mg^{(k2^{i}-2^{i})}\ma\Psi^{(k2^{i}-2^{i})}\ma^{\top}\mg^{(k2^{i}-2^{i})}\mr\|_{F}^{2}\\
 & =O\Big(\frac{nb\log^{2}n}{d\epsilon^{2}}\Big)\left(\sum_{k=1}^{T/2^{i}}\|\mg^{(k2^{i})}\ma\Psi^{(k2^{i})}\ma^{\top}\mg^{(k2^{i})}\mr-\mg^{(k2^{i}-2^{i})}\ma\Psi^{(k2^{i}-2^{i})}\ma^{\top}\mg^{(k2^{i}-2^{i})}\mr\|_{F}\right)^{2}\\
 & =O\Big(\frac{nb\log^{2}n}{d\epsilon^{2}}\Big)\left(\sum_{k=1}^{T}\|\mg^{(t)}\ma\Psi^{(t)}\ma^{\top}\mg^{(t)}\mr-\mg^{(t-1)}\ma\Psi^{(t-1)}\ma^{\top}\mg^{(t-1)}\mr\|_{F}\right)^{2}\\
 & =O\Big(\frac{nb\log^{2}n}{d\epsilon^{2}}\Big)\left(\sum_{k=1}^{T}\|\mg^{(t)}\ma\Psi^{(t)}\ma^{\top}\mg^{(t)}-\mg^{(t-1)}\ma\Psi^{(t-1)}\ma^{\top}\mg^{(t-1)}\|_{F}\right)^{2}
\end{align*}
where we used Johnson Lindenstrauss lemma \ref{lem:JL_lemma} at the
end. Now, summing up over $\log T$ scales and using that we restart
every $O(n)$ iterations, we have
\[
\sum_{i=1}^{\log T}\sum_{k=1}^{T/2^{i}}\|\mw^{(i,k2^{i})}\|_{0}=O\Big(\frac{nb}{d\epsilon^{2}}\log^{3}n\Big)\cdot\left(\sum_{k=1}^{T}\|\mg^{(t)}\ma\Psi^{(t)}\ma^{\top}\mg^{(t)}-\mg^{(t-1)}\ma\Psi^{(t-1)}\ma^{\top}\mg^{(t-1)}\|_{F}\right)^{2}.
\]

For the second term in \eqref{eq:size_J}, we bound it simply by $O(\log n)\cdot\sum_{t=1}^{T}\|g^{(t)}-g^{(t-1)}\|_{0}$.
Hence, the sum of the size of all $J$ is bounded by 
\begin{align}
\sum_{t=1}^{T}|J\t|\leq & O\Big(\frac{n}{d\epsilon^{2}}\log^{4}n\Big)\cdot\left(\sum_{t=1}^{T}\|\mg^{(t)}\ma\Psi^{(t)}\ma^{\top}\mg^{(t)}-\mg^{(t-1)}\ma\Psi^{(t-1)}\ma^{\top}\mg^{(t-1)}\|_{F}\right)^{2}\nonumber \\
 & +O(\log n)\cdot\sum_{t=1}^{T}\|g^{(t)}-g^{(t-1)}\|_{0}.\label{eq:size_J_ans}
\end{align}

\paragraph{Complexity}

The cost of $\textsc{Query}$ is dominated by the cost of $\textsc{EstimateScore}$,
the cost of computing $\Psi\t\mb\t$ and the cost of $\textsc{D.Query}$.
Lemma \ref{lem:EstScore} shows that the total cost of $\textsc{EstimateScore}$
is bounded by
\begin{equation}
\underbrace{nd\delta^{-2}\log n}_{\text{First Step}}+\underbrace{T\cdot(T_{\Psi}+d^{2})\cdot\delta^{-2}\log n}_{\text{Cost of \ensuremath{T} Step}}+\underbrace{\sum_{t=1}^{T}|J\t|}_{\text{Size of }|J|}\cdot\underbrace{d\delta^{-2}\log n}_{\text{Cost per score}}.\label{eq:est_cost}
\end{equation}
The cost of computing $\Psi\t\mb\t$ is simply $O(T_{\Psi}\log n)$
because by assumption we can multiply $\Psi\t$ with a vector in $T_{\Psi}$
time, and $\mb\t$ is a $d\times O(\log n)$ matrix. Finally, by Corollary
\ref{cor:large_entry_datastructure}, the total cost of $\textsc{D.Query}$
is bounded by
\begin{align*}
 & \sum_{i=1}^{\log T}\sum_{k=1}^{T/2^{i}}\left\Vert \mg^{(k2^{i})}\ma\Psi^{(k2^{i})}\ma^{\top}\mg^{(k2^{i})}\mr-\mg^{(k2^{i}-2^{i})}\ma\Psi^{(k2^{i}-2^{i})}\ma^{\top}\mg^{(k2^{i}-2^{i})}\mr\right\Vert _{F}^{2}\cdot\left(\frac{\log n}{\epsilon}\sqrt{\frac{nb}{d}}\right)^{2}\cdot d\log^{3}n\\
+ & \sum_{i=1}^{\log T}\sum_{k=1}^{T/2^{i}}\|g^{(k2^{i})}-g^{(k2^{i}-2^{i})}\|_{0}\cdot d\log^{4}n
\end{align*}
Simplify the term by the similar calculation on bounding $|J|$, we
have
\begin{equation}
\left(\sum_{t=1}^{T}\|\mg^{(t)}\ma\Psi^{(t)}\ma^{\top}\mg^{(t)}-\mg^{(t-1)}\ma\Psi^{(t-1)}\ma^{\top}\mg^{(t-1)}\|_{F}\right)^{2}\cdot\epsilon^{-2}n\log^{7}n+\sum_{t=1}^{T}\|g^{(t)}-g^{(t-1)}\|_{0}\cdot d\log^{5}n.\label{eq:query_cost}
\end{equation}
Combining \eqref{eq:est_cost} and \eqref{eq:query_cost} and using
\eqref{eq:size_J_ans} and $\delta=\frac{\epsilon}{6\log n}$, we
have the total cost bounded by 
\begin{align*}
 & nd\epsilon^{-2}\log^{3}n+T((T_{\Psi}+d^{2})\epsilon^{-2}\log^{3}n+T_{\Psi}\log n)\\
+ & \left(\sum_{t=1}^{T}\|\mg^{(t)}\ma\Psi^{(t)}\ma^{\top}\mg^{(t)}-\mg^{(t-1)}\ma\Psi^{(t-1)}\ma^{\top}\mg^{(t-1)}\|_{F}\right)^{2}\cdot n\epsilon^{-4}\log^{7}n\\
+ & \left(\sum_{t=1}^{T}\|g^{(t)}-g^{(t-1)}\|_{0}\right)\cdot(d\epsilon^{-2}\log^{5}n)
\end{align*}
We charge the first term to $\textsc{Initalize}$ and the last term
to $\textsc{Scale}$. This completes the proof for the time complexity.
\end{proof}

\section{Inverse Maintenance with Leverage Score Hints}

\label{sec:inverse_maintenance}

In this section, we show how to approximately maintain the inverse
$(\ma^{\top}\mw\ma)^{-1}$ where $\ma\in\R^{n\times d}$ with $n\gg d$
in amortized $\otilde(d^{\omega-(1/2)}+d^{2})$ per step under changes
to $\mw$ using estimates of the leverage scores of $\mw^{1/2}\ma$.
When $n\approx d$, this is a well-studied problem that has been solved
in \cite{lee2015efficient,adil2019iterative,cohen2019solving,BrandNS19,Sankowski04};
no information of leverage score is required by these algorithms and
it is possible to achieve a runtime of roughly $n^{\omega-(1/2)}+n^{2}$.
However, when $n\gg d$, we need leverage scores to avoid reading
the whole matrix $\ma$ every step and achieve the (possibly sublinear)
time bound of $\otilde(d^{\omega-(1/2)}+d^{2})$.

One approach to obtain this result is to first approximate $\ma^{\top}\mw\ma$
by $\ma^{\top}\mv\ma$ for some sparse diagonal matrix $\mv$ with
$\otilde(d)$ non-zeros using leverage score sampling each step and
then maintain $\ma^{\top}\mv\ma$ via  Woodbury matrix identity. Unfortunately,
this does not immediately work as the diagonal matrices $\mv$ may
then differ widely in every step, inhibiting our ability to perform
smaller low-rank updates per step.

To handle this issue of changing $\mv$, \cite{lee2015efficient}
resampled rows of $\ma$ only when the corresponding leverage score
changed signficantly. With this trick, the diagonal matrix $\mv$
changes slowly enough to maintain. However, this approach only works
if the input of the algorithm is independent on the output. In \cite{lee2015efficient},
this issue was addressed by using the maintained inverse as a pre-conditioner
to solve linear systems in $\ma^{\top}\mw\ma$ to high precision and
then added noise to the output. However, we cannot afford to read
the whole matrix $\ma$ each step and therefore need to modify this
technique.

To avoid reading the entire matrix $\ma$, we instead make further
use of the given given leverage score to sample the matrix $\ma^{\top}\mathbf{U}\ma\approx\ma^{\top}\mw\ma$.
Instead of solving $\ma^{\top}\mw\ma x=b$, we use the maintained
inverse $(\ma^{\top}\mv\ma)^{-1}$ as a pre-conditioner to solve $\ma^{\top}\mathbf{U}\ma x=b$.
By solving $\ma^{\top}\mathbf{U}\ma x=b$ very accurately and by adding
a small amount of noise, we can hide any information about $\mv$.
This is formalized by Algorithm~\ref{alg:inverse_maintence} (and
its second half Algorithm~\ref{alg:inverse_maintence-part2}).

\begin{algorithm2e}[!t]

\caption{Inverse Maintenance with Leverage Score Hints (Theorem~\ref{thm:inverse_main})}

\label{alg:inverse_maintence}

\SetKwProg{Members}{members}{}{}

\SetKwProg{Proc}{procedure}{}{}

\SetKwProg{PrivateProc}{private procedure}{}{}

\Members{}{

\State $\ma\in\R^{n\times d}$, $\epsilon\in(0,1/4)$\tcp*{ Input
matrix and error tolerance}

\State $\gamma\defeq c_{1}\log n$ for some large enough constant
$c_{1}$

\State $\tw^{\alg},\ttau^{\alg}\in\R^{n}$\tcp*{Current approximate
weight and leverage score }

\State $v\in\R^{n}$\tcp*{Current maintained sparse weights with
$\ma^{\top}\mv\ma\approx\ma^{\top}\mw^{\alg}\ma$}

\State $\Psi\in\R^{\otilde(d)\times\otilde(d)}$\tcp*{Current maintained
inverse matrix with $\Psi=(\ma^{\top}\mv\ma)^{-1}$}

}

\vspace{0.1 in}

\Proc{$\textsc{Initialize}$$(\ma\in\R^{n\times d},\tw\in\R_{>0}^{n},\ttau\in\R_{>0}^{n},\epsilon)$
\tcp*[f]{\textbf{Assume} $\ttau\in(1\pm\frac{1}{8\left\lceil \log d\right\rceil })\tau(w)$}}{

\State $\ma\leftarrow\ma$, $\tw^{\alg}\leftarrow\tw$, $\ttau^{\alg}\leftarrow\ttau$,
$\epsilon\leftarrow\epsilon$

\State $v_{i}\leftarrow\begin{cases}
\tw_{i}^{\alg}/\min\{1,\gamma\epsilon^{-2}\cdot\ttau_{i}^{\alg}\} & \text{with probability }\min\{1,\gamma\epsilon^{-2}\cdot\ttau_{i}^{\alg}\}\\
0 & \text{otherwise}
\end{cases}\text{ for }i\in[n]$

\State $\Psi\leftarrow(\ma^{\top}\mv\ma)^{-1}$

}

\vspace{0.1 in}

\Proc{$\textsc{Update}$$(\tw\in\R_{>0}^{n},\ttau\in\R_{>0}^{n})$
\tcp*[f]{\textbf{Assume} $\ttau\in(1\pm\frac{1}{8\left\lceil \log d\right\rceil })\tau(w)$}}{

\State $y_{i}\leftarrow\frac{8}{\epsilon}(\tw_{i}/\tw_{i}^{\alg}-1)$
and $y_{i+n}\leftarrow2(\ttau_{i}/\ttau_{i}^{\alg}-1)$ for $i\in[n]$

\State Let $\pi:[2n]\rightarrow[2n]$ be a sorting permutation such
that $|y_{\pi(i)}|\geq|y_{\pi(i+1)}|$

\State For each integer $\ell$, we define $i_{\ell}$ be the smallest
integer such that $\sum_{j\in[i_{\ell}]}\ttau_{\pi(j)}\geq2^{\ell}$.

\State Let $k$ be the smallest integer such that $|y_{\pi(i_{k})}|\leq1-\frac{k}{2\left\lceil \log d\right\rceil }$

\For{each coordinate $j\in[i_{k}]$}{\label{line:low_rank_update}

\State  Set $i=\pi(j)$ if $\pi(j)\leq n$ and set $i=\pi(j)-n$
otherwise

\State $\tw_{i}^{\alg}\leftarrow\tw_{i}$ and $\ttau_{i}^{\alg}\leftarrow\ttau_{i}$

\State $v_{i}\leftarrow\begin{cases}
\tw_{i}^{\alg}/\min\{1,\gamma\epsilon^{-2}\cdot\ttau_{i}^{\alg}\} & \text{with probability }\min\{1,\gamma\epsilon^{-2}\cdot\ttau_{i}^{\alg}\}\\
0 & \text{otherwise}
\end{cases}$

\State Update $\Psi$ to $(\ma^{\top}\mv\ma)^{-1}$ using Woodbury
matrix identity

\State \Return $\Psi$, $\tw^{\alg}$

}

}

\algstore{inversemaintence}

\end{algorithm2e}

\begin{algorithm2e}[!t]

\caption{Inverse Maintenance with Leverage Score Hints (Continuation
of Algorithm~\ref{alg:inverse_maintence})}

\label{alg:inverse_maintence-part2}

\SetKwProg{Proc}{procedure}{}{}

\SetKwProg{PrivateProc}{private procedure}{}{}

\algrestore{inversemaintence}

\Proc{$\textsc{Solve}$$(b\in\R^{d},\ow\in\R_{>0}^{n},\delta>0)$
\tcp*[f]{Secure and unbiased version}}{

\State  $y\leftarrow\textsc{SecureSolve}(b,\ow,\delta/2^{20})$.
$y^{(0)}\leftarrow y$.

\State  Let $X\in\R_{\geq0}$ be geometrically distributed such that
$\P[X\ge k]=2^{-k}$.

\For{$k=1,\cdots,X$}{

\State $u_{i}\leftarrow\begin{cases}
\ow_{i}/\min\{1,\gamma\delta^{-2}\cdot\ttau_{i}^{\alg}\} & \text{with probability }\min\{1,\gamma\delta^{-2}\cdot\ttau_{i}^{\alg}\}\\
0 & \text{otherwise}
\end{cases}\text{ for }i\in[n]$

\State $y^{(k)}\leftarrow2(y^{(k-1)}-\frac{1}{2}y^{(0)}+\Psi(b-\ma^{\top}\mathbf{U}\ma y))$
\tcp*{$\Psi b\defeq\textsc{SecureSolve}(b,\ow,\frac{1}{8})$}\label{line:inverse_gd}

}

\Return $y^{(X)}$

}

\vspace{0.1 in}

\PrivateProc{$\textsc{SecureSolve}$$(b\in\R^{d},\ow\in\R_{>0}^{n},\delta>0)$
\tcp*[f]{Used by Lemma \ref{lem:totalPmovement}}}{

\State $u_{i}\leftarrow\begin{cases}
\ow_{i}/\min\{1,\gamma\delta^{-2}\cdot\ttau_{i}^{\alg}\} & \text{with probability }\min\{1,\gamma\delta^{-2}\cdot\ttau_{i}^{\alg}\}\\
0 & \text{otherwise}
\end{cases}\text{ for }i\in[n]$

\State $c_{1}\leftarrow b$, $c_{2}\leftarrow\ma^{\top}\mathbf{U}^{1/2}\eta$
where $\eta\sim N(0,\mi_{n})$

\State Initialize $y_{1}$ and $y_{2}$ as $0$ vectors of length
$d$

\For{$k=1,\cdots,c_{1}\log(n/\delta)$ for some large enough constant
$c_{1}$}{

\State $y_{j}\leftarrow y_{j}+\frac{1}{10}\Psi(c_{j}-\ma^{\top}\mathbf{U}\ma y_{j})$
for $j\in\{1,2\}$\label{line:inverse_gd-2}

}

\State \textbf{return} $y_{1}+\alpha y_{2}$ where $\alpha=c_{3}\sqrt{\frac{\delta}{d}}\|y_{1}\|_{\ma^{\top}\mathbf{U}\ma}$
for some constant $c_{3}$.

}

\vspace{0.1 in}

\PrivateProc{$\textsc{IdealSolve}$$(b\in\R^{d},\ow\in\R_{>0}^{n},\delta>0)$
\tcp*[f]{Used for proof only}}{

\State $u_{i}\leftarrow\begin{cases}
\ow_{i}/\min\{1,\gamma\delta^{-2}\cdot\ttau_{i}^{\alg}\} & \text{with probability }\min\{1,\gamma\delta^{-2}\cdot\ttau_{i}^{\alg}\}\\
0 & \text{otherwise}
\end{cases}\text{ for }i\in[n]$

\State $c_{1}\leftarrow b$, $c_{2}\leftarrow\ma^{\top}\mathbf{U}^{1/2}\eta$
where $\eta\sim N(0,\mi_{n})$

\Return  $(\ma^{\top}\mathbf{\mathbf{U}}\ma)^{-1}(c_{1}+\alpha c_{2})$
where $\alpha=c_{3}\sqrt{\frac{\delta}{d}}\|y_{1}\|_{\ma^{\top}\mathbf{U}\ma}$
for some constant $c_{3}$.

}

\end{algorithm2e}

\inversemain*

Before we prove Theorem \ref{thm:inverse_main} we prove two helper
lemmas. First, note that Theorem \ref{thm:inverse_main} asks for
the solver to return a solution $y=\Psi b$ where $\Psi^{-1}\approx_{\delta}\ma^{\top}\omw\ma$
and $\E[\Psi b]=(\ma^{\top}\omw\ma)^{-1}b$. The following lemma shows
how we can satisfy this condition on the expectation while having
only a small additive error to our solution. The later Lemma~\ref{lem:perturb_system}
then shows that this small additive error can be interpreted as a
small spectral error of the inverse used to solve the system.
\begin{lem}
\label{lem:preconditioner} Let $\mm\in\R^{d\times d}$ be a positive
definite matrix and let $\mm^{(1)},\mm^{(2)},...$ be a sequence of
independent random $d\times d$ matrices, each with $\E[\mm\k]=\mm$
that satisfy with high probability $(1-\epsilon)\mm\prec\mm\k\prec(1+\epsilon)\mm$
for some $\frac{1}{8}\geq\epsilon>0$. Further let $\mn^{(1)},\mn^{(2)},...$
be a sequence of independent identical distributed random $d\times d$
matrices, each satisfying with high probability $\frac{7}{8}\mn\k\prec\mm^{-1}\prec\frac{8}{7}\mn\k$.
Let $X$ be a geometric distributed random variable with $\P[X\ge k]=2^{-k}$.

For any vector $b\in\R^{d}$ and any initial vector $y^{(0)}$, define
$y^{(i+1)}=2(y^{(i)}-\frac{1}{2}y^{(0)}+\mn^{(i+1)}(b-\mm^{(i+1)}y^{(i)}))$
for all $i\geq0$. Then $\E[y^{(X)}]=\mm^{-1}b$, and with high probability
we have 
\begin{equation}
\|y^{(X)}-y^{(0)}\|_{\mm}\leq30\|b-\mm y^{(0)}\|_{\mm^{-1}}+30\epsilon\|b\|_{\mm^{-1}}.\label{eq:x_err}
\end{equation}
\end{lem}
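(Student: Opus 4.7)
The proof has two independent pieces: the unbiasedness $\E[y^{(X)}]=\mm^{-1}b$, and the high-probability norm bound. The plan for unbiasedness is to identify the \emph{ideal fixed point} $y^{*}\defeq 2\mm^{-1}b-y^{(0)}$, which is precisely what one application of the update produces if $\mn^{(k)}=\mm^{-1}$ and $\mm^{(k)}=\mm$ held exactly. Under the (implicit) assumption that $\E[\mn^{(k)}]=\mm^{-1}$---consistent with the intended application, where the $\mn^{(k)}$ arise from independent calls to \textsc{SecureSolve} which returns an unbiased estimate of $\mm^{-1}b$---combined with $\E[\mm^{(k)}]=\mm$ and the mutual independence of $\mn^{(k+1)},\mm^{(k+1)},y^{(k)}$, an induction shows that $\E[y^{(k)}]=y^{*}$ for every $k\geq 1$. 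Since $\P[X=0]=\P[X\geq 1]=\tfrac{1}{2}$, the Russian-roulette identity $\E[y^{(X)}]=\tfrac{1}{2}y^{(0)}+\tfrac{1}{2}y^{*}=\mm^{-1}b$ drops out immediately; the factor of $2$ in the update is calibrated precisely so that the geometric law on $X$ averages $y^{(0)}$ and the post-stabilization iterate with equal weight.

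For the error bound I would work with $v^{(k)}\defeq y^{(k)}-y^{*}$ and derive the affine recurrence
\[
v^{(k+1)}=2\bigl(\mi-\mn^{(k+1)}\mm^{(k+1)}\bigr)\,v^{(k)}+\xi^{(k+1)},
\]
with $v^{(0)}=-2\mm^{-1}r^{(0)}$ for the initial residual $r^{(0)}\defeq b-\mm y^{(0)}$, and $\xi^{(k+1)}$ capturing the two mismatches $(\mm^{-1}-\mn^{(k+1)})r^{(0)}$ and $\mn^{(k+1)}(\mm^{(k+1)}-\mm)y^{*}$. Conjugating by $\mm^{1/2}$, the hypothesis $\tfrac{7}{8}\mn^{(k+1)}\prec\mm^{-1}\prec\tfrac{8}{7}\mn^{(k+1)}$ forces $\mm^{1/2}\mn^{(k+1)}\mm^{1/2}$ to be symmetric with spectrum in $[\tfrac{7}{8},\tfrac{8}{7}]$, so $\|\mi-\mn^{(k+1)}\mm\|_{\mm}\leq\tfrac{1}{7}$, and an extra $O(\epsilon)$ from $\mm^{(k+1)}\approx\mm$ gives $\|2(\mi-\mn^{(k+1)}\mm^{(k+1)})\|_{\mm}\leq\tfrac{2}{7}+O(\epsilon)<1$ once $\epsilon\leq\tfrac{1}{8}$. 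The same spectral bookkeeping bounds $\|\xi^{(k+1)}\|_{\mm}=O(\|r^{(0)}\|_{\mm^{-1}}+\epsilon\|b\|_{\mm^{-1}})$, using $\|y^{*}\|_{\mm}\leq 3\|b\|_{\mm^{-1}}+\|r^{(0)}\|_{\mm^{-1}}$. Unrolling the recurrence with the geometric contraction yields the uniform-in-$k$ estimate $\|v^{(k)}\|_{\mm}\leq\|v^{(0)}\|_{\mm}+O(\max_{j}\|\xi^{(j)}\|_{\mm})$, and then the triangle inequality $\|y^{(X)}-y^{(0)}\|_{\mm}\leq\|v^{(X)}\|_{\mm}+\|y^{*}-y^{(0)}\|_{\mm}$ together with $\|v^{(0)}\|_{\mm}=\|y^{*}-y^{(0)}\|_{\mm}=2\|r^{(0)}\|_{\mm^{-1}}$ delivers the claimed bound with plenty of slack in the constant $30$.

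The contraction estimate only holds on the ``good event'' that the spectral bounds on each $\mn^{(k)},\mm^{(k)}$ are satisfied. Since $X$ is geometric with $\P[X\geq k]=2^{-k}$, we have $X=O(\log n)$ w.h.p.\ in $n$, so a single union bound over $k\leq X$ makes every relevant spectral hypothesis valid simultaneously. The main obstacle I expect is the non-symmetry of the products $\mn^{(k+1)}\mm^{(k+1)}$: one cannot directly read off $\|\mi-\mn\mm^{(k+1)}\|_{\mm}$ from eigenvalue statements, so every estimate has to be routed through the similarity $\mm^{1/2}(\cdot)\mm^{-1/2}$ in order to reduce each step to a spectral-norm statement about a symmetric matrix with controlled spectrum.
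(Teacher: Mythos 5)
Your treatment of the norm bound~\eqref{eq:x_err} is essentially the paper's: you run the affine recurrence relative to the base point $y^{*}=2\mm^{-1}b-y^{(0)}$ whereas the paper runs it relative to $y^{(0)}$, but both yield a contraction with $\|2(\mi-\mn^{(i)}\mm^{(i)})\|_{\mm}<1$ via the $\mm^{1/2}(\cdot)\mm^{-1/2}$ conjugation, plus a uniformly bounded forcing term of order $\|b-\mm y^{(0)}\|_{\mm^{-1}}+\epsilon\|b\|_{\mm^{-1}}$, and either route lands comfortably inside the constant $30$ after a union bound over the $O(\log n)$ iterations one sees with high probability.

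The gap is in the unbiasedness claim. Your induction $\E[y^{(k)}]=y^{*}$ for $k\geq 1$ hinges on the additional hypothesis $\E[\mn^{(k)}]=\mm^{-1}$, which appears nowhere in the lemma; the hypotheses give only the spectral sandwich $\tfrac{7}{8}\mn^{(k)}\prec\mm^{-1}\prec\tfrac{8}{7}\mn^{(k)}$ w.h.p.\ together with the $\mn^{(k)}$ being i.i.d.\ and independent of the $\mm^{(k)}$. Without $\E[\mn]=\mm^{-1}$ the fixed point of the averaged map $y\mapsto 2\bigl(y-\tfrac{1}{2}y^{(0)}+\E[\mn](b-\mm y)\bigr)$ is not $2\mm^{-1}b-y^{(0)}$, so neither the base case nor the inductive step of your claim goes through. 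And the ``implicit'' assumption you lean on does \emph{not} hold in the intended application: in Algorithm~\ref{alg:inverse_maintence-part2} each $\mn^{(k)}$ is effectively $(\ma^{\top}\mathbf{U}^{(k)}\ma)^{-1}$ for an independent leverage-score sample $\mathbf{U}^{(k)}$ with $\E[\ma^{\top}\mathbf{U}^{(k)}\ma]=\mm$, and by operator convexity of inversion $\E[(\ma^{\top}\mathbf{U}\ma)^{-1}]\succeq\mm^{-1}$ with strict inequality generically. Indeed the entire point of the Russian-roulette construction is to manufacture an unbiased estimator of $\mm^{-1}b$ out of a \emph{biased} preconditioner. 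The paper's proof first establishes that $\E[y^{(X)}]$ exists (via the very uniform-in-$k$ bound you also derive) and then applies the geometric-telescope identity $\E[y^{(X)}]=\tfrac{1}{2}y^{(0)}+\tfrac{1}{2}\E[y^{(X-1)}+\cdots\mid X\geq 1]$, together with the memorylessness of $X$ and the independence of $\mn^{(i)},\mm^{(i)},y^{(i-1)}$, to arrive at the fixed-point equation
\[
\E[y^{(X)}]=\E[y^{(X)}]+\E[\mn]\bigl(b-\mm\,\E[y^{(X)}]\bigr),
\]
from which $b=\mm\,\E[y^{(X)}]$ follows because $\E[\mn]$ is positive definite --- not because it equals $\mm^{-1}$. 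You would need to replace your per-iterate computation of $\E[y^{(k)}]$ with this fixed-point argument (or otherwise avoid invoking $\E[\mn]=\mm^{-1}$) for the unbiasedness half to be a proof of the lemma as stated.
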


\begin{proof}
We first show that $\E[y^{(X)}]$ exists. To show that, we first bound
$\|y^{(i)}-y^{(0)}\|_{\mm}$ for all $i$. Note that
\begin{align*}
y^{(i)}-y^{(0)} & =2y^{(i-1)}-2y^{(0)}+2\mn^{(i)}(b-\mm^{(i)}y^{(i-1)})\\
 & =2(\mi-\mn^{(i)}\mm^{(i)})(y^{(i-1)}-y^{(0)})+2\mn^{(i)}(b-\mm^{(i)}y^{(0)}).
\end{align*}
Hence, we have
\begin{equation}
\|y^{(i)}-y^{(0)}\|_{\mm}\leq2\|\mi-\mn^{(i)}\mm^{(i)}\|_{\mm}\|y^{(i-1)}-y^{(0)}\|_{\mm}+2\|\mn^{(i)}(b-\mm^{(i)}y^{(0)})\|_{\mm}.\label{eq:y_norm_bound}
\end{equation}
Note that 
\[
(\frac{7}{8})^{2}(\mm^{(i)})^{-1}\preceq\frac{7}{8}\mm^{-1}\preceq\mn^{(i)}\preceq\frac{8}{7}\mm^{-1}\preceq(\frac{8}{7})^{2}(\mm^{(i)})^{-1}.
\]
Hence, we have
\[
-0.31\cdot\mi\preceq(1-(\frac{8}{7})^{2})\mi\preceq\mi-(\mm^{(i)})^{1/2}\mn^{(i)}(\mm^{(i)})^{1/2}\preceq(1-(\frac{7}{8})^{2})\mi\preceq0.24\cdot\mi.
\]
Using $\frac{7}{8}\mm\preceq\mm^{(i)}\preceq\frac{8}{7}\mm$, we have
\begin{align*}
\|\mi-\mn^{(i)}\mm^{(i)}\|_{\mm} & =\|\mi-\mm^{1/2}\mn^{(i)}\mm^{(i)}\mm^{-1/2}\|_{2}\\
 & =\|\mi-\mm^{1/2}(\mm^{(i)})^{-1/2}(\mm^{(i)})^{1/2}\mn^{(i)}(\mm^{(i)})^{-1/2}(\mm^{(i)})^{1/2}\mm^{-1/2}\|_{2}\\
 & \leq\|\mm^{1/2}(\mm^{(i)})^{-1/2}\|_{2}\|\mi-(\mm^{(i)})^{1/2}\mn^{(i)}(\mm^{(i)})^{-1/2}\|_{2}\|(\mm^{(i)})^{1/2}\mm^{-1/2}\|_{2}\\
 & \leq\sqrt{\frac{7}{8}}\cdot0.31\cdot\sqrt{\frac{8}{7}}\leq\frac{2}{5}.
\end{align*}
Substituting into (\ref{eq:y_norm_bound}), we have
\begin{align}
\|y^{(i)}-y^{(0)}\|_{\mm} & \leq\frac{4}{5}\|y^{(i-1)}-y^{(0)}\|_{\mm}+2\|\mn^{(i)}(b-\mm^{(i)}y^{(0)})\|_{\mm}\nonumber \\
 & \leq\frac{4}{5}\|y^{(i-1)}-y^{(0)}\|_{\mm}+3\|b-\mm^{(i)}y^{(0)}\|_{\mm^{-1}}\label{eq:y_diff}
\end{align}
where we used that $\frac{7}{8}\mm^{-1}\preceq\mn^{(i)}\preceq\frac{8}{7}\mm^{-1}$
at the end. Using $(1-\epsilon)\mm\prec\mm\k\prec(1+\epsilon)\mm$,
we have
\begin{align*}
\|b-\mm^{(i)}y^{(0)}\|_{\mm^{-1}} & =\|\mm^{(i)}\mm^{-1}\mm(\mm^{(i)})^{-1}b-\mm^{(i)}\mm^{-1}\mm y^{(0)}\|_{\mm^{-1}}\\
 & \leq\|\mm^{(i)}\mm^{-1}\|_{\mm^{-1}}\|\mm(\mm^{(i)})^{-1}b-\mm y^{(0)}\|_{\mm^{-1}}\\
 & \leq(1+\epsilon)\left(\|b-\mm y^{(0)}\|_{\mm^{-1}}+\|\mm(\mm^{(i)})^{-1}b-b\|_{\mm^{-1}}\right)\\
 & \leq2\|b-\mm y^{(0)}\|_{\mm^{-1}}+2\epsilon\|b\|_{\mm^{-1}}.
\end{align*}
Putting this into (\ref{eq:y_diff}), we have
\[
\|y^{(i)}-y^{(0)}\|_{\mm}\leq\frac{4}{5}\|y^{(i-1)}-y^{(0)}\|_{\mm}+6\|b-\mm y^{(0)}\|_{\mm^{-1}}+6\epsilon\|b\|_{\mm^{-1}}.
\]
Solving it, we have
\[
\|y^{(i)}-y^{(0)}\|_{\mm}\leq30\|b-\mm y^{(0)}\|_{\mm^{-1}}+30\epsilon\|b\|_{\mm^{-1}}
\]
for all $i$. Since $\|y^{(i)}\|_{\mm}$ is bounded uniformly, $\E y^{(X)}$
exists. This also gives the bound for (\ref{eq:x_err}).

Now, we compute $\E y^{(X)}$ and note that 
\begin{align*}
\E[y^{(X)}] & =\frac{1}{2}y^{(0)}+\frac{1}{2}\E[2(y^{(X-1)}-\frac{1}{2}y^{(0)}+\mn^{(X-1)}(b-\mm^{(X-1)}y^{(X-1)}))|X\geq1]\\
 & =\frac{1}{2}y^{(0)}+\E[y^{(X-1)}-\frac{1}{2}y^{(0)}+\mn^{(X-1)}(b-\mm^{(X-1)}y^{(X-1)}))|X\geq1]\\
 & =\frac{1}{2}y^{(0)}+\E[y^{(X)}-\frac{1}{2}y^{(0)}+\mn^{(X)}(b-\mm^{(X)}y^{(X)})]\\
 & =\E[y^{(X)}]+\E_{i}\left[\E_{\mm^{(i)},\mn^{(i)},y^{(i)}}\left[\mn^{(i)}(b-\mm^{(i)}y^{(i)})|X=i\right]\right]\\
 & =\E[y^{(X)}]+\E_{i}\left[\E\left[\mn^{(i)}|X=i\right]\left(b-\E\left[\mm^{(i)}|X=i\right]\E\left[y^{(i)}|X=i\right]\right)\right]\\
 & =\E[y^{(X)}]+\E_{i}\left[\E[\mn]\left(b-\mm\E\left[y^{(i)}|X=i\right]\right)\right]\\
 & =\E[y^{(X)}]+\E[\mn](b-\mm\E[y^{(X)}])
\end{align*}
where we used that $X-1$ subjects to $X\geq1$ has the same distribution
as $X$ in the third equality, $\mn^{(i)},\mm^{(i)}$ and $y^{(i)}$
are independent on the fifth equality, we used that $\E\mm^{(i)}=\mm$
and that $\mn^{(i)}$ are identically distributed on the sixth equality. 

As $\E[\mn^{(X)}]$ is positive definite, this implies $\E y^{(X)}=\mm^{-1}b$.
\end{proof}
Next we provide a general stability lemma about solving linear systems
and then use this to prove Theorem~\ref{thm:inverse_main} and Theorem~\ref{lem:totalPmovement}.
This lemma is mainly for convenience purpose as the vector maintenance
and leverage score maintenance assume we solve the linear system $\mh y=b$
by some $y=\Psi b$ where $\Psi\approx\mh^{-1}$. The following lemma
shows that this definition is same as $\|y-\mh^{-1}b\|_{\mh}$ is
small compared to $\|b\|_{\mh^{-1}}$.
\begin{lem}
\label{lem:perturb_system} Let $y=\mh^{-1}(b+v)\in\R^{d}$ such that
$\|v\|_{\mh^{-1}}\leq\epsilon\|b\|_{\mh^{-1}}$ with $0\leq\epsilon\leq\frac{1}{80}$
for symmetric positive definite $\mh\in\R^{d\times d}$ and vector
$b\in\R^{d}$. There is a symmetric matrix $\mDelta\in\R^{d\times d}$
such that $y=(\mh+\mDelta)^{-1}b\in\R^{d}$ and $\mh+\mDelta\approx_{20\epsilon}\mh$.
\end{lem}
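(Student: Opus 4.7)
}
My plan is to write down an explicit symmetric correction $\mDelta$ supported on the two-dimensional subspace spanned by $y$ and $v$, and then bound its size in the $\mh$-induced geometry. The defining requirement $y = (\mh+\mDelta)^{-1} b$ is equivalent to $\mDelta y = b - \mh y = -v$, so the construction of a symmetric matrix sending $y$ to $-v$ is the key step. Passing to the rescaled vectors $\hat y \defeq \mh^{1/2} y$ and $\hat v \defeq \mh^{-1/2}(-v)$, I will set
\[
\mM \defeq \frac{\hat v \hat y^{\top}+\hat y \hat v^{\top}}{\|\hat y\|_{2}^{2}} - \frac{\langle \hat v,\hat y\rangle}{\|\hat y\|_{2}^{4}}\,\hat y \hat y^{\top}, \qquad \mDelta \defeq \mh^{1/2}\mM\mh^{1/2}.
\]
A direct computation gives $\mM\hat y = \hat v$, hence $\mDelta y = \mh^{1/2}\mM \hat y = \mh^{1/2}\hat v = -v$, and by construction $\mDelta$ is symmetric.

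Next I will bound $\|\mM\|_{2}$. Writing $\mM$ in an orthonormal basis $\{u_{1},u_{2}\}$ of $\mathrm{span}(\hat y,\hat v)$ with $u_{1}=\hat y/\|\hat y\|_{2}$, and decomposing $\hat v = c_{1}u_{1}+c_{2}u_{2}$, a short calculation shows that $\mM$ is represented by the $2\times 2$ matrix $\begin{pmatrix}c_{1}/\|\hat y\|_{2} & c_{2}/\|\hat y\|_{2} \\ c_{2}/\|\hat y\|_{2} & 0 \end{pmatrix}$, whose eigenvalues are bounded in absolute value by $(3/2)\|\hat v\|_{2}/\|\hat y\|_{2}$. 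Translating back, this yields
\[
\|\mM\|_{2} \leq \frac{3}{2}\cdot\frac{\|v\|_{\mh^{-1}}}{\|y\|_{\mh}}.
\]
Since $y=\mh^{-1}(b+v)$, we have $\|y\|_{\mh} = \|b+v\|_{\mh^{-1}} \geq (1-\epsilon)\|b\|_{\mh^{-1}}$ by the reverse triangle inequality and the assumption $\|v\|_{\mh^{-1}}\leq\epsilon\|b\|_{\mh^{-1}}$. Combining gives $\|\mM\|_{2}\leq \frac{3\epsilon}{2(1-\epsilon)}$, which for $\epsilon\leq 1/80$ is at most $2\epsilon$.

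Finally I will convert the spectral bound on $\mM$ into the claimed multiplicative approximation. Since $\mh+\mDelta = \mh^{1/2}(\mi+\mM)\mh^{1/2}$ and $\|\mM\|_{2}\leq 2\epsilon \leq 1/2$, we have $e^{-4\epsilon}\mi \preceq \mi+\mM \preceq e^{4\epsilon}\mi$ (using $1-x\geq e^{-2x}$ and $1+x\leq e^{x}$ for $x\in[0,1/2]$), so $\mh+\mDelta \approx_{4\epsilon}\mh$, which is well within the $\approx_{20\epsilon}$ slack demanded by the statement. The only mild obstacle is bookkeeping in the $2\times 2$ eigenvalue computation; everything else is mechanical.
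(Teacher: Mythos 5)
Your proof is correct and genuinely different in construction from the paper's. Both arguments reduce the problem to finding a symmetric $\mDelta$ with $\mDelta y=-v$ and then bound the perturbation in the $\mh$-geometry, but the explicit rank-$2$ matrices differ. The paper writes, after normalizing $\overline v=\mh^{-1/2}v/\|v\|_{\mh^{-1}}$ and $\overline z=\mh^{-1/2}(b+v)/\|b+v\|_{\mh^{-1}}$ and picking a sign $s=\sign(\overline v^\top\overline z)$,
\[
\mDelta = s\frac{\|v\|_{\mh^{-1}}}{\|b+v\|_{\mh^{-1}}}\mh^{1/2}\Big(\overline z\,\overline z^{\top}-\frac{(\overline z+s\overline v)(\overline z+s\overline v)^{\top}}{(\overline z+s\overline v)^{\top}\overline z}\Big)\mh^{1/2},
\]
and then controls $\mh^{-1/2}\mDelta\mh^{-1/2}$ term by term, using the sign choice to keep the denominator $(\overline z+s\overline v)^{\top}\overline z\geq 1$, arriving at $\pm 10\epsilon\cdot\mi$ and hence $\approx_{20\epsilon}$. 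Your construction $\mM=\frac{\hat v\hat y^{\top}+\hat y\hat v^{\top}}{\|\hat y\|_2^2}-\frac{\langle\hat v,\hat y\rangle}{\|\hat y\|_2^4}\hat y\hat y^{\top}$ is the ``minimal'' symmetric map sending $\hat y\mapsto\hat v$ and vanishing on $\hat y^{\perp}\cap\mathrm{span}(\hat y,\hat v)^{\perp}$; it avoids the sign trick entirely, reduces to a transparent $2\times 2$ eigenvalue bound, and yields the tighter conclusion $\approx_{4\epsilon}$ (in fact $\|\mM\|_2\leq \tfrac{2}{\sqrt 3}\cdot\tfrac{\|\hat v\|_2}{\|\hat y\|_2}$, even a bit better than your $\tfrac{3}{2}$). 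The only polish I'd suggest: handle the degenerate case $\|v\|_{\mh^{-1}}=0$ (take $\mDelta=0$) explicitly, since your formula divides by $\|\hat y\|_2^2$, which is positive once you rule out the trivial case but is worth saying — the paper does exactly this in its opening line.
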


\begin{proof}
Note that the claim is trivial if $\norm v_{\mh^{-1}}=0$ and thus
we assume this is not the case. Consequently, by the assumption on
$v$ we have $\norm{b+v}_{\mh^{-1}}>0$ and can define the vectors
$\overline{v}=\frac{\mh^{-1/2}v}{\|v\|_{\mh^{-1}}}\in\R^{d}$ and
$\overline{z}=\frac{\mh^{-1/2}(b+v)}{\|b+v\|_{\mh^{-1}}}\in\R^{d}$
where $s=\sign(b^{\top}\mh^{-1}v)$. We define 
\[
\mDelta=s\frac{\|v\|_{\mh^{-1}}}{\|b+v\|_{\mh^{-1}}}\mh^{1/2}\Big(\overline{z}\overline{z}^{\top}-\frac{(\overline{z}+s\overline{v})(\overline{z}+s\overline{v})^{\top}}{(\overline{z}+s\overline{v})^{\top}\overline{z}}\Big)\mh^{1/2}
\]
Note that
\begin{align*}
\mDelta y= & \mDelta\mh^{-1}(b+v)\\
= & s\frac{\|v\|_{\mh^{-1}}}{\|b+v\|_{\mh^{-1}}}\mh^{1/2}\Big(\overline{z}\overline{z}^{\top}-\frac{(\overline{z}+s\overline{v})(\overline{z}+s\overline{v})^{\top}}{(\overline{z}+s\overline{v})^{\top}\overline{z}}\Big)\mh^{-1/2}(b+v)\\
= & s\|v\|_{\mh^{-1}}\mh^{1/2}\Big(\overline{z}\overline{z}^{\top}-\frac{(\overline{z}+s\overline{v})(\overline{z}+s\overline{v})^{\top}}{(\overline{z}+s\overline{v})^{\top}\overline{z}}\Big)\overline{z}\\
= & s\|v\|_{\mh^{-1}}\mh^{1/2}(\overline{z}-(\overline{z}+s\overline{v}))=-v.
\end{align*}
Hence, we have $(\mh+\Delta)y=b+v-v=b$. Hence, we have $y=(\mh+\mDelta)^{-1}b$
if $\mh+\mDelta\succ0$. In fact, we now prove that $\mh+\mDelta\approx_{20\epsilon}\mh$.

First, we note that $\|v\|_{\mh^{-1}}\leq\frac{1}{2}\|b\|_{\mh^{-1}}$
and hence $\frac{\|v\|_{\mh^{-1}}}{\|b+v\|_{\mh^{-1}}}\leq\frac{2\|v\|_{\mh^{-1}}}{\|b\|_{\mh^{-1}}}\leq2\epsilon$.
Hence, we have
\begin{align*}
\mh^{-1/2}\mDelta\mh^{-1/2} & \preceq2\epsilon\Big(\overline{z}\overline{z}^{\top}+\frac{(\overline{z}+s\overline{v})(\overline{z}+s\overline{v})^{\top}}{(\overline{z}+s\overline{v})^{\top}\overline{z}}\Big).
\end{align*}
The term $\overline{z}\overline{z}^{\top}$ is simply bounded by $\mi$.
We note that $(\overline{z}+s\overline{v})(\overline{z}+s\overline{v})^{\top}\preceq\|\overline{z}+s\overline{v}\|^{2}\cdot\mi\preceq4\cdot\mi$.
Finally, we note that $(\overline{z}+s\overline{v})^{\top}\overline{z}\geq1$
because $s=\sign(\ov^{\top}\overline{z})$. Hence, we have
\[
\mh^{-1/2}\mDelta\mh^{-1/2}\preceq2\epsilon\cdot5\cdot\mi=10\epsilon\cdot\mi.
\]
Similarly, we have $\mh^{-1/2}\Delta\mh^{-1/2}\succeq-10\epsilon\cdot\mi$.
Hence, we have $\mh+\mDelta\approx_{20\epsilon}\mh$.
\end{proof}
We now have all tools available to prove Theorem \ref{thm:inverse_main}.
\begin{proof}[Proof of Theorem \ref{thm:inverse_main}]
 Since the input $\tw^{(i)}\approx w^{(i)}$ of the algorithm is
independent to the output of the algorithm in the previous iterations,
every constraint $i\in[n]$ has $\Theta(\tau_{i}(w)\epsilon^{-2}\gamma)$
probability of being picked into $\ma^{\top}\mv\ma$. Therefore, the
expected number of rank changes in Line~\ref{line:low_rank_update}
of $\textsc{Update}(\tw)$ is given by
\[
\Theta(\gamma\epsilon^{-2})\sum_{j\in[i_{\ell}]}\tau_{j}(w)=\Theta(2^{\ell}\epsilon^{-2}\log(n)).
\]
For simplicity, we call such step, a \emph{rank $2^{\ell}$ update}.
Hence, the total complexity of $\textsc{Update}(w)$ over $K$ iterations
is
\[
O(Kn)+\sum_{\ell=0}^{\left\lceil \log n\right\rceil }(\text{\# of rank }2^{\ell}\text{ updates)}\cdot(\text{cost of rank }2^{\ell}\text{ update)}.
\]
Note that a rank $2^{\ell}$ update can be implemented in $O(d^{2}(2^{\ell}\cdot\epsilon^{-2}\cdot\log(n))^{\omega-2})$
time. Now, we bound the number of such updates over $T$ iterations.

The algorithm defines an error vector $y_{i}\leftarrow\frac{8}{\epsilon}(\tw_{i}/\tw_{i}^{\alg}-1)$
and $y_{i+n}\leftarrow2(\ttau_{i}/\ttau_{i}^{\alg}-1)$. We now also
define the true error vector $y_{i}^{*}\leftarrow\frac{8}{\epsilon}(w_{i}/w_{i}^{\alg}-1)$
and $y_{i+n}^{*}\leftarrow2(\tau_{i}/\tau_{i}^{\alg}-1)$ for $i\in[n]$
where $\tau_{i}^{\alg}$ is updated to $\tau_{i}$ whenever $\ttau_{i}^{\alg}$
is updated to $\ttau_{i}$, and likewise $w_{i}^{\alg}$ is updated
to $w_{i}$ whenever $\tw_{i}^{\alg}$ is updated to $\tw_{i}$. Let
$y^{*(0)}\in\R^{d}$ be the true error vector defined by the algorithm
directly \emph{after} a rank $2^{\ell}$ update. Assume there is another
rank $2^{\ell}$ update after $b$ iterations. Let $y^{*(1)},y^{*(2)},\cdots,y^{*(j)},\cdots,y^{*(b)}\in\R^{d}$
be the true error vector \emph{after} the $j$-th iteration for $j=1,...,b$.
We order the entries of each $y^{*(j)}$ and $y^{(j)}$ in the same
way such that the entries of $|y^{(b)}|$ (the error vector of the
algorithm) are decreasing.  Let $j_{i}$ be the last iteration $w_{i}^{\alg}$
(if $i\leq n$) or $\tau_{i-n}^{\alg}$ (if $i>n$) is updated after
this first update. We set $j_{i}=0$ if it is never updated after
the first update. If it is never updated, then we have $|y_{i}^{*(j_{i})}|=|y_{i}^{*(0)}|\le|y_{i}^{(0)}|+\frac{1}{8\left\lceil \log d\right\rceil }\leq1-\frac{\ell}{2\left\lceil \log d\right\rceil }+\frac{1}{8\left\lceil \log d\right\rceil }$
where $1-\frac{\ell}{2\left\lceil \log d\right\rceil }$ comes from
the threshold for $2^{\ell}$ update and $\frac{1}{8\left\lceil \log d\right\rceil }$
comes from the error of $\ttau$ and $\tw$ compared to $\tau$ and
$w$. If it is updated during the interval, we have $|y_{i}^{*(j_{i})}|=\frac{1}{8\left\lceil \log d\right\rceil }$
(due to the error of $\ttau$). In both cases, we have that 
\[
|y_{i}^{*(j_{i})}|\leq1-\frac{\ell}{2\left\lceil \log d\right\rceil }+\frac{1}{8\left\lceil \log d\right\rceil }\text{ for all }i\in[n].
\]
On the other hand, that we are making a $2^{\ell}$ update at step
$b$ implies it does not pass the threshold for a $2^{\ell-1}$ update.
Hence, we have
\begin{align*}
|y_{i}^{*(b-1)}| & \geq1-\frac{\ell-1}{2\left\lceil \log d\right\rceil }-\frac{1}{8\left\lceil \log d\right\rceil }\text{ for all }i\leq i_{\ell-1},
\end{align*}
where $i_{\ell}$ is the one defined in the $b$-th iteration. In
particular, we have that $|y_{i}^{*(b-1)}-y_{i}^{*(j_{i})}|\geq\frac{1}{4\left\lceil \log d\right\rceil }$
for all $i\leq i_{\ell-1}$. Let $\tau_{i}^{(j)}$ be the shifted
leverage score $\tau(w)$ before the $j$-th iteration. Then, by the
definition of $i_{\ell-1}$, we have
\begin{equation}
\Omega\Big(\frac{1}{\log^{2}d}\Big)\cdot2^{\ell-1}\leq\sum_{i\in[n]}\tau_{i}^{(b-1)}(y_{i}^{*(b-1)}-y_{i}^{*(j_{i})})^{2}.\label{eq:lower_change_maintain}
\end{equation}
Next, we note that the denominator of $y_{i}^{*(j)}$ never changed
between $j_{i}$ and $b$. Hence, $y_{i}^{*(j+1)}-y_{i}^{*(j)}$ is
exactly equals to the relative change of $w$ or $\tau$. Hence, we
have
\begin{align}
\sum_{i\in[n]}\tau_{i}^{(b-1)}(y_{i}^{*(b-1)}-y_{i}^{*(j_{i})})^{2} & \leq O(b)\sum_{i\in[n]}\tau_{i}^{(b-1)}\sum_{j=j_{i}}^{b-1}(y_{i}^{*(j+1)}-y_{i}^{*(j)})^{2}\nonumber \\
 & \leq O(b)\sum_{i\in[n]}\sum_{j=j_{i}}^{b-1}\tau_{i}^{(j)}(y_{i}^{*(j+1)}-y_{i}^{*(j)})^{2}\nonumber \\
 & =O(b^{2})\label{eq:upper_change_maintain}
\end{align}
where we used that $\ttau_{i}^{(j)}$ (and thus also $\tau_{i}^{(j)}$)
does not change more than a constant from iteration $j_{i}$ to iteration
$b-1$ in the second inequality and the assumption (\ref{eq:input_ass})
at the end. From (\ref{eq:lower_change_maintain}) and (\ref{eq:upper_change_maintain}),
we see that there is at least $b\geq\Omega(\frac{2^{\ell/2}}{\log d})$
many iterations between two $2^{\ell}$ rank updates. Hence, the total
complexity of $\textsc{Update}(w)$ over $K$ iterations is
\[
O(Kn)+\sum_{\ell=0}^{\left\lceil \log d\right\rceil }(\frac{K\log d}{2^{\ell/2}})\cdot d^{2}(\epsilon^{-2}2^{\ell}\log(n))^{\omega-2}=O(Kn+\epsilon^{-2}K\cdot(d^{\omega-(1/2)}+d^{2})\cdot\log^{3/2}(n)).
\]

The cost of $\textsc{Solve}$ is clear from the description.

For the correctness of $\textsc{Update}$, we note that the internal
$\tw^{\alg}\in\R^{n}$ and the input $\tw\in\R^{n}$ satisfies $\left|\frac{\tw^{\alg}-\tw}{\tw}\right|\leq\frac{\epsilon}{4}$
and that $\ma^{\top}\mv\ma\approx_{\epsilon/4}\ma^{\top}\tmw^{\alg}\ma$
with probability $1-1/\poly(nK)$ due to the matrix Chernoff bound
(see e.g. the proof of Lemma 4 in \cite{cohen2015uniform}). Here,
we used the fact that the input $\tw^{\new}\in\R^{n}$ is independent
of the algorithm output and hence each sample of $v_{i}$ are independent.
Hence, we have $\ma^{\top}\mv\ma\approx_{\epsilon}\ma^{\top}\tmw\ma$
with probability $1-1/\poly(n)$.

For the correctness of $\textsc{Solve}$, assume for now that $\Psi$
in Line \ref{line:inverse_gd} satisfies $\Psi^{-1}=(1\pm1/8)\ma^{\top}\omw\ma$
and that $y^{(0)}$ is a good approximation of the solution with
\begin{align*}
\|y^{(0)}-(\ma^{\top}\mathbf{\omw}\ma)^{-1}b\|_{(\ma^{\top}\mathbf{\omw}\ma)}\le & (\delta/1800)\cdot\|b\|_{(\ma^{\top}\mathbf{\omw}\ma)^{-1}}.
\end{align*}
 (This is shown in the proof of Lemma~\ref{lem:totalPmovement}.)
So Lemma~\ref{lem:preconditioner} shows that the final result $y$
satisfies
\begin{align*}
\|b-(\ma^{\top}\mathbf{\omw}\ma)^{-1}y\|_{(\ma^{\top}\mathbf{\omw}\ma)^{-1}}\le & 30\|b-(\ma^{\top}\mathbf{\omw}\ma)^{-1}y^{(0)}\|_{(\ma^{\top}\mathbf{\omw}\ma)^{-1}}+(\delta/60)\|b\|_{(\ma^{\top}\mathbf{\omw}\ma)^{-1}}\\
\le & \delta/30\|b\|_{(\ma^{\top}\mathbf{\omw}\ma)^{-1}}.
\end{align*}
Finally, Lemma~\ref{lem:perturb_system} shows that we can view the
output of the algorithm as $\Psi b$ for some spectral approximation
$\Psi\approx_{\delta}\ma^{\top}\mathbf{\omw}\ma$.
\end{proof}
\totalPmovement*
\begin{proof}
The output $y\in\R^{d}$ of the function $\textsc{SecureSolve}$ solves
satisfies
\begin{equation}
y=\Psi_{\textsc{Solve}}(b+\alpha\ma^{\top}\mathbf{U}^{1/2}\eta)\label{eq:y_formula-1}
\end{equation}
with $\alpha=\frac{c_{3}\epsilon}{\sqrt{d\log(n/\delta)}}\|y_{1}\|_{\ma^{\top}\mathbf{U}\ma}$
for some $\Psi_{\textsc{Solve}}$ satisfying $\Psi_{\textsc{Solve}}\approx_{\poly(\delta/n)}(\ma^{\top}\mathbf{U}\ma)^{-1}$
as the algorithm performs iterative refinement on our preconditioner
$\Psi$ maintained by $\textsc{Update}$. Note that 
\begin{align*}
\alpha^{2}\|\ma^{\top}\mathbf{U}^{1/2}\eta\|_{\Psi_{\textsc{Solve}}}^{2} & \leq2\alpha^{2}\eta^{\top}\mathbf{U}^{1/2}\ma(\ma^{\top}\mathbf{U}\ma)^{-1}\ma^{\top}\mathbf{U}^{1/2}\eta=O(\alpha^{2}\log(n)d)
\end{align*}
w.h.p. in $n$ where we used that $\mathbf{U}^{1/2}\ma(\ma^{\top}\mathbf{U}\ma)^{-1}\ma^{\top}\mathbf{U}^{1/2}$
is a rank $d$ orthogonal projection matrix and $\eta\sim N(0,\mi_{n})$.
By choosing small enough $c_{3}$, we have
\[
\alpha\|\ma^{\top}\mathbf{U}^{1/2}\eta\|_{\Psi_{\textsc{Solve}}}\leq\frac{\delta}{200}\|y_{1}\|_{\ma^{\top}\mathbf{U}\ma}=\frac{\delta}{200}\|\Psi_{\textsc{Solve}}b\|_{\ma^{\top}\mathbf{U}\ma}\leq\frac{\delta}{100}\|b\|_{\Psi_{\textsc{Solve}}}.
\]
By picking small enough $c_{3}$, Lemma~\ref{lem:perturb_system}
shows that we have $y=(\Psi_{\textsc{Solve}}^{-1}+\mDelta)^{-1}b$
for $\Psi_{\textsc{Solve}}^{-1}+\mDelta\approx_{\delta/2}\Psi_{\textsc{Solve}}^{-1}\approx_{\delta}\ma^{\top}\omw\ma$.

Note that we also have
\[
y\sim N(\Psi_{\textsc{Solve}}b,\alpha^{2}\Psi_{\textsc{Solve}}\ma^{\top}\mathbf{U}\ma\Psi_{\textsc{Solve}}).
\]
On the other hand, the output $y\in\R^{d}$ of the function $\textsc{IdealSolve}$
satisfies
\[
y^{(\mathrm{ideal})}\sim N((\ma^{\top}\mathbf{U}\ma)^{-1}b,\alpha^{2}(\ma^{\top}\mathbf{U}\ma)^{-1}\ma^{\top}\mathbf{U}\ma(\ma^{\top}\mathbf{U}\ma)^{-1}).
\]
Since $\Psi_{\textsc{Solve}}^{-1}\approx_{\poly(\epsilon/(nK))}\ma^{\top}\mathbf{\mathbf{U}}\ma$,
the total variation between $y$ and $y^{(\mathrm{ideal})}\in\R^{d}$
is $\poly(\epsilon/(nK))$ small.

Now consider what happens when we replace the matrix $\Psi$ in Lines
\ref{line:inverse_gd} with the function $\textsc{IdealSolve}$. Note
that Theorem~\ref{thm:inverse_main} then holds when the input $w\in\R^{n}$
and $\ttau\in\R^{n}$ of the algorithm depends on the output of $\textsc{Solve}$
because $\textsc{Solve}$ and $\textsc{IdealSolve}$ do not use $v$
and hence cannot leak any information about $v$. If Theorem \ref{thm:inverse_main}
does not hold with input depending on $\textsc{Solve}$, when we replace
the matrix $\Psi$ in Lines \ref{line:inverse_gd} with the function
$\textsc{SecureSolve}$ instead, then this would give an algorithm
to distinguish the functions $\textsc{IdealSolve}$ and $\textsc{SecureSolve}$
with probability at least $\poly(\epsilon/(nK))$. This contradicts
with the total variation between $y$ and $y^{(\mathrm{ideal})}\in\R^{d}$.

For the Frobenius bound we will write for simplicity $\mw^{(k)}$
for $\tmw^{\alg}$. We note that
\begin{align}
E_{k}\defeq & \left\Vert \sqrt{\mw^{(k+1)}}\ma\Psi^{(k+1)}\ma^{\top}\sqrt{\mw^{(k+1)}}-\sqrt{\mw^{(k)}}\ma\Psi^{(k)}\ma^{\top}\sqrt{\mw^{(k)}}\right\Vert _{F}^{2}\nonumber \\
= & 3\left\Vert \sqrt{\mw^{(k+1)}}\ma\Psi^{(k+1)}\ma^{\top}\sqrt{\mw^{(k+1)}}-\sqrt{\mw^{(k)}}\ma\Psi^{(k+1)}\ma^{\top}\sqrt{\mw^{(k+1)}}\right\Vert _{F}^{2}\nonumber \\
 & +3\left\Vert \sqrt{\mw^{(k)}}\ma\Psi^{(k+1)}\ma^{\top}\sqrt{\mw^{(k+1)}}-\sqrt{\mw^{(k)}}\ma\Psi^{(k)}\ma^{\top}\sqrt{\mw^{(k+1)}}\right\Vert _{F}^{2}\nonumber \\
 & +3\left\Vert \sqrt{\mw^{(k)}}\ma\Psi^{(k)}\ma^{\top}\sqrt{\mw^{(k+1)}}-\sqrt{\mw^{(k)}}\ma\Psi^{(k)}\ma^{\top}\sqrt{\mw^{(k)}}\right\Vert _{F}^{2}\nonumber \\
= & 3\cdot\tr\left[(\sqrt{\mw^{(k+1)}}-\sqrt{\mw^{(k)}})^{2}\ma\Psi^{(k+1)}\ma^{\top}\mw^{(k+1)}\ma\Psi^{(k+1)}\ma\right]\nonumber \\
 & +3\cdot\tr\left[\ma^{\top}\mw^{(k)}\ma(\Psi^{(k+1)}-\Psi^{(k)})\ma^{\top}\mw^{(k+1)}\ma(\Psi^{(k+1)}-\Psi^{(k)})\right]\nonumber \\
 & +3\cdot\tr\left[(\sqrt{\mw^{(k+1)}}-\sqrt{\mw^{(k)}})^{2}\ma\Psi^{(k)}\ma^{\top}\mw^{(k)}\ma\Psi^{(k)}\ma\right]\nonumber \\
\leq & 8\cdot\tr\left[(\sqrt{\mw^{(k+1)}/\mw^{(k)}}-\mi)^{2}\sqrt{\mw^{(k)}}\ma\Psi^{(k)}\ma^{\top}\mw^{(k)}\ma\Psi^{(k)}\ma\sqrt{\mw^{(k)}}\right]\nonumber \\
 & +4\cdot\tr\left[\ma^{\top}\mw^{(k)}\ma(\Psi^{(k+1)}-\Psi^{(k)})\ma^{\top}\mw^{(k)}\ma(\Psi^{(k+1)}-\Psi^{(k)})\right]\nonumber \\
\defeq & E_{k}^{(1)}+E_{k}^{(2)}\label{eq:E_E1E2}
\end{align}
where we used $\Psi^{(k+1)}\preceq\frac{9}{8}\Psi^{(k)}$ and $\mw^{(k+1)}\preceq\frac{9}{8}\mw^{(k)}$
in the first inequality.

For the first term, we use $\Psi^{(k)}\preceq\frac{9}{8}(\ma^{\top}\mw^{(k)}\ma)^{-1}$,
$\mproj^{(k)}\defeq\sqrt{\mw^{(k)}}\ma(\ma^{\top}\mw^{(k)}\ma)^{-1}\ma^{\top}\sqrt{\mw^{(k)}}$
and $\left|\sqrt{x}-1\right|\leq|x-1|$ for $\frac{1}{2}\leq x\leq2$
and get
\begin{equation}
E_{k}^{(1)}\leq30\cdot\tr\left[\left(\frac{\mw^{(k+1)}-\mw^{(k)}}{\mw^{(k)}}\right)^{2}\mproj^{(k)}\mproj^{(k)}\right]=30\cdot\sum_{i\in[n]}\sigma_{i}^{(k)}\left(\frac{w_{i}^{(k+1)}-w_{i}^{(k)}}{w_{i}^{(k)}}\right)^{2}.\label{eq:E1}
\end{equation}
For the second term, we note that
\begin{align*}
\Psi^{(k+1)}-\Psi^{(k)} & =(\ma^{\top}\mv^{(k+1)}\ma)^{-1}-(\ma^{\top}\mv^{(k)}\ma)^{-1}\\
 & =\int_{0}^{1}\mh_{s}^{-1}\ma^{\top}(\mv^{(k+1)}-\mv^{(k)})\ma\mh_{s}^{-1}ds
\end{align*}
where $\mh_{s}=\ma^{\top}\left((1-s)\mv^{(k)}+s\mv^{(k+1)}\right)\ma\in\R^{d\times d}$.
Let $\md_{v}^{(k)}=\frac{\mv^{(k+1)}-\mv^{(k)}}{\mw^{(k)}}\in\R^{n\times n}$,
we have
\begin{align*}
E_{k}^{(2)} & =4\left\Vert \sqrt{\mw^{(k)}}\ma\int_{0}^{1}\mh_{s}^{-1}\ma^{\top}(\mv^{(k+1)}-\mv^{(k)})\ma\mh_{s}^{-1}ds\ma^{\top}\sqrt{\mw^{(k)}}\right\Vert _{F}^{2}\\
 & \leq4\left(\int_{0}^{1}\left\Vert \sqrt{\mw^{(k)}}\ma\mh_{s}^{-1}\ma^{\top}(\mv^{(k+1)}-\mv^{(k)})\ma\mh_{s}^{-1}\ma^{\top}\sqrt{\mw^{(k)}}\right\Vert _{F}ds\right)^{2}\\
 & =4\left(\int_{0}^{1}\left\Vert \sqrt{\mw^{(k)}}\ma\mh_{s}^{-1}\ma^{\top}\sqrt{\mw^{(k)}}\left(\frac{\mv^{(k+1)}-\mv^{(k)}}{\mw^{(k)}}\right)\sqrt{\mw^{(k)}}\ma\mh_{s}^{-1}\ma^{\top}\sqrt{\mw^{(k)}}\right\Vert _{F}ds\right)^{2}\\
 & \leq8\normFull{\mproj^{(k)}\left(\frac{\mv^{(k+1)}-\mv^{(k)}}{\mw^{(k)}}\right)\mproj^{(k)}}_{F}^{2}
\end{align*}
where we used $\mh_{s}^{-1}\preceq\frac{9}{8}(\ma^{\top}\mw^{(k)}\ma)^{-1}$
at the end. Note that

\begin{align*}
\mathbb{E}[E^{(2)}] & =8\E\left[\tr\left[\mproj^{(k)}\left(\frac{\mv^{(k+1)}-\mv^{(k)}}{\mw^{(k)}}\right)\mproj^{(k)}\left(\frac{\mv^{(k+1)}-\mv^{(k)}}{\mw^{(k)}}\right)\right]\right]\\
 & =8\E\left[\sum_{i\in[n]}\sum_{j\in[n]}(\mproj^{(k)})_{i,j}^{2}\cdot\frac{v_{i}^{(k+1)}-v_{i}^{(k)}}{w_{i}^{(k)}}\cdot\frac{v_{j}^{(k+1)}-v_{j}^{(k)}}{w_{j}^{(k)}}\right]
\end{align*}
and for each $i$, $\P(v_{i}^{(k)}\neq0)=\min\{1,\gamma\epsilon^{-2}\cdot\ttau_{i}^{(\mathrm{alg},k)}\}$.
Also, this is independent to whether $v_{j}^{(k)}\neq0$. Using the
formula of $v$, we have
\begin{align*}
\E\left[\frac{v_{i}^{(k+1)}-v_{i}^{(k)}}{w_{i}^{(k)}}\cdot\frac{v_{j}^{(k+1)}-v_{j}^{(k)}}{w_{j}^{(k)}}\right] & \leq\begin{cases}
\frac{2}{\min\{1,\gamma\epsilon^{-2}\cdot\ttau_{i}^{(\mathrm{alg},k)}\}}\cdot\frac{(w_{i}^{(\mathrm{alg},k)})^{2}}{(w_{i}^{(k)})^{2}} & \text{if }i=j\\
4\cdot\frac{w_{i}^{(\mathrm{alg},k)}}{w_{i}^{(k)}}\cdot\frac{w_{j}^{(\mathrm{alg},k)}}{w_{j}^{(k)}} & \text{if }i\neq j
\end{cases}\\
 & \leq\begin{cases}
\frac{16}{\gamma\epsilon^{-2}\cdot\tau_{i}^{(k)}} & \text{if }i=j\\
16 & \text{if }i\neq j
\end{cases}
\end{align*}
where we used that $w_{i}^{(\mathrm{alg},k)}\leq2w_{i}^{(k)}$ and
$\tau_{i}^{(k)}\geq\frac{1}{2}\ttau_{i}^{(\mathrm{alg},k)}$.

Let $\mathcal{I}_{k}=\{i:\ w_{i}^{(\mathrm{alg},k+1)}\neq w_{i}^{(\mathrm{alg},k)}\}$
be the indices that the algorithm resampled at iteration $k$. Then,
we have
\begin{align}
\mathbb{E}[E_{k}^{(2)}] & \leq16\sum_{i\in\mathcal{I}_{k},j\in\mathcal{I}_{k}}(\mproj^{(k)})_{i,j}^{2}+16\sum_{i\in\mathcal{I}_{k}}\frac{(\mproj^{(k)})_{i,i}^{2}}{\gamma\epsilon^{-2}\cdot\sigma_{i}^{(k)}}\nonumber \\
 & \leq16\sum_{i\in\mathcal{I}_{k}}\sum_{j\in[n]}(\mproj^{(k)})_{i,j}^{2}+\frac{16\epsilon^{2}}{\gamma}\sum_{i\in\mathcal{I}_{k}}\mproj_{i,i}^{(k)}\nonumber \\
 & =16\sum_{i\in\mathcal{I}_{k}}\sigma_{i}^{(k)}+\frac{16\epsilon^{2}}{\gamma}\sum_{i\in\mathcal{I}_{k}}\sigma_{i}^{(k)}\leq32\sum_{i\in\mathcal{I}_{k}}\sigma_{i}^{(k)}\label{eq:E2a}
\end{align}
where we used the fact $\sum_{j\in[n]}(\mproj^{(k)})_{i,j}^{2}=\sigma_{i}^{(k)}$
for all $i$. Finally, we note that by the choice of the indices to
update, we have
\begin{equation}
\sum_{i\in\mathcal{I}_{k}}\sigma_{i}^{(k)}\leq2^{\ell_{k}}+1\label{eq:E2b}
\end{equation}
where $\ell_{k}$ is defined in the $k$-th step of $\textsc{Update}$.
Hence, we have $\mathbb{E}[E_{k}^{(2)}]\leq64\cdot2^{\ell_{k}}.$
Using this, (\ref{eq:E_E1E2}) and (\ref{eq:E1}), we have
\begin{align*}
 & \E\left[\sum_{k=0}^{K-1}\Big\|\sqrt{\mw^{(k+1)}}\ma\Psi^{(k+1)}\ma^{\top}\sqrt{\mw^{(k+1)}}-\sqrt{\mw^{(k)}}\ma\Psi^{(k)}\ma^{\top}\sqrt{\mw^{(k)}}\Big\|_{F}\right]\\
\leq & \E\left[\sum_{k=0}^{K-1}\sqrt{E_{k}^{(1)}}+\sqrt{E_{k}^{(2)}}\right]\\
\leq & 6\sum_{k=0}^{K-1}\left(\sum_{i=1}^{n}\sigma_{i}^{(k)}\left(\frac{w_{i}^{(k+1)}-w_{i}^{(k)}}{w_{i}^{(k)}}\right)^{2}\right)^{1/2}+8\sum_{k=0}^{K-1}2^{\ell_{k}/2}\\
\leq & 6\sum_{k=0}^{K-1}\left(\sum_{i=1}^{n}\sigma_{i}^{(k)}\left(\frac{w_{i}^{(k+1)}-w_{i}^{(k)}}{w_{i}^{(k)}}\right)^{2}\right)^{1/2}+8\sum_{k=0}^{K-1}2^{\ell_{k}/2}\\
\leq & 6K+8K\log^{2}n\leq16K\log^{2}n
\end{align*}
where we used the fact that a rank $2^{\ell}$ update happens at most
once every $\frac{2^{\ell/2}}{\log n}$ steps (proved in Theorem \ref{thm:inverse_main}).
\end{proof}

\section{Open Problems}

Our main result is a linear program solver that runs in expected $\otilde(nd+d^{3})$
time. For dense constraint matrices $\ma\in\R^{n\times d}$ this is
optimal among algorithms that do not use fast matrix multiplication,
barring a major improvement in solving linear systems. The fastest
linear system solvers run in $\otilde(\nnz(A)+d^{\omega})$ time,
where $\nnz(A)$ is the number of nonzero entries in $\ma$ and $\omega$
is the matrix exponent. This leads to two open questions: (i) Can
the $nd$ term in our complexity be improved to $\nnz(A)$? (ii) Can
the $d^{3}$ term be improved to $d^{\omega}$ by exploiting fast
matrix multiplication?

A major bottleneck for question (i) is how to detect large entries
of the product $\ma h$ for $\ma\in\R^{n\times d}$, $h\in\R^{d}$
(i.e. solving the problem of Lemma~\ref{lem:large_entry_datastructure}).
Currently the complexity of Lemma~\ref{lem:large_entry_datastructure}
is $\text{\ensuremath{\otilde}}(\|\mg\ma h\|^{2}\cdot\varepsilon^{-2}\cdot d)$,
which can be interpreted as ``$d$ times the number of entries larger
than $\varepsilon$''. Note that verifying the answer (that is, for
a list of indices $I\subset[n]$, check that $(\ma h)_{i}$ is indeed
larger than $\varepsilon$) requires the same complexity for dense
matrices $\ma$. However, for matrices with $z\le d$ entries per
row, the verification complexity is just $z\cdot|I|$. This suggests
that it might be possible to improve Lemma~\ref{lem:large_entry_datastructure}
to run in $\text{\ensuremath{\otilde}}(\|\mg\ma h\|^{2}\cdot\varepsilon^{-2}\cdot z)$
for matrices $\ma$ with $z$ nonzero entries per row. If such a data
structure could be found, it would result in a faster linear programming
algorithm.

So far question (ii) has only been answered for the case $d=\Omega(n)$
\cite{cohen2019solving,br2019deterministicArxiv} and no progress
has been made for the more general case $d=o(n)$, even when allowing
for a worse dependency on $n$ than our $\otilde(nd+d^{3})$ bound.
So far the best dependency on $d$ is $d^{2.5}\gg d^{\omega}$\cite{leeS14,lee2015efficient},
so a first step could be to improve our complexity to $\otilde(nd+d^{2.5})$
time. Currently the $d^{3}$ bottleneck comes from maintaining the
feasibility of the primal solution $x$ in Section~\ref{sec:maintaining_infeasibility},
so a key step would be to improve the complexity of maintaining the
feasibility.

\section{Acknowledgements}

We thank Sébastien Bubeck, Ofer Dekel, Jerry Li, Ilya Razenshteyn,
and Microsoft Research for facilitating conversations between and
hosting researchers involved in this collaboration. We thank Vasileios
Nakos, Jelani Nelson, and Zhengyu Wang for very helpful discussions
about sparse recovery literature. We thank Jonathan Kelner, Richard
Peng, and Sam Chiu-wai Wong for helpful conversations. This project
has received funding from the European Research Council (ERC) under
the European Unions Horizon 2020 research and innovation programme
under grant agreement No 715672. This project was supported in part
by NSF awards CCF-1749609, CCF-1740551, DMS-1839116, CCF-1844855,
and Microsoft Research Faculty Fellowship. This project was supported
in part by Special Year on Optimization, Statistics, and Theoretical
Machine Learning (being led by Sanjeev Arora) at Institute for Advanced
Study.

\pagebreak{}

\bibliographystyle{alpha}
\bibliography{main}

\appendix

\section{Misc Technical Lemmas}

Here we state various technical lemmas from prior work that we use
throughout our paper.
\begin{lem}[Projection Matrix Facts (Lemma 47 of \cite{lsJournal19})]
\label{lem:tool:projection_matrices} Let $\mproj\in\R^{n\times n}$
be an arbitrary orthogonal projection matrix and let $\mSigma=\mdiag(\mproj)$
and $\mproj^{(2)}=\mproj\circ\mproj$. For all $i,j\in[m]$, $x,y\in\R^{m}$,
and $\mx=\mdiag(x)$ we have
\[
\begin{array}{lcl}
(1)\,\,\mSigma_{ii}=\sum_{j\in[m]}\mproj_{ij}^{(2)} & \enspace & (5)\,\,\norm{\mSigma^{-1}\mproj^{(2)}x}_{\infty}\leq\norm x_{\infty}\\
(2)\,\,\mzero\preceq\mproj^{(2)}\preceq\mSigma\preceq\mi,(\text{in particular,}0\leq\mSigma_{ii}\leq1) & \enspace & (6)\,\,\sum_{i\in[n]}\mSigma_{ii}=\rank(\mproj)\\
(3)\,\,\mproj_{ij}^{(2)}\leq\mSigma_{ii}\mSigma_{jj} & \enspace & (7)\,\,\left|y^{\top}\mx\mproj^{(2)}y\right|\leq\norm y_{\mSigma}^{2}\cdot\norm x_{\mSigma}\\
(4)\,\,\norm{\mSigma^{-1}\mproj^{(2)}x}_{\infty}\leq\norm x_{\mSigma} & \enspace & (8)\,\,\left|y^{\top}\left(\mproj\circ\mproj\mx\mproj\right)y\right|\leq\norm y_{\mSigma}^{2}\cdot\norm x_{\mSigma}\,.
\end{array}
\]
\end{lem}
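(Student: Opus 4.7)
The plan is to derive all eight facts from the two defining properties of an orthogonal projection, $\mproj^2 = \mproj$ and $\mproj^\top = \mproj$. Together these yield the master identity $\mproj_{ij} = \sum_k \mproj_{ik}\mproj_{jk}$. Setting $i=j$ immediately gives (1), and summing in $i$ gives (6) because $\sum_i \mSigma_{ii} = \tr(\mproj)$ equals the rank of any orthogonal projection. For (3), view the identity as an inner product and apply Cauchy--Schwarz to the vectors $(\mproj_{ik})_k$ and $(\mproj_{jk})_k$ to obtain $\mproj_{ij}^2 \leq \mproj_{ii}\mproj_{jj}$. For (2), $\mSigma \preceq \mi$ follows from $0 \le \mSigma_{ii}\le 1$ (a consequence of (3) with $i=j$), while $\mproj^{(2)} \preceq \mSigma$ follows from the Schur product theorem applied to $\mSigma - \mproj^{(2)} = \mproj\circ(\mi-\mproj)$, both factors being PSD.

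For the entrywise bounds (4) and (5), I would expand $[\mSigma^{-1}\mproj^{(2)}x]_i = \mSigma_{ii}^{-1}\sum_j \mproj_{ij}^2 x_j$. Item (5) follows by bounding $|x_j|\leq \|x\|_\infty$ and using (1). For (4), apply Cauchy--Schwarz to $(\mproj_{ij})_j$ and $(\mproj_{ij}x_j)_j$ to get $\bigl(\sum_j \mproj_{ij}^2 x_j\bigr)^2 \leq \mSigma_{ii}\cdot \sum_j \mproj_{ij}^2 x_j^2$, then invoke (3) to replace $\mproj_{ij}^2 \leq \mSigma_{ii}\mSigma_{jj}$ inside the second factor, producing the clean bound $\mSigma_{ii}^2 \|x\|_\mSigma^2$.

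The bilinear estimates (7) and (8) are the main technical obstacle; both require a careful splitting so that a double Cauchy--Schwarz separates the $x$-weights from the $y$-weights without losing the $\mSigma$-scaling on either side. For (7), I would write $y^\top\mx\mproj^{(2)}y = \sum_{i,j}\mproj_{ij}^2 x_i y_i y_j$ and apply Cauchy--Schwarz with the split $\mproj_{ij}^2 y_i y_j = (\mproj_{ij} y_i)(\mproj_{ij} y_j)$, controlling the cross term $|y_i y_j| \leq \tfrac12(y_i^2+y_j^2)$ and invoking (3) to introduce the $\mSigma_{jj}$ needed to recognize $\|y\|_\mSigma$. Item (8) is analogous starting from $[\mproj\circ\mproj\mx\mproj]_{ij} = \mproj_{ij}\sum_k \mproj_{ik}\mproj_{jk}x_k$; one again separates into two Cauchy--Schwarz factors, one yielding $\|x\|_\mSigma$ and the other $\|y\|_\mSigma^2$ after using (1) to absorb the diagonal $\mSigma_{ii}$ factors.

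All eight facts are elementary and well-known; since this lemma is cited verbatim as Lemma~47 of \cite{lsJournal19}, one valid route is simply to invoke that reference, but the direct verification sketched above is self-contained and uses nothing beyond idempotency, symmetry, Cauchy--Schwarz, and the Schur product theorem.
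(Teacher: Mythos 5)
Note first that the paper itself provides no proof of this lemma --- it is cited verbatim as Lemma~47 of \cite{lsJournal19}, as you correctly observe --- so there is no in-paper proof to compare against. Your sketches for items (1)--(6) are correct and routine: idempotency and symmetry of $\mproj$ give (1), (3), (6); the Schur product theorem applied to $\mproj\circ\mproj$ and to $\mSigma-\mproj^{(2)}=\mproj\circ(\mi-\mproj)$ gives (2); and the two Cauchy--Schwarz applications you describe for (4) and (5) work as stated.

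The gap is in items (7) and (8). The decomposition you propose for (7) --- bounding $|y_iy_j|\leq\tfrac12(y_i^2+y_j^2)$ and then invoking (3) to replace $\mproj_{ij}^2$ by $\mSigma_{ii}\mSigma_{jj}$ --- is too lossy. After the split you must control a quantity of the form $\sum_j\mSigma_{jj}y_j^2\cdot\sum_i\mSigma_{ii}|x_i|$, and $\sum_i\mSigma_{ii}|x_i|$ can only be bounded by $\sqrt{\sum_i\mSigma_{ii}}\cdot\|x\|_{\mSigma}=\sqrt{\rank(\mproj)}\cdot\|x\|_{\mSigma}$, introducing a spurious $\sqrt{d}$ factor that does not cancel. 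The sharp bound requires using (4), not (3), as the key stepping stone: write
\[
y^\top\mx\mproj^{(2)}y=\sum_i\mSigma_{ii}(x_iy_i)\bigl[\mSigma^{-1}\mproj^{(2)}y\bigr]_i,
\]
bound the bracketed factor in $\ell_\infty$ by $\|y\|_{\mSigma}$ via (4), and then apply Cauchy--Schwarz to get $\sum_i\mSigma_{ii}|x_iy_i|\leq\|x\|_{\mSigma}\|y\|_{\mSigma}$. Likewise, (8) reduces cleanly to (4): setting $u^{(k)}_i=y_i\mproj_{ik}$ gives
\[
y^\top(\mproj\circ\mproj\mx\mproj)y=\sum_k x_k\,(u^{(k)})^\top\mproj\, u^{(k)},
\]
after which $\mzero\preceq\mproj\preceq\mi$ bounds each quadratic form by $\|u^{(k)}\|_2^2=\sum_iy_i^2\mproj_{ik}^2$; swapping the order of summation yields $\sum_iy_i^2(\mproj^{(2)}|x|)_i$, and one final application of (4) to the vector $|x|$ gives $(\mproj^{(2)}|x|)_i\leq\mSigma_{ii}\|x\|_{\mSigma}$ and hence the stated bound.
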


\begin{lem}[Derivative of Projection Matrix (Lemma 49 of \cite{lsJournal19})]
\label{lem:deriv:proj} Given full rank $\ma\in\Rnd$ and $w\in\R_{>0}^{n}$
we have
\[
D_{w}\mproj(\mw\ma)[h]=\mDelta\mproj(\mw\ma)+\mproj(\mw\ma)\mDelta-2\mproj(\mw\ma)\mDelta\mproj(\mw\ma)
\]
where $\mw=\mdiag(w)$, $\mDelta=\mdiag(h/w)$, and $D_{w}f(w)[h]$
denote the directional derivative of $f$ with respect to $w$ in
direction $h$. In particular, we have that 
\[
D_{w}\sigma(\mw)[h]=2\mLambda(\mw)\mw^{-1}h\,.
\]
\end{lem}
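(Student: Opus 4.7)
The plan is to derive the formula by direct differentiation using the closed-form expression for the orthogonal projection matrix, namely $\mproj(\mw\ma) = \mw\ma(\ma^\top\mw^2\ma)^{-1}\ma^\top\mw$, which is well-defined because $\ma$ has full column rank and $w > 0$. I will compute $D_w\mproj(\mw\ma)[h]$ by the product rule, treating $\mproj$ as a product of three factors: $\mw\ma$, $(\ma^\top\mw^2\ma)^{-1}$, and $\ma^\top\mw$. The key identity I will use repeatedly is that for $\mh \defeq \mdiag(h)$, we have $\mh = \mw\mDelta = \mDelta\mw$ because $\mw$ and $\mh$ are diagonal and $\mDelta = \mdiag(h/w)$.

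First I would differentiate the outer $\mw$ factors, which each contribute $\mh$, and the inner $\mw^2$ factor inside the inverse, which contributes $D_w[\mw^2][h] = 2\mw\mh$. Using the standard formula $D\mm^{-1}[\cdot] = -\mm^{-1}(D\mm[\cdot])\mm^{-1}$ for $\mm \defeq \ma^\top\mw^2\ma$, the middle term becomes $-2\mw\ma\mm^{-1}\ma^\top\mw\mh\ma\mm^{-1}\ma^\top\mw$. Substituting $\mh = \mDelta\mw$ on the left two factors and $\mh = \mw\mDelta$ on the right two, each of the three terms collapses cleanly into something involving only $\mDelta$ and $\mproj = \mw\ma\mm^{-1}\ma^\top\mw$: the first becomes $\mDelta\mproj$, the third becomes $\mproj\mDelta$, and the middle becomes $-2\mproj\mDelta\mproj$. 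Summing yields the claimed identity.

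For the consequence about $\sigma(\mw\ma) = \diag(\mproj(\mw\ma))$, I would simply take the diagonal of both sides of the matrix identity just derived. For the two outer terms, $\diag(\mDelta\mproj)_i = \diag(\mproj\mDelta)_i = \sigma_i \cdot (h/w)_i$, so these contribute $2\mSigma\mw^{-1}h$. For the middle term, $\diag(\mproj\mDelta\mproj)_i = \sum_j \mproj_{ij}^2 \cdot (h/w)_j = (\mproj^{(2)}\mw^{-1}h)_i$. Combining and using $\mLambda = \mSigma - \mproj^{(2)}$ gives $D_w\sigma(\mw\ma)[h] = 2\mLambda\mw^{-1}h$.

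Since everything is a straightforward application of the product rule, matrix inverse differentiation, and the diagonal-matrix commutation relation $\mh\mw = \mw\mh$, there is no real obstacle; the only care needed is the bookkeeping to ensure the $\mh$ factors are split consistently as either $\mDelta\mw$ or $\mw\mDelta$ so that the three terms assemble into $\mDelta\mproj + \mproj\mDelta - 2\mproj\mDelta\mproj$. No external tool beyond standard calculus of matrix-valued functions is required, and the full-rank assumption on $\ma$ together with positivity of $w$ is exactly what makes $\mm^{-1}$ exist and the directional derivative well-defined in a neighborhood of $w$.
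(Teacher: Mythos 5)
Your proof is correct. The paper does not actually prove this lemma; it is cited verbatim as Lemma 49 of \cite{lsJournal19}, so there is no in-paper proof to compare against. Your direct computation is the natural one and matches what would be done in that reference: write $\mproj(\mw\ma)=\mw\ma(\ma^{\top}\mw^{2}\ma)^{-1}\ma^{\top}\mw$, apply the product rule together with $D\mm^{-1}=-\mm^{-1}(D\mm)\mm^{-1}$, and use the diagonal commutation $\mh=\mw\mDelta=\mDelta\mw$ (and, for the middle term, $\mw\mh=\mw\mDelta\mw$) to collapse each piece into an expression in $\mDelta$ and $\mproj$. Taking the diagonal, using $\diag(\mproj\mDelta\mproj)_{i}=\sum_{j}\mproj_{ij}^{2}(h_{j}/w_{j})$ by symmetry of $\mproj$, and recalling $\mLambda=\mSigma-\mproj^{(2)}$ gives the second identity. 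No gaps.
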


The next lemma gives a variety of frequently used relationships between
different types of multiplicative approximations.
\begin{lem}
\label{lem:mult_approx} Let $a,b\in\R_{>0}^{n}$ and $\ma\defeq\mdiag(a)$
and $\mb\defeq\mdiag(b)$. If $a\approx_{\epsilon}b$ for $\epsilon\in(0,1/2)$
then
\[
\norm{\ma^{-1}(a-b)}_{\infty}\leq\epsilon+\epsilon^{2}\text{ and }\norm{\mb^{-1}(a-b)}_{\infty}\leq\epsilon+\epsilon^{2}\,.
\]
Further, if for $\epsilon\in(0,1/2)$ and either
\[
\norm{\ma^{-1}(a-b)}_{\infty}\leq\epsilon\text{ or }\norm{\mb^{-1}(a-b)}_{\infty}\leq\epsilon
\]
then $a\approx_{\epsilon+\epsilon^{2}}b$.
\end{lem}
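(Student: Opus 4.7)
The proof will proceed entrywise, by unpacking the definition $a \approx_\epsilon b$ as $e^{-\epsilon} b_i \leq a_i \leq e^\epsilon b_i$ for every $i \in [n]$, and reducing both directions to scalar Taylor-series inequalities on the interval $(0,1/2)$.

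For the forward direction, I would rewrite $(\ma^{-1}(a-b))_i = 1 - b_i/a_i$. From $a \approx_\epsilon b$ the ratio $b_i/a_i$ lies in $[e^{-\epsilon}, e^\epsilon]$, so $|1 - b_i/a_i| \leq \max(e^\epsilon - 1, 1 - e^{-\epsilon}) = e^\epsilon - 1$. The routine estimate $e^\epsilon - 1 = \sum_{k \geq 1} \epsilon^k/k! \leq \epsilon + \epsilon^2$, valid for $\epsilon \in (0, 1/2)$ since the tail $\sum_{k \geq 2} \epsilon^k / k!$ is bounded by $\epsilon^2 \cdot \sum_{k \geq 0} (\epsilon/2)^k / 2 \leq \epsilon^2$, gives the claim for $\ma^{-1}(a-b)$. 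The identical argument applied with the roles of $a,b$ swapped (using $a_i/b_i \in [e^{-\epsilon}, e^\epsilon]$) yields the bound for $\mb^{-1}(a-b)$.

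For the reverse direction, suppose $\|\ma^{-1}(a-b)\|_\infty \leq \epsilon$, so that $b_i/a_i \in [1-\epsilon, 1+\epsilon]$ for each $i$. I need to show $e^{-(\epsilon+\epsilon^2)} \leq b_i/a_i \leq e^{\epsilon + \epsilon^2}$, which by taking logarithms reduces to verifying the two scalar inequalities $\ln(1+\epsilon) \leq \epsilon + \epsilon^2$ and $-\ln(1-\epsilon) \leq \epsilon + \epsilon^2$ for $\epsilon \in (0,1/2)$. The first is immediate from $\ln(1+\epsilon) \leq \epsilon$. For the second, I use $-\ln(1-\epsilon) = \sum_{k \geq 1} \epsilon^k/k = \epsilon + \sum_{k \geq 2} \epsilon^k/k$, and bound the tail by $\sum_{k \geq 2} \epsilon^k/k \leq \epsilon^2 \sum_{k \geq 0} \epsilon^k / 2 = \epsilon^2 / (2(1-\epsilon)) \leq \epsilon^2$, using $\epsilon \leq 1/2$ in the last step. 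The same chain of inequalities applies if $\|\mb^{-1}(a-b)\|_\infty \leq \epsilon$, with the roles of $a_i$ and $b_i$ swapped.

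The only mild obstacle is ensuring the constants line up cleanly: the forward bound needs $e^\epsilon - 1 \leq \epsilon + \epsilon^2$ and the reverse needs $-\ln(1-\epsilon) \leq \epsilon + \epsilon^2$, both of which hold precisely because $\epsilon \leq 1/2$ makes the geometric tail in each Taylor series sum to at most one full power of $\epsilon^2$. No further structure of the vectors is used, so the proof is truly a coordinatewise calculation combined with these two elementary series estimates.
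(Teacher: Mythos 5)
Your proof is correct and is precisely the Taylor-expansion argument the paper alludes to with its one-line proof ("These follow near immediately by Taylor expansion"); you've simply spelled out the series tail bounds $e^\epsilon - 1 \le \epsilon + \epsilon^2$ and $-\ln(1-\epsilon) \le \epsilon + \epsilon^2$ for $\epsilon \in (0,1/2)$ that the paper leaves implicit.
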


\begin{proof}
These follow near immediately by Taylor expansion.

\end{proof}
Here we provide some helpful expectation bounds. This lemma just
shows that the relative change of a matrix in Frobenius norm can be
bounded by the change in the projection Schur product norm.
\begin{lem}[Matrix Stability]
\label{lem:matrix_stability} For full column rank $\ma\in\R^{n\times d}$
and all $\delta\in\R^{n}$ we have
\[
\norm{\left(\ma^{\top}\ma\right)^{-1/2}\ma^{\top}\mDelta\ma\left(\ma^{\top}\ma\right)^{-1/2}}_{F}=\norm{\delta}_{\mproj^{(2)}}
\]
where $\mDelta\defeq\mdiag(\delta)$ and $\mproj\defeq\mproj(\ma)$.
\end{lem}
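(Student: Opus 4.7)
The plan is to square both sides and identify each with the same double sum $\sum_{i,j}\mproj_{ij}^{2}\delta_{i}\delta_{j}$, using the identity $\norm{\mb}_{F}^{2}=\tr(\mb^{\top}\mb)$ together with the cyclic property of trace and the explicit entrywise formula for $\mproj^{(2)}$.

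First I would expand the right-hand side directly: by definition of the weighted norm and of $\mproj^{(2)}$,
\[
\norm{\delta}_{\mproj^{(2)}}^{2}=\delta^{\top}(\mproj\circ\mproj)\delta=\sum_{i,j\in[n]}\mproj_{ij}^{2}\,\delta_{i}\delta_{j}.
\]
This is well defined since $\mproj^{(2)}$ is PSD (Schur product of PSD matrices).

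Next I would compute the Frobenius norm on the left. Writing $\mb\defeq(\ma^{\top}\ma)^{-1/2}\ma^{\top}\mDelta\ma(\ma^{\top}\ma)^{-1/2}$ and noting that $\mb$ is symmetric (since $\mDelta$ is symmetric), we have $\norm{\mb}_{F}^{2}=\tr(\mb^{2})$. Expanding and applying the cyclic property of trace, together with $\mproj=\ma(\ma^{\top}\ma)^{-1}\ma^{\top}$, gives
\[
\tr(\mb^{2})=\tr\!\Big((\ma^{\top}\ma)^{-1}\ma^{\top}\mDelta\ma(\ma^{\top}\ma)^{-1}\ma^{\top}\mDelta\ma\Big)=\tr(\mproj\mDelta\mproj\mDelta).
\]
Finally, since $\mDelta$ is diagonal, writing this trace out in coordinates yields
\[
\tr(\mproj\mDelta\mproj\mDelta)=\sum_{i,j\in[n]}\mproj_{ij}\,\delta_{j}\,\mproj_{ji}\,\delta_{i}=\sum_{i,j\in[n]}\mproj_{ij}^{2}\,\delta_{i}\delta_{j},
\]
where we used the symmetry $\mproj_{ij}=\mproj_{ji}$ of the orthogonal projection. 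This matches the right-hand side, so taking square roots yields the claim.

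There isn't really a hard step here — the argument is a direct calculation once one rewrites the Frobenius norm as a trace and uses that $\mproj$ factors through $\ma(\ma^{\top}\ma)^{-1}\ma^{\top}$. The only place one has to be slightly careful is making sure that the $(\ma^{\top}\ma)^{-1/2}$ factors combine (via cyclicity) to produce exactly one copy of $(\ma^{\top}\ma)^{-1}$ sandwiched between $\ma^{\top}\ldots\ma$ on each side, yielding two clean copies of $\mproj$.
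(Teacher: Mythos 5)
Your proposal is correct and follows the same route as the paper's proof: rewrite $\norm{\cdot}_F^2$ as a trace, use cyclicity to collapse the $(\ma^{\top}\ma)^{-1/2}$ factors and $\ma,\ma^{\top}$ into two copies of $\mproj$ giving $\tr(\mproj\mDelta\mproj\mDelta)$, and then expand in coordinates to match $\sum_{i,j}\mproj_{ij}^{2}\delta_i\delta_j$.
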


\begin{proof}
Direct calculation yields that 
\begin{align*}
\normFull{\left(\ma^{\top}\ma\right)^{-1/2}\ma^{\top}\mDelta\ma\left(\ma^{\top}\ma\right)^{-1/2}}_{F}^{2} & =\tr\left[\left(\ma^{\top}\ma\right)^{-1/2}\ma^{\top}\mDelta\mproj\mDelta\ma^{\top}\left(\ma^{\top}\ma\right)^{-1/2}\right]\\
 & =\tr\left[\mproj\mDelta\mproj\mDelta\right]\\
 & =\sum_{i\in[n]}\sum_{j\in[n]}\left[\mproj\mDelta\right]_{ij}\left[\mproj\mDelta\right]_{ji}\\
 & =\sum_{i,j\in[n]}\mproj_{ij}^{2}\delta_{i}\delta_{j}\,
\end{align*}
\end{proof}
The following lemma (coupled with the one above) shows that we can
leverage score sample to have small change in Frobenius norm.
\begin{lem}[Expected Frobenius]
\label{lem:matrix_stability_expectation} Let $\ma\in\R^{n\times d}$
be non-degenerate and $\delta\in\R^{n}$. Further let $\tau\geq\sigma(\ma)$
and let $\tilde{\delta}\in\R^{n}$ be chosen randomly where for all
$i\in[n]$ independently $\tilde{\delta}_{i}=\delta_{i}/p_{i}$ with
probability $p_{i}$ and $\tilde{\delta}_{i}=0$ otherwise where $p_{i}=\min\{1,\tau_{i}\cdot k\}$
for some $k\geq0$. Then
\[
\E[\tilde{\delta}]=\delta\text{ and }\E[\norm{\tilde{\delta}}_{\mproj^{(2)}(\ma)}^{2}]\leq\left(1+(1/k)\right)\cdot\norm{\delta}_{\sigma}^{2}.
\]
\end{lem}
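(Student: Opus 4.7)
The plan is to compute $\E[\norm{\tilde{\delta}}_{\mproj^{(2)}(\ma)}^{2}]$ directly by expanding the definition of the Schur-product quadratic form and using independence of the coordinates of $\tilde{\delta}$. First, the unbiasedness $\E[\tilde{\delta}]=\delta$ is immediate from the construction: $\E[\tilde{\delta}_i] = p_i \cdot (\delta_i/p_i) + (1-p_i) \cdot 0 = \delta_i$.

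For the second moment, I would write
\[
\E[\norm{\tilde{\delta}}_{\mproj^{(2)}}^{2}] = \E\!\left[\sum_{i,j \in [n]} \mproj_{ij}^{2}\, \tilde{\delta}_i \tilde{\delta}_j\right],
\]
split the sum into the diagonal $i=j$ and off-diagonal $i \neq j$ parts, and use that the $\tilde{\delta}_i$ are independent across $i$. On the off-diagonal, $\E[\tilde{\delta}_i \tilde{\delta}_j]=\delta_i\delta_j$, while on the diagonal $\E[\tilde{\delta}_i^{2}] = \delta_i^{2}/p_i$. Recombining the diagonal with the off-diagonal gives
\[
\E[\norm{\tilde{\delta}}_{\mproj^{(2)}}^{2}] = \norm{\delta}_{\mproj^{(2)}}^{2} + \sum_{i \in [n]} \mproj_{ii}^{2}\,\delta_i^{2}\,\Big(\frac{1}{p_i}-1\Big).
\]

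To finish, I would bound the two terms separately. For the first, Lemma~\ref{lem:tool:projection_matrices} gives $\mproj^{(2)} \preceq \mSigma$, so $\norm{\delta}_{\mproj^{(2)}}^{2} \leq \norm{\delta}_{\sigma}^{2}$. For the second, observe that when $p_i=1$ the summand vanishes, and otherwise $p_i = \tau_i k$, so $1/p_i - 1 \leq 1/(\tau_i k)$ in all cases; combined with $\mproj_{ii}^{2} = \sigma_i^{2} \leq \sigma_i \tau_i$ (using $\sigma_i \leq \tau_i$), each term is bounded by $\sigma_i \delta_i^{2}/k$, and the sum is at most $(1/k)\norm{\delta}_{\sigma}^{2}$. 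Adding gives the claimed $(1+1/k)\norm{\delta}_{\sigma}^{2}$.

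There is really no substantive obstacle here: the argument is a short, direct calculation relying only on the independence of the samples, the standard entrywise bounds on projection matrices from Lemma~\ref{lem:tool:projection_matrices}, and the dominating property $\tau \geq \sigma(\ma)$. The only minor care needed is handling the case $p_i = 1$ (where division is fine and the error term is zero) uniformly with the case $p_i = \tau_i k < 1$, which is handled by the inequality $1/p_i - 1 \leq 1/(\tau_i k)$.
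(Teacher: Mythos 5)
Your proof is correct and follows essentially the same route as the paper: expand the Schur-product quadratic form using independence of the coordinates, separate out the diagonal variance correction $\sum_i \mproj_{ii}^2\delta_i^2(1/p_i - 1)$, bound the cross term by $\|\delta\|_\sigma^2$ via $\mproj^{(2)}\preceq\mSigma$, and bound the correction by $\tfrac{1}{k}\|\delta\|_\sigma^2$ using $p_i\geq k\sigma_i$ when $p_i<1$. The only cosmetic difference is that you absorb the $p_i=1$ and $p_i<1$ cases into the single inequality $1/p_i - 1 \leq 1/(\tau_i k)$, whereas the paper treats the two cases explicitly; both are fine.
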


\begin{proof}
Clearly $\E[\tilde{\delta}]=\delta$ by construction. Further, since
the different entries of $\tilde{\delta}_{i}$ are set independently
we have that $\E[\tilde{\delta}_{i}\tilde{\delta}_{j}]=\delta_{i}\delta_{j}$
when $i\neq j$. Consequently, letting $\mproj$ therefore
\begin{align*}
\E[\norm{\tilde{\delta}}_{\mproj^{(2)}(\ma)}^{2}] & =\E\left[\sum_{i,j\in[n]}\mproj_{i,j}^{2}\delta_{i}\delta_{j}\right]\\
 & =\sum_{i\in[n]}\mproj_{i,i}^{2}\E[\tilde{\delta}_{i}^{2}]+\sum_{i,j\in[n]:i\neq j}\mproj_{i,j}^{2}\E[\delta_{i}\delta_{j}]\\
 & =\sum_{i\in[n]}\sigma_{i}^{2}\cdot(\E[\tilde{\delta}_{i}^{2}]-(\E[\tilde{\delta}_{i}])^{2})+\norm{\delta}_{\mproj^{(2)}}^{2}\,.
\end{align*}
Now, either $p_{i}=1$ and
\[
\sigma_{i}^{2}\cdot(\E[\tilde{\delta}_{i}^{2}]-(\E[\tilde{\delta}_{i}])^{2})=\sigma_{i}\cdot(\delta_{i}^{2}-\delta_{i}^{2})=0
\]
or $p_{i}<1$ and $p_{i}=k\cdot\tau_{i}\geq k\cdot\sigma_{i}$ in
which case
\[
\sigma_{i}^{2}\cdot(\E[\tilde{\delta}_{i}^{2}]-(\E[\tilde{\delta}_{i}])^{2})\leq\frac{\sigma_{i}^{2}\delta_{i}^{2}}{p_{i}}\leq\frac{\sigma_{i}}{k}\cdot\delta_{i}^{2}\,.
\]
Combining these facts and leveraging that $\mproj^{(2)}\preceq\mSigma$
(Lemma~\ref{lem:tool:projection_matrices}) then yields the result.
\end{proof}

\section{Maintaining Near Feasibility}

\label{sec:maintaining_infeasibility}

In this section we prove the remainder of Theorem~\ref{thm:path_following}.
Note that when assuming the interior point $x\in\R^{n}$ stays feasible,
we have proven Theorem~\ref{thm:path_following} by Theorem~\ref{thm:path_following_simplified}.
Thus we are only left with analyzing how to make sure that $x$ stays
approximately feasible. This is required because the step $\delta_{x}$
performed by our IPM does not necessarily satisfy $\ma^{\top}\delta_{x}=0$,
due to the spectral approximation we used for $(\ma^{\top}\omx\oms^{-1}\ma)^{-1}$
in our steps.

Given some infeasible $x$ we can obtain a feasible $x'$ via
\begin{align*}
x'\defeq & x-\mx\ms^{-1}\ma(\ma^{\top}\mx\ms^{-1}\ma)^{-1}(\ma^{\top}x-b),
\end{align*}
We show in Lemma \ref{lem:move_x_correction} that these $x$ and
$x'$ are close multiplicatively
\begin{align*}
\|\mx^{-1}(x'-x)\|_{\tau+\infty}^{2}\approx & \mu^{-1}\norm{x-x'}_{\left(\ma^{\top}\mx\ms^{-1}\ma\right)^{-1}}^{2}
\end{align*}
which motivates that we want to minimize the term on the right. Hence
we measure the infeasibility of a point $x\in\R_{\geq0}^{n}$ by the
potential $\Phi_{b}:\R_{\geq0}^{n}\times\R_{\geq0}^{n}\times\R_{\geq0}^{n}\times\R_{\geq0}\rightarrow\R$,
where for any $x',s'\in\R_{\geq0}^{n}$
\begin{equation}
\Phi_{b}(x,x',s',\mu)=\mu^{-1}\|\ma^{\top}x-b\|_{(\ma^{\top}\mx'\ms'{}^{-1}\ma)^{-1}}^{2}.\label{eq:Phi_b_x_barx_bars_mu}
\end{equation}
Here we prove that throughout the algorithm, we maintain $\Phi_{b}\leq\frac{\zeta\epsilon^{2}}{\log^{\power}n}$,
where $\epsilon$ is the parameter of Theorem~\ref{thm:path_following}
and $\zeta>0$ is a sufficiently small constant.

The extra parameters $x',s'\in\R_{\geq0}^{n}$ of our potential function
$\Phi_{b}$ are motivated by the fact, that we analyze this norm in
several steps. Our proof is split into three parts. First, we analyze
how much the potential $\Phi_{b}(x+\delta_{x},x',s',\mu)$ increases
by performing the IPM step $\delta_{x}$ in Section~\ref{subsec:fes_x}.
Second, we analyze how much the potential increases by moving $x'$,
$s'$ to $x'+\delta_{x}$, $s'+\delta_{s}$ in Section~\ref{subsec:fes_xbar}.
In general, this could incur a constant factor increase of the potential,
so we modify the IPM to also perform a small corrective step $\delta'_{x}$
and analyze $\Phi_{b}(x+\delta_{x}+\delta'_{x},x'+\delta_{x'},s'+\delta_{s'},\mu)$
instead. In Section~\ref{subsec:improve_infeasibility} we show that
it takes $\tilde{\Omega}(\sqrt{d})$ iterations of the modified IPM
until $\Phi_{b}$ increases by a constant factor. After such $\tilde{\Omega}(\sqrt{d})$
iterations, the algorithm then performs another more expensive correction
step, which decreases the potential again. This way the potential
will stay small throughout the entire IPM, which then concludes the
proof of Theorem~\ref{thm:path_following}.

\subsection{Increase of Infeasibility Due to $x$\label{subsec:fes_x}}

First, we show that $\Phi_{b}$ increases slowly when the step $\delta_{x}$
of the IPM is correct in expectation (that is, $\E[\ma\delta_{x}]=0$).
In particular, it takes $\sqrt{d}$ iterations to increase $\Phi_{b}$
by 1.
\begin{lem}[Infeasibility change due to $x$]
\label{lem:fes_change_x} Assume that $\omx\os\approx_{1}\mu\overline{\tau}\in\R^{n}$,
$x'\approx_{1}x\approx_{1}\ox$, $s'\approx_{1}s\approx_{1}\os$ and
$\overline{\tau}\approx_{1}\tau(\ox,\os)\in\R^{n}$. Let $\overline{\mq}=\ma^{\top}\omx\oms^{-1}\ma\in\R^{d\times d}$.
Let $\mh\in\R^{d\times d}$ be a random symmetric matrix satisfying
\[
\mh\approx_{\epsilon_{H}}\overline{\mq}\text{ and }\E[\mh^{-1}]=\overline{\mq}^{-1}\,.
\]
for some $\epsilon_{H}\in(0,1)$. Further, let $x^{\new}=x+\delta_{x}$
where $\delta_{x}=\omx\omw^{-1/2}(\mi-\mproj)\omw^{1/2}h\in\R^{n}$
with $\omw=\omx\oms\in\R^{n\times n}$ and $\mproj=\overline{\ms}^{-1/2}\overline{\mx}^{1/2}\ma\mh^{-1}\ma^{\top}\omx^{1/2}\oms^{-1/2}$.
Then
\[
\E_{\delta_{x}}[\Phi_{b}(x^{\new},x',s',\mu)]\leq\Phi_{b}(x,x',s',\mu)+O(\epsilon_{H}^{2})\cdot\|h\|_{\overline{\tau}}^{2}.
\]
\end{lem}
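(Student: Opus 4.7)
The plan is to expand the potential at $x^{\new} = x + \delta_x$ in the quadratic form $M \defeq (\ma^{\top}\mx'\ms'{}^{-1}\ma)^{-1}$, obtaining
\[
\|\ma^{\top}x^{\new} - b\|_M^2 = \|\ma^{\top}x - b\|_M^2 + 2(\ma^{\top}x - b)^{\top} M (\ma^{\top}\delta_x) + \|\ma^{\top}\delta_x\|_M^2,
\]
and then to show that (i) $\E[\ma^{\top}\delta_x] = 0$ so the cross term vanishes, while (ii) the quadratic error term contributes at most $O(\epsilon_H^2)\mu\|h\|_{\overline{\tau}}^2$ in expectation, which after dividing by $\mu$ gives the claimed bound.

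First I would simplify $\ma^{\top}\delta_x$ algebraically. Using $\omw = \omx\oms$, we have $\ma^{\top}\omx\omw^{-1/2} = \ma^{\top}\omx^{1/2}\oms^{-1/2}$ and $\omx^{1/2}\oms^{-1/2}\omw^{1/2} = \omx$, and the definition of $\mproj$ yields
\[
\ma^{\top}\delta_x \;=\; (\mi - \overline{\mq}\,\mh^{-1})\,\ma^{\top}\omx h.
\]
Since $\E[\mh^{-1}] = \overline{\mq}^{-1}$, taking expectations immediately gives $\E[\ma^{\top}\delta_x] = 0$, so the cross term drops out of the expectation.

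Next I would bound the quadratic term. Writing $\mA \defeq \overline{\mq}^{1/2}\mh^{-1}\overline{\mq}^{1/2}$, the assumption $\mh \approx_{\epsilon_H}\overline{\mq}$ gives $\|\mA - \mi\|_2 \le e^{\epsilon_H} - 1 = O(\epsilon_H)$, and $\mi - \overline{\mq}\mh^{-1} = -\overline{\mq}^{1/2}(\mA - \mi)\overline{\mq}^{-1/2}$. Setting $u \defeq \ma^{\top}\omx h$,
\[
\|(\mi - \overline{\mq}\mh^{-1})u\|_M^2 \;\le\; \|\overline{\mq}^{1/2}M\overline{\mq}^{1/2}\|_2 \cdot \|(\mA - \mi)\overline{\mq}^{-1/2}u\|_2^2 \;\le\; O(1)\cdot O(\epsilon_H^2)\cdot \|u\|_{\overline{\mq}^{-1}}^2,
\]
where the $O(1)$ bound on $\overline{\mq}^{1/2}M\overline{\mq}^{1/2}$ uses $x' \approx_1 \ox$ and $s' \approx_1 \os$ to conclude $M \approx_{O(1)} \overline{\mq}^{-1}$.

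Finally, I would translate $\|u\|_{\overline{\mq}^{-1}}^2$ into a bound involving $\|h\|_{\overline{\tau}}^2$. With $\mb \defeq \oms^{-1/2}\omx^{1/2}\ma$, we have $\mb^{\top}\mb = \overline{\mq}$ and $\omx\ma = \omw^{1/2}\mb$, so
\[
\|u\|_{\overline{\mq}^{-1}}^2 = h^{\top}\omw^{1/2}\,\mproj(\mb)\,\omw^{1/2}h \;\le\; h^{\top}\omw\, h \;\le\; O(\mu)\,\|h\|_{\overline{\tau}}^2,
\]
using $\mproj(\mb)\preceq \mi$ and $\omw = \omx\os \approx_1 \mu\overline{\tau}$. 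Combining everything and dividing by $\mu$ gives the stated bound $\E[\Phi_b(x^{\new},x',s',\mu)] \le \Phi_b(x,x',s',\mu) + O(\epsilon_H^2)\|h\|_{\overline{\tau}}^2$. The main subtlety, rather than an obstacle per se, is ensuring that the $O(1)$ constant-factor approximations $x' \approx_1 \ox$, $s' \approx_1 \os$, and $\omw \approx_1 \mu\overline{\tau}$ can be combined to swap between $M$, $\overline{\mq}^{-1}$, and diagonal scalings at constant factor cost; once all of these sit cleanly inside one $O(\cdot)$ the calculation closes.
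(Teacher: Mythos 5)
Your proposal is correct and follows essentially the same approach as the paper's proof: you show $\E[\ma^{\top}\delta_x]=0$ so the cross term vanishes, then bound the variance contribution by writing $\ma^{\top}\delta_x = (\mi-\overline{\mq}\mh^{-1})\ma^{\top}\omx h$, factoring out $\overline{\mq}^{1/2}(\mA-\mi)\overline{\mq}^{-1/2}$ with $\|\mA-\mi\|_2=O(\epsilon_H)$, using $M\approx_{O(1)}\overline{\mq}^{-1}$, and finally converting $\|\ma^{\top}\omx h\|_{\overline{\mq}^{-1}}^2$ to $O(\mu)\|h\|_{\overline{\tau}}^2$ via the projection inequality $\mproj(\oms^{-1/2}\omx^{1/2}\ma)\preceq\mi$ and $\omw\approx_1\mu\overline{\tau}$. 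The only cosmetic difference is that the paper phrases the expansion via $\E[Z^2]=(\E Z)^2+\E[(Z-\E Z)^2]$ and works with $\delta_x-\E[\delta_x]$, whereas you expand the square directly and use $\ma^{\top}\delta_x$; since $\ma^{\top}\E[\delta_x]=0$ these coincide.
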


\begin{proof}
Note that
\begin{align}
\E[\ma^{\top}\delta_{x}] & =\ma^{\top}\omx\omw^{-1/2}(\mi-\overline{\ms}^{-1/2}\overline{\mx}^{1/2}\ma(\ma^{\top}\oms^{-1}\omx\ma)^{-1}\ma^{\top}\omx^{1/2}\oms^{-1/2})\omw^{1/2}h=0\,.\label{eq:ker_in_expect}
\end{align}
For easy of analysis, we define matrix $\mq'\in\R^{d\times d}$ as
$\mq'\defeq\ma^{\top}\mx'\ms'^{-1}\ma$. Hence, we have
\begin{align*}
\E[\Phi_{b}(x^{\new},x',s',\mu)]= & \E\left[\mu^{-1}\|\ma^{\top}(x+\delta_{x})-b\|_{(\ma^{\top}\mx'\ms'^{-1}\ma)^{-1}}^{2}\right]\\
= & \E\left[\mu^{-1}\|\ma^{\top}(x+\delta_{x})-b\|_{(\mq')^{-1}}^{2}\right]\\
= & \mu^{-1}\|\ma^{\top}(x+\E[\delta_{x}])-b\|_{(\mq')^{-1}}^{2}+\E\left[\mu^{-1}\|\ma^{\top}(\delta_{x}-\E[\delta_{x}])\|_{(\mq')^{-1}}^{2}\right]\\
= & \Phi_{b}(x,x',s',\mu)+\E\left[\mu^{-1}\|\ma^{\top}(\delta_{x}-\E[\delta_{x}])\|_{(\mq')^{-1}}^{2}\right].
\end{align*}
where the second step follows from $\E[x^{2}]=(\E[x])^{2}+\E[(x-\E[x])^{2}]$,
and the last step follows from (\ref{eq:Phi_b_x_barx_bars_mu}) (the
definition of $\Phi_{b}$) and (\ref{eq:ker_in_expect}).

Next, we note that
\begin{align*}
\delta_{x}-\E[\delta_{x}] & =\omx\overline{\ms}^{-1}\ma(\mh^{-1}-(\ma^{\top}\omx\oms^{-1}\ma)^{-1})\ma^{\top}\omx h\\
 & =\omx\overline{\ms}^{-1}\ma(\mh^{-1}-(\overline{\mq})^{-1})\ma^{\top}\omx h\,.
\end{align*}
We define matrix $\mm\in\R^{d\times d}$ as follows
\begin{align*}
\mm & \defeq(\ma^{\top}\omx\overline{\ms}^{-1}\ma)^{1/2}\mh^{-1}(\ma^{\top}\omx\overline{\ms}^{-1}\ma)^{1/2}=\overline{\mq}^{1/2}\mh^{-1}\overline{\mq}^{1/2}\,.
\end{align*}
Since $\mh\approx_{\epsilon_{H}}\ma^{\top}\oms^{-1}\omx\ma$, we have
$\|\mm-\mi\|_{2}=O(\epsilon_{H})$. Further $(\mq')^{-1}=(\ma^{\top}\mx'\ms'^{-1}\ma)^{-1}\approx_{O(1)}(\ma^{\top}\omx\oms^{-1}\ma)^{-1}=(\overline{\mq})^{-1}$,
hence, we have
\begin{align*}
\|\ma^{\top}(\delta_{x}-\E[\delta_{x}])\|_{(\mq')^{-1}}^{2} & \leq O(1)\cdot\|\ma^{\top}(\delta_{x}-\E[\delta_{x}])\|_{(\overline{\mq})^{-1}}^{2}\\
 & =O(1)\cdot\|\overline{\mq}(\mh^{-1}-(\overline{\mq})^{-1})\ma^{\top}\omx h\|_{(\overline{\mq})^{-1}}^{2}\\
 & =O(1)\cdot\|(\mm-\mi)(\overline{\mq})^{-1/2}\ma^{\top}\omx h\|_{2}^{2}\\
 & \leq\|\mm-\mi\|_{2}^{2}\cdot\|(\overline{\mq})^{-1/2}\ma^{\top}\omx h\|_{2}^{2}\\
 & \leq O(\epsilon_{H}^{2})\cdot\|(\overline{\mq})^{-1/2}\ma^{\top}\omx h\|_{2}^{2}\\
 & \leq O(\epsilon_{H}^{2})\cdot\|\ma^{\top}\omx h\|_{(\overline{\mq})^{-1}}^{2}
\end{align*}
where the third step follows from definition of matrix $\mm$, the
forth step follows from $\|\ma u\|_{2}\le\|\ma\|_{2}\cdot\|u\|_{2}$
the fifth step follows from $\|\mm-\mi\|_{2}=O(\epsilon_{H})$. Hence,
we have 
\begin{align*}
\mu^{-1}\cdot\|\ma^{\top}(\delta_{x}-\E[\delta_{x}])\|_{(\mq')^{-1}}^{2} & \leq O(\epsilon_{H}^{2}/\mu)\cdot\|\ma^{\top}\omx h\|_{(\overline{\mq})^{-1}}^{2}\\
 & =O(\epsilon_{H}^{2}/\mu)\cdot\left\Vert \ma^{\top}\sqrt{\oms^{-1}\omx}\sqrt{\oms\omx}h\right\Vert _{(\overline{\mq})^{-1}}^{2}\\
 & =O(\epsilon_{H}^{2}/\mu)\cdot\normFull{\sqrt{\oms\omx}h}_{\mproj(\oms^{-1/2}\omx^{1/2}\ma)}^{2}\\
 & \leq O(\epsilon_{H}^{2}/\mu)\cdot\normFull{\sqrt{\oms\omx}h}_{2}^{2}
\end{align*}
where the last step follows from properties of projection matrices.
As $\omx\os\approx_{1}\mu\overline{\tau}$ and $\overline{\tau}\approx_{1}\tau(\ox,\os)$
the result follows.
\end{proof}

\subsection{Increase of Infeasibility Due to $x'$ and $s'$\label{subsec:fes_xbar}}

When $x'\in\R_{\geq0}^{n}$ or $s'\in\R_{\geq0}^{n}$ changes by a
multiplicative constant, $\Phi_{b}(x,x',s',\mu)$ changes by a multiplicative
constant. As we want that $x',s'$ are close to $x,s$ and $x,s$
change in every step of the IPM, this will blow up $\Phi_{b}$ too
quickly.

Our key idea is to move $x$ to some nearby $x^{\new}$ as we replace
$x'$ and $s'$ by $x$ and $s$, such that (\ref{eq:Phi_b_x_barx_bars_mu})
remains unchanged, i.e. $\Phi_{b}(x^{\new},x,s,\mu)\approx\Phi_{b}(x,x',s',\mu)$.
Algorithm~\ref{alg:maintain_INfeasibility} describes how we compute
this $x^{\new}$ and we show in this subsection (Lemma~\ref{lem:change_xbar_sbar})
that $\Phi_{b}$ does indeed not change by much when replacing $x,x',s'$
by $x^{\new},x,s$.

\begin{algorithm2e}[!t]

\caption{Maintaining Infeasibility}

\label{alg:maintain_INfeasibility}

\SetKwProg{Proc}{procedure}{}{}

\Proc{\textsc{MaintainInfeasibility}$(x\in\R^{n},s\in\R^{n},\overline{\tau}\in\R^{n})$}{

\State Let $x'\in\R^{n},s'\in\R^{n}$ be the previous input to $\textsc{MaintainInfeasibility}$.

\State Let $\mq=\ma^{\top}\mx\ms^{-1}\ma$ and $\mq'=\ma^{\top}\mx'\ms'{}^{-1}\ma$

\State Generate independent random $\mh_{1}\in\R^{d\times d}$ such
that 
\[
\E[\mh_{1}^{-1}]=\mq^{-1}\text{ and }\mh_{1}\approx_{\epsilon_{H}}\mq\,.
\]

\State Generate independent random $\mh_{3}\in\R^{d\times d}$ such
that 
\[
\E[\mh_{3}^{-1}]=\mq'^{-1}\text{ and }\mh_{3}\approx_{\epsilon_{H}}\mq'.
\]

\State Generate independent random $\mh_{2}\in\R^{d\times d}$ such
that \tcp*{Lemma \ref{lem:generate_H2_H4}} 
\begin{align*}
\E[\mh_{2}] & =\ma^{\top}\mx{}^{1/2}\ms{}^{-1/2}(\mx{}^{1/2}\ms{}^{-1/2}-\mx'^{1/2}\ms'^{-1/2})\ma,\\
\E[\|\mq'^{-1/2}\mh_{2}\mq'^{-1/2}\|_{F}^{2}] & =O(\|\ln x'-\ln x\|_{\tau(x,s)}^{2}+\|\ln s'-\ln s\|_{\tau(x,s)}^{2})\text{ and }\\
\mh_{2} & =\E[\mh_{2}]\pm\epsilon_{H}\cdot\mq'\,.
\end{align*}

\State Generate independent random $\mh_{4}\in\R^{d\times d}$ such
that \tcp*{Lemma~\ref{lem:generate_H2_H4}}
\begin{align*}
\E[\mh_{4}] & =\mq^{-1}-\mq'^{-1},\\
\E[\|\mq'^{1/2}\mh_{4}\mq'^{1/2}\|_{F}^{2}] & =O(\|\ln x'-\ln x\|_{\tau(x,s)}^{2}+\|\ln s'-\ln s\|_{\tau(x,s)}^{2})\text{ and }\\
\mh_{4} & =\E[\mh_{4}]\pm\epsilon_{H}\cdot\mq'^{-1}\,.
\end{align*}

\tcp{These matrices can be generated by leverage score sampling with
$\overline{\tau}$.}

\State Generate independent random $\delta_{b}^{(1)}$ and $\delta_{b}^{(2)}$
such that \tcp*{ Lemma~\ref{lem:random_delta_b}} .
\[
\E[\delta_{b}^{(1)}]=b-\ma^{\top}x,\mathrm{and\,}\delta_{b}^{(1)}\approx\E[\delta_{b}^{(1)}]
\]
where the approximation error should be as defined in Lemma~\ref{lem:random_delta_b}.

\State $\delta_{\lambda}\leftarrow\mh_{1}^{-1}\mh_{2}\mh_{3}^{-1}\delta_{b}^{(1)}+\mh_{4}\delta_{b}^{(2)}$
\tcp*{This is computed and returned }

\State $\delta_{x}\leftarrow\mx\ms{}^{-1}\ma\delta_{\lambda}$, $x^{\new}\leftarrow x+\delta_{x}$
\tcp*{Defined for analysis and not computed }

\State \Return $\delta_{\lambda}$

}

\end{algorithm2e}

To compute this new $x^{\new}$, we need to estimate $b-\ma^{\top}x$.
The next lemma shows how to find $\delta_{b}$ such that $\E[\delta_{b}]=b-\ma^{\top}x$
and $\mu^{-1}\|\delta_{b}\|_{(\ma^{\top}\ms'^{-1}\mx'\ma)^{-1}}^{2}$
are small.
\begin{lem}[Generate $\delta_{b}^{(1)}$ and $\delta_{b}^{(2)}$]
\label{lem:random_delta_b} Let $\rho\in(0,1/10)$ denote the failure
probability, let $\epsilon_{b}\in(0,1/10)$ denote an accuracy parameter.
Given $\overline{\tau}\approx_{1}\tau(\ox,\os)\in\R^{n}$ where $x'\approx_{1}x\approx_{1}\overline{x}$
and $\omx\os\approx_{1}\mu\overline{\tau}$. Let $\mq'=\ma^{\top}\mx'\ms'^{-1}\ma\in\R^{d\times d}$.
There is an algorithm that runs in $\widetilde{O}(d^{2}/\epsilon_{b})$
time and outputs vector $\delta_{b}\in\R^{d}$ such that $\E[\delta_{b}]=b-\ma^{\top}x$
. Furthermore, for any fixed vector $v\in\R^{d}$ with $\mu^{1/2}\|v\|_{\mq'}=1$,
we have 
\begin{equation}
\left|v^{\top}(\delta_{b}-\E[\delta_{b}])\right|\leq O(\sqrt{\epsilon_{b}}\log(1/\rho))\label{eq:lem_random_delta_b_v}
\end{equation}
with probability at least $1-\rho$. Furthermore, for any fixed matrix
$\mm\in\R^{d\times d}$, we have
\[
\mu^{-1/2}\|\mm(\mq')^{-\frac{1}{2}}(\delta_{b}-\E[\delta_{b}])\|_{2}=\|\mm\|_{F}\cdot O(\sqrt{\epsilon_{b}}\log(d/\rho))
\]
with probability at least $1-\rho$.
\end{lem}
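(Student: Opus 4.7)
The plan is to define $\delta_b$ as an unbiased leverage-score-based importance sample of $b - \ma^{\top}x$. Specifically, set $p_i := \overline{\tau}_i / \|\overline{\tau}\|_1$ (recall $\|\overline{\tau}\|_1 = \Theta(d)$), draw $k = \Theta(d \log(d/\rho)/\epsilon_b)$ indices $i_1,\ldots,i_k$ i.i.d.\ from this distribution, and output
\[
\delta_b \;:=\; b - \frac{1}{k}\sum_{j=1}^{k}\frac{x_{i_j}\,a_{i_j}}{p_{i_j}},
\]
where $a_i$ is the $i$th row of $\ma$. Unbiasedness $\E[\delta_b] = b - \ma^{\top}x$ is then immediate. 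Each sample requires $O(d)$ work (one row read plus a rescale), and with prefix-sum preprocessing on $\overline{\tau}$ the sampling itself is $O(\log n)$ per draw, giving total cost $\widetilde{O}(d^2/\epsilon_b)$ as required.

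The next step is to verify the single-vector bound \eqref{eq:lem_random_delta_b_v}. The key computation uses centrality $xs \approx_1 \mu\overline{\tau}$ and $\overline{\tau} \approx_1 \tau(\overline{x},\overline{s})$ to bound both the variance and the pointwise magnitude of each sample term $Y_j := v^{\top}x_{i_j} a_{i_j}/p_{i_j}$. For the variance, one computes
\[
\mathrm{Var}(Y_j) \;\le\; \sum_{i} \frac{x_i^2 (v^{\top}a_i)^2}{p_i} \;=\; \|\overline{\tau}\|_1 \sum_i \frac{x_i^2 (v^{\top}a_i)^2}{\overline{\tau}_i} \;\asymp\; \mu \|\overline{\tau}\|_1 \|v\|_{\mathbf{Q}'}^2 \;=\; O(d),
\]
using $x_i/\overline{\tau}_i \approx \mu/s_i$ and $\mathbf{Q}' \approx \ma^{\top}\mx\ms^{-1}\ma$ together with the normalization $\mu^{1/2}\|v\|_{\mathbf{Q}'}=1$. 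For the pointwise bound, Cauchy--Schwarz gives $(v^{\top}a_i)^2 \le \|v\|_{\mathbf{Q}'}^2\cdot\sigma_i(\sqrt{\mathbf{W}'}\ma)/w'_i$, which combined with $x_i s_i \approx \mu\tau_i$ and $\sigma_i\le\tau_i$ yields $|Y_j| \le O(\tau_i/p_i) = O(d)$. Bernstein's inequality applied to the average $\frac{1}{k}\sum_j Y_j$ with $k \asymp d\log(1/\rho)/\epsilon_b$ then produces $|v^{\top}(\delta_b - \E\delta_b)| \le O(\sqrt{\epsilon_b\log(1/\rho)} + (d/k)\log(1/\rho)) = O(\sqrt{\epsilon_b}\log(1/\rho))$, as claimed.

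For the matrix bound, let $Z_j := (\mq')^{-1/2}(x_{i_j}a_{i_j}/p_{i_j} - \ma^{\top}x) \in \R^d$. The covariance identity $\E[Z_jZ_j^{\top}] \preceq (\mq')^{-1/2}\left(\sum_i x_i^2 a_ia_i^{\top}/p_i\right)(\mq')^{-1/2}$ together with $\sum_i x_i^2 a_i a_i^\top/p_i \asymp \mu\|\overline{\tau}\|_1\, \ma^{\top}\mx\ms^{-1}\ma \asymp \mu d\, \mq'$ gives $\E[Z_j Z_j^\top] \preceq O(\mu d)\mi$, so the averaged estimator $u := \frac{1}{k}\sum_j Z_j$ satisfies $\E[uu^\top]\preceq O(\mu d/k)\mi = O(\mu\epsilon_b)\mi$. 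Hence $\E\|\mathbf{M}u\|_2^2 = \mathrm{tr}(\mathbf{M}\E[uu^{\top}]\mathbf{M}^{\top}) \le O(\mu\epsilon_b)\|\mathbf{M}\|_F^2$, yielding the desired expectation bound after scaling by $\mu^{-1/2}$. To pass from expectation to a high-probability tail with the extra $\log(d/\rho)$ factor, I would apply a Hanson--Wright-type inequality for $\|\mathbf{M}u\|_2$ written as a sum of independent vector-valued terms, or equivalently a scalar Bernstein bound after expanding $\|\mathbf{M}u\|_2 = \sup_{\|z\|_2=1} z^{\top}\mathbf{M}u$ via a net argument on $\mathbf{M}^{\top}z$.

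The main obstacle is achieving the stated tail shape $\sqrt{\epsilon_b}\log(1/\rho)$ rather than the weaker sub-Gaussian shape $\sqrt{\epsilon_b \log(1/\rho)}$ while controlling the worst-case magnitude of a single summand: although the per-sample bound $O(d)$ is what makes Bernstein's second term manageable, one must carefully reconcile this with the normalization $\mu^{1/2}\|v\|_{\mq'}=1$ and the approximation slop between $\tau, \overline{\tau}, \sigma$, and $\tau(\overline{x},\overline{s})$. A clean alternative, if the Bernstein bookkeeping becomes awkward, is a median-of-means amplification over $\Theta(\log(1/\rho))$ independent batches of $\Theta(d/\epsilon_b)$ samples, which directly produces the claimed $\log(1/\rho)$ scaling at the cost of only logarithmic overhead in the runtime.
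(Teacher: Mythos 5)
Your sampling scheme differs from the paper's but is a legitimate alternative for the first two claims. The paper includes coordinate $i$ independently with probability $p_{i}=\min\{1,\overline{\tau}_{i}/\epsilon_{b}\}$ and rescales by $1/p_{i}$ (Bernoulli inclusion), whereas you draw $k=\Theta(d\log(1/\rho)/\epsilon_{b})$ i.i.d.\ indices from $p_{i}\propto\overline{\tau}_{i}$ and average (importance sampling with replacement). Both give $\E[\delta_{b}]=b-\ma^{\top}x$ and cost $\widetilde{O}(d^{2}/\epsilon_{b})$, and your per-sample bound $|Y_{j}|=O(d)$ and variance $\mathrm{Var}(Y_{j})=O(d)$ are correct (they follow from the same Cauchy--Schwarz and centrality calculations the paper uses), so Bernstein on the average delivers the scalar tail $O(\sqrt{\epsilon_{b}})\leq O(\sqrt{\epsilon_{b}}\log(1/\rho))$. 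So far so good.

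The gap is in the matrix bound. Your expectation calculation $\E\|\mm u\|_{2}^{2}\leq O(\mu\epsilon_{b})\|\mm\|_{F}^{2}$ is correct, but neither route you propose to upgrade it to a tail bound with $\log(d/\rho)$ dependence actually works as stated. A net argument over $\{z:\|z\|_{2}=1\}\subset\R^{d}$ requires $\exp(\Omega(d))$ points, so the union bound contributes a factor $\sqrt{d+\log(1/\rho)}$, not $\log(d/\rho)$, and a Hanson--Wright-type inequality for sums of bounded (non-Gaussian) vectors is not off the shelf and is not worked out here. The efficient route — and the one the paper takes — is to reduce the matrix bound to $d$ applications of the scalar bound: write the SVD $\mm=\sum_{i}\lambda_{i}v_{i}u_{i}^{\top}$, note that for each $i$ the vector $(\mq')^{-1/2}u_{i}$ has $\mu^{1/2}\|(\mq')^{-1/2}u_{i}\|_{\mq'}=1$ so \eqref{eq:lem_random_delta_b_v} applies to it, take a union bound over the $d$ directions with failure probability $\rho/d$ each (yielding the $\log(d/\rho)$ factor), and then sum using
\[
\mu^{-1}\|\mm(\mq')^{-1/2}(\delta_{b}-\E[\delta_{b}])\|_{2}^{2}=\sum_{i\in[d]}\lambda_{i}^{2}\left(\mu^{-1/2}u_{i}^{\top}(\mq')^{-1/2}(\delta_{b}-\E[\delta_{b}])\right)^{2}\leq\|\mm\|_{F}^{2}\cdot O(\epsilon_{b}\log^{2}(d/\rho)).
\]
This step is what you are missing. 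With it, your alternative sampling scheme would yield a complete proof; without it, the matrix conclusion does not follow from what you have written.
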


\begin{proof}
We construct $\delta_{b}\in\R^{d}$ by the following random sampling
scheme:
\[
\delta_{b}=b-\sum_{i\in[n]}a_{i}\cdot\widetilde{x}_{i}
\]
where $a_{i}$ is the $i$-th row of $\ma$ and
\[
\widetilde{x}_{i}=\begin{cases}
\frac{x_{i}}{p_{i}} & \text{with probability }p_{i}=\min\{1,\overline{\tau}_{i}/\epsilon_{b}\}\\
0 & \text{otherwises}
\end{cases}.
\]
Clearly, we have $\E[\delta_{b}]=b-\ma^{\top}x$.

Fix any vector $v\in\R^{d}$ such that $\mu^{1/2}\|v\|_{\mq'}=1$
. If $p_{i}=1$, then $\widetilde{x}_{i}=x_{i}$ and clearly $|v^{\top}a_{i}\cdot(\widetilde{x}_{i}-x_{i})|=0=O(\epsilon_{b}^{2})$.
On the other hand, if $p_{i}<1$ then 
\begin{align*}
\left|v^{\top}a_{i}\cdot\widetilde{x}_{i}\right|^{2} & \leq\frac{1}{p_{i}^{2}}\cdot\left(v^{\top}\ma^{\top}\mx e_{i}\right)^{2}\\
 & =\frac{1}{p_{i}^{2}}\cdot\left(v^{\top}(\mq')^{1/2}(\mq')^{-1/2}\ma^{\top}\mx e_{i}\right)^{2}\\
 & \leq\frac{1}{p_{i}^{2}}\cdot\|v\|_{\mq'}^{2}\cdot e_{i}^{\top}\mx\ma(\mq')^{-1}\ma^{\top}\mx e_{i}\\
 & =\frac{x_{i}^{2}}{p_{i}^{2}}\cdot\frac{s'_{i}}{x'_{i}}\cdot\|v\|_{\mq'}^{2}\cdot e_{i}^{\top}\ms'^{-1/2}\mx'^{1/2}\ma(\mq')^{-1}\ma^{\top}\mx'^{1/2}\ms'^{-1/2}e_{i}\\
 & =\frac{x_{i}^{2}}{p_{i}^{2}}\cdot\frac{s'_{i}}{x'_{i}}\cdot\|v\|_{\mq'}^{2}\cdot\sigma(\ms'^{-1/2}\mx'^{1/2}\ma)_{i},
\end{align*}
where the third step follows from Cauchy-Schwarz.

Using that $x'\approx_{1}x$, $\mx's'\approx_{1}\mu\overline{\tau}$,
and $\overline{\tau}\approx_{1}\tau(x',s')$, we have 
\[
\left|v^{\top}a_{i}\cdot\widetilde{x}_{i}\right|^{2}=O(\overline{\tau}_{i}^{2}/p_{i}^{2})=O(\epsilon_{b}^{2}).
\]
where the last step follows from $p_{i}=\overline{\tau}_{i}/\epsilon_{b}$
(implied by $p_{i}<1$). Further, as in this case, we have

\[
\left|v^{\top}a_{i}\cdot x_{i}\right|^{2}\leq p_{i}^{2}\left|v^{\top}a_{i}\cdot\widetilde{x}_{i}\right|^{2}=O(\epsilon_{b}^{2}).
\]
We see that in either case we have 
\begin{equation}
\left|v^{\top}a_{i}\cdot(\widetilde{x}_{i}-x_{i})\right|^{2}=O(\epsilon_{b}^{2}).\label{eq:vax_sup}
\end{equation}

Next, we note that
\begin{align*}
\sum_{i\in[n]}\E\left[\left|v^{\top}a_{i}\cdot(\widetilde{x}_{i}-x_{i})\right|^{2}\right] & =\sum_{p_{i}<1}p_{i}\cdot\left|v^{\top}a_{i}\cdot(x_{i}/p_{i}-x_{i})\right|^{2}\\
 & \leq\sum_{p_{i}<1}\frac{1}{p_{i}}\cdot\left|v^{\top}a_{i}x_{i}\right|^{2}\\
 & =\sum_{p_{i}<1}\frac{1}{p_{i}}\cdot(v^{\top}\ma^{\top}\mx e_{i})^{2}\\
 & =\epsilon_{b}\sum_{p_{i}<1}\left(v^{\top}\ma^{\top}\mx\frac{e_{i}}{\sqrt{\overline{\tau}_{i}}}\right)^{2}
\end{align*}
where the second step follows from $p_{i}\in[0,1]$, and the last
step follows from $\overline{\tau}_{i}=p_{i}\epsilon_{b}$ as we only
consider $p_{i}<1$. Using $x'\approx_{1}x$ and $\mx's'\approx_{1}\mu\overline{\tau}$
this implies
\begin{align}
\sum_{i\in[n]}\E\left[\left|v^{\top}a_{i}\cdot(\widetilde{x}_{i}-x_{i})\right|^{2}\right] & =O(\epsilon_{b})\sum_{p_{i}<1}(v^{\top}\ma^{\top}\sqrt{\mu\mx'\ms'^{-1}}e_{i})^{2}\nonumber \\
 & =O(\epsilon_{b})\cdot\mu\cdot v^{\top}\ma^{\top}\mx'\ms'^{-1}\ma v=O(\epsilon_{b}).\label{eq:vax_var}
\end{align}
Using (\ref{eq:vax_sup}) and (\ref{eq:vax_var}), Bernstein inequality
shows
\[
\left|v^{\top}(\delta_{b}-\E[\delta_{b}])\right|=\left|\sum_{i\in[n]}v^{\top}a_{i}\cdot(\widetilde{x}_{i}-x_{i})\right|=O(\sqrt{\epsilon_{b}\log(1/\rho)}+\epsilon_{b}\log(1/\rho))=O(\sqrt{\epsilon_{b}}\log(1/\rho))
\]
with probability at least $1-\rho$.

For the last conclusion, we consider the SVD of $\mm\in\R^{d\times d}$
and let $v_{i}\in\R^{d}$ such that $v_{i}^{\top}\mm=\lambda_{i}u_{i}^{\top}\in\R^{d}$
for some orthonormal $v_{i}$ and $u_{i}\in\R^{d}$. Note that $\|(\mq')^{-\frac{1}{2}}u_{i}\|_{\mq'}=1$
and (\ref{eq:lem_random_delta_b_v}) shows
\begin{equation}
\left|\mu^{-1/2}\cdot u_{i}^{\top}(\mq')^{-\frac{1}{2}}(\delta_{b}-\E[\delta_{b}])\right|=O(\sqrt{\epsilon_{b}}\log(d/\rho))\label{eq:upper_bound_on_delta_b_under_AXSA_norm}
\end{equation}
for all $i$. Hence, we have
\begin{align*}
 & \|\mu^{-1/2}\cdot\mm(\mq')^{-\frac{1}{2}}(\delta_{b}-\E[\delta_{b}])\|_{2}^{2}\\
= & \sum_{i\in[d]}(\mu^{-1/2}\cdot v_{i}^{\top}\mm(\mq')^{-\frac{1}{2}}(\delta_{b}-\E[\delta_{b}]))^{2}\\
= & \sum_{i\in[d]}\lambda_{i}^{2}\cdot(\mu^{-1/2}\cdot u_{i}^{\top}(\mq')^{-\frac{1}{2}}(\delta_{b}-\E[\delta_{b}]))^{2}\\
= & \sum_{i\in[d]}^{d}\lambda_{i}^{2}\cdot O(\epsilon_{b}\log^{2}(d/\rho))=\|\mm\|_{F}^{2}\cdot O(\epsilon_{b}\log^{2}(d/\rho))
\end{align*}
where the third step follows from (\ref{eq:upper_bound_on_delta_b_under_AXSA_norm}),
and the last step follows from $\sum_{i\in[d]}\lambda_{i}^{2}=\|\mm\|_{F}^{2}$
\end{proof}
Now, we show that $\Phi_{b}$ changes slowly if we move $x$ to some
nearby $x^{\new}$ when replacing $x',s'$ by $x,s$, according to
the method $\textsc{MaintainInfeasibility}$ of Algorithm \ref{alg:maintain_Feasibility}.
\begin{lem}[Generate $\mh_{2}$ and $\mh_{4}$]
\label{lem:generate_H2_H4} For $x\approx_{0.1}x'$ and $s\approx_{0.1}s'$,
we can generate $\mh_{2}$ and $\mh_{4}$ with properties as defined
in Algorithm~\ref{alg:maintain_INfeasibility} in such a way that
\begin{align*}
\E[\|\mq'^{-1/2}\mh_{2}\mq'^{-1/2}\|_{F}^{2}]\le & O(\|\ln x'-\ln x\|_{\tau(x,s)}^{2}+\|\ln s'-\ln s\|_{\tau(x,s)}^{2})\\
\E[\|\mq'^{1/2}\mh_{4}\mq'^{1/2}\|_{F}^{2}]\le & O(\|\ln x'-\ln x\|_{\tau(x,s)}^{2}+\|\ln s'-\ln s\|_{\tau(x,s)}^{2}).
\end{align*}
With high probability, the time for computing $\mh_{2}b$ for any
$b\in\R^{d}$ is $O(d^{2}\epsilon_{H}^{-2}\log n)$ and the time for
computing $\mh_{4}b$ is $O(T\log n+d^{2}\epsilon_{H}^{-2}\log^{2}n)$
where $T$ is the time to compute $\mm b$ for $\mm$ satisfying $\E\mm=\mq'^{-1}$
and $\mm\approx_{\epsilon_{h}/4}\mq'$.
\end{lem}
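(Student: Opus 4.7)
The plan is to build both $\mh_{2}$ and $\mh_{4}$ by leverage-score sampling rows of $\ma$ weighted by the appropriate coefficients, and to analyze the resulting estimators via the tools already collected in Lemma~\ref{lem:matrix_stability} and Lemma~\ref{lem:matrix_stability_expectation}. For $\mh_{2}$, observe that the target expectation can be written as $\ma^{\top}\mdiag(v)\ma$ where $v_{i}=w_{i}-\sqrt{w_{i}w_{i}'}$ with $w_{i}\defeq x_{i}/s_{i}$ and $w_{i}'\defeq x_{i}'/s_{i}'$. I will sample each row $i$ independently with probability $p_{i}=\min\{1,k\otau_{i}\}$ for $k=\Theta(\epsilon_{H}^{-2}\log n)$, set $\tilde v_{i}=v_{i}/p_{i}$ on sampled rows and zero elsewhere, and take $\mh_{2}\defeq\ma^{\top}\mdiag(\tilde v)\ma$. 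This is unbiased by construction, and $\sum_{i}p_{i}=O(dk)$ so the matrix is supported on $\otilde(d\epsilon_{H}^{-2})$ rank-one terms, giving $\mh_{2}b$ in time $O(d^{2}\epsilon_{H}^{-2}\log n)$. For $\mh_{4}$, I use the identity $\mq^{-1}-\mq'^{-1}=-\mq^{-1}(\mq-\mq')\mq'^{-1}$ together with $\mq-\mq'=\ma^{\top}\mdiag(w-w')\ma$: I draw a signed diagonal estimator $\tilde{\mDelta}$ of $\mdiag(w-w')$ by the same leverage-score sampling scheme, and two fresh independent estimators $\mn_{1},\mn_{3}$ satisfying $\E[\mn_{1}]=\mq^{-1}$, $\E[\mn_{3}]=\mq'^{-1}$ (produced by the same construction used for $\mh_{1}^{-1},\mh_{3}^{-1}$, independent of everything else), and define $\mh_{4}\defeq-\mn_{1}\ma^{\top}\tilde{\mDelta}\ma\mn_{3}$. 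Independence of the three factors yields $\E[\mh_{4}]=\mq^{-1}-\mq'^{-1}$.

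For the Frobenius bounds, I plan to reduce to a scalar quantity using Lemma~\ref{lem:matrix_stability}. Write $\mb\defeq\mdiag(\sqrt{x'/s'})\ma$, so $\mb^{\top}\mb=\mq'$, and observe
\[
\ma^{\top}\mdiag(\tilde v)\ma=\mb^{\top}\mdiag((s'/x')\tilde v)\mb,
\]
whence $\|\mq'^{-1/2}\mh_{2}\mq'^{-1/2}\|_{F}^{2}=\|(s'/x')\tilde v\|_{\mproj^{(2)}(\mb)}^{2}$. Taking expectations and applying Lemma~\ref{lem:matrix_stability_expectation} bounds this by $(1+1/k)\|(s'/x')v\|_{\sigma(\mb)}^{2}$. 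A short Taylor expansion, valid because $x\approx_{0.1}x'$ and $s\approx_{0.1}s'$, gives
\[
(s'_{i}/x'_{i})v_{i}=\sqrt{w_{i}/w_{i}'}\bigl(\sqrt{w_{i}/w_{i}'}-1\bigr)=O\bigl(|\ln x_{i}-\ln x'_{i}|+|\ln s_{i}-\ln s'_{i}|\bigr),
\]
and since $\sigma(\mb)\le\tau(x,s)$ coordinatewise (after accounting for the $\frac{d}{n}\vones$ shift using the closeness of $\otau$ to $\tau(x,s)$), the claimed bound for $\mh_{2}$ follows. The analysis for $\mh_{4}$ is the same up to replacing $v$ by $w-w'$ and absorbing the $\mn_{1},\mn_{3}$ factors using $\mn_{1},\mn_{3}\approx_{O(1)}\mq'^{-1}$, which contributes only constant-factor slack.

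The main obstacle will be the spectral approximation $\mh_{2}=\E[\mh_{2}]\pm\epsilon_{H}\mq'$, because $\tilde v$ is signed and a direct matrix Chernoff bound does not apply. I plan to handle this by splitting $v=v_{+}-v_{-}$ into its positive and negative parts, sampling each with the same $p_{i}$ but coupling the indicators so that $\tilde v_{+}-\tilde v_{-}=\tilde v$, and applying the PSD matrix Chernoff inequality separately to $\ma^{\top}\mdiag(\tilde v_{\pm})\ma$ with scale $\mq'$. Since $|v_{i}|/w'_{i}=O(1)$ and the $p_{i}$ are chosen proportional to $\otau_{i}$, the relative error bound $\epsilon_{H}$ follows from $k=\Theta(\epsilon_{H}^{-2}\log n)$, and a union bound controls failure. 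The same argument, applied sandwich-style to $\mq'^{1/2}\mh_{4}\mq'^{1/2}$ through the independent $\mn_{1},\mn_{3}$, gives the spectral bound for $\mh_{4}$. The time complexities follow from the $\otilde(d\epsilon_{H}^{-2})$ expected support of $\tilde v$ and $\tilde{\mDelta}$: $\mh_{2}b$ costs $O(d^{2}\epsilon_{H}^{-2}\log n)$, and $\mh_{4}b$ additionally requires two inverse-approximation applications plus the sparse middle matrix, giving $O(T\log n+d^{2}\epsilon_{H}^{-2}\log^{2}n)$ as stated.
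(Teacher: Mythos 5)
Your construction of $\mh_{2}$ is the same as the paper's (coupled leverage-score sampling of the two diagonal matrices, equivalently a single signed sampling of $v=w-\sqrt{ww'}$), and the reduction to $\|v/w'\|_{\sigma(\mb)}^{2}$ via Lemma~\ref{lem:matrix_stability} and Lemma~\ref{lem:matrix_stability_expectation}, followed by the Taylor expansion and the fact that $\sigma(\ms'^{-1/2}\mx'^{1/2}\ma)\lesssim\tau(x,s)$ under the stated $\approx_{0.1}$ hypotheses, is the same argument.

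Your construction of $\mh_{4}$ is genuinely different from the paper's. The paper expands $\mq^{-1}-\mq'^{-1}$ into the full Neumann series $\mq'^{-1}\sum_{k\geq1}(-\ma^{\top}\mDelta\ma\,\mq'^{-1})^{k}$, replaces each factor by an independent unbiased estimator, and truncates at a geometric random depth $X$ with $2^{k}$ reweighting to preserve unbiasedness (the same trick as in $\textsc{Solve}$); this uses only unbiased solvers for $\mq'^{-1}$ and costs an extra $\log n$ in the running time from the series length. You instead use the single telescoping identity $\mq^{-1}-\mq'^{-1}=-\mq^{-1}(\mq-\mq')\mq'^{-1}$ with three independent factors $\mn_{1}\approx\mq^{-1}$, $\tilde{\mDelta}$, $\mn_{3}\approx\mq'^{-1}$, each unbiased, so independence gives $\E[\mh_{4}]=\mq^{-1}-\mq'^{-1}$ in one step. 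Since Algorithm~\ref{alg:maintain_INfeasibility} already posits unbiased solvers for both $\mq^{-1}$ (for $\mh_{1}$) and $\mq'^{-1}$ (for $\mh_{3}$), nothing new is being assumed. Your Frobenius bound $\|\mq'^{1/2}\mh_{4}\mq'^{1/2}\|_{F}\leq\|\mq'^{1/2}\mn_{1}\mq'^{1/2}\|_{2}\,\|\mq'^{-1/2}\ma^{\top}\tilde{\mDelta}\ma\mq'^{-1/2}\|_{F}\,\|\mq'^{1/2}\mn_{3}\mq'^{1/2}\|_{2}$ reduces to the same middle-matrix estimate as the paper, and your spectral bound follows from expanding the product into a sum of terms each carrying at least one $O(\epsilon_{H})$ error factor ($\mn_{1}-\mq^{-1}$, $\mn_{3}-\mq'^{-1}$, or the matrix Chernoff deviation of $\ma^{\top}\tilde{\mDelta}\ma$, handled via the positive/negative split of $w-w'$). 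Your route is cleaner, avoids the geometric-truncation bookkeeping, and in fact gives the slightly sharper time $O(T+d^{2}\epsilon_{H}^{-2}\log n)$ for $\mh_{4}b$, which is within the stated bound.

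One point worth stating explicitly (you leave it implicit): $\mn_{1}$ and $\mn_{3}$ must be fresh draws independent of $\mh_{1}$, $\mh_{3}$, and $\tilde{\mDelta}$; reusing $\mh_{1}^{-1}$ or $\mh_{3}^{-1}$ would break the factorization of expectations needed both for unbiasedness and for the variance computation in Lemma~\ref{lem:change_xbar_sbar}. You do say "fresh independent," so the proof is sound; I flag it only because this independence is load-bearing and easy to overlook.
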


\begin{proof}
The statement for $\|\mq'^{-1/2}\mh_{2}\mq'^{-1/2}\|_{F}^{2}$ follows
directly from Lemma~\ref{lem:matrix_stability} and Lemma~\ref{lem:matrix_stability_expectation}
when we perform the same leverage score sampling on both $\ma^{\top}\mx\ms{}^{-1}\ma$
and $\ma^{\top}(\mx\mx'){}^{\frac{1}{2}}(\ms\ms'){}^{-\frac{1}{2}}\ma$
(i.e. both matrices sample the same entries of their diagonal).

For the other statement, we let $\md=\mx'\ms'^{-1}$ and $\mDelta=\mx\ms^{-1}-\mx'\ms'^{-1}$.
By the assumption, we have $\max_{i}|\md_{ii}^{-1}\mDelta_{ii}|\leq\frac{1}{4}$.
By Taylor expansion, we have 
\begin{align*}
(\ma^{\top}(\md+\Delta)\ma)^{-1}-(\ma^{\top}\md\ma)^{-1} & =(\ma^{\top}\md\ma)^{-1}\sum_{k\ge1}(-(\ma^{\top}\Delta\ma)(\ma^{\top}\md\ma)^{-1})^{k}.
\end{align*}
Similar to Lemma~\ref{lem:preconditioner}, we can use this to generate
a spectral approximation with good expectation by truncating this
series by some random $X$ with $\P[X\ge k]=2^{-k}$ and scaling the
terms with $2^{k}$ so that the expectation stays the same. More precisely,
we let $\mm^{(k)}$ be independent solvers satisfying $\mm^{(k)}\approx_{\epsilon_{H}/4}(\ma^{\top}\md\ma)^{-1}$
with $\E[\mm^{(k)}]=(\ma^{\top}\md\ma)^{-1}$, we let $\mn^{(k)}$
are i.i.d random leverage score samplings $\ma^{\top}\Delta^{(k)}\ma$
of $\ma^{\top}\Delta\ma$ such that 
\[
\|(\ma^{\top}\md\ma)^{\frac{1}{2}}\mn^{(i)}(\ma^{\top}\md\ma)^{\frac{1}{2}}\|_{2}\leq e^{\epsilon_{H}/4}\max_{i}|\md_{ii}^{-1}\mDelta_{ii}|
\]
 and $\E[\mn^{(k)}]=\ma^{\top}\Delta\ma$. Now, we define
\[
\mh_{4}\defeq\mm^{(0)}\sum_{k=1}^{X}(-2)^{k}\prod_{i=1}^{k}(\mn^{(i)}\mm^{(i)}).
\]
Since $\mm^{(i)}$, $\mn^{(i)}$ are independent, we have 
\begin{align*}
\E_{X,\mm,\mn}[\mh_{4}] & =(\ma^{\top}\md\ma)^{-1}\E_{X}\left[\sum_{k=1}^{X}2^{k}(-(\ma^{\top}\Delta\ma)(\ma^{\top}\md\ma)^{-1})^{k}\right]\\
 & =(\ma^{\top}\md\ma)^{-1}\sum_{k=1}^{\infty}\P[X\ge k]2^{k}(-(\ma^{\top}\Delta\ma)(\ma^{\top}\md\ma)^{-1})^{k}\\
 & =(\ma^{\top}\md\ma)^{-1}\sum_{k=1}^{\infty}(-(\ma^{\top}\Delta\ma)(\ma^{\top}\md\ma)^{-1})^{k}\\
 & =(\ma^{\top}(\md+\Delta)\ma)^{-1}-(\ma^{\top}\md\ma)^{-1}.
\end{align*}
To bound the Frobenius norm, we let $\my\defeq(\ma^{\top}\md\ma)^{\frac{1}{2}}$
and note that
\[
\|\my^{-1}\mh_{4}\my^{-1}\|_{F}^{2}\leq\left(\sum_{k=1}^{X}(\frac{2}{3})^{k}\right)\cdot\sum_{k=1}^{X}(\frac{3}{2})^{k}\|\my^{-1}\mm^{(0)}(-2)^{k}\prod_{i=1}^{k}(\mn^{(i)}\mm^{(i)})\my^{-1}\|_{F}^{2}
\]
where we used that $\|\sum_{i=1}^{X}\ma_{i}\|_{F}^{2}\leq(\sum_{i=1}^{X}\|\ma_{i}\|_{F})^{2}\leq(\sum_{k=1}^{X}(\frac{2}{3})^{k})\cdot(\sum_{i=1}^{X}(\frac{3}{2})^{k}\|\ma_{i}\|_{F}^{2})$.

Taking expectation, we have
\begin{align*}
\E_{X,\mm,\mn}\left[\|\my^{-1}\mh_{4}\my^{-1}\|_{F}^{2}\right] & \leq\left(\sum_{k=1}^{\infty}(\frac{2}{3})^{k}\right)\cdot\sum_{k=1}^{\infty}3^{k}\E_{\mm,\mn}\left[\|\my^{-1}\mm^{(0)}\prod_{i=1}^{k}(\mn^{(i)}\mm^{(i)})\my^{-1}\|_{F}^{2}\right]\\
 & \leq3\sum_{k=1}^{\infty}3^{k}\E_{\mm,\mn}\left[\|\my^{-1}\mm^{(0)}\my^{-1}\prod_{i=1}^{k}(\my\mn^{(i)}\my\my^{-1}\mm^{(i)}\my^{-1})\|_{F}^{2}\right]
\end{align*}
Using that $\|\my^{-1}\mm^{(i)}\my^{-1}\|_{2}\leq e^{\epsilon_{H}/4}$
and $\|\my\mn^{(i)}\my\|_{2}\leq e^{\epsilon_{H}/4}\max_{i}|\md_{ii}^{-1}\mDelta_{ii}|\leq e^{\epsilon_{H}/4}$
as $x\approx_{0.1}x'$ and $s\approx_{0.1}s'$. Using $\epsilon_{H}\leq\frac{1}{10}$,
we have that $\|\my^{-1}\mm^{(i)}\my^{-1}\|_{2}\leq e^{1/40}$ and
$\|\my\mn^{(i)}\my\|_{2}\leq e^{1/40}\cdot\frac{1}{4}$. Hence, we
have
\begin{align*}
\E_{X,\mm,\mn}\left[\|\my^{-1}\mh_{4}\my^{-1}\|_{F}^{2}\right] & \leq3\sum_{k=1}^{\infty}3^{k}e^{1/40(2k-2)}\frac{1}{4^{k-1}}\E\left[\|\my\mn^{(1)}\my\|_{F}^{2}\right]\\
 & =O(1)\cdot\E\left[\|\my\mn^{(1)}\my\|_{F}^{2}\right]
\end{align*}
Finally, Lemma~\ref{lem:matrix_stability} and Lemma \ref{lem:matrix_stability_expectation}
shows that
\begin{align*}
\E\left[\|\my\mn^{(1)}\my\|_{F}^{2}\right] & =\E\left[\|(\ma^{\top}\md\ma)^{-\frac{1}{2}}\mn^{(1)}(\ma^{\top}\md\ma)^{-\frac{1}{2}}\|_{F}^{2}\right]\\
 & =O(1)\cdot\|\diag(\md^{-\frac{1}{2}}\Delta\md^{-\frac{1}{2}})\|_{\sigma(\md^{1/2}\ma)}^{2}\\
 & =O(1)\cdot O(\|\ln x'-\ln x\|_{\tau(x,s)}^{2}+\|\ln s'-\ln s\|_{\tau(x,s)}^{2})
\end{align*}
\end{proof}
\begin{lem}[Infeasibility Change Due to Norm Change]
\label{lem:change_xbar_sbar} Assume that $\mx s\approx_{1}\mu\tau$,
$x'\approx_{0.1}x$, $s'\approx_{0.1}s\in\R^{n}$. Consider $x^{\new}=x+\delta_{x}\in\R^{n}$
defined in Algorithm \ref{alg:maintain_INfeasibility}. Then, we have
that
\begin{align*}
\E_{\delta_{x}}[\Phi_{b}(x^{\new},x,s,\mu)]= & (1+O(\epsilon_{H}^{2}))\cdot\Phi_{b}(x,x',s',\mu)\\
 & +O(\epsilon_{b}\log^{2}d)\cdot(\|\ln x'-\ln x\|_{\tau(x,s)}^{2}+\|\ln s'-\ln s\|_{\tau(x,s)}^{2}).
\end{align*}
Furthermore, 
\begin{align*}
\|\mx^{-1}(x^{\new}-x)\|_{\overline{\tau}}^{2}\le & O(\epsilon_{b}\log^{2}(d/\rho)+\Phi_{b}(x,x',s',\mu))\cdot(\|\ln x'-\ln x\|_{\tau(x,s)}^{2}+\|\ln s'-\ln s\|_{\tau(x,s)}^{2})\\
 & +O(\epsilon_{b}\epsilon_{H}^{2}d\log^{2}(d/\rho))\\
\|\mx^{-1}(x^{\new}-x)\|_{\infty}^{2} & \le O(\epsilon_{b}\log^{2}(d/\rho)+\Phi_{b}(x,x',s',\mu))\cdot(\|\ln x'-\ln x\|_{\tau(x,s)}^{2}+\|\ln s'-\ln s\|_{\tau(x,s)}^{2})\\
 & +O(\epsilon_{b}\epsilon_{H}^{2}d\log^{2}(d/\rho))
\end{align*}
with probability at least $1-\rho$.
\end{lem}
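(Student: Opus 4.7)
The plan is to bound $\Phi_{b}(x^{\new},x,s,\mu) = \mu^{-1}\|\ma^{\top}x^{\new}-b\|_{\mq^{-1}}^{2}$ where $\mq = \ma^{\top}\mx\ms^{-1}\ma$. Since $\delta_{x} = \mx\ms^{-1}\ma\delta_{\lambda}$, I would rewrite the error as $\ma^{\top}x^{\new}-b = \mq\delta_{\lambda}-r$ where $r \defeq b-\ma^{\top}x$. Using the bias–variance identity $\mathbb{E}\|Y\|_{\mq^{-1}}^{2} = \|\mathbb{E}Y\|_{\mq^{-1}}^{2}+\tr(\mq^{-1}\mathrm{Cov}(Y))$ applied to $Y = \mq\delta_{\lambda}-r$, the analysis splits cleanly into a deterministic \emph{bias} term and a \emph{variance} term.

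For the bias, I would use the independence of $\mh_{1},\mh_{2},\mh_{3},\mh_{4},\delta_{b}^{(1)},\delta_{b}^{(2)}$ together with the stated expectations to compute
\begin{align*}
\mathbb{E}[\delta_{\lambda}] & = \mq^{-1}\mathbb{E}[\mh_{2}]\mq'^{-1}r+(\mq^{-1}-\mq'^{-1})r = \mq^{-1}r-\mq^{-1}\ma^{\top}\md\md'\ma\,\mq'^{-1}r,
\end{align*}
where $\md \defeq \mx^{1/2}\ms^{-1/2}$, $\md' \defeq \mx'^{1/2}\ms'^{-1/2}$, so $\mq\mathbb{E}[\delta_{\lambda}]-r = -\ma^{\top}\md\md'\ma\,\mq'^{-1}r$. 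Writing $\ma^{\top}\md\md'\ma = \mq'+\ma^{\top}\md'(\md-\md')\ma$ and expanding the squared $\mq^{-1}$-norm, I get a term $\|r\|_{\mq^{-1}}^{2}$, a cross term, and a quadratic correction. Using $\mq\approx\mq'$ (from $x\approx x'$, $s\approx s'$) to control $\|r\|_{\mq^{-1}}^{2}$ against $\|r\|_{\mq'^{-1}}^{2}$ and applying Cauchy--Schwarz plus Lemma~\ref{lem:matrix_stability} to the correction term yields $\|\mathbb{E}[\mq\delta_{\lambda}]-r\|_{\mq^{-1}}^{2} \le (1+O(\epsilon_{H}^{2}))\|r\|_{\mq'^{-1}}^{2}+O(\|\ln(x'/x)\|_{\tau}^{2}+\|\ln(s'/s)\|_{\tau}^{2})\cdot\|r\|_{\mq'^{-1}}^{2}$, which matches the first RHS term of the lemma after absorbing the cross-coupling into the additive slack.

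For the variance, I would decompose the fluctuations of $\delta_{\lambda}$ into five independent sources and bound each separately in $\|\cdot\|_{\mq}$. Conditioning on the other factors, each of $\mh_{1}^{-1},\mh_{3}^{-1}$ contributes at most $O(\epsilon_{H}^{2})$ times the squared norm of its operand—which is itself at most $O(\|r\|_{\mq'^{-1}}^{2})$—giving an $O(\epsilon_{H}^{2})\Phi_{b}$ contribution. The contributions of $\mh_{2}-\mathbb{E}[\mh_{2}]$ and $\mh_{4}-\mathbb{E}[\mh_{4}]$ are controlled by their stated Frobenius bounds from Lemma~\ref{lem:generate_H2_H4} combined with the fact that these act on (or are multiplied by) vectors of $\mq'^{-1}$-norm at most $O(\|r\|_{\mq'^{-1}})$; here I use the inequality $\|\mathbf{M}v\|_{\mq^{-1}}^{2}\le\|\mq'^{-1/2}\mathbf{M}\mq'^{-1/2}\|_{F}^{2}\cdot\|v\|_{\mq'}^{2}$ on the appropriate sandwich. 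Finally, the fluctuations of $\delta_{b}^{(1)},\delta_{b}^{(2)}$ feed in through Lemma~\ref{lem:random_delta_b}'s second moment bound $\mu^{-1/2}\|\mathbf{M}\mq'^{-1/2}(\delta_{b}-\mathbb{E}\delta_{b})\|_{2} = \|\mathbf{M}\|_{F}\cdot O(\sqrt{\epsilon_{b}}\log(d/\rho))$, which supplies the $\epsilon_{b}\log^{2}d$ factor. Summing these contributions gives the additive $O(\epsilon_{b}\log^{2}d)\cdot(\|\ln x'-\ln x\|_{\tau(x,s)}^{2}+\|\ln s'-\ln s\|_{\tau(x,s)}^{2})$ term.

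For the high-probability bounds on $\|\mx^{-1}(x^{\new}-x)\|_{\otau}^{2}$ and $\|\mx^{-1}(x^{\new}-x)\|_{\infty}^{2}$, I would note $\mx^{-1}(x^{\new}-x) = \ms^{-1}\ma\delta_{\lambda}$, so the $\otau$-norm-squared equals $\|\delta_{\lambda}\|_{\ma^{\top}\ms^{-1}\mT\ms^{-1}\ma}^{2}$; since $\mx s\approx\mu\tau$, this is $\approx \mu^{-1}\|\delta_{\lambda}\|_{\mq}^{2}$. Split $\delta_{\lambda} = \mathbb{E}[\delta_{\lambda}]+(\delta_{\lambda}-\mathbb{E}[\delta_{\lambda}])$, use the bias computation above for the first piece (giving the $\Phi_{b}$ term), and apply a matrix Bernstein/sub-Gaussian tail bound factor-by-factor on the centered part—leveraging the $\approx_{\epsilon_{H}}$ spectral approximations and the high-probability concentration guarantees of Lemma~\ref{lem:random_delta_b}—to convert the expected variance bound from the previous paragraph into a tail bound with the claimed additive $\epsilon_{b}\epsilon_{H}^{2}d\log^{2}(d/\rho)$ term. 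The $\ell_{\infty}$ bound follows analogously, using the row-wise leverage-score-weighted estimate $|\indicVec i^{\top}\ms^{-1}\ma\delta_{\lambda}| \le \sigma_{i}^{1/2}\|\delta_{\lambda}\|_{\mq}/\sqrt{s_i x_i}$.

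The main obstacle I anticipate is carefully routing the bias contribution to ensure the multiplicative prefactor is only $(1+O(\epsilon_{H}^{2}))$ rather than something like $(1+O(1))$: the decomposition of $\ma^{\top}\md\md'\ma$ produces a cross term $\langle r,\ma^{\top}\md'(\md-\md')\ma\mq'^{-1}r\rangle_{\mq^{-1}}$ that is only linear in $\|\ln(x'/x)\|+\|\ln(s'/s)\|$, so one must pair it with Cauchy--Schwarz against a $\sqrt{\Phi_{b}}$ factor (absorbing into the additive term via AM--GM using the $\epsilon_{b}$ slack) rather than leaving it multiplicative, and one must separately check that the $(\mh_{1}^{-1}-\mq^{-1})$ and $(\mh_{3}^{-1}-\mq'^{-1})$ fluctuations do not interact multiplicatively with the $\mh_{2},\mh_{4}$ fluctuations since they are independent.
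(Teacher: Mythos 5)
Your variance analysis and the high-probability bounds are directionally in line with the paper, but your treatment of the bias term has a genuine gap that your own ``main obstacle'' paragraph correctly senses but does not resolve.

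You compute $\E[\delta_{\lambda}]$ correctly and arrive at $\mq\E[\delta_{\lambda}]-r=-\ma^{\top}\md\md'\ma\,\mq'^{-1}r$ with $\md=\mx^{1/2}\ms^{-1/2}$, $\md'=\mx'^{1/2}\ms'^{-1/2}$. You then propose writing $\ma^{\top}\md\md'\ma=\mq'+\ma^{\top}\md'(\md-\md')\ma$ and expanding the squared $\mq^{-1}$-norm; the leading term is then $\|r\|_{\mq^{-1}}^{2}=\mu\Phi_{b}(x,x,s,\mu)$, and you plan to compare this to $\|r\|_{\mq'^{-1}}^{2}=\mu\Phi_{b}(x,x',s',\mu)$ using $\mq\approx\mq'$. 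But the hypothesis is only $x'\approx_{0.1}x$, $s'\approx_{0.1}s$, so $\mq\approx_{0.2}\mq'$ and this comparison costs an $e^{\pm 0.2}=1+\Omega(1)$ multiplicative factor, not $1+O(\epsilon_{H}^{2})$. AM--GM on the cross term does not help you here: the loss is already in the leading term, not the cross term. Since the supermartingale argument of Lemma~\ref{lem:Phi_b_move_slowly} requires the multiplicative factor to be $1+O(d^{-1/2})$ over $\sqrt{d}$ steps, a $1+\Omega(1)$ loss per step would be fatal. The paper sidesteps this entirely with a projection identity: one never measures $r$ in $\mq^{-1}$-norm. Concretely,
\begin{align*}
\md\ma\,\mq^{-1}\bigl(\ma^{\top}\E[x^{\new}]-b\bigr)
&=\md\ma\,\mq^{-1}\ma^{\top}\md\,\md'\ma\,\mq'^{-1}(\ma^{\top}x-b)\\
&=\mproj(\md\ma)\cdot\md'\ma\,\mq'^{-1}(\ma^{\top}x-b),
\end{align*}
since $\md\ma\,\mq^{-1}\ma^{\top}\md=\mproj(\md\ma)$ is an orthogonal projection. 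Taking squared Euclidean norm and using $\|\mproj u\|_{2}\le\|u\|_{2}$ gives $\Phi_{b}(\E[x^{\new}],x,s,\mu)\le\Phi_{b}(x,x',s',\mu)$ exactly, with no multiplicative or additive loss. All of the $(1+O(\epsilon_{H}^{2}))$ and $O(\epsilon_{b}\log^{2}d)$ contributions then come from the variance term $\mu^{-1}\E\|\delta_{\lambda}-\E[\delta_{\lambda}]\|_{\mq}^{2}$, which you handle in roughly the right way (telescoping the product $\mh_{1}^{-1}\mh_{2}\mh_{3}^{-1}\delta_{b}^{(1)}$ one factor at a time, using the Frobenius bounds of Lemma~\ref{lem:generate_H2_H4} and the $\delta_{b}$ concentration of Lemma~\ref{lem:random_delta_b}); the paper's explicit four-term peeling is the concrete instantiation of what you sketch.

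The two high-probability movement bounds are obtained essentially as you describe: write $\mx^{-1}\delta_{x}=\ms^{-1}\ma\delta_{\lambda}$, use $\mx s\approx\mu\tau$ to reduce the $\tau$-norm to $\mu^{-1}\|\delta_{\lambda}\|_{\mq'}^{2}$, split $\delta_{\lambda}$ into the $\mh_{1}^{-1}\mh_{2}\mh_{3}^{-1}\delta_{b}^{(1)}$ and $\mh_{4}\delta_{b}^{(2)}$ parts, and apply the high-probability Frobenius-norm guarantees of Lemmas~\ref{lem:random_delta_b} and~\ref{lem:generate_H2_H4}; the $\ell_{\infty}$ bound uses $|e_{i}^{\top}\ms^{-1}\ma\delta_{\lambda}|^{2}\lesssim\sigma_{i}/(\mu\tau_{i})\cdot\|\delta_{\lambda}\|_{\mq'}^{2}$ as you suggest, with $\sigma_{i}\le\tau_{i}$ via Lemma~\ref{lem:sigma_approx}.
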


\begin{proof}
We define the following three matrices
\begin{align*}
\mDelta_{(\ma^{\top}\mx\ms^{-1}\ma)^{-1}} & =(\ma^{\top}\mx\ms{}^{-1}\ma)^{-1}-(\ma^{\top}\mx'\ms'^{-1}\ma)^{-1}\in\R^{d\times d},\\
\mDelta_{\mx^{1/2}\ms^{-1/2}} & =\mx{}^{1/2}\ms{}^{-1/2}-\mx'^{1/2}\ms'^{-1/2}\in\R^{n\times n},\\
\mDelta_{\mx^{1/2}\ms^{-1/2}\ma(\ma^{\top}\mx\ms^{-1}\ma)^{-1}} & =\mx{}^{1/2}\ms{}^{-1/2}\ma(\ma^{\top}\mx\ms{}^{-1}\ma)^{-1}-\mx'^{1/2}\ms'^{-1/2}\ma(\ma^{\top}\mx'\ms'^{-1}\ma)^{-1}\in\R^{n\times n}.
\end{align*}
and the following two matrices
\[
\mq=\ma^{\top}\mx\ms{}^{-1}\ma\text{ and }\mq'=\ma^{\top}\mx'\ms'{}^{-1}\ma\,.
\]
Let $\delta_{\lambda}\in\R^{d}$ be defined as Algorithm \ref{alg:maintain_INfeasibility}.
Then, we have 
\begin{align*}
\E[\delta_{\lambda}] & =\E[\mh_{1}^{-1}\mh_{2}\mh_{3}^{-1}\delta_{b}^{(1)}+\mh_{4}\delta_{b}^{(2)}]\\
 & =\E[\mh_{1}^{-1}]\cdot\E[\mh_{2}]\cdot\E[\mh_{3}^{-1}]\cdot\E[\delta_{b}^{(1)}]+\E[\mh_{4}]\cdot\E[\delta_{b}^{(2)}]\\
 & =\mq^{-1}\ma^{\top}\mx^{1/2}\ms^{-1/2}(\mx^{1/2}\ms^{-1/2}-\mx'^{1/2}\ms'^{-1/2})\ma\mq'^{-1}(b-\ma^{\top}x)+(\mq^{-1}-\mq'^{-1})(b-\ma^{\top}x)\\
 & =\mq{}^{-1}\ma^{\top}\mx{}^{1/2}\ms{}^{-1/2}\mDelta_{\mx^{1/2}\ms^{-1/2}}\ma\mq^{'-1}(b-\ma^{\top}x)+\mDelta_{(\ma^{\top}\mx\ms^{-1}\ma)^{-1}}(b-\ma^{\top}x)\\
 & =-\mq{}^{-1}\ma^{\top}\mx{}^{1/2}\ms{}^{-1/2}\mDelta_{\mx^{1/2}\ms^{-1/2}\ma(\ma^{\top}\mx\ms^{-1}\ma)^{-1}}(\ma^{\top}x-b).
\end{align*}
Hence, we have
\begin{align}
 & \mx{}^{1/2}\ms{}^{-1/2}\ma\mq{}^{-1}(\ma^{\top}\E[x^{\new}]-b)\nonumber \\
= & \mx{}^{1/2}\ms{}^{-1/2}\ma\mq^{-1}(\ma^{\top}x-b)+\mx{}^{1/2}\ms{}^{-1/2}\ma\mq^{-1}\ma^{\top}\mx\ms{}^{-1}\ma\E[\delta_{\lambda}]\nonumber \\
= & \mx{}^{1/2}\ms{}^{-1/2}\ma\mq^{-1}(\ma^{\top}x-b)-\mx{}^{1/2}\ms{}^{-1/2}\ma\mq^{-1}\ma^{\top}\mx{}^{1/2}\ms{}^{-1/2}\mDelta_{\mx^{1/2}\ms^{-1/2}\ma(\ma^{\top}\mx\ms^{-1}\ma)^{-1}}(\ma^{\top}x-b)\nonumber \\
= & \mx{}^{1/2}\ms{}^{-1/2}\ma\mq^{-1}(\ma^{\top}x-b)\nonumber \\
 & -\mx{}^{1/2}\ms{}^{-1/2}\ma\mq^{-1}\ma^{\top}\mx{}^{1/2}\ms{}^{-1/2}\mx{}^{1/2}\ms{}^{-1/2}\ma\mq^{-1}(\ma^{\top}x-b)\nonumber \\
 & +\mx{}^{1/2}\ms{}^{-1/2}\ma\mq^{-1}\ma^{\top}\mx{}^{1/2}\ms{}^{-1/2}\mx'{}^{1/2}\ms'{}^{-1/2}\ma\mq'{}^{-1}(\ma^{\top}x-b)\nonumber \\
= & \mx{}^{1/2}\ms{}^{-1/2}\ma\mq^{-1}\ma^{\top}\mx{}^{1/2}\ms{}^{-1/2}\mx'^{1/2}\ms'^{-1/2}\ma\mq'{}^{-1}(\ma^{\top}x-b)\label{eq:rewrite_XSA_E_x_new_b}
\end{align}
Hence, we have
\begin{align*}
 & \Phi_{b}(\E[x^{\new}],x,s,\mu)\\
= & \|\mx{}^{\frac{1}{2}}\ms{}^{-\frac{1}{2}}\ma(\ma^{\top}\mx\ms{}^{-1}\ma)^{-1}(\ma^{\top}\E[x^{\new}]-b)\|_{2}^{2}\\
= & \|\mx{}^{\frac{1}{2}}\ms{}^{-\frac{1}{2}}\ma(\ma^{\top}\mx\ms{}^{-1}\ma)^{-1}\ma^{\top}\mx{}^{\frac{1}{2}}\ms{}^{-\frac{1}{2}}\mx'^{\frac{1}{2}}\ms'^{-\frac{1}{2}}\ma(\ma^{\top}\mx'\ms'^{-1}\ma)^{-1}(\ma^{\top}x-b)\|_{2}^{2}\\
\leq & \|\mx'^{\frac{1}{2}}\ms'^{-\frac{1}{2}}\ma(\ma^{\top}\mx'\ms'^{-1}\ma)^{-1}(\ma^{\top}x-b)\|_{2}^{2}\\
= & \Phi_{b}(x,x',s',\mu).
\end{align*}
where the first step follows from definition of $\Phi_{b}$ (see (\ref{eq:Phi_b_x_barx_bars_mu})),
the second step follows from (\ref{eq:rewrite_XSA_E_x_new_b}), the
third step follows from property of a projection matrix, the last
step follows from definition of $\Phi_{b}$ (see (\ref{eq:Phi_b_x_barx_bars_mu})).

Hence, we have
\begin{align*}
\Phi_{b}(x^{\new},x,s,\mu)= & \Phi_{b}(\E[x^{\new}],x,s,\mu)+\mu^{-1}\|\ma^{\top}(x^{\new}-\E[x^{\new}])\|_{\mq^{-1}}^{2}\\
\le & \Phi_{b}(x,x',s',\mu)+\mu^{-1}\|\delta_{\lambda}-\E[\delta_{\lambda}]\|_{\mq}^{2}
\end{align*}
Note that
\begin{align}
\E\left[\|\delta_{\lambda}-\E[\delta_{\lambda}]\|_{\mq}^{2}\right]\le & \E\left[\|\mh_{1}^{-1}\mh_{2}\mh_{3}^{-1}\delta_{b}^{(1)}-\E[\mh_{1}^{-1}\mh_{2}\mh_{3}^{-1}\delta_{b}^{(1)}]\|_{\mq}^{2}\right]\nonumber \\
 & +\E\left[\|\mh_{4}\delta_{b}^{(2)}-\E[\mh_{4}\delta_{b}^{(2)}]\|_{\mq}^{2}\right].\label{eq:delta_change_infes}
\end{align}
For the first term in (\ref{eq:delta_change_infes}), and $\delta_{b}=b-\ma^{\top}x$
we have
\begin{align}
 & \|\mh_{1}^{-1}\mh_{2}\mh_{3}^{-1}\delta_{b}^{(1)}-\E[\mh_{1}^{-1}\mh_{2}\mh_{3}^{-1}\delta_{b}^{(1)}]\|_{\mq}\nonumber \\
\le & \|(\mh_{1}^{-1}\mh_{2}\mh_{3}^{-1}-\mq^{-1}\E[\mh_{2}]\mq'^{-1})\delta_{b}\|_{\mq}\nonumber \\
 & +\|\mh_{1}^{-1}\mh_{2}\mh_{3}^{-1}(\delta_{b}^{(1)}-\delta_{b})\|_{\mq}\nonumber \\
\le & \|(\mh_{1}^{-1}\mh_{2}\mh_{3}^{-1}-\mq^{-1}\mh_{2}\mq'^{-1})\delta_{b}\|_{\mq}\nonumber \\
 & +\|\mq^{-1}(\mh_{2}-\E[\mh_{2}])\mq'^{-1}\delta_{b}\|_{\mq}\nonumber \\
 & +\|\mh_{1}^{-1}\mh_{2}\mh_{3}^{-1}(\delta_{b}^{(1)}-\delta_{b})\|_{\mq}\nonumber \\
\le & \|\mh_{1}^{-1}\mh_{2}(\mh_{3}^{-1}-\mq'^{-1})\delta_{b}\|_{\mq}\label{eq:delta_change_infeas_new}\\
 & +\|(\mh_{1}^{-1}-\mq^{-1})\mh_{2}\mq'^{-1}\delta_{b}\|_{\mq}\nonumber \\
 & +\|\mq^{-1}(\mh_{2}-\E[\mh_{2}])\mq'^{-1}\delta_{b}\|_{\mq}\nonumber \\
 & +\|\mh_{1}^{-1}\mh_{2}\mh_{3}^{-1}(\delta_{b}^{(1)}-\delta_{b})\|_{\mq}\nonumber 
\end{align}
where we used $\E_{\mh_{1}}[\mh_{1}^{-1}]=\mq^{-1}$, $\E_{\mh_{3}}[\mh_{3}^{-1}]=\mq'^{-1}$
and $\E_{\delta_{b}^{(1)}}[\delta_{b}^{(1)}]=\delta_{b}$. Note that
$-\mq'\preceq\E_{\mh_{2}}[\mh_{2}]\preceq\mq'$, $\mh_{1}^{-1}\approx_{\epsilon_{H}}\mq^{-1}$,
$\mh_{2}=\E_{\mh_{2}}[\mh_{2}]\pm\epsilon_{H}\cdot\mq'$, $\mh_{3}^{-1}\approx_{\epsilon_{H}}\mq^{-1}$,
and $\mq\approx_{1}\mq'$ so the first three terms of (\ref{eq:delta_change_infeas_new})
are all upper bounded by 
\[
O(\epsilon_{H}^{2}\|\delta_{b}\|_{\mq'^{-1}}^{2})=\mu\cdot O(\epsilon_{H}^{2}\cdot\Phi_{b}(x,x',s',\mu)).
\]
For the last term of (\ref{eq:delta_change_infeas_new}) note that
\begin{align*}
\|\mh_{1}^{-1}\mh_{2}\mh_{3}^{-1}(\delta_{b}^{(1)}-\delta_{b})\|_{\mq}^{2}\le & 2\|\mq'^{-1/2}\mh_{2}\mh_{3}^{-1}(\delta_{b}^{(1)}-\delta_{b})\|^{2}\\
= & 2\|\mq'^{-1/2}\mh_{2}\mh_{3}^{-1}\mq'^{1/2}\mq'^{-1/2}(\delta_{b}^{(1)}-\delta_{b})\|^{2}\\
= & \|\mq'^{-1/2}\mh_{2}\mh_{3}^{-1}\mq'^{1/2}\|_{F}^{2}\cdot O(\mu\epsilon_{b}\log^{2}d)\\
= & \|\mq'^{-1/2}\mh_{2}\mq'^{-1/2}\|_{F}^{2}\cdot\|\mq'^{1/2}\mh_{3}^{-1}\mq'^{1/2}\|_{2}^{2}\cdot O(\mu\epsilon_{b}\log^{2}d)\\
\le & \|\mq'^{-1/2}\mh_{2}\mq'^{-1/2}\|_{F}^{2}\cdot O(\mu\epsilon_{b}\log^{2}d),
\end{align*}
where we used Lemma \ref{lem:random_delta_b}. In summary this means
the first term in (\ref{eq:delta_change_infes}) can be bounded via
\begin{align*}
\E_{\mh_{1}^{-1},\mh_{2},\mh_{3},\delta_{b}^{(1)}}\left[\|\mh_{1}^{-1}\mh_{2}\mh_{3}^{-1}\delta_{b}^{(1)}-\E[\mh_{1}^{-1}\mh_{2}\mh_{3}^{-1}\delta_{b}^{(1)}]\|_{\mq}^{2}\right]\le & O(\mu\epsilon_{H}^{2}\cdot\Phi_{b}(x,x',s',\mu))\\
 & +\E_{\mh_{2}}\left[\|\mq'^{-1/2}\mh_{2}\mq'^{-1/2}\|_{F}^{2}\right]\cdot O(\mu\epsilon_{b}\log^{2}d).
\end{align*}
Similarly, we have that
\begin{align*}
\E_{\mh_{4},\delta_{b}^{(2)}}\left[\|\mh_{4}\delta_{b}^{(2)}-\E_{\mh_{4},\delta_{b}^{(2)}}[\mh_{4}\delta_{b}^{(2)}]\|_{\mq}^{2}\right] & \le O(\mu\epsilon_{H}^{2}\Phi_{b}(x,x',s',\mu))\\
 & +\E_{\mh_{4}}\left[\|\mq'^{1/2}\mh_{4}\mq'^{1/2}\|_{F}^{2}\right]\cdot O(\mu\epsilon_{b}\log^{2}d)
\end{align*}
which via (\ref{eq:delta_change_infes}) leads to
\begin{align*}
\E[\Phi_{b}(x^{\new},x,s,\mu)]= & (1+O(\epsilon_{H}^{2}))\cdot\Phi_{b}(x,x',s',\mu)\\
 & +\E_{\mh_{2}}\left[\|\mq'^{-1/2}\mh_{2}\mq'^{-1/2}\|_{F}^{2}\right]\cdot O(\epsilon_{b}\log^{2}d)\\
 & +\E_{\mh_{4}}\left[\|\mq'{}^{1/2}\mh_{4}\mq'{}^{1/2}\|_{F}^{2}\right]\cdot O(\epsilon_{b}\log^{2}d)\\
= & (1+O(\epsilon_{H}^{2}))\cdot\Phi_{b}(x,x',s',\mu)\\
 & +O((\|\ln x'-\ln x\|_{\tau(x,s)}^{2}+\|\ln s'-\ln s\|_{\tau(x,s)}^{2})\epsilon_{b}\log^{2}d),
\end{align*}
where we used Lemma \ref{lem:generate_H2_H4}.

For the movement of the step, note that
\begin{equation}
\|\mx^{-1}(x^{\new}-x)\|_{\tau(x,s)}^{2}=\|\mx^{-1}\mx\ms{}^{-1}\ma\delta_{\lambda}\|_{\tau(x,s)}^{2}=O(\mu^{-1})\cdot\|\delta_{\lambda}\|_{\ma^{\top}\mx'\ms'^{-1}\ma}^{2}\label{eq:x_change}
\end{equation}
where we used that $\mx s\approx_{1}\mu\tau(x,s),\ox'\approx_{1}\ox\approx_{1}x\in\R^{n}$.
Let $\delta_{\lambda}^{(1)}=\mh_{1}^{-1}\mh_{2}\mh_{3}^{-1}\delta_{b}^{(1)}\in\R^{d}$.
Using Lemma \ref{lem:random_delta_b}, we have that
\begin{align*}
\|\delta_{\lambda}^{(1)}\|_{\ma^{\top}\mx'\ms'^{-1}\ma}^{2}= & O(1)\cdot\|\mq'^{-\frac{1}{2}}\mh_{2}\mh_{3}^{-1}\mq'^{\frac{1}{2}}\mq'^{-\frac{1}{2}}\delta_{b}^{(1)}\|_{2}^{2}\\
= & O(1)\cdot\|\mq'^{-\frac{1}{2}}\mh_{2}\mh_{3}^{-1}\mq'^{\frac{1}{2}}\mq'^{-\frac{1}{2}}(\delta_{b}^{(1)}-\E[\delta_{b}^{(1)}])\|_{2}^{2}\\
 & +O(1)\cdot\|\mq'^{-\frac{1}{2}}\mh_{2}\mh_{3}^{-1}\mq'^{\frac{1}{2}}\mq'^{-\frac{1}{2}}(b-\ma^{\top}x)\|_{2}^{2}\\
= & O(\mu)\cdot\|\mq'^{-\frac{1}{2}}\mh_{2}\mh_{3}^{-1}\mq'^{\frac{1}{2}}\|_{F}^{2}\cdot(\epsilon_{b}\log^{2}(d/\rho))\\
 & +O(\mu)\cdot\|\mq'^{-\frac{1}{2}}\mh_{2}\mh_{3}^{-1}\mq'^{\frac{1}{2}}\|_{2}^{2}\cdot\Phi_{b}(x,x',s',\mu)
\end{align*}
with probability $1-\rho$. Using the definition of $\mh_{2}$ and
Lemma \ref{lem:generate_H2_H4}, we have
\begin{align*}
\|\mq'^{-\frac{1}{2}}\mh_{2}\mh_{3}^{-1}\mq'^{\frac{1}{2}}\|_{2}^{2} & =\|\mq'^{-\frac{1}{2}}\mh_{2}\mq'^{-\frac{1}{2}}\mq'^{\frac{1}{2}}\mh_{3}^{-1}\mq'^{\frac{1}{2}}\|_{2}^{2}\\
 & \leq\|\mq'^{-\frac{1}{2}}\mh_{2}\mq'^{-\frac{1}{2}}\|_{2}^{2}\cdot\|\mq'^{\frac{1}{2}}\mh_{3}^{-1}\mq'{}^{\frac{1}{2}}\|_{2}^{2}\\
 & =O(1)\cdot(\|\mq'^{-\frac{1}{2}}\E[\mh_{2}]\mq'^{-\frac{1}{2}}\|_{2}^{2}+\|\mq'^{-\frac{1}{2}}(\mh_{2}-\E[\mh_{2}])\mq'^{-\frac{1}{2}}\|_{2}^{2})\\
 & =O(1)\cdot\|\mq'^{-\frac{1}{2}}\E[\mh_{2}]\mq'^{-\frac{1}{2}}\|_{2}^{2}+O(\epsilon_{H}^{2})\\
 & \le O(\E[\|\mq'^{-\frac{1}{2}}\mh_{2}\mq'^{-\frac{1}{2}}\|_{2}^{2}]+\epsilon_{H}^{2})\\
 & =O(\|\ln x'-\ln x\|_{\tau(x,s)}^{2}+\|\ln s'-\ln s\|_{\tau(x,s)}^{2}+\epsilon_{H}^{2})
\end{align*}
where we used $\|\mq'^{\frac{1}{2}}\mh_{3}^{-1}\mq'{}^{\frac{1}{2}}\|_{2}^{2}=O(1)$,
$\E[x^{2}]\ge\E[x]^{2}$, $\|\ma\|_{2}^{2}\leq\|\ma\|_{F}^{2}$, and
Lemma (\ref{lem:generate_H2_H4}).

Similarly, we have
\[
\|\mq'^{-\frac{1}{2}}\mh_{2}\mh_{3}^{-1}\mq'^{\frac{1}{2}}\|_{F}^{2}\leq O(\|\ln x'-\ln x\|_{\tau(x,s)}^{2}+\|\ln s'-\ln s\|_{\tau(x,s)}^{2}+d\epsilon_{H}^{2}).
\]
Hence, we have
\begin{align*}
\|\delta_{\lambda}^{(1)}\|_{\mq'}^{2}= & O(\mu(\epsilon_{b}\log^{2}(d/\rho)+\Phi_{b}(x,x',s',\mu)))\cdot(\|\ln x'-\ln x\|_{\tau(x,s)}^{2}+\|\ln s'-\ln s\|_{\tau(x,s)}^{2}).\\
 & +O(\mu\epsilon_{b}\epsilon_{H}^{2}d\log^{2}(d/\rho))
\end{align*}
Similarly, we have the same bound for $\|\delta_{\lambda}^{(2)}\|_{\mq'}^{2}$.
Putting these two into (\ref{eq:x_change}) gives the result.

Finally for the $\ell_{\infty}$-norm we have
\begin{align*}
\|\mx^{-1}\delta_{x}\|_{\infty}= & \max_{i\in[n]}|e_{i}^{\top}\mx^{-1}\delta_{x}|
\end{align*}
where for any $i\in[n]$ we have 
\begin{align*}
|e_{i}^{\top}\mx^{-1}\delta_{x}|^{2}= & |e_{i}^{\top}\ms^{-1}\ma\delta_{\lambda}|^{2}\\
= & |e_{i}^{\top}\ms^{-1}\ma\mq'^{-1/2}\mq'^{1/2}\delta_{\lambda}|^{2}\\
\le & \|\mq'^{1/2}\ma\ms e_{i}\|_{2}^{2}\cdot\|\mq'^{1/2}\delta_{\lambda}\|_{2}^{2}\\
= & O(1)\cdot\|\mq^{1/2}\ma\mx^{1/2}\ms^{-1/2}(\mx\ms)^{-1/2}e_{i}\|_{2}^{2}\cdot\|\delta_{\lambda}\|_{\ma\mx'\ms'^{-1}\ma}^{2}\\
= & O((\mu\tau(x,s)_{i})^{-1})\cdot\|\mq^{1/2}\ma\mx^{1/2}\ms^{-1/2}e_{i}\|_{2}^{2}\cdot\|\delta_{\lambda}\|_{\ma\omx\oms^{-1}\ma}^{2}\\
= & O((\mu\tau(x,s)_{i})^{-1})\cdot\sigma(x,s)_{i}\cdot\|\delta_{\lambda}\|_{\ma\omx\oms^{-1}\ma}^{2}\\
= & O(\epsilon_{b}\epsilon_{H}d\log^{2}(d/\rho)+(\epsilon_{b}\log^{2}(d/\rho)+\Phi_{b}(x,x',s',\mu)))\cdot(\|\ln x'-\ln x\|_{\tau}^{2}+\|\ln s'-\ln s\|_{\tau}^{2}).
\end{align*}
where the first step follows from definition of $\delta_{x}$, the
fifth step follows from $xs\approx\mu\tau(x,s)$, the sixth step follows
from the definition of leverage scores, and the last step follows
from our previously proven bound on $\|\delta_{\lambda}\|_{\ma\omx\oms^{-1}\ma}$
and Lemma \ref{lem:sigma_approx} to bound $\sigma(x,s)\le\tau(x,s)$.
\end{proof}

\subsection{Improving Infeasibility\label{subsec:improve_infeasibility}}

From Section \ref{subsec:fes_x} and \ref{subsec:fes_xbar}, we see
that $\Phi_{b}$ increases slowly over time. Here we show in Lemma
\ref{lem:Phi_b_move_slowly}, that over $\sqrt{d}/\log^{\power}n$
iterations of the IPM, the potential $\Phi_{b}$ changes only by a
constant factor. Intuitively, this can be seen by the IPM calling
$\textsc{MaintainFeasibility}$ (Algorithm~\ref{alg:maintain_Feasibility})
which in turn calls Algorithm~\ref{alg:maintain_INfeasibility}.
The previous subsection showed that each call to Algorithm~\ref{alg:maintain_INfeasibility}
increases $\Phi_{b}$ by only a small amount. As seen in Line \ref{line:correction_step}
of Algorithm~\ref{alg:maintain_Feasibility}, after $\sqrt{d}/\log^{\power}n$
iterations we decrease the potential again. Lemma \ref{lem:decrease_Phib}
shows that Line \ref{line:correction_step} does indeed decrease the
potential $\Phi_{b}$sufficiently.

\begin{algorithm2e}[!t]

\caption{Maintaining Feasibility}

\label{alg:maintain_Feasibility}

\SetKwProg{Proc}{procedure}{}{}

\Proc{\textsc{MaintainFeasibility}$(x\in\R^{n},s\in\R^{n},\overline{\tau}\in\R^{n})$}{

\State $\delta_{\lambda}\leftarrow\textsc{MaintainInfeasibility}(x,s,\overline{\tau})$
\tcp*{Make $\Phi_{b}(x+\mx\ms{}^{-1}\ma\delta_{\lambda},x,s,\mu)$
unbiased, so it stays small with constant probability between any
two execution of the following if-branch:}

\State \If{it has been $\sqrt{d}/\log^{\power}n$ iterations since
this branch was executed}{ \Comment{Reduce $\Phi_{b}$}

\State $\delta_{\lambda}=\delta_{\lambda}+\mh^{-1}(b-\ma^{\top}x)\in\R^{n}$
with $\mh\approx_{\epsilon_{H}}\mq\in\R^{d\times d}$\Comment{$\mq=\ma^{\top}\mx\ms^{-1}\ma$}\label{line:correction_step}

}

\LineComment{With constant probability $\Phi_{b}(x+\mx\ms{}^{-1}\ma\delta_{\lambda},x,s,\mu)$
stays small. If it failed, re-try the past $\otilde(\sqrt{d})$ iterations
of the IPM since the last time this branch was executed.}

\State \Return $\delta_{\lambda}$

}

\end{algorithm2e}
\begin{lem}
\label{lem:Phi_b_move_slowly} Consider $T\leq\sqrt{d}/\log^{\power}n$
iterations of the algorithm \ref{alg:pathfollowing} and let $x^{(k)},s^{(k)}$
be the input to the $k$-th call to Algorithm~\ref{alg:maintain_INfeasibility}.
Suppose that $\Phi_{b}(x^{(1)},x^{(1)},s^{(1)},\mu^{(1)})\leq\frac{\zeta\epsilon^{2}}{\log^{\power}n}$,
$\mu^{(k+1)}=(1-\frac{\epsilon_{\mu}}{\sqrt{d}})\mu^{(k)},\forall k\in[T]$
and $\epsilon_{b}\leq\frac{c\zeta\epsilon^{2}}{\sqrt{d}\log^{2}n}$,
$\epsilon_{H}\leq\frac{c\zeta\epsilon}{d^{1/4}}$ for some small enough
constants $\zeta,c>0$, where $\epsilon_{b}$ is the accuracy parameter
used in Lemma \ref{lem:random_delta_b}. Suppose that we update $x$
using an unbiased linear system solver with accuracy $\epsilon_{H}$
as defined in Lemma \ref{lem:fes_change_x} during the algorithm \ref{alg:pathfollowing}.
Assume further 
\[
\|\ln x^{(k)}-\ln x^{(k-1)}\|_{\tau(x^{(k)},s^{(k)})+\infty}+\|\ln s^{(k)}-\ln s^{(k-1)}\|_{\tau(x^{(k)},s^{(k)})+\infty}\le0.1\text{ for all }1<k\le T.
\]
Then, we have 
\[
\Phi_{b}(x^{(k)},x^{(k-1)},s^{(k-1)},\mu^{(k)})\leq\frac{5\zeta\epsilon^{2}}{\log^{\power}n}\text{ for all }1<k\leq T
\]
with probability at least $\frac{1}{2}$.
\end{lem}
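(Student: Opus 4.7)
For the $k$-th iteration let $\Phi_b^{(k)}\defeq\Phi_b(x^{(k)},x^{(k-1)},s^{(k-1)},\mu^{(k)})$. The plan is to set up a one-step recursion of the form
\[
\E[\Phi_b^{(k+1)}\mid\mathcal{F}_k]\leq(1+\eta)\,\Phi_b^{(k)}+A_k,
\]
with $\eta$ and $A_k$ small under the given parameter choices, and then conclude via a supermartingale/maximal inequality that $\Phi_b^{(k)}$ stays below $5\zeta\epsilon^2/\log^{\power}n$ for all $k\leq T$ with probability at least $1/2$. I will decompose one IPM iteration into three substeps and apply the analyses of the previous two subsections in order.

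\textbf{Substep 1 (reference shift + correction).} At iteration $k+1$, the routine $\textsc{MaintainInfeasibility}$ is called with current input $(x^{(k+1)\mathrm{pre}},s^{(k+1)})\defeq(x^{(k)}+\delta_{x}^{\text{IPM}},s^{(k)}+\delta_{s}^{\text{IPM}})$ where the previous call was with $(x^{(k-1)},s^{(k-1)})$. Actually a cleaner bookkeeping is: first account for the correction step produced at iteration $k$ (which shifts the reference from $(x^{(k-1)},s^{(k-1)})$ to $(x^{(k)},s^{(k)})$), then the Newton step, then the $\mu$-update. Lemma~\ref{lem:change_xbar_sbar}, applied with $(x',s')=(x^{(k-1)},s^{(k-1)})$ and $(x,s)=(x^{(k)},s^{(k)})$, yields
\[
\E[\Phi_b(x^{(k)}+\delta_x^{\mathrm{corr}},x^{(k)},s^{(k)},\mu^{(k)})]\leq(1+O(\epsilon_H^2))\,\Phi_b^{(k)}+O(\epsilon_b\log^2 d)\,\bigl(\|\ln x^{(k)}-\ln x^{(k-1)}\|_{\tau}^{2}+\|\ln s^{(k)}-\ln s^{(k-1)}\|_{\tau}^{2}\bigr).
\]
By hypothesis each log-difference is $O(1)$ in the $\tau+\infty$ norm, hence $O(1)$ in the $\tau$-norm, so the additive piece is $O(\epsilon_b\log^2 d)$.

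\textbf{Substep 2 (Newton step).} The subsequent Newton step $\delta_{x}^{\text{IPM}}$ is unbiased because $\E[\mh^{-1}]=(\ma^\top\oms^{-1}\omx\ma)^{-1}$. Apply Lemma~\ref{lem:fes_change_x} with reference $(x^{(k)},s^{(k)})$:
\[
\E[\Phi_b(x^{(k+1)},x^{(k)},s^{(k)},\mu^{(k)})]\leq\E[\Phi_b(x^{(k)}+\delta_x^{\mathrm{corr}},x^{(k)},s^{(k)},\mu^{(k)})]+O(\epsilon_H^{2})\,\|h\|_{\overline{\tau}}^{2}.
\]
From Theorem~\ref{thm:path_following_simplified} the step direction satisfies $\|h\|_{\tau+\infty}\leq O(\gamma)\leq O(\epsilon)$, hence $\|h\|_{\overline{\tau}}^{2}=O(\epsilon^2)$.

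\textbf{Substep 3 ($\mu$-update).} Since $\Phi_b$ scales as $\mu^{-1}$ and $\mu^{(k+1)}=(1-\epsilon_\mu/\sqrt d)\mu^{(k)}$,
\[
\Phi_b^{(k+1)}=\bigl(1+O(\epsilon_\mu/\sqrt d)\bigr)\,\Phi_b(x^{(k+1)},x^{(k)},s^{(k)},\mu^{(k)}).
\]
Combining the three substeps gives the desired recursion with $\eta=O(\epsilon_H^{2}+\epsilon_\mu/\sqrt d)$ and $A_k=O(\epsilon_H^{2}\epsilon^{2}+\epsilon_b\log^{2}d)$.

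\textbf{Iterating over $T\leq\sqrt d/\log^{\power}n$ steps.} Unrolling the recursion,
\[
\E[\Phi_b^{(T)}]\leq(1+\eta)^{T}\Phi_b^{(1)}+T\cdot(1+\eta)^{T}\cdot\max_k A_k.
\]
With the hypothesised $\epsilon_H\leq c\zeta\epsilon/d^{1/4}$ and $\epsilon_\mu=O(1)$ we have $T\eta\leq O(\zeta^{2}\epsilon^{2}/\log^{\power}n)+O(1/\log^{\power}n)=O(1)$, so $(1+\eta)^{T}=O(1)$. With $\epsilon_b\leq c\zeta\epsilon^{2}/(\sqrt d\log^{2}n)$ we get
\[
T\cdot A_k\leq\frac{\sqrt d}{\log^{\power}n}\cdot O\!\Bigl(\epsilon_H^{2}\epsilon^{2}+\epsilon_b\log^{2}d\Bigr)\leq O\!\Bigl(\frac{\zeta\epsilon^{2}}{\log^{\power}n}\Bigr),
\]
using that $\log d\leq\log n$. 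Combined with the assumption $\Phi_b^{(1)}\leq\zeta\epsilon^{2}/\log^{\power}n$, by choosing $c$ sufficiently small we obtain $\E[\Phi_b^{(T)}]\leq\tfrac{3}{2}\cdot\zeta\epsilon^{2}/\log^{\power}n$.

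\textbf{From expectation to uniform control.} The final step is upgrading this expectation bound into a bound on $\max_{k\leq T}\Phi_b^{(k)}$. Define $M^{(k)}\defeq(1+\eta)^{-k}\Phi_b^{(k)}-\sum_{j<k}(1+\eta)^{-(j+1)}A_j$; by construction $M^{(k)}$ is a supermartingale. Applying Doob's maximal inequality to the non-negative submartingale $\Phi_b^{(k)}+\text{offset}$, together with the uniform bound on $\E[\Phi_b^{(k)}]$, gives
\[
\Pr\!\Bigl[\max_{k\leq T}\Phi_b^{(k)}>\tfrac{5\zeta\epsilon^{2}}{\log^{\power}n}\Bigr]\leq\frac{\sup_k\E[\Phi_b^{(k)}]}{(5/3)\cdot\sup_k\E[\Phi_b^{(k)}]}\leq\tfrac{1}{2},
\]
after again tuning the constant $c$. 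The main technical obstacle I foresee is the careful bookkeeping of which reference $(x',s')$ Lemma~\ref{lem:change_xbar_sbar} is applied to and verifying that the correction step of the previous iteration is what causes the reference to shift (so that substep~1 is the correct instantiation of Lemma~\ref{lem:change_xbar_sbar} rather than something that also needs a separate bound on the movement of $x$ produced by $\textsc{MaintainInfeasibility}$ itself; fortunately that movement is already absorbed into the additive term in Lemma~\ref{lem:change_xbar_sbar}). Everything else is essentially a telescoping argument over the three substeps.
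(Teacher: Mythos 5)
Your high-level plan matches the paper's proof: decompose each iteration into the reference shift (Lemma~\ref{lem:change_xbar_sbar}), the Newton step (Lemma~\ref{lem:fes_change_x}), and the $\mu$-update, obtain a one-step recursion $\E[\Phi_b^{(k+1)}\mid\mathcal F_k]\leq(1+\eta)\Phi_b^{(k)}+A_k$, and then deduce a uniform-in-$k$ bound from a supermartingale maximal inequality. The paper uses a slightly different normalization, defining $\Psi^{(k)}=(\Phi^{(k)}+c\zeta\epsilon^2\log^{-\power}n)/(8\zeta\epsilon^2\log^{-\power}n+k\zeta\epsilon^2 d^{-1/2})$ and showing that $\min\{\Psi^{(k)},1\}$ is a nonnegative supermartingale with $\E[\min\{\Psi^{(1)},1\}]\leq1/4$, then invoking Ville's maximal inequality. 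Your affine normalization $M^{(k)}=(1+\eta)^{-k}\Phi_b^{(k)}-\sum_{j<k}(1+\eta)^{-(j+1)}A_j$ is a legitimate alternative that in fact avoids the truncation the paper needs (the paper's supermartingale inequality only holds under $\Psi^{(k)}\leq1$, forcing the $\min$), so this would be a minor simplification if executed cleanly.

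However, the final step of your argument is not correct as written, for three reasons. First, $M^{(k)}$ is a supermartingale, so the appropriate tool is Ville's maximal inequality for nonnegative supermartingales (as the paper cites), not Doob's inequality for submartingales; the two are not interchangeable and their statements point in opposite directions in $k$. Second, the displayed probability bound
\[
\Pr\Bigl[\max_{k\leq T}\Phi_b^{(k)}>\tfrac{5\zeta\epsilon^{2}}{\log^{\power}n}\Bigr]\leq\frac{\sup_k\E[\Phi_b^{(k)}]}{(5/3)\sup_k\E[\Phi_b^{(k)}]}\leq\tfrac{1}{2}
\]
is arithmetically false (the middle expression equals $3/5>1/2$) and is not what either inequality produces; $\Phi_b^{(k)}$ itself is neither a sub- nor a supermartingale, so you cannot apply a maximal inequality to it directly with the $\sup_k\E[\cdot]$ numerator. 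Third, you never translate the event $\{\max_k\Phi_b^{(k)}>5\zeta\epsilon^2\log^{-\power}n\}$ into an event about $\max_k N^{(k)}$ for the nonnegative supermartingale $N^{(k)}\defeq M^{(k)}+T\max_j A_j$; this translation introduces the factor $(1+\eta)^T$ and the additive offset $T\max_j A_j$, and both must be tracked against the threshold $5\zeta\epsilon^2\log^{-\power}n$ and the initial value $\E[N^{(1)}]\approx\Phi_b^{(1)}+T\max_j A_j$ before concluding that the ratio is at most $1/2$. These gaps are all fixable --- and with $T\eta$ and $T\max_j A_j$ both $O(\zeta\epsilon^2\log^{-\power}n)$ by the assumed bounds on $\epsilon_H$ and $\epsilon_b$, the numbers do work out --- but as stated the step is both misnamed and incorrectly computed.
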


\begin{proof}
Let $\delta_{x}^{(k)}$ be the vector $\delta_{x}$ as defined in
Algorithm~\ref{alg:maintain_INfeasibility}, when we currently perform
the $k$-th call to that function. Then we have that

\begin{align*}
 & \E[\Phi_{b}(x^{(k+1)},x^{(k)},s^{(k)},\mu^{(k+1)})]\\
\leq & \Phi_{b}(x^{(k)}+\delta_{x}^{(k)},x^{(k)},s^{(k)},\mu^{(k+1)})+O(\epsilon_{H}^{2})\cdot\|h\|_{\tau(x^{(k)},s^{(k)})}^{2}\\
\leq & (1+O(\epsilon_{H}^{2}))\cdot\Phi_{b}(x^{(k)},x^{(k-1)},s^{(k-1)},\mu^{(k+1)})+O(\epsilon_{H}^{2})\cdot\|h\|_{\tau(x^{(k)},s^{(k)})}^{2}\\
 & +O(\epsilon_{b}\log^{2}d)\cdot\left(\|\ln x^{(k)}-\ln x^{(k-1)}\|_{\tau(x^{(k)},s^{(k)})}+\|\ln s^{(k)}-\ln s^{(k-1)}\|_{\tau(x^{(k)},s^{(k)})}\right)\\
= & (1+O(\epsilon_{H}^{2}))\cdot(1+2\epsilon_{\mu}d^{-1/2})\cdot\Phi_{b}(x^{(k)},x^{(k-1)},s^{(k-1)},\mu^{(k)})+O(\epsilon_{H}^{2})\cdot\|h\|_{\tau(x^{(k)},s^{(k)})}^{2}\\
 & +O(\epsilon_{b}\log^{2}d)\\
= & (1+O(\epsilon_{H}^{2}))\cdot(1+2\epsilon_{\mu}d^{-1/2})\cdot\Phi_{b}(x^{(k)},x^{(k-1)},s^{(k-1)},\mu^{(k)})+O(\epsilon_{H}^{2})+O(\epsilon_{b}\log^{2}d)\\
= & (1+2\epsilon_{\mu}d^{-1/2}+O(c\zeta\epsilon^{2}d^{-1/2}))\cdot\Phi_{b}(x^{(k)},x^{(k-1)},s^{(k-1)},\mu^{(k)})+O(c\zeta\epsilon^{2}d^{-1/2}),
\end{align*}
where the first step follows from Lemma \ref{lem:fes_change_x}, the
second step follows from Lemma \ref{lem:change_xbar_sbar}, the third
step follows from $\mu^{(k+1)}\leq(1-\epsilon_{\mu}d^{-1/2}\mu^{(k)})$
and $\|\ln x^{(k)}-\ln x^{(k-1)}\|_{\tau(x^{(k)},s^{(k)})}+\|\ln s^{(k)}-\ln s^{(k-1)}\|_{\tau(x^{(k)},s^{(k)})}\le0.1$,
the fourth step follows from the step of the IPM with $\|h\|_{\tau}^{2}=O(1)$.

Let $\Phi^{(k)}=\Phi_{b}(x^{(k)},x^{(k-1)},s^{(k-1)},\mu^{(k)})$.
Let 
\begin{equation}
\Psi^{(k)}=\frac{\Phi^{(k)}+c\zeta\epsilon^{2}\log^{-\power}n}{\frac{8\zeta\epsilon^{2}}{\log^{\power}n}+k\zeta\epsilon^{2}d^{-1/2}}.\label{eq:write_Psi_in_terms_of_Phi}
\end{equation}
Note that $\Psi^{(k)}\ge1/9$ for $k\le\sqrt{d}/\log^{\power}n$ (since
$\Phi^{(k)}\geq0$) and further there exists some $c'=O(c)$ such
that
\begin{align*}
\E[\Psi^{(k+1)}]\le & \frac{\Phi^{(k)}+c\zeta\epsilon^{2}\log^{-\power}n+c'\epsilon^{2}\zeta d^{-1/2}+(2\epsilon_{\mu}+c'\epsilon^{2}\zeta)d^{-1/2}\cdot\Phi^{(k)}}{\frac{8\zeta\epsilon^{2}}{\log^{\power}n}+k\epsilon^{2}\zeta d^{-1/2}+\zeta\epsilon^{2}d^{-1/2}}.
\end{align*}
We want to construct a supermartingale, so we want that this expectation
is less than $\Psi^{(k)}$. This is the case if
\begin{align*}
\frac{c'\epsilon^{2}\zeta d^{-1/2}+(2\epsilon_{\mu}+c'\epsilon^{2}\zeta)d^{-1/2}\cdot\Phi^{(k)}}{\zeta\epsilon^{2}d^{-1/2}}\le & \frac{\Phi^{(k)}+c\zeta\epsilon^{2}\log^{-\power}n}{\frac{8\zeta\epsilon^{2}}{\log^{\power}n}+k\zeta\epsilon^{2}d^{-1/2}}=\Psi^{(k)},
\end{align*}
so let us analyze which other conditions are required to satisfy this
inequality. For now assume that $\Psi^{(k)}\le1$, and $k\leq\frac{\sqrt{d}}{\log^{\power}n}$,
then 
\begin{align*}
\Phi^{(k)}\le & \frac{8\zeta\epsilon^{2}}{\log^{\power}n}+k\zeta\epsilon d^{-1/2}<\frac{9\zeta\epsilon^{2}}{\log^{\power}n}.
\end{align*}
So if we choose $c$ small enough such that $2\epsilon_{\mu}+c'\epsilon^{2}\zeta\le1/100$
(note that $\epsilon_{\mu}\ll1/16000$ by Theorem~\ref{thm:path_following})
and $\Phi^{(k)}\leq\frac{9\zeta\epsilon^{2}}{\log^{\power}n}$, then

\begin{align*}
\frac{c'\epsilon^{2}\zeta d^{-1/2}+(2\epsilon_{\mu}+c'\epsilon^{2}\zeta)d^{-1/2}\cdot\Phi^{(k)}}{\zeta\epsilon^{2}d^{-1/2}} & \le\frac{\epsilon^{2}\zeta d^{-1/2}/100+d^{-1/2}\Phi^{(k)}/100}{\zeta\epsilon^{2}d^{-1/2}}\le1/9\le\Psi^{(k)},
\end{align*}
which in turn shows that $\E[\Psi^{(k+1)}]\le\Psi^{(k)}$. In general,
without the assumption $\Psi^{(k)}\le1$, we have
\[
\E\left[\min\Big\{\Psi^{(k+1)},1\Big\}\right]\leq\min\Big\{\Psi^{(k)},1\Big\}.
\]
Hence, for $k\le\sqrt{d}/\log^{\power}n=T$ we have that $\min\Big\{\Psi^{(k)},1\Big\}$
is a non-negative supermartingale.

By Ville\textquoteright s maximal inequality \cite{ville1939etude}
for supermartingales, we have that
\[
\text{\ensuremath{\P}}\left[\max_{k\in[T]}\min\Big\{\Psi^{(k)},1\Big\}>\frac{1}{2}\right]\leq\frac{\min\Big\{\Psi^{(1)},1\Big\}}{\frac{1}{2}}\leq\frac{\frac{1}{4}}{\frac{1}{2}}=\frac{1}{2}.
\]
where the second step follows from $\Phi^{(1)}\leq\frac{1}{4}$. Hence,
$\max_{k\in[T]}\Psi^{(k)}\leq\frac{1}{2}$ with probability at least
$\frac{1}{2}$. Under this event, we have for small enough $c$ that
\begin{align*}
\max_{k\in[T]}\Phi^{(k)} & \leq\max_{k\in[T]}\Psi^{(k)}(\frac{8\zeta\epsilon^{2}}{\log^{\power}n}+k\zeta\epsilon^{2}d^{-1/2})\\
 & \leq\frac{1}{2}\cdot\frac{9\zeta\epsilon^{2}}{\log^{\power}n}\\
 & <\frac{5\zeta\epsilon^{2}}{\log^{\power}n}.
\end{align*}
\end{proof}
Now, to move $\ma^{\top}x-b$ closer to $0$, we solve the equation
\begin{align*}
\omx\delta_{s}+\oms\delta_{x} & =0,\in\R^{n}\\
\ma^{\top}\delta_{x} & =\delta_{b},\in\R^{d}\\
\ma\delta_{y}+\delta_{s} & =0\in\R^{n}
\end{align*}
with $\delta_{b}=b-\ma^{\top}x$. This gives the formula 
\begin{align*}
\delta_{s} & =-\ma(\ma^{\top}\oms^{-1}\omx\ma)^{-1}\delta_{b},\\
\delta_{x} & =\omx\oms^{-1}\ma(\ma^{\top}\oms^{-1}\omx\ma)^{-1}\delta_{b}.
\end{align*}
We see in Algorithm~\ref{alg:maintain_Feasibility} that every $\otilde(\sqrt{d})$
iterations we will try to decrease the potential $\Phi_{b}$. The
following lemma shows that such a step decreases $\Phi_{b}$ sufficiently.
\begin{lem}
\label{lem:decrease_Phib}Let $\widehat{x},x,s\in\R_{>0}^{n}$ and
consider $x^{\new}=\widehat{x}+\delta_{x}\in\R^{n}$ where $\delta_{x}=\mx\ms{}^{-1}\ma\mh^{-1}(b-\ma^{\top}\widehat{x})\in\R^{n}$
with $\mh\approx_{\epsilon_{H}}\ma^{\top}\mx\ms^{-1}\ma\in\R^{d\times d}$
for $\epsilon_{H}\in(0,1/20]$. Then, we have that
\[
\Phi_{b}(x^{\new},x,s,\mu)\leq5\epsilon_{H}\cdot\Phi_{b}(\widehat{x},x,s,\mu).
\]
\end{lem}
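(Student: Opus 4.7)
The plan is to directly compute the residual $\ma^\top x^{\new} - b$ after the corrective step and express it as a linear transformation of the old residual $\ma^\top \widehat{x} - b$. Writing $\mq \defeq \ma^\top \mx \ms^{-1}\ma$, the definition of $\delta_x$ gives
\[
\ma^\top x^{\new} - b = (\ma^\top \widehat{x} - b) + \ma^\top \mx \ms^{-1}\ma \mh^{-1}(b - \ma^\top \widehat{x}) = (\mi - \mq \mh^{-1})(\ma^\top \widehat{x} - b).
\]
Plugging this into the definition of $\Phi_b$, and conjugating by $\mq^{1/2}$ to convert the weighted norm into a standard Euclidean norm, yields
\[
\Phi_b(x^{\new}, x, s, \mu) = \mu^{-1}\,\bigl\|(\mi - \mq^{1/2}\mh^{-1}\mq^{1/2})\,\mq^{-1/2}(\ma^\top \widehat{x} - b)\bigr\|_2^2 \leq \bigl\|\mi - \mq^{1/2}\mh^{-1}\mq^{1/2}\bigr\|_2^2 \cdot \Phi_b(\widehat{x}, x, s, \mu).
\]

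Next I would bound the operator norm factor using the spectral approximation $\mh \approx_{\epsilon_H} \mq$. Inverting gives $e^{-\epsilon_H}\mq^{-1} \preceq \mh^{-1} \preceq e^{\epsilon_H}\mq^{-1}$, and conjugating by $\mq^{1/2}$ yields
\[
e^{-\epsilon_H}\mi \preceq \mq^{1/2}\mh^{-1}\mq^{1/2} \preceq e^{\epsilon_H}\mi.
\]
Hence every eigenvalue of $\mi - \mq^{1/2}\mh^{-1}\mq^{1/2}$ lies in $[1 - e^{\epsilon_H}, 1 - e^{-\epsilon_H}]$, which for $\epsilon_H \leq 1/20$ is contained in $[-2\epsilon_H, 2\epsilon_H]$ (using $e^{\epsilon_H} - 1 \leq \epsilon_H + \epsilon_H^2 \leq 2\epsilon_H$ in this regime). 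So $\|\mi - \mq^{1/2}\mh^{-1}\mq^{1/2}\|_2 \leq 2\epsilon_H$.

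Combining the two estimates gives $\Phi_b(x^{\new}, x, s, \mu) \leq 4\epsilon_H^2 \cdot \Phi_b(\widehat{x}, x, s, \mu)$, and since $\epsilon_H \leq 1/20 \leq 5/4$ we have $4\epsilon_H^2 \leq 5\epsilon_H$, yielding the claimed bound. There is no real obstacle here: the corrective step is exactly a preconditioned Newton iteration on the residual, and the contraction factor is governed one-to-one by the quality of $\mh$ as a spectral approximation of $\mq$; the only care needed is the conjugation by $\mq^{1/2}$ to turn the $(\ma^\top \mx \ms^{-1}\ma)^{-1}$-norm of the residual into a clean operator-norm statement about $\mi - \mq^{1/2}\mh^{-1}\mq^{1/2}$.
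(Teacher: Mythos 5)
Your proposal is correct, and it follows the same high-level route as the paper: both start from the identity $\ma^\top x^{\new}-b=(\mi-\mq\mh^{-1})(\ma^\top\widehat{x}-b)$ with $\mq=\ma^\top\mx\ms^{-1}\ma$, and then bound the contraction in the $\mq^{-1}$-weighted norm. Where you diverge is in how the contraction factor is estimated: you conjugate by $\mq^{1/2}$ to reduce the claim to $\|\mi-\mq^{1/2}\mh^{-1}\mq^{1/2}\|_2\le e^{\epsilon_H}-1\le 2\epsilon_H$, giving a contraction of $4\epsilon_H^2$, whereas the paper expands the sandwich $(\mi-\mh^{-1}\mq)\mq^{-1}(\mi-\mq\mh^{-1})=\mq^{-1}-2\mh^{-1}+\mh^{-1}\mq\mh^{-1}$ and bounds each term separately by $\mq^{-1}$, $-2e^{-\epsilon_H}\mq^{-1}$, and $e^{2\epsilon_H}\mq^{-1}$, obtaining $1-2e^{-\epsilon_H}+e^{2\epsilon_H}\le 5\epsilon_H$. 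The paper's term-by-term bound forfeits the perfect-square cancellation and only yields a linear-in-$\epsilon_H$ factor; your spectral-norm argument retains the square and shows the step is actually quadratically contracting, which is a sharper (though not needed) conclusion. Both readily give the stated $5\epsilon_H$ bound for $\epsilon_H\le 1/20$.
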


\begin{proof}
For any $p>0$, we have
\begin{align*}
\Phi_{b}(x^{\new},x,s,\mu)= & \mu^{-1}\cdot\|\ma^{\top}x^{\new}-b\|_{(\ma^{\top}\mx\ms^{-1}\ma)^{-1}}^{2}\\
= & \mu^{-1}\cdot\|\ma^{\top}(\widehat{x}+\delta_{x})-b\|_{(\ma^{\top}\mx\ms^{-1}\ma)^{-1}}^{2}\\
= & \mu^{-1}\cdot\|(\mi-\ma^{\top}\mx\ms{}^{-1}\ma\mh^{-1})(\ma^{\top}\widehat{x}-b)\|_{(\ma^{\top}\mx\ms^{-1}\ma)^{-1}}^{2},
\end{align*}
where the first step follows from definition of $\Phi_{b}$ (see (\ref{eq:Phi_b_x_barx_bars_mu})),
the second step follows from $x^{\new}=\widehat{x}+\delta_{x}$, and
the last step follows from definition of $\delta_{x}.$

Using $\mh\approx_{\epsilon_{H}}\ma^{\top}\mx\ms^{-1}\ma$, we have
\begin{align*}
 & (\mi-\mh^{-1}\ma^{\top}\mx\ms{}^{-1}\ma)(\ma^{\top}\mx\ms^{-1}\ma)^{-1}(\mi-\ma^{\top}\mx\ms{}^{-1}\ma\mh^{-1})\\
= & (\ma^{\top}\mx\ms^{-1}\ma)^{-1}-2\mh^{-1}+\mh^{-1}\ma^{\top}\mx\ms{}^{-1}\ma\mh^{-1}\\
\preceq & (1-2e^{-\epsilon_{H}}+e^{2\epsilon_{H}})\cdot(\ma^{\top}\mx\ms^{-1}\ma)^{-1}\\
\preceq & 5\epsilon_{H}\cdot(\ma^{\top}\mx\ms^{-1}\ma)^{-1}.
\end{align*}
where the last step follows from $e^{2\epsilon_{H}}\leq1+3\epsilon_{H}$
and $e^{-\epsilon_{H}}\geq1-\epsilon_{H}$ for $\epsilon_{H}\in(0,1/20]$.

Hence, we have
\[
\Phi_{b}(x^{\new},x,s,\mu)\leq5\epsilon_{H}\mu^{-1}\cdot\|\ma^{\top}\widehat{x}-b\|_{(\ma^{\top}\mx\ms^{-1}\ma)^{-1}}^{2}=5\epsilon_{H}\cdot\Phi_{b}(\widehat{x},x,s,\mu).
\]
\end{proof}
Here we show that the correction step of Lemma~\ref{lem:decrease_Phib}
does not change the solution $x$ by much, provided that $\Phi_{b}$
is small.
\begin{lem}
\label{lem:move_x_correction}Let $\widehat{x},x,s\in\R_{>0}^{n}$
and consider $\delta_{x}=\mx\ms{}^{-1}\ma\mh^{-1}(b-\ma^{\top}\widehat{x})\in\R^{n}$
with $\mh\approx_{\epsilon_{H}}\ma^{\top}\ms{}^{-1}\mx\ma\in\R^{d\times d}$
with $\epsilon_{H}\in(0,1/20]$ as in Lemma \ref{lem:decrease_Phib}.
Then, we have that
\begin{align*}
\|\mx{}^{-1}\delta_{x}\|_{\tau(x,s)}\le & O(1)\cdot\Phi_{b}(\widehat{x},x,s,\mu)^{1/2}\\
\|\mx{}^{-1}\delta_{x}\|_{\infty}\le & O(1)\cdot\Phi_{b}(\widehat{x},x,s,\mu)^{1/2}
\end{align*}
\end{lem}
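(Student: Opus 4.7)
My plan is to prove both bounds by reducing them to the definition of $\Phi_b$ via the two available approximations: $\mh \approx_{\epsilon_H} \ma^\top \ms^{-1}\mx\ma$ and (implicit from the surrounding IPM context) the centrality condition $xs \approx \mu \cdot \tau(x,s)$, which gives $\ms^{-1}\mT\ms^{-1} \approx \mu^{-1}\,\mx\ms^{-1}$ where $\mT = \mdiag(\tau(x,s))$. First I would rewrite
\[
\mx^{-1}\delta_x = \ms^{-1}\ma\,\mh^{-1}(b-\ma^\top \widehat x),
\]
so that everything reduces to quadratic forms in the residual $r := b - \ma^\top \widehat x$.

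For the $\tau(x,s)$-norm bound, I would expand
\[
\|\mx^{-1}\delta_x\|_{\tau(x,s)}^2 = r^\top \mh^{-1}\,\ma^\top \ms^{-1}\mT\ms^{-1}\ma\,\mh^{-1} r,
\]
substitute $\ms^{-1}\mT\ms^{-1}\approx \mu^{-1}\mx\ms^{-1}$, and then use $\ma^\top \mx\ms^{-1}\ma \approx_{\epsilon_H} \mh$ twice to collapse the sandwich $\mh^{-1}(\ma^\top\mx\ms^{-1}\ma)\mh^{-1}$ down to $(\ma^\top\mx\ms^{-1}\ma)^{-1}$ up to an $e^{O(\epsilon_H)}$ factor. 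This leaves exactly $\mu^{-1}\|r\|_{(\ma^\top\mx\ms^{-1}\ma)^{-1}}^2 = \Phi_b(\widehat x,x,s,\mu)$, giving the desired $O(1)\cdot \Phi_b^{1/2}$ bound.

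For the $\ell_\infty$ bound, I would apply Cauchy--Schwarz in the $\mh^{-1}$ inner product:
\[
|[\mx^{-1}\delta_x]_i|^2 = |(e_i^\top\ms^{-1}\ma)\,\mh^{-1}\, r|^2 \le \bigl(e_i^\top\ms^{-1}\ma\mh^{-1}\ma^\top\ms^{-1}e_i\bigr)\cdot\bigl(r^\top\mh^{-1}r\bigr).
\]
The second factor is $(1\pm O(\epsilon_H))\,\mu\,\Phi_b$ as above. For the first factor, after replacing $\mh^{-1}$ by $(\ma^\top\mx\ms^{-1}\ma)^{-1}$, a direct substitution of the projection $\mproj(\mx^{1/2}\ms^{-1/2}\ma)$ rewrites it as $\sigma_i/(x_is_i)$ for $\sigma_i = \sigma(\mx^{1/2}\ms^{-1/2}\ma)_i$, which by $x_is_i\approx \mu\tau_i$ and $\sigma_i\le \tau_i$ is at most $O(1/\mu)$. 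Multiplying the two bounds gives $|[\mx^{-1}\delta_x]_i|^2 = O(\Phi_b)$ uniformly in $i$, as desired.

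The only non-routine step is keeping track of where the $\mu$ factors cancel: $\Phi_b$ is normalized by $\mu^{-1}$, and the $\tau$-norm (resp.\ leverage score $\sigma_i$ in the $\infty$-norm bound) contributes an offsetting factor $\mu^{-1}$ through the centrality relation $xs\approx \mu\tau$. Once this cancellation is set up correctly, the rest is just applying the two-sided spectral approximation for $\mh$ a constant number of times and absorbing the resulting $e^{O(\epsilon_H)}$ factor (which is $O(1)$ since $\epsilon_H \le 1/20$) into the final constant.
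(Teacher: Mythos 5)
Your proposal is correct and follows essentially the same two computations as the paper: for the $\tau$-norm bound you use the centrality relation $xs\approx\mu\tau$ to swap $\mT$ for $\mu^{-1}\mx\ms$ and then collapse $\mh^{-1}(\ma^\top\mx\ms^{-1}\ma)\mh^{-1}$ via the spectral approximation, while for the $\ell_\infty$ bound you apply Cauchy--Schwarz in $\mh^{-1}$, recognize the diagonal factor as $\sigma(\mx^{1/2}\ms^{-1/2}\ma)_i/(x_is_i)=O(\mu^{-1})$, and pair it with $r^\top\mh^{-1}r\approx\mu\Phi_b$. The paper writes the Cauchy--Schwarz step as a symmetric $\mh^{-1/2}$-split and cites Lemma~\ref{lem:sigma_approx} for $\sigma(\mx^{1/2}\ms^{-1/2}\ma)\lesssim\tau(x,s)$, but the substance is identical, and your explicit observation that centrality $xs\approx\mu\tau$ must be invoked (even though the lemma's formal hypotheses omit it) matches what the paper's proof silently relies on.
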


\begin{proof}
For the bound on $\|\mx{}^{-1}\delta_{x}\|_{\infty}$ consider the
following
\begin{align*}
\|\mx{}^{-1}\delta_{x}\|_{\infty}= & O(1)\cdot\max_{i\in[n]}\|\indicVec i^{\top}\ms{}^{-1}\ma\mh^{-1}(b-\ma^{\top}\widehat{x})\|_{\infty}\\
\le & O(1)\cdot\max_{i\in[n]}\|\indicVec i^{\top}\ms{}^{-1}\ma\mh^{-1}(b-\ma^{\top}\widehat{x})\|_{2}\\
\le & O(1)\cdot\max_{i\in[n]}\|\indicVec i^{\top}\ms{}^{-1}\ma\mh^{-1/2}\|_{2}\|\mh^{-1/2}(b-\ma^{\top}\widehat{x})\|_{2}.
\end{align*}
Here the first factor can be bounded via 
\begin{align*}
\|e_{i}^{\top}\ms{}^{-1}\ma\mh^{-1/2}\|_{2}^{2}= & O(1)\cdot\indicVec i^{\top}\frac{1}{(\mx\ms)^{1/2}}\frac{\mx{}^{1/2}}{\ms{}^{1/2}}\ma(\ma^{\top}\mx\ms{}^{-1}\ma)^{-1}\ma^{\top}\frac{\mx{}^{1/2}}{\ms{}^{1/2}}\frac{1}{(\mx\ms')}\indicVec i\\
= & O(1)\cdot\frac{\sigma(\mx^{1/2}\ms^{-1/2}\ma)_{i}}{x_{i}s_{i}}=O(\mu^{-1}),
\end{align*}
where at the end we used $\sigma(\mx^{1/2}\ms^{-1/2}\ma)\le\sigma(\mx^{1/2-\alpha}\ms^{-1/2-\alpha}\ma)\le\tau(x,s)$
via Lemma \ref{lem:sigma_approx} and $xs\approx\mu\tau$. The last
term can be bounded by 
\begin{align*}
\|\mh^{-1/2}(b-\ma^{\top}\widehat{x})\|_{2}= & O(1)\cdot\|b-\ma^{\top}\widehat{x}\|_{(\ma^{\top}\mx\ms^{-1}\ma)^{-1}}\\
= & O(\mu^{1/2})\cdot\sqrt{\Phi_{b}(\widehat{x},x,s,\mu)}.
\end{align*}
Thus in summary we have $\|\mx{}^{-1}\delta_{x}\|_{\infty}\le O(1)\cdot\sqrt{\Phi_{b}(\widehat{x},x,s,\mu)}$.
For the $\|\cdot\|_{\tau}$norm we have because of $xs=\mu\tau$,
and the definition of $\Phi_{b}$ that
\begin{align*}
\|\mx{}^{-1}\delta_{x}\|_{\tau(x,s)}\le & O(\mu^{-1/2})\cdot\|(\mx\ms)^{1/2}\mx{}^{-1}\delta_{x}\|_{2}\\
\le & O(\mu^{-1/2})\cdot\|\mx{}^{1/2}\ms{}^{-1/2}\ma\mh^{-1}(b-\ma^{\top}\widehat{x})\|_{2}\\
= & O(\mu^{-1/2})\cdot\|b-\ma\widehat{x}\|_{(\ma^{\top}\mx\ms{}^{-1}\ma)^{-1}}\\
= & O(1)\cdot\Phi_{b}(\widehat{x},x,s,\mu)^{1/2}.
\end{align*}
\end{proof}
To obtain the final solution of our LP, $\Phi_{b}=\Omega(1/\log^{\power}n)$
is still too large. Here we show that iterative application of Lemma~\ref{lem:decrease_Phib}
yields a very accuracte solution.
\begin{cor}
\label{cor:tiny_phi_b}There exists some small enough $\zeta=O(1)$,
such that given a primal dual pair $(x,s)$ with $\Phi_{b}(x,x,s,\mu)\le\frac{4\zeta\epsilon}{\log n}$
and $xs\approx_{1/4}\mu\tau(x,s)$ and any $\delta>$0, we can compute
an $x'$ with 
\begin{align*}
\Phi_{b}(x',x',s,\mu) & \le\delta,\\
\|\mx^{-1}(x-x')\|_{\tau(x,s)+\infty} & \le O\left(\Phi_{b}^{1/2}(x,x,s,\mu)\right),\\
x's & \approx_{1/2}\mu\tau(x',s)
\end{align*}
 in $\otilde((nd+d^{3})\log(1/\delta))$ time.
\end{cor}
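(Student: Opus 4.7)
\textbf{Proof proposal for Corollary~\ref{cor:tiny_phi_b}.} The plan is to iterate Lemma~\ref{lem:decrease_Phib} for $K = \Theta(\log(1/\delta))$ rounds with a fixed accuracy $\epsilon_H = 1/20$, keeping the normalizing matrix in $\Phi_b$ pinned to the initial $(x,s)$ throughout the inner loop, and only translating to the $\Phi_b(x',x',s,\mu)$ version at the very end. Concretely, I would first compute once an approximate inverse $\mh^{-1}$ with $\mh \approx_{1/20} \ma^\top \mx \ms^{-1}\ma$ in time $\tilde O(nd+d^\omega)\subseteq\tilde O(nd+d^3)$ (via JL leverage score estimation to sample $\tilde O(d)$ rows, then forming and inverting a $d\times d$ matrix). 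Then set $\widehat{x}^{(0)}\leftarrow x$ and for $k=0,\dots,K-1$ let $\widehat{x}^{(k+1)}=\widehat{x}^{(k)}+\mx\ms^{-1}\ma\mh^{-1}(b-\ma^\top\widehat{x}^{(k)})$, each step of which costs $\tilde O(nd+d^2)$ (one matrix-vector multiply against $\ma$, one solve through $\mh^{-1}$, one multiply back). Finally output $x'=\widehat{x}^{(K)}$.

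The key step is to track three invariants across the iteration. First, by Lemma~\ref{lem:decrease_Phib} applied with the \emph{fixed} $(x,s)$, each iterate satisfies $\Phi_b(\widehat{x}^{(k+1)},x,s,\mu)\le (1/4)\,\Phi_b(\widehat{x}^{(k)},x,s,\mu)$, so after $K=\Theta(\log(1/\delta))$ rounds we get $\Phi_b(\widehat{x}^{(K)},x,s,\mu)\le \delta/2$ (choosing the constant in $K$ to absorb the final renormalization). Second, by Lemma~\ref{lem:move_x_correction} the per-iteration relative movement obeys $\|\mx^{-1}(\widehat{x}^{(k+1)}-\widehat{x}^{(k)})\|_{\tau(x,s)+\infty}\le O(\Phi_b(\widehat{x}^{(k)},x,s,\mu)^{1/2})$, and summing the resulting geometric series in $(1/2)^k$ gives the desired bound
\[
\|\mx^{-1}(x'-x)\|_{\tau(x,s)+\infty}\le O(\Phi_b(x,x,s,\mu)^{1/2})\,,
\]
while also yielding $x'\approx_{1/8}x$ provided $\zeta$ is a small enough constant (since $\Phi_b(x,x,s,\mu)\le 4\zeta\epsilon/\log n$). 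Third, from $x'\approx_{1/8} x$ we immediately get $\ma^\top\mx'\ms^{-1}\ma \approx_{1/8} \ma^\top\mx\ms^{-1}\ma$, so $\Phi_b(x',x',s,\mu)\le e^{1/8}\Phi_b(x',x,s,\mu)\le \delta$, which is what we want. Similarly, $x's\approx_{1/8} xs \approx_{1/4}\mu\tau(x,s)$ and Lemma~\ref{lem:sigma_approx}-type stability of leverage scores under the tiny perturbation in $x$ gives $\tau(x',s)\approx_{1/8}\tau(x,s)$, together yielding $x's\approx_{1/2}\mu\tau(x',s)$.

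The hard part, and the one I would spell out most carefully, is the final transfer from the ``frozen normalizer'' potential $\Phi_b(\widehat{x}^{(k)},x,s,\mu)$ that decays cleanly under Lemma~\ref{lem:decrease_Phib} to the self-referential $\Phi_b(x',x',s,\mu)$ that appears in the statement. The subtlety is that if we tried to refresh the normalizer every step, then $\mh$ would have to be rebuilt too and the $\tilde O(nd+d^\omega)$ preprocessing would multiply $K$ rather than get amortized; keeping $(x,s)$ frozen throughout the inner loop is what lets us pay $\tilde O(nd+d^3)$ only once plus $\tilde O((n+d^2)\log(1/\delta))$ for the iterates. The second subtlety is verifying that the centrality condition $x's\approx_{1/2}\mu\tau(x',s)$ survives the $K$ corrections, which requires the $\ell_\infty$ part of the $\tau+\infty$ movement bound (not just the $\tau$-part) and the hypothesis that $\zeta$ is small enough that $O(\sqrt{\zeta\epsilon/\log n})\ll 1$. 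All other ingredients are direct applications of the preceding lemmas.
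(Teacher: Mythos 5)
Your proposal is correct and takes essentially the same route as the paper: iterate the corrective step of Lemma~\ref{lem:decrease_Phib} $O(\log(1/\delta))$ times with the normalizer and preconditioner frozen at the initial $(x,s)$, sum the geometric movement bound from Lemma~\ref{lem:move_x_correction} to get $\|\mx^{-1}(x-x')\|_{\tau+\infty}=O(\Phi_b^{1/2})$, and then pay an $O(1)$ renormalization factor and a centrality check at the end using $x'\approx x$ for small $\zeta$. The only cosmetic slip is the per-iteration iterate cost, which should read $\tilde O(nd+d^2)$ rather than $\tilde O(n+d^2)$, but this does not affect the claimed total of $\tilde O((nd+d^3)\log(1/\delta))$.
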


\begin{proof}
This follows by repeatedly applying Lemma \ref{lem:decrease_Phib}
for small enough $\epsilon_{H}=O(1)$. We need $O(\log\delta^{-1})$
repetitions to decrease $\Phi_{b}(x',x,s,\mu)$ down to $c\delta$
for some small enough $c=O(1)$. Note that by Lemma \ref{lem:move_x_correction}
the total movement is bounded by $O\left(\Phi_{b}^{1/2}(x,x,s,\mu)\right)=O(1)$
as the movement per iteration is exponentially decaying. At last,
going from $\Phi_{b}^{1/2}(x',x,s,\mu)$ to $\Phi_{b}^{1/2}(x',x',s,\mu)$
increases the potential by at most some $O(1)$ factor given that
$x$ and $x'$ differ by at most a constant factor. Thus for small
enough $c=O(1)$ we have $\Phi_{b}^{1/2}(x',x,s,\mu)\le\delta$. Likewise
we have $x's\approx_{1/2}\mu\tau(x',s)$ when the constant factor
difference between $x$ and $x'$ is small enough which can be guaranteed
by choosing small enough $\zeta>0$.
\end{proof}
Throughout the IPM we move $x$ several times. First, we perform the
classic IPM step and then we perform the corrective steps of Algorithm~\ref{alg:maintain_INfeasibility}
and~\ref{alg:maintain_Feasibility}. Note that by Lemma~\ref{lem:change_xbar_sbar}
the extra movement of $x$ depends on how much $x$ moved in the previous
iteration. Here we show that this does not create an amplifying feedback
loop, i.e. as long as $\Phi_{b}\le5\zeta\epsilon^{2}/\log^{6}n$,
the total movement $\|\mx^{-1}(x-x')\|_{\tau+\infty}$ is always bounded
by $\frac{\epsilon}{2}$. By induction over the number of iterations
Lemma~\ref{lem:change_xbar_sbar} and Lemma~\ref{lem:move_x_total}
then imply that we always have $\Phi_{b}\le5\zeta\epsilon^{2}/\log^{6}n$
and $\|\mx^{-1}(x-x')\|_{\tau+\infty}\le\epsilon/2$.
\begin{lem}
\label{lem:move_x_total}For some small enough constants $\zeta,c>0$
let $\epsilon_{b}\leq\frac{c\zeta\epsilon^{2}}{\sqrt{d}\log^{2}n}$,
$\epsilon_{H}\leq\frac{c\zeta\epsilon}{d^{1/4}\log^{3}n}$ in $\textsc{MaintainFeasibility}$
and let $\epsilon$ be the parameter of Theorem~\ref{thm:path_following}.
Let $x^{(k)},s^{(k)}$ the inputs of the $k$-th call to $\textsc{MaintainFeasibility}$.
Assume 
\begin{align*}
\Phi_{b}(x^{(k-1)},x^{(k-2)},s^{(k-2)},\mu^{(k-1)})\le & 5\zeta\epsilon^{2}/\log^{\power}n\\
\|\mx^{(k-2)-1}(x^{(k-1)}-x^{(k-2)})\|_{\tau+\infty}\le & \frac{\epsilon}{2}
\end{align*}
 for all $k'<k$, then we have with high probability 
\begin{align}
\|\mx^{(k)-1}(x^{(k)}-x^{(k-1)})\|_{\infty+\tau}^{2}\le & \epsilon^{2}/2.\label{eq:total_x_movement}
\end{align}
Further let $e_{x}\in\R^{n}$ be the movement of $x$ induced by calling
$\textsc{MaintainFeasibility}$ (Algorithm~\ref{alg:maintain_Feasibility}),
then for small enough constant $\zeta>0$ we have with high probability
that
\begin{align*}
\mixedNorm{\mx^{-1}e_{x}}{\tau}\leq & \frac{\gamma\alpha}{2^{20}}
\end{align*}
\end{lem}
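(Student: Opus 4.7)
The plan is to decompose the per-iteration movement $x^{(k)}-x^{(k-1)}$ into the Newton step $\delta_x$ taken on Lines~\ref{line:pf_3}--\ref{line:pf_4} of Algorithm~\ref{alg:pathfollowing}, and the additional feasibility-correcting movement $e_x$ added on Line~\ref{line:pf_move_x_improve_feasible}, which itself splits into (i) the stochastic $\textsc{MaintainInfeasibility}$ update (always applied) and (ii) the occasional large corrective step of Line~\ref{line:correction_step} of Algorithm~\ref{alg:maintain_Feasibility} (fired once every $\sqrt{d}/\log^{\power}n$ iterations). I would bound each of these three contributions separately in the mixed norm $\mixedNorm{\cdot}{\tau}$ and then combine by the triangle inequality.

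For the Newton piece, Lemma~\ref{lem:step_stability} applied with $h=\gamma\nabla\Phi(\overline v)^{\flat(\overline\tau)}$ yields $\mixedNorm{\mx^{-1}\delta_x}{\tau}\le(1+2\alpha)(e^\epsilon+3)\gamma$; because Lemma~\ref{lem:smoothing:helper} forces $\mixedNorm{h}{\tau}\le\gamma\le\epsilon/4$, this already gives a bound of at most $\epsilon/4$ on the Newton contribution. For the stochastic correction $\mx\ms^{-1}\ma\delta_\lambda^{(1)}$ with $\delta_\lambda^{(1)}=\mh_1^{-1}\mh_2\mh_3^{-1}\delta_b^{(1)}+\mh_4\delta_b^{(2)}$, I would invoke the high-probability bounds of Lemma~\ref{lem:change_xbar_sbar}. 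The inductive hypotheses give $\Phi_b^{(k-1)}\le 5\zeta\epsilon^2/\log^{\power}n$ and $\mixedNorm{\mx^{(k-2)-1}(x^{(k-1)}-x^{(k-2)})}{\tau}\le\epsilon/2$; the analogous bound for $s$ follows from Theorem~\ref{thm:path_following_simplified}. Substituting these together with $\epsilon_b\le c\zeta\epsilon^2/(\sqrt{d}\log^2 n)$ and $\epsilon_H\le c\zeta\epsilon/(d^{1/4}\log^3 n)$ shows that each of the three summands in Lemma~\ref{lem:change_xbar_sbar}'s bound is $O(c\zeta^2\epsilon^4/\log^4 n)$, hence the stochastic correction is comfortably smaller than $\gamma\alpha/2^{20}$ once $\zeta$ is chosen small enough. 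For the occasional large corrective step, Lemma~\ref{lem:move_x_correction} gives a movement of $O(\Phi_b^{1/2})=O(\sqrt\zeta\,\epsilon/\log^{3}n)$, which is again dominated by $\gamma\alpha/2^{20}=\Theta(\epsilon/(2^{20}\log^{3}n))$ provided $\sqrt\zeta$ is chosen below $2^{-20}$ up to absolute constants.

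Summing the two correction contributions yields $\mixedNorm{\mx^{-1}e_x}{\tau}\le\gamma\alpha/2^{20}$, establishing the second claim. Adding the Newton piece gives $\mixedNorm{\mx^{(k)-1}(x^{(k)}-x^{(k-1)})}{\tau}\le\epsilon/4+2\cdot\gamma\alpha/2^{20}\le\epsilon/\sqrt{2}$, whose square is at most $\epsilon^2/2$ as required by the first claim. Failure-probability bookkeeping is standard: each of the high-probability statements invoked (Lemma~\ref{lem:change_xbar_sbar}, Lemma~\ref{lem:move_x_correction} via preconditioned solvers) succeeds with probability $1-n^{-\Omega(1)}$, and a union bound over the $\poly(n)$ calls inside a single phase keeps the overall failure probability negligible.

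The principal obstacle is the self-referential character of the estimate in Lemma~\ref{lem:change_xbar_sbar}, whose right-hand side involves precisely the quantity $\|\ln x'-\ln x\|_\tau+\|\ln s'-\ln s\|_\tau$ we are trying to control. The induction hypothesis breaks the circularity, but the accounting has to be tight: the Newton step saturates nearly all of the $\epsilon/2$ budget, so the two correction sources together must fit into a budget of size $O(\gamma\alpha/\log n)\ll\epsilon$, and this is exactly why the sharper $\epsilon_b,\epsilon_H$ settings (one power of $\zeta$ plus additional $\log n$ and $d^{1/4}$ factors in the denominator) in the hypothesis are required. Verifying that these parameter choices propagate through Lemma~\ref{lem:change_xbar_sbar} and through Lemma~\ref{lem:Phi_b_move_slowly}'s guarantee on $\Phi_b$ consistently will be the bulk of the calculation.
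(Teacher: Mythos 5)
Your proposal is correct and follows essentially the same route as the paper: decompose the movement into the Newton step, the stochastic $\textsc{MaintainInfeasibility}$ correction, and the occasional IF-branch correction, bound the latter two via Lemma~\ref{lem:change_xbar_sbar} and Lemma~\ref{lem:move_x_correction} under the inductive hypotheses, and observe that $\gamma\alpha=\Omega(\epsilon/\log^3 n)$ makes the chosen $\epsilon_b,\epsilon_H,\zeta$ sufficient. The only cosmetic difference is that the paper defers the Newton-step contribution (and hence the first displayed inequality) directly to Theorem~\ref{thm:path_following_simplified} rather than re-deriving it from Lemma~\ref{lem:step_stability}, so it only has to bound $e_x$; both are equivalent.
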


\begin{proof}
Theorem \ref{thm:path_following_simplified} yields the bound (\ref{eq:total_x_movement})
if the extra movement $e_{x}$ caused by $\textsc{MaintainFeasibility}$
satisfies $\mixedNorm{\mx^{-1}e_{x}}{\tau}\leq\frac{\gamma\alpha}{2^{20}}$.

We define three errors : 1. the movement due to $\textsc{MaintainInfeasibility}$,
say $\mathrm{Err_{MIN}}$ ; 2, the movement due to IF-condition, say
$\mathrm{Err_{IFC}}$. Then we know that

\begin{align*}
\mathrm{Err_{MIN}^{2}}= & O(\epsilon_{b}\log^{2}(d)+\Phi_{b}(x^{(k-1)},x^{(k-2)},s^{(k-2)},\mu^{(k-1)}))\\
 & \cdot(\|\ln x^{(k-1)}-\ln x^{(k-2)}\|_{\tau(x^{(k-1)},s^{(k-1)})}^{2}+\|\ln s^{(k-1)}-\ln s^{(k-2)}\|_{\tau(x^{(k-1)},s^{(k-1)})}^{2})\\
 & +O(\epsilon_{H}^{2}\epsilon_{b}d\log^{2}n)\\
\mathrm{Err_{IFC}^{2}}= & O(\Phi_{b}(x^{(k-1)},x^{(k-2)},s^{(k-2)},\mu^{(k-2)}))
\end{align*}
by Lemma \ref{lem:change_xbar_sbar} and Lemma \ref{lem:move_x_correction}.
These can be bounded as follows
\begin{align*}
\mathrm{Err_{MIN}^{2}}\le & O(\zeta\epsilon^{2}/\log^{\power}n)\cdot\frac{\epsilon}{2}+O(\zeta\epsilon^{2}/\log^{6}n)\\
\mathrm{Err_{IFC}^{2}}\le & O(\zeta\epsilon^{2}/\log^{\power}n)
\end{align*}
So for small enough $\zeta>0$ we get
\[
\|\mx^{-1}e_{x}\|_{\tau+\infty}\leq\mathrm{Err_{MIN}+\mathrm{Err_{IFC}}}\leq\frac{\gamma\alpha}{2^{20}}
\]
where we use $\gamma\alpha=\Omega(\epsilon/\log^{3}n)$.
\end{proof}

\begin{proof}[Proof of Theorem~\ref{thm:path_following}]
Note that, when ignoring feasibility, Theorem~\ref{thm:path_following}
was proven as Theorem~\ref{thm:path_following_simplified}. So we
are only left with showing that $\Phi_{b}$ stays small and that the
condition of Theorem~\ref{thm:path_following_simplified} is true,
i.e. that the extra movement $e_{x}$ of $x$ by calling $\textsc{MaintainFeasibility}$
in Algorithm~\ref{alg:pathfollowing} satisfies

\[
\mixedNorm{\mx^{-1}e_{x}}{\tau}\leq\frac{\gamma\alpha}{2^{20}}.
\]

On one hand, the latter claim was proven in Lemma~\ref{lem:move_x_total},
assuming the infeasibility potential $\Phi_{b}$ is small. On the
other hand, Lemma~\ref{lem:Phi_b_move_slowly} shows that $\Phi_{b}$
does not change more than a multiplicative factor within $\sqrt{d}/\log^{\power}n$
iterations as long as $x$ and $s$ do not change to much in each
iteration. Thus by induction we have that $\Phi_{b}$ is small and
that $x$ and $s$ do not change much. After $\sqrt{d}/\log^{\power}n$
iterations the potential is decreased again by Lemma \ref{lem:decrease_Phib},
so $\Phi_{b}(x,x,s,\mu)$ stays small even after $\sqrt{d}\log^{\power}n$
iterations.

Note that the claim, that $\Phi_{b}$ stays small throughout a sequence
of $\otilde(\sqrt{d})$ iterations, hold only with probability $0.5$
by Lemma~\ref{lem:Phi_b_move_slowly}. Thus after every $\otilde(\sqrt{d})$
iterations we may have to revert the changes of the past $\otilde(\sqrt{d})$
iterations and retry. This increases the total runtime by another
$O(\log n)$ factor.

At last, note that during the last iteration of the IPM, we can apply
Lemma \ref{lem:decrease_Phib} once more to reduce
\[
\Phi_{b}(x,x,s,\mu)\le\frac{\zeta\epsilon^{2}}{\sqrt{d}\log^{\power}n}.
\]
\end{proof}
We are left with analyzing the complexity of Algorithm~\ref{alg:maintain_Feasibility}.
\begin{proof}[Proof of Theorem~\ref{thm:feasibilityComplexity}]
We choose two accuracy parameters as follows:
\[
\epsilon_{b}=\Omega(d^{-1/2}\epsilon^{-2}\log^{-2}n)\,\text{ and }\,\epsilon_{H}=\Omega(d^{-1/4}\epsilon^{-1}\log^{-3}n)
\]
then Lemma \ref{lem:random_delta_b} shows that computing $\delta_{b}$
in $\textsc{MaintainInfeasibility}$ takes $\otilde(d^{5/2}\epsilon^{-2})$
time. The products with $\mh_{1},\mh_{2},\mh_{3},\mh_{4}$ either
need $\otilde(d^{2}/\epsilon_{h}^{2})=\otilde(d^{5/2}\epsilon^{-2})$
time, or (in case of inverse matrices) require the same time as solving
the systems in Lines \ref{line:pf_3} and \ref{line:pf_4} of Algorithm
\ref{alg:pathfollowing}. So these costs are subsumed and we do not
have to count them. The algorithm used in Lemma \ref{lem:decrease_Phib}
involves computing $\ma^{\top}x-b$ which takes $\otilde(nd)$ time
and solving a linear system, which is subsumed by Lines \ref{line:pf_3}
and \ref{line:pf_4} of Algorithm \ref{alg:pathfollowing}. Since
this happens every $\tilde{\Omega}(\sqrt{d})$ iterations, the amortized
cost is $\otilde(n\sqrt{d})$. 

Thus the total additional amortized cost is 

\[
\otilde(n\sqrt{d}+d^{5/2}\epsilon^{-2}).
\]
\end{proof}

\section{Constructing the Initial Point\label{sec:initialPoint}}

Here we want to prove Lemma \ref{thm:infeasible_reduction} which
is used to quickly find an initial point for our IPM. Lemma \ref{thm:infeasible_reduction}
is an extension of the following known reduction. We modify this reduction
so that we no longer require our final solution to be feasible.
\begin{lem}[\cite{YTM94,cohen2019solving}]
\label{lem:original_reduction}Consider a linear program $\min_{\ma^{\top}x=b,x\geq0}c^{\top}x$
with $n$ variables and $d$ constraints. Assume that \\
 1. Diameter of the polytope : For any $x\geq0$ with $\ma^{\top}x=b$,
we have that $\|x\|_{2}\leq R$.\\
 2. Lipschitz constant of the linear program : $\|c\|_{2}\leq L$.

For any $\delta\in(0,1]$, the modified linear program $\min_{\overline{\ma}^{\top}\overline{x}=\overline{b},\overline{x}\geq0}\overline{c}^{\top}\overline{x}$
with
\[
\overline{\ma}=\left[\begin{array}{cc}
\ma & 1_{n}\|\ma\|_{F}\\
0 & 1\|\ma\|_{F}\\
\frac{1}{R}b^{\top}-1_{n}^{\top}\ma & 0
\end{array}\right]\in\R^{(n+2)\times(d+1)},\overline{b}=\begin{bmatrix}\frac{1}{R}b\\
(n+1)\|\ma\|_{F}
\end{bmatrix}\in\R^{d+1}\text{~and,~}\overline{c}=\begin{bmatrix}\frac{\delta}{L}\cdot c\\
0\\
1
\end{bmatrix}\in\R^{n+2}
\]
satisfies the following:\\
 1. $\overline{x}=\begin{bmatrix}1_{n}\\
1\\
1
\end{bmatrix}$, $\overline{y}=\begin{bmatrix}0_{d}\\
-1
\end{bmatrix}$ and $\overline{s}=\begin{bmatrix}1_{n}+\frac{\delta}{L}\cdot c\\
1\\
1
\end{bmatrix}$ are feasible primal dual vectors.\\
 2. For any feasible primal dual vectors $(\overline{x},\overline{y},\overline{s})$
with duality gap $\leq\delta^{2}$, consider the vector $\widehat{x}=R\cdot\overline{x}_{1:n}$
($\overline{x}_{1:n}$ is the first $n$ coordinates of $\overline{x}\in\R^{n+2}$)
is an approximate solution to the original linear program in the following
sense 
\begin{align*}
c^{\top}\widehat{x}\leq & ~\min_{\ma^{\top}x=b,x\geq0}c^{\top}x+LR\cdot\delta,\\
\|\ma^{\top}\widehat{x}-b\|_{2}\leq & ~4n^{2}\delta\cdot\Big(\|A\|_{F}R+\|b\|_{2}\Big),\\
\widehat{x} & \geq0.
\end{align*}
\end{lem}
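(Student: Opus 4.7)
The plan is to verify the two conclusions separately: the primal--dual feasibility of the explicit starting triple reduces to block matrix arithmetic, while the approximation guarantees follow from a standard LP duality argument between a carefully constructed ``lift'' of the original optimum and the given low-duality-gap iterate.

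For the first conclusion I would expand $\overline{\ma}^\top \overline{x}$ using the block structure of $\overline{\ma}$: the first $d$ coordinates evaluate to $\ma^\top 1_n + 0 + (\tfrac{1}{R}b - \ma^\top 1_n) = \tfrac{1}{R}b$, and the $(d+1)$-st coordinate to $\|\ma\|_F(1_n^\top 1_n) + \|\ma\|_F + 0 = (n+1)\|\ma\|_F = \overline{b}_{d+1}$. The dual equation $\overline{\ma}\overline{y} + \overline{s} = \overline{c}$ is checked analogously by right-multiplying $\overline{\ma}$ by $\overline{y} = [0_d;-1]$, obtaining $\overline{\ma}\overline{y} = -[\|\ma\|_F\cdot 1_n;\, \|\ma\|_F;\, 0]$, and comparing entrywise with $\overline{c} - \overline{s}$; non-negativity of $\overline{x}$ and $\overline{s}$ is immediate from the stated formulas.

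For the second conclusion I would, given any $x \geq 0$ with $\ma^\top x = b$, define the \emph{lift} $\overline{x}^{\mathrm{lift}} := [x/R;\, n+1 - 1_n^\top x/R;\, 0] \in \R^{n+2}$. A short block computation shows $\overline{\ma}^\top \overline{x}^{\mathrm{lift}} = \overline{b}$, and the diameter bound $\|x\|_2 \leq R$ gives $1_n^\top x \leq \sqrt{n}R \leq (n+1)R$, so $\overline{x}^{\mathrm{lift}} \geq 0$. The cost of this lift is $\overline{c}^\top \overline{x}^{\mathrm{lift}} = \tfrac{\delta}{LR}\, c^\top x$, whence $\min \overline{c}^\top \overline{x} \leq \tfrac{\delta}{LR} \min_{\ma^\top x = b,\, x \geq 0} c^\top x$. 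Writing the hypothesis iterate as $\overline{x} = [x;t_1;t_2]$, weak duality together with duality gap at most $\delta^2$ yields
\[
\overline{c}^\top \overline{x} \leq \tfrac{\delta}{LR}\min c^\top x + \delta^2.
\]
Since $\overline{c}^\top \overline{x} = \tfrac{\delta}{LR} c^\top \widehat{x} + t_2$, and using the row $1_n^\top x + t_1 = n+1$ of the primal constraint to bound $\|x\|_1 \leq n+1$ (hence $|c^\top x| \leq L(n+1)$ via Cauchy--Schwarz and $\|c\|_2 \leq L$), this simultaneously gives $c^\top \widehat{x} \leq \min c^\top x + LR\delta$ and an upper bound $t_2 = O(n\delta)$. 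Finally, the first $d$ rows of $\overline{\ma}^\top \overline{x} = \overline{b}$ rearrange to $\ma^\top \widehat{x} - b = t_2(R\, \ma^\top 1_n - b)$, so $\|\ma^\top \widehat{x} - b\|_2 \leq t_2(R\sqrt{n}\|\ma\|_F + \|b\|_2) \leq O(n^2) \delta (R\|\ma\|_F + \|b\|_2)$.

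The main obstacle will be carefully tracking the polynomial-in-$n$ factors when extracting $t_2$: the complementary slackness condition $\overline{x}_{n+2}\overline{s}_{n+2} \leq \delta^2$ alone is insufficient because $\overline{s}_{n+2}$ has no a priori lower bound, so one must instead combine the lift-based upper bound on $\overline{c}^\top \overline{x}$ with the explicit $\|x\|_1$ control afforded by the dummy constraint $1_n^\top x + t_1 = n+1$. Once $t_2$ has been controlled this way, propagating to the claimed $4n^2\delta$ feasibility bound and the $LR\delta$ cost bound is routine algebra with room to spare for the stated constant.
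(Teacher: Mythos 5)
The paper itself does not prove Lemma~\ref{lem:original_reduction}---it is cited from \cite{YTM94,cohen2019solving} and used as a black box in the proof of Theorem~\ref{thm:infeasible_reduction}---so there is no paper-internal proof to compare against. Your approach (direct block-matrix verification for the starting triple; a lift plus weak duality for the approximation bounds) is the standard one and matches what one finds in those references. Your treatment of Part~2 is sound: the lift $\overline{x}^{\mathrm{lift}} = [x/R;\, n+1-1_n^{\top}x/R;\,0]$ is primal-feasible with cost $\tfrac{\delta}{LR}c^{\top}x$; combining this with the duality-gap hypothesis and the explicit $\ell_1$ control $1_n^{\top}x + t_1 = n+1$ from the auxiliary row yields both the cost bound and $t_2 = O(n\delta)$, and the identity $\ma^{\top}\widehat{x} - b = t_2\,(R\ma^{\top}1_n - b)$ together with $\|\ma^{\top}1_n\|_2 \le \sqrt{n}\|\ma\|_F$ gives the feasibility bound with room to spare against the stated $4n^2$. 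Your observation that $\overline{x}_{n+2}\overline{s}_{n+2}\le\delta^2$ alone cannot bound $t_2$, forcing the detour through the cost, is exactly the right thing to notice.

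There is, however, a genuine loose end in Part~1 that you gloss over. You compute $\overline{\ma}\overline{y} = -[\,\|\ma\|_F 1_n;\ \|\ma\|_F;\ 0\,]$ correctly, but then say only ``compare entrywise with $\overline{c}-\overline{s}$'' without doing so. Carrying out that comparison gives $\overline{c} - \overline{\ma}\overline{y} = [\,\tfrac{\delta}{L}c + \|\ma\|_F 1_n;\ \|\ma\|_F;\ 1\,]$, which does \emph{not} equal the $\overline{s} = [\,1_n + \tfrac{\delta}{L}c;\ 1;\ 1\,]$ stated in the lemma unless $\|\ma\|_F = 1$. The $\|\ma\|_F$ scaling is applied consistently on the primal side (the middle column of $\overline{\ma}$ and the last entry of $\overline{b}$) but is missing from the displayed $\overline{s}$, so either the lemma implicitly normalizes $\|\ma\|_F=1$ or the stated $\overline{s}$ has a transcription slip. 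A complete proof should either derive the corrected $\overline{s}$ (and then check its non-negativity, which does not follow from $\|c\|_2\le L$ alone when $\|\ma\|_F$ is small) or explicitly invoke a normalization. As written, your verification asserts a claim that does not check out without such a caveat.
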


\initialPoint*
\begin{proof}
Assume for simplicity that the original input LP satisfies $\|\ma^{\top}1_{n}-b/R\|_{\infty}\ge0.5(\|\ma\|_{F}+\|b\|/R)$.
If this assumption is not satisfied, then we can simply add a variable
$x_{n+1}$ and the constraint $(\|\ma\|_{F}+\|b\|/R)x_{n+1}=0$. Note
that this does not increase the value of $\norm b_{2}$,$\norm c_{2}$,
or $R$ for this new LP. Further, the constraint matrix, $\ma'\in\R^{(d+1)\times(n+1)}$
of this new LP satisfies $\|\ma'\|_{F}\le2\|\ma\|_{F}+\|b\|/R$ and
the new constraint vector $b'$ satisfies
\[
\norm{\left(\ma'\right)^{\top}\vones_{n}-b'/R}_{\infty}\geq\left(\norm{\ma}_{F}+\norm b_{2}/R\right)\geq\frac{1}{2}\left(\norm{\ma'}_{F}+\norm b_{2}/R\right)
\]
as desired. Consequently, if we take this new LP as input to our reduction
we only need to decrease $\delta$ by a factor of $2$ more than before
to obtain the same result. So for now assume $\|\ma^{\top}1_{n}-b\|_{\infty}\ge0.5(\|\ma\|_{F}+\|b\|/R)$.

Given an (infeasible) solution $(\overline{x},\overline{y},\overline{s})\in\R^{n+2}\times\R^{d+1}\times\R^{n+2}$
to the modified LP, let $(\overline{x}',\overline{y},\overline{s})$
be the feasible solution we get via projecting
\begin{align*}
\overline{x}'= & \overline{x}+\omx\oms{}^{-1}\overline{\ma}(\overline{\ma}^{\top}\omx\oms^{-1}\overline{\ma})^{-1}(b-\overline{\ma}^{\top}\overline{x}).
\end{align*}
Then this new point is feasible because 
\begin{align*}
\overline{\ma}^{\top}x'= & \overline{\ma}^{\top}\overline{x}+\overline{\ma}^{\top}\omx\oms{}^{-1}\overline{\ma}(\overline{\ma}^{\top}\omx\oms^{-1}\overline{\ma})^{-1}(b-\overline{\ma}^{\top}\overline{x})\\
= & \overline{\ma}^{\top}\ox+(b-\overline{\ma}^{\top}\overline{x})=b,
\end{align*}
and by Lemma~\ref{lem:move_x_correction} we have
\begin{align}
\|\omx^{-1}(\overline{x}'-\overline{x})\|_{\tau+\infty}^{2} & =O\left(\Phi_{b}\right).\label{eq:infeasible_x_close}
\end{align}
As $\ox'$ is feasible, we have $\sum_{i\in[n+1]}\ox'_{i}\le n+1$
by the constraints of the modified LP and thus $\ox'_{i}\le n+1$
for $i\le n+1$. To bound $\theta\defeq\ox'_{n+2}$ note that for
all $i$ with $(\ma^{\top}1_{n}-b/R)_{i}\neq0$ we have by the constraint
for the LP
\begin{align*}
\theta= & \frac{(\ma^{\top}\ox'_{1:n}-b/R)_{i}}{(\ma^{\top}1_{n}-b/R)_{i}}=\frac{\left|(\ma^{\top}\ox'_{1:n}-b/R)_{i}\right|}{\left|(\ma^{\top}1_{n}-b/R)_{i}\right|}.
\end{align*}
By choosing $i$ to be the maximizer of the absolute value of the
denominator we have
\[
\theta\leq\frac{\norm{\ma^{\top}\ox'_{1:n}-b/R}_{\infty}}{\norm{(\ma^{\top}1_{n}-b/R)_{i}}_{\infty}}\leq\frac{(\|\ma\|_{F}+\|b\|_{2}/R)(n+1)}{\norm{(\ma^{\top}1_{n}-b/R)_{i}}_{\infty}}=O(n)
\]
where in the second step we used that $\norm{\ox'_{1:n}}_{2}\leq\sum_{j\in[n]}\ox'_{j}\leq n+1$
and in the third step we used the assumption that $\|\ma^{\top}1_{n}-b/R\|_{\infty}\ge0.5(\|\ma\|_{F}+\|b\|_{2}/R)$.
Thus in summary we have $\|\ox'\|_{\infty}=O(n)$ and $\|\ox\|_{\infty}\le(1+O(\Phi_{b}))\cdot O(n)$
by (\ref{eq:infeasible_x_close}). This concludes the proof on $\|\ox\|_{\infty}$
and we are left with proving the last claim of Theorem~\ref{thm:infeasible_reduction}.

For our feasible solution $(\ox',\overline{y},\os)$ we can bound
the duality gap as follows
\begin{align*}
\sum_{i\in[n+2]}\ox'_{i}\cdot\os_{i}= & \sum_{i\in[n+2]}\ox_{i}\cdot\os_{i}(1+\ox_{i}^{-1}(\overline{x}'_{i}-\overline{x}_{i}))\\
\le & \sum_{i\in[n+2]}2\mu\tau_{i}(1+O(\sqrt{\Phi_{b}}))\\
= & 4\mu d(1+O(\sqrt{\Phi_{b}}))
\end{align*}
where we used $\ox_{i}\cdot\os_{i}\le2\mu\tau_{i}$, $\tau_{i}=\sigma_{i}+n/d$,
$\sum_{i}\sigma_{i}=d$ and the previous bound on $\|\omx^{-1}(\overline{x}'-\overline{x})\|_{\tau+\infty}^{2}$.
So for $\mu\le\delta^{2}/(8d)$ and small enough $\Phi_{b}=O(1)$,
this duality gap is bounded by $\delta^{2}$. This allows us to apply
Lemma~\ref{lem:original_reduction}, so we can construct a good solution
$\widehat{x}'\in\R^{n}$ for the original LP from $\ox'\in\R^{n+2}$
by taking the first $n$ coordinates and scaling them by $R$. We
further construct an $\widehat{x}\in\R^{n}$ from our infeasible solution
$\ox\in\R^{n+2}$ in the same way. For this infeasible $\widehat{x}\in\R^{n}$
we then have 
\begin{align*}
c^{\top}\widehat{x}= & c^{\top}\widehat{x}'+c^{\top}(\widehat{x}-\widehat{x}')\\
\le & \min_{\ma^{\top}x=b,x\geq0}c^{\top}x+LR\cdot\delta+c^{\top}(\widehat{x}-\widehat{x}').
\end{align*}
The impact of the last error term can be bounded by
\begin{align*}
c^{\top}(\widehat{x}-\widehat{x}') & \le\|c^{\top}\|_{2}\cdot\|\widehat{x}-\widehat{x}'\|_{2}\le L\|\widehat{x}'\cdot\widehat{x}'^{-1}(\widehat{x}-\widehat{x}')\|_{2}\\
 & \le L\|\widehat{x}'\|_{2}\cdot\|\widehat{x}'^{-1}(\widehat{x}-\widehat{x}')\|_{\infty}\\
 & \le LR\cdot O\left(n\sqrt{\Phi_{b}}\right)\,,
\end{align*}
where we used that $\overline{x}\in\R^{n+2}$ is feasible, so $\sum_{i\in[n+1]}\overline{x}'_{i}=n+1$,
which means $\|\widehat{x}'\|_{2}\le\|\widehat{x}'\|_{1}\le R(n+1)$.
The last step uses that the entries of $\overline{x}'\in\R^{n+2}$
and $\overline{x}\in\R^{n+2}$ are close multiplicatively, so the
same is also true for $\widehat{x}\in\R^{n}$ and $\widehat{x}'\in\R^{n}$.
In summary we thus obtain
\begin{align*}
c^{\top}\widehat{x}\le & \min_{\ma^{\top}x=b,x\geq0}c^{\top}x+LR\cdot(\delta+O(n\sqrt{\Phi_{b}})).
\end{align*}

We are left with proving the bound on $\|\ma^{\top}\widehat{x}-b\|_{2}$.
For this note that
\begin{align*}
\|\ma^{\top}\widehat{x}-b\|_{2}\le & \|\ma^{\top}\widehat{x}'-b\|_{2}+\|\ma^{\top}(\widehat{x}-\widehat{x}')\|_{2}\\
\le & 4n^{2}\delta\cdot\Big(\|A\|_{F}R+\|b\|_{2}\Big)+\|\ma^{\top}(\widehat{x}-\widehat{x}')\|_{2}
\end{align*}
via triangle inequality and Lemma \ref{lem:original_reduction}. So
it remains to bound $\|\ma^{\top}(\widehat{x}-\widehat{x}')\|_{2}$.
Let $x\in\R^{n+1}$ be the first $n+1$ entries of $\ox\in\R^{n+2}$
and $\theta\in\R$ be the last entry of $\ox\in\R^{n+2}$ (and likewise
$x'$ and $\theta'$ for the feasible $\ox'$), then
\begin{align*}
\|\overline{\ma}^{\top}(\ox-\ox')\|_{2}^{2} & =\frac{1}{R^{2}}\|\ma^{\top}(\widehat{x}-\widehat{x}')+(b-R\ma^{\top}1_{n})(\theta-\theta')\|_{2}^{2}+(1_{n+1}^{\top}(x-x'))^{2}\\
 & \geq\frac{1}{R^{2}}\|\ma^{\top}(\widehat{x}-\widehat{x}')+(b-R\ma^{\top}1_{n})(\theta-\theta')\|_{2}^{2}.
\end{align*}
Hence, we have
\begin{equation}
\|\ma^{\top}(\widehat{x}-\widehat{x}')\|_{2}\leq\|\overline{\ma}^{\top}(\ox-\ox')\|_{2}R+\|b-R\ma^{\top}1_{n}\|_{2}|\theta-\theta'|.\label{eq:initial_pt_A_err}
\end{equation}

We first bound the term $\|\overline{\ma}^{\top}(\ox-\ox')\|_{2}$
in (\ref{eq:initial_pt_A_err}). Note that $\Phi_{b}=\mu^{-1}\|\overline{\ma}^{\top}(\ox-\ox')\|_{(\overline{\ma}^{\top}\omx\oms{}^{-1}\overline{\ma})^{-1}}^{2}$.
So we bound the $\ell_{2}$-norm via this local norm.
\begin{align*}
\|\overline{\ma}^{\top}(\ox-\ox')\|_{2}^{2}\le & \frac{1}{\lambda_{\min}((\overline{\ma}^{\top}\omx\oms{}^{-1}\overline{\ma})^{-1})}\|\overline{\ma}^{\top}(\ox-\ox')\|_{(\overline{\ma}^{\top}\omx\oms{}^{-1}\overline{\ma})^{-1}}^{2}\\
= & \lambda_{\max}(\overline{\ma}^{\top}\omx\oms{}^{-1}\overline{\ma})\mu\Phi_{b}\\
\leq & (\max_{i\in[n]}\ox{}_{i}/\os{}_{i})\lambda_{\max}(\overline{\ma}^{\top}\overline{\ma})\mu\Phi_{b}\\
\leq & (\max_{i\in[n]}\ox{}_{i}/\os{}_{i})\|\overline{\ma}\|_{F}^{2}\mu\Phi_{b}\\
\leq & O(1)\cdot(\max_{i\in[n]}x'{}_{i}/\os{}_{i})\|\overline{\ma}\|_{F}^{2}\mu\Phi_{b}
\end{align*}
where in the last step we used that $x$ and $x'$ differ by an $1\pm O(\sqrt{\Phi_{b}})=O(1)$
factor. We already argued $\|x'\|_{\infty}=O(n)$ and we have $x's\ge0.5\tau\mu$,
so 
\begin{align*}
1/s_{i} & \le2x'_{i}/(\tau\mu)\le2nx'_{i}/(d\mu).
\end{align*}
With the previous bounds on $x$ this leads to
\begin{align*}
\max_{i\in[n]}x'_{i}/s{}_{i} & =O(n^{3}/(d\mu))=O(n^{3}/(d\mu)),
\end{align*}
which means 
\begin{equation}
\|\overline{\ma}^{\top}(\ox-\ox')\|_{2}^{2}\le O((n^{3}/d)\|\overline{\ma}\|_{F}^{2}\Phi_{b}).\label{eq:initial_pt_A_err1}
\end{equation}

Now, we bound the term $\|b-\ma^{\top}1_{n}\|_{2}|\theta-\theta'|$
in (\ref{eq:initial_pt_A_err}). Since we know that $|\theta|$ and
$|\theta'|$ are bounded by $O(n\delta)$ (this is proven in \cite{cohen2019solving}
where they proved Lemma~\ref{lem:original_reduction}), we have the
bound
\begin{equation}
\|b-R\ma^{\top}1_{n}\|_{2}^{2}\cdot|\theta-\theta'|^{2}\leq O(n\|\ma\|_{F}^{2}R^{2}+\|b\|_{2}^{2})\cdot n^{2}\delta^{2}.\label{eq:initial_pt_A_err2}
\end{equation}
Putting (\ref{eq:initial_pt_A_err1}) and (\ref{eq:initial_pt_A_err2})
into (\ref{eq:initial_pt_A_err}), we have
\[
\|\ma^{\top}(\widehat{x}-\widehat{x}')\|_{2}^{2}\leq O((n^{3}/d)\|\overline{\ma}\|_{F}^{2}\Phi_{b}R^{2})+O(n\|\ma\|_{F}^{2}R^{2}+\|b\|_{2}^{2})\cdot n^{2}\delta^{2}.
\]
At last we can bound 
\begin{align*}
\|\overline{\ma}\|_{F}\le & \|\ma\|_{F}+(n+1)^{0.5}\|\ma\|_{F}+\|\frac{1}{R}b-\ma^{\top}1_{n}\|_{2}\\
\le & O(n^{0.5}\|\ma\|_{F}+\frac{1}{R}\|b\|_{2}),
\end{align*}
So, in total, we obtain
\[
\|\ma^{\top}(\widehat{x}-\widehat{x}')\|_{2}\leq O(\|\ma\|_{F}Rn^{2}+\|b\|_{2}n)\cdot(\sqrt{\Phi_{b}}+\delta).
\]
and hence
\begin{align*}
\|\ma^{\top}\widehat{x}-b\|_{2}\le & \|\ma^{\top}\widehat{x}'-b\|_{2}+\|\ma^{\top}(\widehat{x}-\widehat{x}')\|_{2}\\
\le & 4n^{2}\delta\cdot\Big(\|\ma\|_{F}R+\|b\|_{2}\Big)+\|\ma^{\top}(\widehat{x}-\widehat{x}')\|_{2}\\
\le & 4n^{2}\delta\cdot\Big(\|\ma\|_{F}R+\|b\|_{2}\Big)+O(\|\ma\|_{F}Rn^{2}+\|b\|_{2}n)\cdot(\sqrt{\Phi_{b}}+\delta)\\
\le & O(n^{2})\cdot(\|\ma\|_{F}R+\|b\|_{2})\cdot(\sqrt{\Phi_{b}}+\delta).
\end{align*}
\end{proof}

\section{Gradient Maintenance}

\label{sec:grad_maintenance}

In this section we provide a data structure for efficiently maintaining
$\nabla\Phi(\overline{v})^{\flat}$ and $\ma^{\top}\mx\nabla\Phi(\overline{v})^{\flat}$
in our IPM, i.e. Algorithm~\ref{alg:pathfollowing}.

\begin{algorithm2e}[t]

\caption{Algorithm for Maintaining $\nabla\Phi(\ov)^{\flat}$ and
$\ma^{\top}\mx\nabla\Phi(\ov)^{\flat}$ (Theorem \ref{thm:gradient_maintenance})}\label{alg:gradient_maintenance}

\SetKwProg{Members}{members}{}{}

\SetKwProg{Proc}{procedure}{}{}

\Members{}{

\State $\ov_{i}^{(k,\ell)}\in\{0,1\}^{n}$ \tcp*{Indicator vectors
for a partition of $[n]$}

\State $w^{(k,\ell)}\in\R^{d}$ \tcp*{Maintained to be $\ma^{\top}\mx\ov^{(k,\ell)}$}

\State $x,v,\overline{\tau}\in\R^{n}$ 

}

\vspace{0.1 in}

\Proc{$\dsinit(\ma\in\R^{n\times d},v\in\R^{n},\tau\in\R^{n},x\in\R^{n},\epsilon>0)$}{

\State $\ma\leftarrow\ma$, $x\leftarrow x$, $v\leftarrow v$

\For{$k=1,...,\log^{-1}(1-\epsilon)$ and $\ell=0,...,(1.5/\epsilon)$}{

\State $\ov^{(k,\ell)}\leftarrow0_{n}$

\State $\ov_{i}^{(k,\ell)}\leftarrow1$ and $\overline{\tau}_{i}\leftarrow(1-\epsilon)^{k+1}$
if $0.5+\ell\epsilon/2\le v_{i}<0.5+(\ell+1)\epsilon/2$ and $(1-\epsilon)^{k+1}\le\tau_{i}\le(1-\epsilon)^{k}.$

\State $w^{(k,\ell)}\leftarrow\ma^{\top}\mx\cdot\ov^{(k,\ell)}$

}

}

\vspace{0.1 in}

\Proc{$\textsc{Update}(i\in[n],a\in\R,b\in\R,c\in\R)$}{

\State Find $k,\ell$ such that $\ov_{i}^{(k,\ell)}=1$, then $\ov_{i}^{(k,\ell)}\leftarrow0$.

\State $w^{(k,\ell)}\leftarrow w^{(k,\ell)}-\ma^{\top}x_{i}\indicVec i$

\State Find $k,\ell$ such that $0.5+\ell\epsilon/2\le a<0.5+(\ell+1)\epsilon/2$
and $(1-\epsilon)^{k+1}\le b\le(1-\epsilon)^{k}$, then $\ov_{i}^{(k,\ell)}\leftarrow0$.

\State $w^{(k,\ell)}\leftarrow w^{(k,\ell)}+\ma^{\top}c\indicVec i$

\State $v_{i}\leftarrow a$, $x_{i}\leftarrow c$, $\overline{\tau}_{i}\leftarrow(1-\epsilon)^{k+1}$

}

\vspace{0.1 in}

\Proc{$\textsc{Query}()$}{

\State Find $s^{(k,l)}$ with $\argmax_{\|w\|_{2}+\|w/\sqrt{\overline{\tau}}\|_{\infty}\leq1}\left\langle \nabla\Phi(\overline{v})/\sqrt{\overline{\tau}},w\right\rangle =\sum_{k,l}s^{(k,l)}\overline{v}^{(k,\ell)}$
via Algorithm 8 from \cite{lsJournal19}.

\State \Return$\sum_{k,l}s^{(k,l)}\overline{v}^{(k,\ell)}$ and
$\sum_{k,l}s^{(k,l)}\ma^{\top}\mx\overline{v}^{(k,\ell)}$

}

\end{algorithm2e}
\begin{thm}
\label{thm:gradient_maintenance} There exists a deterministic data-structure
that supports the following operations
\begin{itemize}
\item $\textsc{Initialize }(\ma\in\R^{n\times d},v\in\R^{n},\tau\in\R^{n},x\in\R^{n},\epsilon>0)$:
The data-structure preprocesses the given matrix $\ma\in\R^{n\times d}$,
vectors $v,\tau,x\in\R^{n}$, and accuracy parameter $\epsilon>0$
in $O(nd)$ time. The data-structure assumes $0.5\le v\le2$ and $d/n\le\tau\le2$.
\item $\textsc{Update}(i,a,b,c)$: Sets $v_{i}=a$, $\tau_{i}=b$ and $x_{i}=c$
in $O(d)$ time. The data-structure assumes $0.5\le v\le2$ and $d/n\le\tau\le2$.
\item $\textsc{Query}()$: Returns $\nabla\Phi(\ov)^{\flat}$ and $\ma^{\top}\mx\nabla\Phi(\ov)^{\flat}$
for $\|\ov-v\|_{\infty}\le\epsilon$ in $\otilde(n)$ time. See Lemma~\ref{lem:smoothing:helper}
for $(\cdot){}^{\flat}$.
\end{itemize}
\end{thm}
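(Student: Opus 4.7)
The plan is to exploit a simple observation: if we bucket coordinates into $\otilde(1)$ many groups according to geometric ranges of the values $v_i$ and $\tau_i$, then within a group the vectors $v$ and $\tau$ are (up to factor $e^{\epsilon}$) piecewise constant, and therefore so are $\nabla\Phi(\overline v)$ and any function of $\tau$ appearing in the definition of $(\cdot)^{\flat}$. Because of this piecewise-constancy, the optimal $\flat$ direction is constant on each bucket and hence is determined by only $\otilde(1)$ scalars, which is what Algorithm~8 of \cite{lsJournal19} is designed to produce.

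Concretely, I would define the partition $\{\overline v^{(k,\ell)}\}$ of $[n]$ by $v_i\in[0.5+\ell\epsilon/2,0.5+(\ell+1)\epsilon/2)$ and $\tau_i\in[(1-\epsilon)^{k+1},(1-\epsilon)^{k}]$. Since $v\in[0.5,2]$ and $\tau\in[d/n,2]$, the number of groups is $O(\epsilon^{-1})\cdot O(\epsilon^{-1}\log n)=\otilde(1)$. For each group I maintain the aggregated vector $w^{(k,\ell)}\defeq\ma^{\top}\mx\,\overline v^{(k,\ell)}\in\R^{d}$. For initialization, each row of $\ma$ contributes to exactly one $w^{(k,\ell)}$, giving total cost $O(nd)$. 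For $\textsc{Update}(i,a,b,c)$, we locate the current group of $i$, subtract $x_i\ma^{\top}e_i$ from the corresponding $w^{(k,\ell)}$, move $i$ to its new group (determined by $a,b$), and add $c\,\ma^{\top}e_i$ to the new $w^{(k,\ell)}$. This costs $O(d)$ per update.

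For $\textsc{Query}()$, I would let $\overline v,\overline\tau$ denote the piecewise constant vectors whose values on each group are the representative bucket midpoints; by construction $\|\overline v-v\|_\infty\leq\epsilon$ and $\overline\tau\approx_{2\epsilon}\tau$. Since $\nabla\Phi(\overline v)$ and $\overline\tau$ are constant on each group, the optimizer
\[
\nabla\Phi(\overline v)^{\flat(\overline\tau)}=\argmax_{\|w\|_{\overline\tau+\infty}\leq 1}\bigl\langle\nabla\Phi(\overline v),w\bigr\rangle
\]
is constant on each group as well---any nonconstant choice can be averaged within a group without increasing either $\|w\|_\infty$ or $\|w\|_{\overline\tau}$. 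Thus the problem reduces to optimizing over $\otilde(1)$ scalars $s^{(k,\ell)}$, exactly the form solved by Algorithm~8 of \cite{lsJournal19}, giving the scalars in $\otilde(1)$ time. The first output $\nabla\Phi(\overline v)^{\flat}=\sum_{k,\ell}s^{(k,\ell)}\overline v^{(k,\ell)}$ can then be written in $O(n)$ time since the supports are disjoint, and the second output
\[
\ma^{\top}\mx\nabla\Phi(\overline v)^{\flat}=\sum_{k,\ell}s^{(k,\ell)}w^{(k,\ell)}
\]
is a sum of $\otilde(1)$ vectors of length $d$, costing $\otilde(d)$.

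The main technical point to verify carefully is the reduction to piecewise constant optimization: one must check that the dual norm appearing in $(\cdot)^{\flat}$ (which mixes $\|\cdot\|_\infty$ and $\|\cdot\|_{\overline\tau}$) really does admit an optimizer that is constant on groups on which both $\overline\tau$ and $\nabla\Phi(\overline v)$ are constant. This is a straightforward averaging/convexity argument: fix any feasible $w$ and replace its restriction to a group $G$ by its $\overline\tau$-weighted average; the objective is unchanged because $\nabla\Phi(\overline v)$ is constant on $G$, $\|w\|_{\overline\tau}$ does not increase by convexity of $t\mapsto t^2$, and $\|w\|_\infty$ does not increase since the average of bounded numbers is bounded. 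Once this observation is in hand, the correctness of Algorithm~\ref{alg:gradient_maintenance} and the claimed complexities follow from the bucket counting and bookkeeping described above.
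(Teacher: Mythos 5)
Your proposal is correct and follows essentially the same approach as the paper's Algorithm~\ref{alg:gradient_maintenance}: bucket coordinates by multiplicative ranges of $v_i$ and $\tau_i$, maintain the aggregated vectors $w^{(k,\ell)}=\ma^{\top}\mx\,\overline v^{(k,\ell)}$ under single-coordinate updates, and invoke Algorithm~8 of \cite{lsJournal19} on the bucketed problem at query time. Your explicit averaging argument for why the optimizer of the $(\cdot)^{\flat}$ problem is constant on buckets where $\nabla\Phi(\overline v)$ and $\overline\tau$ are both constant is a welcome clarification of a step the paper only gestures at by citing the structure of Algorithm~8; the only small inaccuracy is the claim that the scalar solve takes $\otilde(1)$ time (the paper budgets $O(n\log n)$ for it), which does not affect the stated $\otilde(n)$ query bound.
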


\begin{proof}
We start by explaining the algorithm and analyzing its complexity.
Afterwards we prove the correctness.

\paragraph*{Algorithm:}

During preprocessing and updates, the algorithm maintains a partition
$V^{(k,\ell)}\subset[n]$ where $i\in V^{(k,\ell)}$ if 
\begin{align*}
0.5+\ell\epsilon/2\le & v_{i}<0.5+(\ell+1)\epsilon/2\\
(1-\epsilon)^{k+1}\le & \tau_{i}\le(1-\epsilon)^{k}.
\end{align*}
Additionally the data-structure maintains the products $\ma^{\top}\mx\ov^{(k,\ell)}$
where $\ov_{i}^{(k,\ell)}=1$ if $i\in V^{(k,\ell)}$ and $v_{i}^{(k,\ell)}=0$
otherwise. 

During queries the data-structure computes the following: Define $\phi(x):=\lambda(\exp(\lambda(x-1))-\exp(-\lambda(x-1)))=(\nabla\Phi(x))_{i}$,
then we apply Algorithm 8 from \cite{lsJournal19} to compute
\begin{align*}
\sum_{k,\ell}s^{(k,\ell)}\overline{v}^{(k,\ell)} & =\argmax_{\|w\|_{2}+\|w/\sqrt{\overline{\tau}}\|_{\infty}\leq1}\left\langle \sum_{k,\ell}\phi(0.5+\ell\epsilon/2)\cdot\overline{v}^{(k,\ell)}/\sqrt{\overline{\tau}},w\right\rangle 
\end{align*}
in $O(n\log n)$ time and then return $\sum_{k,\ell}s^{(k,\ell)}\overline{v}^{(k,\ell)}$
and $\sum_{k,\ell}s^{(k,\ell)}\ma^{\top}\mx\overline{v}^{(k,\ell)}$.
The solution can be represented by $\sum_{k,\ell}s^{(k,\ell)}\overline{v}^{(k,\ell)}$
because all coordinates with same value of $\tau_{i}$ and $x_{i}$
has the same $w_{i}$ in the solution. This can be seen easily from
the description of Algorithm 8 from \cite{lsJournal19}.

\paragraph{Complexity:}

Every entry change to $v,\tau$ might move an index $i$ from some
$V^{(k,\ell)}$ to another $V^{(k,\ell)}$ which corresponds to changing
one entry of $\overline{v}^{(k,\ell)}$ and $\overline{v}^{(k',\ell')}$.
So every update to $v,\tau,x$ costs only $O(d)$ time to update $\ma^{\top}\mx\overline{v}^{(k,\ell)}$.

\paragraph{Correctness:}

For the given vector $v$, we can split $v$ into groups by grouping
the entries to multiples of $\epsilon/2$. By assumption we have $0.5\le v\le2$,
so we have at most $O(1/\epsilon)$ many groups. By rounding down
$v$ on each of these groups we obtain
\begin{align*}
\ov & :=\sum_{k}(0.5+k\epsilon/2)\cdot\sum_{i:0.5+k\epsilon/2\le v_{i}<0.5+(k+1)\epsilon/2}\indicVec i
\end{align*}
which satisfies $\|\ov-v\|_{\infty}\leq\epsilon/2$. For notational
simplicity define 
\begin{align*}
\ov^{(k)}:= & \sum_{i:0.5+k\epsilon/2\le v_{i}<0.5+(k+1)\epsilon/2}\indicVec i
\end{align*}
then $\ov=(0.5+k\epsilon/2)\sum_{k}\ov^{(k)}$ and for $\phi(x):=\lambda(\exp(\lambda(x-1))-\exp(-\lambda(x-1)))=(\nabla\Phi(x))_{i}$
we have
\begin{align*}
\nabla\Phi(\overline{v})= & \sum_{k}\nabla\Phi((0.5+k\epsilon/2)\cdot\overline{v}^{(k)})\\
= & \sum_{k}\phi(0.5+k\epsilon/2)\cdot\overline{v}^{(k)}.
\end{align*}
Next we split $[n]$ into $O(\epsilon^{-1}\log(n/d))$ many groups
based on multiplicative approximations of $\tau$, in other words
we have that
\begin{align*}
\overline{\tau}:= & \sum_{k}(1-\epsilon)^{k}\sum_{i:(1-\epsilon)^{k+1}\le\tau_{i}<(1-\epsilon)^{k}}\indicVec i
\end{align*}
is an entry-wise $(1\pm\epsilon)$-approximation of $\tau$. Then
for $\overline{v}^{(k,\ell)}:=\overline{v}^{(k)}\cdot\sum_{i:(1-\epsilon)^{k+1}\le\tau_{i}<(1-\epsilon)^{k}}\indicVec i$,
we have
\begin{align*}
\nabla\Phi(\overline{v})/\sqrt{\overline{\tau}}= & \sum_{k,\ell}(1-\epsilon)^{k}\cdot\phi(0.5+\ell\epsilon/2)\cdot\overline{v}^{(k,\ell)}.
\end{align*}
Next, Algorithm 8 from \cite{lsJournal19} shows how to compute scalars
$s^{(k,l)}$ in $O(n\log n)$ time with
\begin{align*}
\argmax_{\|w\|_{2}+\|w/\sqrt{\overline{\tau}}\|_{\infty}\leq1}\left\langle \nabla\Phi(\overline{v})/\sqrt{\overline{\tau}},w\right\rangle = & \sum_{k,l}s^{(k,l)}\overline{v}^{(k,\ell)}.
\end{align*}
Note that there is $\|\ov'-\ov\|_{\infty}\le\epsilon/\lambda$ with
\begin{align*}
 & \argmax_{\|w\|_{2}+\|w/\sqrt{\overline{\tau}}\|_{\infty}\leq1}\left\langle \nabla\Phi(\overline{v})/\sqrt{\overline{\tau}},w\right\rangle \\
= & \argmax_{\|w\sqrt{\overline{\tau}/\tau}\|_{2}+\|w/\sqrt{\tau}\|_{\infty}\leq\sqrt{\overline{\tau}/\tau}}\left\langle \nabla\Phi(\overline{v})/\sqrt{\overline{\tau}},w\right\rangle \\
= & \argmax_{\|w\sqrt{\overline{\tau}/\tau}\|_{2}+\|w/\sqrt{\tau}\|_{\infty}\leq1}\left\langle \nabla\Phi(\overline{v})/\sqrt{\overline{\tau}},w\cdot\sqrt{\overline{\tau}/\tau}\right\rangle \\
= & \argmax_{\|w\sqrt{\overline{\tau}/\tau}\|_{2}+\|w/\sqrt{\tau}\|_{\infty}\leq1}\left\langle \nabla\Phi(\overline{v})/\sqrt{\tau},w\right\rangle \\
= & (1\pm\epsilon)\argmax_{\|w\|_{2}+\|w/\sqrt{\tau}\|_{\infty}\leq1}\left\langle \nabla\Phi(\overline{v})/\sqrt{\tau},w\right\rangle \\
= & \argmax_{\|w\|_{2}+\|w/\sqrt{\tau}\|_{\infty}\leq1}\left\langle \nabla\Phi(\overline{v}')/\sqrt{\tau},w\right\rangle \\
= & \nabla\Phi(\overline{v}')^{\flat}
\end{align*}
So $\sum_{k,l}s^{(k,l)}\overline{v}^{(k,\ell)}=\nabla\Phi(\overline{v}')^{\flat}$
for some $\|\ov'-v\|_{\infty}\le\epsilon$. Thus the algorithm returns
$\nabla\Phi(\overline{v}')^{\flat}$ and $\ma^{\top}\mx\nabla\Phi(\overline{v}')^{\flat}$.
\end{proof}

\end{document}